\newtheorem{theorem}{Theorem}[section]
\newtheorem*{namedtheorem}{\theoremname}
\newcommand{\theoremname}{testing}
\newtheorem{lemma}[theorem]{Lemma}
\newtheorem{claim}[theorem]{Claim}
\newtheorem{proposition}[theorem]{Proposition}
\newtheorem{corollary}[theorem]{Corollary}
\theoremstyle{definition}
\newtheorem{definition}[theorem]{Definition}
\newtheorem{remark}[theorem]{Remark}
\newtheorem{notation}[theorem]{Notation}
\newtheorem{example}[theorem]{Example}
\renewcommand{\Pr}{\mathop{\bf Pr\/}}
\newcommand{\E}{\mathop{\bf E\/}}
\newcommand{\tr}{\mathrm{tr}} \newcommand{\Tr}{\tr} \newcommand{\trace}{\tr}
\newcommand{\poly}{\mathrm{poly}}
\newcommand{\R}{\mathbb R}
\newcommand{\C}{\mathbb C}
\newcommand{\F}{\mathbb F}
\newcommand{\NP}{\mathsf{NP}}
\newcommand{\NEXP}{\mathsf{NEXP}} 
\newcommand{\NEEXP}{\mathsf{NEEXP}} 
\newcommand{\MIP}{\mathsf{MIP}} 
\newcommand{\QMA}{\mathsf{QMA}}
\newcommand{\eps}{\epsilon}
\newcommand{\ot}{\otimes}
\newcommand{\calA}{\mathcal{A}}
\newcommand{\calB}{\mathcal{B}}
\newcommand{\calD}{\mathcal{D}}
\newcommand{\calH}{\mathcal{H}}
\newcommand{\calO}{\mathcal{O}}
\newcommand{\calP}{\mathcal{P}}
\newcommand{\calX}{\mathcal{X}}
\newcommand{\bone}{\boldsymbol{1}}
\newcommand{\balpha}{\boldsymbol{\alpha}}
\newcommand{\btau}{\boldsymbol{\tau}}
\newcommand{\ba}{\boldsymbol{a}}
\newcommand{\bb}{\boldsymbol{b}}
\newcommand{\boldf}{\boldsymbol{f}}
\newcommand{\bg}{\boldsymbol{g}}
\newcommand{\bh}{\boldsymbol{h}}
\newcommand{\bi}{\boldsymbol{i}}
\newcommand{\bell}{\boldsymbol{\ell}}
\newcommand{\br}{\boldsymbol{r}}
\newcommand{\bs}{\boldsymbol{s}}
\newcommand{\bu}{\boldsymbol{u}}
\newcommand{\bv}{\boldsymbol{v}}
\newcommand{\bx}{{\boldsymbol{x}}}
\newcommand{\by}{\boldsymbol{y}}
\newcommand{\bz}{\boldsymbol{z}}
\newcommand{\bY}{\boldsymbol{Y}}
\newcommand{\ignore}[1]{}
\newcommand{\multisub}[2]{\mathrm{MultiSub}(#1,#2)}
\newcommand{\polyfunc}[3]{\calP(#1, #2, #3)}
\newcommand{\polymeas}[3]{\mathrm{PolyMeas}(#1, #2, #3)}
\newcommand{\polysub}[3]{\mathrm{PolySub}(#1, #2, #3)}
\DeclareMathOperator{\interp}{\mathbf{interp}}
\newcommand{\anote}[1]{}
\newcommand{\jnote}[1]{}
\newcommand{\hnote}[1]{}
\newcommand{\ainnote}[1]{}
\newcommand{\jinnote}[1]{}
\newcommand{\hinnote}[1]{}
\newcommand{\tnote}[1]{}
\newcommand{\znote}[1]{}
\newcounter{termcounter}[equation]
\renewcommand{\thetermcounter}{\the\numexpr\value{equation}+1\relax.\roman{termcounter}}
\crefname{term}{term}{terms}
\def\term{\@ifnextchar[\term@optarg\term@noarg}
\def\term@optarg[#1]#2{%
  \textup{#1}%
  \def\@currentlabel{#1}%
  \def\cref@currentlabel{[][2147483647][]#1}%
  \cref@label[term]{#2}}
\def\term@noarg#1{%
  \refstepcounter{termcounter}%
  \textup{\thetermcounter}%
  \cref@label[term]{#1}}
\newcommand{\simeqbot}{\overset{\bot}{\simeq}}
\title{Quantum soundness of the classical low individual degree test}
\author{
Zhengfeng Ji\thanks{zhengfeng.ji@uts.edu.au}\\
\small{\sl University of Technology Sydney}
\and Anand Natarajan\thanks{anandn@mit.edu.  Most of this work performed while affiliated with
the California Institute of Technology.}\\
 \small{\sl Massachusetts Institute of Technology}
 \and Thomas Vidick\thanks{vidick@caltech.edu}\\
 \small{\sl California Institute of Technology}
 \and John Wright\thanks{wright@cs.utexas.edu. Most of this work performed while also affiliated with
the California Institute of Technology.}\\
 \small{\sl University of Texas at Austin}
 \and Henry Yuen\thanks{hyen@cs.toronto.edu}\\
 \small{\sl University of Toronto}\vspace*{10pt}
}
\date{}
\begin{document}

\maketitle

\begin{abstract}
Low degree tests play an important role in classical complexity theory,
serving as basic ingredients in foundational results such as $\MIP = \NEXP$~\cite{BFL91}
and the PCP theorem~\cite{AS98,ALM+98}.
Over the last ten years,
versions of these tests which are sound against quantum provers
have found increasing applications to the study of nonlocal games and the complexity class~$\MIP^*$.
The culmination of this line of work is the result $\MIP^* = \mathsf{RE}$~\cite{JNV+20}.

One of the key ingredients in the first reported proof of  $\MIP^* = \mathsf{RE}$ is a two-prover variant of the low degree test, initially  
shown to be sound against multiple quantum provers in~\cite{Vid16}.
Unfortunately a mistake was recently discovered in the latter result, invalidating the main result of~\cite{Vid16} as well as its use in subsequent works, including~\cite{JNV+20}.

We analyze a variant of the low degree test called the low individual degree test.
Our main result is that the two-player version of this test is sound against quantum provers. This soundness result is sufficient to re-derive several bounds on~$\MIP^*$ that relied on~\cite{Vid16}, including $\MIP^* = \mathsf{RE}$.
\end{abstract}

\newpage

\tableofcontents
\vfill
\thispagestyle{empty}
\newpage

\section{Introduction}\label{sec:intro}

An $m$-variate polynomial over the finite field~$\F_q$ is a function $g:\F_q^m \rightarrow \F_q$ of the form
\begin{equation*}
g(x_1, \ldots, x_m) = \sum_{i_1, \ldots, i_m} c_{i_1, \ldots, i_m} \cdot x_1^{i_1} \cdots x_{m}^{i_m},
\end{equation*}
where each coefficient~$c_{i_1, \ldots, i_m}$ is an element of~$\F_q$.
We say that~$g$ has \emph{total degree $d$}
(or \emph{degree $d$}, for short) if $i_1 + \cdots + i_m \leq d$ for each nonzero coefficient $c_{i_1, \ldots, i_m}$,
and \emph{individual degree $d$} if $i_1, \ldots, i_m \leq d$ for each nonzero coefficient $c_{i_1, \ldots, i_m}$.
Low-degree polynomials have a variety of properties which make them useful in theoretical computer science,
chief among which is their \emph{distance}:
by the Schwartz-Zippel lemma, two nonequal degree $d$ polynomials~$g$ and~$h$ agree on at most a $d/q$ fraction of the points in~$\F_q^m$.

\emph{Low (individual) degree testing} refers to the task 
of verifying that an unknown function $g:\F_q^m \rightarrow \F_q$
is representable as a polynomial of (individual) degree~$d$
by querying~$g$ on a small number of points $u \in \F_q^m$.
There is a pair of canonical tests for doing so
known as the \emph{surface-versus-point low-degree test}
and the \emph{low individual degree test}.
It is common to frame these tests
as games between a referee and two provers.
In this setting, the surface-versus-point low degree test, parameterized by an integer $k \geq 1$, is performed by the verifier as follows.
\begin{enumerate}
\item Select~$\bu \sim \F_q^m$ uniformly at random. Give it to Prover~$\mathrm{A}$. They respond with a value~$\ba \in \F_q$.
\item Select a uniformly random $k$-dimensional affine surface~$\bs$ in~$\F_q^m$ containing~$\bu$.
	Give it to Prover~$\mathrm{B}$. They respond with a degree-$d$ $k$-variate polynomial $\boldf:\bs\rightarrow \F_q$.
\item Accept if $\boldf(\bu) = \ba$.
\end{enumerate}
This test is motivated by the following ``local characterization" of low-degree polynomials:
a polynomial $g:\F_q^m \rightarrow \F_q$ is degree-$d$
if and only if $g|_s$ is degree-$d$ for all $k$-dimensional surfaces~$s$.
Hence, if~$g$ is degree-$d$,
then the provers can win with probability~$1$ by always replying with~$\ba = g(\bu)$ and $\boldf = g|_{\bs}$.
\emph{Soundness} of the low-degree test refers to the converse statement,
namely that players who succeed with high probability
must be responding based on a low-degree polynomial.
This is formalized as follows.

\begin{theorem}[Raz-Safra~\cite{RS97}]\label{thm:raz-safra}
Suppose Provers~$\mathrm{A}$ and~$\mathrm{B}$ pass the $k = 2$ surface-versus-point low-degree test with probability~$1-\eps$.
Then there exists a degree-$d$ polynomial $g:\F_q^m \rightarrow \F_q$ such that
\begin{equation*}
\Pr_{\bu \sim \F_q^m}[g(\bu) = \ba] \geq 1 - \eps - \poly(m) \cdot \poly(d/q).
\end{equation*}
\end{theorem}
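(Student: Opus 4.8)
The plan is to follow the inductive, combinatorial-geometric strategy of Raz and Safra. First I would reduce to deterministic provers: averaging over the provers' private randomness fixes a point table $f:\F_q^m\rightarrow\F_q$ and a plane table assigning to each $2$-dimensional affine surface $s$ a degree-$d$ bivariate polynomial $F_s$, with the test still accepting with probability $1-\eps$ over the random choice of $(\bu,\bs)$. Next I would extract a candidate global polynomial: for each point $u$, let $g(u)$ be the most popular value of $F_{\bs}(u)$ as $\bs$ ranges over planes through $u$. Considering the experiment where one samples a uniformly random point and then two independent uniformly random planes through it, the acceptance probability $1-\eps$ forces the two plane polynomials to agree at the point with probability at least $(1-\eps)^2$, which in turn forces $g$ to agree with $f$ on a $1-O(\eps)$ fraction of points. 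So the whole problem reduces to showing that $g$ lies within $\poly(m)\poly(d/q)$ of a genuine degree-$d$ polynomial $h$: given such an $h$, the Schwartz--Zippel bound shows $h$ equals $F_{\bs}$ on almost every plane on which $g$ does, hence $h(\bu)=F_{\bs}(\bu)=f(\bu)$ on a $1-\eps-\poly(m)\poly(d/q)$ fraction of points after a union bound and some bookkeeping of constants.

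The core is a robust gluing lemma, proved by induction on the number of variables $m$: if an assignment $F$ of degree-$d$ polynomials to planes has the property that a $1-\delta$ fraction of intersecting plane pairs $(s,s')$ are \emph{consistent} --- meaning $F_s$ and $F_{s'}$ agree as univariate polynomials on the line $s\cap s'$ --- then there is a single degree-$d$ polynomial $h$ with $F_{\bs}=h|_{\bs}$ on a $1-O(\delta)-\poly(m)\poly(d/q)$ fraction of planes. For the inductive step I would restrict attention to the planes contained in a uniformly random affine hyperplane $H\cong\F_q^{m-1}$. Since planes sample hyperplanes well, a Markov argument gives that for most $H$ the induced assignment has consistency defect $O(\delta)$, so by induction there is a degree-$d$ polynomial $h_H$ on $H$ matching most planes inside $H$; applying the consistency hypothesis across pairs of hyperplanes, whose intersections are densely covered by planes lying in each, forces the $h_H$ to agree on overlaps, and interpolating them across a pencil of parallel hyperplanes --- invoking the local characterization of low-degree polynomials together with Schwartz--Zippel to bound the accumulated error --- yields the global $h$. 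Applying this lemma to our plane table with $\delta=O(\eps)$ and combining it with the plurality step above finishes the proof.

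The step I expect to be the main obstacle is controlling error accumulation through the $m$ levels of the recursion: each level must cost only a bounded multiplicative factor in the consistency defect, never a factor growing with $m$, and only an additive $\poly(d/q)$, so that the final bound stays $\eps+\poly(m)\poly(d/q)$ rather than degrading to $\poly(m)\cdot\eps$ or to a bound like $m^{\Theta(m)}(d/q)$. This forces the sampling estimates (``most hyperplanes inherit good consistency'', ``most lower-dimensional flats are well covered by planes'') to be quantitatively sharp, and it forces the gluing of the $h_H$ to be genuinely robust: a weak, exact-agreement version of the statement ``pairwise-consistent local polynomials extend to a global one'' would not survive the recursion. A secondary and more routine difficulty is the degenerate regime where $q$ is not large compared to $d$ and $m$, in which the claimed bound is vacuous but the case analysis and the various divisions by $q$ in the argument still have to be arranged so as not to break.
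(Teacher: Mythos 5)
The paper does not prove \Cref{thm:raz-safra}; it is cited as background from Raz--Safra, so there is no internal proof to compare against. Taking your sketch on its own merits, the plurality step and the reduction to a plane-consistency condition are standard and fine (up to the usual loss of a constant factor in $\eps$ from the Markov/plurality bookkeeping), but the heart of your proposal --- the inductive gluing lemma proved by restricting to random hyperplanes --- has a gap that you yourself flag without resolving, and it is in fact the entire content of the theorem. A restriction-and-Markov argument at each level of the dimension recursion unavoidably costs a constant multiplicative factor in the consistency defect: Markov lets you say ``most hyperplanes $H$ inherit defect at most $C\delta$'' only for some $C>1$, and stitching the resulting $h_H$ back together across a pencil of hyperplanes costs another such factor. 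After $m$ levels this gives $C^{m}\delta$ or, with the most optimistic bookkeeping, $\poly(m)\cdot\delta$, not $\delta$. That is precisely the distinction this paper draws in its ``Total degree versus individual degree'' discussion: a na\"ive level-by-level induction is exactly what yields the $\poly(m)\cdot\poly(\eps)$ soundness of the BFL/Polishchuk--Spielman low \emph{individual} degree test (\Cref{thm:classical-test-soundness}), and \Cref{ex:bad-individual-degree-example} shows that an $m$-dependent factor in front of $\eps$ is not an artifact of sloppiness in that setting.

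Raz--Safra's achievement is getting the clean $1-\eps-\poly(m)\poly(d/q)$ bound with \emph{no} $m$-dependence on $\eps$, and they do not obtain it by a straightforward hyperplane induction of the kind you describe. Their argument analyzes the consistency graph on planes globally (via list-decoding/expansion-type structure on the affine Grassmannian), or equivalently --- in the language this paper uses for the related quantum argument --- it requires a ``consolidation'' mechanism that resets the accumulated error back to $O(\eps)+\poly(d/q)$ at each stage rather than letting it compound. Asserting that the sampling estimates must be ``quantitatively sharp'' does not supply such a mechanism; sharpness of each individual level bound is necessary but not sufficient, because the loss is structural, not a matter of loose constants. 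So the proposal identifies the right obstacle but does not overcome it, and as written the recursion would prove only the weaker $\poly(m)\cdot\eps$ soundness, which is a genuinely different (and easier) theorem.
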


A similar ``local characterization" of low individual degree polynomials states that a polynomial~$g$ has individual degree $d$
if and only if $g|_{\ell}$ is a univariate degree-$d$ polynomial for all axis-parallel lines~$\ell$.
An axis-parallel line is a line of the form $\ell = \{u + a \cdot e_i \mid i \in \F_q\}$, for $u \in \F_q^m$ and $i \in \{1, \ldots, m\}$.
Motivated by this, the low individual degree test follows the same outline as the surface-versus-point low degree test
except with the second step substituted with the following.
\begin{enumerate}
\setcounter{enumi}{1}
\item Select a uniformly random axis-parallel line $\bell$ in~$\F_q^m$ containing~$\bu$.
	Give it to Prover~$\mathrm{B}$. They respond with a degree-$d$ univariate polynomial $\boldf:\bell\rightarrow \F_q$.
\end{enumerate}
When $d = 1$, the low individual degree test
is called the \emph{multilinearity test} because a polynomial with individual degree $d = 1$ is a multilinear polynomial.
The multilinearity  and low individual degree tests were
first introduced and proven sound by Babai, Fortnow, and Lund in~\cite{BFL91}.
The analysis of its soundness was then improved by~\cite{AS98} and then further sharpened by~\cite{FHS94}.
The best bound follows from the work of Polishchuk and Spielman~\cite{PS94};
their work considers only the bivariate $m = 2$ case,
but extending it to the multivariate case yields the following result.
\begin{theorem}[Polishchuk-Spielman~\cite{PS94}]\label{thm:classical-test-soundness}
Suppose Provers~$\mathrm{A}$ and~$\mathrm{B}$ pass the low individual degree test with probability~$1-\eps$.
Then there exists a polynomial $g:\F_q^m \rightarrow \F_q$ with individual degree~$d$ such that
\begin{equation*}
\Pr_{\bu \sim \F_q^m}[g(\bu) = \ba] \geq 1 -  \poly(m) \cdot (\poly(\eps) + \poly(d/q)).
\end{equation*}
\end{theorem}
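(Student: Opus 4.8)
The plan is to reduce to the bivariate case, which is the theorem of Polishchuk and Spielman~\cite{PS94}, and then bootstrap to general~$m$ by induction on the number of variables. For $m\le 1$ there is nothing to prove, and for $m=2$ the axis-parallel lines of~$\F_q^2$ are precisely its rows and columns: Prover~$\mathrm{B}$'s replies form a table $\{r_y\}_{y\in\F_q}$ of degree-$d$ ``row polynomials'' and a table $\{c_x\}_{x\in\F_q}$ of degree-$d$ ``column polynomials'', passing with probability~$1-\eps$ means $r_{\by}(\bx)=c_{\bx}(\by)=\ba$ on a $1-\eps$ fraction of points $(\bx,\by)$, and the bivariate testing lemma supplies a bivariate polynomial~$g$ of individual degree~$d$ with agreement $1-\poly(\eps)-\poly(d/q)$. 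Since $g$ has degree $\le d$ along every line, a single Schwartz--Zippel step upgrades pointwise agreement to the \emph{line-by-line} form $r_{\bx}\equiv g(\bx,\cdot)$ for a $1-\poly(\eps)-\poly(d/q)$ fraction of rows~$\bx$, and likewise for columns; it is this stronger form I would carry through the induction.

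For the inductive step, write a point of $\F_q^m$ as $(x,u)$ with $x\in\F_q$, $u\in\F_q^{m-1}$, and split Prover~$\mathrm{B}$'s lines into the axis-parallel lines in the first coordinate -- one degree-$d$ polynomial $f_u\colon\F_q\to\F_q$ per~$u$ -- and the axis-parallel lines in the remaining $m-1$ coordinates, which for each fixed~$x$ form exactly the Prover-$\mathrm{B}$ data of the $(m-1)$-variable test on the slice $\{x\}\times\F_q^{m-1}$ against the point table $a_x:=a(x,\cdot)$. Conditioning the success probability on the queried direction shows that the $(m-1)$-variable test on a uniformly random slice~$\bx$ succeeds with probability $\ge 1-\tfrac{m}{m-1}\eps$; I would apply the inductive hypothesis on each slice~$x$ (with its individual failure parameter~$\eps_x$) to get an individual-degree-$d$ polynomial $g_x\colon\F_q^{m-1}\to\F_q$ agreeing with~$a_x$, and with Prover~$\mathrm{B}$'s polynomials on almost every line of the slice. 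Every guarantee should be stated as an expectation over~$\bx$, so that the rare slices on which the test fails badly contribute only through $\E_{\bx}[\delta_{m-1}(\eps_{\bx})]$, which stays $\poly(m)(\poly(\eps)+\poly(d/q))$ because $\E_{\bx}[\eps_{\bx}]=O(\eps)$; this additive-over-slices bookkeeping is what keeps the error from deteriorating like $\eps^{1/2^m}$ over the~$m$ levels of recursion.

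To assemble the slices, conditioning success on direction~$1$ gives $f_{\bu}(\bx)=a(\bx,\bu)$ with probability $\ge 1-m\eps$, hence $f_{\bu}(\bx)=g_{\bx}(\bu)$ on a $1-O(m\eps)-\E_{\bx}[\delta_{m-1}(\eps_{\bx})]$ fraction of $(\bx,\bu)$. Then, for each remaining direction $j\in\{2,\dots,m\}$ and each generic freezing~$\vec w$ of the coordinates outside $\{1,j\}$, I would invoke the bivariate testing lemma on the resulting $2$-dimensional axis-parallel plane~$P_{j,\vec w}$ -- rows the direction-$1$ lines carrying the $f_\bullet$, columns the direction-$j$ lines carrying Prover~$\mathrm{B}$'s direction-$j$ polynomials -- to produce a bivariate individual-degree-$d$ polynomial $Q_{j,\vec w}$ coinciding with $f_\bullet$ on almost every row and with the direction-$j$ polynomials on almost every column. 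Because the restrictions of any two such $Q_{j,\vec w},Q_{j',\vec w'}$ to a shared axis-parallel line are degree $\le d$, agree with the same Prover-$\mathrm{B}$ polynomial on most of the line, and hence (Schwartz--Zippel) equal it, the family $\{Q_{j,\vec w}\}$ is globally consistent: any two members agree on their common line.

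The crux -- and the step I expect to be the main obstacle -- is the passage from this consistent family of local bivariate polynomials to a single global polynomial $g\colon\F_q^m\to\F_q$ of individual degree~$d$: the place where ``agreement on a large fraction of points'' must at last be promoted to ``exact equality on a structured, bounded-codimension set''. I would do this by fixing a generic product grid $S_1\times\cdots\times S_m$ with $|S_i|=d+1$, using consistency of the $Q_{j,\vec w}$ to see that $a$ restricted to the grid is interpolated by a genuine individual-degree-$d$ polynomial~$g$, and then propagating equality outward one coordinate at a time -- an axis-parallel line meeting the grid in $d+1$ points on which~$g$ matches Prover~$\mathrm{B}$'s degree-$\le d$ polynomial there must agree with it along the whole line, and iterating over directions covers all of $\F_q^m$ save a $\poly(m)(\poly(\eps)+\poly(d/q))$ fraction of lines per direction. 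Chaining $g\approx f_\bullet\approx a$ together with the per-slice agreements through this propagation yields $\Pr_{\bu}[g(\bu)=\ba]\ge 1-\poly(m)(\poly(\eps)+\poly(d/q))$; since the error contributions add (rather than compound) over the~$m$ recursion levels, the polynomial exponents stay bounded and one obtains the stated bound.
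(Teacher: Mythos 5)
The paper does not actually give a proof of this statement: it is cited to Polishchuk--Spielman, with the remark that their result covers the bivariate case and ``extending it to the multivariate case yields the following result.'' So there is no paper proof to compare against line by line. The closest thing is the classical-case sketch in the technical overview (Section~2), which is the standard BFL-style inductive pasting: for each axis-aligned subspace of dimension~$k+1$, pick $d+1$ parallel $k$-dimensional subspaces, take the inductively-constructed polynomial on each, and interpolate along the new coordinate to get a polynomial on the $(k+1)$-dimensional subspace; the error is analyzed to be additive over the $m$ levels.

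Your skeleton---bivariate base case, induction on the number of variables, per-slice inductive hypotheses with $\E_{\bx}[\eps_{\bx}]=O(\eps)$---is exactly the right shape. But the final pasting step is where I think there is a genuine gap. You paste via a product grid $S_1\times\cdots\times S_m$ with $|S_i|=d+1$ and then ``propagate outward one coordinate at a time.'' Unpacked, reaching a target point~$\bu$ requires validating a tree of lines: $d+1$ direction-$m$ interpolation points, each needing $d+1$ direction-$(m-1)$ interpolation points, and so on down to the $(d+1)^m$ grid points. A union bound over that tree gives error that is exponential in~$m$, not $\poly(m)$; and I don't see how to average it away, since a single bad grid point already corrupts the interpolant~$g$ itself. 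Your closing sentence asserts ``the error contributions add (rather than compound) over the~$m$ recursion levels,'' but the grid/propagation argument as stated does not deliver that---the additivity is precisely the thing that has to be established, and the grid route builds in the multiplicative blowup.

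The classical fix is to skip the grid and the bivariate-plane detour entirely and do a \emph{flat} pasting, as the paper's Section~2 sketch describes. Having per-slice polynomials $g_x\colon\F_q^{m-1}\to\F_q$ from the inductive hypothesis, choose $d+1$ ``good'' slices $x_0,\dots,x_d$ (possible because $\E_{\bx}[\eps_{\bx}]=O(\eps)$, so most slices have failure probability $O(m\eps)$) and define
\begin{equation*}
  g(x,u)\;=\;\sum_{j=0}^{d} L_j(x)\,g_{x_j}(u),
\end{equation*}
where $L_j$ are the Lagrange basis polynomials for $\{x_0,\dots,x_d\}$. This has individual degree~$d$. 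For a random $(\bx,\bu)$, compare $g(\cdot,\bu)$ with Prover~$\mathrm{B}$'s direction-$1$ polynomial $f_{\bu}$: on most~$\bu$, for each of the $d+1$ fixed slices we have $f_{\bu}(x_j)=a(x_j,\bu)=g_{x_j}(\bu)=g(x_j,\bu)$, so the two degree-$d$ univariate polynomials agree at $d+1$ points and hence everywhere, giving $g(\bx,\bu)=f_{\bu}(\bx)=\ba$ up to the test error at $(\bx,\bu)$. The error here is a sum of $O(d)$ per-slice terms and $O(m\eps)$ test terms, which is a single additive increment per induction level; summing over $m$ levels gives $\poly(m)\cdot(\poly(\eps)+\poly(d/q))$. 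The $Q_{j,\vec w}$ bivariate family and the product grid are not needed for this argument, and dropping them also removes the blowup.
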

We note that the soundness error the low individual test gives is actually worse than the low degree test,
because the low individual degree function~$g$ is only $\poly(m) \cdot (\poly(\eps) + \poly(d/q))$ close to Player~$\mathrm{A}$'s strategy,
rather than $\eps + \poly(m) \cdot \poly(d/q)$.
We will discuss this weakness of the low individual degree test below.

The multilinearity test, low individual degree test, and low-degree test
form a sequence
in which each test generally enables more applications than the previous one.
The multilinearity test can be used to show that $\MIP = \NEXP$ using a polynomial number of rounds~\cite{BFL91},
the low individual degree test can reduce the number of rounds to~$1$,
and the low degree test can be used to ``scale this result down''
and prove the PCP theorem, i.e.\ $\NP = \MIP[O(\log(n)), O(1)]$~\cite{AS98,ALM+98}.

\subsection{Quantum soundness of the low degree tests}

The work of Ito and Vidick~\cite{IV12}
initiated a program of studying these tests
in the case when the players are quantum,
as a means of proving bounds on the complexity class~$\MIP^*$.
Because the provers are quantum, 
they are allowed to share an entangled state,
a resource which could potentially allow them to ``cheat" the test
and win without using a low-degree polynomial.
The goal of this program is to show that this is not possible.
In other words, the goal is to show that these tests are \emph{quantum sound},
which means that provers who succeed with high success probability
must answer their questions according to a low (individual) degree polynomial,
even if they are allowed to share quantum entanglement.\footnote{We note that \emph{quantum soundness of the low-degree tests}, which is the focus of this work,
is distinct from \emph{soundness of the quantum low-degree test}.
The ``quantum low-degree test'' is a particular test introduced by Natarajan and Vidick in~\cite{NV18a}
which gets its name from the prominent role that the low-degree test plays as a subroutine,
and its ``soundness'' is simply the result that they prove about it.
}

Correctly formalizing the notion of quantum soundness is a subtle task,
as quantum provers can in fact ace these tests
using a broader class of strategies than their classical counterparts.
For example, the two provers can use their quantum state~$\ket{\psi}$ to simulate shared randomness,
which they can use to sample a random low-degree polynomial~$\bg$  to answer their questions with;
what makes this still acceptable is that~$\bg$ depends only on their shared randomness and not their questions.
The correct formalization of quantum soundness was identified by Ito and Vidick~\cite{IV12},
which states the following:
suppose Provers~$\mathrm{A}$ and~$\mathrm{B}$ pass the low-degree test with probability close to~$1$.
For each point question~$u \in \F_q^m$,
let $A^{u} = \{A^{u}_a\}$ be the measurement that Prover~$\mathrm{A}$ applies to
their share of~$\ket{\psi}$ to produce the answer $\ba \in \F_q$.
Then the test being quantum sound
means that there should be a measurement $G = \{G_g\}$, independent of~$u \in \F_q^m$,
which outputs degree-$d$ polynomials~$g$
and ``acts like~$A$''.
In other words, rather than measuring~$A^{u}$ to produce the outcome~$\ba \in \F_q$,
Prover~$\mathrm{A}$ could have simply measured~$G$, received the polynomial~$\bg$,
and outputted its evaluation at~$\bu$, i.e.\ the value $\bg(\bu)$.
We will measure the similarity between~$A$ and~$G$
by considering the experiment where Prover~$\mathrm{A}$ measures with~$A^{\bu}$ to produce~$\ba$,
Prover~$\mathrm{B}$ measures with~$G$ to produce~$\bg$, and we check if $\bg(\bu) = \ba$.
This entails studying the quantity
\begin{equation*}
\E_{\bu \sim \F_q^m} \sum_{a \in \F_q} \sum_{g:g(\bu) = a}\bra{\psi} A^{\bu}_a \ot G_g \ket{\psi},
\end{equation*}
which we aim to show is as close to~$1$ as possible.
In this way, the provers' quantum advantage is limited to their ability to select a low-degree polynomial~$g$.

One additional quirk of the quantum setting
is that it has been historically useful to consider variants of these tests which feature more than two provers.
This allows one to use monogamy of entanglement to reduce the power that entanglement gives to the provers,
making it easier to show that a given test is quantum sound.
Low degree test results with fewer provers are more difficult to show and have more applications.

The program of showing that these tests are quantum sound was carried out for the $3$-prover multilinearity test by Ito and Vidick~\cite{IV12}
and for the $3$-prover low-degree test by Vidick~\cite{Vid16},
which was later improved to $2$-provers by Natarajan and Vidick~\cite{NV18b}.
This latter result led to a sequence of works
which culminated in the proof that $\mathsf{MIP}^* = \mathsf{RE}$
and the refutation of the Connes embedding conjecture in~\cite{JNV+20}.
We summarize this line of research in \Cref{fig:research}.

\ignore{
Quantum soundness of a three-player variant of the multilinearity test was established by Ito and Vidick in~\cite{IV12}
to prove that $\NEXP \subseteq \mathsf{MIP}^*$ with three provers.
Next, Vidick~\cite{Vid16} proved quantum soundness of a three-player variant of the low-degree test
to prove that $\NP \subseteq \mathsf{MIP}^*[O(\log(n)), O(1)]$ with three provers.
Building on this work, Natarajan and Vidick~\cite{NV18b} proved quantum soundness of the two-player low-degree test.
This result has led to a sequence of works achieving stronger and stronger results,
culminating in the proof that $\mathsf{MIP}^* = \mathsf{RE}$
and the refutation of the Connes embedding conjecture in~\cite{JNV+20}.
}

{
\floatstyle{boxed} 
\restylefloat{figure}
\begin{figure}
\begin{tabular}{l c c c}
& Test shown & Complexity-theoretic  & Number \\
&  quantum-sound &  consequence & of provers  \\
			& &&\\
1. \cite{IV12}: & multilinearity test & $\NEXP \subseteq \mathsf{MIP}^*$ & 3\\[1ex]
2. \cite{Vid16}: & low-degree test & $\NP \subseteq \mathsf{MIP}^*[O(\log(n)), O(1)]$  &3\\[1ex]
3. \cite{NV18b}: & low-degree test & $\NP \subseteq \mathsf{MIP}^*[O(\log(n)), O(1)]$ & 2\\
			& &&\\
& Consequences of \cite{NV18b}: & &\\
			& & &\\
&\multicolumn{1}{l}{\quad\qquad(a) \cite{NV18a}:}   & $\QMA \subseteq \mathsf{MIP}^*[O(\log(n)), O(1)]$ &7\\
&    & (under randomized reductions) &\\[1ex]
&\multicolumn{1}{l}{\quad\qquad (b) \cite{NW19}:}   & $\NEEXP \subseteq \mathsf{MIP}^*$ &2\\[1ex]
&\multicolumn{1}{l}{\quad\qquad (c) \cite{JNV+20}:}   & $\MIP^* = \mathsf{RE}$&2
\end{tabular}
	\caption{Prior work on quantum-sound low degree tests and their complexity-theoretic consequences.
			The first three works showed a quantum-sound test and an $\MIP^*$ bound,
			both involving the same number of provers indicated in the final column.
			The last three works use the low-degree test from~\cite{NV18b}
			to show the indicated $\MIP^*$ bound.}
\label{fig:research}
\end{figure}
}

Subsequent to the initial posting of~\cite{JNV+20} on the arXiv,
an error was discovered in the analysis of the quantum-sound low degree test
contained in~\cite{Vid16} which was propagated to~\cite{NV18b}. The error affects the proof in a manner that appears difficult to fix. As such, we currently do not know if the low-degree test is quantum-sound for any number of provers.
The invalidation of this analysis affects every result in \Cref{fig:research} except for~\cite{IV12}.

The purpose of this work is to provide a different soundness analysis, for a variant of the low-degree test, that can nevertheless be used as a replacement for it in most subsequent works. 
We do so by revisiting the three-player quantum-sound multilinearity test of~\cite{IV12}
and improving this result in two ways.
First, we generalize it to hold for the degree-$d$ low individual degree test,
of which the multilinearity test is the $d=1$ special case.
Second, using techniques introduced in~\cite{Vid16,NV18b},
we reduce the number of provers from~$3$ to~$2$.
Our main result is as follows.

\begin{theorem}[Main theorem, informal]\label{thm:main-informal}
Suppose Provers~$\mathrm{A}$ and~$\mathrm{B}$
pass the two-prover. degree-$d$ low individual degree test with probability $1-\eps$.
Let $A = \{A^u_a\}$ be the measurement the provers perform when they are given the point $u \in \F_q^m$
to produce a value $a \in \F_q$.
Then there exists a projective measurement $G = \{G_g\}$
whose outcomes~$g$ are polynomials of individual degree~$d$
such that
\begin{equation*}
\E_{\bu \sim \F_q^m}\sum_{a \in \F_q} \sum_{g:g(\bu) = a} \bra{\psi} A^{\bu}_{a} \ot G_g \ket{\psi}
\geq 1 - \poly(m) \cdot (\poly(\eps) + \poly(d/q)).
\end{equation*}
In other words, if Prover~$\mathrm{A}$ measures according to~$A^{\bu}$ to produce $\ba$ and Prover~$\mathrm{B}$ measures according to~$G$ to produce~$\bg$,
then $\bg(\bu) = \ba$ except with probability $\poly(m) \cdot (\poly(\eps) + \poly(d/q))$.
\end{theorem}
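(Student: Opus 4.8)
The plan is to follow the blueprint of the three-prover multilinearity soundness analysis of Ito--Vidick~\cite{IV12}, generalizing it from degree $1$ to degree $d$ and reducing it to two provers using the anchoring and commutation techniques of~\cite{Vid16,NV18b}. Throughout one works with the approximate-consistency calculus for measurements (the relations $\inconsistency$, $\simeqbot$, $\simeqplus$). From the hypothesis that the test, together with its built-in consistency sub-checks, passes with probability $1-\eps$, one first extracts, after a Naimark dilation, that the point measurements $A = \{A^u_a\}$ and the line measurements $B^\ell = \{B^\ell_f\}$ may be taken projective, that $A^{\bu} \inconsistency B^{\bell}$ along the evaluation map $f \mapsto f(\bu)$ for a uniformly random incident pair $(\bu,\bell)$ up to error $\poly(\eps)$, and that Prover~B's answers to intersecting lines are mutually consistent. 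Because Prover~A is a single prover receiving a uniformly random point, the family $A$ behaves like a ``reference frame'' through which consistency statements about measurements on Prover~B's side may be composed; this is the role played by the third prover in~\cite{IV12}, and it is what the two-prover reduction must recover internally.

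The core of the argument is an induction on the number of coordinates. For $0 \le k \le m$ I construct a projective measurement $G_k = \{G_k[g]\}$ on Prover~B's Hilbert space whose outcomes $g$ are functions $\F_q^m \to \F_q$ of individual degree $\le d$ in the variables $x_1,\dots,x_k$ (with no constraint on the remaining variables), together with the guarantee
\begin{equation*}
\E_{\bu \sim \F_q^m} \sum_{a \in \F_q} \sum_{g \,:\, g(\bu) = a} \bra{\psi} A^{\bu}_a \ot G_k[g] \ket{\psi} \; \geq \; 1 - \delta_k,
\end{equation*}
with $\delta_0 \le \poly(\eps)$ and $\delta_{k+1} \le \delta_k + \poly(\eps) + \poly(d/q)$. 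Taking $G := G_m$ then proves the theorem, since the outcomes of $G_m$ have individual degree $\le d$ in all $m$ variables and $\delta_m \le \poly(m) \cdot (\poly(\eps) + \poly(d/q))$. The base case $k = 0$ imposes no degree constraint and only requires a measurement on Prover~B's side that is $\poly(\eps)$-consistent with $A$; such a measurement is obtained by a rounding argument from the $A$-versus-$B$ consistency (or, more directly, is supplied by the ``point'' sub-test answered by Prover~B).

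For the inductive step, assume $G_k$ is in hand. One first \emph{aligns} $G_k$ with Prover~B's line measurements $\{B^\ell\}_{\ell \parallel e_{k+1}}$ in direction $k+1$: both $G_k$ and this line family are $\poly(\eps)$-consistent with the common reference $A$, so composing the two consistency relations through $A$ (using the single-prover structure of~$A$, and that $G_k$, $B$ are projective) shows that a measured $g \sim G_k$ has $g|_{\bell}$ equal to Prover~B's degree-$d$ answer $f$ on all but a $\poly(\eps) + \poly(d/q)$ fraction of incident pairs $(\bu,\bell)$ with $\bell \parallel e_{k+1}$. In particular, for all but a $\poly(\eps) + \poly(d/q)$ fraction of outcomes $g$ of $G_k$ (suitably weighted by $\ket{\psi}$), the restriction $g|_\ell$ is close in Hamming distance to a univariate degree-$d$ polynomial for all but a $\poly(\eps) + \poly(d/q)$ fraction of lines $\ell \parallel e_{k+1}$. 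Now comes the classical ingredient, a \emph{pasting lemma}: a function $g$ that is \emph{exactly} of individual degree $\le d$ in $x_1,\dots,x_k$ and whose restriction to a random $e_{k+1}$-line is $\eta$-close to a degree-$d$ polynomial has Hamming distance $\poly(\eta) + \poly(d/q)$ to a \emph{unique} function $\hat{g}$ of individual degree $\le d$ in $x_1,\dots,x_{k+1}$. This is where the individual-degree structure is essential, and where the argument succeeds while the low-\emph{total}-degree surface-based argument does not: the correction in direction $k+1$ must not destroy the degree-$\le d$ property already held in directions $1,\dots,k$, which is precisely what the robust bivariate statement behind \Cref{thm:classical-test-soundness} (Polishchuk--Spielman, applied on random axis-parallel $(x_j, x_{k+1})$-planes and averaged over the remaining coordinates) guarantees. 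One then sets $G_{k+1}[\hat{g}] := \sum_{g \,:\, g \mapsto \hat{g}} G_k[g]$, where $g \mapsto \hat{g}$ is the partial correction map and every $g$ with no valid correction is routed to a $\bot$ outcome. This is a coarsening of a projective measurement, hence projective; its non-$\bot$ outcomes have individual degree $\le d$ in $x_1,\dots,x_{k+1}$; and its consistency with $A$ degrades only by the Hamming-distance error of the correction plus the $\bot$-mass, each $\poly(\eps) + \poly(d/q)$, which yields the stated recursion for $\delta_{k+1}$.

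I expect the main obstacle to be twofold. First, the classical pasting lemma must be proved with the right dependence on $\eps$, $d$, $q$, and (especially) $m$: one must combine Polishchuk--Spielman on two-dimensional axis-parallel planes with a covering and averaging argument so that the correction is simultaneously of low individual degree in \emph{all} of $x_1,\dots,x_{k+1}$, and so that the resulting error is genuinely polynomial and does not blow up multiplicatively across the $m$ steps of the induction. Second, the quantum bookkeeping around the coarsening is delicate: one must show that the $\bot$ outcome of $G_{k+1}$ carries only $\poly(\eps) + \poly(d/q)$ weight against $\ket{\psi}$, and that the alignment of $G_k$ with the direction-$(k+1)$ line family through $A$ is legitimate. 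Both steps rely on the consistency and commutation machinery together with the single-prover structure of~$A$, and it is here that the reduction from three provers to two --- replacing the extra prover of~\cite{IV12} by the requirement that Prover~B answer a mixture of question types consistently --- has to be carried through the entire induction rather than applied once at the end.
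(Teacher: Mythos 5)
Your high-level plan — induction on the ambient dimension, pasting up low-individual-degree polynomials one variable at a time, and using the Ito--Vidick program as inspiration — matches the spirit of the paper, but the mechanics of the inductive step and the base case diverge substantially from what is actually carried out, and several of the divergences are genuine gaps rather than stylistic choices.

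The most concrete problem is your base case. You want $G_0$ to be a projective measurement on Prover~B's side whose outcomes are \emph{arbitrary} functions $\F_q^m \to \F_q$ and which is $\poly(\eps)$-consistent with $A$, and you suggest it is ``supplied by the `point' sub-test.'' But the point sub-test on Prover~B's side produces a single field element, not a function. Building a measurement that outputs an entire table of values, consistent with $A$ at a random point, is not a triviality: it would require the family $\{A^u\}_{u}$ to be simultaneously measurable (i.e.\ approximately commuting across all pairs $u,v$), and that is exactly what the diagonal-lines subtest is needed for and is nontrivial to establish (\Cref{thm:commutativity-points}). The paper avoids this by starting the induction at $m = 1$, where there is exactly one axis-parallel line and Prover~B's existing lines measurement $B^\ell$ is already the desired $\polymeas{1}{q}{d}$ object. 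Your induction starts one step earlier and pays for it with a base case that is at least as hard as the theorem.

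The second gap is in your pasting step and in how the inductive error is tracked. You propose to correct outcomes of $G_k$ via a classical Hamming-distance rounding map $g \mapsto \hat g$ and then coarsen $G_k$ along it, yielding a recursion $\delta_{k+1} \le \delta_k + \poly(\eps) + \poly(d/q)$ directly on the \emph{consistency} error. This conflates the two kinds of error the paper is careful to separate. The paper's inductive loop alternates self-improvement (\Cref{thm:self-improvement-in-induction-section}) — which ``resets'' the consistency error down to a universal $\zeta$ depending only on $\eps,\delta,\gamma,d/q$ at the cost of introducing completeness error — with pasting (\Cref{thm:ld-pasting-in-induction-section}), which turns completeness error back into consistency error. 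Working only with complete measurements and a single consistency parameter, as you propose, runs directly into the problem the paper describes at \Cref{sec:construction-numba-one}: the naive interpolation of $d+1$ measurements degrades completeness by a factor close to $d+1$ per step, which is fatal. The paper's actual pasted measurement (\Cref{sec:construction-numba-two}) performs the sub-measurement $\widehat{G}$ not $d+1$ but $k \gg d$ times and relies on a Chernoff-type argument (\Cref{lem:chernoff-bernoulli-matrix}) so that at least $d+1$ of the $k$ measurements return a non-$\bot$ outcome with probability close to the completeness of a single $\widehat{G}$; there is no analogue of this move in your sketch, and no indication of how the $\bot$-mass of your coarsened $G_{k+1}$ would stay under control across $m$ steps.

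Finally, your argument makes no use of the diagonal-lines subtest, yet that subtest is what underwrites the commutativity facts (\Cref{thm:commutativity-points}, \Cref{thm:com-main}) that allow the sub-measurements $\widehat{G}^{x_1},\dots,\widehat{G}^{x_k}$ to be composed meaningfully in the pasting step, and the paper even gives an explicit $m=2$, $d=2$, $q=4$ example where it is necessary. Your composition step ``aligning $G_k$ with Prover~B's line measurements through $A$'' also leans on Polishchuk--Spielman, but the paper's classical inputs are Schwartz--Zippel and the spectral gap of the hypercube graph (\Cref{sec:expansion}); Polishchuk--Spielman does not appear. So while the broad outline (inductive pasting of low-individual-degree polynomials, erring on the side of individual degree to get $\poly(m)$ factors) is right, the base case, the error bookkeeping, and the pasting mechanism are all substantively different from what the paper does, and the first two of these would need new ideas to repair.
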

\noindent
Thus, we are able to extend \Cref{thm:classical-test-soundness}
to the case of two quantum provers (with some minor caveats; see \Cref{thm:main-formal} below for the formal statement of \Cref{thm:main-informal}).

Although \Cref{thm:main-informal} establishes quantum soundness of the low-individual degree test
and not of the low-degree test, it is still sufficient to recover the result $\NEEXP \subseteq \mathsf{MIP}^*$
from~\cite{NW19}
and the result $\mathsf{MIP}^* = \mathsf{RE}$ from~\cite{JNV+20}.
In addition,
we can use it to recover the self-test for an exponential number of EPR pairs from~\cite{NV18a}.
Edited drafts of these work to account for this change are forthcoming.
It remains open whether the complexity-theoretic consequences
of~\cite{Vid16,NV18b,NV18a} to the ``scaled down'' setting still hold.

\subsection{Total degree versus individual degree}

We now contrast the low degree test with the individual degree test
and explain why we are only able to prove the latter quantum sound.
We begin by explaining why the low individual degree test,
unlike the low degree test,
requires a $\poly(m) \cdot \poly(\eps)$ dependence in the soundness error.

\begin{example}\label{ex:bad-individual-degree-example}
Consider the degree-$(d+1)$ polynomial $h(x_1, \ldots, x_m) = x_1^{d+1}$,
and suppose that Players~$\mathrm{A}$ and~$\mathrm{B}$ play according to the following classical strategy.
\begin{itemize}
\item[$\circ$] (Player~$\mathrm{A}$): given $\bu \in \F_q^m$, return the value $\ba = h(\bu)$.
\item[$\circ$] (Player~$\mathrm{B}$): given the axis parallel line~$\bell = \{\bu + x \cdot e_{\bi} \mid x \in \F_q\}$, act as follows.
				If $\bi > 1$, then $h$ is a constant function along~$\bell$, and so return $\boldf = h|_{\bell}$.
				Otherwise, if $\bi = 1$, then $h$ is a degree-$(d+1)$ polynomial along~$\bell$.
				As the verifier expects a degree-$d$ polynomial,
				simply give up and return $\boldf \equiv 0$.
\end{itemize}
Using this strategy, $\boldf(\bu) = h|_{\bell}(\bu) = h(\bu) = \ba$ whenever $\bi \neq 1$, which occurs with probability $1- \frac{1}{m}$.
Hence, the players pass the degree-$d$ low individual degree test with probability at least $1 -\eps$ for $\eps = \frac{1}{m}$.
However, Player~$\mathrm{A}$ is responding to their questions using the polynomial~$h$
which is degree-$(d+1)$ but not degree-$d$,
and so by the aforementioned Schwartz-Zippel lemma,
any degree-$d$ polynomial~$g$ will agree with~$h$ on at most a $\frac{d+1}{q}$ fraction of all inputs.
This means that the agreement between Player~$\mathrm{A}$'s strategy and any degree-$d$ polynomial~$g$ is at most
\begin{equation*}
\Pr_{\bu \sim \F_q^m}[g(\bu) = \ba]
\leq 
1 - m \cdot \eps + \frac{d+1}{q}.
\end{equation*}
\end{example}

\Cref{ex:bad-individual-degree-example} shows that the dependence on~$m$ and~$\eps$ in \Cref{thm:main-informal} is tight up to polynomial factors.
This reveals a weakness with the low individual degree test:
one can only conclude that the players are using a low individual degree strategy
when their failure probability $\eps$ is tiny---on the order of $\frac{1}{m}$ or smaller.
The low (total) degree test is alluring because it has the potential to avoid this dependence on~$m$.

Soundness of the low total and individual degree tests
is typically proven by induction on~$m$.
For the low individual degree test,
each step of the induction incurs an error of $\poly(m) \cdot (\poly(\eps) + \poly(d/q))$.
Summing over all~$m$ steps, this gives a total error of $\poly(m) \cdot (\poly(\eps) + \poly(d/q))$,
exactly as in \Cref{thm:main-informal}.
For the low degree test,
on the other hand, each step of the induction only incurs an error of $\poly(\eps) + \poly(d/q)$.
However, if summed over all~$m$ steps,
this still gives a total error of $m \cdot (\poly(\eps) + \poly(d/q))$,
which is too large.

To account for this,
the soundness proofs in \cite{Vid16, NV18b}
introduce a technique at the end of each induction step
called \emph{Consolidation}
in which the growing error is ``reset''
down to an error $\poly(\eps) + \poly(d/q)$
which remains fixed across all levels of the induction.
This allows them to conclude with an error that was independent of the dimension~$m$.
Consolidation works as follows:
if the error of the projective measurement $G = \{G_g\}$
ever grows past this fixed error, the Consolidation step argues that on some portion of the Hilbert space,
the measurement~$G$ must be performing much worse than expected;
it then corrects this by inductively calling the low-degree test soundness to produce a better measurement on this portion of the Hilbert space.
Ultimately, however, this creates a cascade of Consolidation steps calling each other with increasing error,
which at some point grows so large that the low-degree soundness can no longer be applied.
This is the source of the bug.

\ignore{
The bug in their proofs is contained in this Consolidation section.
}

Fortunately, this work shows that the techniques of these prior works, 
aside from Consolidation,
are still sound.
So we can show soundness bounds which grow as a function of~$m$,
even if we cannot yet show bounds independent of~$m$.
In the end, this ``weakness" of the low individual degree test
is precisely what allows us to show that it, and not the low degree test, is quantum sound.

That said, we do believe, although we have not rigorously verified,
that our techniques are capable of proving a result like \Cref{thm:main-informal} for the low degree test,
i.e.\ a soundness bound of $\poly(m) \cdot (\poly(\eps) + \poly(d/q))$ rather than $\eps + \poly(m) \cdot \poly(d/q)$.
This ``weak'' quantum soundness does not rule out the possibility that quantum provers can significantly outperform their classical counterparts.
It is, however, still sufficient for the same applications as the low individual degree test,
and it hints towards the possibility of a full quantum soundness of the low degree test.

\subsection{Conclusion and open problems}

In recent years,
the classical low-degree test has played a critical role in the study of nonlocal games
and the complexity class~$\MIP^*$.
In addition to correcting previous proofs of soundness,
we hope that this new exposition
will invite new researchers to engage with this beautiful area.
We conclude with a short list of open problems for future work.

\begin{enumerate}
\item Is the classical low-degree test quantum sound? Can the Consolidation step be fixed?
\item Even if the classical low-degree test is shown quantum sound,
	there are still interesting questions to be answered about the low individual degree test.
	For example, can the diagonal lines test be removed? (See \Cref{sec:the-test} for a description of this subtest.)
	Doing so would likely simplify the proof of $\MIP^* = \mathsf{RE}$~\cite{JNV+20},
	as one could replace the complicated ``conditional linear functions'' used in the proof
	with a simpler subclass known as ``coordinate deletion functions''.
	However, as discussed at the end of \Cref{sec:technical-overview},
	we know of an example that requires the diagonal lines test
	for the low individual degree test with parameters $m = 2$, $d = 2$, and $q = 4$.
	Can we find similar examples for larger~$q$?
\item Classically,  the low degree test
	generalized in various directions,
	including tests for affine invariant properties~\cite{KS08,Sud11}
	and tests for tensor product codes (see, for example, \cite{CMS17}).
	Does quantum soundness hold for these tests as well?
\item This work continues the trend of showing that well-studied classical property testers
	are also quantum sound.
	Prior to this, the work of~\cite{IV12} (see also \cite[Chapter 2]{Vid11}) showed that the linearity tester of Blum, Luby, and Rubinfield~\cite{BLR93}
	is also quantum sound.
	The proofs of these results have so far been case-by-case adaptations of the classical proofs to the quantum setting;
	in the case of this paper, the proof is highly nontrivial
	and involves many ad hoc calculations which fortunately go in our favor.
	Is there a more conceptual reason why quantum soundness holds for these testers?
	Perhaps a reduction from the quantum case to the classical case?
\end{enumerate}

\section{Technical overview}\label{sec:technical-overview}

At a high level, our proof follows the approach of~\cite{BFL91}, and
it is useful to start by summarizing their analysis, which applies to
classical, \emph{deterministic} strategies. In this version of the test, the provers' strategy is described
by a ``points function,'' assigning a value in $\F_q$ to each point in
$\F_q^m$, and a
``lines function,'' assigning a low-degree polynomial to each
line in $\F_q^{m}$ queried in the test. From the assumption that the
points and lines functions agree at a randomly chosen point with high
probability, we would like to construct a global low-degree polynomial that has high agreement with
the points function. This is done by inductively constructing
``subspace functions'' defined on affine axis-aligned subspaces of increasing
dimension $k$. The base case is $k=1$, and is supplied by the lines
function. At each step, to construct the subspace function for a
subspace $S$ of dimension $k+1$, we pick $d+1$ parallel subspaces of
dimension $k$ that lie within $S$, and compute the unique degree-$d$
polynomial that interpolates them. The analysis shows that at each
stage, the function constructed through interpolation has high
agreement on average with the points function and lines function. In
the end, when we reach $k=m$ the ambient dimension, the subspace
function we construct is the desired global function.

For the sake of simplicity, it suffices to consider the case $d=1$,
i.e. multilinear functions. In this case, whenever we perform
interpolation, we need to combine $2$ parallel subspaces.

\paragraph{The classical zero-error case}

To start building intuition, it is useful to think about how to carry
out the above program in a highly simplified setting: the classical
zero-error case, for $m=3$. In this case, we assume that we have access to a
points function $f$ and lines function $g$ that \emph{perfectly} pass the BFL
test. Moreover, we will focus on the \emph{final} step of the
induction: thus, we assume that we have already constructed a set of
planes functions defined for every axis-parallel plane, that are
perfectly consistent with the line and point functions. In the final
step of the induction, our goal is to combine these planes functions
to create a single global function $h$
over all of $\F_q^{3}$ that is consistent with the points and lines functions. 

To do this, we will interpolate the planes as follows. Let us label
the 3 coordinates in the space $x, y, $and $z$, and consider planes
parallel to the $(x,y)$-plane. Each plane $S_z$ is specified by a
value of the $z$ coordinate:
\[ S_z = \{(x,y,z): (x, y) \in \F_q^2\}. \]
By the induction hypothesis, for every such plane $S_z$ there exists a
bilinear function $g_{z}: \F_q^{2} \to \F_q$ that agrees with the
points function. To construct a global multilinear function $h:
\F_q^3 \to \F_q$, we pick two distinct values $z_1 \neq z_2$, and ``paste''
the two plane functions $g_{z_1} $ and
$g_{z_2}$ together using \emph{polynomial
  interpolation}. Specifically, we define $h$ to be the unique
multilinear polynomial that interpolates between $g_{z_1}$ on the
plane $S_{z_1}$ and $g_{z_2}$ on the plane $S_{z_2}$.
\[ h(x,y,z) = \mathrm{interpolate}_{z_1, z_2}(g_{z_1}, g_{z_2}) = \left(\frac{z - z_2}{z_1 - z_2}\right) g_{z_1}(x,y) +
  \left(\frac{z - z_1}{z_2 - z_1}\right) g_{z_2}(x,y). \]

This procedure defines a global function $h$, and by the assumption
that $g_{z_1}$ and $g_{z_2}$ are multilinear, it follows that $h$ is
also a multilinear function. But why is $h$ consistent with the points
function? To show this, we need to consider the lines function on
lines parallel to the $z$ axis. Given a point $(x,y,z)$, let $\ell$ be
the line parallel to the $z$ axis through this point, and let $g_\ell$
be the associated lines function. By construction, $h$ agrees with
$g_\ell$ at the two points $z_1$ and $z_2$. But $g_\ell$ and the restriction
$h_{|\ell}$ of $h$ to $\ell$ are both linear functions, and hence if
they agree at two points, they must agree \emph{everywhere}. (This is
a special case of the \emph{Schwartz-Zippel lemma}, which says that if
two degree-$d$ polynomials agree at $d+1$ points, they must be equal.)
Thus, $h$
agrees with $g_\ell$ at the original point $z$ as well. By success in
the test, $g_\ell$ in turn agrees with the points function $f$ at
$(x,y,z)$, and thus, $h(x,y,z) = f(x,y,z)$. Thus, we have shown that
the global function $h$ is both multilinear and agrees with the points
function $f$ exactly.

\paragraph{Dealing with errors}
To extend the sketch above to the general case, with nonzero error,
requries some modifications. At the most basic level, we may consider
what happens when we allow for deterministic classical strategies that
succeed with probability less than $1$ in the test. Such strategies
may have ``mislabeling'' error: the points function
$f$ may be imagined to be a multilinear function that has been
corrupted at a small fraction of the points. This type of error is
handled by the analysis in~\cite{BFL91}. The main modification to the zero-error
sketch above is a careful analysis of the probability that the pasting
step produces a ``good'' interpolated polynomial for a randomly chosen
pair of planes $S_{z_1}, S_{z_2}$. This analysis makes use of the
Schwartz-Zippel lemma together with combinatorial properties of the
point-line test itself (e.g. the expansion properties of the question
graph associated with the test).

At the next level of generality, we could consider classical
\emph{randomized} strategies. Suppose we are given a randomized
strategy that succeeds in the test with probability $1 - \eps$. Any
randomized strategy can be modeled by first sampling a random seed,
and then playing a deterministic strategy conditioned on the value of
the seed. A success probability of $1 - \eps$ could have two
qualitatively different underlying causes: (1) on $O(\eps)$ fraction
of the seeds, the strategy uses a function which is totally corrupted,
and (2) on a large fraction of the seeds, the strategy uses functions
which are only $\eps$-corrupted. An analysis of randomized
strategies could naturally proceed in a ``seed-by-seed'' fashion, applying the
deterministic analysis of~\cite{BFL91} to the large fraction of ``good'' seeds (which
are each only $\eps$-corrupted), while giving up entirely on the
``bad'' seeds. 

In this document, we consider quantum strategies, which
have much richer possibilities for error. Nevertheless, we are able to
preserve some intuition from the randomized case, by working with
\emph{sub-measurements}: quantum measurements that do not always yield
an outcome. Working with a sub-measurement allows us to distinguish two
kinds of error: \emph{consistency error} (the probability that a
sub-measurement returns a wrong outcome) and \emph{completeness error}
(the probability that the sub-measurement fails to return an outcome at
all). Roughly speaking, the completeness error corresponds to the
probability of obtaining a ``bad'' seed in the randomized case, while
the consistency error corresponds to how well the strategies do on
``good'' seeds. 

The technique of using sub-measurements and managing the two types of
error separately goes back to~\cite{IV12}. That work developed a
crucial tool to convert between these two types of error called the
\emph{self-improvement lemma}, and in our analysis we make extensive
use of a refined version of this lemma (\Cref{thm:self-improvement-in-induction-section}), building on~\cite{Vid16,NV18b}. Essentially, the lemma says the
following: suppose that (a) the provers pass the test with probability
$1 - \eps$, and (b) there is a complete measurement $G_{g}$ whose
outcomes are low-degree polynomials that has
consistency error $\nu$: that is, $G$ always returns an outcome, but
has probability $\nu$ of producing an outcome $g$ that disagrees with the points measurement $A^u_a$ at a
random point $u$. Then there exists an
``improved'' sub-measurement  $H_{h}$ with consistency error $\zeta$
depending \emph{only} on $\eps$, and with completeness error
(i.e. probability of not producing an outcome at all) of $\nu
+\zeta$. Essentially, this lemma says we can always ``reset'' the
consistency error of any measurement we construct at intermediate
points in the analysis to a universal
function $\zeta$ depending only on the provers' success in the test,
at the cost of introducing some amount completeness error. Intuitively, one may
think of the action of the lemma as correcting $G$ on the portions of
Hilbert space where it is only mildly corrupted, while ``cutting out'' the
portions of Hilbert space where $G$ is too corrupted to be
correctable. In some sense, this lemma is the quantum analog of the
idea of identifying ``good'' and ``bad'' random seeds in the classical
randomized case. The proof of the lemma uses a semidefinite program
together with the combinatorial facts used in~\cite{BFL91}
(specifically, expansion of the question graph of the test, and
the Schwartz-Zippel lemma).

Armed with the self-improvement lemma, we set up the following
quantum version of the BFL
induction loop: for $k$ running from $1$ to $m$, we construct a family
of subspace measurements $G^{S}_{g}$ that returns a low-degree
polynomial $g$ for every axis-aligned affine subspace $S$ of dimension $k$.
\begin{enumerate}
  \item By the induction hypothesis, we know that there exists a
    measurement  $G^{S}_{g}$ for every $k$-dimensional subspace, which
    has consistency error $\delta(k)$ with the points
    measurement. (For the base case $k = 1$, this is the lines
    measurement from the provers' strategy). 
  \item We apply the self-improvement lemma to these measurements, yielding
    sub-measurements $\hat{G}^{S}_{g}$ that have consistency error
    $\zeta$ independent of $\delta(k)$, and completeness error $\kappa(k) = \delta(k) + \zeta$.
  \item For each subspace $S$ of dimension $k+1$, we construct a
    pasted sub-measurement, by performing a quantum version of the
    classical interpolation argument: we define the pasted
    sub-measurement by sequentially measuring several subspace
    measurements corresponding to parallel $k$-dimensional subspaces,
    and interpolate the resulting outcomes. This pasted sub- measurement has consistency slightly worse than $\zeta$, and completeness error which
    is slightly worse than $\kappa(k)$. It is at this step that it is crucial to treat the two types
    of error separately: in particular, we need the consistency error
    to be low to ensure that the interpolation produces a good result.
  \item We convert the resulting sub-measurement into a full
    measurement, by assigning a random outcome whenever the
    sub-measurement fails to yield an outcome. This measurement will
    have consistency error $\delta(k+1)$ which is larger than $\delta(k)$
    by some additive factor.
  \end{enumerate}
  At the end of the loop, when $k = m$, we obtain a single measurement that returns
  a global polynomial as desired.
\paragraph{The diagonal lines test}
An important element of the test we analyze in this document, which is
unnecessary in the classical case, is the diagonal lines test. The purpose
of this test is to certify that the points measurements used by the provers
approximately commute on average over all pairs of points $(x, y)$ in
$\F_q^m$---something which is automatically true in the classical case. This is done by asking one prover for a polynomial defined
on the line going through $x$ and $y$, and the other for the function
evaluation at either $x$ or $y$. The line through $x$ and
$y$ will not in general be axis-parallel; we refer to these general
lines as ``diagonal.''

The commutation guarantee plays an important role in our analysis of
the test, and it is an interesting question whether it is truly
necessary to test it directly with the diagonal lines test; might it
not automatically follow from success in the axis-parallel lines test? An interesting contrast can be drawn to
the Magic Square
game~\cite{mermin1990simple,peres1990incompatible,aravind2002simple},
in which questions are either cells or axis-parallel lines in a $3 \times 3$ square grid.
This has the same question distribution as
the axis-parallel line-point test over $\F_3^2$ (although the answers in the Magic Square game are strings in~$\F_2$ rather than~$\F_3$).
For the Magic Square game,
``points'' measurements along the same axis-parallel ``line'' commute,
but points that are not axis-aligned do \emph{not} commute: indeed,
for the perfect strategy, they anticommute. Despite this example,
we know that commutation between all pairs of points \emph{can} be deduced from the
axis-parallel lines test alone, rending the diagonal lines test unnecessary,
at least in the case of the bivariate ($m = 2$) multilinearity test (the low individual degree test when $d = 1$).
On the other hand, we know of a quantum strategy using noncommuting measurements
which succeeds with probability~$1$ in the $m = 2$, $d = 2$, $q = 4$ low individual degree test.
Whether this counterexample can be extended to larger~$q$ remains an open question.

\paragraph{Organization}
The rest of this document is organized as follows. In
\Cref{sec:prelims} we review some preliminaries concerning finite
fields, polynomials, and quantum measurements. In
\Cref{sec:making-measurements-projective}, we present two tools for
making quantum measurements projective, which are used in our
analysis. In \Cref{sec:induction}, we give the inductive argument
proving the main theorem. In \Cref{sec:expansion,sec:variance} we show
some properties of the hypercube graph and families of measurements
indexed by points on the hypercube, which are used in
\Cref{sec:self-improvement} to prove the self-improvement lemma. In
\Cref{sec:commutativity-points,sec:g-comm}, we show commutativity
properties of the measurements constructed in the induction, and
finally in \Cref{sec:ld-pasting} we analyze the pasting step of the induction.

\paragraph{Acknowledgments}

We thank Lewis Bowen for pointing out typos and a few minor errors in a previous version.
We also thank Madhu Sudan for help with references on the classical low individual degree test.

\section{The test}\label{sec:the-test}

\begin{definition}[Roles]
The low individual degree test will be played between two provers and a verifier.
The two provers are named Player~$\mathrm{A}$ and Player~$\mathrm{B}$.
A \emph{role} $r$ is an element of the set $\{\mathrm{A}, \mathrm{B}\}$.
Given a role~$r$, we write $\overline{r}$ for the other element of the set $\{\mathrm{A}, \mathrm{B}\}$.
\end{definition}

\begin{definition}[Low individual degree test]
Let~$m$ and~$d$ be nonnegative integers.
Let $q$ be a prime power.
Then the \emph{$(m,q,d)$-low individual degree test}
is stated in \Cref{fig:test}.
\end{definition}

{
\floatstyle{boxed} 
\restylefloat{figure}
\begin{figure}
With probability~$\tfrac{1}{3}$ each, perform one of the following three tests.
\begin{enumerate}
	\item \textbf{Axis-parallel lines test:}
		Pick a uniformly random role $\br \sim \{\mathrm{A},\mathrm{B}\}$.
		Let $\bu \sim \F_q^m$ be a uniformly random point.
		Select $\bi \sim \{1, \ldots, m\}$ uniformly at random,
		and let $\bell = \{\bu + t \cdot e_{\bi} \mid t \in \F_q\}$
		be the axis-parallel line which passes through~$\bu$ in the $\bi$-th direction.
		\begin{itemize}
		\item[$\circ$] Player~$\br$: 
			Give $\bell$;
			receive the univariate degree-$d$ polynomial $\boldf:\bell \rightarrow \F_q$.
		\item[$\circ$] Player~$\overline{\br}$:
			Give $\bu$;
			receive $\ba \in \F_q$.
		\end{itemize}
		Accept if $\boldf(\bu) = \ba$.
	\item \textbf{Self-consistency test:}
		Let $\bu \sim \F_q^m$ be a uniformly random point.
		\begin{itemize}
		\item[$\circ$] Player~$\mathrm{A}$: 
			Give $\bu$;
			receive $\ba \in \F_q$.
		\item[$\circ$] Player~$\mathrm{B}$:
			Give $\bu$;
			receive $\bb \in \F_q$.
		\end{itemize}
		Accept if $\ba = \bb$.
	\item \textbf{Diagonal lines test:}
		Pick a uniformly random role $\br \sim \{\mathrm{A},\mathrm{B}\}$.
		Let $\bu \sim \F_q^m$ be a uniformly random point.
		Select $\bi \sim \{1, \ldots, m\}$ uniformly at random,
		and let $\bv \in \F_q^m$ be a uniformly random point whose last $m - \bi$ coordinates are~$0$.
		Finally,
		let $\bell = \{\bu + t \cdot \bv \mid t \in \F_q\}$
		be the line which passes through~$\bu$ in direction~$\bv$.
		\begin{itemize}
		\item[$\circ$] Player~$\br$: 
			Give $\bell$;
			receive the univariate degree-$md$ polynomial $\boldf:\bell \rightarrow \F_q$.
		\item[$\circ$] Player~$\overline{\br}$:
			Give $\bu$;
			receive $\ba \in \F_q$.
		\end{itemize}
		Accept if $\boldf(\bu) = \ba$.
	\end{enumerate}
	\caption{The $(m,q,d)$-low individual degree test.\label{fig:test}}
\end{figure}
}

An important class of strategies are those which are \emph{symmetric}.
In this case, the bipartite state $\ket{\psi}$ Alice and Bob share is symmetric.
Furthermore, for any question Alice and Bob receive, they apply the same measurement to their share of the state.
This means that rather than, for example, keeping track of a separate points measurement for Alice and Bob,
we can use a single measurement to refer to both of their strategies,
and similarly for the axis-parallel lines and diagonal lines measurements.
This is formalized in the following definition, where we also consider strategies which are projective.

\begin{definition}[Symmetric, projective strategy]
A \emph{symmetric, projective strategy} for the $(m,q, d)$-low individual degree test
is a tuple $(\psi, A, B, L)$ defined as follows.
\begin{itemize}
\item[$\circ$] $\ket{\psi} \in \calH \ot \calH$ is a bipartite, permutation-invariant state.
\item[$\circ$] $A = \{A^u_a\}$ contains a matrix for each $u \in \F_q^m$ and $a \in \F_q$.
			For each $u$, $A^u$ is a projective measurement on~$\calH$.
\item[$\circ$] $B = \{B^{\ell}_f\}$ contains a matrix for each axis-parallel line~$\ell$ in $\F_q^m$
			and univariate degree-$d$ polynomial $f:\ell\rightarrow \F_q$.
			For each $\ell$, $B^{\ell}$ is a projective measurement on~$\calH$.
\item[$\circ$] $L = \{L^{\ell}_f\}$ contains a matrix for each line~$\ell$ in $\F_q^m$
			and univariate degree-$md$ polynomial $f:\ell\rightarrow \F_q$.
			For each $\ell$, $L^{\ell}$ is a projective measurement on~$\calH$.
\end{itemize}
\end{definition}

We can also consider more general strategies which are no longer assumed to be symmetric.
In this case, Players~$\mathrm{A}$ and~$\mathrm{B}$ each have their own versions of the measurements~$A$, $B$, and~$L$.

\begin{definition}[General projective strategy]
A \emph{projective strategy} for the $(m,q, d)$-low individual degree test
is a tuple $(\psi, A^{\mathrm{A}},B^{\mathrm{A}},L^{\mathrm{A}},A^{\mathrm{B}},B^{\mathrm{B}},L^{\mathrm{B}})$ defined as follows.
\begin{itemize}
\item[$\circ$] $\ket{\psi} \in \calH_{\mathrm{A}} \ot \calH_{\mathrm{B}}$ is a bipartite state.
\end{itemize}
Furthermore, for each~$w \in \{\mathrm{A}, \mathrm{B}\}$:
\begin{itemize}
\item[$\circ$] $A^{w} = \{A^{w,u}_a\}$ contains a matrix for each $u \in \F_q^m$ and $a \in \F_q$.
			For each $u$, $A^{w,u}$ is a projective measurement on~$\calH_{w}$.
\item[$\circ$] $B^{w} = \{B^{w,\ell}_f\}$ contains a matrix for each axis-parallel line~$\ell$ in $\F_q^m$
			and univariate degree-$d$ polynomial $f:\ell\rightarrow \F_q$.
			For each $\ell$, $B^{w,\ell}$ is a projective measurement on~$\calH_{w}$.
\item[$\circ$] $L^w = \{L^{w,\ell}_f\}$ contains a matrix for each line~$\ell$ in $\F_q^m$
			and univariate degree-$md$ polynomial $f:\ell\rightarrow \F_q$.
			For each $\ell$, $L^{w,\ell}$ is a projective measurement on~$\calH_{w}$.
\end{itemize}
\end{definition}

\begin{remark}
Throughout this work, we will only consider projective strategies.
So we will henceforth use the terms ``strategy" and ``symmetric strategy'' to refer exclusively to projective strategies and symmetric, projective strategies, respectively.

In addition, we will spend the vast majority of this work dealing solely with symmetric strategies, as they are notationally simpler to work with.
Much of this work will focus on proving~\Cref{thm:main-induction},
a variant of our main theorem for symmetric strategies.
Our main theorem for general strategies, \Cref{thm:main-formal} below, will be proven in \Cref{sec:induction} by a standard reduction to the symmetric case.
We will only see these more cumbersome-to-write general strategies
in this section, where we state \Cref{thm:main-formal},
and in \Cref{sec:induction}, where we carry out the reduction.
\end{remark}

\begin{definition}[Good strategy]
A strategy is $(\eps, \delta, \gamma)$-good if it
passes the axis-parallel lines test with probability at least $1-\eps$,
the self consistency test with probability at least~$1-\delta$,
and the diagonal lines test with probability at least~$1-\gamma$.
\end{definition}

\begin{remark}\label{rem:good-strat-characterization}
Using notation which will be introduced in \Cref{sec:comparing-measurements} below,
a symmetric strategy is $(\eps, \delta, \gamma)$-good
if and only if it satisfies the following three conditions.
For $\bell$ and $\bu$ as in the axis-parallel lines test,
\begin{equation*}
A^u_a \ot I \simeq_{\eps} I \ot B^{\ell}_{[f(u)=a]},
\quad
\text{and}
\quad
A^u_a \ot I \simeq_{\delta} I \ot A^u_a.
\end{equation*}
And for $\bell$ and $\bu$ as in the diagonal lines test,
\begin{equation*}
A^u_a \ot I \simeq_{\gamma} I \ot L^{\ell}_{[f(u)=a]}.
\end{equation*}
\end{remark}

\begin{notation}\label{not:conditioned-on-last-direction}
Our proof will be via induction,
i.e.\ proving soundness of the $(m+1,q,d)$-low individual degree test
using the soundness of the $(m,q,d)$-low individual degree test.
To do this, we will frequently use the axis-parallel line test in the specific case of $i = m+1$.
Thus, it will be convenient to introduce the following notation.
Let $(\psi, A, B, L)$ be a symmetric strategy for the $(m+1, q, d)$-low individual degree test,
Then for each $u \in \F_q^m$ we will write $B^u_f$ as shorthand for $B^{\ell}_f$,
where $\ell = \{(u, x) \mid x \in \F_q\}$.
For a function $f:\ell \rightarrow \F_q$,
we will also sometimes write $f(x)$ as shorthand for $f(u,x)$.
\end{notation}

\begin{definition}
Consider the $(m,q,d)$-low individual degree test.
For $j \in \{1, \ldots, m\}$,
we refer to the \emph{$j$-restricted diagonal lines test}
as the diagonal lines test conditioned on $\bi = j$.
For example, in the $m$-restricted diagonal lines test,
the line $\bell$ is simply a uniformly random line in $\F_q^m$.
\end{definition}

Now we state our main theorem using notation which will be introduced in \Cref{sec:prelims} below.
This the formal version of \Cref{thm:main-informal}.

\begin{theorem}[Main theorem; quantum soundness of the low individual degree test]\label{thm:main-formal}
  Consider a projective strategy $(\psi, A^{\mathrm{A}},B^{\mathrm{A}},L^{\mathrm{A}},A^{\mathrm{B}},B^{\mathrm{B}},L^{\mathrm{B}})$ 
   which passes the  $(m,q,d)$-low individual degree test with probability at least $1-\eps$.
  Let $k \geq md$ be an integer.
  Let
  \begin{equation*}
  \nu = 100000 k^2m^4 \cdot \Big(\eps^{1/40000} + (d/q)^{1/40000} + e^{-k/(2560000m^2)}\Big).
  \end{equation*}
Then there exists projective measurements $G^{\mathrm{A}}, G^{\mathrm{B}} \in \polymeas{m}{q}{d}$ with the following properties:
\begin{enumerate}
  \item (Consistency with~$A$):  On average over $\bu \sim \F_q^{m}$,
	  \begin{align*}
	  A^{\mathrm{A},u}_a \otimes I
	  	&\simeq_{\nu} I \otimes G^{\mathrm{B}}_{[g(u)=a]},\\
	I \otimes A^{\mathrm{B},u}_a
	  	&\simeq_{\nu}  G^{\mathrm{A}}_{[g(u)=a]} \ot I.
	  \end{align*}
\item (Self-consistency):
	\begin{equation*}
	G^{\mathrm{A}}_g \ot I \simeq_{\nu} I \ot G^{\mathrm{B}}_g.
	\end{equation*}
\end{enumerate}
\end{theorem}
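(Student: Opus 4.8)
The plan is to prove \Cref{thm:main-formal} by reducing the general (possibly asymmetric) strategy to a symmetric one, and then invoking the symmetric variant \Cref{thm:main-induction}. First I would perform the standard symmetrization: given a projective strategy $(\psi, A^{\mathrm{A}},B^{\mathrm{A}},L^{\mathrm{A}},A^{\mathrm{B}},B^{\mathrm{B}},L^{\mathrm{B}})$ on $\calH_{\mathrm{A}} \ot \calH_{\mathrm{B}}$, build a new strategy on $(\calH_{\mathrm{A}} \oplus \calH_{\mathrm{B}})^{\ot 2}$ (or the analogous doubled space) whose state is $\tfrac{1}{\sqrt 2}(\ket{\psi}_{\mathrm{AB}} + \ket{\psi}_{\mathrm{BA}})$ using a swap, and whose measurement operators act blockwise as $A^{\mathrm{A}}$ on one summand and $A^{\mathrm{B}}$ on the other (and likewise for $B$ and $L$). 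A direct computation shows this symmetrized strategy passes the axis-parallel lines test, the self-consistency test, and the diagonal lines test each with probability at least $1-3\eps$ (the factor reflecting that each subtest occurs with probability $\tfrac13$ and the total failure is $\eps$), hence is $(3\eps, 3\eps, 3\eps)$-good in the sense of \Cref{rem:good-strat-characterization}.

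Next I would apply \Cref{thm:main-induction} (the symmetric version, proven via the induction in \Cref{sec:induction}) to the symmetrized $(3\eps,3\eps,3\eps)$-good strategy. This produces a single projective measurement $G \in \polymeas{m}{q}{d}$ on the symmetrized space satisfying, on average over $\bu \sim \F_q^m$, $A^u_a \ot I \simeq_{\nu'} I \ot G_{[g(u)=a]}$ and $G_g \ot I \simeq_{\nu'} I \ot G_g$, for the appropriate $\nu'$ which, after absorbing the constant blowup from $\eps \mapsto 3\eps$ into the polynomial rate (the $1/40000$ exponents are generous enough that $3^{1/40000}$ is absorbed), matches the $\nu$ in the statement. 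I would then ``un-symmetrize'': restrict $G$ back to the original $\calH_{\mathrm{A}}$ and $\calH_{\mathrm{B}}$ blocks to define $G^{\mathrm{A}}$ and $G^{\mathrm{B}}$, and verify that the symmetric consistency and self-consistency relations on the doubled space, when read off on the cross terms between the $\mathrm{AB}$ and $\mathrm{BA}$ components of the symmetrized state, translate exactly into the two asymmetric consistency statements $A^{\mathrm{A},u}_a \ot I \simeq_\nu I \ot G^{\mathrm{B}}_{[g(u)=a]}$, $I \ot A^{\mathrm{B},u}_a \simeq_\nu G^{\mathrm{A}}_{[g(u)=a]} \ot I$, and the self-consistency $G^{\mathrm{A}}_g \ot I \simeq_\nu I \ot G^{\mathrm{B}}_g$.

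The genuinely substantive content is of course \Cref{thm:main-induction} itself, whose proof occupies most of the paper: the induction on $m$ described in the technical overview, with the self-improvement lemma (\Cref{thm:self-improvement-in-induction-section}) used at each step to reset the consistency error, the commutativity results of \Cref{sec:commutativity-points,sec:g-comm} certified in part by the diagonal lines test, and the pasting/interpolation analysis of \Cref{sec:ld-pasting}. For the present theorem, however, that is a black box. The main obstacle in the reduction itself is bookkeeping: one must be careful that the symmetrization correctly handles the three subtests with their distinct answer formats (points in $\F_q$, degree-$d$ univariate polynomials, degree-$md$ univariate polynomials), that the permutation-invariance of the new state is genuine, and — most delicately — that the $\simeq_\nu$ relations on the doubled space disentangle into the desired one-sided relations without losing more than a constant factor; this last point uses that for a state of the form $\tfrac{1}{\sqrt2}(\ket{\psi}_{\mathrm{AB}}+\ket{\psi}_{\mathrm{BA}})$, the reduced density matrix is block-diagonal across the two summands, so cross terms vanish and the consistency inner products split cleanly. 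Everything else is routine.
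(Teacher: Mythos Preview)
Your symmetrization-then-unsymmetrization plan is the same high-level route the paper takes, and your description of the role-register construction is fine. The gap is in what you assume \Cref{thm:main-induction} delivers. You write that it ``produces a single projective measurement $G$ \ldots\ satisfying \ldots\ $G_g \ot I \simeq_{\nu'} I \ot G_g$.'' Neither of these holds: \Cref{thm:main-induction} outputs only a (generally non-projective) measurement $G \in \polymeas{m}{q}{d}$ with the single guarantee $A^u_a \ot I \simeq_\sigma I \ot G_{[g(u)=a]}$. It gives no self-consistency for $G$, and no projectivity. So your unsymmetrization step, which relies on reading off $G^{\mathrm{A}}_g \ot I \simeq_\nu I \ot G^{\mathrm{B}}_g$ from self-consistency on the doubled space, has nothing to feed on.

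The paper fills these holes with real work after unsymmetrizing. First, self-consistency of $G^{\mathrm{A}}$ against $G^{\mathrm{B}}$ is \emph{derived} from the $A$-consistency: chaining $G^{\mathrm{A}}_{[g(u)=a]} \simeq A^{\mathrm{B},u}_a \simeq A^{\mathrm{A},u}_a \simeq G^{\mathrm{B}}_{[g(u)=a]}$ via the triangle inequality \Cref{prop:simeq-triangle-inequality} gives consistency at the level of evaluations, and then Schwartz--Zippel upgrades this to $G^{\mathrm{A}}_g \ot I \simeq I \ot G^{\mathrm{B}}_g$ (if two individual-degree-$d$ polynomials agree at a random point with high probability, they are equal with high probability). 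Second, projectivity is obtained by applying the orthogonalization lemma (\Cref{lem:orthonormalization-main-lemma}) to this $G^{\mathrm{A}}$--$G^{\mathrm{B}}$ self-consistency, producing projective sub-measurements $P^{\mathrm{A}}, P^{\mathrm{B}}$, which are then completed to projective measurements $Q^{\mathrm{A}}, Q^{\mathrm{B}}$ via \Cref{prop:completing-to-measurement}. All the consistency relations then have to be re-verified for $Q$ and the errors re-tracked; this chain of approximations is where the slack in the exponent ($1/40000$ rather than $1/1024$) is spent. None of this is ``routine bookkeeping'' --- it is the substantive content of the reduction beyond the black-box call to \Cref{thm:main-induction}.
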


We note that there is a tradeoff in \Cref{thm:main-formal} specified by the parameter~$k$.
As~$k$ increases, $\nu$'s prefactor $k^2$ increases.
On the other hand, as $k$ increases, the term $e^{- k/(2560000m^2)}$ which also occurs in $\nu$ \emph{decreases}.
Thus, when applying this theorem, one must select~$k$ to balance these competing demands.
Typically, choosing $k = \poly(m)$ should be more than sufficient for applications.
We believe that this parameter~$k$ is an artifact of our proof,
and we hope that it will be removed in the future.


\section{Preliminaries}\label{sec:prelims}

We use \textbf{boldface font} to denote random variables.
For two complex numbers $\alpha, \beta \in \C$, we write
$\alpha \approx_{\eps} \beta$ if
\begin{equation*}
|\alpha - \beta| \leq \eps.
\end{equation*}
We note the following triangle inequality for numbers, which we will use repeatedly:
\begin{equation}\label{eq:triangle-inequality-for-numbers}
\text{if $\alpha \approx_{\eps} \beta$ and $\beta \approx_{\delta} \gamma$, then $\alpha \approx_{\eps+\delta}\gamma$.}
\end{equation}

\subsection{Finite fields}

A \emph{finite field} is a field with a finite number of elements.
There is a unique finite field~$\F_q$ of~$q$ elements for each prime power $q = p^t$,
and there are no other finite fields.
We write $\omega$ for the $p$-th root of unity $\omega = e^{2 \pi i/p}$.

\begin{definition}[Finite field trace]
The \emph{finite field trace} is the function $\tr:\F_q \rightarrow \F_p$ defined as
\begin{equation*}
\tr[x] = \sum_{\ell=0}^{t-1} x^{p^{\ell}}.
\end{equation*}
\end{definition}

\begin{proposition}
\label{prop:fourier-fact-scalar}
Let $a \in \F_q$. Then
\begin{equation*}
\E_{\bx \sim \F_q} \omega^{\tr[\bx \cdot a]}
= \left\{\begin{array}{rl}
	1 & \text{if $a = 0$},\\
	0 & \text{otherwise}.
	\end{array}\right.
\end{equation*}
\end{proposition}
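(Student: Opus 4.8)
The plan is to split into the cases $a = 0$ and $a \neq 0$. When $a = 0$ we have $\tr[\bx \cdot 0] = \tr[0] = 0$ for every $\bx \in \F_q$, so $\omega^{\tr[\bx\cdot a]} = 1$ identically and the expectation equals $1$. This settles the first case with no further work, so the content of the proposition is entirely in the case $a \neq 0$.

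For $a \neq 0$, the first observation is that multiplication by $a$ is a bijection of $\F_q$, hence $\E_{\bx \sim \F_q}\omega^{\tr[\bx \cdot a]} = \E_{\by \sim \F_q}\omega^{\tr[\by]}$, and it suffices to show $\E_{\by \sim \F_q}\omega^{\tr[\by]} = 0$. I would establish this using two structural facts about the field trace. First, $\tr$ is $\F_p$-linear as a map $\F_q \to \F_p$: this follows from the Frobenius identity $(x+y)^p = x^p + y^p$ in characteristic $p$ together with $c^{p} = c$ for $c \in \F_p$ (Fermat), which gives $\tr[x+y] = \tr[x] + \tr[y]$ and $\tr[cx] = c\,\tr[x]$. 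Second, $\tr$ is not identically zero on $\F_q$: viewed as the polynomial $x + x^{p} + \cdots + x^{p^{t-1}}$ it has degree $p^{t-1} < q$, so it cannot vanish at all $q$ points of $\F_q$.

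Combining these, $\tr$ is a surjective $\F_p$-linear map, so by rank–nullity each fiber $\tr^{-1}(c)$ with $c \in \F_p$ has exactly $q/p$ elements. Grouping the average over $\by$ according to the value of $\tr[\by]$ then gives
\[
\E_{\by \sim \F_q}\omega^{\tr[\by]} = \frac{1}{q}\sum_{c \in \F_p}\bigl|\tr^{-1}(c)\bigr|\,\omega^{c} = \frac{1}{p}\sum_{c=0}^{p-1}\omega^{c} = 0,
\]
where the last step is the standard identity $\sum_{c=0}^{p-1}\omega^{c} = (\omega^{p}-1)/(\omega-1) = 0$, valid since $\omega$ is a $p$-th root of unity with $\omega \neq 1$. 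The only step requiring any care is the non-vanishing of $\tr$ (everything else is routine linear algebra and a geometric series), and even that can be replaced by citing the well-known surjectivity of the finite field trace; I do not expect a genuine obstacle here, as this is simply the orthogonality relation for the additive characters $x \mapsto \omega^{\tr[x\cdot a]}$ of $\F_q$.
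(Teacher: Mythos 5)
Your proof is correct but takes a genuinely different route from the paper's. The paper uses the classic ``shift'' trick: it picks a $y$ with $\tr[y\cdot a]\neq 0$, sets $C = \E_{\bx}\omega^{\tr[\bx\cdot a]}$, and observes that substituting $\bx \mapsto \bx + y$ leaves $C$ unchanged while multiplying it by $\omega^{\tr[y\cdot a]}\neq 1$, forcing $C=0$. You instead compute the full distribution of $\tr$ on $\F_q$: you show $\tr$ is a surjective $\F_p$-linear map (additivity via Frobenius, nonvanishing via the degree bound $p^{t-1}<q$), conclude by rank--nullity that each fiber has exactly $q/p$ points, and then reduce to the geometric series $\sum_{c=0}^{p-1}\omega^c=0$. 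Both arguments hinge on the same core fact --- that $\tr$ is not identically zero --- but deploy it differently. The paper's shift argument is shorter and needs no linear-algebra bookkeeping; yours is more structural, actually determines the fiber sizes, and makes the connection to additive characters of $\F_q$ explicit, which is the more portable viewpoint if one later needs orthogonality of several characters at once. One small remark: since the goal is only orthogonality, the paper's $C = C\cdot\omega^{\tr[y\cdot a]}$ step could also be applied directly inside your reduction $\E_{\by}\omega^{\tr[\by]}$, bypassing rank--nullity entirely; but as written your proof is complete and sound.
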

\begin{proof}
If $a = 0$, then $\tr[x \cdot a] = 0$ for all $x \in \F_q$.
As a result,
\begin{equation*}
\E_{\bx \sim \F_q} \omega^{\tr[\bx \cdot a]}
= \E_{\bx \sim \F_q} \omega^{0}
= \E_{\bx \sim \F_q} 1
= 1.
\end{equation*}
On the other hand, if $a \neq 0$, then there exists a $y \in \F_q$
such that $\tr[a \cdot y] \neq 0$. As a result,
\begin{equation*}
C
:= \E_{\bx \sim \F_q} \omega^{\tr[\bx \cdot a]}
= \E_{\bx \sim \F_q} \omega^{\tr[(\bx+y) \cdot a]}
= \E_{\bx \sim \F_q} (\omega^{\tr[\bx \cdot a]} \cdot \omega^{\tr[y \cdot a]})
= C \cdot \omega^{\tr[y\cdot a]}.
\end{equation*}
But because $\tr[y \cdot a] \neq 0$,
$\omega^{\tr[y\cdot a]} \neq 1$.
This implies that $C = 0$.
\end{proof}

\begin{proposition}
\label{prop:fourier-fact-vector}
Let $v \in \F_q^m$. Then
\begin{equation*}
\E_{\bu \sim \F_q^m} \omega^{\tr[\bu \cdot v]}
= \left\{\begin{array}{rl}
	1 & \text{if $v = 0$},\\
	0 & \text{otherwise}.
	\end{array}\right.
\end{equation*}
\end{proposition}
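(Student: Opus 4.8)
The plan is to reduce \Cref{prop:fourier-fact-vector} to the scalar case \Cref{prop:fourier-fact-scalar} by factoring the expectation over the $m$ coordinates. First I would write $\bu \cdot v = \bu_1 v_1 + \cdots + \bu_m v_m$ and use the fact that the finite field trace $\tr:\F_q \to \F_p$ is additive (being a sum of Frobenius powers, each of which is additive in characteristic $p$), so that $\tr[\bu \cdot v] = \sum_{i=1}^m \tr[\bu_i v_i]$. Exponentiating, $\omega^{\tr[\bu \cdot v]} = \prod_{i=1}^m \omega^{\tr[\bu_i v_i]}$.

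Next, since $\bu \sim \F_q^m$ uniformly is the same as sampling the coordinates $\bu_1, \ldots, \bu_m$ independently and uniformly from $\F_q$, the expectation of the product factors as a product of expectations:
\begin{equation*}
\E_{\bu \sim \F_q^m} \omega^{\tr[\bu \cdot v]} = \prod_{i=1}^m \E_{\bu_i \sim \F_q} \omega^{\tr[\bu_i \cdot v_i]}.
\end{equation*}
By \Cref{prop:fourier-fact-scalar}, the $i$-th factor equals $1$ when $v_i = 0$ and $0$ when $v_i \neq 0$. Hence the whole product is $1$ precisely when every $v_i = 0$, i.e.\ when $v = 0$, and is $0$ otherwise, which is exactly the claimed formula.

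There is essentially no obstacle here; the only point requiring a word of justification is the additivity of $\tr$, and the rest is independence of coordinates plus the already-established scalar identity. I would keep the write-up to a few lines.
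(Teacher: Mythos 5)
Your proof is correct and follows exactly the same route as the paper's: additivity of the finite field trace, factoring the expectation over independent coordinates, and applying \Cref{prop:fourier-fact-scalar} coordinate-wise. (If anything, your write-up is slightly more careful than the paper's, which omits the expectation symbols $\E_{\bu_i}$ in the intermediate product.)
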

\begin{proof}
By the linearity of the trace,
\begin{equation*}
\E_{\bu \sim \F_q^m} \omega^{\tr[\bu \cdot v]}
= \E_{\bu \sim \F_q^m} (\omega^{\tr[\bu_1 \cdot v_1]} \cdots \omega^{\tr[\bu_m \cdot v_m]})
= \prod_{i=1}^m (\omega^{\tr[\bu_i \cdot v_i]}),
\end{equation*}
which by \Cref{prop:fourier-fact-scalar} is $1$ if $v_i = 0$ for all $i$ and $0$ otherwise.
\end{proof}

\subsection{Polynomials over finite fields}

\begin{definition}[Polynomials of low individual degree]
Let $q$ be a prime power,
and let $m$ and $d$ be nonnegative integers.
We define $\polyfunc{m}{q}{d}$ to be the set of polynomials in $\F_q^m$
with individual degree~$d$.
\end{definition}

\begin{remark}
Note that as defined in \Cref{sec:intro},
in this work a polynomial with individual degree~$d$ is one in which the degree~$d_i$ of each coordinate~$i$
is \emph{at most~$d$}.
This allows us to say ``individual degree~$d$'' rather than the wordier ``individual degree at most~$d$''.
For example, under this definition, $\polyfunc{m}{q}{d}$ is contained in $\polyfunc{m}{q}{d+1}$ for each~$d$.
\end{remark}

The most important fact about low-degree polynomials is that they have large distance from each other.
This is shown by the following lemma.

\begin{lemma}[Schwartz-Zippel lemma~\cite{Sch80,Zip79}]\label{lem:schwartz-zippel-total-degree}
Let $g, h:\F_q^m \rightarrow \F_q$ be two distinct polynomials of total degree~$d$.
Then
\begin{equation*}
\Pr_{\bx \sim \F_q^m}[g(\bx) = h(\bx)] \leq \frac{d}{q}.
\end{equation*}
\end{lemma}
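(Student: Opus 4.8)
The plan is to reduce the statement to the claim that a \emph{single} nonzero polynomial has few roots: set $p = g - h$, which by hypothesis is a nonzero polynomial of total degree at most~$d$, so it suffices to show $\Pr_{\bx \sim \F_q^m}[p(\bx) = 0] \leq d/q$. I would prove this by induction on the number of variables~$m$. For the base case $m = 1$, a nonzero univariate polynomial of degree at most~$d$ over the field~$\F_q$ has at most~$d$ roots (by repeated application of the factor theorem), so the fraction of $\bx \in \F_q$ with $p(\bx) = 0$ is at most $d/q$.

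For the inductive step, I would write $p(x_1, \ldots, x_m) = \sum_{j=0}^{d} x_m^j \, p_j(x_1, \ldots, x_{m-1})$, where each $p_j$ has total degree at most $d - j$, and let $k$ be the largest index with $p_k \not\equiv 0$; note $p_k$ has total degree at most $d - k$. Writing $\bx' := (\bx_1, \ldots, \bx_{m-1})$ and splitting on whether $p_k$ vanishes at~$\bx'$,
\begin{align*}
\Pr_{\bx}[p(\bx) = 0]
&= \Pr_{\bx}[p(\bx) = 0 \,\wedge\, p_k(\bx') = 0] + \Pr_{\bx}[p(\bx) = 0 \,\wedge\, p_k(\bx') \neq 0]\\
&\leq \Pr_{\bx'}[p_k(\bx') = 0] + \Pr_{\bx'}[p_k(\bx') \neq 0] \cdot \max_{x' :\, p_k(x') \neq 0} \Pr_{\bx_m}[p(x', \bx_m) = 0].
\end{align*}
The first term is at most $(d-k)/q$ by the induction hypothesis applied to the $(m-1)$-variable polynomial~$p_k$. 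For the second term, whenever $p_k(x') \neq 0$ the restriction $x_m \mapsto p(x', x_m)$ is a nonzero univariate polynomial of degree exactly~$k$, hence has at most~$k$ roots, so that term is at most $k/q$. Adding these bounds gives $\Pr_{\bx}[p(\bx) = 0] \leq (d-k)/q + k/q = d/q$, which closes the induction and proves the lemma.

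The argument is entirely standard, so I do not expect a genuine obstacle; the only points that require care are the bookkeeping of total degrees (checking $\deg p_k \leq d - k$ and that the univariate restriction has degree $\leq k$) and the correct use of the law of total probability in the inductive step — in particular, that fixing any value of~$\bx'$ leaves $\bx_m$ uniform on~$\F_q$, so the univariate root count translates directly into a probability bound. The degenerate cases are handled automatically: if $d = 0$ (or more generally $k = 0$ in the induction), the relevant polynomial is a nonzero constant with no roots, so the corresponding term vanishes.
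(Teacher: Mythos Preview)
Your proof is correct and is the standard induction-on-$m$ argument for Schwartz--Zippel. The paper does not actually prove this lemma: it is stated with a citation to~\cite{Sch80,Zip79} and used as a black box throughout, so there is no ``paper's own proof'' to compare against. Your write-up supplies exactly the textbook argument one would give if asked to fill in the details.
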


Since any polynomial with individual degree~$d$ has total degree~$md$,
\Cref{lem:schwartz-zippel-total-degree} implies the following corollary.

\begin{corollary}[Schwartz-Zippel for individual degree]
Let $g, h \in \polyfunc{m}{q}{d}$ be distinct. Then
\begin{equation*}
\Pr_{\bx \sim \F_q^m}[g(\bx) = h(\bx)] \leq \frac{md}{q}.
\end{equation*}
\end{corollary}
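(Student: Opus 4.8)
The plan is to reduce directly to the total-degree Schwartz--Zippel lemma (\Cref{lem:schwartz-zippel-total-degree}). The only thing to observe is that individual degree~$d$ in~$m$ variables forces total degree at most~$md$. Concretely, if $g \in \polyfunc{m}{q}{d}$ and $x_1^{i_1}\cdots x_m^{i_m}$ is a monomial appearing in~$g$ with nonzero coefficient, then by the definition of individual degree we have $i_j \le d$ for every~$j \in \{1,\dots,m\}$, so $i_1 + \cdots + i_m \le md$. Hence every element of $\polyfunc{m}{q}{d}$ is a polynomial of total degree at most~$md$.

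With this in hand, the corollary follows in a single step. Given distinct $g, h \in \polyfunc{m}{q}{d}$, both have total degree at most~$md$, so I would apply \Cref{lem:schwartz-zippel-total-degree} with degree parameter~$md$ to the pair~$(g,h)$, which yields $\Pr_{\bx \sim \F_q^m}[g(\bx) = h(\bx)] \le md/q$. If one insists on reading ``total degree~$d$'' in \Cref{lem:schwartz-zippel-total-degree} as ``total degree exactly~$d$'', apply it instead with the actual common upper bound $d' \le md$ on the total degrees of~$g$ and~$h$ and note $d'/q \le md/q$; under the paper's convention, where ``degree~$d$'' already means ``degree at most~$d$'', no such adjustment is needed.

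There is no real obstacle here: the statement is a direct specialization of the total-degree case, and the entire content is the elementary bound $i_1 + \cdots + i_m \le md$ on the total degree of an individual-degree-$d$ monomial. The only point worth double-checking is the bookkeeping convention for ``degree~$d$'' versus ``degree at most~$d$'', which the surrounding text has already fixed.
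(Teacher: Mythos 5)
Your proof is correct and matches the paper's own (one-line) argument: bound the total degree by $md$ and invoke \Cref{lem:schwartz-zippel-total-degree}. The remark about the ``at most $d$'' convention is a fine sanity check but not needed, since the paper already fixes that reading.
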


\subsection{Measurements}

\begin{definition}[Measurements and sub-measurements]
Let $\calH$ be a Hilbert space and $\calA$ be a set of outcomes.
A \emph{sub-measurement}
is a set of Hermitian, positive-semidefinite matrices
$A = \{A_a\}_{a \in \calA}$
acting on~$\calH$
such that $\sum_a A_a \leq I$.
The sub-measurement is \emph{projective} if $(A_a)^2 = A_a$ for each~$a$.
It is a \emph{measurement}
if it satisfies the stronger condition $\sum_a A_a = I$.
\end{definition}

An important class of sub-measurements
are those that output polynomials~$g$ of individual degree~$d$.
These are defined as follows.

\begin{definition}[Low-degree polynomial measurements]
We write $\polysub{m}{q}{d}$ for the set of sub-measurements $G=\{G_g\}$
with outcomes $g \in \polyfunc{m}{q}{d}$.
We write $\polymeas{m}{q}{d}$ for the subset of $\polysub{m}{q}{d}$
containing only those $G = \{G_g\}$ which are measurements.
\end{definition}

\begin{definition}[Post-processing measurements]\label{def:post-processing}
Let $A = \{A_a\}_{a \in \calA}$ be a set of matrices,
and let $f:\calA \rightarrow \calB$ be a function.
Then for each $b \in \calB$,
we define the matrix
\begin{equation*}
A_{[f(a)=b]} = \sum_{a:f(a)=b} A_a.
\end{equation*}
\end{definition}

\begin{remark}
We note that \Cref{def:post-processing}
agrees with the notation for post-processing measurements
used in \cite{NW19},
but it disagrees with the notation used in \cite{JNV+20}.
That work uses the notation ``$A_{[f(\cdot)=b]}$'' rather
than the notation ``$A_{[f(a)=b]}$'' that we use in this work. 
\end{remark}

An easy-to-prove fact is that measurements remain measurements after post-processing their outcomes.

\begin{proposition}
Let $A = \{A_a\}_{a \in \calA}$ be a set of matrices,
and let $f:\calA \rightarrow \calB$ be a function.
Then
\begin{equation*}
\sum_a A_a = \sum_b A_{[f(a)=b]}.
\end{equation*}
Thus, if~$\{A_a\}$ is a sub-measurement (respectively, measurement),
then~$\{A_{[f(a)=b]}\}$ is also a sub-measurement (respectively, measurement).
\end{proposition}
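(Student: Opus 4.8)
The plan is to prove the claimed identity $\sum_a A_a = \sum_b A_{[f(a)=b]}$ by a direct computation, and then observe that the sub-measurement (resp.\ measurement) conclusion follows immediately from it together with the fact that Hermitian positive-semidefiniteness is preserved under finite sums. First I would unfold the definition: by \Cref{def:post-processing}, for each $b \in \calB$ we have $A_{[f(a)=b]} = \sum_{a : f(a)=b} A_a$, so
\begin{equation*}
\sum_{b \in \calB} A_{[f(a)=b]} = \sum_{b \in \calB} \sum_{a : f(a)=b} A_a.
\end{equation*}
The right-hand side ranges over all pairs $(b,a)$ with $f(a)=b$; since $f$ is a function, every $a \in \calA$ lies in exactly one fiber $f^{-1}(b)$, so the fibers $\{f^{-1}(b)\}_{b \in \calB}$ partition $\calA$ and the double sum collapses term-for-term to $\sum_{a \in \calA} A_a$. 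This establishes the identity. (If $\calA$ or $\calB$ were infinite one would additionally invoke positive-semidefiniteness of all the $A_a$ to justify the rearrangement unconditionally; in this paper the index sets are finite, so this is a non-issue.)

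For the second claim, each $A_{[f(a)=b]}$ is a finite sum of Hermitian positive-semidefinite matrices, hence itself Hermitian and positive-semidefinite, so $\{A_{[f(a)=b]}\}_{b \in \calB}$ is a set of matrices of the correct type. If $\{A_a\}$ is a sub-measurement, then by the identity just proved $\sum_b A_{[f(a)=b]} = \sum_a A_a \leq I$, so $\{A_{[f(a)=b]}\}$ is a sub-measurement; if $\{A_a\}$ is a measurement, the same line with equality shows $\sum_b A_{[f(a)=b]} = \sum_a A_a = I$, so $\{A_{[f(a)=b]}\}$ is a measurement. I do not anticipate any genuine obstacle here: the only point requiring (minimal) care is the bookkeeping that the fibers of $f$ partition its domain, which is precisely what makes the two sums agree.
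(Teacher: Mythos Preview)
Your proof is correct. The paper does not actually supply a proof for this proposition---it is introduced as ``an easy-to-prove fact'' and stated without argument---so there is nothing to compare against; your direct computation via the fiber partition of $\calA$ is exactly the standard justification one would expect.
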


\begin{remark}
We note that there is some ambiguity in the notation $A_{[f(a)=b]}$
because it requires one to know which of~$f$ or~$a$ is the measurement outcome
and which is the function applied to it.
For example, given a sub-measurement $G = \{G_g\} \in \polysub{m}{q}{d}$,
we will often consider evaluating its outputs at a point $u \in \F_q^m$.
This entails looking at the sub-measurement
\begin{equation*}
\{G_{[g(u)=a]}\}_{a \in \F_q},
\quad
\text{where}
\quad
G_{[g(u)=a]} = \sum_{g:g(u) = a} G_g.
\end{equation*}
Here, $g$ is the measurement outcome of~$G$,
and the function being applied maps it to its evaluation on the point~$u$, i.e.\ $g(u)$.
In general, it should always be clear from context
what the measurement outcome and the function being applied to it are.
\end{remark}

\begin{notation}[Measurements indexed by questions]
We will frequently encounter sets of sub-measurements $A^x= \{A^x_{a}\}$ indexed by elements~$x$ from a set of ``questions" $\calX$. We will write $A = \{A^x_a\}$ for this set. 
We will typically refer to this set~$A$ as a sub-measurement,
and we will refer to it as a measurement if each~$A^x$ is a measurement.
In addition, we will refer to it as projective if each~$A^x$ is projective.
\end{notation}

\begin{notation}[Complete part of sub-measurement]
Given a sub-measurement~$A = \{A_a\}$, we will write $A = \sum_a A_a$.
Similarly, given a sub-measurement~$A = \{A^x_a\}$, we will write $A^x = \sum_a A^x_a$ and $A = \E_{\bx} A^{\bx}$.
\end{notation}

The complete part of a sub-measurement
contrasts with its \emph{incomplete part},
which is the matrix $I - A^x$.
We will sometimes
view~$A$ as a measurement
by throwing in its incomplete part as an additional POVM element.
This is formalized in the following definition.

\begin{definition}[Completing a sub-measurement]\label{def:measurement-completion}
Let $A = \{A^x_a\}_{a \in \calA}$ be a sub-measurement.
We define the \emph{completion of~$A$},
denoted $\mathrm{completion}(A)$,
to be the measurement $\widehat{A} = \{\widehat{A}^x_a\}$
with outcome set $\widehat{\calA} = \calA \cup \{\bot\}$ 
such that for each $a \in \widehat{\calA}$,
\begin{equation*}
A^x_a
= \left\{\begin{array}{cl}
		A^x_a & \text{if $a \in \calA$,}\\
		I - A^x & \text{if $a = \bot$}.
		\end{array}\right.
\end{equation*}
\end{definition}

\subsection{Comparing measurements}
\label{sec:comparing-measurements}

An central problem in this paper is recognizing when two measurements are close to each other. 
We will survey two methods of doing so,
using the \emph{consistency} and the \emph{state dependent distance}.
This section largely mirrors Sections 4.4 and 4.5 of \cite{NW19},
which prove numerous properties of these two distances.
However, we are unable to cite their results directly
because they are mostly stated and proven only for measurements,
whereas we will need to apply them to sub-measurements as well.

\subsubsection{Consistency between measurements}

The most basic notion of similarity between two measurements is given by their \emph{consistency}.

\begin{definition}[Consistency]\label{def:simeq}
Let $\ket{\psi}$ be a state in $\calH_{\mathrm{A}} \ot \calH_{\mathrm{B}}$.
Let $A = \{A^x_a\}$ be a sub-measurement acting on $\calH_{\mathrm{A}}$
and $B = \{B^x_a\}$ be a sub-measurement acting on $\calH_{\mathrm{B}}$.
Finally, let $\calD$ be a distribution on the question set~$\calX$.
Then we say that
 \begin{equation*}
 A^x_a \ot I \simeq_{\delta} I \ot B^x_a
 \end{equation*}
  on state $\ket{\psi}$ and distribution $\calD$ if
  \begin{equation}\label{eq:no-big-Oh}
  \E_{\bx \sim \calD} \sum_{a \neq b} \bra{\psi} A^{\bx}_a \ot B^{\bx}_b
    \ket{\psi} \leq \delta.
    \end{equation}
\end{definition}

This is simply the probability that the provers receive different outcomes
when they measure with~$A$ and~$B$,
assuming the sub-measurements do return an outcome.

\begin{notation}[Simplifying notation]
Because the state $\ket{\psi}$ and distribution $\calD$ are typically clear from context,
we will often write ``$ A^x_a \ot I \simeq_{\delta} I \ot B^x_a$"
as shorthand for ``$ A^x_a \ot I \simeq_{\delta} I \ot B^x_a$ on state $\ket{\psi}$ and distribution $\calD$".
In the case when $\calD$ is not clear by context,
we might specify it implicitly in terms of a random variable~$\bx$ distributed according to~$\calD$.
For example, suppose the distribution~$\calD$ is supposed to be uniform on the question set $\F_q$.
Then we might say ``on average over $\bx \sim \F_q$,
\begin{equation*}
A^x_a \ot I \simeq_{\delta} I \ot B^x_a"
\end{equation*}
as shorthand for ``$ A^x_a \ot I \simeq_{\delta} I \ot B^x_a$ on distribution $\calD$".
\end{notation}

We note that there are two differences between the definition of consistency in \Cref{def:simeq}
and the original definition of consistency given in \cite[Definition 4.11]{NW19}.
The first of these is that the~\cite{NW19} definition
allows the right-hand side of \Cref{eq:no-big-Oh} to be $O(\delta)$
rather than strictly~$\delta$.
The benefit of this is that they do not need to keep track of constant prefactors in their proofs;
we have elected to use this more concrete definition
to make our proofs more easily verifiable, at the expense of tracking these constant prefactors.
The second difference is that is that they define their consistency as the quantity
\begin{equation*}
1 - \E_{\bx} \sum_a \bra{\psi} A^{\bx}_a \ot B^{\bx}_a \ket{\psi}.
\end{equation*}
As we show below in \Cref{prop:simeq-for-measurements},
this agrees with \Cref{def:simeq}
when~$A$ and~$B$ are both measurements.
However, these two definitions disagree
when~$A$ and~$B$ are \emph{sub}-measurements,
which is a case we will frequently encounter throughout this paper.

\begin{proposition}[Consistency for measurements]\label{prop:simeq-for-measurements}
Let $A = \{A^x_a\}$ and $B = \{B^x_a\}$ be two measurements.
Then
\begin{equation*}
A^x_a \ot I \simeq_{\delta} I \ot B^x_a
\end{equation*}
if and only if
\begin{equation*}
\E_{\bx} \sum_a \bra{\psi} A^{\bx}_a \ot B^{\bx}_a \ket{\psi} \geq 1 - \delta.
\end{equation*}
\end{proposition}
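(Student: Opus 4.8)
The plan is to use the defining property of a genuine measurement, namely that its POVM elements sum to the identity, so that the ``diagonal'' consistency mass and the ``off-diagonal'' disagreement mass are exactly complementary. First I would record that since $A^x = \{A^x_a\}$ and $B^x = \{B^x_a\}$ are measurements, we have $\sum_a A^x_a = I$ and $\sum_b B^x_b = I$ for every question $x \in \calX$. Using bilinearity of $\bra{\psi}\cdot\ket{\psi}$ to pull the sums outside the inner product, this yields, for each fixed $x$,
\[
\sum_{a,b} \bra{\psi} A^x_a \ot B^x_b \ket{\psi}
= \bra{\psi} \Big(\sum_a A^x_a\Big) \ot \Big(\sum_b B^x_b\Big) \ket{\psi}
= \braket{\psi | \psi} = 1 .
\]

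Next I would split the double sum into the diagonal part $a = b$ and the off-diagonal part $a \neq b$, and take the expectation over $\bx \sim \calD$, obtaining
\[
\E_{\bx} \sum_a \bra{\psi} A^{\bx}_a \ot B^{\bx}_a \ket{\psi}
+ \E_{\bx} \sum_{a \neq b} \bra{\psi} A^{\bx}_a \ot B^{\bx}_b \ket{\psi}
= 1 .
\]
Because each $A^x_a \ot B^x_b$ is a tensor product of positive semidefinite operators and hence itself positive semidefinite, every term here is a nonnegative real number, so there is no issue of convergence or of comparing complex quantities; the identity is simply an equality of reals. The second summand on the left is precisely the quantity that \Cref{def:simeq} bounds by $\delta$, and the first summand is the quantity appearing in the proposition's second display. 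Rearranging, $\E_{\bx} \sum_{a \neq b} \bra{\psi} A^{\bx}_a \ot B^{\bx}_b \ket{\psi} \leq \delta$ holds if and only if $\E_{\bx} \sum_a \bra{\psi} A^{\bx}_a \ot B^{\bx}_a \ket{\psi} \geq 1 - \delta$, which is exactly the claimed equivalence (in both directions at once).

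There is essentially no obstacle here; the one point worth emphasizing in the write-up is that the normalization $\sum_{a,b} \bra{\psi} A^x_a \ot B^x_b \ket{\psi} = 1$ genuinely uses completeness of \emph{both} $A^x$ and $B^x$. For sub-measurements one only has $\sum_a A^x_a \leq I$, so the double sum can be strictly less than $1$ and the clean complementarity breaks down --- this is exactly why, as the surrounding discussion notes, the two candidate definitions of consistency diverge for sub-measurements. I would therefore invoke the measurement hypothesis explicitly at the step where the sums are pulled out of the inner product, and otherwise the proof is a two-line computation.
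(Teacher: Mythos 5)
Your proof is correct and is essentially the paper's proof, just organized slightly differently: you first establish the total mass $\sum_{a,b}\bra{\psi}A^x_a \otimes B^x_b\ket{\psi} = 1$ and then split into diagonal and off-diagonal parts, whereas the paper starts from the off-diagonal sum and rewrites $\sum_{a\neq b}A^x_a = I - A^x_b$ before using completeness of $B$. Both hinge on completeness of both $A$ and $B$, and your closing remark about why the identity breaks for sub-measurements matches the paper's own discussion.
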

\begin{proof}
We compute
\begin{align*}
&\E_{\bx} \sum_{a \neq b}\bra{\psi} A^{\bx}_a \ot B^{\bx}_b \ket{\psi}\\
=~& \E_{\bx} \sum_b \bra{\psi} \Big(\sum_{a \neq b} A^{\bx}_a\Big) \ot B^{\bx}_b\ket{\psi}\\
=~&\E_{\bx} \sum_b \bra{\psi} (I - A^{\bx}_b) \ot B^{\bx}_b\ket{\psi}
	\tag{because~$A$ is a measurement}\\
=~&\E_{\bx} \sum_b \bra{\psi} I \ot B^{\bx}_b\ket{\psi}
	- \E_{\bx} \sum_b \bra{\psi} A^{\bx}_b \ot B^{\bx}_b\ket{\psi}\\
=~&1
	- \E_{\bx} \sum_b \bra{\psi} A^{\bx}_b \ot B^{\bx}_b\ket{\psi}. \tag{because~$B$ is a measurement}
\end{align*}
Hence, the first expression is at most~$\delta$ if and only if the last one is as well,
and we are done.
\end{proof}

\subsubsection{The state-dependent distance}

Suppose we have three measurements $\{A^x_a\}$, $\{B^x_a\}$, and $\{C^x_a\}$,
and we know that
\begin{equation*}
A^x_a \ot I \simeq_{\eps} I \ot C^x_a.
\end{equation*}
What property of~$A$ and~$B$ allows us to conclude that
\begin{equation}\label{eq:can-we-use-approx-delta-to-derive-this}
B^x_a \ot I \simeq_{\eps} I \ot C^x_a?
\end{equation}
The answer is provided by the \emph{state-dependent distance}.

\begin{definition}[The state-dependent distance]\label{def:approx_delta}
Let $\ket{\psi}$ be a state in $\calH$.
Let $A = \{A^x_a\}$ and $B = \{B^x_a\}$ be
sets of matrices acting on $\calH$.
Finally, let $\calD$ be a distribution on the question set~$\calX$.
Then we say that
  \begin{equation*}
  A^x_a \approx_{\delta}  B^x_a
  \end{equation*}
 on state $\ket{\psi}$ and distribution $\calD$ if
  \begin{equation*}
  \E_{\bx \sim \calD} \sum_{a} \Vert (A^{\bx}_a - B^{\bx}_a) \ket{\psi} \Vert^2 \leq \delta.
    \end{equation*}
\end{definition}

We note one odd feature of \Cref{def:approx_delta} in comparison to the consistency,
which is that $\ket{\psi}$ is not assumed to have a bipartition $\calH_{\mathrm{A}} \ot \calH_{\mathrm{B}}$
in which~$A$ and~$B$ are applied on opposite sides.
We will address this in \Cref{sec:consistency-from-state-dependent-distance} below.
Before doing so, we will answer our question above,
even in the case when~$C$ is allowed to be a sub-measurement. 

\begin{proposition}[Transfering ``$\simeq$" using ``$\approx$"]
Let $\{A^x_a\}$ and $\{B^x_a\}$ be measurements,
and let $\{C^x_a\}$ be a sub-measurement.
Suppose that $A^x_a \ot I \simeq_{\delta} I \ot C^x_a$
and $A^x_a \ot I \approx_{\eps} B^x_a \ot I$.
Then $B^x_a \ot I \simeq_{\delta + \sqrt{\eps}} I \ot C^x_a$.
  \label{prop:triangle-sub}
\end{proposition}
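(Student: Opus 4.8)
The plan is to expand $B^{\bx}_a = A^{\bx}_a + (B^{\bx}_a - A^{\bx}_a)$ inside the quantity that defines $B^x_a \ot I \simeq I \ot C^x_a$, namely $\E_{\bx}\sum_{a \neq b}\bra{\psi} B^{\bx}_a \ot C^{\bx}_b\ket{\psi}$. This splits it into a main term $\E_{\bx}\sum_{a\neq b}\bra{\psi}A^{\bx}_a \ot C^{\bx}_b\ket{\psi}$, which is at most $\delta$ directly by the hypothesis $A^x_a \ot I \simeq_{\delta} I \ot C^x_a$, plus an error term $T := \E_{\bx}\sum_{a\neq b}\bra{\psi}(B^{\bx}_a - A^{\bx}_a)\ot C^{\bx}_b\ket{\psi}$. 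It then suffices to show $T \le \sqrt{\eps}$, and combining the two estimates via the triangle inequality for numbers gives the claimed bound $\delta + \sqrt{\eps}$.

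The key observation for controlling $T$ is that, since $A$ and $B$ are both complete measurements, $\sum_a (B^{\bx}_a - A^{\bx}_a) = I - I = 0$, so $\sum_{a,b}\bra{\psi}(B^{\bx}_a - A^{\bx}_a)\ot C^{\bx}_b\ket{\psi} = \bra{\psi}\bigl(\sum_a(B^{\bx}_a-A^{\bx}_a)\bigr)\ot\bigl(\sum_b C^{\bx}_b\bigr)\ket{\psi} = 0$. Subtracting this vanishing full sum from $T$ converts the off-diagonal sum into (minus) the diagonal one: $T = -\E_{\bx}\sum_a \bra{\psi}(B^{\bx}_a - A^{\bx}_a)\ot C^{\bx}_a\ket{\psi}$. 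Now I would bound $|T|$ by the triangle inequality followed by Cauchy--Schwarz, first inside the sum over $a$ and then over the expectation in $\bx$:
\[
|T| \le \Bigl(\E_{\bx}\sum_a \norm{(A^{\bx}_a - B^{\bx}_a)\ot I\ket{\psi}}^2\Bigr)^{1/2}\Bigl(\E_{\bx}\sum_a \norm{(I \ot C^{\bx}_a)\ket{\psi}}^2\Bigr)^{1/2}.
\]
The first factor is at most $\sqrt{\eps}$ by the hypothesis $A^x_a \ot I \approx_{\eps} B^x_a \ot I$. For the second, since $C$ is a sub-measurement each $C^{\bx}_a$ satisfies $0 \le C^{\bx}_a \le I$, hence $(C^{\bx}_a)^2 \le C^{\bx}_a$, so $\sum_a \norm{(I\ot C^{\bx}_a)\ket{\psi}}^2 = \bra{\psi} I \ot \sum_a (C^{\bx}_a)^2 \ket{\psi} \le \bra{\psi} I \ot \sum_a C^{\bx}_a\ket{\psi} \le 1$. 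Thus $|T| \le \sqrt{\eps}$, as needed.

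The step I expect to require the most care is the reduction of the off-diagonal error term to the diagonal one: this is precisely where completeness of $A$ and $B$ is used (the identity $\sum_a(B_a - A_a)=0$), and it is what lets us avoid a naive termwise bound on $\sum_{a\neq b}$ that would blow up by a factor of $|\calA|$. Everything else is routine bookkeeping; in particular the sub-measurement nature of $C$ enters only through $\sum_a C_a \le I$ and the operator inequality $(C_a)^2 \le C_a$, both of which hold for sub-measurements exactly as for measurements, so no special handling of that case is needed.
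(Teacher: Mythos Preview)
Your proof is correct and essentially the same as the paper's. The paper rewrites each inconsistency $\E_{\bx}\sum_{a\neq b}\bra{\psi} M^{\bx}_a \ot C^{\bx}_b\ket{\psi}$ as $\bra{\psi} I \ot C\ket{\psi} - \E_{\bx}\sum_a \bra{\psi} M^{\bx}_a \ot C^{\bx}_a\ket{\psi}$ (using completeness of $M=A,B$) and subtracts, arriving at exactly the same diagonal quantity $-\E_{\bx}\sum_a\bra{\psi}(B^{\bx}_a - A^{\bx}_a)\ot C^{\bx}_a\ket{\psi}$ that you obtain via your $\sum_a(B_a-A_a)=0$ trick, and then bounds it by the identical Cauchy--Schwarz step.
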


\begin{proof}
  First, we can
  rewrite the consistency between $A$ and $C$ as
  \begin{align*}
    \delta \geq \E_{\bx} \sum_{a \neq b} \bra{\psi} A^{\bx}_a \ot C^{\bx}_b
    \ket{\psi} 
    & = \E_{\bx} \sum_{a} \bra{\psi} A^{\bx}_a \ot (C^{\bx} - C^{\bx}_a) \ket{\psi} \\
    & =  \E_{\bx} \sum_{a} \bra{\psi} A^{\bx}_a \ot C^{\bx} \ket{\psi}
    			- \E_{\bx} \sum_{a} \bra{\psi} A^{\bx}_a \ot C^{\bx}_a \ket{\psi} \\
    & =  \E_{\bx} \bra{\psi} I \ot C^{\bx} \ket{\psi}
    			- \E_{\bx} \sum_{a} \bra{\psi} A^{\bx}_a \ot C^{\bx}_a \ket{\psi} \tag{because~$A$ is a measurement}\\
    & = \bra{\psi} I \ot C \ket{\psi} - \E_{\bx} \sum_a
                 \bra{\psi} A^{\bx}_a \ot C^{\bx}_a \ket{\psi}.
  \end{align*}
  Likewise, we can rewrite the
  inconsistency between $B$ and $C$ as
  \begin{align*}
    \E_{\bx} \sum_{a \neq b} \bra{\psi} B^{\bx}_a \ot C^{\bx}_b
    \ket{\psi} &= \bra{\psi} I \ot C \ket{\psi} - \E_{\bx} \sum_a
                 \bra{\psi} B^{\bx}_a \ot C^{\bx}_a \ket{\psi}.
  \end{align*}
  We want to show that the inconsistency between~$B$ and~$C$ is close to the inconsistency between $A$
  and $B$. In particular, we claim that
  \begin{equation*}
  \E_{\bx} \sum_{a \neq b} \bra{\psi} A^{\bx}_a \ot C^{\bx}_b
    \ket{\psi}
    \approx_{\sqrt{\eps}} \E_{\bx} \sum_{a \neq b} \bra{\psi} B^{\bx}_a \ot C^{\bx}_b
    \ket{\psi}.
  \end{equation*}
  To show this, we bound the magnitude of the difference using Cauchy-Schwarz.
  \begin{align*}
   & \Big| \E_{\bx} \sum_{a } \bra{\psi} (A^{\bx}_a - B^{\bx}_a) \ot
                     C^{\bx}_a\ket{\psi} \Big| \nonumber
    \\
    \leq~& \sqrt{ \E_{\bx} \sum_{a} \bra{\psi}
                            (A^{\bx}_a - B^{\bx}_a)^2 \ot I \ket{\psi}}
                            \cdot \sqrt{\E_{\bx} \sum_{a} \bra{\psi} I \ot (C^{\bx}_a)^2 \ket{\psi}} \\
    \leq~& \sqrt{\eps} \cdot 1. \tag{because~$C$ is a sub-measurement}
  \end{align*}
  This completes the proof.
\end{proof}

Next, we show that in the case of measurements,
the state-dependent distance is a weakening of the consistency.

\begin{proposition}[``$\simeq$" implies ``$\approx$" for measurements]\label{prop:simeq-to-approx}
Let $A = \{A^x_a\}$ and $B = \{B^x_a\}$ be two measurements such that
\begin{equation*}
A^x_a \ot I \simeq_{\delta} I \ot B^x_a.
\end{equation*}
Then
\begin{equation*}
A^x_a \ot I \approx_{2\delta} I \ot B^x_a.
\end{equation*}
This is an ``if and only if'' if~$A$ and~$B$ are both projective measurements.
\end{proposition}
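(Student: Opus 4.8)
The plan is to relate the state-dependent distance directly to the consistency via a routine expansion, using only the fact that $A$ and $B$ are measurements acting on opposite sides of $\ket\psi$. First I would expand the left-hand side of the desired inequality:
\[
\E_{\bx}\sum_a \Norm{(A^{\bx}_a\ot I - I\ot B^{\bx}_a)\ket\psi}^2
= \E_{\bx}\sum_a \bra\psi (A^{\bx}_a\ot I - I\ot B^{\bx}_a)^2\ket\psi.
\]
Expanding the square gives three groups of terms: $\bra\psi (A^{\bx}_a)^2\ot I\ket\psi$, $\bra\psi I\ot (B^{\bx}_a)^2\ket\psi$, and the cross term $-2\bra\psi A^{\bx}_a\ot B^{\bx}_a\ket\psi$. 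Since $A^{\bx}_a$ is positive semidefinite and $\sum_a A^{\bx}_a = I$ (as $A$ is a measurement), we have $0 \preceq (A^{\bx}_a)^2 \preceq A^{\bx}_a$, so $\E_{\bx}\sum_a\bra\psi (A^{\bx}_a)^2\ot I\ket\psi \le \E_{\bx}\sum_a \bra\psi A^{\bx}_a\ot I\ket\psi = 1$, and symmetrically for the $B$ terms. For the cross term I would invoke \Cref{prop:simeq-for-measurements}, which says exactly that $A^x_a\ot I\simeq_\delta I\ot B^x_a$ is equivalent to $\E_{\bx}\sum_a\bra\psi A^{\bx}_a\ot B^{\bx}_a\ket\psi \ge 1-\delta$. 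Combining, the whole expression is at most $1 + 1 - 2(1-\delta) = 2\delta$, which is the claimed bound.

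For the "if and only if" in the projective case, I would redo the computation noting that when $A$ and $B$ are projective, $(A^{\bx}_a)^2 = A^{\bx}_a$ and $(B^{\bx}_a)^2 = B^{\bx}_a$ exactly, so the inequalities above become equalities: the first two groups sum to exactly $2$, and the state-dependent distance equals exactly $2 - 2\E_{\bx}\sum_a\bra\psi A^{\bx}_a\ot B^{\bx}_a\ket\psi = 2(1 - \E_{\bx}\sum_a\bra\psi A^{\bx}_a\ot B^{\bx}_a\ket\psi)$. Thus $A^x_a\ot I\approx_{2\delta} I\ot B^x_a$ holds iff $\E_{\bx}\sum_a\bra\psi A^{\bx}_a\ot B^{\bx}_a\ket\psi \ge 1-\delta$, which by \Cref{prop:simeq-for-measurements} is iff $A^x_a\ot I\simeq_\delta I\ot B^x_a$. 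This gives the tight equivalence.

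There is no real obstacle here; the only mild subtlety is the implicit identification of $A^x_a\ot I$ with a matrix on $\calH_{\mathrm A}$ in \Cref{def:approx_delta} versus its action on the bipartite $\ket\psi$ — I would just be careful that "$A^x_a\ot I\approx_\delta I\ot B^x_a$" is interpreted with the state-dependent distance taken on the bipartite Hilbert space $\calH_{\mathrm A}\ot\calH_{\mathrm B}$, with $A^x_a\ot I$ and $I\ot B^x_a$ as the two families of matrices, which is consistent with how the notation is used elsewhere (e.g.\ in \Cref{prop:triangle-sub}). Everything else is the short expansion above plus one application each of positivity of POVM elements and \Cref{prop:simeq-for-measurements}.
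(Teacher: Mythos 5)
Your proof is correct and follows essentially the same route as the paper's: expand $\Vert(A^{\bx}_a\ot I - I\ot B^{\bx}_a)\ket\psi\Vert^2$, bound the two diagonal groups using $(A^{\bx}_a)^2\leq A^{\bx}_a$ and $(B^{\bx}_a)^2\leq B^{\bx}_a$ together with the measurement condition, and handle the cross term via \Cref{prop:simeq-for-measurements}. The paper's treatment of the ``if and only if'' in the projective case is terser (it simply observes that the one inequality becomes an equality), but your more explicit unfolding is the same argument.
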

\begin{proof}
Our goal is to bound
\begin{align*}
&\E_{\bx} \sum_a\Vert (A^{\bx}_a \ot I - I \ot B^{\bx}_a) \ket{\psi} \Vert^2\\
=~&\E_{\bx} \sum_a \bra{\psi} (A^{\bx}_a \ot I - I \ot B^{\bx}_a)^2 \ket{\psi}\\
=~&\E_{\bx} \sum_a \bra{\psi} (A^{\bx}_a)^2 \ot I \ket{\psi}
	+\E_{\bx} \sum_a \bra{\psi} I \ot (B^{\bx}_a)^2 \ket{\psi}
	- 2 \E_{\bx} \sum_a \bra{\psi} A^{\bx}_a \ot B^{\bx}_a \ket{\psi}\\
\leq~&\E_{\bx} \sum_a \bra{\psi} A^{\bx}_a \ot I \ket{\psi}
	+\E_{\bx} \sum_a \bra{\psi} I \ot B^{\bx}_a \ket{\psi}
	- 2 \E_{\bx} \sum_a \bra{\psi} A^{\bx}_a \ot B^{\bx}_a \ket{\psi}\\
=~&2
	- 2 \E_{\bx} \sum_a \bra{\psi} A^{\bx}_a \ot B^{\bx}_a \ket{\psi} \tag{because~$A$ and~$B$ are measurements}\\
\leq~& 2 - 2 \cdot(1-\delta) \tag{by~\Cref{prop:simeq-for-measurements} and the fact that~$A$ and~$B$ are measurements}\\
=~&2 \delta.
\end{align*}
This completes the proof.
When  $A$ and~$B$ are projective, ``if and only if'' 
follows from the first inequality becoming an equality.
\end{proof}

\begin{remark}
We note that \Cref{prop:simeq-to-approx} certainly does \emph{not} hold for sub-measurements.
For example, if $A^x_a = 0$ for all~$a$,
then $A_a^x \ot I \simeq_{0} I \ot B_a^x$,
but
\begin{equation*}
\E_{\bx} \sum_a \Vert (A^{\bx}_a \ot I - I \ot B^{\bx}_a)\ket{\psi} \Vert^2
= \E_{\bx} \sum_a \Vert (I \ot B^{\bx}_a)\ket{\psi} \Vert^2,
\end{equation*}
which is nonzero unless $(I \ot B^x_a) \ket{\psi} = 0$ for all~$x$ and~$a$.
\end{remark}

\subsubsection{Deriving consistency relations from the state-dependent distance}
\label{sec:consistency-from-state-dependent-distance}

\Cref{prop:triangle-sub} shows the purpose of the state-dependent distance,
which is to derive new consistency relations from old ones.
In addition, its proof uses a strategy
which will recur frequently throughout this paper.
In this strategy, we would like to demonstrate a sequence of 
expressions in which each expression is close to the previous one:
\begin{equation*}
\E_{\bx} \sum_a \bra{\psi} (A_0)^{\bx}_a (B_0)^{\bx}_a \ket{\psi}
\approx_{\eps_1} \E_{\bx} \sum_a \bra{\psi} (A_1)^{\bx}_a (B_1)^{\bx}_a \ket{\psi}
\approx_{\eps_2} \cdots
\approx_{\eps_t} \E_{\bx} \sum_a \bra{\psi} (A_t)^{\bx}_a (B_t)^{\bx}_a \ket{\psi}.
\end{equation*}
By the triangle inequality, we can therefore conclude that the $0$-th expression
is close to the $t$-th expression.
To show that the $i$-th quantity is close to the $(i+1)$-st,
we will typically arrange for $A_{i} = A_{i+1}$,
and we will swap out $B_i$ for $B_{i+1}$
using an approximation relation such as $(B_i)^x_a \approx (B_{i+1})^x_a$,
with the help of the Cauchy-Schwarz inequality
(or the same might occur with the roles of~$A$ and~$B$ reversed). 
Even if $A_0$ and $B_0$ can be written as local measurements applied to either side of a bipartition,
e.g.\ $(A_0)^x_a = A^x_a \ot I$ and $(B_0)^x_a = I \ot B^x_a$,
and likewise for $A_t$ and $B_t$,
the intermediate steps may feature matrices which do not decompose nicely across a bipartition.
This is why \Cref{def:approx_delta} is phrased so broadly,
with no mention of a bipartition.

In general, the $A_i$'s and $B_i$'s encountered in this sequence of steps
may be quite unstructured: for example, not sub-measurements,
and possibly not even Hermitian.
Thus, we are interested in determining which conditions
to place on these measurements are sufficient to carry out this proof strategy.
The following proposition gives a broad condition under which this can be accomplished.

\begin{proposition}
\label{prop:closeness-of-ip}
	Let $\{A^x_a\}$, $\{B^x_a\}$, and $\{C^x_{a,b} \}$ be matrices.
	Suppose that  $A^x_a \approx_\gamma B^x_a$ and that for all~$x$, $\sum_a (\sum_b C^x_{a,b}) (\sum_b C^x_{a,b})^\dagger  \leq I$. Then
	\begin{equation}
	\E_{\bx} \sum_{a,b} \bra{\psi} C^{\bx}_{a,b} A^{\bx}_a  \ket{\psi} \approx_{\sqrt{\gamma}} \E_{\bx} \sum_{a,b} \bra{\psi} C^{\bx}_{a,b} B^{\bx}_a  \ket{\psi} \,. \label{eq:closeness3}
	\end{equation}
	Similarly, suppose that  $(A^x_a)^\dagger \approx_\gamma (B^x_a)^\dagger$ and that for all~$x$, $\sum_a (\sum_b C^x_{a,b})^\dagger (\sum_b C^x_{a,b}) \leq I$. Then 
	\begin{equation}
	\E_{\bx} \sum_{a,b} \bra{\psi} A^{\bx}_a  C^{\bx}_{a,b} \ket{\psi} \approx_{\sqrt{\gamma}} \E_{\bx} \sum_{a,b} \bra{\psi} B^{\bx}_a  C^{\bx}_{a,b}  \ket{\psi} \,. \label{eq:closeness4}
	\end{equation}	
\end{proposition}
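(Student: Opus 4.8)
The plan is to deduce both parts from a single application of Cauchy--Schwarz, after one trivial preprocessing step. First I would absorb the index $b$ by setting $D^x_a := \sum_b C^x_{a,b}$, so that $\sum_{a,b}\bra{\psi} C^{x}_{a,b} A^{x}_a\ket{\psi} = \sum_a \bra{\psi} D^{x}_a A^{x}_a\ket{\psi}$ (and similarly with $A$ on the left of $C$), while the two hypotheses on $C$ turn into the clean operator inequalities $\sum_a D^x_a (D^x_a)^\dagger \le I$ for \eqref{eq:closeness3} and $\sum_a (D^x_a)^\dagger D^x_a \le I$ for \eqref{eq:closeness4}.

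For \eqref{eq:closeness3}, the goal is then to bound $\big|\E_{\bx}\sum_a \bra{\psi} D^{\bx}_a (A^{\bx}_a - B^{\bx}_a)\ket{\psi}\big|$. I would read each summand $\bra{\psi} D^{\bx}_a (A^{\bx}_a - B^{\bx}_a)\ket{\psi}$ as the Hilbert-space inner product of the vector $(D^{\bx}_a)^\dagger\ket{\psi}$ with the vector $(A^{\bx}_a - B^{\bx}_a)\ket{\psi}$, and apply Cauchy--Schwarz over the joint index $(\bx,a)$ (with the average over $\bx$ built in). This bounds the quantity by $\sqrt{\E_{\bx}\sum_a \Vert (D^{\bx}_a)^\dagger \ket{\psi}\Vert^2}\cdot\sqrt{\E_{\bx}\sum_a \Vert (A^{\bx}_a - B^{\bx}_a)\ket{\psi}\Vert^2}$. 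The second factor is at most $\sqrt{\gamma}$ by the hypothesis $A^x_a \approx_\gamma B^x_a$ and \Cref{def:approx_delta}. The first factor equals $\sqrt{\E_{\bx}\bra{\psi}\big(\sum_a D^{\bx}_a (D^{\bx}_a)^\dagger\big)\ket{\psi}}$, which is at most $\sqrt{\bra{\psi} I \ket{\psi}} = 1$ by the operator inequality above; so the product is $\sqrt{\gamma}$, as claimed. For \eqref{eq:closeness4} I would run the mirror-image computation: write $\bra{\psi}(A^{\bx}_a - B^{\bx}_a) D^{\bx}_a\ket{\psi}$ as the inner product of $(A^{\bx}_a - B^{\bx}_a)^\dagger\ket{\psi}$ with $D^{\bx}_a\ket{\psi}$, apply Cauchy--Schwarz, and bound the two factors by $\sqrt{\gamma}$ and $1$ using $(A^x_a)^\dagger\approx_\gamma (B^x_a)^\dagger$ and $\sum_a (D^x_a)^\dagger D^x_a\le I$ respectively.

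There is essentially no obstacle here; the only points needing care are cosmetic. One must track on which side the adjoint lands when an operator is moved from the bra onto the ket, which is exactly why the two halves of the proposition carry hypotheses on $A^x_a$ versus on $(A^x_a)^\dagger$, and why the $C$-hypotheses appear in the two mutually adjoint forms $\sum_a D_a D_a^\dagger \le I$ and $\sum_a D_a^\dagger D_a \le I$. It is also worth noting that the argument uses no structure on the $A^x_a,B^x_a$ beyond the relation $\approx_\gamma$ — they need not be Hermitian, positive, or (sub-)measurements — which is precisely the generality flagged in the discussion preceding the proposition and is what makes this lemma reusable in the later chains of approximations.
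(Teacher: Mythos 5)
Your proof is correct and follows essentially the same strategy as the paper's: absorb the $b$-sum into $D^x_a := \sum_b C^x_{a,b}$, apply Cauchy--Schwarz over the joint index $(\bx,a)$, and bound the two factors by $1$ and $\sqrt{\gamma}$ using the operator-norm hypothesis and $\approx_\gamma$ respectively. The only cosmetic difference is that for \eqref{eq:closeness4} the paper formally reduces to \eqref{eq:closeness3} by taking adjoints, whereas you rerun the mirror-image Cauchy--Schwarz directly; these are interchangeable.
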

\begin{proof}
	We begin by showing~\Cref{eq:closeness3}.
	\begin{align*}
		&\Big | \E_{\bx} \sum_{a,b} \bra{\psi} C^{\bx}_{a,b} (A^{\bx}_a - B^{\bx}_a) \ket{\psi} \Big | \\
		&= \Big | \E_{\bx} \sum_{a} \bra{\psi} \Big ( \sum_b C^{\bx}_{a,b} \Big ) \cdot (A^{\bx}_a - B^{\bx}_a) \ket{\psi} \Big |\\
		&\leq \Big ( \E_{\bx} \sum_a \bra{\psi} \Big(\sum_b C^{\bx}_{a,b}\Big) \Big(\sum_b C^{\bx}_{a,b}\Big)^\dagger \ket{\psi} \Big )^{1/2} \cdot \Big ( \E_{\bx} \sum_a \bra{\psi} (A^{\bx}_a - B^{\bx}_a)^\dagger  (A^{\bx}_a - B^{\bx}_a) \ket{\psi} \Big)^{1/2} \\
		&\leq \sqrt{\gamma}.
	\end{align*}
	The third line uses Cauchy-Schwarz, and the fourth line uses the assumption $\sum_a (\sum_b C^x_{a,b}) (\sum_b C^x_{a,b})^\dagger  \leq I$ to bound the first factor by $1$ and  the assumption $A^x_a \approx_\gamma B^x_a$ to bound the second factor by $\sqrt{\gamma}$.
	As for \Cref{eq:closeness4}, we want to bound
	\begin{equation*}
	\Big | \E_{\bx} \sum_{a,b} \bra{\psi} (A^{\bx}_a - B^{\bx}_a) C^{\bx}_{a,b}  \ket{\psi} \Big |
	= \Big | \E_{\bx} \sum_{a,b} \bra{\psi} (C^{\bx}_{a,b})^\dagger ((A^{\bx}_a)^\dagger - (B^{\bx}_a)^\dagger) \ket{\psi} \Big |.
      \end{equation*}
      It then follows from \Cref{eq:closeness3} that this is at most $\sqrt{\gamma}$.
    \end{proof}
    
\Cref{prop:closeness-of-ip} is broad enough to capture
almost all of our applications of the state-dependent distance.
Unfortunately, defining the $C^x_{a, b}$ matrices and showing that 
they satisfy the inequality $\sum_a (\sum_b C^x_{a,b}) (\sum_b C^x_{a,b})^\dagger  \leq I$
can be somewhat cumbersome.
As a result, we will usually carry out these Cauchy-Schwarz calculations by hand.
However, we will occasionally use the following proposition
which simplifies \Cref{prop:closeness-of-ip}.

\begin{proposition}\label{prop:easy-approx-from-approx-delta}
Let $A = \{A^x_a\}$, $B = \{B^x_a\}$, and $C = \{C^x_a\}$ be sub-measurements
such that $A^x_a \approx_{\delta} B^x_a$. Then
\begin{equation*}
\E_{\bx} \sum_a \bra{\psi} A^{\bx}_a  C^{\bx}_a \ket{\psi}
\approx_{\sqrt{\delta}} \E_{\bx} \sum_a \bra{\psi} B^{\bx}_a  C^{\bx}_a \ket{\psi}.
\end{equation*}
\end{proposition}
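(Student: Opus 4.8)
The plan is to obtain this as a direct corollary of \Cref{prop:closeness-of-ip}, specifically of \Cref{eq:closeness4}. First I would instantiate the matrices $\{C^x_{a,b}\}$ in \Cref{prop:closeness-of-ip} with a trivial second index: let $b$ range over a one-element set and set $C^x_{a,b} := C^x_a$. Then $\sum_b C^x_{a,b} = C^x_a$, and the two sides of \Cref{eq:closeness4} collapse exactly to $\E_{\bx} \sum_a \bra{\psi} A^{\bx}_a C^{\bx}_a \ket{\psi}$ and $\E_{\bx} \sum_a \bra{\psi} B^{\bx}_a C^{\bx}_a \ket{\psi}$, which are the two quantities we must compare.

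Next I would check the two hypotheses of \Cref{eq:closeness4} with $\gamma = \delta$. Since $A^x_a$ and $B^x_a$ are positive semidefinite (being elements of sub-measurements) they are Hermitian, so $(A^x_a)^\dagger = A^x_a$ and $(B^x_a)^\dagger = B^x_a$; hence the hypothesis $(A^x_a)^\dagger \approx_\delta (B^x_a)^\dagger$ is literally the assumed relation $A^x_a \approx_\delta B^x_a$. For the operator inequality I need $\sum_a (\sum_b C^x_{a,b})^\dagger (\sum_b C^x_{a,b}) = \sum_a (C^x_a)^2 \leq I$ for each $x$. This holds because each $C^x_a$ satisfies $0 \leq C^x_a \leq \sum_{a'} C^x_{a'} \leq I$, which forces $(C^x_a)^2 \leq C^x_a$ (as $C^x_a - (C^x_a)^2 = (C^x_a)^{1/2}(I - C^x_a)(C^x_a)^{1/2} \geq 0$), and therefore $\sum_a (C^x_a)^2 \leq \sum_a C^x_a \leq I$ using that $C$ is a sub-measurement. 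Invoking \Cref{eq:closeness4} then yields the claimed $\sqrt{\delta}$ bound.

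If one prefers a self-contained argument, this is simply the specialization of the proof of \Cref{prop:closeness-of-ip}: bound $\bigl|\E_{\bx} \sum_a \bra{\psi} (A^{\bx}_a - B^{\bx}_a) C^{\bx}_a \ket{\psi}\bigr|$ by applying Cauchy--Schwarz to the bilinear form $\E_{\bx}\sum_a \langle\,\cdot\,,\,\cdot\,\rangle$, getting the product of $\bigl(\E_{\bx}\sum_a \Vert (A^{\bx}_a - B^{\bx}_a)\ket{\psi}\Vert^2\bigr)^{1/2} \leq \sqrt{\delta}$ (by hypothesis, using Hermiticity to drop the daggers) and $\bigl(\E_{\bx}\sum_a \Vert C^{\bx}_a\ket{\psi}\Vert^2\bigr)^{1/2}$, and then bounding the second factor by $\bigl(\E_{\bx}\bra{\psi}\sum_a (C^{\bx}_a)^2\ket{\psi}\bigr)^{1/2} \leq \bigl(\E_{\bx}\bra{\psi}\sum_a C^{\bx}_a\ket{\psi}\bigr)^{1/2} \leq 1$ exactly as above.

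I do not expect a genuine obstacle here; the only points requiring a little care are the Hermiticity/dagger bookkeeping needed to line up with the general (possibly non-Hermitian) statement of \Cref{prop:closeness-of-ip}, and the observation that the squared elements of a sub-measurement still sum to at most $I$.
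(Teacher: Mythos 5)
Your proposal is correct. The primary route you sketch—instantiating \Cref{prop:closeness-of-ip} (specifically \Cref{eq:closeness4}) with a trivial $b$-index so that $C^x_{a,b} = C^x_a$, using Hermiticity to translate $A^x_a \approx_\delta B^x_a$ into the daggered hypothesis, and checking $\sum_a (C^x_a)^2 \leq \sum_a C^x_a \leq I$—is a clean reduction to the general lemma and is logically sound. The paper does not go this way: it simply inlines the Cauchy--Schwarz computation, which is exactly the ``self-contained'' alternative you also describe. The two are the same calculation in different packaging; using \Cref{prop:closeness-of-ip} buys nothing quantitatively but avoids repeating a Cauchy--Schwarz step, at the mild cost of the dagger bookkeeping you flagged, while the paper's inlined version is shorter to read since the one-element $b$-index trick is a slight detour. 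Either is acceptable, and your justification of $\sum_a (C^x_a)^2 \leq I$ via $(C^x_a)^2 \leq C^x_a$ for $0 \leq C^x_a \leq I$ is correct and is the only nontrivial point either way.
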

\begin{proof}
To show this, we bound the magnitude of the difference.
\begin{align*}
&\Big| \E_{\bx} \sum_a \bra{\psi} (A^{\bx}_a - B^{\bx}_a) \cdot (C^{\bx}_a) \ket{\psi}\Big|\\
\leq~& \sqrt{\E_{\bx} \sum_a \bra{\psi} (A^{\bx}_a - B^{\bx}_a)^2 \ket{\psi}}
	\cdot \sqrt{\E_{\bx} \sum_a \bra{\psi} (C^{\bx}_a)^2 \ket{\psi}}\\
\leq~& \sqrt{\delta} \cdot \sqrt{1}.
\end{align*}
This completes the proof.
\end{proof}

A proposition similar to \Cref{prop:closeness-of-ip},
but for  ``$\approx$'',
holds as well.
    This is \cite[Fact~$4.20$]{NW19}.
    \begin{proposition}
      \label{prop:cab-approx-delta}
      Let $\{A^x_a\}, \{B^x_a\},$ and $\{C^x_{a,b}\}$ be
      matrices. Suppose that $A^{x}_a \approx_\delta B^{x}_a$ and that
      for all $x$ and $a$, $\sum_b (C^{x}_{a,b})^\dagger
      (C^{x}_{a,b}) \leq I$. Then
      \[ C^{x}_{a,b} A^x_{a} \approx_{\delta} C^{x}_{a,b} B^{x}_a. \]
    \end{proposition}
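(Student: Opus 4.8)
The plan is simply to unwind \Cref{def:approx_delta} and commute the $C^x_{a,b}$ past the difference $A^x_a - B^x_a$, reducing everything to the hypothesis $A^x_a \approx_\delta B^x_a$. Concretely, the quantity we must bound by $\delta$ is
\begin{equation*}
\E_{\bx} \sum_{a,b} \NORM{\big(C^{\bx}_{a,b} A^{\bx}_a - C^{\bx}_{a,b} B^{\bx}_a\big)\ket{\psi}}^2
= \E_{\bx} \sum_{a,b} \bra{\psi} (A^{\bx}_a - B^{\bx}_a)^\dagger (C^{\bx}_{a,b})^\dagger C^{\bx}_{a,b} (A^{\bx}_a - B^{\bx}_a) \ket{\psi},
\end{equation*}
using $\NORM{Nv}^2 = \bra{v} N^\dagger N \ket{v}$ with $N = C^{\bx}_{a,b}(A^{\bx}_a - B^{\bx}_a)$.

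The one real step is to perform the sum over $b$ first, for each fixed $x$ and $a$: it yields $\bra{\psi} (A^x_a - B^x_a)^\dagger \big(\sum_b (C^x_{a,b})^\dagger C^x_{a,b}\big) (A^x_a - B^x_a) \ket{\psi}$. Now I would invoke the elementary fact that conjugation preserves the positive-semidefinite order: if $0 \leq M \leq I$ then $N^\dagger M N \leq N^\dagger N$ for any matrix $N$. Applying this with $M = \sum_b (C^x_{a,b})^\dagger C^x_{a,b}$, which is at most $I$ by hypothesis, and $N = A^x_a - B^x_a$, the displayed term is bounded above by $\bra{\psi} (A^x_a - B^x_a)^\dagger (A^x_a - B^x_a) \ket{\psi} = \NORM{(A^x_a - B^x_a)\ket{\psi}}^2$. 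Summing over $a$ and averaging over $\bx$ then gives $\E_{\bx} \sum_a \NORM{(A^{\bx}_a - B^{\bx}_a)\ket{\psi}}^2 \leq \delta$ by the assumption $A^x_a \approx_\delta B^x_a$, which is exactly what is needed.

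I do not expect any genuine obstacle. The only points requiring care are that the $A^x_a$, $B^x_a$, $C^x_{a,b}$ are not assumed Hermitian or positive, so the estimate must go through the conjugation-of-PSD-order fact rather than a Cauchy--Schwarz bound (in contrast with the proof of \Cref{prop:closeness-of-ip}), and that the sum over $b$ must be collapsed into $\sum_b (C^x_{a,b})^\dagger C^x_{a,b}$ before the hypothesis on $C$ is invoked.
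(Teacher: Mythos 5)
Your proof is correct and is essentially identical to the paper's: both expand the squared norm, collapse the $b$-sum into $\sum_b (C^x_{a,b})^\dagger C^x_{a,b} \leq I$, and use conjugation-monotonicity of the PSD order to reduce to the hypothesis on $A$ and $B$. The paper simply writes the two inequalities in one display without naming the intermediate lemma.
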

    \begin{proof}
      The error we wish to bound is
      \begin{align*}
        \E_{\bx} \sum_{a,b} \bra{\psi} 
        (A^{\bx}_a - B^{\bx}_a)^\dagger
        (C^{\bx}_{a,b})^\dagger C^{\bx}_{a,b} (A^{\bx}_a - 
        B^{\bx}_a) \ket{\psi} &\leq \E_{\bx} \sum_{a} \bra{\psi}
                                (A^{\bx}_a - B^{\bx}_a)^\dagger
                                (A^{\bx}_a - B^{\bx}_a) \ket{\psi} \\
                              &\leq \delta. \qedhere
      \end{align*}
    \end{proof}

\subsubsection{Miscellaneous distance properties}

We now state a few miscellaneous properties of our two distances.

\begin{proposition}[Triangle inequality for vectors squared]\label{prop:triangle-inequality-for-vectors-squared}
Let $\ket{\psi_1}, \ldots, \ket{\psi_k}$ be vectors.
Then
\begin{equation*}
\Vert \ket{\psi_1} + \cdots + \ket{\psi_k} \Vert^2 \leq k \cdot(\Vert \ket{\psi_1}\Vert^2 + \cdots + \Vert \ket{\psi_k}\Vert^2).
\end{equation*}
\end{proposition}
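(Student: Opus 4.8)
The plan is to reduce the claim to two completely elementary facts: the ordinary triangle inequality for the norm, and the Cauchy--Schwarz inequality in $\R^k$ (equivalently, convexity of $t \mapsto t^2$). First I would apply the triangle inequality for vectors to obtain
\begin{equation*}
\Vert \ket{\psi_1} + \cdots + \ket{\psi_k} \Vert \leq \Vert \ket{\psi_1}\Vert + \cdots + \Vert \ket{\psi_k}\Vert,
\end{equation*}
and then square both sides. Writing $r_i = \Vert \ket{\psi_i}\Vert \geq 0$, it then suffices to prove the scalar inequality $(r_1 + \cdots + r_k)^2 \leq k\,(r_1^2 + \cdots + r_k^2)$.

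This scalar inequality is precisely the Cauchy--Schwarz inequality applied to the all-ones vector $(1,\dots,1)$ and the vector $(r_1,\dots,r_k)$ in $\R^k$:
\begin{equation*}
(r_1 + \cdots + r_k)^2 = \Big(\sum_{i=1}^k 1 \cdot r_i\Big)^2 \leq \Big(\sum_{i=1}^k 1^2\Big)\Big(\sum_{i=1}^k r_i^2\Big) = k \sum_{i=1}^k r_i^2,
\end{equation*}
which finishes the argument.

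An alternative route, which sidesteps the triangle inequality for norms entirely, is to expand the left-hand side directly as $\Vert \ket{\psi_1} + \cdots + \ket{\psi_k}\Vert^2 = \sum_{i,j=1}^k \braket{\psi_i|\psi_j}$, bound each term using Cauchy--Schwarz together with the AM--GM inequality by $|\braket{\psi_i|\psi_j}| \leq \Vert\ket{\psi_i}\Vert \cdot \Vert\ket{\psi_j}\Vert \leq \tfrac12\big(\Vert\ket{\psi_i}\Vert^2 + \Vert\ket{\psi_j}\Vert^2\big)$, and sum over all $k^2$ ordered pairs $(i,j)$; each index then contributes with total weight $k$, giving the claimed bound. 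There is no genuine obstacle in either approach; the only minor point to watch is that the inner products $\braket{\psi_i|\psi_j}$ may be complex, so in the second route one should either bound by the modulus as above or observe that only the (symmetric) real part survives the double sum.
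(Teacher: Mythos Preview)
Your proposal is correct and follows essentially the same approach as the paper: apply the triangle inequality for norms, then prove the scalar inequality $(r_1 + \cdots + r_k)^2 \leq k(r_1^2 + \cdots + r_k^2)$. The only cosmetic difference is that the paper establishes the scalar inequality by expanding and using $x_i x_j \leq \tfrac12(x_i^2 + x_j^2)$ rather than invoking Cauchy--Schwarz with the all-ones vector, but these are the same argument.
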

\begin{proof}
First, if $x_1, \ldots, x_k \in \R$, then
\begin{equation}\label{eq:prop-for-real-numbers}
(x_1 + \cdots + x_k)^2 
= \sum_{i, j= 1}^k x_i x_j
\leq \sum_{i, j= 1}^k \frac{x_i^2 + x_j^2}{2}
= \sum_{i, j= 1}^k x_i^2
= \sum_{i=1}^k k \cdot x_i^2.
\end{equation}
Next, by the triangle inequality
\begin{align*}
\Vert \ket{\psi_1} + \cdots + \ket{\psi_k} \Vert^2
& = (\Vert \ket{\psi_1} + \cdots + \ket{\psi_k} \Vert)^2\\
& \leq (\Vert \ket{\psi_1}\Vert + \cdots + \Vert \ket{\psi_k} \Vert)^2\\
& \leq k \cdot(\Vert \ket{\psi_1}\Vert^2 + \cdots + \Vert \ket{\psi_k}\Vert^2).
\end{align*}
where the last step uses \Cref{eq:prop-for-real-numbers} applied to the case of $x_i = \Vert \ket{\psi_i}\Vert$.
\end{proof}

\begin{proposition}[Triangle inequality for ``$\approx_{\delta}$"]
\label{prop:triangle-inequality-for-approx_delta}
Suppose $A_1 = \{(A_1)^x_a\}, \ldots, A_{k+1} = \{(A_{k+1})^x_a\}$ is a set of matrices such that
\begin{equation*}
(A_i)^x_a \approx_{\delta_i} (A_{i+1})^x_a
\end{equation*}
for all $i \in [k]$. Then
\begin{equation*}
(A_1)^x_a \approx_{k \cdot (\delta_1 + \cdots + \delta_{k})} (A_{k+1})^x_a.
\end{equation*}
\end{proposition}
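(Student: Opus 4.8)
The plan is to chain together $k$ applications of the single-step hypothesis $(A_i)^x_a \approx_{\delta_i} (A_{i+1})^x_a$ using the vector triangle inequality already established in \Cref{prop:triangle-inequality-for-vectors-squared}. Concretely, fix $x$ and $a$, and write the telescoping identity
\begin{equation*}
(A_1)^x_a \ket{\psi} - (A_{k+1})^x_a \ket{\psi} = \sum_{i=1}^{k}\Big((A_i)^x_a \ket{\psi} - (A_{i+1})^x_a \ket{\psi}\Big),
\end{equation*}
so the left-hand vector is a sum of $k$ vectors whose $i$-th term is $\big((A_i)^x_a - (A_{i+1})^x_a\big)\ket{\psi}$.

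Next I would take squared norms, sum over $a$, and take the expectation over $\bx \sim \calD$. Applying \Cref{prop:triangle-inequality-for-vectors-squared} with the $k$ vectors $\ket{\psi_i} = \big((A_i)^{\bx}_a - (A_{i+1})^{\bx}_a\big)\ket{\psi}$ pointwise in $(\bx,a)$ gives
\begin{equation*}
\Big\Vert \big((A_1)^{\bx}_a - (A_{k+1})^{\bx}_a\big)\ket{\psi}\Big\Vert^2 \leq k \cdot \sum_{i=1}^{k} \Big\Vert \big((A_i)^{\bx}_a - (A_{i+1})^{\bx}_a\big)\ket{\psi}\Big\Vert^2.
\end{equation*}
Now summing over $a$ and taking $\E_{\bx}$, and then swapping the (finite) sum over $i$ with $\E_{\bx}\sum_a$, the right-hand side becomes $k \sum_{i=1}^{k} \E_{\bx}\sum_a \Vert((A_i)^{\bx}_a - (A_{i+1})^{\bx}_a)\ket{\psi}\Vert^2 \leq k \sum_{i=1}^{k} \delta_i$ by the definition of $\approx_{\delta_i}$ (\Cref{def:approx_delta}). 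This is exactly the claimed bound $(A_1)^x_a \approx_{k(\delta_1 + \cdots + \delta_k)} (A_{k+1})^x_a$.

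There is essentially no obstacle here — the statement is a routine triangle-inequality bookkeeping lemma, and the only mild subtlety is to make sure the factor of $k$ from \Cref{prop:triangle-inequality-for-vectors-squared} is not lost and that the interchange of the finite sum over $i$ with the expectation/sum is applied cleanly. One could alternatively prove it by induction on $k$, but that would give a worse constant (roughly $2^k$ or $k^2$-type blowup if done naively via the $k=2$ case), so the direct one-shot argument via the $k$-term vector triangle inequality is the right approach to land the stated constant $k(\delta_1 + \cdots + \delta_k)$.
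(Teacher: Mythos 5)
Your proof is correct and follows essentially the same route as the paper: a telescoping decomposition of $(A_1)^x_a - (A_{k+1})^x_a$ into $k$ single-step differences, followed by \Cref{prop:triangle-inequality-for-vectors-squared} applied pointwise to those $k$ vectors, then summing over $a$ and taking $\E_{\bx}$. Nothing to add.
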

\begin{proof}
We want to bound
\begin{align*}
&~\E_\bx \sum_a \Vert ((A_1)^{\bx}_a - (A_{k+1})^{\bx}_a) \ket{\psi}\Vert^2\\
= &~\E_\bx \sum_a \Vert (((A_1)^{\bx}_a - (A_2)^{\bx}_a) + \cdots.+ ((A_{k})^{\bx}_a - (A_{k+1})^{\bx}_a))\ket{\psi}\Vert^2\\
\leq &~\E_\bx \sum_a k\cdot(\Vert ((A_1)^{\bx}_a - (A_2)^{\bx}_a) \ket{\psi}\Vert^2 + \cdots.+ \Vert((A_k)^{\bx}_a - (A_{k+1})^{\bx}_a)\ket{\psi}\Vert^2)\\
\leq &~k\cdot(\delta_1 + \cdots + \delta_{k}),
\end{align*}
where the inequality uses \Cref{prop:triangle-inequality-for-vectors-squared} applied to the vectors $((A_i)^{\bx}_a - (A_{i+1})^{\bx}_a) \ket{\psi}$ for $i \in [k]$.
\end{proof}

We note that \Cref{prop:triangle-inequality-for-approx_delta}
contrasts with the triangle inequality for ``$\approx_{\delta}$'' when applied to numbers,
i.e.\ \Cref{eq:triangle-inequality-for-numbers},
for which no multiplicative factor of~$k$ appears in the error.

The following is Fact~$4.29$ from~\cite{NW19};
however, they incorrectly claimed a final bound of $A^x_a \ot I \simeq_{\eps +\delta + \gamma} I \ot D^x_a$.
We give a new proof of this statement, albeit with a slightly weaker quantitative bound.

\begin{proposition}[Triangle inequality for ``$\simeq$"]\label{prop:simeq-triangle-inequality}
		Suppose that $A$, $B$, $C$, and~$D$ are measurements such that
		\begin{equation*}
		A^x_a \ot I \simeq_{\eps} I \ot B^x_a,
		\quad
		C^x_a \ot I \simeq_{\delta} I \ot B^x_a,
		\quad
		C^x_a \ot I \simeq_{\gamma} I \ot D^x_a.
		\end{equation*}
		Then
		\begin{equation*}
		A^x_a \ot I \simeq_{\eps + 2\sqrt{\delta + \gamma}} I \ot D^x_a.
		\end{equation*}
		\end{proposition}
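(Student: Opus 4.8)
The plan is to use the measurement $C$ as a bridge: first show that $B$ and $D$ are close in the state-dependent distance by routing through $C$, and then transfer the given relation $A^x_a \ot I \simeq_\eps I \ot B^x_a$ from $B$ over to $D$ with a single Cauchy--Schwarz step. Concretely, since $A, B, C, D$ are all (full) measurements, \Cref{prop:simeq-to-approx} converts the two hypotheses about $C$ into state-dependent-distance statements: from $C^x_a \ot I \simeq_\delta I \ot B^x_a$ we get $C^x_a \ot I \approx_{2\delta} I \ot B^x_a$, and from $C^x_a \ot I \simeq_\gamma I \ot D^x_a$ we get $C^x_a \ot I \approx_{2\gamma} I \ot D^x_a$. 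Chaining these through the common term $C^x_a \ot I$ with the triangle inequality for ``$\approx$'' (\Cref{prop:triangle-inequality-for-approx_delta}, with $k = 2$ and error terms $2\delta$ and $2\gamma$) yields
\[
I \ot B^x_a \;\approx_{2(2\delta + 2\gamma)}\; I \ot D^x_a, \qquad\text{i.e.}\qquad I \ot B^x_a \;\approx_{4(\delta+\gamma)}\; I \ot D^x_a .
\]
The factor-of-$k$ loss in \Cref{prop:triangle-inequality-for-approx_delta} (which, as the paper notes, has no analogue for the triangle inequality for numbers) is exactly what turns $2(\delta+\gamma)$ into $4(\delta+\gamma)$, and hence is the source of the constant $2$ in the final bound.

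Next I would combine the hypothesis $A^x_a \ot I \simeq_\eps I \ot B^x_a$ with the relation $I \ot B^x_a \approx_{4(\delta+\gamma)} I \ot D^x_a$ just derived. This is precisely the situation of \Cref{prop:triangle-sub} (transferring ``$\simeq$'' using ``$\approx$''), with the one cosmetic difference that there the measurement being swapped out sits on the left tensor factor rather than the right; since both the statement of \Cref{prop:triangle-sub} and its Cauchy--Schwarz proof are symmetric under interchanging the two tensor factors, its mirror image applies here. Taking the two measurements in that (mirrored) proposition to be $B$ and $D$, and its sub-measurement to be $A$ (a full measurement, hence in particular a sub-measurement), its hypotheses read exactly $A^x_a \ot I \simeq_\eps I \ot B^x_a$ and $I \ot B^x_a \approx_{4(\delta+\gamma)} I \ot D^x_a$, and its conclusion is $A^x_a \ot I \simeq_{\eps + \sqrt{4(\delta+\gamma)}} I \ot D^x_a$, which is the claimed bound since $\sqrt{4(\delta+\gamma)} = 2\sqrt{\delta+\gamma}$.

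I do not expect any genuine obstacle: the whole argument is bookkeeping of constants plus the observation that comparing $B$ to $D$ through the common ``good'' measurement $C$ costs only a square root. If one would rather not invoke the mirrored form of \Cref{prop:triangle-sub}, one can just inline its two-line proof here: write $\E_{\bx} \sum_{a \neq b} \bra{\psi} A^{\bx}_a \ot D^{\bx}_b \ket{\psi} = 1 - \E_{\bx} \sum_a \bra{\psi} A^{\bx}_a \ot D^{\bx}_a \ket{\psi}$ (valid because $A$ and $D$ are measurements), replace $D^{\bx}_a$ by $B^{\bx}_a$ inside the inner product at a cost of $\sqrt{4(\delta+\gamma)} = 2\sqrt{\delta+\gamma}$ by Cauchy--Schwarz against $I \ot (B^{\bx}_a - D^{\bx}_a)$ (the other Cauchy--Schwarz factor being $\E_{\bx} \sum_a \bra{\psi} A^{\bx}_a \ot I \ket{\psi} = 1$), and finish using $\E_{\bx} \sum_a \bra{\psi} A^{\bx}_a \ot B^{\bx}_a \ket{\psi} \geq 1 - \eps$, which is the first hypothesis restated via \Cref{prop:simeq-for-measurements}. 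The only point to watch throughout is that all four operators here are genuine measurements, so that \Cref{prop:simeq-to-approx,prop:simeq-for-measurements} apply without the sub-measurement caveats emphasized elsewhere in the paper.
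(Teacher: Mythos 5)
Your proposal is correct and follows the paper's proof essentially verbatim: convert the two $C$-hypotheses to ``$\approx$'' via \Cref{prop:simeq-to-approx}, chain them with \Cref{prop:triangle-inequality-for-approx_delta} to get $I \ot B^x_a \approx_{4(\delta+\gamma)} I \ot D^x_a$, and finish with \Cref{prop:triangle-sub}. Your explicit note that \Cref{prop:triangle-sub} is being invoked in its tensor-factor-mirrored form (and that its Cauchy--Schwarz proof goes through symmetrically) is a small detail the paper leaves implicit, but it is accurate and worth spelling out.
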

\begin{proof}
Because $B$, $C$, and~$D$ are measurements, \Cref{prop:simeq-to-approx} implies that
\begin{equation*}
C^x_a \ot I \approx_{2\delta} I \ot B^x_a,
\quad
C^x_a \ot I \approx_{2\gamma} I \ot D^x_a.
\end{equation*}
The triangle inequality, \Cref{prop:triangle-inequality-for-approx_delta}, then implies that
\begin{equation*}
I \ot B^x_a \approx_{4 \delta + 4 \gamma} I \ot D^x_a.
\end{equation*}
Finally, \Cref{prop:triangle-sub} implies that
\begin{equation*}
A^x_a \ot I
\simeq_{\eps + \sqrt{4 \delta + 4 \gamma}} I \ot D^x_a.
\end{equation*}
This completes the proof.
\end{proof}

\begin{proposition}[Data processing for ``$\simeq$'']
\label{prop:simeq-data-processing}
Let $A = \{A^x_a\}$ and $B = \{B^x_a\}$ be two measurements such that
\begin{equation*}
A^x_a \ot I \simeq_{\delta} I \ot B^x_a.
\end{equation*}
Then for any function~$f$,
\begin{equation*}
A^x_{[f(a)=b]} \ot I \simeq_{\delta} I \ot B^x_{[f(a)=b]}.
\end{equation*}
\end{proposition}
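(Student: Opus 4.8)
The plan is to simply unfold the definition of consistency (\Cref{def:simeq}) and observe that replacing $A,B$ by their post-processings $A_{[f(a)=b]},B_{[f(a)=b]}$ only \emph{shrinks} the set of ``disagreeing'' pairs of outcomes that contribute to the consistency sum, while all individual contributions are nonnegative. So the post-processed consistency sum is bounded above by the original one, which is at most $\delta$ by hypothesis.

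Concretely, the steps I would carry out are as follows. First, write out the quantity to be bounded, $\E_{\bx} \sum_{c \neq c'} \bra{\psi} A^{\bx}_{[f(a)=c]} \ot B^{\bx}_{[f(a)=c']} \ket{\psi}$, and expand each post-processed operator via \Cref{def:post-processing} as $A^x_{[f(a)=c]} = \sum_{a\,:\,f(a)=c} A^x_a$ (and likewise for $B$), so that the expression becomes $\E_{\bx} \sum_{a,a'\,:\,f(a)\neq f(a')} \bra{\psi} A^{\bx}_a \ot B^{\bx}_{a'} \ket{\psi}$. Second, observe that $f(a) \neq f(a')$ forces $a \neq a'$; hence the sum ranges over a \emph{subset} of the pairs $(a,a')$ with $a \neq a'$ appearing in $\E_{\bx} \sum_{a \neq a'} \bra{\psi} A^{\bx}_a \ot B^{\bx}_{a'} \ket{\psi}$. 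Third, note that since each $A^x_a$ and each $B^x_{a'}$ is positive semidefinite, so is $A^x_a \ot B^x_{a'}$, and therefore every summand $\bra{\psi} A^{\bx}_a \ot B^{\bx}_{a'} \ket{\psi}$ is nonnegative; dropping terms from a sum of nonnegative reals can only decrease it. Combining these three observations, the post-processed consistency sum is at most $\E_{\bx} \sum_{a \neq a'} \bra{\psi} A^{\bx}_a \ot B^{\bx}_{a'} \ket{\psi} \leq \delta$, which is exactly $A^x_{[f(a)=b]} \ot I \simeq_{\delta} I \ot B^x_{[f(a)=b]}$.

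I do not anticipate any real obstacle: the argument is purely combinatorial bookkeeping on index sets plus the positivity of the summands. The one point requiring a moment's care is the direction of the final inequality, which hinges on each $\bra{\psi} A^{\bx}_a \ot B^{\bx}_{a'} \ket{\psi} \geq 0$ — a consequence only of $A$ and $B$ having PSD elements (so the statement in fact holds verbatim for sub-measurements, and neither projectivity nor completeness is used).
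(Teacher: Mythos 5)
Your proof is correct and follows essentially the same route as the paper: expand the post-processed operators by \Cref{def:post-processing}, observe that the constraint $f(a)\neq f(a')$ restricts to a subset of the pairs with $a\neq a'$, and use nonnegativity of the summands to drop terms. The paper leaves the positivity justification implicit; otherwise the two arguments are identical.
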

\begin{proof}
We want to bound
\begin{align*}
\E_{\bx} \sum_{b \neq b'} \bra{\psi} A^{\bx}_{[f(a)=b]} \ot B^{\bx}_{[f(a)=b']}\ket{\psi}
& = \E_{\bx} \sum_{b \neq b'} \sum_{a: f(a) = b} \sum_{a': f(a') = b'} \bra{\psi} A^{\bx}_{a} \ot B^{\bx}_{a'}\ket{\psi}\\
& \leq \E_{\bx} \sum_{a \neq a'} \bra{\psi} A^{\bx}_{a} \ot B^{\bx}_{a'}\ket{\psi}\\
& \leq \delta.
\end{align*}
This completes the proof.
\end{proof}
    
The following fact is useful for translating between statements about consistency and closeness between sub-measurements.
\begin{proposition}
\label{prop:cons-sub-meas} 
Let $\{A^x_a\}$ be a sub-measurement and let $\{B^x_a\}$ be a measurement such that on average over $x$,
\[
	A^x_a \ot I \simeq_\gamma I \ot B^x_a \,.
\]
Then the following hold
	\begin{gather}
		A^x_a \ot I \approx_{\gamma} A^x_a \ot B^x_a \approx_{\gamma} A^x \ot B^x_a,
		\label{eq:closeness5}
	\end{gather}
	where $A^x = \sum_a A^x_a$.
	As a result, by \Cref{prop:triangle-inequality-for-approx_delta},
	\begin{equation*}
	A^x_a \ot I \approx_{4\gamma} A^x \ot B^x_a
	\end{equation*}
\end{proposition}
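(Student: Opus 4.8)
The plan is to reduce both ``$\approx$'' statements to the consistency hypothesis by directly expanding the relevant squared norms and then controlling them with two elementary operator inequalities. First I would rewrite the hypothesis in a form that is easy to compare against: since $\{B^x_a\}$ is a measurement we have $\sum_b B^x_b = I$, so $\sum_{a \neq b} A^x_a \ot B^x_b = \sum_a A^x_a \ot (I - B^x_a)$, and the assumption $A^x_a \ot I \simeq_\gamma I \ot B^x_a$ becomes the single scalar inequality $\E_{\bx} \sum_a \bra{\psi} A^{\bx}_a \ot (I - B^{\bx}_a) \ket{\psi} \leq \gamma$.

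For the first relation $A^x_a \ot I \approx_\gamma A^x_a \ot B^x_a$, I would expand $\E_{\bx}\sum_a \Vert A^{\bx}_a \ot (I - B^{\bx}_a)\ket{\psi}\Vert^2 = \E_{\bx}\sum_a \bra{\psi}(A^{\bx}_a)^2 \ot (I - B^{\bx}_a)^2\ket{\psi}$, using Hermiticity of $A^x_a$ and $I - B^x_a$. Since $\{A^x_a\}$ is a sub-measurement, $0 \leq A^x_a \leq I$, hence $(A^x_a)^2 \leq A^x_a$; since $\{B^x_a\}$ is a measurement, $0 \leq B^x_a \leq I$, hence $(I - B^x_a)^2 \leq I - B^x_a$. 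Combining these operator inequalities gives $(A^x_a)^2 \ot (I-B^x_a)^2 \leq A^x_a \ot (I - B^x_a)$, so the squared-norm sum is bounded by $\gamma$ via the rewritten hypothesis. The second relation $A^x_a \ot B^x_a \approx_\gamma A^x \ot B^x_a$ is handled the same way: writing $A^x - A^x_a = \sum_{a' \neq a} A^x_{a'}$, which lies between $0$ and $I$, I would expand the squared norm as $\E_{\bx}\sum_a \bra{\psi}(A^{\bx} - A^{\bx}_a)^2 \ot (B^{\bx}_a)^2\ket{\psi}$, bound it by $\E_{\bx}\sum_a \bra{\psi}(A^{\bx} - A^{\bx}_a) \ot B^{\bx}_a\ket{\psi}$ using the same inequality, and observe that $\sum_a (A^x - A^x_a)\ot B^x_a = \sum_{a \neq b} A^x_a \ot B^x_b$ after reindexing, which is again at most $\gamma$. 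The ``as a result'' statement then follows immediately from the triangle inequality \Cref{prop:triangle-inequality-for-approx_delta} applied with $k = 2$ to the chain $A^x_a \ot I \approx_\gamma A^x_a \ot B^x_a \approx_\gamma A^x \ot B^x_a$, which yields total error $2(\gamma + \gamma) = 4\gamma$.

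I do not expect any serious obstacle here. The only point that needs a line of justification rather than a one-word appeal is the passage from the scalar facts $(A^x_a)^2 \leq A^x_a$ and $(I - B^x_a)^2 \leq I - B^x_a$ to the tensor inequality $(A^x_a)^2 \ot (I - B^x_a)^2 \leq A^x_a \ot (I - B^x_a)$, which one obtains by chaining $(A^x_a)^2 \ot (I-B^x_a)^2 \leq A^x_a \ot (I-B^x_a)^2 \leq A^x_a \ot (I-B^x_a)$, each step using that the tensor product of two positive-semidefinite operators is positive semidefinite. Everything else is bookkeeping with the index sets $a \neq b$ versus $a' \neq a$, and taking care that all operators in sight are Hermitian so that the squared norms expand cleanly.
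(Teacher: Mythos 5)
Your proof is correct and takes essentially the same approach as the paper: both arguments reduce the two squared norms to the consistency hypothesis via the operator inequality $M^2 \leq M$ for $0 \leq M \leq I$, and then apply the $k=2$ triangle inequality to get the final $4\gamma$ bound. The only difference is cosmetic — where the paper bounds $(A^x_a \ot (I - B^x_a))^2 \leq A^x_a \ot (I - B^x_a)$ in one step by noting that the tensor product is a PSD operator with norm at most $1$, you derive the same inequality by chaining through the two factors separately, which is a perfectly valid unpacking of the same fact.
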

\begin{proof}
	We establish the first approximation in \Cref{eq:closeness5}:
	\begin{align*}
		&\E_{\bx} \sum_a \bra{\psi} \Big (A^{\bx}_a \ot (I - B^{\bx}_a) \Big)^2 \ket{\psi}  \\
		&\leq \E_{\bx} \sum_a \bra{\psi} A^{\bx}_a \ot (I - B^{\bx}_a) \ket{\psi} \\
		&= \E_{\bx} \sum_{\substack{a,b : \\ b \neq a}} \bra{\psi} A^{\bx}_a \ot B^{\bx}_b \ket{\psi} \\
		&\leq \gamma \,.
	\end{align*}
	The first inequality folows from the fact that $A^x_a \ot (I - B^x_a)$ has operator norm at most $1$, and the third line follows from the fact that $\{B^x_b\}$ is a complete measurement, and the last line follows from the assumption of consistency between the $A$ and $B$ (sub-)measurements.
	
	To establish the second approximation in \Cref{eq:closeness5}, we compute the difference:
	\begin{align*}
	&\E_{\bx} \sum_a \bra{\psi} \Big ( (A^{\bx} - A^{\bx}_a) \ot B^{\bx}_a \Big)^2 \ket{\psi} \\
	&\leq  \E_{\bx} \sum_a \bra{\psi} (A^{\bx} - A^{\bx}_a) \ot B^{\bx}_a  \ket{\psi} \\
	&= \E_{\bx} \sum_{\substack{a,a' : \\ a \neq a'}} \bra{\psi} A^{\bx}_{a'} \ot B^{\bx}_a  \ket{\psi} \\
	&\leq \gamma \,.
	\end{align*}
	The second line follows from the fact that  $(A^x - A^x_a) \ot B^x_a$ has operator norm at most $1$, and the last inequality follows from the consistency between the $A$ and $B$ (sub-)measurements.
\end{proof}

\ignore{
The following fact is useful for working with sandwiches.
\begin{proposition}  \label{prop:switch-sandwich}
  Suppose $\{A^x_a\}$ is a projective sub-measurement
  satisfying
  \begin{equation}
    \E_{\bx} \sum_{a} \bra{\psi} A^{\bx}_{a} \ot A^{\bx}_a \ket{\psi}
    \geq \E_{\bx} \sum_{a} \bra{\psi} A^{\bx}_a \ot I \ket{\psi} -
    \delta.
    \label{eq:Asubcon}
  \end{equation}
  Then for any $0 \leq B \leq I$, the following holds:
  \begin{equation}
    \E_{\bx}\sum_a \bra{\psi} A^{\bx}_a B A^{\bx}_a \ot I \ket{\psi}
    \approx_{2\sqrt{2\delta}} \E_{\bx} \sum_a \bra{\psi} B
    \ot A^{\bx}_a
    \ket{\psi} \approx_{\sqrt{2\delta}} \E_{\bx} \sum_a \bra{\psi}
    BA^{\bx}_a \ot I \ket{\psi} \label{eq:switch-sandwich} \end{equation}

\end{proposition}
\begin{proof}
  First, we show that \Cref{eq:Asubcon} implies that\anote{Actually
    this should probably be how the condition is stated in general, it
    seems easier to work with.}
  \begin{equation}
    A^x_a \ot I \approx_{2\delta} I \ot A^x_a. \label{eq:Aapproxd}
  \end{equation}

  This follows by direct calculation:
  \begin{align*}
    \E_{\bx} \sum_a \| (A^{\bx}_a \ot I - I \ot A^{\bx}_a)\ket{\psi}
    \|^2 &= \E_{\bx} \sum_a \Big (2\bra{\psi} A^{\bx}_a \ot I
           \ket{\psi} - 2 \bra{\psi} A^{\bx}_a \ot A^{\bx}_a
           \ket{\psi} \Big) \\
         &\leq 2 \delta,
  \end{align*}
  where the first equation follows by projectivity of $A$ and the
  permutation-invariance of $\ket{\psi}$, and the second follows from \Cref{eq:Asubcon}.

  We will show the first approximation in \Cref{eq:switch-sandwich} in two steps. First, we will start
  from the left-hand side of \Cref{eq:switch-sandwich} and move the rightmost
  $A^x_a$ to the other tensor factor.
  \begin{equation}
    \E_{\bx} \sum_a \bra{\psi} A^{\bx}_a B A^{\bx}_a \ot I \ket{\psi} =
    \E_{\bx} \sum_a \bra{\psi} A^{\bx}_a B \ot A^{\bx}_a \ket{\psi}
    + \E_{\bx} \sum_a \bra{\psi} A^{\bx}_a B ( A^{\bx}_a \ot I - I \ot
    A^{\bx}_a) \ket{\psi} \label{eq:shift-right-A}
  \end{equation}
  The second term on the right-hand side can be bounded using \Cref{eq:Aapproxd}.
  \begin{align}
    \E_{\bx}\sum_a \bra{\psi} A^{\bx}_a B (A^{\bx}_a \ot I  - I \ot
    A^{\bx}_a) 
    \ket{\psi} &\leq \Big( \E_{\bx} \sum_a \bra{\psi} A^{\bx}_a
                 B^2
                 A^{\bx}_a \ket{\psi} \Big)^{1/2} \nonumber \\
               &\quad \cdot \Big(\E_{\bx} \sum_a
                 \bra{\psi} (A^{\bx}_a \ot I - I \ot A^{\bx}_a)^2
                 \ket{\psi} \Big)^{1/2} \\
               &\leq \sqrt{2\delta}.
  \end{align}
  We will now approximate the first term in the right-hand side of
  \Cref{eq:shift-right-A} by moving the leftmost $A^x_a$ to the other
  tensor factor.
  \begin{equation}
    \E_{\bx} \sum_a \bra{\psi} A^{\bx}_a B \ot A^{\bx}_a \ket{\psi} =
    \E_{\bx} \sum_ a\bra{\psi} B \ot A^{\bx}_a \ket{\psi} + \E_{\bx} \sum_a
    \bra{\psi} (A^{\bx}_a \ot I - I \ot A^{\bx}_a) (B \ot A^{\bx}_a) \ket{\psi}.
  \end{equation}
  The second term on the right-hand side of this equation can be bounded as before
  using \Cref{eq:Aapproxd}:
  \begin{align}
    \E_{\bx} \sum_a \bra{\psi} (A^{\bx}_a \ot I - I \ot A^{\bx}_a) B
    \ot A^{\bx}_a \ket{\psi} &\leq \Big( \E_{\bx} \sum_a \bra{\psi}
                               (A^{\bx}_a \ot I - I \ot A^{\bx}_a)^2
                               \ket{\psi} \Big)^{1/2} \nonumber \\
                             &\quad \cdot \Big(\E_{\bx} \sum_a
                               \bra{\psi} B^2 \ot A^{\bx}_a \ket{\psi}
                               \Big)^{1/2} \\
                             &\leq \sqrt{2\delta}.
  \end{align}
  Thus, the first approximation in \Cref{eq:switch-sandwich}
  follows. To show the second approximation, we invoke
  \Cref{eq:Asubcon} and the projectivity of $A$ to bound the difference
  \begin{equation}
    \E_{\bx} \bra{\psi} B \sum_a ( A^{\bx}_a \ot I - I \ot A^{\bx}_a )
    \ket{\psi} \leq \Big( \E_{\bx} \bra{\psi} B^2 \ket{\psi} \Big)^{1/2}
    \cdot \Big( \E_{\bx} \sum_{a,b} \bra{\psi}
    (A^{\bx}_a \ot I - I \ot A^{\bx}_a) (A^{\bx}_b \ot I
    - I \ot A^{\bx}_b) \ket{\psi} \Big)^{1/2}.
  \end{equation}
  The first term in the product is at most $1$, while the second term
  can be expanded using the projectivity as
  \begin{align*}
    &\Big(\E_{\bx} \sum_{a,b} \bra{\psi}    (A^{\bx}_a \ot I - I \ot A^{\bx}_a) (A^{\bx}_b \ot I
    - I \ot A^{\bx}_b) \ket{\psi} \Big)^{1/2} \\
    &\quad= \Big(\E_{\bx} \big( 2\sum_a
                                                \bra{\psi} A^{\bx}_a
                                                \ot I \ket{\psi} -
                                                2\sum_{a,b} A^{\bx}_a
                                                \ot A^{\bx}_b
                                                \ket{\psi} \big)
                                                \Big)^{1/2} \\
    &\quad \leq \Big(\E_{\bx} \big( 2\sum_a
                                                \bra{\psi} A^{\bx}_a
                                                \ot I \ket{\psi} -
                                                2\sum_{a} A^{\bx}_a
                                                \ot A^{\bx}_a
                                                \ket{\psi} \big)
      \Big)^{1/2} \\
    &\quad\leq (2 \delta)^{1/2}
  \end{align*}
\end{proof}
}

\begin{proposition}  \label{prop:switch-sandwich}
  Suppose $\{A^x_a\}$ is a projective sub-measurement
  satisfying
  \begin{equation}
    A^x_a \ot I \approx_{\delta} I \ot A^x_a. \label{eq:Aapproxd}
  \end{equation}
  Then for any $0 \leq B \leq I$, the following holds:
  \begin{equation}
    \E_{\bx}\sum_a \bra{\psi} A^{\bx}_a B A^{\bx}_a \ot I \ket{\psi}
    \approx_{2\sqrt{\delta}} \E_{\bx} \sum_a \bra{\psi} B
    \ot A^{\bx}_a
    \ket{\psi} \approx_{\sqrt{\delta}} \E_{\bx} \sum_a \bra{\psi}
    BA^{\bx}_a \ot I \ket{\psi} \label{eq:switch-sandwich} \end{equation}

\end{proposition}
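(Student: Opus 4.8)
The plan is to establish both ``$\approx$'' relations in \Cref{eq:switch-sandwich} by repeatedly moving a single copy of the projector $A^{\bx}_a$ from one tensor factor to the other, paying an error of at most $\sqrt{\delta}$ at each step via Cauchy--Schwarz and the hypothesis \Cref{eq:Aapproxd}. Two ingredients will be used repeatedly: that $0 \leq B \leq I$ implies $B^2 \leq I$, and that a projective sub-measurement has mutually orthogonal parts, i.e.\ $A^x_a A^x_b = 0$ for $a \neq b$ (indeed $A^x_b \leq I - A^x_a$, hence $A^x_a A^x_b A^x_a \leq A^x_a(I - A^x_a)A^x_a = 0$, which forces $A^x_a A^x_b = 0$).

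First I would prove the leftmost ``$\approx_{2\sqrt{\delta}}$'' in two shuttles. Writing $A^{\bx}_a B A^{\bx}_a \ot I = (A^{\bx}_a B \ot I)(A^{\bx}_a \ot I)$ and $A^{\bx}_a B \ot A^{\bx}_a = (A^{\bx}_a B \ot I)(I \ot A^{\bx}_a)$, the difference of the corresponding two quantities is $\E_{\bx} \sum_a \bra{\psi} (A^{\bx}_a B \ot I)(A^{\bx}_a \ot I - I \ot A^{\bx}_a) \ket{\psi}$; by Cauchy--Schwarz this is at most $(\E_{\bx}\sum_a \Vert (B A^{\bx}_a \ot I)\ket{\psi}\Vert^2)^{1/2} \cdot (\E_{\bx}\sum_a \Vert (A^{\bx}_a \ot I - I \ot A^{\bx}_a)\ket{\psi}\Vert^2)^{1/2}$, whose first factor is at most $1$ (since $A^{\bx}_a B^2 A^{\bx}_a \leq A^{\bx}_a$ and $\sum_a A^{\bx}_a \leq I$) and whose second factor is exactly $\sqrt{\delta}$ by \Cref{eq:Aapproxd}. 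Then, using projectivity to write $B \ot A^{\bx}_a = (I \ot A^{\bx}_a)(B \ot A^{\bx}_a)$, the same manipulation bounds the distance between $\E_{\bx}\sum_a\bra{\psi} A^{\bx}_a B \ot A^{\bx}_a\ket{\psi}$ and $\E_{\bx}\sum_a\bra{\psi} B \ot A^{\bx}_a\ket{\psi}$ by $\sqrt{\delta} \cdot (\E_{\bx}\sum_a \bra{\psi} B^2 \ot A^{\bx}_a\ket{\psi})^{1/2} \leq \sqrt{\delta}$. Adding the two errors gives $2\sqrt{\delta}$.

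For the rightmost ``$\approx_{\sqrt{\delta}}$'', the difference between $\E_{\bx}\sum_a\bra{\psi} B \ot A^{\bx}_a\ket{\psi}$ and $\E_{\bx}\sum_a\bra{\psi} B A^{\bx}_a \ot I\ket{\psi}$ equals $\E_{\bx}\bra{\psi} (B \ot I)(I \ot A^{\bx} - A^{\bx} \ot I)\ket{\psi}$ with $A^{\bx} = \sum_a A^{\bx}_a$, which by Cauchy--Schwarz is at most $(\E_{\bx}\Vert (A^{\bx}\ot I - I\ot A^{\bx})\ket{\psi}\Vert^2)^{1/2}$. The step that needs care is the bound $\Vert (A^x \ot I - I \ot A^x)\ket{\psi}\Vert^2 \leq \sum_a \Vert (A^x_a \ot I - I \ot A^x_a)\ket{\psi}\Vert^2$: writing the left-hand side as $\Vert \sum_a v_a\Vert^2$ with $v_a = (A^x_a \ot I - I \ot A^x_a)\ket{\psi}$, orthogonality of the $A^x_a$ gives, for $a \neq b$, $\langle v_a, v_b\rangle = -\bra{\psi} A^x_a \ot A^x_b\ket{\psi} - \bra{\psi} A^x_b \ot A^x_a\ket{\psi} \leq 0$, so all cross terms are nonpositive. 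Taking expectations over $\bx$ and applying \Cref{eq:Aapproxd} bounds the whole thing by $\sqrt{\delta}$.

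The main obstacle is exactly this last inequality $\Vert \sum_a v_a\Vert^2 \leq \sum_a \Vert v_a\Vert^2$: collapsing the sum $\sum_a (A^x_a \ot I - I \ot A^x_a)$ into the single operator $A^x \ot I - I \ot A^x$ without losing a factor equal to the number of outcomes is where projectivity (hence mutual orthogonality of the sub-measurement) is genuinely used — in particular the generic bound \Cref{prop:triangle-inequality-for-vectors-squared} would be far too lossy here. Everything else is routine bookkeeping with Cauchy--Schwarz and the operator inequalities $A^x_a B^2 A^x_a \leq A^x_a$ and $\sum_a A^x_a \leq I$.
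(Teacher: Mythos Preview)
Your proposal is correct and follows essentially the same approach as the paper: two Cauchy--Schwarz shuttles for the first approximation, and for the second, collapsing $\sum_a(A^x_a\ot I - I\ot A^x_a)$ to $A^x\ot I - I\ot A^x$ and then bounding $\|(A^x\ot I - I\ot A^x)\ket{\psi}\|^2$ by $\sum_a\|(A^x_a\ot I - I\ot A^x_a)\ket{\psi}\|^2$ using projectivity. Your observation that the cross terms $\langle v_a,v_b\rangle$ are nonpositive is exactly the content of the paper's explicit expansion (the paper expands, uses $A^x_aA^x_b=0$ for $a\neq b$, and drops the nonnegative $A^x_a\ot A^x_b$ terms), just phrased more compactly.
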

\begin{proof}
  We will show the first approximation in \Cref{eq:switch-sandwich} in two steps. First, we show that
  \begin{equation} \label{eq:shift-right-A}
    \E_{\bx} \sum_a \bra{\psi} A^{\bx}_a B A^{\bx}_a \ot I \ket{\psi} \approx_{\sqrt{\delta}}
    \E_{\bx} \sum_a \bra{\psi} A^{\bx}_a B \ot A^{\bx}_a \ket{\psi}.
  \end{equation}
  To do so, we bound the magnitude of the difference.
  \begin{align*}
   &\Big| \E_{\bx}\sum_a \bra{\psi} (A^{\bx}_a B \otimes I)\cdot (A^{\bx}_a \ot I  - I \ot A^{\bx}_a)  \ket{\psi} \Big|\\
    \leq~&
    \Big( \E_{\bx} \sum_a \bra{\psi} A^{\bx}_a
                 B^2
                 A^{\bx}_a \otimes I \ket{\psi} \Big)^{1/2}  
                \cdot \Big(\E_{\bx} \sum_a
                 \bra{\psi} (A^{\bx}_a \ot I - I \ot A^{\bx}_a)^2
                 \ket{\psi} \Big)^{1/2} \\
               \leq~& \sqrt{\delta}.\tag{because $B \leq I$ and \eqref{eq:Aapproxd}}
  \end{align*}
  Next, we show that
  \begin{equation*}
  \eqref{eq:shift-right-A} =  \E_{\bx} \sum_a \bra{\psi} A^{\bx}_a B \ot A^{\bx}_a \ket{\psi} \approx_{\sqrt{\delta}} \E_{\bx} \sum_ a\bra{\psi} B \ot A^{\bx}_a \ket{\psi}.
  \end{equation*}
  To do so, we bound the magnitude of the difference.
  \begin{align*}
    &\Big|\E_{\bx} \sum_a \bra{\psi} (A^{\bx}_a \ot I - I \ot A^{\bx}_a) \cdot(B
    \ot A^{\bx}_a) \ket{\psi}\Big| \tag{because~$A$ is projective}\\ \leq~& \Big( \E_{\bx} \sum_a \bra{\psi}
                               (A^{\bx}_a \ot I - I \ot A^{\bx}_a)^2
                               \ket{\psi} \Big)^{1/2}  
                              \cdot \Big(\E_{\bx} \sum_a
                               \bra{\psi} B^2 \ot A^{\bx}_a \ket{\psi}
                               \Big)^{1/2} \\
                             \leq~& \sqrt{\delta}.\tag{because $B \leq I$ and \eqref{eq:Aapproxd}}
  \end{align*}
  Thus, the first approximation in \Cref{eq:switch-sandwich}
  follows. To show the second approximation, we bound the magnitude of the difference.
  \begin{align*}
  &\Big|\E_{\bx}\sum_a \bra{\psi} (B\otimes I)  \cdot ( A^{\bx}_a \ot I - I \ot A^{\bx}_a )
    \ket{\psi}\Big|\\
    =~&\Big|\E_{\bx} \bra{\psi} (B\otimes I)\cdot \sum_a ( A^{\bx}_a \ot I - I \ot A^{\bx}_a )
    \ket{\psi}\Big|\\
    \leq~& \Big( \E_{\bx} \bra{\psi} B^2 \otimes I \ket{\psi} \Big)^{1/2}
    \cdot \Big( \E_{\bx} \sum_{a,b} \bra{\psi}
    (A^{\bx}_a \ot I - I \ot A^{\bx}_a) \cdot(A^{\bx}_b \ot I
    - I \ot A^{\bx}_b) \ket{\psi} \Big)^{1/2}.
  \end{align*}
  The first term in the product is at most $1$ because $B \leq I$.
  We bound the expression inside the second square root as follows.
  \begin{align*}
    &\E_{\bx} \sum_{a,b} \bra{\psi}    (A^{\bx}_a \ot I - I \ot A^{\bx}_a) \cdot(A^{\bx}_b \ot I
    - I \ot A^{\bx}_b) \ket{\psi}  \\
    =~&\E_{\bx} \sum_{a, b} \bra{\psi}(A^{\bx}_a A^{\bx}_b \ot I + I \otimes A^{\bx}_a A^{\bx}_b - A^{\bx}_a \otimes A^{\bx}_b - A^{\bx}_b \otimes A^{\bx}_a)\ket{\psi}\\
    =~& \E_{\bx}\sum_a
                                                \bra{\psi} (A^{\bx}_a)^2
                                                \ot I \ket{\psi} +
                                                 \E_{\bx}\sum_a
                                                \bra{\psi} I \ot (A^{\bx}_a)^2 \ket{\psi} -
                                                 2\cdot\E_{\bx} \sum_{a,b} \bra{\psi}A^{\bx}_a
                                                \ot A^{\bx}_b
                                                \ket{\psi} \tag{because~$A$ is projective}
                                                \\
    \leq~& \E_{\bx}\sum_a
                                                \bra{\psi} (A^{\bx}_a)^2
                                                \ot I \ket{\psi} +
                                                 \E_{\bx}\sum_a
                                                \bra{\psi} I \ot (A^{\bx}_a)^2 \ket{\psi} -
                                                 2\cdot\E_{\bx} \sum_{a} \bra{\psi}A^{\bx}_a
                                                \ot A^{\bx}_a
                                                \ket{\psi}
                                                \\
       =~& \E_{\bx}\sum_a
                                                \bra{\psi} (A^{\bx}_a
                                                \ot I  - I \ot A^{\bx}_a)^2
                                                \ket{\psi}
       \\
   \leq~& \delta,
  \end{align*}
  where the last step uses \Cref{eq:Aapproxd}.
\end{proof}

\begin{proposition}\label{prop:completeness-transfer-projective-P}
Let $A = \{A^x_a\}$ be a sub-measurement 
and let $P$ be a projective sub-measurement such that $A^x_a \ot I \approx_{\eps} P_a^x \ot I$.
Then
\begin{equation*}
\bra{\psi} A \ot I \ket{\psi}
\geq  \bra{\psi} P \ot I \ket{\psi} - 2\sqrt{\eps}.
\end{equation*}
\end{proposition}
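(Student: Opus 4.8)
The plan is to relate $\bra{\psi}P\ot I\ket{\psi}$ to $\bra{\psi}A\ot I\ket{\psi}$ directly, by ``peeling off'' one factor of $P^x_a$ at a time. Two elementary facts get used repeatedly: since $\ket{\psi}$ is a unit vector and $P,A$ are sub-measurements, $\bra{\psi}P\ot I\ket{\psi}\le 1$ and $\bra{\psi}A\ot I\ket{\psi}\le 1$; and since $0\le A^x_a\le I$, we have $(A^x_a)^2\le A^x_a$, hence $\E_{\bx}\sum_a\|(A^{\bx}_a\ot I)\ket{\psi}\|^2=\E_{\bx}\sum_a\bra{\psi}(A^{\bx}_a)^2\ot I\ket{\psi}\le\bra{\psi}A\ot I\ket{\psi}\le 1$. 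The projectivity of $P$ enters only to write $\bra{\psi}P^x_a\ot I\ket{\psi}=\bra{\psi}(P^x_a\ot I)(P^x_a\ot I)\ket{\psi}$.

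First I would replace the rightmost $P^x_a$ above by $A^x_a+(P^x_a-A^x_a)$, obtaining
\[
\bra{\psi}P\ot I\ket{\psi}=\E_{\bx}\sum_a\bra{\psi}(P^{\bx}_a\ot I)\big[(P^{\bx}_a-A^{\bx}_a)\ot I\big]\ket{\psi}+\E_{\bx}\sum_a\bra{\psi}(P^{\bx}_a\ot I)(A^{\bx}_a\ot I)\ket{\psi}.
\]
By Cauchy--Schwarz the first term has magnitude at most $\sqrt{\bra{\psi}P\ot I\ket{\psi}}\cdot\sqrt{\eps}\le\sqrt{\eps}$, using $A^x_a\ot I\approx_{\eps}P^x_a\ot I$. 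Since the left-hand side is real, $\bra{\psi}P\ot I\ket{\psi}\le\sqrt{\eps}+\mathrm{Re}\,\E_{\bx}\sum_a\bra{\psi}(P^{\bx}_a\ot I)(A^{\bx}_a\ot I)\ket{\psi}$. Next, using the identity $I-P^x_a=(A^x_a-P^x_a)+(I-A^x_a)$,
\[
\bra{\psi}A^x_a\ot I\ket{\psi}-\bra{\psi}(P^x_a\ot I)(A^x_a\ot I)\ket{\psi}=\bra{\psi}\big[(A^x_a-P^x_a)\ot I\big](A^x_a\ot I)\ket{\psi}+\bra{\psi}\big[(A^x_a-(A^x_a)^2)\ot I\big]\ket{\psi},
\]
where the last term is nonnegative since $(A^x_a)^2\le A^x_a$. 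Averaging over $\bx$, summing over $a$, and bounding the first term on the right by Cauchy--Schwarz (its magnitude is at most $\sqrt{\eps}\cdot\sqrt{\bra{\psi}A\ot I\ket{\psi}}\le\sqrt{\eps}$) gives $\mathrm{Re}\,\E_{\bx}\sum_a\bra{\psi}(P^{\bx}_a\ot I)(A^{\bx}_a\ot I)\ket{\psi}\le\bra{\psi}A\ot I\ket{\psi}+\sqrt{\eps}$. Combining the two bounds yields $\bra{\psi}P\ot I\ket{\psi}\le\bra{\psi}A\ot I\ket{\psi}+2\sqrt{\eps}$, which is the claim.

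I do not expect a genuine obstacle: the argument is two applications of Cauchy--Schwarz. The only points requiring care are that the intermediate inner products are in general complex, so one passes to real parts at the end of each step, and that the sub-measurement structure of $A$ (not just positivity) is used, in the single guise $(A^x_a)^2\le A^x_a$; projectivity of $P$ is used exactly once, to begin the computation.
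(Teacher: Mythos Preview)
Your proof is correct and follows essentially the same approach as the paper: use projectivity of $P$ to write $P_a=(P_a)^2$, then apply Cauchy--Schwarz twice (once to swap a $P_a$ for an $A_a$, once more to compare $P_aA_a$ with $A_a$), using $(A_a)^2\le A_a$ along the way. The paper organizes the second swap as $A_aP_a\to (A_a)^2\le A_a$ via its packaged Cauchy--Schwarz lemma (\Cref{prop:easy-approx-from-approx-delta}), whereas you decompose $A_a-P_aA_a$ directly and drop the nonnegative term $A_a-(A_a)^2$; this is the same computation written slightly differently, and your explicit tracking of real parts is just what the paper's $\approx_\delta$ notation hides.
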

\begin{proof}
We calculate:
\begin{align*}
\bra{\psi} P \ot I \ket{\psi}
&= \E_{\bx} \sum_a \bra{\psi} P^{\bx}_a \ot I \ket{\psi} \\
&= \E_{\bx} \sum_a \bra{\psi} (P^{\bx}_a)^2 \ot I \ket{\psi} \tag{because~$P$ is projective}\\
&\approx_{\sqrt{\eps}} \E_{\bx} \sum_a \bra{\psi} (A^{\bx}_a \cdot P^{\bx}_a) \ot I \ket{\psi}
									\tag{by \Cref{prop:easy-approx-from-approx-delta}} \\
&\approx_{\sqrt{\eps}} \E_{\bx} \sum_a \bra{\psi} (A^{\bx}_a)^2 \ot I \ket{\psi}
									\tag{by \Cref{prop:easy-approx-from-approx-delta}} \\
&\leq \E_{\bx} \sum_a \bra{\psi} A^{\bx}_a \ot I \ket{\psi}\\
& = \bra{\psi} A \ot I \ket{\psi}.
\end{align*}
This completes the proof.
\end{proof}

\subsubsection{Strong self-consistency}

An important property of a sub-measurement~$A = \{A^x_a\}$
is that if both provers measure using~$A$,
then they receive the same outcome.
It seems natural to study this using \emph{self-consistency} of~$A$, i.e. the number~$\delta$ such that
\begin{equation*}
A^x_a \ot I \simeq_{\delta} I \ot A^x_a.
\end{equation*}
However, when~$A$ is a sub-measurement,
being $\delta$-self consistent
only implies the following weaker condition:
if both provers measure using~$A$ and one of them receives~$\ba$,
then the other will most likely either receive~$\ba$ \emph{or not receive any outcome whatsoever}.
This motivates defining the following stronger notion of self-consistency.

\begin{definition}[Strong self consistency]\label{def:strong-self-consistency}
Let $\ket{\psi}$ be a permutation-invariant state in $\calH \ot \calH$
and let $A = \{A^x_a\}$ be a sub-measurement acting on~$\calH$.
Then $A$ is \emph{$\delta$-strongly self consistent} if
\begin{equation*}
\E_{\bx} \sum_a \bra{\psi} A^{\bx}_a \otimes A^{\bx}_a \ket{\psi}
\geq
\bra{\psi} A \otimes I \ket{\psi} -  \delta.
\end{equation*}
\end{definition}

We now relate strong self-consistency to our two notions of similarity.
First, we show that strong self-consistency is indeed
a stronger condition than $A^x_a \ot I \simeq_{\delta} I \ot A^x_a$,
at least for sub-measurements.

\begin{proposition}\label{prop:other-two-notions-of-self-consistency}
Let $\ket{\psi}$ be a permutation-invariant state,
and let $A = \{A^x_a\}$ be a sub-measurement. If
\begin{equation*}
\E_{\bx} \sum_a \bra{\psi} A^{\bx}_a \otimes A^{\bx}_a \ket{\psi} \geq \bra{\psi} A \otimes I \ket{\psi} - \delta
\end{equation*}
then $A^x_a \otimes I \simeq_{\delta} I \otimes A^x_a$.
This is an ``if and only if" if $A$ is a measurement.
\end{proposition}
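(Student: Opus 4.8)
The plan is to reduce everything to one algebraic identity connecting the two quantities that appear: the inconsistency $\E_{\bx}\sum_{a\ne b}\bra{\psi}A^{\bx}_a\ot A^{\bx}_b\ket{\psi}$, which is exactly what ``$A^x_a\ot I\simeq_\delta I\ot A^x_a$'' asks us to bound (by \Cref{def:simeq}), and the overlap $\E_{\bx}\sum_a\bra{\psi}A^{\bx}_a\ot A^{\bx}_a\ket{\psi}$, which appears in the strong-self-consistency hypothesis. First I would write $\sum_{b:b\ne a}A^{\bx}_b = A^{\bx}-A^{\bx}_a$, where $A^{\bx}=\sum_b A^{\bx}_b$ is the complete part of the sub-measurement, and use $\sum_a A^{\bx}_a\ot A^{\bx} = A^{\bx}\ot A^{\bx}$ to obtain the identity
\[
\E_{\bx}\sum_{a\ne b}\bra{\psi}A^{\bx}_a\ot A^{\bx}_b\ket{\psi}
= \E_{\bx}\bra{\psi}A^{\bx}\ot A^{\bx}\ket{\psi} - \E_{\bx}\sum_a\bra{\psi}A^{\bx}_a\ot A^{\bx}_a\ket{\psi}.
\]

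Next I would bound the first term on the right. Since $A$ is a sub-measurement, $0\le A^{\bx}\le I$, so $I-A^{\bx}\ge 0$, and a tensor product of two positive-semidefinite operators is positive-semidefinite, giving $A^{\bx}\ot(I-A^{\bx})\ge 0$, i.e.\ $A^{\bx}\ot A^{\bx}\le A^{\bx}\ot I$ as operators. Taking $\bra{\psi}\cdot\ket{\psi}$ and then $\E_{\bx}$, and recalling $A=\E_{\bx}A^{\bx}$, this yields $\E_{\bx}\bra{\psi}A^{\bx}\ot A^{\bx}\ket{\psi}\le\bra{\psi}A\ot I\ket{\psi}$. Plugging this into the identity and invoking the hypothesis $\E_{\bx}\sum_a\bra{\psi}A^{\bx}_a\ot A^{\bx}_a\ket{\psi}\ge\bra{\psi}A\ot I\ket{\psi}-\delta$ gives
\[
\E_{\bx}\sum_{a\ne b}\bra{\psi}A^{\bx}_a\ot A^{\bx}_b\ket{\psi}
\le \bra{\psi}A\ot I\ket{\psi} - \E_{\bx}\sum_a\bra{\psi}A^{\bx}_a\ot A^{\bx}_a\ket{\psi}
\le \delta,
\]
which is precisely $A^x_a\ot I\simeq_\delta I\ot A^x_a$.

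For the ``if and only if'' when $A$ is a measurement, note that $A^{\bx}=\sum_a A^{\bx}_a=I$, so the operator inequality $A^{\bx}\ot A^{\bx}\le A^{\bx}\ot I$ is in fact an equality, and the identity from the first step becomes $\E_{\bx}\sum_{a\ne b}\bra{\psi}A^{\bx}_a\ot A^{\bx}_b\ket{\psi}=\bra{\psi}A\ot I\ket{\psi}-\E_{\bx}\sum_a\bra{\psi}A^{\bx}_a\ot A^{\bx}_a\ket{\psi}$. Hence the bound $\le\delta$ on the left-hand side is logically equivalent to the bound $\ge\bra{\psi}A\ot I\ket{\psi}-\delta$ on $\E_{\bx}\sum_a\bra{\psi}A^{\bx}_a\ot A^{\bx}_a\ket{\psi}$, proving the equivalence.

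I do not anticipate a real obstacle here; it is a short direct computation. The only points that need care are: (i) tracking that $A^{\bx}$ need not equal $I$ for a strict sub-measurement, which is exactly why the forward direction is only an implication; (ii) the positivity step $A^{\bx}\ot(I-A^{\bx})\ge0$, which is where the sub-measurement condition $A^{\bx}\le I$ is used; and (iii) noting that permutation-invariance of $\ket{\psi}$, although in the hypothesis, is not actually used in this argument and is present only to match \Cref{def:strong-self-consistency}.
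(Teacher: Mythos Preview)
Your proof is correct and follows essentially the same approach as the paper's: both write $\sum_{a\ne b}A^{\bx}_a\ot A^{\bx}_b=\sum_a A^{\bx}_a\ot(A^{\bx}-A^{\bx}_a)$ and then use $A^{\bx}\le I$ to bound the resulting expression by $\bra{\psi}A\ot I\ket{\psi}-\E_{\bx}\sum_a\bra{\psi}A^{\bx}_a\ot A^{\bx}_a\ket{\psi}$, noting that this step is an equality when $A$ is a measurement. The only cosmetic difference is that you first sum over $a$ to get $A^{\bx}\ot A^{\bx}$ and then apply $A^{\bx}\ot(I-A^{\bx})\ge 0$, whereas the paper applies $A^{\bx}_a\ot(I-A^{\bx})\ge 0$ term-by-term before summing; these are the same computation.
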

\begin{proof}
For a sub-measurement~$A$,
\begin{align*}
&\E_{\bx} \sum_{a \neq b} \bra{\psi} A^{\bx}_a \ot A^{\bx}_b \ket{\psi}\\
=~& \E_{\bx} \sum_{a} \bra{\psi} A^{\bx}_a \ot (A^{\bx}- A^{\bx}_a) \ket{\psi}\\
\leq~&\E_{\bx} \sum_{a} \bra{\psi} A^{\bx}_a \ot (I- A^{\bx}_a) \ket{\psi}\\
=~&\E_{\bx} \sum_{a} \bra{\psi} A^{\bx}_a \ot I \ket{\psi}
		- \E_{\bx} \sum_{a} \bra{\psi} A^{\bx}_a \ot A^{\bx}_a \ket{\psi}\\
=~& \bra{\psi} A \ot I \ket{\psi}
		- \E_{\bx} \sum_{a} \bra{\psi} A^{\bx}_a \ot A^{\bx}_a \ket{\psi}.
\end{align*}
This is at most~$\delta$ if $A$ is $\delta$-strongly self-consistent.
On the other hand, if~$A$ is a measurement, then the inequality becomes an equality.
Hence, if $A^x_a \ot I \simeq_{\delta} I \otimes A^x_a$,
then $A$ is $\delta$-strongly self-consistent.
\end{proof}

Next, we show that strong self-consistency is also a stronger 
condition than $A^x_a \ot I \approx_{2\delta} I \ot A^x_a$, at least for non-projective measurements.

\begin{proposition}\label{prop:two-notions-of-self-consistency}
Let $\ket{\psi}$ be a permutation-invariant state,
and let $A = \{A^x_a\}$ be a sub-measurement. If
\begin{equation*}
\E_{\bx} \sum_a \bra{\psi} A^{\bx}_a \otimes A^{\bx}_a \ket{\psi} \geq \bra{\psi} A \otimes I \ket{\psi} - \delta
\end{equation*}
then $A^x_a \otimes I \approx_{2\delta} I \otimes A^x_a$.
This is an ``if and only if" if $A$ is projective.
\end{proposition}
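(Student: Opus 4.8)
The statement is the $\approx$-analogue of \Cref{prop:other-two-notions-of-self-consistency}, and the proof follows the same template as \Cref{prop:simeq-to-approx}: simply expand the squared norm defining the state-dependent distance and bound the two "diagonal" terms using positive-semidefiniteness. Concretely, I would start by writing
\begin{align*}
\E_{\bx} \sum_a \big\Vert (A^{\bx}_a \ot I - I \ot A^{\bx}_a)\ket{\psi}\big\Vert^2
&= \E_{\bx} \sum_a \bra{\psi}(A^{\bx}_a)^2 \ot I\ket{\psi}
 + \E_{\bx} \sum_a \bra{\psi} I \ot (A^{\bx}_a)^2\ket{\psi}\\
&\qquad - 2\,\E_{\bx} \sum_a \bra{\psi} A^{\bx}_a \ot A^{\bx}_a\ket{\psi}.
\end{align*}

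The one fact doing the work is that $0 \le A^x_a \le I$ implies $(A^x_a)^2 \le A^x_a$, so $\bra{\psi}(A^{\bx}_a)^2 \ot I\ket{\psi} \le \bra{\psi} A^{\bx}_a \ot I\ket{\psi}$; summing over $a$ and averaging over $\bx$ gives $\E_{\bx}\sum_a \bra{\psi}(A^{\bx}_a)^2 \ot I\ket{\psi} \le \bra{\psi} A \ot I\ket{\psi}$. For the second term I would use permutation-invariance of $\ket{\psi}$ to rewrite $\bra{\psi} I \ot (A^{\bx}_a)^2\ket{\psi} = \bra{\psi}(A^{\bx}_a)^2 \ot I\ket{\psi}$ and bound it identically, so $\E_{\bx}\sum_a \bra{\psi} I \ot (A^{\bx}_a)^2\ket{\psi} \le \bra{\psi} A \ot I\ket{\psi}$ as well. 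Substituting, the right-hand side is at most $2\bra{\psi} A \ot I\ket{\psi} - 2\,\E_{\bx}\sum_a \bra{\psi}A^{\bx}_a \ot A^{\bx}_a\ket{\psi}$, which is $\le 2\delta$ by the strong self-consistency hypothesis. This establishes $A^x_a \ot I \approx_{2\delta} I \ot A^x_a$.

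For the "if and only if" when $A$ is projective: in that case $(A^x_a)^2 = A^x_a$, so the two inequalities above become equalities and the expansion reads exactly $\E_{\bx}\sum_a \Vert(A^{\bx}_a \ot I - I \ot A^{\bx}_a)\ket{\psi}\Vert^2 = 2\bra{\psi} A \ot I\ket{\psi} - 2\,\E_{\bx}\sum_a \bra{\psi}A^{\bx}_a \ot A^{\bx}_a\ket{\psi}$. Hence the left side is $\le 2\delta$ precisely when $\E_{\bx}\sum_a \bra{\psi}A^{\bx}_a \ot A^{\bx}_a\ket{\psi} \ge \bra{\psi} A \ot I\ket{\psi} - \delta$, which is the strong self-consistency condition. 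There is no real obstacle here; the only point requiring any care is invoking permutation-invariance of $\ket{\psi}$ to handle the $I \ot (A^x_a)^2$ term (the hypothesis $\bra{\psi} A \ot I\ket{\psi}$ is stated only on the left factor), exactly as in \Cref{prop:other-two-notions-of-self-consistency}.
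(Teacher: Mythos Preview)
Your proposal is correct and follows essentially the same approach as the paper: expand the squared norm, use $(A^x_a)^2 \le A^x_a$ together with permutation-invariance to bound the diagonal terms by $\bra{\psi} A \ot I\ket{\psi}$, and then apply the hypothesis; for projective $A$ the inequality becomes an equality, giving the converse. The only cosmetic difference is that the paper combines the two diagonal terms into a single factor of $2$ up front (using permutation-invariance immediately), whereas you treat them separately before invoking symmetry.
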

\begin{proof}
For general (i.e.\ not necessarily projective)~$A$
\begin{align}
&\E_{\bx} \sum_a \Vert(A^{\bx}_a \otimes I - I \otimes A^{\bx}_a) \ket{\psi}\Vert^2\nonumber\\
=~&\E_{\bx} \sum_a \bra{\psi} (A^{\bx}_a \otimes I - I \otimes A^{\bx}_a)^2 \ket{\psi}\nonumber\\
=~&2\cdot \Big( \E_{\bx} \sum_a \bra{\psi} (A^{\bx}_a)^2 \otimes I \ket{\psi} -  \E_{\bx} \sum_a \bra{\psi} A^{\bx}_a \otimes A^{\bx}_a \ket{\psi}\Big)\nonumber\\
\leq~&2\cdot \Big( \E_{\bx} \sum_a \bra{\psi} A^{\bx}_a \otimes I \ket{\psi} -  \E_{\bx} \sum_a \bra{\psi} A^{\bx}_a \otimes A^{\bx}_a \ket{\psi}\Big).\label{eq:here's-where-projectivity-would-help}
\end{align}
This is at most~$2\cdot \delta$ if
\begin{equation*}
\E_{\bx} \sum_a \bra{\psi} A^{\bx}_a \otimes A^{\bx}_a \ket{\psi} \geq \E_{\bx} \sum_a \bra{\psi} A^{\bx}_a \otimes I \ket{\psi} - \delta.
\end{equation*}
If~$A$ is projective, then \Cref{eq:here's-where-projectivity-would-help} becomes an equality,
and so $A^x_a \otimes I \approx_{2\delta} I \otimes A^x_a$ implies that
\begin{equation*}
 \E_{\bx} \sum_a \bra{\psi} A^{\bx}_a \otimes I \ket{\psi} -  \E_{\bx} \sum_a \bra{\psi} A^{\bx}_a \otimes A^{\bx}_a \ket{\psi}
 = \frac{1}{2} \cdot \left(\E_{\bx} \sum_a \Vert(A^{\bx}_a \otimes I - I \otimes A^{\bx}_a) \ket{\psi}\Vert^2\right) \leq \delta.\qedhere
\end{equation*}
\end{proof}

Hence, we may also refer to the condition $A^x_a \ot I \approx_{2\delta} I \ot A^x_a$
as ``strong self-consistency'' if~$A$ is projective.

For the remainder of the section, we will prove various properties of strongly self-consistent sub-measurements.

\begin{proposition}\label{prop:two-notions-of-self-consistency-after-evaluation}
Let $\ket{\psi}$ be a permutation-invariant state,
and let $A = \{A^x_a\}$ be a sub-measurement such that
\begin{equation*}
\E_{\bx} \sum_a \bra{\psi} A^{\bx}_a \otimes A^{\bx}_a \ket{\psi} \geq  \bra{\psi} A \otimes I \ket{\psi} - \delta.
\end{equation*}
Then for any function~$f$,
$A^x_{[f(a)=b]} \otimes I \approx_{2\delta} I \otimes A^x_{[f(a)=b]}$.
\end{proposition}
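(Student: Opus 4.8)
The plan is to deduce the statement directly from \Cref{prop:two-notions-of-self-consistency} by applying that proposition to the post-processed sub-measurement. Concretely, set $B = \{B^x_b\}$ where $B^x_b = A^x_{[f(a)=b]} = \sum_{a : f(a)=b} A^x_a$. This is again a sub-measurement (and $\ket{\psi}$ is still permutation-invariant), and crucially its completeness is unchanged: $B^x = \sum_b B^x_b = \sum_a A^x_a = A^x$, so that $\bra{\psi} B \ot I \ket{\psi} = \bra{\psi} A \ot I \ket{\psi}$. Thus it suffices to show that $B$ is $\delta$-strongly self-consistent, i.e.\ $\E_{\bx} \sum_b \bra{\psi} B^{\bx}_b \ot B^{\bx}_b \ket{\psi} \geq \bra{\psi} B \ot I \ket{\psi} - \delta$, since \Cref{prop:two-notions-of-self-consistency} then immediately yields $B^x_b \ot I \approx_{2\delta} I \ot B^x_b$, which is precisely the desired conclusion.

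To verify strong self-consistency of $B$, I would expand $\E_{\bx} \sum_b \bra{\psi} B^{\bx}_b \ot B^{\bx}_b \ket{\psi} = \E_{\bx} \sum_b \sum_{a, a' : f(a) = f(a') = b} \bra{\psi} A^{\bx}_a \ot A^{\bx}_{a'} \ket{\psi}$, drop all the off-diagonal terms with $a \neq a'$, and recollapse the diagonal terms $a = a'$ to get exactly $\E_{\bx} \sum_a \bra{\psi} A^{\bx}_a \ot A^{\bx}_a \ket{\psi}$, which by the hypothesis is at least $\bra{\psi} A \ot I \ket{\psi} - \delta = \bra{\psi} B \ot I \ket{\psi} - \delta$. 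Dropping the off-diagonal terms is legitimate because each term $\bra{\psi} A^{\bx}_a \ot A^{\bx}_{a'} \ket{\psi}$ is nonnegative: $A^{\bx}_a \ot A^{\bx}_{a'}$ is a tensor product of positive semidefinite matrices and is therefore itself positive semidefinite.

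The argument has essentially no obstacle; the only point requiring a moment's care is the nonnegativity of the discarded cross terms, handled as above. One could alternatively carry out the same manipulations directly on the sums without naming $B$, but routing through \Cref{prop:two-notions-of-self-consistency} is cleanest, since that proposition already packages the passage from strong self-consistency to the $\approx_{2\delta}$ statement.
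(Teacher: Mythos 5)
Your proposal is correct and takes essentially the same approach as the paper. The paper's proof in fact re-derives the content of \Cref{prop:two-notions-of-self-consistency} inline for the post-processed sub-measurement (expanding $\E_{\bx}\sum_b \Vert(A^{\bx}_{[f(a)=b]}\ot I - I\ot A^{\bx}_{[f(a)=b]})\ket{\psi}\Vert^2$, bounding $(A^{\bx}_{[f(a)=b]})^2 \leq A^{\bx}_{[f(a)=b]}$, and then dropping cross terms); your route simply packages the first two steps as an invocation of \Cref{prop:two-notions-of-self-consistency} applied to $B^x_b = A^x_{[f(a)=b]}$, together with the observation that completeness is preserved. The key estimate --- discarding the nonnegative off-diagonal terms $\bra{\psi}A^{\bx}_a \ot A^{\bx}_{a'}\ket{\psi}$ with $a \neq a'$, $f(a)=f(a')$ --- is identical in both, and your justification via positive semidefiniteness of the tensor product is the right one.
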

\begin{proof}
Our goal is to bound
\begin{align}
&\E_{\bx} \sum_b \Vert(A^{\bx}_{[f(a)=b]} \otimes I - I \otimes A^{\bx}_{[f(a)=b]}) \ket{\psi}\Vert^2\nonumber\\
=~&\E_{\bx} \sum_b \bra{\psi} (A^{\bx}_{[f(a)=b]} \otimes I - I \otimes A^{\bx}_{[f(a)=b]})^2 \ket{\psi}\nonumber\\
=~&2\cdot \Big( \E_{\bx} \sum_b \bra{\psi} (A^{\bx}_{[f(a)=b]})^2 \otimes I \ket{\psi} -  \E_{\bx} \sum_b \bra{\psi} A^{\bx}_{[f(a)=b]} \otimes A^{\bx}_{[f(a)=b]} \ket{\psi}\Big)\nonumber\\
\leq~&2\cdot \Big( \E_{\bx} \sum_b \bra{\psi} A^{\bx}_{[f(a)=b]} \otimes I \ket{\psi} -  \E_{\bx} \sum_b \bra{\psi} A^{\bx}_{[f(a)=b]} \otimes A^{\bx}_{[f(a)=b]} \ket{\psi}\Big).\nonumber\\
=~&2\cdot \Big(  \bra{\psi} A \otimes I \ket{\psi} -  \E_{\bx} \sum_b \bra{\psi} A^{\bx}_{[f(a)=b]} \otimes A^{\bx}_{[f(a)=b]} \ket{\psi}\Big).\label{eq:finishing-this-up}
\end{align}
The second term in \Cref{eq:finishing-this-up} can be bounded by
\begin{align*}
 \E_{\bx} \sum_b \bra{\psi} A^{\bx}_{[f(a)=b]} \otimes A^{\bx}_{[f(a)=b]} \ket{\psi}
 &=  \E_{\bx} \sum_{a, a': f(a) = f(a')} \bra{\psi} A^{\bx}_{a} \otimes A^{\bx}_{a'} \ket{\psi}\\
 &\geq\E_{\bx} \sum_{a} \bra{\psi} A^{\bx}_{a} \otimes A^{\bx}_{a} \ket{\psi}\\
 &\geq \bra{\psi} A \ot I \ket{\psi} -\delta.
\end{align*}
Hence,
\begin{equation*}
\eqref{eq:finishing-this-up} \leq 
	2\cdot \Big( \bra{\psi} A \otimes I \ket{\psi} - (\bra{\psi} A \ot I \ket{\psi} -\delta)\Big)
= 2\delta.
\end{equation*}
This concludes the proof.
\end{proof}

\begin{proposition}\label{prop:completeness-transfer-self-consistent-A}
Let $\ket{\psi}$ be a permutation-invariant state,
and let $A = \{A^x_a\}$ be a sub-measurement such that
\begin{equation*}
\E_{\bx} \sum_a \bra{\psi} A^{\bx}_a \otimes A^{\bx}_a \ket{\psi} \geq  \bra{\psi} A \otimes I \ket{\psi} - \delta.
\end{equation*}
In addition, let $B$ be a sub-measurement such that $A^x_a \ot I \approx_{\eps} B_a^x \ot I$.
Then
\begin{equation*}
\bra{\psi} B \ot I \ket{\psi}
\geq \bra{\psi} A \ot I \ket{\psi} - \delta - 2\sqrt{\eps}.
\end{equation*}
\end{proposition}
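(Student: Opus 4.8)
The plan is to use the strong self-consistency hypothesis as a substitute for projectivity — this is the feature that drove the analogous argument in \Cref{prop:completeness-transfer-projective-P} — then swap one copy of $A$ for $B$, and finally bound the resulting cross term by $\bra{\psi} B \ot I \ket{\psi}$. Concretely, the hypothesis says exactly that $\bra{\psi} A \ot I \ket{\psi} - \delta \leq \E_{\bx} \sum_a \bra{\psi} A^{\bx}_a \ot A^{\bx}_a \ket{\psi}$, so it suffices to show the right-hand side is at most $\bra{\psi} B \ot I \ket{\psi} + 2\sqrt{\eps}$ (in fact $+\sqrt{\eps}$ suffices, so there is slack).

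First I would rewrite $A^{\bx}_a \ot A^{\bx}_a = (A^{\bx}_a \ot I)(I \ot A^{\bx}_a)$ and apply \Cref{prop:easy-approx-from-approx-delta} to the three sub-measurements on $\calH \ot \calH$ given by $\{A^x_a \ot I\}$, $\{B^x_a \ot I\}$, and $\{I \ot A^x_a\}$. These are genuine sub-measurements, since tensoring a sub-measurement with the identity preserves positive semidefiniteness and the inequality $\sum_a(\cdot) \leq I$. The hypothesis $A^x_a \ot I \approx_{\eps} B^x_a \ot I$ is precisely the input condition of that proposition, so it yields $\E_{\bx} \sum_a \bra{\psi} A^{\bx}_a \ot A^{\bx}_a \ket{\psi} \approx_{\sqrt{\eps}} \E_{\bx} \sum_a \bra{\psi} B^{\bx}_a \ot A^{\bx}_a \ket{\psi}$.

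Next I would bound $\E_{\bx} \sum_a \bra{\psi} B^{\bx}_a \ot A^{\bx}_a \ket{\psi} \leq \bra{\psi} B \ot I \ket{\psi}$ via an operator inequality: for each fixed $x$, the matrix $\sum_a B^x_a \ot (I - A^x_a)$ is a sum of tensor products of positive semidefinite matrices, hence positive semidefinite, so $\sum_a B^x_a \ot A^x_a \leq \sum_a B^x_a \ot I = B^x \ot I$; taking $\E_{\bx}\bra{\psi}\cdot\ket{\psi}$ gives the claim. Combining the three displayed inequalities gives $\bra{\psi} A \ot I \ket{\psi} - \delta \leq \bra{\psi} B \ot I \ket{\psi} + \sqrt{\eps} \leq \bra{\psi} B \ot I \ket{\psi} + 2\sqrt{\eps}$, which is the desired conclusion.

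I don't expect a genuine obstacle here; the only point requiring care is that $A$ is not assumed projective, so one cannot reuse the proof of \Cref{prop:completeness-transfer-projective-P} verbatim (which relied on $(P^x_a)^2 = P^x_a$). The role played there by projectivity is played here by strong self-consistency, which is exactly the statement that $\bra{\psi} A \ot I \ket{\psi}$ is recovered, up to additive error $\delta$, by the ``both provers measure $A$'' quantity $\E_{\bx} \sum_a \bra{\psi} A^{\bx}_a \ot A^{\bx}_a \ket{\psi}$; after that the argument is just one Cauchy–Schwarz swap plus a one-line positivity bound.
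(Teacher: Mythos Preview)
Your proof is correct and follows essentially the same line as the paper's: both chain the strong self-consistency hypothesis, a Cauchy--Schwarz swap via \Cref{prop:easy-approx-from-approx-delta}, and a positivity bound of the form $\sum_a X_a \ot (I - Y_a) \geq 0$. The one difference is that the paper first drops from $B \ot I$ to $B \ot B$ by positivity and then swaps \emph{both} tensor factors from $B$ to $A$ (using permutation-invariance of $\ket{\psi}$ for the second swap), incurring $2\sqrt{\eps}$, whereas you swap only the left factor from $A$ to $B$ and then apply positivity to the mixed term $\sum_a B^x_a \ot (I - A^x_a) \geq 0$, incurring only $\sqrt{\eps}$. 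So your argument is a mild sharpening, and the slack you observed is real.
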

\begin{proof}
We calculate:
\begin{align*}
\bra{\psi} B \ot I \ket{\psi}
&= \E_{\bx} \sum_a \bra{\psi} B^{\bx}_a \ot I \ket{\psi} \\
&\geq \E_{\bx} \sum_a \bra{\psi} B^{\bx}_a \ot B^{\bx}_a \ket{\psi} \\
& \geq \E_{\bx} \sum_a \bra{\psi} A^{\bx}_a \ot B^{\bx}_a \ket{\psi} - \sqrt{\eps}
				\tag{by \Cref{prop:easy-approx-from-approx-delta}} \\
& \geq \E_{\bx} \sum_a \bra{\psi} A^{\bx}_a \ot A^{\bx}_a \ket{\psi} - 2\sqrt{\eps}
				\tag{by \Cref{prop:easy-approx-from-approx-delta}} \\
& \geq  \bra{\psi} A \ot I \ket{\psi} - \delta - 2\sqrt{\eps}.
\end{align*}
This completes the proof.
\end{proof}

\begin{proposition}\label{prop:self-consistency-implies-data-processing}
Let $\ket{\psi}$ be a permutation-invariant state,
and let $A = \{A^x_a\}$ be a sub-measurement such that
\begin{equation*}
\E_{\bx} \sum_a \bra{\psi} A^{\bx}_a \otimes A^{\bx}_a \ket{\psi} \geq  \bra{\psi} A \otimes I \ket{\psi} - \delta.
\end{equation*}
In addition, let $P$ be a projective sub-measurement such that $P^x_a \ot I \approx_{\eps} A_a^x \ot I$.
Then for any function~$f$,
\begin{equation*}
P^x_{[f(a)=b]} \ot I \approx_{8\delta + 8\sqrt{\eps}} A_{[f(a)=b]}^x \ot I.
\end{equation*}
\end{proposition}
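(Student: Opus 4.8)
The plan is to bound the quantity $\E_{\bx}\sum_b\big\|\big(P^{\bx}_{[f(a)=b]}\otimes I - A^{\bx}_{[f(a)=b]}\otimes I\big)\ket\psi\big\|^2$ directly, abbreviating $Q^x_b := P^x_{[f(a)=b]}$ and $R^x_b := A^x_{[f(a)=b]}$. Note that $\sum_b Q^x_b = P^x \le I$, $\sum_b R^x_b = A^x \le I$, and $0 \le Q^x_b, R^x_b \le I$, so in particular $(Q^x_b)^2 \le Q^x_b$ and $(R^x_b)^2 \le R^x_b$. First I would insert the term $I \otimes R^x_b$ and apply the triangle inequality for squared vectors (\Cref{prop:triangle-inequality-for-vectors-squared} with $k=2$) to get
\[
\E_\bx\sum_b\big\|(Q^{\bx}_b - R^{\bx}_b)\otimes I\ket\psi\big\|^2 \le 2\,\E_\bx\sum_b\big\|(Q^{\bx}_b\otimes I - I\otimes R^{\bx}_b)\ket\psi\big\|^2 + 2\,\E_\bx\sum_b\big\|(R^{\bx}_b\otimes I - I\otimes R^{\bx}_b)\ket\psi\big\|^2,
\]
and observe that the second term is at most $2 \cdot 2\delta = 4\delta$ by \Cref{prop:two-notions-of-self-consistency-after-evaluation} applied to $A$ (which is $\delta$-strongly self-consistent by hypothesis).

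For the first term I would expand the square: since $Q^x_b$ and $R^x_b$ are Hermitian and act on opposite tensor factors,
\[
\big\|(Q^{\bx}_b\otimes I - I\otimes R^{\bx}_b)\ket\psi\big\|^2 = \bra\psi (Q^{\bx}_b)^2\otimes I\ket\psi + \bra\psi I\otimes (R^{\bx}_b)^2\ket\psi - 2\bra\psi Q^{\bx}_b\otimes R^{\bx}_b\ket\psi.
\]
Summing over $b$, the first two terms are bounded by $\bra\psi P \otimes I\ket\psi$ and $\bra\psi A \otimes I\ket\psi$ respectively (using $(Q^x_b)^2 \le Q^x_b$, $(R^x_b)^2 \le R^x_b$, and permutation invariance of $\ket\psi$). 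For the cross term, expanding the two post-processings gives $\E_\bx\sum_b\bra\psi Q^{\bx}_b\otimes R^{\bx}_b\ket\psi = \E_\bx\sum_{a,a': f(a)=f(a')}\bra\psi P^{\bx}_a\otimes A^{\bx}_{a'}\ket\psi \ge \E_\bx\sum_a\bra\psi P^{\bx}_a\otimes A^{\bx}_a\ket\psi$, where the inequality simply drops the $a \ne a'$ terms, each of which is nonnegative since $P^x_a, A^x_{a'} \ge 0$. A single Cauchy--Schwarz step using the hypothesis $P^x_a \otimes I \approx_\eps A^x_a \otimes I$ together with $\sum_a (A^x_a)^2 \le \sum_a A^x_a \le I$ (essentially \Cref{prop:easy-approx-from-approx-delta}) then gives $\E_\bx\sum_a\bra\psi P^{\bx}_a\otimes A^{\bx}_a\ket\psi \ge \E_\bx\sum_a\bra\psi A^{\bx}_a\otimes A^{\bx}_a\ket\psi - \sqrt\eps \ge \bra\psi A\otimes I\ket\psi - \delta - \sqrt\eps$, the last step being the strong self-consistency of $A$.

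Assembling these bounds, the first term is at most $\bra\psi P\otimes I\ket\psi - \bra\psi A\otimes I\ket\psi + 2\delta + 2\sqrt\eps$; here I would invoke \Cref{prop:completeness-transfer-projective-P} with $A$ playing the role of the sub-measurement and $P$ the projective sub-measurement (this is the one and only place where projectivity of $P$ enters) to conclude $\bra\psi P\otimes I\ket\psi - \bra\psi A\otimes I\ket\psi \le 2\sqrt\eps$, so the first term is at most $2\delta + 4\sqrt\eps$. Plugging into the displayed inequality gives the total bound $2(2\delta + 4\sqrt\eps) + 4\delta = 8\delta + 8\sqrt\eps$, which is exactly the claimed $P^x_{[f(a)=b]} \otimes I \approx_{8\delta + 8\sqrt\eps} A^x_{[f(a)=b]} \otimes I$. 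I do not expect a genuine obstacle, but there are two traps worth flagging: one should resist moving $R^x_b$ across the tensor factor by Cauchy--Schwarz against its self-consistency relation, since that would degrade the $\delta$-dependence to $\sqrt\delta$ and overshoot the stated bound; and one should avoid trying to push the single-outcome relation $P_a \approx_\eps A_a$ through the post-processing $f$ directly, which introduces an over-counting factor equal to the fibre sizes of $f$. The decomposition above circumvents both issues by only ever invoking $P_a \approx_\eps A_a$ on the "diagonal in $a$" part of the cross term, where the relevant operators $\{I \otimes A^x_a\}$ still form a sub-measurement.
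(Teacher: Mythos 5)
Your proof is correct and takes essentially the same route as the paper: the paper first establishes $P^x_{[f(a)=b]} \ot I \approx_{2\delta + 4\sqrt{\eps}} I \ot A^x_{[f(a)=b]}$ via exactly the same expansion, cross-term bound, and use of \Cref{prop:completeness-transfer-projective-P}, then chains with $A^x_{[f(a)=b]} \ot I \approx_{2\delta} I \ot A^x_{[f(a)=b]}$ (from \Cref{prop:two-notions-of-self-consistency-after-evaluation}) using \Cref{prop:triangle-inequality-for-approx_delta}. You simply perform the two-term triangle inequality at the start rather than the end, which is a cosmetic reordering of the identical argument and yields the same $8\delta + 8\sqrt{\eps}$.
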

\begin{proof}
We will begin by showing that
\begin{equation}\label{eq;what-we-want-to-prove-but-on-wrong-side}
P^x_{[f(a)=b]} \ot I \approx_{2\delta + 4\sqrt{\eps}} I \ot A_{[f(a)=b]}^x.
\end{equation}
To do so, our goal is to bound
\begin{align}
&\E_{\bx} \sum_b \Vert(P^{\bx}_{[f(a)=b]} \ot I - I \ot A_{[f(a)=b]}^{\bx}) \ket{\psi} \Vert^2\nonumber\\
=~&\E_{\bx} \sum_b \bra{\psi} (P^{\bx}_{[f(a)=b]}  \ot I -  I \ot A_{[f(a)=b]}^{\bx})^2 \ket{\psi} \nonumber\\
=~&\E_{\bx} \sum_b \bra{\psi} \Big((P^{\bx}_{[f(a)=b]})^2 \ot I 
		+ I \ot (A^{\bx}_{[f(a)=b]})^2 
		- 2\cdot P^{\bx}_{[f(a)=b]} \ot A^{\bx}_{[f(a)=b]}\Big) \ket{\psi}\nonumber\\
\leq~&\E_{\bx} \sum_b \bra{\psi} \Big(P^{\bx}_{[f(a)=b]} \ot I 
		+ I \ot A^{\bx}_{[f(a)=b]}
		- 2\cdot P^{\bx}_{[f(a)=b]} \ot A^{\bx}_{[f(a)=b]}\Big) \ket{\psi}\nonumber\\
=~&\bra{\psi} P \ot I \ket{\psi} + \bra{\psi} I \ot A \ket{\psi}
		- 2\cdot \E_{\bx} \sum_b \bra{\psi}  P^{\bx}_{[f(a)=b]} \ot A^{\bx}_{[f(a)=b]} \ket{\psi}.
				\label{eq:gonna-handle-third-term}
\end{align}
By \Cref{prop:completeness-transfer-projective-P}, the first term in \Cref{eq:gonna-handle-third-term} is at most
\begin{equation*}
\bra{\psi} P \ot I \ket{\psi}
\leq \bra{\psi} A \ot I \ket{\psi} + 2\sqrt{\eps}.
\end{equation*}
As for the third term in \Cref{eq:gonna-handle-third-term}, we can bound it by
\begin{align*}
\E_{\bx} \sum_b \bra{\psi}  P^{\bx}_{[f(a)=b]} \ot A^{\bx}_{[f(a)=b]} \ket{\psi}
&= \E_{\bx} \sum_{a, a': f(a) = f(a')} \bra{\psi}  P^{\bx}_{a} \ot A^{\bx}_{a'} \ket{\psi}\\
&\geq \E_{\bx} \sum_{a} \bra{\psi}  P^{\bx}_{a} \ot A^{\bx}_{a} \ket{\psi}\\
&\geq \E_{\bx} \sum_{a} \bra{\psi}  A^{\bx}_{a} \ot A^{\bx}_{a} \ket{\psi} - \sqrt{\eps}
						\tag{by \Cref{prop:easy-approx-from-approx-delta}} \\
&\geq \bra{\psi} A \otimes I \ket{\psi} - \delta - \sqrt{\eps}.
\end{align*}
Putting everything together,
\begin{equation*}
\eqref{eq:gonna-handle-third-term}
\leq (\bra{\psi} A \ot I \ket{\psi} + 2\sqrt{\eps}) + \bra{\psi} A \ot I \ket{\psi}
	- 2 \cdot(\bra{\psi} A \otimes I \ket{\psi} - \delta - \sqrt{\eps})
= 2\delta + 4\sqrt{\eps}.
\end{equation*}
This proves \Cref{eq;what-we-want-to-prove-but-on-wrong-side}.

\Cref{prop:two-notions-of-self-consistency-after-evaluation} implies that
\begin{equation*}
A^x_{[f(a)=b]} \ot I \approx_{2\delta} I \ot A^x_{[f(a)=b]}.
\end{equation*}
Hence, by \Cref{eq;what-we-want-to-prove-but-on-wrong-side},
\begin{equation*}
P^x_{[f(a)=b]} \ot I
\approx_{2\delta + 4\sqrt{\eps}} I \ot A_{[f(a)=b]}^x
\approx_{2\delta} A^x_{[f(a)=b]} \ot I.
\end{equation*}
Thus, by \Cref{prop:triangle-inequality-for-approx_delta},
\begin{equation*}
P^x_{[f(a)=b]} \ot I
\approx_{8\delta + 8\sqrt{\eps}}A^x_{[f(a)=b]} \ot I.
\end{equation*}
This concludes the proof.
\end{proof}

\begin{proposition}\label{prop:completing-to-measurement}
		Let $\ket{\psi}$ be a permutation-invariant state,
		and let $A = \{A_a\}$ be a measurement such that
		\begin{equation*}
		\E_{\bx} \sum_a \bra{\psi} A_a \otimes A_a \ket{\psi} \geq  \bra{\psi} A \otimes I \ket{\psi} - \zeta.
		\end{equation*}
		Suppose $B = \{B_a\}$ is a sub-measurement such that
		$A_a\ot I \approx_{\delta} B_a\ot I$.
		Let $C = \{C_a\}$ be a measurement in which there is an~$a^*$
		such that $C_{a^*} = B_{a^*} + (I-B)$ and $C_a = B_a$ for all $a \neq a^*$.
		Then $A_a \ot I \approx_{2\delta + 4 \sqrt{\delta} + 2\zeta} C_a \ot I$.
		\end{proposition}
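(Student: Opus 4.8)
The plan is to unfold the definition of ``$\approx$'' and bound the single summand in which $C$ differs from $B$. Since $A$ is a measurement, $\bra{\psi}A\ot I\ket{\psi}=1$, so the hypothesis becomes $\sum_a\bra{\psi}A_a\ot A_a\ket{\psi}\ge 1-\zeta$, and the goal is to show $\sum_a\Norm{\bigl((A_a-C_a)\ot I\bigr)\ket{\psi}}^2\le 2\delta+4\sqrt{\delta}+2\zeta$. I would write $v_a=\bigl((A_a-B_a)\ot I\bigr)\ket{\psi}$ and $w=\bigl((I-B)\ot I\bigr)\ket{\psi}$, where $B=\sum_a B_a$. Since $C_a=B_a$ for $a\neq a^*$ and $C_{a^*}=B_{a^*}+(I-B)$, we have $\bigl((A_a-C_a)\ot I\bigr)\ket{\psi}=v_a$ for $a\neq a^*$ and $=v_{a^*}-w$ for $a=a^*$, so expanding,
\[
\sum_a\Norm{\bigl((A_a-C_a)\ot I\bigr)\ket{\psi}}^2=\sum_a\Norm{v_a}^2-2\,\mathrm{Re}\,\langle v_{a^*},w\rangle+\Norm{w}^2.
\]

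Next I would bound each piece. The first sum is at most $\delta$ by the hypothesis $A_a\ot I\approx_\delta B_a\ot I$, and in particular $\Norm{v_{a^*}}\le\sqrt{\delta}$; so by Cauchy--Schwarz and the AM--GM inequality the cross term satisfies $-2\,\mathrm{Re}\,\langle v_{a^*},w\rangle\le 2\Norm{v_{a^*}}\,\Norm{w}\le\delta+\Norm{w}^2$. The remaining work is to bound $\Norm{w}^2=\bra{\psi}(I-B)^2\ot I\ket{\psi}$. Since $0\le I-B\le I$ we have $(I-B)^2\le I-B$, hence $\Norm{w}^2\le\bra{\psi}(I-B)\ot I\ket{\psi}=1-\bra{\psi}B\ot I\ket{\psi}$. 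To lower-bound $\bra{\psi}B\ot I\ket{\psi}$ I would invoke \Cref{prop:completeness-transfer-self-consistent-A} (with a trivial question index): the hypothesis on $A$ is exactly its $\zeta$-strong self-consistency, and $A_a\ot I\approx_\delta B_a\ot I$, so that proposition gives $\bra{\psi}B\ot I\ket{\psi}\ge\bra{\psi}A\ot I\ket{\psi}-\zeta-2\sqrt{\delta}=1-\zeta-2\sqrt{\delta}$, i.e.\ $\Norm{w}^2\le\zeta+2\sqrt{\delta}$. Assembling the pieces, the sum is at most $\delta+(\delta+\Norm{w}^2)+\Norm{w}^2=2\delta+2\Norm{w}^2\le 2\delta+2\zeta+4\sqrt{\delta}$, which is the claimed bound.

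I do not expect a genuine obstacle: the calculation is short and each step is either elementary or a direct application of \Cref{prop:completeness-transfer-self-consistent-A}. The one point requiring care --- and the conceptual heart of the argument --- is controlling the incomplete part $I-B$ of the sub-measurement, where the key observation is that $B$ retains almost all the weight of $\ket{\psi}$ because $A$ is strongly self-consistent and $B$ is $\delta$-close to $A$. A minor bookkeeping point is that the extra POVM element $I-B$ is attached to the single outcome $a^*$ only, so it enters exactly one summand of $\sum_a$; were it distributed across many outcomes the constants would degrade.
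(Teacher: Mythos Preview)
Your proof is correct and follows essentially the same approach as the paper: both reduce to the bound $2\delta + 2\Norm{((I-B)\ot I)\ket{\psi}}^2$ and then show $\Norm{((I-B)\ot I)\ket{\psi}}^2 \le \zeta + 2\sqrt{\delta}$. The only tactical differences are that the paper reaches the first bound via \Cref{prop:triangle-inequality-for-vectors-squared} rather than your direct expansion plus AM--GM, and bounds the second quantity by an inline argument through $\sum_a\bra{\psi}B_a^2\ot I\ket{\psi}$ and \Cref{prop:cool-prop} rather than invoking \Cref{prop:completeness-transfer-self-consistent-A} as you do---your route is slightly cleaner here since it reuses an existing proposition.
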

		
		Before proving this, we need the following proposition.
		\begin{proposition}\label{prop:cool-prop}
		Let $\ket{\psi}$ be a permutation-invariant state,
		and let $A = \{A_a\}$ be a sub-measurement such that
		\begin{equation*}
		\E_{\bx} \sum_a \bra{\psi} A_a \otimes A_a \ket{\psi} \geq  \bra{\psi} A \otimes I \ket{\psi} - \zeta.
		\end{equation*}
		Then
		\begin{equation*}
		\sum_a \bra{\psi} (A_a)^2 \otimes I \ket{\psi} \geq \sum_a \bra{\psi} A_a \otimes I \ket{\psi} -\zeta.
		\end{equation*}
		\end{proposition}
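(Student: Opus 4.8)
The plan is to derive everything from the elementary observation that for each outcome $a$ the operator $(A_a \ot I - I \ot A_a)^2$ is positive semidefinite. Since $A = \{A_a\}$ is a sub-measurement, each $A_a$ is Hermitian and PSD, so $A_a \ot I$ and $I \ot A_a$ commute and the square expands cleanly as $(A_a)^2 \ot I + I \ot (A_a)^2 - 2\, A_a \ot A_a$. Taking $\bra{\psi}\cdot\ket{\psi}$, summing over $a$, and keeping the (harmless, since nothing depends on $\bx$) expectation to match the hypothesis, I would first write
\begin{equation*}
0 \leq \E_{\bx} \sum_a \bra{\psi} (A_a \ot I - I \ot A_a)^2 \ket{\psi}
= \E_{\bx}\sum_a \bra{\psi}(A_a)^2 \ot I\ket{\psi} + \E_{\bx}\sum_a \bra{\psi} I \ot (A_a)^2\ket{\psi} - 2\, \E_{\bx}\sum_a \bra{\psi} A_a \ot A_a\ket{\psi}.
\end{equation*}

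Next I would invoke permutation-invariance of $\ket{\psi}$: writing $W$ for the swap operator on $\calH\ot\calH$, one has $W\ket{\psi} = \ket{\psi}$, hence $\bra{\psi} M \ot I \ket{\psi} = \bra{\psi} W(M\ot I)W \ket{\psi} = \bra{\psi} I \ot M \ket{\psi}$ for any Hermitian $M$. Applying this with $M = (A_a)^2$ lets me replace the middle term above by the first one, which yields
\begin{equation*}
\E_{\bx}\sum_a \bra{\psi}(A_a)^2 \ot I \ket{\psi} \geq \E_{\bx}\sum_a \bra{\psi} A_a \ot A_a \ket{\psi}.
\end{equation*}
Finally I would chain this with the hypothesis $\E_{\bx} \sum_a \bra{\psi} A_a \ot A_a \ket{\psi} \geq \bra{\psi} A \ot I \ket{\psi} - \zeta$ and the definition $A = \sum_a A_a$ to conclude $\sum_a \bra{\psi}(A_a)^2 \ot I \ket{\psi} \geq \sum_a \bra{\psi} A_a \ot I \ket{\psi} - \zeta$, as desired.

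There is essentially no serious obstacle here; this is a short structural lemma. The only point requiring a little care is the permutation-invariance step, which is valid precisely because $(A_a)^2$ is Hermitian, so that the swap operator exchanges the two one-sided quantities; the rest is a single operator-identity expansion plus a triangle-free chain of inequalities. (One could equivalently skip the explicit mention of $(A_a \ot I - I \ot A_a)^2 \geq 0$ and instead cite that the state-dependent distance is nonnegative, but the direct computation above is cleanest.)
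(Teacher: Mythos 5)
Your proof is correct. It is essentially the same argument as the paper's: you prove the intermediate inequality $\sum_a \bra{\psi}(A_a)^2\ot I\ket{\psi}\geq\sum_a\bra{\psi}A_a\ot A_a\ket{\psi}$ via positivity of $(A_a\ot I - I\ot A_a)^2$ together with permutation-invariance, whereas the paper derives the same intermediate inequality by Cauchy--Schwarz plus permutation-invariance; the two are interchangeable here because the norms of the two factors are equal, so the difference is cosmetic. One small nit: $A_a\ot I$ and $I\ot A_a$ commute because they act on disjoint tensor factors, not because $A_a$ is PSD, though this does not affect the validity of the expansion.
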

		\begin{proof}
		Applying Cauchy-Schwarz, we have
\begin{align}
\sum_a \bra{\psi} A_a \otimes A_a \ket{\psi}
&= \sum_a \bra{\psi} (A_a \ot I) \cdot (I \ot A_a) \ket{\psi}\nonumber\\
&\leq \sqrt{\sum_a \bra{\psi} (A_a)^2 \ot I \ket{\psi}} \cdot \sqrt{\sum_a \bra{\psi} I \ot (A_a)^2 \ket{\psi}}\nonumber\\
& = \sum_a \bra{\psi} (A_a)^2 \ot I \ket{\psi}.\label{eq:As-on-same-side-rewrite}
\end{align}
As a result,
\begin{align*}
\sum_a \bra{\psi} (A_a)^2 \otimes I \ket{\psi}
&\geq \sum_a \bra{\psi} A_a \otimes A_a \ket{\psi}\tag{by \Cref{eq:As-on-same-side-rewrite}}\\
&\geq \sum_a \bra{\psi} A_a \otimes I \ket{\psi} -\zeta. \tag{by self-consistency of~$A$}
\end{align*}
This completes the proof.
\end{proof}
		Now we prove \Cref{prop:completing-to-measurement}.
		\begin{proof}[Proof of \Cref{prop:completing-to-measurement}]
		By \Cref{prop:triangle-inequality-for-vectors-squared} (the triangle inequality for vectors squared),
		\begin{align*}
		&\phantom{=}\sum_a \Vert(A_a - C_a)\ot I \ket{\psi} \Vert^2\\
		& = \sum_{a \neq a^*} \Vert(A_a - B_a)\ot I \ket{\psi} \Vert^2 + \Vert(A_{a^*}  - (I-B + B_{a^*}))\ot I\ket{\psi}\Vert^2\\
		& \leq \sum_{a \neq a^*} \Vert(A_a - B_a)\ot I \ket{\psi} \Vert^2 + 2 \cdot \Vert(A_{a^*}  -  B_{a^*})\ot I\ket{\psi}\Vert^2 + 2 \cdot \Vert(I - B) \ot I\ket{\psi}\Vert^2\\
		& \leq 2\sum_{a \neq a^*} \Vert(A_a - B_a)\ot I \ket{\psi} \Vert^2 + 2 \cdot \Vert(A_{a^*}  -  B_{a^*})\ot I\ket{\psi}\Vert^2 + 2 \cdot \Vert(I - B) \ot I\ket{\psi}\Vert^2\\
		& = 2\sum_{a} \Vert(A_a - B_a)\ot I \ket{\psi} \Vert^2 + 2 \cdot \Vert(I - B)\ot I \ket{\psi}\Vert^2\\
		& \leq 2 \delta + 2 \cdot \Vert(I - B)\ot I \ket{\psi}\Vert^2. \tag{because $A_a \ot I \approx_{\delta} B_a \ot I$}
		\end{align*}
		The second term we can bound as follows.
		\begin{align}
		\Vert(I - B) \ot I \ket{\psi}\Vert^2
		& = \bra{\psi} (I-B)^2 \ot I \ket{\psi}\nonumber\\
		& \leq \bra{\psi} (I-B)\ot I \ket{\psi} \tag{because~$B$ is a sub-measurement}\\
		& = 1 - \bra{\psi} B\ot I \ket{\psi}\nonumber\\
		& = 1 - \sum_a \bra{\psi} B_a \ot I \ket{\psi}\nonumber\\
		& \leq 1 -\sum_a \bra{\psi} B_a^2 \ot I \ket{\psi}.\label{eq:to-return-later-whatevs}
		\end{align}
		Now, by \Cref{prop:easy-approx-from-approx-delta} and the fact that $A_a\ot I \approx_{\delta} B_a\ot I$,
		\begin{align*}
		\sum_a \bra{\psi} B_a^2 \ot I \ket{\psi}
		& \approx_{\sqrt{\delta}} \sum_a \bra{\psi} (A_a \cdot B_a)\ot I \ket{\psi}\\
		& \approx_{\sqrt{\delta}} \sum_a \bra{\psi} A_a^2\ot I \ket{\psi}\\
		& \geq  \bra{\psi} A \ot I \ket{\psi}- \zeta \tag{by \Cref{prop:cool-prop}}\\
		& = 1 - \zeta. \tag{because~$A$ is a measurement}
		\end{align*}
		Hence, by \Cref{eq:to-return-later-whatevs},
		\begin{equation*}
		\Vert(I - B) \ot I \ket{\psi}\Vert^2 \leq 1 -\sum_a \bra{\psi} B_a^2 \ot I \ket{\psi} \leq 2\sqrt{\delta} + \zeta.
		\end{equation*}
		This concludes the proof.
		\end{proof}


\section{Making measurements projective}
\label{sec:making-measurements-projective}

A recurring theme in $\MIP^*$ research is that 
projective measurements are significantly easier to manipulate
than general POVM measurements.
As just one example among many,
when $A = \{A^x_a\}$ and $B = \{B^x_a\}$ are projective measurements,
the ``$\approx_\delta$" and ``$\simeq_\delta$" distances are equivalent.
In other words:
\begin{equation*}
A^x_a \ot I \simeq_{\delta} I \ot B^x_a
\qquad
\iff
\qquad
A^x_a \ot I \approx_{2\delta} I \ot B^x_a.
\end{equation*}
This is Fact~$4.13$ from~\cite{NW19}.
On the other hand when~$A$ and~$B$
are not projective,
one can find examples of measurements
where $A^x_a \ot I \approx_{\delta} I \ot B^x_a$
for $\delta \rightarrow 0$
but one can show $A^x_a \ot I \simeq_{\eps} I \ot B^x_a$
only for $\eps \rightarrow 1$;
this is Remark~$4.15$ from~\cite{NW19}.
As a result, it is important to be able to convert
POVM measurements to projective measurements
whenever possible.

In this section, we will survey two tools for doing so.
The first of these is the textbook Naimark dilation theorem.
In our setting, it states the following.

\begin{theorem}[Naimark dilation]\label{thm:naimark}
Let $\ket{\psi}$ be a state in $\calH_{\mathrm{A}} \ot \calH_{\mathrm{B}}$.
Let $A = \{A^x_a\}$ be a sub-measurement acting on $\calH_{\mathrm{A}}$
and $B=\{B^y_b\}$ be a sub-measurement acting on $\calH_{\mathrm{B}}$.
Then there exists
\begin{enumerate}
\item Hilbert spaces $\calH_{\mathrm{A}_{\mathsf{aux}}}$ and $\calH_{\mathrm{B}_{\mathsf{aux}}}$,
\item a state $\ket{\mathsf{aux}} \in \calH_{\mathrm{A}_{\mathsf{aux}}} \ot \calH_{\mathrm{B}_{\mathsf{aux}}}$,
\item and two measurements  $\widehat{A} = \{\widehat{A}^x_a\}$ and $\widehat{B}=\{\widehat{B}^y_b\}$
acting on $\calH_{\mathrm{A}} \ot \calH_{\mathrm{A}_{\mathsf{aux}}}$
and $\calH_{\mathrm{B}} \ot \calH_{\mathrm{B}_{\mathsf{aux}}}$,
respectively,
\end{enumerate}
such that the following is true.
If we write $\ket{\widehat{\psi}} = \ket{\psi} \otimes \ket{\mathsf{aux}}$,
then for all $x,y, a, b$,
 \begin{equation*}
 \bra{\psi} A^x_a \ot B^y_b \ket{\psi}
 = \bra{\widehat{\psi}} \widehat{A}^x_a \ot \widehat{B}^y_b \ket{\widehat{\psi}}.
 \end{equation*}
 In addition, $\ket{\mathsf{aux}}$ is a \emph{product state},
 meaning that we can write it as
 \begin{equation*}
 \ket{\mathsf{aux}}= \ket{\mathsf{aux}_{\mathrm{A}}} \ot \ket{\mathsf{aux}_{\mathrm{B}}},
 \end{equation*}
for $\ket{\mathsf{aux}_{\mathrm{A}}}$ in $\calH_{\mathrm{A}_{\mathsf{aux}}}$
and $\ket{\mathsf{aux}_{\mathrm{B}}}$ in $\calH_{\mathrm{B}_{\mathsf{aux}}}$.
\end{theorem}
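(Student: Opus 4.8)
The plan is to apply the textbook Naimark/Stinespring dilation separately on the two systems, taking care that a single auxiliary space and a single \emph{product} auxiliary state can be made to serve all of the questions at once. I will describe the construction on the $\mathrm{A}$ system; the $\mathrm{B}$ system is handled identically.

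First I would complete each $A^x$ to a genuine measurement, adjoining the outcome $\bot$ with weight $I - \sum_a A^x_a$; since the asserted identity only concerns the original outcomes $a$, the new outcome is harmless baggage. Next fix a finite set $\calA$ containing $\bot$ together with the outcomes of all the $A^x$ (in the applications this is a fixed set such as $\F_q$), take $\calH_{\mathrm{A}_{\mathsf{aux}}} = \C^{\calA}$ with orthonormal basis $\{\ket{a}\}_{a \in \calA}$, and fix an arbitrary unit vector $\ket{\mathsf{aux}_{\mathrm{A}}} \in \calH_{\mathrm{A}_{\mathsf{aux}}}$, say $\ket{\mathsf{aux}_{\mathrm{A}}} = \ket{a_0}$. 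For each $x$, the map $V^x\ket{\phi} = \sum_a (\sqrt{A^x_a}\ket{\phi}) \ot \ket{a}$ is an isometry $\calH_{\mathrm{A}} \to \calH_{\mathrm{A}} \ot \calH_{\mathrm{A}_{\mathsf{aux}}}$ since $(V^x)^\dagger V^x = \sum_a A^x_a = I$, so the partial isometry $\ket{\phi} \ot \ket{\mathsf{aux}_{\mathrm{A}}} \mapsto V^x\ket{\phi}$ extends to a unitary $U^x$ on $\calH_{\mathrm{A}} \ot \calH_{\mathrm{A}_{\mathsf{aux}}}$ (its domain and its range have orthogonal complements of the same dimension). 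I would then set $\widehat{A}^x_a = (U^x)^\dagger (I \ot \ket{a}\bra{a}) U^x$; each is a projection, they sum to $I$, so $\widehat{A}^x$ is a projective measurement, and by construction $(I \ot \bra{\mathsf{aux}_{\mathrm{A}}})\,\widehat{A}^x_a\,(I \ot \ket{\mathsf{aux}_{\mathrm{A}}}) = (V^x)^\dagger(I \ot \ket{a}\bra{a})V^x = A^x_a$. Carrying out the same construction on $\calH_{\mathrm{B}}$ yields $\calH_{\mathrm{B}_{\mathsf{aux}}}$, $\ket{\mathsf{aux}_{\mathrm{B}}}$, unitaries, and projective measurements $\widehat{B}^y = \{\widehat{B}^y_b\}$.

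With $\ket{\mathsf{aux}} = \ket{\mathsf{aux}_{\mathrm{A}}} \ot \ket{\mathsf{aux}_{\mathrm{B}}}$ (a product state, as required) and $\ket{\widehat{\psi}} = \ket{\psi} \ot \ket{\mathsf{aux}}$, the only remaining task is the inner-product identity, which I would check by expanding $\ket{\psi} = \sum_i \ket{\alpha_i} \ot \ket{\beta_i}$. Applying $U^x \ot U^y$ sends $\ket{\widehat{\psi}}$ to $\sum_i \big(\sum_a \sqrt{A^x_a}\ket{\alpha_i} \ot \ket{a}\big) \ot \big(\sum_b \sqrt{B^y_b}\ket{\beta_i} \ot \ket{b}\big)$, and projecting the $\mathrm{A}$-side auxiliary register onto $\ket{a}$ and the $\mathrm{B}$-side auxiliary register onto $\ket{b}$ and taking the squared norm gives $\sum_{i,j} \bra{\alpha_j} A^x_a \ket{\alpha_i}\,\bra{\beta_j} B^y_b \ket{\beta_i}$, which is precisely $\bra{\psi} A^x_a \ot B^y_b \ket{\psi}$. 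Since $\bra{\widehat{\psi}}\widehat{A}^x_a \ot \widehat{B}^y_b\ket{\widehat{\psi}}$ equals this squared norm (the relevant operator is a projection), the identity follows for all $x,y,a,b$.

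The one genuinely delicate point --- the reason this is more than a one-line citation --- is that the \emph{same} auxiliary space and the \emph{same} product state must serve every $x$ and every $y$ simultaneously. This is exactly why I take the auxiliary space to be the fixed ``which-outcome'' register $\C^{\calA}$, with no $x$-dependence, fix one auxiliary vector once and for all, and push all the $x$-dependence into the unitary $U^x$; the dimension count above is what guarantees such a $U^x$ exists. If $\calH_{\mathrm{A}}$ or $\calH_{\mathrm{B}}$ is infinite-dimensional, I would instead invoke the standard fact that an isometry extends to a unitary after adjoining a common infinite-dimensional ancilla, which changes nothing else in the argument.
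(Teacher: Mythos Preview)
Your proof is correct, but it takes a different route from the paper's. The paper first proves a helper lemma (Lemma~\ref{lem:naimark-helper}) that dilates a single sub-measurement using an auxiliary register of dimension $k+1$, and then proves Theorem~\ref{thm:naimark} by applying the helper lemma \emph{separately} for each question and tensoring together all the resulting auxiliary registers: the auxiliary space is $\bigotimes_x \calH_{\mathrm{A}_{\mathsf{aux},x}}$, the auxiliary state is $\bigotimes_x \ket{\mathsf{aux}_{\mathrm{A},x}}$, and $\widehat{A}^x_a$ acts as $\widetilde{A}^x_a$ on the $x$-th factor and as the identity on all others. You instead use a \emph{single} auxiliary register $\C^{\calA}$ shared across all questions, fix one auxiliary vector once and for all, and absorb all of the $x$-dependence into the unitary $U^x$. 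Your approach is more economical in auxiliary dimension and is arguably the more natural reading of ``a single auxiliary state serves all questions''; the paper's approach is more modular (it reduces immediately to the single-measurement case) but at the cost of a much larger auxiliary space. Both are standard ways to package Naimark dilation for families of POVMs.
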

The second of these is the ``orthogonalization lemma" from~\cite{KV11}.
In the setting of symmetric strategies, it states the following.

\begin{theorem}[Orthogonalization lemma]\label{thm:orthonormalization}
Let $\ket{\psi}$ be a permutation-invariant state,
and let $A = \{A_a\}$ be a sub-measurement with strong self-consistency
\begin{equation*}
\sum_a \bra{\psi} A_a \otimes A_a \ket{\psi} \geq \sum_a \bra{\psi} A_a \otimes I \ket{\psi} -\zeta.
\end{equation*}
Then there exists a projective sub-measurement $P = \{P_a\}$ such that
\begin{equation*}
A_a \otimes I \approx_{100 \zeta^{1/4}} P_a \otimes I.
\end{equation*}
\end{theorem}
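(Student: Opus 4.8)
The plan is to carry out the standard two-step recipe of Kempe and Vidick~\cite{KV11} for replacing an ``almost projective'' sub-measurement by a genuinely projective one: first round each $A_a$ to a nearby honest projection by spectral truncation, and then orthogonalize the resulting family of projections so that it again sums to at most the identity. The hypothesis is already in exactly the right form to begin, since strong self-consistency is precisely the statement that $A$ behaves like a projective measurement \emph{on the state} $\ket{\psi}$.

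First I would extract the ``almost projective'' content quantitatively. Applying \Cref{prop:cool-prop} to $A = \{A_a\}$ gives $\sum_a \bra{\psi} (A_a)^2 \otimes I \ket{\psi} \geq \sum_a \bra{\psi} A_a \otimes I \ket{\psi} - \zeta$, and since $0 \leq A_a \leq I$ forces $A_a - (A_a)^2 \geq 0$, this rearranges to $\sum_a \bra{\psi}(A_a - (A_a)^2)\otimes I\ket{\psi} \leq \zeta$. Now let $\Pi_a$ be the spectral projection of $A_a$ onto the eigenvalues lying in $(1/2,1]$. A short case analysis on an eigenvalue $\lambda \in [0,1]$ of $A_a$ (separating $\lambda > 1/2$ from $\lambda \leq 1/2$) gives the operator inequalities
\begin{equation*}
(A_a - \Pi_a)^2 \leq 2\big(A_a - (A_a)^2\big), \qquad A_a - 2\big(A_a - (A_a)^2\big) \leq \Pi_a \leq A_a + 2\big(A_a - (A_a)^2\big).
\end{equation*}
Summing the first over $a$ and taking expectations against $\ket{\psi}$ yields $A_a \otimes I \approx_{2\zeta} \Pi_a \otimes I$; summing the second yields $\sum_a \Pi_a \leq I + 2\Delta$, where $\Delta := \sum_a(A_a - (A_a)^2) \geq 0$ satisfies $\bra{\psi}\Delta\otimes I\ket{\psi} \leq \zeta$. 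So the $\Pi_a$ are honest projections, each close to $A_a$ on $\ket{\psi}$, whose sum overshoots the identity only by an operator that is negligible in expectation against $\ket{\psi}$.

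The main obstacle is the second step: $\{\Pi_a\}$ need not be a sub-measurement, because distinct $\Pi_a$ need not be orthogonal, and $\sum_a \Pi_a$ can have operator norm close to $2$ even though its expectation against $\ket{\psi}$ is at most $1+2\zeta$. This is exactly the point at which the orthonormalization argument of~\cite{KV11} is invoked. Its input is that the $\Pi_a$ are ``almost orthogonal on $\ket{\psi}$'', which one obtains by a Cauchy--Schwarz argument (in the spirit of \Cref{prop:easy-approx-from-approx-delta} and \Cref{prop:completeness-transfer-projective-P}) transporting the strong self-consistency of $A$ across $A_a \otimes I \approx_{2\zeta} \Pi_a \otimes I$: one gets $\sum_a \bra{\psi}\Pi_a \otimes \Pi_a\ket{\psi} \geq \bra{\psi}\big(\sum_a \Pi_a\big)\otimes I\ket{\psi} - O(\sqrt{\zeta})$ and hence $\sum_{a\neq b}\bra{\psi}\Pi_a \otimes \Pi_b\ket{\psi} = O(\sqrt{\zeta})$. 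Feeding this, together with the operator bound $\sum_a \Pi_a \leq I + 2\Delta$, into the~\cite{KV11} construction produces a genuine projective sub-measurement $P = \{P_a\}$ with $\Pi_a \otimes I \approx_{\gamma} P_a \otimes I$ for some $\gamma = O(\sqrt{\zeta})$. The extra square root lost here, on top of the one already lost in passing from self-consistency to the bipartite almost-orthogonality, is what degrades the overall error from $\zeta$ to $\zeta^{1/4}$.

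Finally I would assemble the estimate: combining $A_a \otimes I \approx_{2\zeta} \Pi_a \otimes I$ with $\Pi_a \otimes I \approx_{\gamma} P_a \otimes I$ via the triangle inequality \Cref{prop:triangle-inequality-for-approx_delta} gives $A_a \otimes I \approx_{4\zeta + 2\gamma} P_a \otimes I$, and since $\zeta \leq \zeta^{1/4}$ and $\gamma = O(\sqrt{\zeta}) = O(\zeta^{1/4})$ for $\zeta \leq 1$, absorbing all the constants into the prefactor gives $A_a \otimes I \approx_{100\zeta^{1/4}} P_a \otimes I$, as desired. (If one wants a fully self-contained treatment, the single genuinely imported ingredient is the~\cite{KV11} orthogonalization of an almost-orthogonal family of projections into an exact projective sub-measurement; everything else is spectral calculus together with the Cauchy--Schwarz estimates of \Cref{sec:comparing-measurements}.)
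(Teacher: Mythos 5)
Your rounding step is correct in spirit but uses the wrong threshold, and the heart of the argument is left as a black box that, as stated, is circular.

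\textbf{The rounding threshold matters.} You truncate at the fixed threshold $1/2$, which gives $A_a \otimes I \approx_{2\zeta} \Pi_a \otimes I$ and only the weak sum bound $\sum_a \Pi_a \leq I + 2\Delta$, where $\Delta = \sum_a (A_a - (A_a)^2)$ is an operator that is small \emph{in expectation against $\ket{\psi}$} but need not be small in operator norm. This is exactly the obstacle you flag, and you do not resolve it. The paper instead truncates at the $\zeta$-dependent threshold $1-\sqrt{\zeta}$ (\Cref{lem:projective-non-measurement}), which trades a worse approximation $A_a \otimes I \approx_{2\sqrt{\zeta}} R_a \otimes I$ for the crucial \emph{scalar} operator bound $\sum_a R_a \leq (1+2\sqrt{\zeta}) I$. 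That scalar bound is what the subsequent construction actually consumes; your $\sum_a \Pi_a \leq I + 2\Delta$ does not feed into it, because the later Cauchy--Schwarz steps in the paper (for example in \Cref{lem:Q-completeness,lem:q-almost-projective,lem:P-Q-approx}) need to divide out $\sum_a R_a$ against the identity uniformly, not just on $\ket{\psi}$.

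\textbf{Invoking ``the \cite{KV11} construction'' is circular here.} The theorem you are asked to prove \emph{is} the orthogonalization lemma of \cite{KV11}, restated in the two-prover/symmetric form the paper uses, and \Cref{sec:orthogonalization} is a self-contained proof of it (with explicit constants). Saying ``feed this into the \cite{KV11} construction'' does not supply the missing argument, and also does not check that the weaker input you provide (projections $\Pi_a$ with only an expectation bound on the overshoot, plus $\sum_{a\neq b}\bra{\psi}\Pi_a\otimes\Pi_b\ket{\psi} = O(\sqrt\zeta)$) meets whatever hypotheses that construction needs. The paper's route is: after rounding, perform a rank-reduction step (\Cref{lem:projective-low-rank-sum}) by discarding eigendirections with small overlap with $\ket{\psi}$, so that $\sum_a \mathrm{rank}(Q_a) \leq d$; then package the remaining basis vectors into a rectangular matrix $X = \sum_{a,i}\ket{a,i}\bra{v_{a,i}}$, take its SVD $X = U\Sigma V^\dagger$, ``snap'' the singular values to one to form $\widehat X = U I_{m\times d} V^\dagger$, and set $P_a = \widehat X_a^\dagger \widehat X_a$. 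Projectivity of $P$ then comes for free from $\widehat X \widehat X^\dagger = I$ (\Cref{lem:X-hat-squared,lem:P-projectivity}), and the bound $Q_a \otimes I \approx_{30\zeta^{1/4}} P_a\otimes I$ (\Cref{lem:P-Q-approx}) is the technical core, using \Cref{lem:squared-difference,lem:X-expression-to-Q-expression,lem:q-almost-projective,lem:sqrt-Q-completeness}. None of this is present in your sketch; without it, there is no projective $P$ and no bound.

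Finally, note that the paper first handles the measurement case (\Cref{lem:orthonormalization-main-lemma}) and then reduces the sub-measurement case to it by completing $A$ with a $\bot$ outcome; your sketch works directly with the sub-measurement, which is fine in principle but you would then need to run the whole SVD argument (or an analogue) for the sub-measurement directly, and your spectral step still needs the $\zeta$-dependent threshold to produce a usable operator bound.
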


These two results serve largely the same purpose,
and for our applications it will typically suffice to use either one.
That said, there are tradeoffs when choosing to use one or the other.
Naimark dilation is convenient because it can be applied to any family of sub-measurements
and it preserves measurement outcome probabilities exactly.
On the other hand, it requires swapping out the state $\ket{\psi}$ and measurements~$A$ and~$B$
for $\ket{\widehat{\psi}}$ and $\widehat{A}$ and $\widehat{B}$, respectively,
which can be notationally cumbersome.
Most works opt to skip adding the hats,
and instead say something to the effect of ``using Naimark, we can assume that~$A$ and~$B$ are projective."
However, there is subtlety in doing so,
which is that Naimark dilation does not necessarily preserve ``$\approx_{\delta}$" statements;
in other words, $A^x_a \ot I \approx_{\delta} I \ot B^x_a$
does not necessarily imply $\widehat{A}^x_a \ot I \approx_{\delta} I \ot \widehat{B}^x_a$.
(As an example, \Cref{ex:easy-but-long} below
provides a case in which $A^x_a \ot I \approx_{0} I \ot B^x_a$,
but $\widehat{A}^x_a \ot I \approx_{\delta} I \ot \widehat{B}^x_a$ only holds for $\delta \geq 1$.)
This can lead to trouble if the same notation is used for~$A, B$ and $\widehat{A}, \widehat{B}$,
as ``$\approx_{\delta}$" statements derived before applying Naimark might not necessarily hold after applying Naimark.
(That said, as pointed out at the end of Section~$4.4$ of~\cite{NW19},
``$\approx_{\delta}$" statements are often derived as a consequence of ``$\simeq_{\delta}$" statements.
And as Naimark preserves ``$\simeq_{\delta}$" statements,
one could use them to simply rederive any ``$\approx_{\delta}$" statements post-Naimark.)

The downsides of using the orthogonalization lemma
are that it can only be applied to measurements which are strongly self-consistent
and it introduces additional error.
As we will see below, its proof is significantly more complicated than Naimark dilation.
On the plus side, it has none of the notational baggage that Naimark brings,
and so it is more concrete to use.
In addition, it can lead to stronger results,
as it does not require introducing an auxiliary state $\ket{\mathsf{aux}}$.

In this work, we will opt to use the orthogonalization lemma rather than Naimark dilation.
However, we will include proofs of both for completeness.
We will prove \Cref{thm:naimark} in \Cref{sec:naimark}
and then \Cref{thm:orthonormalization} in \Cref{sec:orthogonalization}.

\subsection{Naimark dilation}\label{sec:naimark}

To prove \Cref{thm:naimark}, we will first need the following lemma.

\begin{lemma}
  \label{lem:naimark-helper}
  Let $A = \{A_{a}\}$ be a sub-measurement with $k$ distinct outcomes $a \in \calA$, and let $\ket{\mathsf{aux}} \in \C^{k+1}$ be any state. Then there exists a projective sub-measurement $\widehat{A} = \{\widehat{A}_{a}\}$ such that
  for each outcome~$a$,
  \[ (I \ot \bra{\mathsf{aux}} ) \cdot \widehat{A}_{a} \cdot (I \ot \ket{\mathsf{aux}}) =
    A_{a}. \]
\end{lemma}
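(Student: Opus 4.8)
The plan is to realize $\widehat A$ as a Naimark-type dilation of the sub-measurement $\{A_a\}$, adapted so that the ``reference vector'' of the dilation is exactly the prescribed state $\ket{\mathsf{aux}}$. First I would reduce to the case $\ket{\mathsf{aux}} = \ket{0}$, the first standard basis vector of $\C^{k+1}$. Indeed, pick any unitary $U_0$ on $\C^{k+1}$ with $U_0\ket{0} = \ket{\mathsf{aux}}$; if $\{\widehat A'_a\}$ has the desired property for $\ket{0}$, then $\widehat A_a := (I\ot U_0)\,\widehat A'_a\,(I\ot U_0^\dagger)$ is again a projective sub-measurement (conjugation by a unitary preserves $(\widehat A'_a)^2 = \widehat A'_a$ and $\sum_a \widehat A'_a \leq I$), and since $U_0^\dagger\ket{\mathsf{aux}} = \ket{0}$ we get $(I\ot\bra{\mathsf{aux}})\widehat A_a(I\ot\ket{\mathsf{aux}}) = (I\ot\bra{0})\widehat A'_a(I\ot\ket{0}) = A_a$. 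So it suffices to treat $\ket{\mathsf{aux}} = \ket{0}$.

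For $\ket{\mathsf{aux}} = \ket{0}$, set $A_0 := I - \sum_{a\in\calA} A_a \geq 0$, so that $\{A_a\}_{a\in\{0\}\cup\calA}$ is a genuine $(k+1)$-outcome measurement on $\calH$. Define the isometry $W\colon \calH \to \calH\ot\C^{k+1}$ by $W = \sum_{a\in\{0\}\cup\calA}\sqrt{A_a}\ot\ket{a}$; then $W^\dagger W = \sum_a A_a = I$, and $W^\dagger(I\ot\ket{a}\bra{a})W = A_a$ for every $a$. The key step is to ``rotate'' $W$ onto the trivial isometry $\ket{\phi}\mapsto\ket{\phi}\ot\ket{0}$: I would define a unitary $U$ on $\calH\ot\C^{k+1}$ by first declaring $U^\dagger(\ket{\phi}\ot\ket{0}) := W\ket{\phi}$ — an isometric identification of $\calH\ot\ket{0}$ with the subspace $W\calH$ — and then extending $U^\dagger$ to an isometry of the orthogonal complement $\calH\ot\mathrm{span}(\ket{1},\dots,\ket{k})$ onto $(W\calH)^\perp$. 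Finally set $\widehat A_a := U\,(I\ot\ket{a}\bra{a})\,U^\dagger$ for $a\in\calA$. Then each $\widehat A_a$ is a projection since $U^\dagger U = I$; $\sum_{a\in\calA}\widehat A_a = U\big(I\ot(I_{k+1} - \ket{0}\bra{0})\big)U^\dagger \leq I$ shows it is a sub-measurement; and $(I\ot\bra{0})\widehat A_a(I\ot\ket{0}) = \big(U^\dagger(I\ot\ket{0})\big)^\dagger(I\ot\ket{a}\bra{a})\big(U^\dagger(I\ot\ket{0})\big) = W^\dagger(I\ot\ket{a}\bra{a})W = A_a$, as required.

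The one genuinely nontrivial point — the step I would be most careful about — is the existence of the unitary $U$, i.e.\ that the partial isometry $\ket{\phi}\ot\ket{0}\mapsto W\ket{\phi}$ extends to a unitary on all of $\calH\ot\C^{k+1}$. This needs $\dim\big(\calH\ot\mathrm{span}(\ket{1},\dots,\ket{k})\big) = \dim(W\calH)^\perp$; for finite-dimensional $\calH$ both sides equal $k\dim\calH$, and in the separable infinite-dimensional case both are isomorphic to $\calH^{\oplus k}$, so the extension exists either way. It is worth noting that the $(k+1)$-st coordinate of $\C^{k+1}$ (rather than $\C^k$) is precisely what absorbs the defect $A_0 = I - \sum_a A_a$ of the sub-measurement while simultaneously serving as the reference direction $\ket{0}$.
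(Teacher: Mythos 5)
Your proof is correct and follows essentially the same route as the paper's: both define the isometry $\ket{\psi}\mapsto\sum_a\sqrt{A_a}\ket{\psi}\ot\ket{a}$ (augmenting the sub-measurement by the defect $I-\sum_a A_a$ in the $(k+1)$-st slot), extend it to a unitary $U$, and set $\widehat A_a$ to be the conjugate of $I\ot\ket{a}\bra{a}$ by $U$. Your preliminary reduction to $\ket{\mathsf{aux}}=\ket{0}$ is a cosmetic simplification the paper skips by letting $U$ act on the slice $\calH\ot\ket{\mathsf{aux}}$ directly, and you are more careful than the paper in noting why the isometry actually extends to a unitary (the paper simply asserts ``let $U$ be any unitary such that\ldots'').
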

\begin{proof}
We will consider an orthonormal basis for $\C^{k+1}$
consisting of a vector $\ket{a}$ for each $a \in \calA$ and the vector $\ket{\bot}$.
Let~$U$ be any unitary such that for each vector $\ket{\psi} \in \C^d$,
\begin{align*}
U \cdot (\ket{\psi} \ot \ket{\mathsf{aux}})
&= \sum_{a \in \calA} ((A_{a})^{1/2} \ket{\psi}) \ot \ket{a} + ((1 - A)^{1/2}\ket{\psi}) \ot \ket{\bot}\\
&= \Big(\sum_{a \in \calA} (A_{a})^{1/2} \ot \ket{a} + (1 - A)^{1/2} \ot \ket{\bot}\Big)
		\cdot \ket{\psi}.
\end{align*}
This implies that
\begin{equation*}
U \cdot (I \ot \ket{\mathsf{aux}})
= \sum_{a \in \calA} (A_{a})^{1/2} \ot \ket{a} + (1 - A)^{1/2} \ot \ket{\bot}.
\end{equation*}
Hence, for any $a \in \calA$,
\begin{equation*}
(I \ot \bra{a}) \cdot U \cdot (I \ot \ket{\mathsf{aux}})
= (I \ot \bra{a}) \cdot \Big(\sum_{a \in \calA} (A_{a})^{1/2} \ot \ket{a} + (1 - A)^{1/2} \ot \ket{\bot}\Big)
= (A_a)^{1/2}.
\end{equation*}
  Then the desired projective sub-measurement is
  \[ \widehat{A}_a = U^{\dagger} \cdot (I \ot \ket{a} \bra{a}) \cdot U. \]
  This is because
\begin{align*}
 (I \ot \bra{\mathsf{aux}} ) \cdot \widehat{A}_{a} \cdot (I \ot \ket{\mathsf{aux}})
&=(I \ot \bra{\mathsf{aux}} ) \cdot (U^{\dagger} \cdot (I \ot \ket{a} \bra{a}) \cdot U) \cdot (I \ot \ket{\mathsf{aux}})\\
&=(I \ot \bra{\mathsf{aux}} ) \cdot U^{\dagger} \cdot (I \ot \ket{a}) \cdot (I\ot \bra{a}) \cdot U \cdot (I \ot \ket{\mathsf{aux}})\\
&= (A_a)^{1/2} \cdot (A_a)^{1/2} = A_a.
\end{align*}
This completes the proof.
\end{proof}

Now we prove \Cref{thm:naimark}.
\begin{proof}[Proof of \Cref{thm:naimark}]
For each $x$, let $k$ be the number of outcomes in $A^x$,
and let $\ket{\mathsf{aux}_{\mathrm{A},x}}$ be a state of dimensionality $k+1$.
Let $\widetilde{A}^x$ be the sub-measurement guaranteed by \Cref{lem:naimark-helper}.
Define $\ket{\mathsf{aux}_{\mathrm{B},x}}$ and $\widetilde{B}^x$ similarly.
We define
\begin{equation*}
\ket{\widehat{\psi}} = \ket{\psi} \ot \big(\ot_{x} \ket{\mathsf{aux}_{\mathrm{A},x}}\big)
					\ot \big(\ot_{x} \ket{\mathsf{aux}_{\mathrm{B},x}}\big)
\end{equation*}
and
\begin{align*}
\widehat{A}^x_a
&= \widetilde{A}^x_a \ot \big(\ot_{z \neq x} I_{\mathsf{aux}_{\mathrm{A},z}}\big)\\
\widehat{B}^y_b
&= \widetilde{B}^y_b \ot \big(\ot_{z\neq y} I_{\mathsf{aux}_{\mathrm{B},z}}\big).
\end{align*}
Then for all $x,y,z,b$,
\begin{align*}
\bra{\widehat{\psi}} \widehat{A}^x_a \ot \widehat{B}^y_b \ket{\widehat{\psi}}
& = \bra{\psi} \ot \bra{\mathsf{aux}_{\mathrm{A},x}} \ot \bra{\mathsf{aux}_{\mathrm{B},y}}
	\cdot \widetilde{A}^x_a \ot \widetilde{B}^y_b \cdot \ket{\psi} \ot \ket{\mathsf{aux}_{\mathrm{A},x}} \ot \ket{\mathsf{aux}_{\mathrm{B},y}}\\
&= 	\bra{\psi} A^x_a \ot B^y_b \ket{\psi}.
\end{align*}
This completes the proof.
\end{proof}

\begin{example}[Naimark does not preserve ``$\approx_{\delta}$'']\label{ex:easy-but-long}
We now carry out a simple example
in which Naimark preserves ``$\simeq_\delta$'' statements without preserving ``$\approx_\delta$'' statements.
Let $\ket{\psi}$ be an arbitrary state in $\calH_{\mathrm{A}} \ot \calH_{\mathrm{B}}$,
where $\calH_{\mathrm{A}} = \calH_{\mathrm{B}} = \C^d$.
In addition, let $A = \{A_0, A_1\}$ and $B = \{B_0, B_1\}$ be the two-outcome measurements
in which $A_0 = A_1 = B_0 = B_1 = \frac{1}{2} \cdot I_{d \times d}$.
Then
\begin{align*}
\sum_{a \neq b} \bra{\psi} A_a \ot B_b\ket{\psi}
&= \bra{\psi} A_0 \ot B_1\ket{\psi} + \bra{\psi} A_1 \ot B_0\ket{\psi}\\
&= \frac{1}{4} \cdot \bra{\psi} I \ot I \ket{\psi} + \frac{1}{4} \cdot \bra{\psi} I \ot I \ket{\psi}
= \frac{1}{2},
\end{align*}
and so $A_a \ot I \simeq_{1/2} I \ot A_a$.
In addition,
\begin{equation*}
\sum_a \Vert (A_a \ot I - I \ot B_a) \ket{\psi}\Vert^2
= \sum_a \Big\Vert \Big(\frac{1}{2} \cdot I \ot I - \frac{1}{2} \cdot I \ot I\Big) \ket{\psi}\Big\Vert^2
= 0,
\end{equation*}
and so $A_a \ot I \approx_0 I \ot B_a$.

Now we carry out the steps of Naimark dilation.
Because~$A$ and~$B$ are two-outcome,
we will set $\calH_{\mathrm{A}_{\mathsf{aux}}} = \calH_{\mathrm{B}_{\mathsf{aux}}} = \C^2$,
spanned by the basis vectors $\ket{0}$ and $\ket{1}$.
(\Cref{lem:naimark-helper} actually asks that the auxiliary space have dimension $2+1 = 3$,
where the third dimension is spanned by the basis vector $\ket{\bot}$.
However, this is unnecessary for this example because~$A$ and~$B$
are measurements rather than sub-measurements.)
We will choose our two local auxiliary states
to be
\begin{equation*}
\ket{\mathsf{aux}_{\mathrm{A}}} = \ket{\mathsf{aux}_{\mathrm{B}}}
	= \ket{+} = \frac{1}{\sqrt{2}}\ket{0} + \frac{1}{\sqrt{2}}\ket{1}.
\end{equation*}
We first construct $\widehat{A}$ by following the proof of \Cref{lem:naimark-helper}.
To begin, it asks for~$U$ to be a unitary such that for any $\ket{\phi}$ in $\calH_{\mathrm{A}}$,
\begin{align*}
U \cdot (\ket{\phi} \ot \ket{\mathsf{aux}_{\mathrm{A}}})
&=U \cdot (\ket{\phi} \ot \ket{+})\\
&=  ((A_{0})^{1/2} \ket{\phi}) \ot \ket{0} + ((A_{1})^{1/2} \ket{\phi}) \ot \ket{1}\\
&= \Big(\Big(\frac{1}{2}\cdot I\Big)^{1/2} \ket{\phi}\Big) \ot \ket{0}
	+ \Big(\Big(\frac{1}{2}\cdot I\Big)^{1/2} \ket{\phi}\Big) \ot \ket{1}\\
&=\frac{1}{\sqrt{2}} \ket{\phi} \ot \ket{0} + \frac{1}{\sqrt{2}} \ket{\phi} \ot \ket{1}\\
&= \ket{\phi} \ot \ket{+}.
\end{align*}
As a result, we can simply take~$U$ to be the identity matrix.
Thus,
\begin{equation*}
\widehat{A}_a = U^{\dagger} \cdot (I \ot \ket{a} \bra{a}) \cdot U = I \ot \ket{a}\bra{a}.
\end{equation*}
By a similar argument, we can take $\widehat{B}_a = I \ot \ket{a}\bra{a}$ for $a \in \{0, 1\}$ as well.

Now we set $\ket{\widehat{\psi}} = \ket{\psi}\ot \ket{\mathsf{aux}_{\mathrm{A}}} \ot \ket{\mathsf{aux}_{\mathrm{B}}}$.
By \Cref{thm:naimark},
we already know that $\widehat{A}_a \ot I \approx_0 I \ot \widehat{B}_a$
on state $\ket{\widehat{\phi}}$ because this holds for the un-hatted state and measurements.
On the other hand,
we will now show that $\widehat{A}_a \ot I \approx_{\delta} I \ot \widehat{B}_a$
only for $\delta \geq 1$, in contrast to the un-hatted case where it holds for $\delta = 0$.
To see this, we calculate as follows.
\begin{align}
&\sum_a \Vert (\widehat{A}_a \ot I_{\mathrm{B}, \mathsf{aux}_{\mathrm{B}}}
	- I_{\mathrm{A}, \mathsf{aux}_{\mathrm{A}}} \ot \widehat{B}_a) \ket{\widehat{\psi}}\Vert^2\nonumber\\
 ={}& \sum_a \Vert ((I_{\mathrm{A}} \ot \ket{a}\bra{a}) \ot I_{\mathrm{B}, \mathsf{aux}_{\mathrm{B}}}
	- I_{\mathrm{A}, \mathsf{aux}_{\mathrm{A}}} \ot (I_{\mathrm{B}} \ot \ket{a}\bra{a}))
		\ket{\psi} \ot \ket{+} \ot \ket{+}\Vert^2\nonumber\\
={}& \sum_a \Big\Vert \Big(\frac{1}{\sqrt{2}} \ket{\psi} \ot \ket{a}\ot\ket{+}
			- \frac{1}{\sqrt{2}} \ket{\psi} \ot \ket{+} \ot \ket{a}\Big) \Big\Vert^2\nonumber\\
={}& \sum_a \frac{1}{2} \cdot \braket{\psi \mid \psi} 
	\cdot (\bra{a} \ot \bra{+}- \bra{+} \ot \bra{a}) \cdot (\ket{a} \ot \ket{+}- \ket{+} \ot \ket{a}).\label{eq:awefea}
\end{align}
For each~$a$, we have that
\begin{align*}
&(\bra{a} \ot \bra{+}- \bra{+} \ot \bra{a}) \cdot (\ket{a} \ot \ket{+}- \ket{+} \ot \ket{a})\\
={}& 2 - \braket{a \mid +} \cdot \braket{+ \mid a} - \braket{+ \mid a} \cdot \braket{a \mid +}\\
={}& 2 - \frac{1}{\sqrt{2}}\cdot \frac{1}{\sqrt{2}} - \frac{1}{\sqrt{2}} \cdot \frac{1}{\sqrt{2}}\\
={}& 1.
\end{align*}
Plugging this into \Cref{eq:awefea}, we arrive at our bound.
\end{example}

\subsection{Orthogonalization lemma}\label{sec:orthogonalization}

To show \Cref{thm:orthonormalization},
we first show it for the case when~$A$ is a measurement, rather than a sub-measurement.
Our proof of this case will even work in the more general setting
when the strategy is not assumed to be symmetric,
meaning that~$\ket{\psi}$ is not necessarily permutation-invariant and Player~$\mathrm{B}$'s measurement~$B$ may not be equal to Player~$\mathrm{A}$'s measurement.
This is the content of the following lemma, whose proof we defer till later. 

\begin{lemma}[Orthogonalization lemma for measurements]\label{lem:orthonormalization-main-lemma}
Let $\ket{\psi}$ be a state (which is not necessarily permutation-invariant),
and let $A = \{A_a\}$ and $B = \{B_a\}$ be measurements such that
\begin{equation*}
A_a \ot I \simeq_{\zeta} I \ot B_a
\end{equation*}
on state $\ket{\psi}$.
Then there exists a projective sub-measurement $P = \{P_a\}$ such that
\begin{equation*}
A_a \otimes I \approx_{84 \zeta^{1/4}} P_a \otimes I.
\end{equation*}
\end{lemma}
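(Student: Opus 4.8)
The plan is to round the POVM $A$ to a nearby projective sub-measurement by truncating each $A_a$ to the top of its spectrum, after first using the consistency hypothesis to show that $A$ is approximately projective on $\ket\psi$. I expect the genuine difficulty to lie in the very last step: the truncated operators come extremely close to forming a sub-measurement but not exactly, and repairing this cleanly---without losing a factor depending on the number of outcomes---is the delicate point, and is plausibly the source of the fourth-root loss in the error.

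First I would upgrade the hypothesis via \Cref{prop:simeq-to-approx}: since $A$ and $B$ are measurements, $A_a \ot I \simeq_\zeta I \ot B_a$ gives $A_a \ot I \approx_{2\zeta} I \ot B_a$ on $\ket\psi$. Next I would establish that $\sum_a \bra\psi (A_a - A_a^2) \ot I \ket\psi \le \eta$ for $\eta := \zeta + \sqrt{2\zeta}$. This combines two facts: (i) $\sum_a \bra\psi A_a \ot B_a \ket\psi \ge 1 - \zeta$, which is exactly \Cref{prop:simeq-for-measurements} applied to the hypothesis; and (ii) $\sum_a \bra\psi A_a^2 \ot I \ket\psi \approx_{\sqrt{2\zeta}} \sum_a \bra\psi A_a \ot B_a \ket\psi$, which follows from \Cref{prop:easy-approx-from-approx-delta} taking $C = \{A_a \ot I\}$ (a sub-measurement, since $\sum_a A_a^2 \le \sum_a A_a = I$) together with the $\approx_{2\zeta}$ relation just derived. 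Combining (i) and (ii) yields $\sum_a \bra\psi A_a^2 \ot I \ket\psi \ge 1 - \eta$, and since $0 \le A_a^2 \le A_a$ this rearranges to the stated approximate projectivity. Note $\eta = O(\sqrt\zeta)$ for $\zeta$ small, which is the only regime in which $84\zeta^{1/4}$ is a nontrivial bound.

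Then I would set $\theta := \sqrt\eta$ and let $P_a$ be the spectral projection of $A_a$ onto $(1-\theta, 1]$. Each $P_a$ is a projection, and since $A_a \ge (1-\theta) P_a$ we get $P_a \le (1-\theta)^{-1} A_a$, hence $\sum_a P_a \le (1-\theta)^{-1} I$. To bound $\sum_a \Vert (A_a - P_a) \ot I \ket\psi \Vert^2 = \sum_a \bra\psi (A_a - P_a)^2 \ot I \ket\psi$, I would split the spectrum of each $A_a$ into $[0,\theta]$, $[\theta, 1-\theta]$, and $(1-\theta,1]$. On the bottom piece $(A_a - P_a)^2 = A_a^2 \le \theta A_a$, contributing at most $\theta$ after summing (using $\sum_a A_a = I$); on the top piece $(A_a - P_a)^2 = (I - A_a)^2 \le \theta(I - A_a)$, contributing $O(\theta)$ after summing (using $\sum_a P_a \le (1-\theta)^{-1} I$); and on the middle piece $A_a(I - A_a) \ge \theta/2$, so approximate projectivity bounds the total $\ket\psi$-weight on these middle subspaces by $2\eta/\theta$, which contributes at most $2\eta/\theta$. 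The choice $\theta = \sqrt\eta$ balances the $O(\theta)$ and $2\eta/\theta$ terms, so $A_a \ot I \approx_{O(\sqrt\eta)} P_a \ot I$, i.e.\ $A_a \ot I \approx_{O(\zeta^{1/4})} P_a \ot I$.

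The remaining issue---the crux---is that $\{P_a\}$ is only an approximate sub-measurement: $\sum_a P_a \le (1-\theta)^{-1} I$ rather than $\le I$. This bound does force $\Vert P_a P_b \Vert \le \sqrt{\theta/(1-\theta)}$ for $a \ne b$ (from $P_a P_b P_a \le P_a(\sum_c P_c) P_a - P_a \le \theta(1-\theta)^{-1} P_a$), so the ranges of the $P_a$ are pairwise nearly orthogonal; what is still needed is to replace them by a genuinely orthogonal projective sub-measurement at an $O(\sqrt\theta)$ cost that does not depend on the number of outcomes, and then to propagate all constants through the argument to reach $84\zeta^{1/4}$. Everything up to this correction is routine Cauchy--Schwarz bookkeeping; I expect the correction to require the most care, and it is the step whose precise form I am least confident reconstructs exactly as sketched here.
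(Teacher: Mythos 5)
Your argument up through the truncation step is essentially the paper's Lemma \ref{lem:projective-non-measurement}: you correctly reduce the hypothesis to $\sum_a \bra{\psi}(A_a - A_a^2) \otimes I \ket{\psi} = O(\sqrt{\zeta})$ (the paper actually gets the tighter bound $2\zeta$ by a single Cauchy--Schwarz followed by squaring, rather than your two-step argument, but this is cosmetic), and your spectral truncation at threshold $\theta = \sqrt{\eta}$ matches the paper's choice $\delta = \sqrt{\zeta}$, giving projectors $\{P_a\}$ with $A_a \otimes I \approx_{O(\zeta^{1/4})} P_a \otimes I$ and $\sum_a P_a \leq (1 + O(\sqrt{\zeta})) I$.

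However, you have correctly identified where the genuine content of the lemma lives, and you have not filled it in. Pairwise near-orthogonality $\Vert P_a P_b \Vert = O(\theta^{1/2})$ does \emph{not} by itself let you orthogonalize at cost independent of the number of outcomes: with $N$ rank-one projectors one can arrange each pairwise product to be $O(\theta^{1/2})$ while $\sum_a P_a$ has an eigenvalue as large as $1 + \Omega(N\theta^{1/2})$, so any orthogonal replacement must move some $P_a$ by a lot. The paper's fix has two ingredients you are missing. First, a \emph{rank-reduction step} (Lemma \ref{lem:projective-low-rank-sum}): because $\sum_a \mathrm{rank}(P_a) = \mathrm{Tr}\bigl(\sum_a P_a\bigr) \leq (1+2\sqrt{\zeta})d$ where $d$ is the ambient dimension, one can discard the roughly $2\sqrt{\zeta}\,d$ range vectors whose overlap with the reduced state of $\ket{\psi}$ is smallest, bringing the total rank down to exactly $d$ at an $O(\sqrt{\zeta})$ cost in the state-dependent distance; this is the place where the state $\ket{\psi}$, rather than an operator-norm estimate, is used to control the deficiency. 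Second, an \emph{SVD-based rounding}: with the ranges written as a single $m \times d$ matrix $X$ (stacking orthonormal bases of the ranges of the $Q_a$'s, with $m \leq d$ after rank reduction), one takes the SVD $X = U \Sigma V^{\dagger}$, replaces $\Sigma$ by $I_{m \times d}$ to get $\widehat{X} = U I_{m\times d} V^{\dagger}$, and defines $P_a = \widehat{X}^{\dagger} T_a \widehat{X}$ where $T_a$ projects onto the block of rows coming from outcome $a$. Projectivity of $\{P_a\}$ is then automatic because $\widehat{X}\widehat{X}^{\dagger} = I_{m\times m}$, and the distance to the truncated $Q_a$'s is controlled by the Schatten-type bound $(X - \widehat{X})(X - \widehat{X})^{\dagger} \leq (XX^{\dagger} - I)^2$ (Lemma \ref{lem:squared-difference}) together with the approximate projectivity of $Q$ (Lemma \ref{lem:q-almost-projective}). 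This SVD rounding is the step you are least confident about for good reason: it is not a Cauchy--Schwarz manipulation, it is a genuine matrix-analytic construction, and without the prior rank-reduction the matrix $X$ would not have the right shape for it to work.
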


(Note that when $\ket{\psi}$ is permutation-invariant and $B = A$, the condition $A_a \ot I \simeq_{\zeta} I \ot B_a$
is equivalent to~$A_a$ being $\zeta$-strongly self-consistent, by \Cref{prop:other-two-notions-of-self-consistency}.
This is because~$A$ is a measurement.)

\ignore{
\begin{lemma}[Orthogonalization lemma for measurements]\label{lem:orthonormalization-main-lemma}
Let $\ket{\psi}$ be a permutation-invariant state,
and let $A = \{A_a\}$ be a measurement with strong self-consistency
\begin{equation*}
\sum_a \bra{\psi} A_a \otimes A_a \ket{\psi} \geq 1 -\zeta.
\end{equation*}
Then there exists a projective sub-measurement $P = \{P_a\}$ such that
\begin{equation*}
A_a \otimes I \approx_{76 \zeta^{1/4}} P_a \otimes I.
\end{equation*}
\end{lemma}
}

We now prove \Cref{thm:orthonormalization} by reducing the general (sub-measurement) case
to the case when~$A$ is a measurement.

\begin{proof}[Proof of \Cref{thm:orthonormalization} assuming \Cref{lem:orthonormalization-main-lemma}]
Let $A = \{A_a\}$ be a sub-measurement with outcomes $a \in \calA$ whose strong self-consistency is
\begin{equation*}
\sum_a \bra{\psi} A_a \otimes A_a \ket{\psi} \geq \sum_a \bra{\psi} A_a \otimes I \ket{\psi} -\zeta.
\end{equation*}
Note that
\begin{equation*}
\bra{\psi} A \otimes A \ket{\psi}
= \sum_{a, b} \bra{\psi} A_a \otimes A_b \ket{\psi}
\geq \sum_{a} \bra{\psi} A_a \otimes A_a \ket{\psi}
\geq  \bra{\psi} A \otimes I \ket{\psi} -\zeta.
\end{equation*}
Rearranging,
\begin{align*}
\bra{\psi} A \otimes (I - A)\ket{\psi}
&= \bra{\psi} A \otimes I \ket{\psi} - \bra{\psi} A \otimes A \ket{\psi}\\
&\leq (\bra{\psi} A \otimes A \ket{\psi} + \zeta) - \bra{\psi} A \otimes A \ket{\psi}
 = \zeta.
\end{align*}
As a result,
\begin{align*}
\bra{\psi} (I-A) \otimes (I-A) \ket{\psi}
&= \bra{\psi} (I-A) \otimes I \ket{\psi} - \bra{\psi} (I-A) \otimes A \ket{\psi}\\
&\geq \bra{\psi} (I-A) \otimes I \ket{\psi} - \zeta.
\end{align*}

Let $\widehat{A}$ be the POVM measurement with outcomes in $\widehat{\calA} = \calA \cup \{\bot\}$ defined as
\begin{equation*}
\widehat{A}_a
= \left\{\begin{array}{cl}
		A_a & \text{if $a \in \calA$},\\
		(I-A) & \text{if $a = \bot$.}
		\end{array}\right.
\end{equation*}
Then the strong self-consistency of $\widehat{A}$ is
\begin{align*}
\sum_{a \in \widehat{\calA}} \bra{\psi} \widehat{A}_a \otimes \widehat{A}_a \ket{\psi}
& = \sum_{a \in \calA} \bra{\psi} A_a \otimes A_a \ket{\psi}
	+ \bra{\psi} \widehat{A}_{\bot} \otimes \widehat{A}_{\bot} \ket{\psi}\\
& \geq \Big(\sum_{a \in \calA} \bra{\psi} A_a \otimes I \ket{\psi} - \zeta\Big)
	+ \bra{\psi} (I-A) \otimes (I-A) \ket{\psi}\\
& \geq \Big( \bra{\psi} A \otimes I \ket{\psi} - \zeta\Big)
	+ \Big(\bra{\psi} (I-A) \otimes I \ket{\psi} - \zeta\Big)\\
& = 1 - 2\zeta.
\end{align*}
Because $\widehat{A}$ is a measurement,
\Cref{prop:other-two-notions-of-self-consistency} states that this is equivalent to
\begin{equation*}
\widehat{A}_a \ot I \simeq_{2\zeta} I \ot \widehat{A}_a.
\end{equation*}
As a result, \Cref{lem:orthonormalization-main-lemma} implies the existence of a projective sub-measurement
$\widehat{P} = \{\widehat{P}_a\}_{a \in \widehat{\calA}}$ such that 
\begin{equation*}
\widehat{A}_a \ot I \approx_{84\cdot(2\zeta)^{1/4}} \widehat{P}_a \ot I.
\end{equation*}
If we define the projective sub-measurement $P = \{P_a\}_{a \in \calA}$
by $P_a = \widehat{P}_a$ for all $a \in \calA$, then
\begin{equation*}
\sum_{a \in \calA} \Vert (A_a - P_a) \ot I \ket{\psi} \Vert^2
\leq \sum_{a \in \widehat{\calA}} \Vert (\widehat{A}_a - \widehat{P}_a) \ot I \ket{\psi} \Vert^2
\leq 84 \cdot (2\zeta)^{1/4}.
\end{equation*}
Thus, $A_a \ot I \approx_{84 \cdot (2\zeta)^{1/4}} P_a \ot I$.
We conclude the proof by noting that $84 \cdot (2\zeta)^{1/4} \leq 100 \cdot \zeta^{1/4}$
because $84 \cdot (2)^{1/4} \approx 99.89\leq100$.
\end{proof}

Now we prove \Cref{lem:orthonormalization-main-lemma}.

\begin{proof}[Proof of~\Cref{lem:orthonormalization-main-lemma}]
We note that the lemma as stated is trivial when $\zeta > 1/4$.
As a result, we will assume that
\begin{equation}\label{eq:assumption-on-zeta}
\zeta \leq 1/4.
\end{equation}
Let $A = \{A_a\}$ and~$B = \{B_a\}$ be POVM measurements such that
\begin{equation*}
A_a \ot I \simeq_{\zeta} I \ot B_a.
\end{equation*}
\ignore{
By \Cref{prop:two-notions-of-self-consistency},
\begin{equation}\label{eq:A-approx-delta}
A_a \otimes I\approx_{2\zeta} I\otimes A_a.
\end{equation}
}
By \Cref{prop:simeq-for-measurements}, this implies that
\begin{equation*}
\sum_a \bra{\psi}A_a \ot B_a \ket{\psi} \geq 1 - \zeta.
\end{equation*}
Applying Cauchy-Schwarz, we have
\begin{align*}
\sum_a \bra{\psi} A_a \otimes B_a \ket{\psi}
&= \sum_a \bra{\psi} (A_a \ot I) \cdot (I \ot B_a) \ket{\psi}\nonumber\\
&\leq \sqrt{\sum_a \bra{\psi} (A_a)^2 \ot I \ket{\psi}} \cdot \sqrt{\sum_a \bra{\psi} I \ot (B_a)^2 \ket{\psi}}\nonumber\\
& \leq \sqrt{\sum_a \bra{\psi} (A_a)^2 \ot I \ket{\psi}} \cdot 1.
\end{align*}
Taking the square of both sides,
\begin{equation*}
\sum_a \bra{\psi} (A_a)^2 \ot I \ket{\psi} \geq \Big(\sum_a \bra{\psi} A_a \otimes B_a \ket{\psi}\Big)^2 \geq (1-\zeta)^2
\geq 1 - 2\zeta
= \sum_a \bra{\psi} A_a \ot I \ket{\psi} - 2\zeta. \tag{because~$A$ is a measurement}
\end{equation*}
\ignore{
\begin{align}
\sum_a \bra{\psi} A_a \otimes A_a \ket{\psi}
&= \sum_a \bra{\psi} (A_a \ot I) \cdot (I \ot A_a) \ket{\psi}\nonumber\\
&\leq \sqrt{\sum_a \bra{\psi} (A_a)^2 \ot I \ket{\psi}} \cdot \sqrt{\sum_a \bra{\psi} I \ot (A_a)^2 \ket{\psi}}\nonumber\\
& = \sum_a \bra{\psi} (A_a)^2 \ot I \ket{\psi}.\label{eq:As-on-same-side}
\end{align}
\ignore{We claim that
\begin{equation}\label{eq:As-on-same-side}
\sum_a \bra{\psi} A_a \otimes A_a \ket{\psi}
\approx_{\sqrt{2\zeta}} \sum_a \bra{\psi} (A_a)^2 \otimes I \ket{\psi}.
\end{equation}
To show this, we bound the magnitude of the difference.
\begin{align*}
&\Big|\sum_a \bra{\psi} (A_a \otimes I) \cdot (A_a \otimes I - I \otimes A_a) \ket{\psi}\Big|\\
\leq~&
	\sqrt{\sum_a \bra{\psi} (A_a)^2 \otimes I \ket{\psi}}
	\cdot \sqrt{\sum_a \bra{\psi}(A_a \otimes I - I \otimes A_a)^2 \ket{\psi}}\\
\leq~& 1 \cdot \sqrt{2\zeta}. \tag{by \Cref{eq:A-approx-delta}}
\end{align*}}
As a result,
\begin{align*}
\sum_a \bra{\psi} (A_a)^2 \otimes I \ket{\psi}
&\geq \sum_a \bra{\psi} A_a \otimes A_a \ket{\psi}\tag{by \Cref{eq:As-on-same-side}}\\
&\geq \sum_a \bra{\psi} A_a \otimes I \ket{\psi} -\zeta. \tag{by self-consistency of~$A$}
\end{align*}
}
Rearranging,
\begin{equation}\label{eq:A-looks-projective}
\sum_a \bra{\psi} (A_a - (A_a)^2) \otimes I \ket{\psi}
\leq 2\zeta.
\end{equation}
This is all we need the~$B$ measurement for; henceforth, we will derive consequences of~\Cref{eq:A-looks-projective}.

In our next lemma, we convert each~$A_a$ to a projective matrix~$R_a$
by rounding each of $A_a$'s large eigenvalues to~$1$
and small eigenvalues to~$0$.

\begin{lemma}[Rounding to projectors]\label{lem:projective-non-measurement}
There exists a set of projective matrices $\{R_a\}$ such that
\begin{equation*}
A_a \ot I \approx_{2\sqrt{\zeta}} R_a \ot I.
\end{equation*}
and
\begin{equation*}
R:= \sum_a R_a \leq (1+2\sqrt{\zeta}) \cdot I.
\end{equation*}
\end{lemma}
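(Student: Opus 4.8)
The plan is to obtain $R_a$ by \emph{spectral rounding} of $A_a$. Since $A=\{A_a\}$ is a measurement, each $A_a$ is positive semidefinite with $A_a \leq I$, so its eigenvalues lie in $[0,1]$. Fix a threshold $t \in [\tfrac12,1]$, to be chosen, and let $R_a$ be the spectral projector of $A_a$ onto the eigenspaces with eigenvalue at least $t$: if $A_a = \sum_j \lambda_j \ket{v_j}\bra{v_j}$ is a spectral decomposition, $R_a := \sum_{j:\lambda_j \geq t}\ket{v_j}\bra{v_j}$. Each $R_a$ is a projector by construction. The value of $t$ governs a tradeoff: the approximation $A_a \ot I \approx R_a \ot I$ degrades as $t \to 1$, while the bound on $\sum_a R_a$ improves. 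I will take $t = \tfrac{1}{1+2\sqrt{\zeta}}$, which lies in $[\tfrac12,1]$ precisely because $\zeta \leq \tfrac14$ by \Cref{eq:assumption-on-zeta}.

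For the closeness, the key input is an elementary scalar inequality: for every $\lambda \in [0,1]$, writing $r = 1$ if $\lambda \geq t$ and $r = 0$ otherwise, one has $(\lambda - r)^2 \leq \tfrac{t}{1-t}\,\lambda(1-\lambda)$. Indeed, in the case $\lambda < t$ this is $\lambda^2 \leq \tfrac{t}{1-t}\lambda(1-\lambda)$, i.e.\ $\lambda(1-t) \leq t(1-\lambda)$, i.e.\ $\lambda \leq t$; and in the case $\lambda \geq t$ it is $(1-\lambda)^2 \leq \tfrac{t}{1-t}\lambda(1-\lambda)$, i.e.\ $(1-\lambda)(1-t) \leq t\lambda$, i.e.\ $1 - t \leq \lambda$, which holds since $\lambda \geq t \geq 1-t$. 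Applying this on each eigenspace of $A_a$ yields the operator inequality $(A_a - R_a)^2 \leq \tfrac{t}{1-t}(A_a - A_a^2)$. Summing over $a$ and invoking $\sum_a \bra{\psi}(A_a - A_a^2)\ot I\ket{\psi} \leq 2\zeta$ from \Cref{eq:A-looks-projective},
\[
\sum_a \bigl\Vert (A_a - R_a)\ot I \ket{\psi} \bigr\Vert^2
= \sum_a \bra{\psi} (A_a - R_a)^2 \ot I \ket{\psi}
\leq \tfrac{t}{1-t}\cdot 2\zeta = \sqrt{\zeta} \leq 2\sqrt{\zeta},
\]
since $\tfrac{t}{1-t} = \tfrac{1}{2\sqrt{\zeta}}$ for this choice of $t$; this is exactly $A_a \ot I \approx_{2\sqrt{\zeta}} R_a \ot I$. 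For the bound on $\sum_a R_a$, the key input is $R_a \leq \tfrac1t A_a$ as operators (on each eigenspace, $1 \leq \lambda/t$ when $\lambda \geq t$, and $0 \leq \lambda/t$ otherwise); summing and using that $A$ is a measurement gives $\sum_a R_a \leq \tfrac1t \sum_a A_a = \tfrac1t I = (1+2\sqrt{\zeta})\,I$.

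I do not expect a genuine obstacle; the one point requiring care is the $\zeta$-dependence of the threshold. A naive threshold $t = \tfrac12$ (rounding each eigenvalue to the nearer of $0$ and $1$) gives the clean bound $\sum_a R_a \leq 2I$ but nothing better, which is too weak when $\zeta$ is small, so one is forced to push $t$ above $\tfrac12$ as above. The remaining edge case is $\zeta = 0$, where $\tfrac{t}{1-t}$ is undefined: there one takes $t = 1$ instead, so $R_a$ is the projector onto the $1$-eigenspace of $A_a$, giving $R_a \leq A_a$ and hence $\sum_a R_a \leq I$, while $\sum_a \bra{\psi}(A_a - A_a^2)\ot I \ket{\psi} = 0$ forces $(A_a - A_a^2)^{1/2}\ot I \ket{\psi} = 0$, so $\ket{\psi}$ is supported (on the first tensor factor) on the $0$- and $1$-eigenspaces of each $A_a$, on which $A_a$ and $R_a$ coincide, yielding $A_a \ot I \approx_0 R_a \ot I$.
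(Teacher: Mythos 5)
Your proof is correct and takes essentially the same approach as the paper: spectral rounding of each $A_a$ at a $\zeta$-dependent threshold, a scalar inequality bounding $(\lambda-r)^2$ by a multiple of $\lambda(1-\lambda)$, the input \eqref{eq:A-looks-projective}, and $R_a \leq (\text{const})\cdot A_a$ to bound the sum. The only difference is cosmetic: the paper parameterizes the threshold as $1-\delta$ with $\delta=\sqrt{\zeta}$ and uses the bound $(x-\mathsf{trunc}_\delta(x))^2\leq\tfrac1\delta(x-x^2)$, whereas you set $t=1/(1+2\sqrt{\zeta})$ and use the slightly sharper $(\lambda-r)^2\leq\tfrac{t}{1-t}\lambda(1-\lambda)$, which in fact gives $\sqrt{\zeta}$ rather than $2\sqrt{\zeta}$ for the closeness and avoids the paper's extra step of bounding $\tfrac{1}{1-\delta}\leq 1+2\delta$.
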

\begin{proof}
For each~$a$, we write the eigendecomposition of $A_a$ as follows: 
\begin{equation*}
A_a = \sum_i \lambda_{a, i} \cdot\ket{u_{a, i}} \bra{u_{a, i}}.
\end{equation*}
Let
\begin{equation}\label{eq:bound-on-delta}
0 < \delta \leq 1/2
\end{equation}
be a number to be decided later.
Let $\mathsf{trunc}_\delta:[0, 1] \rightarrow \{0, 1\}$ be the truncation function defined as
\begin{equation*}
\mathsf{trunc}_\delta(x) =
\left\{\begin{array}{rl}
	1 & \text{if } x \geq 1- \delta,\\
	0 & \text{otherwise.}
	\end{array}\right.
\end{equation*}
Then for each~$a$ we define the matrix $R_a$
\begin{equation*}
R_a = \mathsf{trunc}_\delta(A_a) = \sum_i \mathsf{trunc}_\delta(\lambda_{a, i}) \cdot \ket{u_{a, i}} \bra{u_{a, i}}.
\end{equation*}
To analyze this, we will require the following technical lemma.
\begin{lemma}\label{lem:trunc-inequality}
For any $x \in [0, 1]$,
\begin{equation*}
(x - \mathsf{trunc}_\delta(x))^2 \leq \frac{1}{\delta} \cdot (x - x^2).
\end{equation*}
\end{lemma}
\begin{proof}
This is proved by case analysis.
First, suppose that $x \geq 1 - \delta$.
This implies that $\mathsf{trunc}_\delta(x) = 1$.
In addition, because $\delta \leq 1/2$ by \Cref{eq:bound-on-delta},
$
x \geq 1 - \delta \geq 1/2 \geq \delta.
$
Thus,
\begin{align*}
(x - \mathsf{trunc}_\delta(x))^2
& = (1-x)^2\\
 &\leq (1-x)\\
&\leq (1-x) \cdot \frac{x}{\delta} \tag{because $x \geq \delta$}\\
&= \frac{1}{\delta} \cdot (x - x^2).
\end{align*}
Next, suppose that $x < 1 -\delta$.
This implies that $\mathsf{trunc}_\delta(x) = 0$.
Thus,
\begin{align*}
(x - \mathsf{trunc}_\delta(x))^2
& = x^2\\
 &\leq x\\
&\leq x \cdot \frac{(1-x)}{\delta} \tag{because $x \leq 1- \delta$}\\
&= \frac{1}{\delta} \cdot (x - x^2).
\end{align*}
This concludes the proof.
\end{proof}

As a result, for each~$a$, \Cref{lem:trunc-inequality} implies that
\begin{equation*}
(A_a - R_a)^2 = (A_a - \mathsf{trunc}_\delta(A_a))^2 \leq \frac{1}{\delta} \cdot (A_a - (A_a)^2).
\end{equation*}
Thus,
\begin{align*}
\sum_a \Vert (A_a - R_a) \ot I \ket{\psi}\Vert^2
= \sum_a \bra{\psi} (A_a - R_a)^2 \ot I\ket{\psi}
&\leq \frac{1}{\delta} \cdot \sum_a \bra{\psi} (A_a - (A_a)^2) \ot I \ket{\psi}\\
&\leq \frac{1}{\delta} \cdot 2\zeta. \tag{by \Cref{eq:A-looks-projective}}
\end{align*}
This implies that $A_a \otimes I \approx_{2\zeta/\delta} R_a \otimes I$.

Next, for each $x \in [0, 1]$, it follows from the definition of $\mathsf{trunc}_\delta$ that
\begin{equation*}
\mathsf{trunc}_\delta(x) \leq \left(\frac{1}{1-\delta}\right) \cdot x.
\end{equation*}
Thus, for each~$a$
\begin{equation*}
R_a = \mathsf{trunc}_\delta(A_a) \leq \left(\frac{1}{1-\delta}\right) \cdot A_a.
\end{equation*}
Summing over all~$a$,
\begin{equation*}
R = \sum_a R_a \leq \left(\frac{1}{1-\delta}\right) \cdot \sum_a A_a = \left(\frac{1}{1-\delta}\right) \cdot A
= \left(\frac{1}{1-\delta}\right) \cdot I,
\end{equation*}
because~$A$ is a measurement.
We note that
\begin{equation*}
\frac{1}{1-\delta}
= \frac{1}{1-\delta} \cdot \frac{1 + 2\delta}{1 + 2\delta}
= \frac{1 + 2\delta}{1 + \delta - 2\delta^2}
\leq 1 + 2\delta,
\end{equation*}
because
\begin{equation*}
1 + \delta - 2\delta^2 = 1 + \delta\cdot(1-2\delta)\geq 1
\end{equation*}
when $\delta \leq 1/2$,
which we assumed in \Cref{eq:bound-on-delta}.
Hence,
\begin{equation*}
R \leq (1+ 2\delta) \cdot I.
\end{equation*}

The lemma now follows by setting $\delta = \sqrt{\zeta}$. Note that we required that $\delta$ be at most $1/2$,
which follows from our assumption that $\zeta \leq 1/4$ from \Cref{eq:assumption-on-zeta}.
\ignore{
Finally,
\begin{align*}
&\sum_a R_a (R - R_a) R_a\\
=~&\sum_a R_a R R_a - \sum_a (R_a)^3\\
\leq ~&\left(\frac{1}{1-\delta}\right) \cdot \sum_a R_a A R_a - \sum_a (R_a)^3 \tag{by XXX}\\
\leq ~&\left(\frac{1}{1-\delta}\right) \cdot \sum_a  (R_a)^2 - \sum_a (R_a)^3 \tag{because~$A$ is a sub-measurement}\\
= ~&\left(\frac{1}{1-\delta}\right) \cdot \sum_a  R_a - \sum_a R_a \tag{because the $R_a$'s are projectors}\\
= ~&\left(\frac{1}{1-\delta}\right) \cdot R - R\\
= ~&\left(\frac{1}{1-\delta} -  1\right) \cdot R\\
\leq~& 2\delta \cdot R. \tag{because $\delta \leq 1/2$}
\end{align*}
The lemma now follows by setting $\delta = XXX$. Note that we required that $\delta$ be at most $1/2$,
which follows from our assumption on $\zeta$ from XXX.
}
\end{proof}

Write $d$ for the dimension of the~$A$ matrices.
If the~$\{R_a\}$ matrices from \Cref{lem:projective-non-measurement} formed a projective sub-measurement,
then their total rank would be at most~$d$.
Even if this is not true,
the next lemma shows that we can still post-process them
to reduce their total rank to at most~$d$.

\begin{lemma}[Rank reduction]\label{lem:projective-low-rank-sum}
There exists a set of projection matrices $\{Q_a\}$ such that
\begin{equation*}
A_a \ot I \approx_{12\sqrt{\zeta}} Q_a \ot I.
\end{equation*}
and
\begin{equation*}
Q:= \sum_a Q_a \leq (1+2\sqrt{\zeta}) \cdot I.
\end{equation*}
Furthermore,  $Q$ has bounded total rank:
\begin{equation*}
\sum_a \mathrm{rank}(Q_a) \leq d.
\end{equation*}
\end{lemma}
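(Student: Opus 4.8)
The plan is to obtain the $Q_a$ from the projectors $R_a$ supplied by \Cref{lem:projective-non-measurement} by \emph{trimming}: within each $\mathrm{range}(R_a)$ I would discard just enough directions to bring the total rank down to $d$, and I would choose the discarded directions to be the ones on which $\ket{\psi}$ places the least weight, so that the state-dependent distance hardly changes.

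First I would set up the bookkeeping. Since each $R_a$ is a projector, $\sum_a \mathrm{rank}(R_a) = \sum_a \tr(R_a) = \tr(R) \le (1+2\sqrt{\zeta})\,d$, so writing $N := \sum_a\mathrm{rank}(R_a)$ the total rank overshoots $d$ by at most $N - d \le 2\sqrt{\zeta}\,d$. Let $\rho = \tr_{\mathrm{B}}(\ket{\psi}\bra{\psi})$. For each $a$ with $R_a \neq 0$, the operator $R_a \rho R_a$ is positive semidefinite and supported on $\mathrm{range}(R_a)$; let $\mu^{(a)}_1 \ge \cdots \ge \mu^{(a)}_{\mathrm{rank}(R_a)} \ge 0$ be its eigenvalues, with orthonormal eigenvectors lying in $\mathrm{range}(R_a)$. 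Since $\sum_i \mu^{(a)}_i = \tr(R_a\rho) = \bra{\psi} R_a\ot I\ket{\psi}$, the \emph{global} list of all these eigenvalues (one block per outcome) has length $N$ and sums to $\bra{\psi} R\ot I\ket{\psi} \le 1+2\sqrt{\zeta}$, the last inequality because $R \le (1+2\sqrt{\zeta})I$.

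Next I would trim. If $N \le d$ set $Q_a = R_a$; otherwise mark the $N-d$ smallest entries of the global list, say $k_a$ of them lie in the block of outcome $a$ (so $\sum_a k_a = N-d$), and let $Q_a$ be the orthogonal projector onto the span of the $\mathrm{rank}(R_a) - k_a$ eigenvectors of $R_a\rho R_a$ with the \emph{largest} eigenvalues (and $Q_a = 0$ when $R_a = 0$). Each $Q_a$ is then a projector whose range is contained in $\mathrm{range}(R_a)$, so $Q_a \le R_a$, hence $Q := \sum_a Q_a \le \sum_a R_a = R \le (1+2\sqrt{\zeta})I$, and $\sum_a \mathrm{rank}(Q_a) = \sum_a(\mathrm{rank}(R_a) - k_a) = N - (N-d) = d$. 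For the closeness, note that $Q_a \le R_a$ are nested projectors, so $R_a - Q_a$ is the projector onto the span of the $k_a$ discarded eigenvectors, and therefore $\|(R_a - Q_a)\ot I\ket{\psi}\|^2 = \bra{\psi}(R_a-Q_a)\ot I\ket{\psi} = \tr\big((R_a - Q_a)\rho\big)$ equals the sum of the $k_a$ discarded eigenvalues. Summing over $a$, $\sum_a \|(R_a - Q_a)\ot I\ket{\psi}\|^2$ is exactly the sum of the $N-d$ smallest entries of the global list. Each such entry is, by choice, no larger than any of the $d$ largest entries, hence at most their average $\le \tfrac1d(1+2\sqrt{\zeta})$; since there are $N - d \le 2\sqrt{\zeta}\,d$ of them, the sum is at most $2\sqrt{\zeta}d \cdot \tfrac{1+2\sqrt{\zeta}}{d} = 2\sqrt{\zeta}(1+2\sqrt{\zeta}) \le 4\sqrt{\zeta}$, using $\zeta \le 1/4$ from \Cref{eq:assumption-on-zeta}. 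Thus $R_a \ot I \approx_{4\sqrt{\zeta}} Q_a \ot I$, and combining this with $A_a\ot I \approx_{2\sqrt{\zeta}} R_a\ot I$ (\Cref{lem:projective-non-measurement}) through the triangle inequality \Cref{prop:triangle-inequality-for-approx_delta} with $k = 2$ gives $A_a\ot I \approx_{2(2\sqrt{\zeta} + 4\sqrt{\zeta})} Q_a\ot I$, i.e.\ $A_a\ot I \approx_{12\sqrt{\zeta}} Q_a\ot I$, as required.

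I expect the only delicate point to be the choice of what to discard. The tempting alternative---a Gram--Schmidt over the subspaces $\mathrm{range}(R_a)$, keeping for $Q_a$ the part of $\mathrm{range}(R_a)$ orthogonal to the ranges already kept---does fix the total rank, but it can annihilate an entire one-dimensional $\mathrm{range}(R_a)$ as soon as it meets the earlier span even slightly, making $\|(R_a-Q_a)\ot I\ket{\psi}\|$ as large as $\|(R_a\ot I)\ket{\psi}\|$ and destroying the bound. Ordering the directions by $\ket{\psi}$-weight instead, together with the elementary fact that the smallest $N-d$ of $N$ nonnegative reals summing to $\le 1+2\sqrt{\zeta}$ total at most $\tfrac{N-d}{d}(1+2\sqrt{\zeta})$, is precisely what turns the rank surplus $N - d \le 2\sqrt{\zeta}d$ into an $O(\sqrt{\zeta})$ distance bound.
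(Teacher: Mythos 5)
Your proposal is correct and follows essentially the same route as the paper's proof: start from the $R_a$ of \Cref{lem:projective-non-measurement}, observe that the total rank overshoots $d$ by at most $2\sqrt{\zeta}\,d$, discard the directions on which $\ket{\psi}$ places the least weight, and then close via the triangle inequality with the $2\sqrt{\zeta}$ bound already in hand. The one place you differ is that you insist on diagonalizing $R_a\rho R_a$ and keeping top eigenvectors, whereas the paper simply fixes an \emph{arbitrary} orthonormal basis of each $\mathrm{range}(R_a)$, computes the overlaps $o_{a,i}=\bra{v_{a,i}}\rho\ket{v_{a,i}}$, and keeps the $d$ globally largest; since the argument only uses the total $\sum_{a,i}o_{a,i}=\bra{\psi}R\ot I\ket{\psi}\le 1+2\sqrt{\zeta}$ and a pigeonhole bound on the smallest $r-d$ of $r$ nonnegative numbers, the diagonalization is extra work that buys nothing (your averaging step is also phrased slightly differently---each small entry $\le$ the average of the top $d$, rather than the small sum being at most a $\tfrac{r-d}{r}$ fraction of the total---but these give the same $4\sqrt{\zeta}$). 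Your concluding remark about why a Gram--Schmidt trimming would fail is a sound observation and is consistent with why both proofs order by $\ket{\psi}$-weight.
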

\begin{proof}
Let $\{R_a\}$ be the set of projective matrices given by \Cref{lem:projective-non-measurement}.
For each $a$, let $r_a$ be the rank of $R_a$.
Let $r = \sum_a r_a$.
If $r \leq d$, then the lemma is trivially satisfied by taking $Q$ to be~$R$ and applying \Cref{lem:projective-non-measurement}.
Henceforth, we will assume that $r > d$.
We want to reduce~$r$ so that it is at most~$d$.
Fortunately, it is already not too much larger than~$d$:
\begin{equation}\label{eq:bound-on-r}
r
= \sum_{a} r_a
= \sum_{a} \trace(R_a)
= \trace(R)
\leq (1 + 2\sqrt{\zeta}) \cdot \trace(I)
= (1+2\sqrt{\zeta}) \cdot d.
\end{equation}
Let $\ket{v_{a, 1}}, \ldots, \ket{v_{a, r_a}}$ be an orthonormal basis for the range of $R_a$, so that
\begin{equation*}
R_a = \sum_{i=1}^{r_a} \ket{v_{a, i}}\bra{v_{a,i}}.
\end{equation*}
To reduce~$r$,
we will throw out those $\ket{v_{a,i}}$'s whose overlap with $\ket{\psi}$ is small.
For each $a$ and $1 \leq i \leq r_a$, we denote the overlap of $\ket{v_{a,i}}$ and $\ket{\psi}$ by
\begin{equation*}
o_{a, i} = \bra{\psi} \cdot (\ket{v_{a,i}}\bra {v_{a,i}} \ot I) \cdot\ket{\psi}.
\end{equation*}
The total overlap is given by
\begin{align}
\sum_a \sum_{i=1}^{r_a} o_{a, i}
&= \sum_a \sum_{i=1}^{r_a} \bra{\psi} \cdot (\ket{v_{a,i}}\bra {v_{a,i}} \ot I) \cdot\ket{\psi}\nonumber\\
&= \sum_a \bra{\psi} R_a \ot I \ket{\psi}\nonumber\\
&= \bra{\psi} R \ot I \ket{\psi}\nonumber\\
&\leq (1+2\sqrt{\zeta}) \cdot \bra{\psi} I \ot I \ket{\psi}\nonumber\\
&= 1+2\sqrt{\zeta}.\label{eq:total-overlap}
\end{align}
Now we define $\mathsf{Large}$ to be the set of large overlaps:
\begin{equation*}
\mathsf{Large} = \{(a, i) \mid \text{$o_{a, i}$ is among the~$d$ largest of the $o_{b, j}$'s}\},
\end{equation*}
where we break ties arbitrarily,
and we define $\mathsf{Small}$ to be the set containing the remaining $(a, i)$'s.
We note that $\mathsf{Large}$ is well-defined and has size~$d$ because $r > d$.
Hence, \Cref{eq:bound-on-r} implies that
\begin{equation}\label{eq:size-of-small-is-small}
|\mathsf{Small}|
= r - |\mathsf{Large}| 
= r - d
\leq (1 + 2\sqrt{\zeta}) \cdot d - d
= 2\sqrt{\zeta} \cdot d
\leq 2\sqrt{\zeta} \cdot r.
\end{equation}
Thus, the small $(a,i)$'s have small total overlap:
\begin{align}
\sum_{(a, i) \in \mathsf{Small}} o_{a, i}
\leq \frac{|\mathsf{Small}|}{r} \sum_{a, i} o_{a,i}
&\leq 2\sqrt{\zeta} \cdot \sum_{a, i} o_{a,i} \tag{by \Cref{eq:size-of-small-is-small}}\\
&\leq 2\sqrt{\zeta} \cdot (1+2\sqrt{\zeta}) \tag{by \Cref{eq:total-overlap}}\\
& \leq 4\sqrt{\zeta}, \label{eq:small-overlaps}
\end{align}
where the final step uses the assumption that $\zeta \leq 1/4$ from \Cref{eq:assumption-on-zeta}.

For each~$a$ we let $\mathsf{Large}_a$ to be set of $i$'s such that $(a,i)$ is contained in $\mathsf{Large}$,
and we define $\mathsf{Small}_a$ similarly.
We define the matrix
\begin{equation*}
Q_a = \sum_{i\in \mathsf{Large}_a} \ket{v_{a, i}}\bra{v_{a,i}}.
\end{equation*}
Then clearly
\begin{equation*}
\sum_a \mathrm{rank}(Q_a)
= \sum_a |\mathsf{Large}_a|
= |\mathsf{Large}|
\leq d.
\end{equation*}
We can compute the difference
\begin{equation*}
R_a - Q_a
= \sum_{i=1}^{r_a} \ket{v_{a, i}}\bra{v_{a,i}} - \sum_{i \in \mathsf{Large}_a}  \ket{v_{a, i}}\bra{v_{a,i}}
= \sum_{i\in \mathsf{Small}_a} \ket{v_{a, i}}\bra{v_{a,i}}.
\end{equation*}
This is a projective Hermitian matrix, which implies that $Q_a \leq R_a$.
As a result,
\begin{equation*}
Q = \sum_a Q_a \leq \sum_a R_a = R \leq (1+2\sqrt{\zeta}) \cdot I.
\end{equation*}
In addition,
\begin{align*}
\sum_a \Vert (R_a - Q_a) \ot I \ket{\psi} \Vert^2
&= \sum_a \bra{\psi}(R_a - Q_a)^2 \ot I \ket{\psi}\\
&= \sum_a \bra{\psi}(R_a - Q_a)\ot I \ket{\psi} \tag{because $R_a - Q_a$ is a projector}\\
&= \sum_a \sum_{i \in \mathsf{Small}_a} \bra{\psi} \cdot (\ket{v_{a,i}}\bra {v_{a,i}} \ot I) \cdot\ket{\psi}\\
&= \sum_a \sum_{i \in \mathsf{Small}_a} o_{a, i}\\
&\leq 4\sqrt{\zeta}. \tag{by \Cref{eq:small-overlaps}}
\end{align*}
This means that
\begin{equation*}
R_a \ot I \approx_{4\sqrt{\zeta}} Q_a \ot I.
\end{equation*}
Since we know that $A_a \ot I \approx_{2\sqrt{\zeta}} R_a \ot I$ by \Cref{lem:projective-non-measurement},
\Cref{prop:triangle-inequality-for-approx_delta} implies that
\begin{equation*}
A_a \ot I \approx_{12\sqrt{\zeta}} Q_a \ot I.
\end{equation*}
\ignore{
Finally,
\begin{align*}
&\sum_a Q_a (Q - Q_a) Q_a\\
=~&\sum_a Q_a Q Q_a - \sum_a (Q_a)^3\\
\leq ~&XXX \cdot \sum_a Q_a A Q_a - \sum_a (Q_a)^3 \tag{by XXX}\\
\leq ~&XXX \cdot \sum_a  (Q_a)^2 - \sum_a (Q_a)^3 \tag{because~$A$ is a sub-measurement}\\
= ~&XXX \cdot \sum_a  Q_a - \sum_a Q_a \tag{because the $Q_a$'s are projectors}\\
= ~&XXX \cdot Q - Q\\
= ~&\left(XXX -  1\right) \cdot Q\\
\leq~& 2\delta \cdot Q.
\end{align*}
}
This completes the proof.
\end{proof}

Henceforth, we let $Q = \{Q_a\}$ be the set of projective matrices given by \Cref{lem:projective-low-rank-sum}.
We now derive a few properties of~$Q$ that follow as a consequence of \Cref{lem:projective-low-rank-sum}.
To begin, we show that~$Q$ is almost as complete as~$A$.

\begin{lemma}[Completeness of~$Q$]\label{lem:Q-completeness}
\begin{equation*}
\bra{\psi} Q \otimes I \ket{\psi}
\geq 1 - 11 \zeta^{1/4}.
\end{equation*}
\end{lemma}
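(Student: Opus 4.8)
The plan is to lower bound $\bra{\psi} Q \ot I \ket{\psi}$ by comparing it to the quantity $\sum_a \bra{\psi} A_a^2 \ot I \ket{\psi}$, which is already known to be at least $1-2\zeta$ because $A$ is a measurement obeying \eqref{eq:A-looks-projective}, and then to transport this bound across the approximation $A_a \ot I \approx_{12\sqrt{\zeta}} Q_a \ot I$ furnished by \Cref{lem:projective-low-rank-sum}. This is essentially the argument behind \Cref{prop:completeness-transfer-projective-P}, run in the opposite direction and with $P$ replaced by the projective (but not-quite-sub-measurement) family $Q$.

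First I would collect three ingredients. Since each $Q_a$ is a projection, $\bra{\psi} Q \ot I \ket{\psi} = \sum_a \bra{\psi} Q_a \ot I \ket{\psi} = \sum_a \Vert (Q_a \ot I)\ket{\psi}\Vert^2$. Since $A$ is a measurement we have $0 \le A_a \le I$, hence $\sum_a A_a^2 \le \sum_a A_a = I$, so $\sum_a \Vert (A_a \ot I)\ket{\psi}\Vert^2 = \sum_a \bra{\psi} A_a^2 \ot I \ket{\psi} \le \bra{\psi} A \ot I \ket{\psi} = 1$; and \eqref{eq:A-looks-projective} gives $\sum_a \bra{\psi} A_a^2 \ot I \ket{\psi} \ge \bra{\psi} A \ot I \ket{\psi} - 2\zeta = 1 - 2\zeta$. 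Finally \Cref{lem:projective-low-rank-sum} gives $\sum_a \Vert ((A_a - Q_a) \ot I)\ket{\psi}\Vert^2 \le 12\sqrt{\zeta}$.

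The main step is then a single application of the triangle inequality, applied to the tuples $u = ((A_a \ot I)\ket{\psi})_a$ and $v = ((Q_a \ot I)\ket{\psi})_a$ regarded as vectors in the direct sum $\bigoplus_a \calH$. Writing $\alpha := \Vert u\Vert^2 = \sum_a \Vert (A_a \ot I)\ket{\psi}\Vert^2$, $\beta := \Vert v\Vert^2 = \bra{\psi} Q \ot I \ket{\psi}$, and $\eta := \Vert u-v\Vert^2 = \sum_a \Vert ((A_a - Q_a) \ot I)\ket{\psi}\Vert^2 \le 12\sqrt{\zeta}$, the triangle inequality $\Vert u\Vert \le \Vert v\Vert + \Vert u-v\Vert$ reads $\sqrt{\alpha} \le \sqrt{\beta} + \sqrt{\eta}$, hence $\beta \ge (\sqrt{\alpha} - \sqrt{\eta})^2 \ge \alpha - 2\sqrt{\alpha}\sqrt{\eta} \ge \alpha - 2\sqrt{\eta}$, using $\alpha \le 1$ in the last step (and when $\sqrt{\alpha} < \sqrt{\eta}$ the inequality $\beta \ge \alpha - 2\sqrt{\eta}$ holds trivially, since then $\alpha - 2\sqrt{\eta} < \alpha - 2\alpha < 0 \le \beta$). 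Substituting $\alpha \ge 1-2\zeta$ and $\eta \le 12\sqrt{\zeta}$ yields $\bra{\psi} Q \ot I \ket{\psi} \ge 1 - 2\zeta - 2\sqrt{12\sqrt{\zeta}} = 1 - 2\zeta - 4\sqrt{3}\,\zeta^{1/4}$.

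To finish I would absorb the $2\zeta$ term using $\zeta \le 1/4$, which \eqref{eq:assumption-on-zeta} lets us assume throughout the proof of the enclosing \Cref{lem:orthonormalization-main-lemma} (and anyway the claim is vacuous for $\zeta > 1/4$, where the right-hand side is negative): from $\zeta \le 1/4$ we get $2\zeta = 2\zeta^{3/4}\zeta^{1/4} \le 2(1/4)^{3/4}\zeta^{1/4} = \tfrac{1}{\sqrt{2}}\,\zeta^{1/4}$, and since $4\sqrt{3} + \tfrac{1}{\sqrt{2}} < 11$ this gives $\bra{\psi} Q \ot I \ket{\psi} \ge 1 - 11\,\zeta^{1/4}$. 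I do not anticipate a real obstacle; the only subtlety is to use the sharp $\ell^2$ triangle inequality above rather than the cruder $\Vert (A_a \ot I)\ket{\psi}\Vert^2 \le 2\Vert (Q_a \ot I)\ket{\psi}\Vert^2 + 2\Vert ((A_a - Q_a) \ot I)\ket{\psi}\Vert^2$ of \Cref{prop:triangle-inequality-for-vectors-squared}, which would only give $\beta \ge \alpha/2 - \eta$ and is far too weak.
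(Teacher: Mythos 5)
Your proof is correct, and it obtains a slightly tighter constant than the paper's ($4\sqrt{3}+1/\sqrt{2}$ versus the paper's effective $\sqrt{24}+\sqrt{12}+$ correction). The paper proves this by a two-step Cauchy–Schwarz chain that factors through the cross term $\sum_a\bra{\psi}(Q_a\cdot A_a)\ot I\ket{\psi}$: it writes $\bra{\psi}Q\ot I\ket{\psi}=\sum_a\bra{\psi}(Q_a)^2\ot I\ket{\psi}\approx_{5\zeta^{1/4}}\sum_a\bra{\psi}(Q_a A_a)\ot I\ket{\psi}\approx_{4\zeta^{1/4}}\sum_a\bra{\psi}(A_a)^2\ot I\ket{\psi}$, and then invokes \eqref{eq:A-looks-projective} on the last quantity. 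Your route short-circuits this to a single application of the $\ell^2$ triangle inequality in the direct-sum Hilbert space, which is cleaner and more modular (as you note, it is just \Cref{prop:completeness-transfer-projective-P} run in reverse with $P$ replaced by $Q$), and saves an epsilon in the constants because only $\alpha\le 1$ is needed, not $\bra{\psi}Q\ot I\ket{\psi}\le 1+2\sqrt{\zeta}$. The case split when $\sqrt{\alpha}<\sqrt{\eta}$ and the absorption of $2\zeta$ into $\zeta^{1/4}$ via \eqref{eq:assumption-on-zeta} are both handled correctly. Both proofs rest on the same three ingredients ($Q$ projective, $\sum_a(A_a)^2\ge 1-2\zeta$, and the $12\sqrt{\zeta}$ bound from \Cref{lem:projective-low-rank-sum}), so this is a repackaging rather than a structurally new argument, but it is the more economical presentation.
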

\begin{proof}
To begin, we claim that
\begin{align}
\bra{\psi} Q \otimes I \ket{\psi}
& = \sum_a \bra{\psi} Q_a \otimes I \ket{\psi}\nonumber\\
& = \sum_a \bra{\psi} (Q_a)^2 \otimes I \ket{\psi}\tag{because the $Q_a$'s are projective}\\
& \approx_{5 \zeta^{1/4}} \sum_a \bra{\psi} (Q_a \cdot A_a) \otimes I \ket{\psi}. \label{eq:Q-for-an-A}
\end{align}
To show this, we bound the magnitude of the difference.
\begin{align*}
&\Big|\sum_a \bra{\psi} (Q_a \otimes I) \cdot ((Q_a - A_a) \otimes I) \ket{\psi}\Big|\\
\leq~&\sqrt{\sum_a\bra{\psi} (Q_a)^2 \otimes I \ket{\psi}}
	\cdot \sqrt{\sum_a \bra{\psi} (Q_a - A_a)^2 \otimes I \ket{\psi}}\\
\leq~&  \sqrt{1+2\sqrt{\zeta}} \cdot \sqrt{12\sqrt{\zeta}} \tag{by \Cref{lem:projective-low-rank-sum}}\\
\leq~& \sqrt{2} \cdot \sqrt{12\sqrt{\zeta}},
\end{align*}
where the last line uses the assumption that $\zeta \leq 1/4$ from \Cref{eq:assumption-on-zeta}.
Next, we claim that
\begin{equation}\label{eq:another-Q-for-A}
\eqref{eq:Q-for-an-A}
= \sum_a \bra{\psi} (Q_a \cdot A_a) \otimes I \ket{\psi}
\approx_{4 \zeta^{1/4}} \sum_a \bra{\psi} (A_a)^2 \otimes I \ket{\psi}.
\end{equation}
To show this, we bound the magnitude of the difference.
\begin{align*}
&\Big|\sum_a \bra{\psi} ((Q_a - A_a) \otimes I) \cdot (A_a \otimes I) \ket{\psi}\Big|\\
\leq~&\sqrt{\sum_a \bra{\psi} (Q_a - A_a)^2 \otimes I \ket{\psi}}
	\cdot \sqrt{\sum_a\bra{\psi} (A_a)^2 \otimes I \ket{\psi}}\\
\leq~& \sqrt{12 \sqrt{\zeta}} \cdot 1. \tag{by \Cref{lem:projective-low-rank-sum}}
\end{align*}
In conclusion,
\begin{align*}
\bra{\psi} Q \otimes I \ket{\psi}
& \geq \sum_a \bra{\psi} (A_a)^2 \otimes I \ket{\psi} - 9 \zeta^{1/4} \tag{by \Cref{eq:Q-for-an-A,eq:another-Q-for-A}}\\
& \geq \sum_a \bra{\psi} A_a \otimes I \ket{\psi} - 2\zeta - 9\zeta^{1/4} \tag{by \Cref{eq:A-looks-projective}}\\
& \geq \bra{\psi} A \otimes I \ket{\psi} - 11 \zeta^{1/4} \nonumber\\
& = 1 - 11 \zeta^{1/4}.
\end{align*}
This completes the proof.
\end{proof}

We will also need the following bound on the completeness of the \emph{square root} of~$Q$.
Note that such a bound follows trivially from \Cref{lem:Q-completeness}
when~$Q$ is a sub-measurement because $\sqrt{Q} \geq Q$ when $Q \leq I$.

\begin{lemma}[Completeness of~$\sqrt{Q}$]\label{lem:sqrt-Q-completeness}
\begin{equation*}
\bra{\psi} \sqrt{Q} \otimes I \ket{\psi}
\geq 1 - 12 \zeta^{1/4}.
\end{equation*}
\end{lemma}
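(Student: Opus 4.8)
The plan is to deduce this directly from \Cref{lem:Q-completeness} together with the operator bound $Q \leq (1+2\sqrt{\zeta}) \cdot I$ supplied by \Cref{lem:projective-low-rank-sum}. The point is that although $Q$ need not be a sub-measurement, all of its eigenvalues lie in $[0, 1+2\sqrt{\zeta}]$, and for any $\lambda$ in this range one has $\sqrt{\lambda} = \lambda/\sqrt{\lambda} \geq \lambda/\sqrt{1+2\sqrt{\zeta}}$. Passing to the spectral decomposition of $Q$, this gives the operator inequality
\begin{equation*}
\sqrt{Q} \geq \frac{1}{\sqrt{1+2\sqrt{\zeta}}} \cdot Q .
\end{equation*}

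Next I would sandwich this inequality between $\bra{\psi}$ and $\ket{\psi}$ (with the operators acting on the first tensor factor) and apply \Cref{lem:Q-completeness}, obtaining
\begin{equation*}
\bra{\psi} \sqrt{Q} \otimes I \ket{\psi} \geq \frac{1}{\sqrt{1+2\sqrt{\zeta}}} \cdot \bra{\psi} Q \otimes I \ket{\psi} \geq \frac{1 - 11\zeta^{1/4}}{\sqrt{1+2\sqrt{\zeta}}} .
\end{equation*}

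It then remains only to verify the elementary numerical inequality $\frac{1 - 11\zeta^{1/4}}{\sqrt{1+2\sqrt{\zeta}}} \geq 1 - 12\zeta^{1/4}$. This follows from $\sqrt{1+2\sqrt{\zeta}} \leq 1 + \sqrt{\zeta}$ (square the right-hand side to see $1 + 2\sqrt{\zeta} + \zeta \geq 1 + 2\sqrt{\zeta}$), hence $\frac{1}{\sqrt{1+2\sqrt{\zeta}}} \geq \frac{1}{1+\sqrt{\zeta}} \geq 1 - \sqrt{\zeta}$; multiplying, $(1-\sqrt{\zeta})(1 - 11\zeta^{1/4}) \geq 1 - 11\zeta^{1/4} - \sqrt{\zeta} \geq 1 - 12\zeta^{1/4}$, where the last step uses $\sqrt{\zeta} = (\zeta^{1/4})^2 \leq \zeta^{1/4}$, valid since $\zeta \leq 1/4 < 1$ by \Cref{eq:assumption-on-zeta}.

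There is no genuine obstacle here; the only thing to be careful about is that the naive bound $\sqrt{Q} \geq Q$ is \emph{not} available, because $Q$ is not known to satisfy $Q \leq I$ — this is exactly why one must retain the $\sqrt{1+2\sqrt{\zeta}}$ factor throughout, and it is the source of the slight loss from $11\zeta^{1/4}$ to $12\zeta^{1/4}$.
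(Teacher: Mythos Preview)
Your proposal is correct and matches the paper's proof essentially line for line: both derive the operator inequality $\sqrt{Q} \geq \frac{1}{\sqrt{1+2\sqrt{\zeta}}}\, Q$ from the spectral bound in \Cref{lem:projective-low-rank-sum}, invoke \Cref{lem:Q-completeness}, simplify $\frac{1}{\sqrt{1+2\sqrt{\zeta}}} \geq 1-\sqrt{\zeta}$, and finish with $\sqrt{\zeta} \leq \zeta^{1/4}$ via \Cref{eq:assumption-on-zeta}. Your observation about why the naive bound $\sqrt{Q} \geq Q$ fails is exactly the point the paper makes in the remark preceding the lemma.
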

\begin{proof}
Let $Q = \sum_i \nu_i \ket{u_i} \bra{u_i}$ be the eigendecomposition of~$Q$.
Then because $Q \leq (1 + 2\sqrt{\zeta}) \cdot I$, each eigenvalue $\nu_i$ is at most $1+2\sqrt{\zeta}$.
Thus,
\begin{equation*}
\sqrt{Q}
= \sum_i \sqrt{\nu_i} \ket{u_i}\bra{u_i}
\geq \frac{1}{\sqrt{1 + 2\sqrt{\zeta}}} \cdot \sum_i \nu_i \ket{u_i}\bra{u_i}
= \frac{1}{\sqrt{1 + 2\sqrt{\zeta}}} \cdot Q.
\end{equation*}
We note that
\begin{equation*}
\frac{1}{\sqrt{1 + 2\sqrt{\zeta}}}
\geq \frac{1}{\sqrt{1 + 2\sqrt{\zeta} + \zeta}}
= \frac{1}{1 + \sqrt{\zeta}}
= \frac{1}{1 + \sqrt{\zeta}} \cdot \Big(\frac{1-\sqrt{\zeta}}{1-\sqrt{\zeta}}\Big)
= \frac{1-\sqrt{\zeta}}{1 - \zeta}
\geq 1 - \sqrt{\zeta}.
\end{equation*}
Hence, $\sqrt{Q} \geq (1-\sqrt{\zeta}) \cdot Q$.
As a result, \Cref{lem:Q-completeness} implies that
\begin{equation*}
\bra{\psi} \sqrt{Q} \otimes I \ket{\psi}
\geq (1-\sqrt{\zeta}) \cdot\bra{\psi} Q \otimes I \ket{\psi}
\geq (1-\sqrt{\zeta}) \cdot \left(1 - 11 \zeta^{1/4}\right)
\geq 1 -\sqrt{\zeta} - 11 \zeta^{1/4}.
\end{equation*}
This concludes the proof.
\end{proof}

Finally, we show the following lemma,
which quantifies a sense in which~$Q$ is ``almost projective".

\begin{lemma}[$Q$ is almost projective]\label{lem:q-almost-projective}
\begin{equation*}
\sum_a (Q_a \cdot Q  \cdot Q_a - Q_a) \leq  4\sqrt{\zeta} \cdot I.
\end{equation*}
\end{lemma}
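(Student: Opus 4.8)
The plan is to establish this operator inequality pointwise, i.e.\ to show $\bra{x}\big(\sum_a (Q_a Q Q_a - Q_a)\big)\ket{x} \leq 4\sqrt{\zeta}$ for every unit vector $\ket{x}$, using only the two facts recorded in \Cref{lem:projective-low-rank-sum}: that each $Q_a$ is a projector, and that $Q = \sum_a Q_a \leq (1+2\sqrt{\zeta})\cdot I$ (together with $\zeta \leq 1/4$ from \Cref{eq:assumption-on-zeta}). Notably the rank bound $\sum_a \mathrm{rank}(Q_a) \leq d$ is not needed for this step.

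First I would simplify the left-hand side. Expanding $Q = \sum_b Q_b$ and using $Q_a^3 = Q_a$, the $b = a$ term of $Q_a Q Q_a = \sum_b Q_a Q_b Q_a$ equals $Q_a$ exactly, so
\[ \sum_a (Q_a Q Q_a - Q_a) = \sum_{a \neq b} Q_a Q_b Q_a. \]
Next, fix a unit vector $\ket{x}$ and write $\ket{v_a} = Q_a \ket{x}$. Since $Q_b$ is a projector, $\bra{x} Q_a Q_b Q_a \ket{x} = \bra{v_a} Q_b \ket{v_a} = \Vert Q_b \ket{v_a}\Vert^2$, hence
\[ \bra{x}\Big(\sum_{a \neq b} Q_a Q_b Q_a\Big)\ket{x} = \sum_a \sum_{b \neq a} \Vert Q_b \ket{v_a}\Vert^2 = \sum_a \Big(\bra{v_a} Q \ket{v_a} - \Vert Q_a \ket{v_a}\Vert^2\Big), \]
where the inner completion uses $\sum_b \Vert Q_b\ket{v_a}\Vert^2 = \bra{v_a} Q\ket{v_a}$ (again because the $Q_b$ are projectors). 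Since $Q_a$ is a projector, $Q_a\ket{v_a} = Q_a^2\ket{x} = \ket{v_a}$, so the bracketed quantity equals $\bra{v_a}(Q - I)\ket{v_a} \leq 2\sqrt{\zeta}\,\Vert\ket{v_a}\Vert^2 = 2\sqrt{\zeta}\,\Vert Q_a\ket{x}\Vert^2$ by $Q \leq (1+2\sqrt{\zeta})I$.

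Summing over $a$ and using $\sum_a \Vert Q_a\ket{x}\Vert^2 = \bra{x} Q\ket{x} \leq 1 + 2\sqrt{\zeta}$, I would conclude $\bra{x}\big(\sum_a(Q_a Q Q_a - Q_a)\big)\ket{x} \leq 2\sqrt{\zeta}(1 + 2\sqrt{\zeta}) \leq 4\sqrt{\zeta}$, the last step because $2\sqrt{\zeta} \leq 1$ when $\zeta \leq 1/4$; since $\ket{x}$ is arbitrary this is precisely the claimed inequality. There is no real obstacle here — it is a short computation — and the only substantive input is the quantitative near-completeness bound $Q \leq (1+2\sqrt{\zeta})I$ from the rank-reduction lemma; without the $2\sqrt{\zeta}$ slack one would only get an $O(1)\cdot I$ bound. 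The one thing to be careful about is keeping the projector identities $Q_a^2 = Q_a$ (and $Q_a\ket{v_a} = \ket{v_a}$) straight through the several places they are invoked.
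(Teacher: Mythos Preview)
Your proof is correct and follows essentially the same approach as the paper. The paper's argument is slightly more compact: it works directly with operator inequalities, using $Q \leq (1+2\sqrt{\zeta})I$ to get $\sum_a Q_a Q Q_a \leq (1+2\sqrt{\zeta})\sum_a Q_a^2 = (1+2\sqrt{\zeta})Q$, then bounds $2\sqrt{\zeta}\,Q \leq 2\sqrt{\zeta}(1+2\sqrt{\zeta})I \leq 4\sqrt{\zeta}\,I$ --- your pointwise evaluation on a unit vector $\ket{x}$ is the scalar unpacking of the same chain of inequalities.
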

\begin{proof}
\Cref{lem:projective-low-rank-sum} implies that $Q \leq (1 + 2\sqrt{\zeta}) \cdot I$. As a result,
\begin{align*}
&\sum_a Q_a \cdot Q  \cdot Q_a - \sum_a Q_a\\
\leq ~&(1 + 2\sqrt{\zeta}) \cdot \sum_a Q_a \cdot I  \cdot Q_a - \sum_a Q_a \\
= ~&(1+2\sqrt{\zeta}) \cdot \sum_a  Q_a - \sum_a Q_a \tag{because the $Q_a$'s are projectors}\\
= ~&(1+2\sqrt{\zeta}) \cdot Q - Q\\
= ~& 2\sqrt{\zeta} \cdot Q\\
\leq~& 2\sqrt{\zeta} \cdot (1 + 2\sqrt{\zeta}) \cdot I\\
\leq~& 2 \sqrt{\zeta} \cdot 2 \cdot I. \tag{by \Cref{eq:assumption-on-zeta}}
\end{align*}
This completes the proof.
\end{proof}

We now arrive at the most important definition in this proof,
which is a natural matrix decomposition for the $Q_a$ matrices.

\begin{definition}[Matrix decomposition of~$Q_a$]
For each $a$, let $m_a$ be the rank of $Q_a$.
Let $\ket{v_{a, 1}}, \ldots, \ket{v_{a, m_a}}$ be an orthonormal basis for the range of $Q_a$, so that
\begin{equation*}
Q_a = \sum_{i=1}^{m_a} \ket{v_{a, i}}\bra{v_{a,i}}.
\end{equation*}
Let $m = \sum_a m_a$, and consider an orthonormal basis of $\C^m$ consisting of vectors $\ket{a, i}$ for each $a$ and $1 \leq i \leq m_a$.
For each $a$, define the matrix
\begin{equation*}
X_a = \sum_{i=1}^{m_a} \ket{a, i} \bra{v_{a, i}}.
\end{equation*}
In addition, define the matrix
\begin{equation}\label{eq:looks-like-singular-value-decomposition}
X = \sum_a X_a =  \sum_a \sum_{i=1}^{m_a} \ket{a, i} \bra{v_{a, i}}.
\end{equation}
Finally, we let $T = \{T_a\}$ be the projective measurement on $\C^m$ defined as
\begin{equation*}
T_a = \sum_{i=1}^{m_a} \ket{a, i}\bra{a, i}.
\end{equation*}
\end{definition}

The next pair of lemmas will prove some basic properties of the~$X_a$ matrices.

\begin{lemma}\label{lem:xa-t}
For each~$a$, $\displaystyle
X_a = T_a \cdot X.
$
\end{lemma}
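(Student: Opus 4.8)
The plan is to verify the identity by a direct computation: expand the product $T_a \cdot X$ using the defining sums for $T_a$ and $X$, and then collapse the result using the orthonormality of the basis $\{\ket{b,j}\}$ of $\C^m$. No structural argument is needed; the statement is really just a bookkeeping check on the matrix decomposition of the $Q_a$'s.

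Concretely, I would start from the definitions $T_a = \sum_{i=1}^{m_a} \ket{a,i}\bra{a,i}$ and $X = \sum_b \sum_{j=1}^{m_b} \ket{b,j}\bra{v_{b,j}}$ and multiply them out:
\[
T_a \cdot X = \Big(\sum_{i=1}^{m_a} \ket{a,i}\bra{a,i}\Big)\Big(\sum_b \sum_{j=1}^{m_b} \ket{b,j}\bra{v_{b,j}}\Big) = \sum_{i=1}^{m_a}\sum_b \sum_{j=1}^{m_b} \braket{a,i \mid b,j}\; \ket{a,i}\bra{v_{b,j}}.
\]
Since the vectors $\ket{b,j}$, ranging over all $b$ and all $1 \le j \le m_b$, were chosen to form an orthonormal basis of $\C^m$, we have $\braket{a,i \mid b,j} = 1$ when $b = a$ and $j = i$, and $\braket{a,i \mid b,j} = 0$ otherwise. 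Substituting this into the triple sum kills every term except those with $b = a$, $j = i$, leaving
\[
T_a \cdot X = \sum_{i=1}^{m_a} \ket{a,i}\bra{v_{a,i}} = X_a,
\]
which is exactly the definition of $X_a$. This finishes the proof.

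I do not anticipate any real obstacle here — the only point that requires a moment's care is that the inner sum on the $X$ side runs over $1 \le j \le m_b$ (not some larger range), so that the index $i$ appearing in $T_a$ is always a valid index for $X_a$ as well; this is automatic from the definitions. (As an aside that will be useful later, the same style of computation shows $X^\dagger X = \sum_a Q_a = Q$ and $T_a T_b = 0$ for $a \neq b$, but neither is needed for the present lemma.)
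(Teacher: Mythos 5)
Your proposal is correct and matches the paper's own proof: both expand $T_a \cdot X$ from the definitions and collapse the double sum using orthonormality of the $\ket{b,j}$ basis of $\C^m$. The only difference is that you spell out the inner-product bookkeeping a little more explicitly, which is fine.
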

\begin{proof}
This is a simple calculation:
\begin{equation*}
T_a \cdot X = \Big(\sum_i^{m_a} \ket{a, i} \bra{a,i}\Big)\cdot \Big( \sum_a \sum_{i=1}^{m_a}  \ket{a, i}\bra{v_{a, i}}\Big) 
= \sum_{i=1}^{m_a} \ket{a, i}\bra{v_{a, i}} = X_a.\qedhere
\end{equation*}
\end{proof}

\begin{lemma}[$Q_a$ restated]\label{lem:qa-restated}
For each~$a$,
\begin{equation*}
Q_a = X^\dagger_a \cdot X_a = X^\dagger \cdot T_a \cdot X = X_a^\dagger \cdot X.
\end{equation*}
\end{lemma}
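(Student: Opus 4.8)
The plan is to derive all three identities from the single computation $X_a^{\dagger} X_a = Q_a$, and then to use \Cref{lem:xa-t} together with the projectivity of $T_a$ to match up the remaining two expressions.

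First I would expand $X_a^{\dagger} X_a$ directly from the definition $X_a = \sum_{i=1}^{m_a} \ket{a,i}\bra{v_{a,i}}$. Taking the adjoint gives $X_a^{\dagger} = \sum_{i=1}^{m_a}\ket{v_{a,i}}\bra{a,i}$, so
\[
X_a^{\dagger} X_a = \sum_{i,j=1}^{m_a} \ket{v_{a,i}} \braket{a,i \mid a,j} \bra{v_{a,j}} = \sum_{i=1}^{m_a} \ket{v_{a,i}}\bra{v_{a,i}} = Q_a,
\]
where the second equality uses that the vectors $\ket{a,i}$ are orthonormal (they form part of an orthonormal basis of $\C^m$) and the third is the definition of $Q_a$. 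This handles the first identity.

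Next I would invoke \Cref{lem:xa-t}, which states $X_a = T_a \cdot X$, so $X_a^{\dagger} = X^{\dagger} \cdot T_a$. Since $T_a$ is a projector, $T_a = T_a^2$, and therefore $X^{\dagger} T_a X = X^{\dagger} T_a^2 X = (T_a X)^{\dagger}(T_a X) = X_a^{\dagger} X_a$, which is the second identity. Similarly $X_a^{\dagger} X = (T_a X)^{\dagger} X = X^{\dagger} T_a X$, which by the previous line equals $X_a^{\dagger} X_a$; this is the third identity. Chaining these with the first computation shows all three expressions equal $Q_a$.

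I do not expect a genuine obstacle here: the argument is a short bookkeeping exercise. The only points to be careful about are distinguishing the two orthonormal systems $\{\ket{v_{a,i}}\}$ (an orthonormal basis for the range of $Q_a$) and $\{\ket{a,i}\}$ (an orthonormal basis of $\C^m$), and using that $T_a$ is projective when splitting $T_a = T_a^2$. As an alternative that avoids \Cref{lem:xa-t}, one could prove $X_a^{\dagger} X = Q_a$ directly by writing $X = \sum_b X_b$ and observing that $X_a^{\dagger} X_b = \sum_{i,j}\ket{v_{a,i}}\braket{a,i \mid b,j}\bra{v_{b,j}}$ vanishes when $b \neq a$ and equals $Q_a$ when $b = a$; this is essentially the same computation phrased without reference to $T_a$.
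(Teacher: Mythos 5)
Your proof is correct and follows the same approach as the paper: first compute $X_a^\dagger X_a = Q_a$ directly via orthonormality of $\{\ket{a,i}\}$, then derive the remaining two identities from \Cref{lem:xa-t} and projectivity of $T_a$. The alternative you sketch at the end (avoiding \Cref{lem:xa-t} by expanding $X = \sum_b X_b$) is a fine remark but not needed.
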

\begin{proof}
The first equality follows from
\begin{align*}
X_a^\dagger \cdot X_a
&= \Big(\sum_{i=1}^{m_a} \ket{v_{a, i}}\bra{a, i} \Big) \cdot \Big(\sum_{j=1}^{m_a} \ket{v_{a, j}}\bra{a, j} \Big)^\dagger\nonumber\\
&= \sum_{i, j=1}^{m_a} \ket{v_{a, i}}\bra{a, i} \cdot \ket{a, j} \bra{v_{a, j}}
= \sum_{i=1}^{m_a} \ket{v_{a,i}}\bra{v_{a,i}} = Q_a.
\end{align*}
The remaining equalities follow from \Cref{lem:xa-t} and the fact that~$T$ is a projective measurement.
\begin{equation*}
X_a^\dagger \cdot X_a
= (X^\dagger \cdot T_a) \cdot (T_a  \cdot X)
= X^\dagger \cdot T_a \cdot X = X_a^\dagger \cdot X.\qedhere
\end{equation*}
\end{proof}

Now we introduce our main tool for studying~$X$, which is via its singular value decomposition.

\begin{definition}[SVD of~$X$]
Let $X = U \cdot \Sigma_{m \times d} \cdot V^\dagger$ be the singular value decomposition (SVD) of~$X$.
Because~$X$ is an $m \times d$ matrix,
the definition of the SVD states that
$U$ is an $m \times m$ unitary matrix,
$V$ is a $d \times d$ unitary matrix,
and $\Sigma_{m\times d}$ is an $m \times d$ diagonal matrix
with nonnegative real numbers on its diagonal.
\end{definition}

\begin{notation}
For positive integers $h$ and~$w$,
we will write $I_{h \times w}$ for the $h \times w$ matrix with $1$'s on its diagonal and $0$'s everywhere else.

For integers $h, w \geq m$, we also write $\Sigma_{h \times w}$
for the $h \times w$ diagonal matrix whose diagonal agrees with $\Sigma_{m \times d}$'s;
namely, $(\Sigma_{h \times w})_{i, i} = (\Sigma_{m \times d})_{i,i}$ for all $1 \leq i \leq m$
and $(\Sigma_{h \times w})_{i, i} = 0$ for all $i > m$.
We note that because $\Sigma_{m \times d}$ is a real-valued diagonal matrix,
$(\Sigma_{h \times w})^\dagger = \Sigma_{w \times h}$.
\end{notation}

With this definition,
we can give a helpful expression for the square of~$X$.

\begin{lemma}[$X$ squared]\label{lem:X-squared}
\begin{equation*}
X \cdot X^\dagger = U  \cdot (\Sigma_{m \times m})^2 \cdot U^\dagger,
\quad
\text{and}
\quad
X^\dagger \cdot X = Q = V \cdot (\Sigma_{d \times d})^2 \cdot V^\dagger.
\end{equation*}
\end{lemma}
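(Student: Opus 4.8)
The plan is to simply plug the singular value decomposition $X = U \cdot \Sigma_{m \times d} \cdot V^\dagger$ into the two products and simplify, using unitarity of $U$ and $V$ and the bookkeeping conventions for the rectangular $\Sigma$ matrices introduced just above the lemma. For the first identity, I would write
\begin{equation*}
X \cdot X^\dagger = (U \Sigma_{m\times d} V^\dagger)(V \Sigma_{d \times m} U^\dagger) = U \cdot (\Sigma_{m\times d} \Sigma_{d\times m}) \cdot U^\dagger,
\end{equation*}
using $V^\dagger V = I_{d\times d}$, and then observe that $\Sigma_{m\times d}\Sigma_{d\times m}$ is the $m\times m$ diagonal matrix whose $i$-th diagonal entry is $(\Sigma_{m\times d})_{i,i}^2$, which is exactly $(\Sigma_{m\times m})^2$ by the definition of $\Sigma_{m\times m}$. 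Symmetrically, $X^\dagger X = V \cdot (\Sigma_{d\times m}\Sigma_{m\times d}) \cdot V^\dagger = V \cdot (\Sigma_{d\times d})^2 \cdot V^\dagger$, using $U^\dagger U = I_{m\times m}$.

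It remains to identify $X^\dagger X$ with $Q$. For this I would invoke \Cref{lem:qa-restated}, which gives $Q_a = X^\dagger \cdot T_a \cdot X$ for each $a$, and then sum over $a$:
\begin{equation*}
Q = \sum_a Q_a = \sum_a X^\dagger \cdot T_a \cdot X = X^\dagger \cdot \Big(\sum_a T_a\Big) \cdot X = X^\dagger \cdot X,
\end{equation*}
where the last step uses that $T = \{T_a\}$ is a (complete) projective measurement on $\C^m$, i.e.\ $\sum_a T_a = I_{m\times m}$, which holds because the vectors $\{\ket{a,i}\}$ form an orthonormal basis of $\C^m$ (as $m = \sum_a m_a$).

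This lemma is essentially a routine unwinding of definitions, so I do not expect any real obstacle; the only point requiring minor care is keeping the rectangular $\Sigma$ dimensions consistent (i.e.\ checking that $\Sigma_{m\times d}\Sigma_{d\times m} = (\Sigma_{m\times m})^2$ and $\Sigma_{d\times m}\Sigma_{m\times d} = (\Sigma_{d\times d})^2$ as diagonal matrices), which follows immediately from the convention that the extra rows/columns of these matrices are zero.
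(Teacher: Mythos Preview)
Your proposal is correct and matches the paper's proof essentially line for line: both expand $XX^\dagger$ and $X^\dagger X$ via the SVD using $V^\dagger V = I_{d\times d}$ and $U^\dagger U = I_{m\times m}$, and both identify $X^\dagger X$ with $Q$ by summing the identity $Q_a = X^\dagger T_a X$ from \Cref{lem:qa-restated} and using that $\sum_a T_a = I_{m\times m}$.
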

\begin{proof}
First,
\begin{align*}
X \cdot X^\dagger
&= (U  \cdot \Sigma_{m \times d} \cdot V^\dagger) \cdot (U  \cdot \Sigma_{m \times d} \cdot V^\dagger)^\dagger\\
&= U  \cdot \Sigma_{m \times d} \cdot V^\dagger \cdot V \cdot \Sigma_{d \times m} \cdot U^\dagger\\
&= U  \cdot \Sigma_{m \times d} \cdot I_{d \times d} \cdot \Sigma_{d \times m} \cdot U^\dagger \tag{because~$V$ is a $d \times d$ unitary}\\
&= U  \cdot (\Sigma_{m \times m})^2 \cdot U^\dagger \tag{because~$m \leq d$}.
\end{align*}
Second,
\begin{align*}
X^\dagger \cdot X
& = \sum_a X^\dagger \cdot T_a X  \tag{because~$T$ is a measurement}\\
&= \sum_a Q_a \tag{by \Cref{lem:qa-restated}}\\
&= Q.
\end{align*}
In addition, we can rewrite $X^\dagger \cdot X$ as
\begin{align*}
X^\dagger \cdot X
&= (U  \cdot \Sigma_{m \times d} \cdot V^\dagger)^\dagger \cdot (U  \cdot \Sigma_{m \times d} \cdot V^\dagger)\\
&= V \cdot \Sigma_{d \times m} \cdot U^\dagger \cdot U \cdot \Sigma_{m \times d} \cdot V^\dagger\\
&= V \cdot \Sigma_{d \times m} \cdot I_{m \times m} \cdot \Sigma_{m \times d} \cdot V^\dagger \tag{because~$U$ is an $m \times m$ unitary}\\
&= V \cdot (\Sigma_{d \times d})^2 \cdot V^\dagger.
\end{align*}
This completes the proof.
\end{proof}

The following lemma relates an expression in the $X_a$'s
with an expression in the~$Q_a$'s that appeared previously in \Cref{lem:q-almost-projective}.

\begin{lemma}\label{lem:X-expression-to-Q-expression}
For each~$a$,
\begin{equation*}
X_a^\dagger \cdot (X \cdot X^\dagger - I_{m \times m})^2 \cdot X_a
= Q_a \cdot Q \cdot Q_a - Q_a.
\end{equation*}
\end{lemma}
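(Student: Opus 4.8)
The plan is to expand the square and reduce each resulting term to an expression in the $Q_a$'s using the identities already established for $X_a$, $X$, and $Q$. First I would write
\[
(X \cdot X^\dagger - I_{m\times m})^2 = X X^\dagger X X^\dagger \;-\; 2\, X X^\dagger \;+\; I_{m\times m},
\]
conjugate by $X_a$ on the left and right, and handle the three terms separately.

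For the two $X X^\dagger$-containing terms, the key observation is that $X_a = T_a \cdot X$ by \Cref{lem:xa-t}, so (using that $T_a$ is a projection, hence Hermitian) $X_a^\dagger \cdot X = X^\dagger \cdot T_a \cdot X = Q_a$ and likewise $X^\dagger \cdot X_a = X^\dagger \cdot T_a \cdot X = Q_a$, both by \Cref{lem:qa-restated}. Hence
\[
X_a^\dagger \cdot (X X^\dagger) \cdot X_a = (X_a^\dagger X)(X^\dagger X_a) = Q_a \cdot Q_a = Q_a
\]
since $Q_a$ is a projector, and
\[
X_a^\dagger \cdot (X X^\dagger X X^\dagger) \cdot X_a = (X_a^\dagger X)(X^\dagger X)(X^\dagger X_a) = Q_a \cdot Q \cdot Q_a,
\]
where I use $X^\dagger \cdot X = Q$ from \Cref{lem:X-squared}. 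The remaining term is $X_a^\dagger \cdot I_{m\times m} \cdot X_a = X_a^\dagger \cdot X_a = Q_a$ by \Cref{lem:qa-restated}. Combining the three contributions gives
\[
X_a^\dagger \cdot (X X^\dagger - I_{m\times m})^2 \cdot X_a = Q_a \cdot Q \cdot Q_a - 2 Q_a + Q_a = Q_a \cdot Q \cdot Q_a - Q_a,
\]
as desired.

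I do not expect any genuine obstacle here: once \Cref{lem:xa-t,lem:qa-restated,lem:X-squared} are in hand, the entire argument is a two-line algebraic manipulation. The only point meriting a moment's care is that $X X^\dagger \neq I_{m\times m}$ in general (the ranges of the $Q_a$ need not be orthogonal, so $\sum_a T_a$ pulled back through $X$ does not collapse), which is precisely why the statement features $X X^\dagger - I_{m\times m}$ rather than $X X^\dagger$; but this subtlety never actually intervenes in the computation above, so it is purely a sanity check.
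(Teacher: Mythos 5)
Your proof is correct and follows essentially the same route as the paper's: expand $(X X^\dagger - I)^2$ into three terms, conjugate by $X_a$, and reduce each using $X_a^\dagger X = X^\dagger X_a = Q_a$ (\Cref{lem:qa-restated}), $X^\dagger X = Q$ (\Cref{lem:X-squared}), and the projectivity of $Q_a$. The closing remark about $X X^\dagger \neq I_{m\times m}$ is a reasonable sanity check but, as you note, plays no role in the calculation.
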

\begin{proof}
Using~\Cref{lem:X-squared} and~\Cref{lem:qa-restated},
\begin{equation*}
X_a^\dagger \cdot (X \cdot X^\dagger \cdot X \cdot X^\dagger) \cdot X_a
= X_a^\dagger \cdot X \cdot Q \cdot X^\dagger \cdot X_a
= Q_a \cdot Q \cdot Q_a.
\end{equation*}
Similarly, because $Q_a$ is projective,
\begin{equation*}
X_a^\dagger \cdot (X \cdot X^\dagger) \cdot X_a
= (X_a^\dagger \cdot X) \cdot (X^\dagger \cdot X_a)
= Q_a \cdot Q_a
= Q_a.
\end{equation*}
Finally,
\begin{equation*}
X_a^\dagger \cdot (I_{m \times m}) \cdot X_a
= X_a^\dagger \cdot X_a = Q_a.
\end{equation*}
Putting these together,
\begin{align*}
&X_a^\dagger \cdot (X \cdot X^\dagger - I_{m \times m})^2 \cdot X_a\\
 =~& X_a^\dagger \cdot (X \cdot X^\dagger \cdot X \cdot X^\dagger  - 2 \cdot X\cdot X^\dagger + I_{m \times m}) \cdot X_a\\
    =~& Q_a \cdot Q \cdot Q_a  - 2 \cdot Q_a  + Q_a\\
    =~& Q_a \cdot Q \cdot Q_a  -  Q_a.
\end{align*}
This completes the proof.
\end{proof}

Now we are ready to state the projective sub-measurement~$P$
which should approximate~$A$.
Before doing so, we give some intuition for the construction.

\begin{remark}
Suppose that the $Q_a$'s actually formed a projective measurement.
This would imply that the vectors $\ket{v_{a, i}}$, over all $a$ and $1 \leq i \leq m_a$ form an orthonormal set.
Then the SVD would actually have already been provided in \Cref{eq:looks-like-singular-value-decomposition};
for each $a$ and $1 \leq i \leq m_a$,
the corresponding singular value would be~$1$ and the corresponding left- and right-singular vectors would be $\ket{a, i}$ and $\ket{v_{a, i}}$, respectively.
In particular, we would have $\Sigma = I_{m \times d}$ and $X = U \cdot I_{m \times d}\cdot V^\dagger$.
\end{remark}

In reality, we don't know that $\Sigma = I_{m \times d}$.
However, we will construct~$P = \{P_a\}$ as if it were.
This motivates the following definition.

\begin{definition}[Definition of~$P$]
Define the matrix 
\begin{equation*}
\widehat{X} = U  \cdot I_{m \times d} \cdot V^\dagger.
\end{equation*}
In addition, for each $a$, define the matrices
\begin{equation*}
\widehat{X}_a  = T_a \cdot \widehat{X},
\quad P_a  = \widehat{X}_a^\dagger \cdot \widehat{X}_a.
\end{equation*}
\end{definition}

We now give analogues of \Cref{lem:qa-restated,lem:X-squared} for the~$P$ matrices.

\begin{lemma}[$P_a$ restated]\label{lem:pa-restated}
For each~$a$,
\begin{equation*}
P_a = \widehat{X}^\dagger \cdot T_a \cdot \widehat{X} = \widehat{X}_a^\dagger \cdot \widehat{X}.
\end{equation*}
\end{lemma}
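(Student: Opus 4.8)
The plan is to mimic the proof of \Cref{lem:qa-restated} essentially verbatim, since the matrices $\widehat{X}_a$ and $\widehat{X}$ here play exactly the roles that $X_a$ and $X$ played there. The only structural facts I will need are the defining relation $\widehat{X}_a = T_a \cdot \widehat{X}$, together with the fact that $T = \{T_a\}$ is a \emph{projective} measurement on $\C^m$: each $T_a = \sum_{i=1}^{m_a} \ket{a,i}\bra{a,i}$ is a projection onto the span of an orthonormal set, so $T_a^\dagger = T_a$ and $T_a^2 = T_a$.

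First I would establish the first equality. Starting from the definition $P_a = \widehat{X}_a^\dagger \cdot \widehat{X}_a$ and substituting $\widehat{X}_a = T_a \widehat{X}$,
\[
P_a = (T_a \widehat{X})^\dagger (T_a \widehat{X}) = \widehat{X}^\dagger \cdot T_a^\dagger \cdot T_a \cdot \widehat{X} = \widehat{X}^\dagger \cdot T_a \cdot \widehat{X},
\]
where the last step uses $T_a^\dagger T_a = T_a^2 = T_a$. Next I would establish the second equality by again substituting $\widehat{X}_a = T_a \widehat{X}$, this time into $\widehat{X}_a^\dagger \widehat{X}$:
\[
\widehat{X}_a^\dagger \cdot \widehat{X} = (T_a \widehat{X})^\dagger \cdot \widehat{X} = \widehat{X}^\dagger \cdot T_a^\dagger \cdot \widehat{X} = \widehat{X}^\dagger \cdot T_a \cdot \widehat{X},
\]
using $T_a^\dagger = T_a$. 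Combining the two displays gives $P_a = \widehat{X}^\dagger T_a \widehat{X} = \widehat{X}_a^\dagger \widehat{X}$, which is the claim.

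There is essentially no obstacle: this is a bookkeeping identity, and the one point to be careful about is that it relies only on the Hermiticity and idempotence of $T_a$ (both immediate from its expression on the orthonormal set $\{\ket{a,i}\}$) and not on any property of $\widehat{X}$ itself. In particular, in contrast to \Cref{lem:qa-restated}, we do not need $\sum_a T_a = I$ for this statement, and we make no use of the singular value structure of $X$ or of the relation $\widehat{X} = U \cdot I_{m \times d} \cdot V^\dagger$. The purpose of recording this lemma is that the "sandwich" form $P_a = \widehat{X}^\dagger T_a \widehat{X}$ mirrors the form $Q_a = X^\dagger T_a X$ from \Cref{lem:qa-restated}, which is what will make the subsequent comparison between $P$ and $Q$ — and hence between $P$ and $A$ — tractable.
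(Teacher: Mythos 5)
Your proof is correct and takes exactly the same route as the paper's: substitute $\widehat{X}_a = T_a \widehat{X}$ and use that $T_a$ is a Hermitian projector. The only difference is that you spell out the two facts $T_a^\dagger T_a = T_a$ and $T_a^\dagger = T_a$ at each step, whereas the paper compresses them into a single reference to projectivity; your closing remark that completeness of $T$ is not needed here is also accurate.
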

\begin{proof}
This follows from the definition of $\widehat{X}^\dagger_a$ and the fact that $T$ is a projective measurement:
\begin{equation*}
P_a
= \widehat{X}_a^\dagger \cdot \widehat{X}_a
= (\widehat{X}^\dagger \cdot T_a) \cdot (T_a  \cdot \widehat{X})
= \widehat{X}^\dagger \cdot T_a \cdot \widehat{X} = \widehat{X}_a^\dagger \cdot \widehat{X}.\qedhere
\end{equation*}
\end{proof}

\begin{lemma}[$\widehat{X}$ squared]\label{lem:X-hat-squared}
\begin{equation*}
\widehat{X} \cdot \widehat{X}^\dagger = I_{m \times m}.
\ignore{
\quad
\text{and}
\quad
\widehat{X}^\dagger \cdot \widehat{X} = V \cdot I_{d \times m} \cdot I_{m \times d} \cdot V^\dagger.
}
\end{equation*}
\end{lemma}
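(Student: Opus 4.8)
The plan is to mimic exactly the computation in \Cref{lem:X-squared}, but with $\Sigma_{m\times d}$ replaced by the rectangular identity $I_{m\times d}$. So first I would substitute the definition $\widehat{X} = U \cdot I_{m\times d} \cdot V^\dagger$ and expand:
\begin{equation*}
\widehat{X} \cdot \widehat{X}^\dagger
= (U \cdot I_{m \times d} \cdot V^\dagger) \cdot (U \cdot I_{m \times d} \cdot V^\dagger)^\dagger
= U \cdot I_{m \times d} \cdot V^\dagger \cdot V \cdot I_{d \times m} \cdot U^\dagger,
\end{equation*}
using that $(I_{m\times d})^\dagger = I_{d\times m}$. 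Then I would use that $V$ is a $d\times d$ unitary to replace $V^\dagger V$ by $I_{d\times d}$, and note $I_{m\times d}\cdot I_{d\times d} \cdot I_{d\times m} = I_{m\times d}\cdot I_{d\times m} = I_{m\times m}$ since $m \le d$ (each of the $m$ columns/rows survives). This leaves $U \cdot I_{m\times m} \cdot U^\dagger$, and since $U$ is an $m\times m$ unitary this equals $U U^\dagger = I_{m\times m}$, as desired.

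There is essentially no obstacle here: the only thing to be careful about is the bookkeeping with the rectangular identity matrices $I_{h\times w}$ and the dimension constraint $m\le d$ (which holds because $m = \sum_a m_a = \sum_a \mathrm{rank}(Q_a) \le d$ by \Cref{lem:projective-low-rank-sum}), exactly as in the proof of \Cref{lem:X-squared}. If the commented-out second identity $\widehat{X}^\dagger \cdot \widehat{X} = V \cdot I_{d\times m}\cdot I_{m\times d}\cdot V^\dagger$ were also wanted, it would follow the same way, this time using that $U$ is an $m\times m$ unitary so $U^\dagger U = I_{m\times m}$; but for the stated lemma only the first identity is needed, and the computation above suffices.
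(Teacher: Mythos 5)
Your proposal is correct and follows the paper's proof essentially verbatim: expand $\widehat{X}\widehat{X}^\dagger$ using the definition, cancel $V^\dagger V = I_{d\times d}$, collapse the rectangular identities to $I_{m\times m}$ using $m\le d$, and then cancel $U U^\dagger$. The additional remark justifying $m\le d$ via \Cref{lem:projective-low-rank-sum} is a correct and helpful clarification that the paper leaves implicit.
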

\begin{proof}
First,
\begin{align*}
\widehat{X} \cdot \widehat{X}^\dagger
&= (U  \cdot I_{m \times d} \cdot V^\dagger) \cdot (U  \cdot I_{m \times d} \cdot V^\dagger)^\dagger\\
&= U  \cdot I_{m \times d} \cdot V^\dagger \cdot V \cdot I_{d \times m} \cdot U^\dagger\\
&= U  \cdot I_{m \times d} \cdot I_{d \times d} \cdot I_{d \times m} \cdot U^\dagger \tag{because~$V$ is a $d \times d$ unitary}\\
&= U  \cdot I_{m \times m} \cdot U^\dagger \tag{because~$m \leq d$}\\
&= I_{m \times m},
\end{align*}
where the last step uses the fact that~$U$ is an $m \times m$ unitary.
\ignore{
Second,
\begin{align*}
\widehat{X}^\dagger \cdot \widehat{X}
& = (U  \cdot I_{m \times d} \cdot V^\dagger)^\dagger \cdot (U  \cdot I_{m \times d} \cdot V^\dagger)\\
& = V \cdot I_{d \times m} \cdot U^\dagger \cdot U  \cdot I_{m \times d} \cdot V^\dagger\\
& = V \cdot I_{d \times m} \cdot I_{m \times m}  \cdot I_{m \times d} \cdot V^\dagger \tag{because~$U$ is an $m \times m$ unitary}\\
& = V \cdot I_{d \times m} \cdot I_{m \times d} \cdot V^\dagger.
\end{align*}
We note that $m \leq d$, and so $I_{d \times m} \cdot I_{m \times d}$ is not necessarily equal to $I_{d \times d}$.
}
\end{proof}

Finally, we show two lemmas on quantities involving both $X$ and~$\widehat{X}$.

\begin{lemma}[$X$ times $\widehat{X}$]\label{lem:X-times-X-hat}
\begin{equation*}
X \cdot \widehat{X}^\dagger = U  \cdot \Sigma_{m \times m} \cdot U^\dagger,
\quad
\text{and}
\quad
X^\dagger \cdot \widehat{X} = \sqrt{Q}.
\end{equation*}
\end{lemma}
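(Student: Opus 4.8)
The plan is to compute both products directly from the singular value decomposition $X = U \cdot \Sigma_{m \times d} \cdot V^\dagger$ and the definition $\widehat{X} = U \cdot I_{m \times d} \cdot V^\dagger$, using that $U$ is an $m \times m$ unitary, $V$ is a $d \times d$ unitary, and the bookkeeping rules for the padded diagonal matrices $\Sigma_{h \times w}$, together with the identity $Q = V \cdot (\Sigma_{d \times d})^2 \cdot V^\dagger$ from \Cref{lem:X-squared}.

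For the first identity, I would expand
$X \cdot \widehat{X}^\dagger = (U \cdot \Sigma_{m \times d} \cdot V^\dagger) \cdot (U \cdot I_{m \times d} \cdot V^\dagger)^\dagger = U \cdot \Sigma_{m \times d} \cdot V^\dagger \cdot V \cdot I_{d \times m} \cdot U^\dagger$. Since $V$ is a $d \times d$ unitary, $V^\dagger V = I_{d \times d}$, so this collapses to $U \cdot \Sigma_{m \times d} \cdot I_{d \times m} \cdot U^\dagger$. It then remains to observe that $\Sigma_{m \times d} \cdot I_{d \times m} = \Sigma_{m \times m}$: checking entrywise, the $(i,j)$ entry of the product is nonzero only when $i = j \leq m$, in which case it equals $(\Sigma_{m \times d})_{i,i}$, which is exactly $(\Sigma_{m \times m})_{i,i}$ by the notation convention (here we use $m \leq d$). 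This gives $X \cdot \widehat{X}^\dagger = U \cdot \Sigma_{m \times m} \cdot U^\dagger$.

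For the second identity, I would similarly write $X^\dagger \cdot \widehat{X} = V \cdot \Sigma_{d \times m} \cdot U^\dagger \cdot U \cdot I_{m \times d} \cdot V^\dagger$; since $U$ is an $m \times m$ unitary, $U^\dagger U = I_{m \times m}$, giving $V \cdot \Sigma_{d \times m} \cdot I_{m \times d} \cdot V^\dagger$, and by the same entrywise computation $\Sigma_{d \times m} \cdot I_{m \times d} = \Sigma_{d \times d}$, so $X^\dagger \cdot \widehat{X} = V \cdot \Sigma_{d \times d} \cdot V^\dagger$. To identify this with $\sqrt{Q}$, I would note that $(V \cdot \Sigma_{d \times d} \cdot V^\dagger)^2 = V \cdot (\Sigma_{d \times d})^2 \cdot V^\dagger = Q$ by \Cref{lem:X-squared}, while $V \cdot \Sigma_{d \times d} \cdot V^\dagger$ is positive semidefinite because $\Sigma_{d \times d}$ is diagonal with nonnegative real entries; hence it is the unique positive semidefinite square root of $Q$, namely $\sqrt{Q}$.

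I do not expect any genuine obstacle: the argument is routine linear algebra once the dimension bookkeeping for the $\Sigma$'s is handled carefully. The only step deserving a moment's attention is verifying the two matrix products $\Sigma_{m \times d} \cdot I_{d \times m} = \Sigma_{m \times m}$ and $\Sigma_{d \times m} \cdot I_{m \times d} = \Sigma_{d \times d}$, which follow directly from the definitions of the padded matrices and the hypothesis $m \leq d$.
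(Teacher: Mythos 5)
Your proof is correct and takes essentially the same approach as the paper: expand the SVD expressions, use unitarity of $U$ and $V$ to cancel the middle factors, and track the padded diagonal matrices entrywise, then identify $V \cdot \Sigma_{d\times d} \cdot V^\dagger$ as $\sqrt{Q}$ via \Cref{lem:X-squared}. Your explicit justification that $V\cdot\Sigma_{d\times d}\cdot V^\dagger$ is positive semidefinite and hence the unique PSD square root of $Q$ is a bit more careful than the paper's terse $\sqrt{V\cdot(\Sigma_{d\times d})^2\cdot V^\dagger} = \sqrt{Q}$ step, but it's the same argument.
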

\begin{proof}
First,
\begin{align*}
X \cdot \widehat{X}^\dagger
&= (U  \cdot \Sigma_{m \times d} \cdot V^\dagger) \cdot (U  \cdot I_{m \times d} \cdot V^\dagger)^\dagger\\
&= U  \cdot \Sigma_{m \times d} \cdot V^\dagger \cdot V \cdot I_{d \times m} \cdot U^\dagger\\
&= U  \cdot \Sigma_{m \times d} \cdot I_{d \times d} \cdot I_{d \times m} \cdot U^\dagger \tag{because~$V$ is a $d \times d$ unitary}\\
&= U  \cdot \Sigma_{m \times m} \cdot U^\dagger \tag{because~$m \leq d$}.
\end{align*}
Second,
\begin{align*}
X^\dagger \cdot \widehat{X}
&= (U  \cdot \Sigma_{m \times d} \cdot V^\dagger)^\dagger \cdot (U  \cdot I_{m \times d} \cdot V^\dagger)\\
&= V \cdot \Sigma_{d \times m} \cdot U^\dagger \cdot U \cdot I_{m \times d} \cdot V^\dagger\\
&= V \cdot \Sigma_{d \times m} \cdot I_{m \times m} \cdot I_{m \times d} \cdot V^\dagger \tag{because~$U$ is an $m \times m$ unitary}\\
&= V \cdot \Sigma_{d \times d} \cdot V^\dagger\\
&= \sqrt{V \cdot (\Sigma_{d \times d})^2 \cdot V^\dagger}\\
&= \sqrt{Q}. \tag{by \Cref{lem:X-squared}}
\end{align*}
This completes the proof.
\end{proof}

\begin{lemma}[Squared difference]\label{lem:squared-difference}
\begin{equation*}
(X - \widehat{X}) \cdot (X - \widehat{X})^\dagger \leq (X \cdot X^\dagger - I_{m \times m})^2.
\end{equation*}
\end{lemma}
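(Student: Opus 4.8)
The plan is to diagonalize both sides of the claimed inequality using the singular value decomposition $X = U \cdot \Sigma_{m \times d} \cdot V^\dagger$. Writing $\widehat{X} = U \cdot I_{m \times d} \cdot V^\dagger$, I would first observe that
\begin{equation*}
X - \widehat{X} = U \cdot (\Sigma_{m \times d} - I_{m \times d}) \cdot V^\dagger,
\end{equation*}
so that, using $V^\dagger V = I_{d \times d}$,
\begin{equation*}
(X - \widehat{X})(X - \widehat{X})^\dagger = U \cdot (\Sigma_{m \times d} - I_{m \times d})(\Sigma_{d \times m} - I_{d \times m}) \cdot U^\dagger.
\end{equation*}
Here the matrix $M := \Sigma_{m \times d} - I_{m \times d}$ has only diagonal nonzero entries $\sigma_i - 1$, where $\sigma_i := (\Sigma_{m \times m})_{i,i}$, since the $d - m$ extra columns of both $\Sigma_{m \times d}$ and $I_{m \times d}$ vanish. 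Hence $M M^\dagger$ is the $m \times m$ diagonal matrix with $i$-th entry $(\sigma_i - 1)^2$.

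Next I would handle the right-hand side using \Cref{lem:X-squared}, which gives $X \cdot X^\dagger = U \cdot (\Sigma_{m \times m})^2 \cdot U^\dagger$. Since $U$ is an $m \times m$ unitary we have $I_{m \times m} = U \cdot I_{m \times m} \cdot U^\dagger$, and therefore
\begin{equation*}
(X \cdot X^\dagger - I_{m \times m})^2 = U \cdot \big((\Sigma_{m \times m})^2 - I_{m \times m}\big)^2 \cdot U^\dagger,
\end{equation*}
which is the conjugate by $U$ of the $m \times m$ diagonal matrix with $i$-th entry $(\sigma_i^2 - 1)^2$.

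With both sides now written as $U \cdot (\text{diagonal}) \cdot U^\dagger$, the operator inequality reduces to the scalar inequality
\begin{equation*}
(\sigma_i - 1)^2 \leq (\sigma_i^2 - 1)^2 = (\sigma_i - 1)^2 (\sigma_i + 1)^2
\end{equation*}
for each $i$, which holds because singular values satisfy $\sigma_i \geq 0$, so $(\sigma_i + 1)^2 \geq 1$. I expect no genuine obstacle here: the argument is pure bookkeeping with the rectangular $\Sigma$ and identity matrices, and the only points requiring care are (i) that the extra columns of the rectangular matrices vanish, so that $M M^\dagger$ really is diagonal with the stated entries, and (ii) the (elementary) fact that nonnegativity of the singular values is exactly what makes the final scalar inequality go through.
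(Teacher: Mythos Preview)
Your proposal is correct and is essentially the same argument as the paper's: both diagonalize in the SVD basis, obtaining $(X-\widehat{X})(X-\widehat{X})^\dagger = U(\Sigma_{m\times m}-I_{m\times m})^2 U^\dagger$ and $(XX^\dagger - I_{m\times m})^2 = U((\Sigma_{m\times m})^2 - I_{m\times m})^2 U^\dagger$, and then reduce to the scalar inequality $(\sigma_i-1)^2 \leq (\sigma_i-1)^2(\sigma_i+1)^2$ via $\sigma_i \geq 0$. The only cosmetic difference is that the paper reaches the diagonalized left-hand side by expanding into the cross terms $XX^\dagger$, $X\widehat{X}^\dagger$, $\widehat{X}\widehat{X}^\dagger$ and citing \Cref{lem:X-squared,lem:X-hat-squared,lem:X-times-X-hat}, whereas you factor $X-\widehat{X}$ directly.
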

\begin{proof}
By~\Cref{lem:X-squared,lem:X-hat-squared,lem:X-times-X-hat},
\begin{align*}
(X - \widehat{X}) \cdot (X - \widehat{X})^\dagger
& = X \cdot X^\dagger  - X\cdot \widehat{X}^\dagger - \widehat{X} \cdot X^\dagger + \widehat{X} \cdot \widehat{X}^\dagger\\
& = U \cdot \Sigma_{m \times m}^2 \cdot U^\dagger  - 2 \cdot U \cdot \Sigma_{m \times m} \cdot U^\dagger+ I_{m \times m}\\
& = U \cdot \left(\Sigma_{m \times m}^2   - 2 \cdot  \Sigma_{m \times m} +  I_{m \times m} \right) \cdot U^\dagger\\
& = U \cdot \left(\Sigma_{m \times m}  -  I_{m \times m}\right)^2 \cdot U^\dagger.
\end{align*}
Because $\Sigma_{m \times m}$ and $I_{m \times m}$ are commuting,
\begin{equation*}
(\Sigma_{m \times m} - I_{m \times m})^2
\leq (\Sigma_{m \times m} + I_{m \times m})^2 \cdot (\Sigma_{m \times m} - I_{m \times m})^2
= (\Sigma_{m\times m}^2 - I_{m \times m})^2.
\end{equation*}
As a result,
\begin{align*}
U \cdot \left(\Sigma_{m \times m}  -  I_{m \times m}\right)^2 \cdot U^\dagger
& \leq U \cdot \left(\Sigma_{m \times m}^2 -  I_{m \times m}\right)^2 \cdot U^\dagger\\
& = \left(U \cdot \Sigma_{m \times m}^2 \cdot U^\dagger - I_{m \times m} \right)^2\\
&= (X \cdot X^\dagger - I_{m \times m})^2. \tag{by \Cref{lem:X-squared}}
\end{align*}
This completes the proof.
\end{proof}

The first property we need of~$P$ is that it is a projective sub-measurement.
This is shown in the following lemma.

\begin{lemma}[Projectivity of~$P$]\label{lem:P-projectivity}
$P = \{P_a\}$ forms a projective sub-measurement.
\end{lemma}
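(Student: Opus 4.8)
The plan is to verify the two defining properties of a projective sub-measurement directly from the algebraic identities already recorded for $\widehat{X}$ and $T$: namely that each $P_a$ is a projection (Hermitian and idempotent), and that $\sum_a P_a \leq I$. Both reduce to straightforward manipulations once one invokes \Cref{lem:pa-restated} (which gives $P_a = \widehat{X}^\dagger T_a \widehat{X} = \widehat{X}_a^\dagger \widehat{X}$) together with \Cref{lem:X-hat-squared} (which gives $\widehat{X}\widehat{X}^\dagger = I_{m\times m}$, the key simplification that makes $\widehat{X}$ behave like a partial isometry).

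For idempotency, first I would compute $\widehat{X}_a \widehat{X}_a^\dagger = T_a \widehat{X}\widehat{X}^\dagger T_a = T_a I_{m\times m} T_a = T_a$, using \Cref{lem:X-hat-squared} and the fact that $T_a$ is a projection. Then, using $\widehat{X}_a = T_a\widehat{X}$ and $T_a^3 = T_a$,
\[
P_a^2 = \widehat{X}_a^\dagger (\widehat{X}_a \widehat{X}_a^\dagger) \widehat{X}_a = \widehat{X}_a^\dagger T_a \widehat{X}_a = \widehat{X}^\dagger T_a^3 \widehat{X} = \widehat{X}^\dagger T_a \widehat{X} = P_a,
\]
the last step being \Cref{lem:pa-restated}; Hermiticity and positive semidefiniteness are immediate from $P_a = \widehat{X}_a^\dagger \widehat{X}_a$. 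For the sub-measurement inequality, since $T = \{T_a\}$ is a projective measurement on $\C^m$ we have $\sum_a T_a = I_{m\times m}$, so \Cref{lem:pa-restated} gives $\sum_a P_a = \widehat{X}^\dagger (\sum_a T_a)\widehat{X} = \widehat{X}^\dagger \widehat{X}$. Expanding the SVD $\widehat{X} = U I_{m\times d} V^\dagger$ and using that $U$ is an $m\times m$ unitary yields $\widehat{X}^\dagger \widehat{X} = V\, I_{d\times m}\, I_{m\times d}\, V^\dagger$; since $m \leq d$, the matrix $I_{d\times m} I_{m\times d}$ is the $d\times d$ diagonal projection onto the first $m$ coordinates, hence $\leq I_{d\times d}$, and conjugating by the unitary $V$ preserves this, giving $\sum_a P_a \leq I$.

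There is no genuine obstacle here — the lemma is a formal consequence of the preceding identities, essentially the $\widehat{X}$-analogue of the $X$-computations in \Cref{lem:qa-restated,lem:X-squared}, and is in fact cleaner because $\widehat{X}\widehat{X}^\dagger = I_{m\times m}$ holds exactly rather than approximately. The only point requiring a moment's care is bookkeeping with the rectangular identity matrices $I_{h\times w}$: one must note that $I_{d\times m} I_{m\times d}$ is a rank-$m$ projection (using $m \leq d$) rather than the full identity $I_{d\times d}$, which is precisely why $P$ is only a sub-measurement and not necessarily a measurement.
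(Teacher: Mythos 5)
Your proof is correct and rests on the same two ingredients as the paper's: $P_a = \widehat{X}^\dagger T_a \widehat{X}$ from \Cref{lem:pa-restated} and $\widehat{X}\widehat{X}^\dagger = I_{m\times m}$ from \Cref{lem:X-hat-squared}. The only organizational difference is that the paper computes $P_a P_b = \bone[a=b]\,P_a$ in a single chain, obtaining idempotency and mutual orthogonality at once (from which $\sum_a P_a \leq I$ is automatic since a sum of pairwise orthogonal projections is a projection), whereas you prove idempotency and the bound $\sum_a P_a = \widehat{X}^\dagger\widehat{X} = V\,I_{d\times m}I_{m\times d}\,V^\dagger \leq I$ as two separate short computations; both routes are equally valid.
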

\begin{proof}
Let $a, b$ be (possibly distinct) outcomes.
Then
\begin{align*}
P_a \cdot P_b
&= (\widehat{X}^\dagger \cdot T_a \cdot \widehat{X})
	\cdot (\widehat{X}^\dagger \cdot T_b \cdot \widehat{X}) \tag{by \Cref{lem:pa-restated}}\\
&= (\widehat{X}^\dagger \cdot T_a \cdot I_{m \times m} \cdot T_b \cdot \widehat{X}) \tag{by \Cref{lem:X-hat-squared}}\\
&= (\widehat{X}^\dagger \cdot T_a  \cdot \widehat{X}) \cdot \bone[a = b] \tag{because~$T$ is a projective measurement}\\
&= P_a  \cdot \bone[a=b]. \tag{by \Cref{lem:pa-restated}}
\end{align*}
This completes the proof.
\end{proof}

The second property we need of~$P$ is that it is close to~$A$.
We will first show that it is close to~$Q$.

\begin{lemma}[$P$ is close to~$Q$]\label{lem:P-Q-approx}
\begin{equation*}
Q_a \otimes I \approx_{30\zeta^{1/4}} P_a \otimes I.
\end{equation*}
\end{lemma}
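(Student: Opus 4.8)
The plan is to bound $\sum_a \Vert (Q_a - P_a) \otimes I \ket{\psi}\Vert^2$ directly, starting from the algebraic identity
\[ Q_a - P_a \;=\; X^\dagger T_a X - \widehat{X}^\dagger T_a \widehat{X} \;=\; (X - \widehat{X})^\dagger T_a X \;+\; \widehat{X}^\dagger T_a (X - \widehat{X}), \]
which uses $Q_a = X^\dagger T_a X$ (\Cref{lem:qa-restated}) and $P_a = \widehat{X}^\dagger T_a \widehat{X}$ (\Cref{lem:pa-restated}). Applying \Cref{prop:triangle-inequality-for-vectors-squared} with $k = 2$ to the two summands and then summing over $a$ reduces the task to bounding
\[ (\mathrm I) := \sum_a \Vert \big((X - \widehat{X})^\dagger T_a X\big) \otimes I \ket{\psi} \Vert^2 , \qquad (\mathrm{II}) := \sum_a \Vert \big(\widehat{X}^\dagger T_a (X - \widehat{X})\big) \otimes I \ket{\psi} \Vert^2, \]
with $\sum_a \Vert (Q_a - P_a) \otimes I \ket{\psi}\Vert^2 \le 2(\mathrm I) + 2(\mathrm{II})$. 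This particular, asymmetric split is the crux of the argument: it is chosen precisely so that each of the two error terms matches machinery already in hand.

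For $(\mathrm I)$, I would rewrite each summand as $\bra{\psi} X^\dagger T_a (X - \widehat{X})(X - \widehat{X})^\dagger T_a X \otimes I \ket{\psi}$, apply the operator inequality $(X - \widehat{X})(X - \widehat{X})^\dagger \le (X X^\dagger - I_{m \times m})^2$ of \Cref{lem:squared-difference} conjugated first by $T_a$ and then by $X$, and recognize $X^\dagger T_a = X_a^\dagger$ via \Cref{lem:xa-t}. This bounds $(\mathrm I)$ by $\sum_a \bra{\psi} X_a^\dagger (X X^\dagger - I_{m \times m})^2 X_a \otimes I \ket{\psi}$, which equals $\sum_a \bra{\psi}(Q_a Q Q_a - Q_a) \otimes I \ket{\psi}$ by \Cref{lem:X-expression-to-Q-expression}, hence is at most $4\sqrt{\zeta}$ by \Cref{lem:q-almost-projective}. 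For $(\mathrm{II})$, I would use $\widehat{X}\widehat{X}^\dagger = I_{m \times m}$ (\Cref{lem:X-hat-squared}) and projectivity of $T_a$ to collapse the summand to $\bra{\psi}(X - \widehat{X})^\dagger T_a (X - \widehat{X}) \otimes I \ket{\psi}$, so that after summing $\sum_a T_a = I_{m \times m}$ one gets $(\mathrm{II}) = \bra{\psi}(X - \widehat{X})^\dagger (X - \widehat{X}) \otimes I \ket{\psi}$. Expanding $(X - \widehat{X})^\dagger(X - \widehat{X}) = Q - 2\sqrt{Q} + \widehat{X}^\dagger\widehat{X}$ using \Cref{lem:X-squared,lem:X-times-X-hat}, bounding $\widehat{X}^\dagger\widehat{X} \le I$, and combining the eigenvalue estimate $Q \le \sqrt{Q} + 2\sqrt{\zeta}\,I$ (a consequence of $Q \le (1 + 2\sqrt{\zeta})I$ from \Cref{lem:projective-low-rank-sum}) with $\bra{\psi}\sqrt{Q} \otimes I \ket{\psi} \ge 1 - 12\zeta^{1/4}$ (\Cref{lem:sqrt-Q-completeness}) gives $(\mathrm{II}) \le 12\zeta^{1/4} + 2\sqrt{\zeta}$.

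Putting these together yields $\sum_a \Vert (Q_a - P_a) \otimes I \ket{\psi}\Vert^2 \le 8\sqrt{\zeta} + 2(12\zeta^{1/4} + 2\sqrt{\zeta}) = 24\zeta^{1/4} + 12\sqrt{\zeta}$. To land on the clean bound $30\zeta^{1/4}$ I would finish with a short case analysis: if $\zeta \le 1/16$ then $12\sqrt{\zeta} \le 6\zeta^{1/4}$ and we are done; if $1/16 < \zeta \le \tfrac14$ (we may assume $\zeta \le \tfrac14$), the crude bound $\sum_a \Vert (Q_a - P_a) \otimes I \ket{\psi}\Vert^2 = \bra{\psi}\big(\sum_a (Q_a - P_a)^2\big) \otimes I \ket{\psi} \le \bra{\psi}(2Q + 2P) \otimes I\ket{\psi} \le 6$, obtained from projectivity of the $Q_a$ and $P_a$ (\Cref{lem:projective-low-rank-sum,lem:P-projectivity}), $Q \le 2I$, and $P \le I$, is already below $30\zeta^{1/4} > 15$ in this range.

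The main obstacle will be choosing the decomposition of $Q_a - P_a$ correctly. The asymmetric split above is essentially forced: the term carrying $(XX^\dagger - I_{m\times m})^2$ sandwiched by $T_a$ must end up conjugated by $X$ (not $\widehat{X}$) so that \Cref{lem:X-expression-to-Q-expression} and hence \Cref{lem:q-almost-projective} apply, while the other summand must be the one with $\widehat{X}\widehat{X}^\dagger = I_{m\times m}$ so that the $T_a$'s cancel; the more symmetric splittings leave a pinched expression $\sum_a T_a (XX^\dagger - I_{m\times m})^2 T_a$ conjugated by $\widehat{X}$, for which nothing available controls the pinching. A secondary, purely bookkeeping, point is that the constant only reaches $30\zeta^{1/4}$ after the case split, and tracking the $12\sqrt{\zeta}$-versus-$\zeta^{1/4}$ gap relies on the standing assumption $\zeta \le \tfrac14$.
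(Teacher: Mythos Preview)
Your proof is correct but takes a genuinely different route from the paper. The paper expands $\sum_a\bra{\psi}(Q_a-P_a)^2\otimes I\ket{\psi}$ into four terms $\bra{\psi}Q\otimes I\ket{\psi}+\bra{\psi}P\otimes I\ket{\psi}-\sum_a\bra{\psi}Q_aP_a\otimes I\ket{\psi}-\sum_a\bra{\psi}P_aQ_a\otimes I\ket{\psi}$ using projectivity, bounds the first two terms by $1+2\sqrt{\zeta}$ and $1$, and handles the cross term by writing $Q_aP_a=X_a^\dagger X P_a$ and swapping $X$ for $\widehat{X}$ via a single Cauchy--Schwarz step (using \Cref{lem:squared-difference} and \Cref{lem:q-almost-projective}) to obtain $\sum_a\bra{\psi}Q_aP_a\otimes I\ket{\psi}\approx_{2\zeta^{1/4}}\bra{\psi}\sqrt{Q}\otimes I\ket{\psi}$; this yields $2\sqrt{\zeta}+28\zeta^{1/4}\le 30\zeta^{1/4}$ directly under $\zeta\le 1$ with no case split. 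Your additive decomposition $Q_a-P_a=(X-\widehat{X})^\dagger T_aX+\widehat{X}^\dagger T_a(X-\widehat{X})$ is arguably more transparent and avoids the complex-valued intermediate step, and it leans on the same core lemmas; the tradeoff is the slightly looser constant $24\zeta^{1/4}+12\sqrt{\zeta}$, which forces the final case analysis to match the stated bound of $30\zeta^{1/4}$ (whereas $36\zeta^{1/4}$ would follow immediately). Both arguments ultimately rest on \Cref{lem:squared-difference}, \Cref{lem:q-almost-projective}, and \Cref{lem:sqrt-Q-completeness}; they just organize the algebra differently.
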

\begin{proof}
Our goal is to upper-bound the quantity
\begin{align}
&\sum_a \bra{\psi} (Q_a - P_a)^2 \otimes I \ket{\psi}\nonumber\\
=~& \sum_a \bra{\psi} (Q_a)^2 \otimes I \ket{\psi} + \sum_a \bra{\psi} (P_a)^2 \otimes I \ket{\psi} - \sum_a \bra{\psi} Q_a P_a \otimes I \ket{\psi}
	- \sum_a \bra{\psi} P_a Q_a \otimes I \ket{\psi}\nonumber\\
=~& \bra{\psi} Q \otimes I \ket{\psi} +  \bra{\psi} P\otimes I \ket{\psi} - \sum_a \bra{\psi} Q_a P_a \otimes I \ket{\psi}
	- \sum_a \bra{\psi} P_a Q_a \otimes I \ket{\psi},\label{eq:P-Q-thing-to-bound}
\end{align}
where the last step uses the projectivity of~$Q$ and~$P$.
We bound the four terms in \Cref{eq:P-Q-thing-to-bound} separately.
First, by \Cref{lem:projective-low-rank-sum},
\begin{equation*}
 \bra{\psi} Q \otimes I \ket{\psi}
 \leq (1+2\sqrt{\zeta}) \cdot \bra{\psi} I \otimes I \ket{\psi} \leq 1+2\sqrt{\zeta}.
\end{equation*}
Second, because~$P$ is a sub-measurement,
\begin{equation*}
\bra{\psi} P \ot I\ket{\psi} \leq 1.
\end{equation*}

The third term and fourth terms are significantly more complicated to bound.
As they are complex conjugates of each other, we can write their sum as
\begin{equation}\label{eq:complex-conjugates}
\sum_a \bra{\psi} Q_a P_a \otimes I \ket{\psi}
	+ \sum_a \bra{\psi} P_a Q_a \otimes I \ket{\psi}
= 2 \cdot \mathfrak{R}\Big(\sum_a \bra{\psi} Q_a P_a \otimes I \ket{\psi}\Big).
\end{equation}
We now focus on the expression on the right-hand side of \Cref{eq:complex-conjugates}.
To begin, we use \Cref{lem:qa-restated} to rewrite it as
\begin{equation*}
\sum_a \bra{\psi} Q_a P_a \otimes I \ket{\psi}
= \sum_a \bra{\psi} ((X_a^\dagger \cdot X) \cdot P_a)  \otimes I \ket{\psi}.
\end{equation*}
The main step will be to show that we can exchange the second $X$ for an $\widehat{X}$, i.e.
\begin{equation}\label{eq:add-a-hat}
\sum_a \bra{\psi} (X_a^\dagger \cdot X \cdot P_a) \otimes I \ket{\psi}
\approx_{2\zeta^{1/4}} \sum_a \bra{\psi} (X_a^\dagger \cdot \widehat{X} \cdot P_a)  \otimes I \ket{\psi}.
\end{equation}
To show this, we bound the magnitude of the difference.
\begin{multline*}
\Big|\sum_a \bra{\psi} ((X_a^\dagger \cdot (X - \widehat{X})) \ot I) \cdot(P_a \ot I) \ket{\psi}\Big|\\
\leq \sqrt{\sum_a \bra{\psi}(X_a^\dagger \cdot (X - \widehat{X})  \cdot (X - \widehat{X})^\dagger \cdot X_a) \ot I \ket{\psi}}\cdot \sqrt{\sum_a \bra{\psi} (P_a)^2 \ot I \ket{\psi}}.
\end{multline*}
The expression inside the first square root is
\begin{align*}
& \sum_a \bra{\psi}(X_a^\dagger \cdot (X - \widehat{X})  \cdot (X - \widehat{X})^\dagger \cdot X_a) \ot I \ket{\psi}\\
\leq~&  \sum_a \bra{\psi} (X_a^\dagger \cdot (X \cdot X^\dagger - I_{m \times m})^2 \cdot X_a) \ot I \ket{\psi}\tag{by \Cref{lem:squared-difference}}\\
=~&  \sum_a \bra{\psi} (Q_a \cdot Q \cdot Q_a - Q_a)  \ot I \ket{\psi}\tag{by \Cref{lem:X-expression-to-Q-expression}}\\
\leq~& 4\sqrt{\zeta} \cdot \bra{\psi} I \ot I \ket{\psi} \tag{by~\Cref{lem:q-almost-projective}}\\
=~& 4\sqrt{\zeta}.
\end{align*}
The expression inside the second square root is at most~$1$ because~$P$ is a sub-measurement.
Next, we claim that the \Cref{eq:add-a-hat} is in fact a real-valued expression.
To see this,
\begin{align*}
\eqref{eq:add-a-hat}
& = \sum_a \bra{\psi} (X_a^\dagger \cdot \widehat{X} \cdot P_a)  \otimes I \ket{\psi}\\
&= \sum_a \bra{\psi} (X^\dagger \cdot T_a \cdot \widehat{X} \cdot \widehat{X}^\dagger \cdot T_a \cdot \widehat{X}) \ot I \ket{\psi}
			\tag{by \Cref{lem:xa-t,lem:pa-restated}}\\
&= \sum_a \bra{\psi} (X^\dagger \cdot T_a \cdot I_{m \times m} \cdot T_a \cdot \widehat{X}) \ot I \ket{\psi} \tag{by \Cref{lem:X-hat-squared}}\\
&= \sum_a \bra{\psi} (X^\dagger \cdot T_a  \cdot \widehat{X}) \ot I \ket{\psi}\\
&=  \bra{\psi} (X^\dagger  \cdot \widehat{X}) \ot I \ket{\psi} \tag{because~$T$ is a measurement}\\
&=  \bra{\psi} \sqrt{Q} \ot I \ket{\psi} \tag{by \Cref{lem:X-times-X-hat}},
\end{align*}
which is real-valued because~$Q$ is positive semidefinite.
As a result, we have
\begin{align*}
\mathfrak{R}\Big(\sum_a \bra{\psi} Q_a P_a \otimes I \ket{\psi}\Big)
& \geq \mathfrak{R}\Big(\sum_a \bra{\psi} (X_a^\dagger \cdot \widehat{X} \cdot P_a)  \otimes I \ket{\psi}\Big) - 2\zeta^{1/4} \tag{by \Cref{eq:add-a-hat}}\\
& = \bra{\psi} \sqrt{Q} \ot I \ket{\psi}- 2\zeta^{1/4}\\
& \geq (1 - 12 \zeta^{1/4}) - 2\zeta^{1/4}. \tag{by \Cref{lem:sqrt-Q-completeness}}
\end{align*}
In total, \Cref{eq:complex-conjugates} shows that
\begin{equation*}
\sum_a \bra{\psi} Q_a P_a \otimes I \ket{\psi}
	+ \sum_a \bra{\psi} P_a Q_a \otimes I \ket{\psi}
\geq 2 \cdot(1 - 14\zeta^{1/4}).
\end{equation*}
Putting everything together, we conclude that
\begin{equation*}
\eqref{eq:P-Q-thing-to-bound}
\leq (1 + 2\sqrt{\zeta}) + 1 - 2\cdot (1-14\zeta^{1/4}) 
= 2\sqrt{\zeta} + 28 \zeta^{1/4}
\leq 30 \zeta^{1/4}.
\end{equation*}
This completes the proof.
\end{proof}

Finally, we have that
\begin{align*}
A_a \ot I
&\approx_{12 \sqrt{\zeta}} Q_a \ot I \tag{by \Cref{lem:projective-low-rank-sum}}\\
&\approx_{30 \zeta^{1/4}} P_a \ot I. \tag{by \Cref{lem:P-Q-approx}}
\end{align*}
Hence, \Cref{prop:triangle-inequality-for-approx_delta} implies that
\begin{equation*}
A_a \ot I \approx_{84 \zeta^{1/4}} P_a \ot I.
\end{equation*}
This completes the proof.
\end{proof}

\section{The main induction step}
\label{sec:induction}

We will now carry out the main inductive argument. 
The inductive hypothesis is stated as follows.

\begin{theorem}[Main induction]\label{thm:main-induction}
  Let $(\psi, A, B, L)$ be an $(\eps, \delta, \gamma)$-good symmetric strategy
  for the $(m,q,d)$ low individual degree test. Let $k \geq md$ be an integer.
  Then there exists a measurement $G \in \polymeas{m}{q}{d}$ such that on
  average over $\bu \sim \F_q^{m}$,
	\begin{equation*}
	  A^{u}_a \otimes I \simeq_{\sigma} I \otimes G_{[g(u)=a]},
	\end{equation*}
  where $\sigma = m^2 \cdot \bigl(\nu + e^{-k/(80000m^2)}\bigr)$ and
  $\nu = 1000k^2 m^2 \cdot\big(\eps^{1/1024} + \delta^{1/1024} + \gamma^{1/1024} + (d/q)^{1/1024}\big)$.
  \ignore{
\item (Completeness): \label{item:self-improvement-G-completeness}  If $G = \sum_g G_g$, then
  	\begin{equation*}
	\bra{\psi} G \otimes I \ket{\psi} \geq 1 - \kappa,
	\end{equation*}
	where
	\begin{equation*}
	\kappa = m^2 \cdot\Big(\nu + e^{- k/(80000m^2)}\Big).
	\end{equation*}
}
\end{theorem}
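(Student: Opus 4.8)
The plan is to prove \Cref{thm:main-induction} by induction on the ambient dimension~$m$, following the classical BFL template of \Cref{sec:technical-overview}. The base case $m=1$ is immediate: the axis-parallel lines test then has the single line $\ell=\F_q$, so the lines measurement $B^{\ell}$ already lies in $\polymeas{1}{q}{d}$ and $\eps$-goodness gives $A^{u}_a\ot I\simeq_{\eps}I\ot B^{\ell}_{[f(u)=a]}$, which suffices since $\eps\le\sigma$. So assume the theorem in dimension $m-1$ (with error bound $\sigma_{m-1}$, i.e.\ $\sigma$ with $m$ replaced by $m-1$) and let $(\psi,A,B,L)$ be an $(\eps,\delta,\gamma)$-good symmetric strategy for the $(m,q,d)$-test.

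\emph{Slicing and the inductive step.} For each $z\in\F_q$ put $S_z=\{x\in\F_q^m:x_m=z\}\cong\F_q^{m-1}$. The $j$-restricted diagonal lines test for $j\le m-1$ has all its lines inside individual slices, so together with the direction-$1,\dots,m-1$ axis-parallel lines this exhibits, on average over~$z$, the restriction of the strategy to $S_z$ as a good strategy for the $(m-1,q,d)$-test with only mildly degraded parameters (the degree $md$ of the diagonal-line polynomials being a harmless overestimate of $(m-1)d$, which enters only through $md/q$-type terms). The inductive hypothesis then yields, for each~$z$, a measurement $G^{z}\in\polymeas{m-1}{q}{d}$ with $A^{(w,z)}_a\ot I\simeq_{\sigma_{m-1}}I\ot G^{z}_{[g(w)=a]}$ on average over $w$ and $z$. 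I apply the refined self-improvement lemma (\Cref{thm:self-improvement-in-induction-section}, proved in \Cref{sec:self-improvement}) to the family $\{G^{z}_g\}$, obtaining projective sub-measurements $\widehat{G}^{z}\in\polysub{m-1}{q}{d}$ whose consistency error with~$A$ is reset to a universal $\zeta$ independent of $\sigma_{m-1}$ — of the shape $\poly(k)\cdot m^{O(1)}(\eps^{1/1024}+\delta^{1/1024}+\gamma^{1/1024}+(d/q)^{1/1024})+m^{2}e^{-k/\Theta(m^{2})}$ — at the price of a completeness error $\kappa\approx\sigma_{m-1}+\zeta$; the orthogonalization lemma of \Cref{sec:making-measurements-projective} is used as needed to keep everything projective.

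\emph{Pasting.} Using the diagonal-lines commutativity of \Cref{sec:commutativity-points,sec:g-comm}, I show that for distinct $z_1,\dots,z_k$ the $\widehat{G}^{z_i}$ approximately pairwise commute on $\ket{\psi}$, so the sequential measurement $\widehat{G}^{z_1}\cdots\widehat{G}^{z_k}$ is well-defined up to small error and essentially order-independent. Define $H=\{H_g\}_{g\in\polyfunc{m}{q}{d}}$ by performing this sequential measurement to read off $(g_1,\dots,g_k)$ (entries possibly $\bot$) and then \emph{robustly interpolating}: for a global $g\in\polyfunc{m}{q}{d}$ the map $z\mapsto g|_{S_z}$ is a bounded-degree polynomial in~$z$, so $(g|_{S_{z_1}},\dots,g|_{S_{z_k}})$ is a Reed--Solomon-type codeword, and provided $k\ge md$ and fewer than a $\Theta(1)$ fraction of the $g_i$ are corrupted one recovers~$g$; output that~$g$, or $\bot$ if recovery fails. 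Since each $\widehat{G}^{z_i}$ is corrupted on only a $\kappa$-fraction, a Chernoff bound shows $H$ fails on only an $O(\kappa)+e^{-\Omega(k/m^{2})}$-fraction — crucially \emph{additive}, not multiplicative, in the number of slices, which is where the $e^{-k/(80000m^{2})}$ term and the hypothesis $k\ge md$ come from. The pasting analysis of \Cref{sec:ld-pasting} then shows $H$ is consistent with~$A$: for random $u=(w,y)$, the direction-$m$ line $\ell$ through~$u$ carries a degree-$d$ polynomial~$f$ from $B^{\ell}$ with $A^{u}_a\simeq_{\eps}B^{\ell}_{[f(u)=a]}$, slice consistency forces $f(z_i)=g(w,z_i)$ for almost all~$i$, and since $f$ and $g|_{\ell}$ are degree-$d$ and agree at more than $d$ points, \Cref{lem:schwartz-zippel-total-degree} gives $f\equiv g|_{\ell}$ and hence $f(y)=g(u)$. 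Finally I complete $H$ to a measurement $G\in\polymeas{m}{q}{d}$ (\Cref{def:measurement-completion}) by routing the $\bot$ outcome to a fixed polynomial; on the part of Hilbert space where $H$ failed $G$ may disagree with~$A$, so a direct computation bounds the consistency error of~$G$ by that of~$H$ plus its completeness error, namely $\sigma_{m-1}+O(\zeta)+e^{-\Omega(k/m^{2})}$. Since the per-level increment is polynomial in $k$ and~$m$ and independent of $\sigma_{m-1}$, summing over the $m$ levels and absorbing constants gives $\sigma_m=m^{2}\bigl(\nu+e^{-k/(80000m^{2})}\bigr)$ with $\nu$ as stated.

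The hard part is the pasting. Transporting the classical interpolation argument to the quantum setting requires the sequential composition of the slice sub-measurements to act coherently enough that its joint outcome is a genuine polynomial rather than noise; this is precisely why the consistency/completeness split of~\cite{IV12} and the self-improvement reset must come \emph{before} pasting (interpolation is reliable only when the consistency error is tiny), why one measures $\Theta(k)$ slices with error-correcting rather than naive interpolation (so the completeness error stays additive), and why the approximate commutativity of the $\widehat{G}^{z}$ — certified through the otherwise-redundant diagonal lines test — is needed for the sequential measurement to be order-independent. A secondary but delicate point is the slicing step: checking that the slice restrictions really are good $(m-1)$-dimensional strategies on average relies on the expansion of the test's question graph established in \Cref{sec:expansion,sec:variance}.
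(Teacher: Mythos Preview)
Your proposal is correct and follows essentially the same approach as the paper: induction on $m$ with the $m=1$ base case supplied directly by the lines measurement, then slicing along the last coordinate, applying the inductive hypothesis and self-improvement (\Cref{thm:self-improvement-in-induction-section}) on each slice, and pasting (\Cref{thm:ld-pasting-in-induction-section}) via $k$ sequential slice measurements with a Chernoff bound controlling completeness and diagonal-lines commutativity making the sequential measurement coherent. Two small inaccuracies worth flagging: the self-improvement reset error $\zeta$ depends only on $\eps,\delta,d/q$ with exponent $1/32$ (no $k$, no $\gamma$, no exponential term---those enter only at the pasting stage), and the slicing step is a direct averaging argument (the paper's restricted-probabilities lemma) rather than an application of the expansion results of \Cref{sec:expansion,sec:variance}, which are used inside self-improvement.
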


Comparing with our main theorem, \Cref{thm:main-formal},
\Cref{thm:main-induction} produces a measurement which is consistent with~$A$,
but it is not projective or self-consistent.
In addition, the strategy is assumed to be symmetric.
We correct these deficiencies in the following proof.

\begin{proof}[Proof of~\Cref{thm:main-formal} assuming \Cref{thm:main-induction}]
Suppose $(\psi, A^{\mathrm{A}},B^{\mathrm{A}},L^{\mathrm{A}},A^{\mathrm{B}},B^{\mathrm{B}},L^{\mathrm{B}})$
is a (not necessarily symmetric) strategy
 which passes the $(m, q, d)$-low individual degree test
with probability $(1-\eps)$.
Throughout this proof we will refer to this as the \emph{original strategy}.
Then because each of the three subtests occurs with probability $1/3$,
$(\psi, A^{\mathrm{A}},B^{\mathrm{A}},L^{\mathrm{A}},A^{\mathrm{B}},B^{\mathrm{B}},L^{\mathrm{B}})$ is a $(3\eps, 3\eps, 3\eps)$-good strategy.
We now would like to apply \Cref{thm:main-induction},
but it only applies to strategies which are symmetric.
So we will apply a standard construction to ``symmetrize'' our strategy,
apply \Cref{thm:main-induction},
and then ``unsymmetrize'' the resulting $\{G_g\}$ measurement to obtain $\{G^{\mathrm{A}}_g\}$ and $\{G^{\mathrm{B}}_g\}$ measurements.

For simplicity, we will assume that Player~$\mathrm{A}$ and~$\mathrm{B}$'s
Hilbert spaces $\calH_{\mathrm{A}}$ and $\calH_{\mathrm{B}}$
are both $\C^d$, for some~$d$.
(This argument can be extended to the case of different dimensions in a straightforward manner.)
We will introduce two additional registers, $\calH_{\mathrm{A}'} =\calH_{\mathrm{B'}} = \C^2$,
referred to as the \emph{role registers}.
Then the symmetrized state is given by
\begin{equation*}
\ket{\psi_{\mathrm{sym}}} = \ket{0}_{\mathrm{A}'}\ket{1}_{\mathrm{B}'} \ket{\psi}_{\mathrm{A},\mathrm{B}}
					+\ket{1}_{\mathrm{A}'}\ket{0}_{\mathrm{B}'} \ket{\psi_{\mathrm{swap}}}_{\mathrm{A},\mathrm{B}}
					\in (\C^2_{\mathrm{A}'} \ot \C^d_{\mathrm{A}}) \ot 
						(\C^2_{\mathrm{B}'} \ot \C^d_{\mathrm{B}}),
\end{equation*}
where $\ket{\psi_{\mathrm{swap}}}$ denotes $\ket{\psi}$ with its two registers swapped.
Note that the resulting state $\ket{\psi_{\mathrm{sym}}}$ is symmetric under the exchange of its two registers.
Next, we define the symmetrized measurement $A_{\mathrm{sym}} = \{(A_{\mathrm{sym}})^u_a\}$ as follows
\begin{equation*}
(A_{\mathrm{sym}})^u_a = \ket{0}\bra{0} \ot A^{\mathrm{A},u}_a + \ket{1} \bra{1} \ot A^{\mathrm{B},u}_a,
\end{equation*}
and we define $B_{\mathrm{sym}}$ and $L_{\mathrm{sym}}$ similarly.
The strategy $(\psi_{\mathrm{sym}}, A_{\mathrm{sym}}, B_{\mathrm{sym}}, L_{\mathrm{sym}})$
is symmetric; we refer to it as the \emph{symmetrized strategy}.
It has the following interpretation:
the two players measure their respective role registers in the standard basis;
the one that receives a ``$0$'' will act as Player~A in the original strategy,
and the one that receives a ``$1$'' will act as Player~B in the original strategy.
As a result, the symmetrized strategy is also a $(3\eps, 3\eps, 3\eps)$-good strategy.

Now we apply \cref{thm:main-induction} to the symmetrized strategy.
To do so, set
\begin{align*}
&1000k^2 m^2 \cdot\Big((3\eps)^{1/1024} + (3\eps)^{1/1024} + (3\eps)^{1/1024} + (d/q)^{1/1024}\Big)\\
\leq{}& 10000 k^2 m^2 \cdot \Big(\eps^{1/1024} +  (d/q)^{1/1024}\Big)
=: \nu,
\end{align*}
and set $\sigma = m^2\cdot\left(\nu + e^{-k/(80000m^2)}\right)$.
Then \Cref{thm:main-induction} produces a measurement $G = \{G_g\} \in \polymeas{m}{q}{d}$ such that
\begin{equation}\label{eq:just-applied-induction}
(A_{\mathrm{sym}})^{u}_a \otimes I \simeq_{\sigma} I \otimes G_{[g(u)=a]}, \quad
\text{and}
\quad
G_{[g(u)=a]} \otimes I \simeq_{\sigma} I \otimes (A_{\mathrm{sym}})^u_a.
\end{equation}

Now we unsymmetrize the symmetrized strategy
to derive measurements $\{G^w_g\}$ for $w \in \{\mathrm{A},\mathrm{B}\}$.
Letting $I_d$ denote the $d \times d$ identity operator, define the operators
\begin{gather*}
    G^{\mathrm{A}}_g = (\bra{0} \otimes I_d) \cdot G_g\cdot (\ket{0} \otimes I_d) \\
    G^{\mathrm{B}}_g = (\bra{1} \otimes I_d) \cdot G_g\cdot (\ket{1} \otimes I_d),
\end{gather*}
where $\ket{0}$ and $\ket{1}$ act on the $\C^2$ part of $G_g$. Thus, $G^{\mathrm{A}}_g, G^{\mathrm{B}}_g$ are positive operators acting on $\C^d$, and furthermore they form POVMs:
\[
    \sum_g G^{\mathrm{A}}_g = \sum_g (\bra{0} \otimes I_d) \cdot G_g \cdot (\ket{0} \otimes I_d) = (\bra{0} \otimes I_d) \cdot \Big(\sum_g G_g\Big) \cdot (\ket{0} \otimes I_d) = I_d\;.
\]
The same derivation holds for $G^{\mathrm{B}}_g$. 

Next we verify that the $\{G^{\mathrm{B}}_g\}$ measurements are consistent with the $\{A^{\mathrm{A},u}_a\}$ measurements:
\begin{align*}
    &\E_{\bu} \sum_{g,a \neq g(\bu)} \bra{\psi} A^{\mathrm{A},\bu}_a \otimes G^{\mathrm{B}}_g \ket{\psi} \notag \\
    ={}& \E_{\bu} \sum_{g,a \neq g(\bu)} \bra{\psi} A^{\mathrm{A},\bu}_a \otimes (\bra{1} \otimes I_d) \cdot G_g \cdot (\ket{1} \otimes I_d) \ket{\psi} \notag \\
    ={}& \E_{\bu} \sum_{g,a \neq g(\bu)} (\bra{0,1}_{\mathrm{A'B'}} \otimes \bra{\psi}_{\mathrm{AB}}) \cdot (A_{\mathrm{sym}})^{\bu}_a \otimes G_g \cdot (\ket{0,1}_{\mathrm{A'B'}} \otimes \ket{\psi}_{\mathrm{AB}}) \notag \\
    \leq{}& 2 \cdot \E_{\bu} \sum_{g,a \neq g(\bu)} \bra{\psi_{\mathrm{sym}}} (A_{\mathrm{sym}})^{\bu}_a \otimes G_g \ket{\psi_{\mathrm{sym}}} \notag \\
    \leq{}& 2\sigma. \tag{by \Cref{eq:just-applied-induction}}
\end{align*}
The first inequality follows from the fact that the cross-terms $\bra{0} (A_{\mathrm{sym}})_a^{u} \ket{1}$ and $\bra{1} (A_{\mathrm{sym}})_a^{u} \ket{0}$ vanish by construction of $A_{\mathrm{sym}}$.  Combined with a similar derivation for the $G^{\mathrm{A}}_g$ and $A^{\mathrm{B},u}_a$ operators, we deduce
\begin{gather}
    G^{\mathrm{A}}_{[g(u)=a]} \otimes I \simeq_{2\sigma} I \otimes A^{\mathrm{B},u}_a \label{eq:cons-a}\\
    I \otimes G^{\mathrm{B}}_{[g(u)=a]} \simeq_{2\sigma}  A^{\mathrm{A},u}_a \otimes I \label{eq:cons-b}
\end{gather}
In addition, because the original strategy is $(3\eps, 3\eps, 3\eps)$-good,
\begin{equation*}
A^{\mathrm{A},u}_a \otimes I \simeq_{3\eps} I \otimes A^{\mathrm{B},u}_a.
\end{equation*}
Hence, \Cref{prop:simeq-triangle-inequality} implies that
\begin{equation*}
G^{\mathrm{A}}_{[g(u)=a]} \otimes I \simeq_{2\sigma + 2\sqrt{3\eps + 2\sigma}} I \otimes G^{\mathrm{B}}_{[g(u)=a]}.
\end{equation*}
This implies that
\begin{align*}
2\sigma + 2\sqrt{3\eps + 2\sigma}
& \geq \E_{\bu} \sum_{a \neq b} \bra{\psi} G^{\mathrm{A}}_{[g(\bu) =a]} \ot G^{\mathrm{B}}_{[g(\bu) = b]} \ket{\psi}\\
& = \E_{\bu} \sum_{g \neq h}  \bone[g(\bu) \neq h(\bu)] \cdot \bra{\psi} G^{\mathrm{A}}_{g} \ot G^{\mathrm{B}}_{h} \ket{\psi}\\
& = \E_{\bu} \sum_{g \neq h}  \bra{\psi} G^{\mathrm{A}}_{g} \ot G^{\mathrm{B}}_{h} \ket{\psi}
	- \E_{\bu} \sum_{g \neq h}  \bone[g(\bu) = h(\bu)] \cdot \bra{\psi} G^{\mathrm{A}}_{g} \ot G^{\mathrm{B}}_{h} \ket{\psi}\\
& \geq \E_{\bu} \sum_{g \neq h}  \bra{\psi} G^{\mathrm{A}}_{g} \ot G^{\mathrm{B}}_{h} \ket{\psi}
	-  \sum_{g \neq h}  \frac{md}{q}\cdot \bra{\psi} G^{\mathrm{A}}_{g} \ot G^{\mathrm{B}}_{h} \ket{\psi} \tag{by Schwartz-Zippel}\\
& \geq \E_{\bu} \sum_{g \neq h}  \bra{\psi} G^{\mathrm{A}}_{g} \ot G^{\mathrm{B}}_{h} \ket{\psi} - \frac{md}{q}.
\end{align*}
Rearranging, we get
\begin{equation}\label{eq:G-self-consistency}
G^{\mathrm{A}}_g \otimes I \simeq_{\zeta_1} I \ot G^{\mathrm{B}}_g.
\end{equation}
where $\zeta_1 = 2\sigma + 2\sqrt{3\eps + 2\sigma} + md/q$.

We have now derived everything we wanted, except that~$G$ is not necessarily projective.
To remedy this, we apply the orthogonalization lemma for measurements (\Cref{lem:orthonormalization-main-lemma})
to \Cref{eq:G-self-consistency}.
It implies the existence of two projective sub-measurements $P^{\mathrm{A}} = \{P^{\mathrm{A}}\}, P^{\mathrm{B}} = \{P^{\mathrm{B}}\} \in \polysub{m}{q}{d}$ such that
\begin{align*}
G^{\mathrm{A}}_g \ot I &\approx_{100\zeta_1^{1/4}} P^{\mathrm{A}}_g \ot I,\\
I \ot G^{\mathrm{B}}_g &\approx_{100\zeta_1^{1/4}} I \ot P^{\mathrm{B}}_g.
\end{align*}
Hence, \Cref{prop:completing-to-measurement} implies that we can complete $P^{\mathrm{A}}$ and~$P^{\mathrm{B}}$ to projective measurements
$Q^{\mathrm{A}} = \{Q^{\mathrm{A}}\}, Q^{\mathrm{B}} = \{Q^{\mathrm{B}}\} \in \polymeas{m}{q}{d}$ such that
\begin{align}
G^{\mathrm{A}}_g \ot I &\approx_{\zeta_2} Q^{\mathrm{A}}_g \ot I,\label{eq:G-with-Q-A}\\
I \ot G^{\mathrm{B}}_g &\approx_{\zeta_2} I \ot Q^{\mathrm{B}}_g.\nonumber
\end{align}
where $\zeta_2= 200\zeta_1^{1/4} + 40\zeta_1^{1/8}$.
Now,  \Cref{eq:G-self-consistency} and \Cref{prop:simeq-to-approx} imply that
\begin{equation*}
G^{\mathrm{A}}_g \otimes I \approx_{2\zeta_1} I \ot G^{\mathrm{B}}_g.
\end{equation*}
By the triangle inequality (\Cref{prop:triangle-inequality-for-approx_delta}),
\begin{equation*}
Q_g^{\mathrm{A}} \ot I \approx_{\zeta_3} I \ot Q_g^{\mathrm{B}},
\end{equation*}
where $\zeta_3 = 6\zeta_1 + 6\zeta_2$.
Because both of these measurements are projective, \Cref{prop:simeq-to-approx} then implies that
\begin{equation}\label{eq:third-goal}
Q_g^{\mathrm{A}} \ot I \simeq_{\zeta_3/2} I \ot Q_g^{\mathrm{B}},
\end{equation}
By the data processing inequality (\Cref{prop:simeq-data-processing}),
\begin{equation}\label{eq:just-data-processed-the-heck-outta-this}
Q_{[g(u)=a]}^{\mathrm{A}} \ot I \simeq_{\zeta_3/2} I \ot Q_{[g(u)=a]}^{\mathrm{B}},
\end{equation}
Next, \Cref{prop:triangle-sub}, applied to \Cref{eq:G-self-consistency} and \Cref{eq:G-with-Q-A} implies that
\begin{equation*}
Q^{\mathrm{A}}_g \otimes I \simeq_{\zeta_1} I \ot G^{\mathrm{B}}_g.
\end{equation*}
By data processing (\Cref{prop:simeq-data-processing}),
\begin{equation}\label{eq:ok-almost-there-ok}
Q^{\mathrm{A}}_{[g(u)=a]} \otimes I \simeq_{\zeta_1} I \ot G^{\mathrm{B}}_{[g(u)=a]}.
\end{equation}
Now we apply the triangle inequality (\Cref{prop:simeq-triangle-inequality})
to \Cref{eq:cons-b,eq:ok-almost-there-ok,eq:just-data-processed-the-heck-outta-this},
which implies that
\begin{equation}\label{eq:one-goal}
A^{\mathrm{A},u}_a \ot I \simeq_{\zeta_4} I \ot Q^{\mathrm{B}}_{[g(u)=a]},
\end{equation}
where $\zeta_4 = 2\sigma + 2 \sqrt{\zeta_1 + \zeta_3/2}$.
A similar argument shows that
\begin{equation}\label{eq:another-goal}
I \ot A^{\mathrm{B},u}_a \simeq_{\zeta_4} Q^{\mathrm{A}}_{[g(u)=a]} \ot I.
\end{equation}

Now we calculate the error.
First,
\begin{align*}
\sigma
&= m^2\cdot\left(10000 k^2 m^2 \cdot \Big(\eps^{1/1024} +  (d/q)^{1/1024}\Big) + e^{-k/(80000m^2)}\right)\\
&\leq 10000 k^2 m^4 \cdot \Big(\eps^{1/1024} +  (d/q)^{1/1024} + e^{-k/(80000m^2)}\Big).
\end{align*}
Next, using $10000^{1/2} = 100$,
\begin{align*}
\zeta_1 &= 2\sigma + 2\sqrt{3\eps + 2\sigma} + md/q\\
& \leq 2\cdot\Big(10000 k^2 m^4 \cdot \Big(\eps^{1/1024} +  (d/q)^{1/1024} + e^{-k/(80000m^2)}\Big)\Big)\\
&		\qquad\qquad+ 2\cdot \Big( 3 \eps + 10000 k^2 m^4 \cdot \Big(\eps^{1/1024} +  (d/q)^{1/1024} + e^{-k/(80000m^2)}\Big)\Big)^{1/2} + md/q\\
& \leq 2\cdot\Big(10000 k^2 m^4 \cdot \Big(\eps^{1/1024} +  (d/q)^{1/1024} + e^{-k/(80000m^2)}\Big)\Big)\\
&		\qquad\qquad+ 2\cdot \Big( 2 \eps^{1/2} + 100 k m^2 \cdot \Big(\eps^{1/2048} +  (d/q)^{1/2048} + e^{-k/(160000m^2)}\Big)\Big) + md/q\\
& \leq 20204 k^2 m^4 \cdot \Big(\eps^{1/2048} +  (d/q)^{1/2048} + e^{-k/(160000m^2)}\Big).
\end{align*}
Next, using the fact that $20204^{1/4} \leq 12$ and $20204^{1/8} \leq 4$,
\begin{align*}
\zeta_2
&= 200\zeta_1^{1/4} + 40\zeta_1^{1/8}\\
& \leq 200 \Big(20204 k^2 m^4 \cdot \Big(\eps^{1/2048} +  (d/q)^{1/2048} + e^{-k/(160000m^2)}\Big)\Big)^{1/4}\\
&		\qquad\qquad+ 40 \Big(20204 k^2 m^4 \cdot \Big(\eps^{1/2048} +  (d/q)^{1/2048} + e^{-k/(160000m^2)}\Big)\Big)^{1/8}\\
& \leq 200 \Big(12 k m \cdot \Big(\eps^{1/8192} +  (d/q)^{1/8192} + e^{-k/(640000m^2)}\Big)\Big)\\
&		\qquad\qquad+ 40 \Big(4 k m
		\cdot \Big(\eps^{1/16384} +  (d/q)^{1/16384} + e^{-k/(1280000m^2)}\Big)\Big)\\
& \leq 2560 km \cdot \Big(\eps^{1/16384} + (d/q)^{1/16384} + e^{-k/(1280000m^2)}\Big).
\end{align*}
Next, using $6\cdot 20204 + 6 \cdot 2560 \leq 150000$,
\begin{equation*}
\zeta_3 = 6 \zeta_1 + 6 \zeta_2
\leq 150000 k^2 m^4 \cdot \Big(\eps^{1/16384} + (d/q)^{1/16384} + e^{-k/(1280000m^2)}\Big).
\end{equation*}
Next, using $\sqrt{20204} \leq 143$, $\sqrt{150000} \leq 388$, and $2 \cdot (10000+143+388) \leq 40000$,
\begin{align*}
\zeta_4  &= 2\sigma + 2 \sqrt{\zeta_1 + \zeta_3/2}\\
& \leq 2 \cdot \Big(10000 k^2 m^4 \cdot \Big(\eps^{1/1024} +  (d/q)^{1/1024} + e^{-k/(80000m^2)}\Big)\Big)\\
& \qquad \qquad + 2 \cdot \Big(20204 k^2 m^4 \cdot \Big(\eps^{1/2048} +  (d/q)^{1/2048} + e^{-k/(160000m^2)}\Big) \\
& \qquad \qquad \qquad  \qquad+ 150000 k^2 m^4 \cdot \Big(\eps^{1/16384} + (d/q)^{1/16384} + e^{-k/(1280000m^2)}\Big)\Big)^{1/2}\\
& \leq 2 \cdot \Big(10000 k^2 m^4 \cdot \Big(\eps^{1/1024} +  (d/q)^{1/1024} + e^{-k/(80000m^2)}\Big)\Big)\\
& \qquad \qquad + 2 \cdot \Big(143 k m^2 \cdot \Big(\eps^{1/4096} +  (d/q)^{1/4096} + e^{-k/(320000m^2)}\Big) \\
& \qquad \qquad \qquad  \qquad+ 388 k m^2 \cdot \Big(\eps^{1/32768} + (d/q)^{1/32768} + e^{-k/(2560000m^2)}\Big)\Big)\\
& \leq 40000k^2m^4 \cdot\Big(\eps^{1/32768} + (d/q)^{1/32768} + e^{-k/(2560000m^2)}\Big)\Big).
\end{align*}
Both this and $\zeta_3/2$ are less than
\begin{equation*}
100000k^2m^4 \cdot\Big(\eps^{1/40000} + (d/q)^{1/40000} + e^{-k/(2560000m^2)}\Big)\Big).
\end{equation*}
Hence, \Cref{eq:one-goal,eq:another-goal,eq:third-goal} provide the three bounds we want.
This concludes the proof.
\end{proof}

The remainder of this section is organized as follows:
first, in \Cref{sec:self-improvement-and-pasting}, we define the two main steps in the proof of \Cref{thm:main-induction}
known as self-improvement and pasting.
Following that, we prove \Cref{thm:main-induction} in \Cref{sec:proof-of-main-induction}.

\subsection{Self-improvement and pasting}\label{sec:self-improvement-and-pasting}

There are two main steps in the proof of \Cref{thm:main-induction}.
The first is self-improvement, which is stated as follows.

\begin{theorem}[Self-improvement]\label{thm:self-improvement-in-induction-section}
  Let $(\psi, A, B, L)$ be an $(\eps, \delta, \gamma)$-good symmetric strategy for the $(m,q,d)$ low individual degree test.
Let $G \in \polysub{m}{q}{d}$ be a sub-measurement with the following properties:
\begin{enumerate}
\ignore{
\item (Completeness): \label{item:si-inductive-G-completeness}  If $G =  \sum_g G_g$, then
  	\begin{equation*}
	\bra{\psi} G \otimes I \ket{\psi} \geq 1 - \kappa.
	\end{equation*}
}
  \item (Consistency with~$A$): \label{item:si-inductive-G-consistency} On average over $\bu \sim \F_q^{m}$,
	  \begin{equation*}
	  A^{u}_a \otimes I \simeq_{\nu} I \otimes G_{[g(u)=a]}.
	  \end{equation*}
	  \end{enumerate}
Let
\begin{equation*}
\zeta = 3000m\cdot \Big(\eps^{1/32} + \delta^{1/32} + (d/q)^{1/32}\Big).
\end{equation*}
Then there exists a projective sub-measurement $H \in \polysub{m}{q}{d}$ with the following properties:
\begin{enumerate}
    \item(Completeness):  \label{item:si-inductive-completeness} If $H =  \sum_h H_h$, then
    \begin{equation*}
    \bra{\psi} H \otimes I \ket{\psi} \geq (1-\nu)-\zeta.
    \end{equation*}
    \item(Consistency with~$A$):\label{item:si-inductive-A-consistency} On average over $\bu \sim \F_q^m$,
        \begin{equation*}
            A^u_a \otimes I \simeq_{\zeta} I \otimes H_{[h(u) = a]}.
        \end{equation*}
    \item(Strong self-consistency): \label{item:si-inductive-self}
    	\begin{equation*}
		H_h \otimes I \approx_{\zeta} I \otimes H_h.
	\end{equation*}
    \item(Boundedness): \label{item:si-inductive-boundedness} There exists a positive-semidefinite matrix~$Z$ such that
    \begin{equation*}
        \bra{\psi} Z \otimes (I - H) \ket{\psi} \leq \zeta
    \end{equation*}
    and for each $h \in \polyfunc{m}{q}{d}$,
	\begin{equation*}
	Z \geq \left(\E_{\bu} A^{\bu}_{h(\bu)}\right).
	\end{equation*}
\end{enumerate}
\end{theorem}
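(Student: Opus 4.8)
The plan is to follow the semidefinite-programming approach to self-improvement introduced by Ito and Vidick~\cite{IV12} and refined in~\cite{Vid16,NV18b}. The intuition, matching the technical overview, is that $H$ should coincide with $G$ on the part of the Hilbert space where $G$ is only mildly inconsistent with $A$, while discarding the part where $G$ is too corrupted to repair; the consistency error of the remainder is then governed only by the ``global'' parameters $\eps,\delta,d/q$ and not by the possibly-large input error $\nu$. The central object is, for each $g \in \polyfunc{m}{q}{d}$, the agreement operator $S_g := \E_{\bu \sim \F_q^m} A^{\bu}_{g(\bu)}$, which is positive semidefinite with $S_g \leq I$ since each $A^u$ is a measurement. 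Unpacking \Cref{def:simeq}, the hypothesis that $G$ is $\nu$-consistent with $A$ is equivalent to $\sum_g \bra{\psi}(I-S_g)\otimes G_g\ket{\psi} \leq \nu$, i.e.\ $\sum_g \bra{\psi} S_g \otimes G_g \ket{\psi} \geq \bra{\psi} G \otimes I \ket{\psi} - \nu$: on average, a random point confirms the outcome of $G$.

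I would then study the semidefinite program that maximizes $\sum_h \bra{\psi} H_h \otimes S_h \ket{\psi}$ over sub-measurements $\{H_h\} \in \polysub{m}{q}{d}$, obtaining $H$ (after a rounding step, see below) from a near-optimal primal solution and the matrix $Z$ from the dual: dual feasibility is what forces $Z \geq S_h = \E_{\bu} A^{\bu}_{h(\bu)}$ for every $h$, and approximate complementary slackness --- the slack of the constraint $\sum_h H_h \leq I$ being exactly $I - H$ --- is what gives $\bra{\psi} Z \otimes (I-H)\ket{\psi} \leq \zeta$. The real content is to show the optimal value of this SDP is at least $1 - \poly(\eps,\delta,d/q)$, and this is where the combinatorics enters. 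Using the expansion of the hypercube question graph (\Cref{sec:expansion}) and the variance estimates for point-indexed families of measurements (\Cref{sec:variance}), together with the characterization of a good strategy in \Cref{rem:good-strat-characterization} --- which ties the measurements $\{A^u\}$ together through the axis-parallel lines measurement $B$ (error $\eps$) and the self-consistency error $\delta$ --- one shows that confirmation of an outcome against a \emph{uniformly random} point can be transferred, at a cost polynomial in $\eps$ and $\delta$ alone, into confirmation against any single prescribed point; the Schwartz-Zippel lemma (\Cref{lem:schwartz-zippel-total-degree}) enters to bound the probability that two distinct degree-$d$ polynomials collide at a point, which is what allows passage between ``$g$ agrees with $A$ on average'' and ``$g(u)=a$''.

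Granting the value bound, the four conclusions follow with bookkeeping. \textbf{Completeness}: $\bra{\psi} H \otimes I\ket{\psi}$ is at least the SDP value minus error, and the weight lost is the part of the state on which no polynomial is confirmed, which by the $\nu$-consistency hypothesis has weight at most $\nu + \zeta$; hence $\bra{\psi}H\otimes I\ket{\psi} \geq (1-\nu)-\zeta$. \textbf{Consistency with $A$}: immediate from the SDP objective together with the transfer step. \textbf{Boundedness}: dual feasibility and complementary slackness as above; note that $Z \geq S_h$ for all $h$ is compatible with $\bra{\psi}Z\otimes(I-H)\ket{\psi}$ being small precisely because on the discarded part every $S_h$ is small --- ``no confirmed polynomial'' is exactly what is discarded. \textbf{Strong self-consistency}: since $H$ is now $\zeta$-consistent with $A$ and $A$ is $\delta$-self-consistent, chaining the triangle inequalities for ``$\simeq$'' and ``$\approx$'' (\Cref{prop:simeq-triangle-inequality}, \Cref{prop:triangle-inequality-for-approx_delta}) and using permutation-invariance of $\ket{\psi}$ yields $\E_{\bu}\sum_a \bra{\psi} H_{[h(\bu)=a]}\otimes H_{[h(\bu)=a]}\ket{\psi}$ close to $\bra{\psi} H\otimes I\ket{\psi}$, i.e.\ strong self-consistency. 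To make $H$ \emph{projective}, I would establish strong self-consistency already for the raw SDP output, apply the orthogonalization lemma (\Cref{thm:orthonormalization}) to it, and transfer the remaining conclusions across the resulting $\zeta^{1/4}$-scale closeness.

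The main obstacle is exactly the transfer step: making the output consistency error depend on $\eps,\delta,d/q$ only and not on $\nu$. This is the step that has no classical-deterministic analogue, that requires the expansion and variance machinery of \Cref{sec:expansion,sec:variance} rather than a one-shot argument, and whose careful execution --- together with the SDP rounding and the orthogonalization step --- is what must be tuned so that all the accumulated polynomial losses fit inside the stated $\zeta = 3000m\,(\eps^{1/32}+\delta^{1/32}+(d/q)^{1/32})$.
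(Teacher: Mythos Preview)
Your plan is the paper's plan: SDP for self-improvement, $Z$ from the dual, complementary slackness for boundedness, the global-variance machinery of \Cref{sec:expansion,sec:variance} for the ``transfer'' step, and \Cref{thm:orthonormalization} at the end to make $H$ projective. Two points where the paper's execution differs from your sketch are worth flagging.

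First, the paper's SDP is \emph{operator-level}, not state-dependent: with $A_g := \E_{\bu} A^{\bu}_{g(\bu)}$ it solves $\sup \sum_g \Tr(T_g A_g)$ over sub-measurements $\{T_g\}$, with dual $\inf \Tr(Z)$ subject to $Z \geq A_g$. This directly yields the operator inequality $Z \geq \E_{\bu} A^{\bu}_{h(\bu)}$ required by \Cref{item:si-inductive-boundedness}. Your state-dependent SDP maximizing $\sum_h \bra{\psi} H_h \otimes S_h \ket{\psi}$ has as its dual constraint $Z \geq \Tr_{\mathrm{B}}\bigl((I\otimes S_h)\ket{\psi}\bra{\psi}\bigr)$, which is the ``pulled-back'' operator, not $S_h$ itself; bridging that gap would cost you the self-consistency of $A$ and is not immediate at the operator level.

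Second, the paper does not take $H$ to be the primal optimizer. It takes the primal optimizer $T$ (which, by complementary slackness, is a full measurement satisfying $T_g Z = T_g A_g$) and sets
\[
H_h \;=\; \E_{\bu}\, A^{\bu}_{h(\bu)}\, T_h\, A^{\bu}_{h(\bu)}.
\]
This sandwiching is what makes consistency with $A$ essentially automatic (projectivity of $A$ kills the off-diagonal terms), and the complementary slackness identity $T_h Z = T_h A_h$ together with the global-variance lemma is what ties $\bra{\psi}H\otimes I\ket{\psi}$ back to $\bra{\psi}Z\otimes I\ket{\psi}$ for completeness. Your ``rounding step'' is only the orthogonalization; the sandwiching is a separate, essential ingredient you would need to supply.
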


We note that by \Cref{prop:two-notions-of-self-consistency} the condition in \Cref{item:si-inductive-self}
is equivalent to~$H$'s strong self-consistency because~$H$ is projective.
Self-improvement states that we can take a measurement~$G$
whose consistency error with~$A$ is~$\nu$
and produce another measurement~$H$
which has negligible consistency error with~$A$
and incompleteness~$\nu$.
Hence, we have ``moved'' $G$'s consistency error onto $H$'s incompleteness.

The second main step is pasting, which is stated as follows.

\begin{theorem}[Pasting]\label{thm:ld-pasting-in-induction-section}
  Let $(\psi, A, B, L)$ be an $(\eps, \delta, \gamma)$-good symmetric strategy for the $(m+1,q,d)$ low individual degree test.
  Let $\{G^x\}_{x \in \F_q}$ denote a set of projective sub-measurements in $\polysub{m}{q}{d}$ with the following properties:
  \begin{enumerate}
  \item (Completeness): \label{item:ld-pasting-inductive-completeness}  If $G = \E_{\bx} \sum_g G^{\bx}_g$, then
  	\begin{equation*}
	\bra{\psi} G \otimes I \ket{\psi} \geq 1 - \kappa.
	\end{equation*}
  \item (Consistency with~$A$): \label{item:ld-pasting-inductive-consistency} On average over $(\bu, \bx) \sim \F_q^{m+1}$,
	  \begin{equation*}
	  A^{u, x}_a \otimes I \simeq_{\zeta} I \otimes G^x_{[g(u)=a]}.
	  \end{equation*}
  \item (Strong self-consistency): \label{item:ld-pasting-inductive-self-consistency} On average over~$\bx \sim \F_q$,
  	\begin{equation*}
		G^x_g \otimes I \approx_{\zeta} I \otimes G^x_g.
	\end{equation*}
  \item (Boundedness): \label{item:ld-pasting-inductive-boundedness} There exists a positive-semidefinite matrix $Z^x$ for each $x \in \F_q$ such that
  	\begin{equation*}
		\E_{\bx} \bra{\psi} (I-G^{\bx})\otimes Z^{\bx} \ket{\psi} \leq \zeta
	\end{equation*}
	and for each $x \in \F_q$ and $g \in \polyfunc{m}{q}{d}$,
	\begin{equation*}
	Z^x \geq \left(\E_{\bu} A^{\bu, x}_{g(\bu)}\right).
	\end{equation*}
  \end{enumerate}
  Let $k \geq 400md$ be an integer. Let
 \begin{align*}
 \nu &= 100 k^2m \cdot \left(\eps^{1/32} + \delta^{1/32} + \gamma^{1/32} + \zeta^{1/32} + (d/q)^{1/32}\right),\\
 \sigma& = \kappa \cdot \left(1 + \frac{1}{100m}\right)
    + 2\nu + e^{- k/(80000m^2)}.
 \end{align*}
  Then there exists a ``pasted" measurement $H \in \polymeas{m+1}{q}{d}$ which satisfies the following property.
  \begin{enumerate}
  \item (Consistency with~$A$): \label{item:ld-pasting-inductive-N-consistency} On average over $\bu \sim \F_q^{m+1}$,
	  \begin{equation*}
	  A^{u}_a \otimes I \simeq_{\sigma} I \otimes H_{[h(u)=a]}.
	  \end{equation*}
	  \ignore{
  \item (Completeness): \label{item:ld-pasting-inductive-N-completeness} If $H =  \sum_h H_h$, then
  	\begin{equation*}
	\bra{\psi} H \otimes I \ket{\psi} \geq 1 - \kappa \cdot \left(1 + \frac{1}{100m}\right)
    - \nu - e^{- k/(80000m^2)}.
	\end{equation*}
	}
  \end{enumerate}
\end{theorem}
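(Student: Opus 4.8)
The plan is to build the pasted measurement $H$ by sequentially measuring $k$ of the slice measurements $G^x$ at randomly chosen values of~$x$, classically interpolating the resulting $m$-variate polynomials across the last coordinate, and taking a plurality vote. Concretely, I would sample a uniformly random tuple of distinct values $\bx_1, \dots, \bx_k \in \F_q$, sequentially apply the projective sub-measurements $G^{\bx_1}, \dots, G^{\bx_k}$ to Bob's share of $\ket{\psi}$ to obtain outcomes $\bg_1, \dots, \bg_k \in \polyfunc{m}{q}{d}$ (returning $\bot$ if any step fails to return an outcome), and then post-process: for each size-$(d+1)$ subset $S \subseteq [k]$ let $\interp_S(\bg) \in \polyfunc{m+1}{q}{d}$ be the unique polynomial that has degree at most~$d$ in the last variable and restricts to $\bg_i$ on the slice $\{(u,\bx_i) : u \in \F_q^m\}$ for each $i \in S$, and output the polynomial $\bh$ occurring most often among these $\binom{k}{d+1}$ interpolations (breaking ties, and the $\bot$ case, arbitrarily). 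A sequential composition of projective sub-measurements is a sub-measurement, so completing it with its incomplete part (\Cref{def:measurement-completion}) and post-processing yields a genuine measurement in $\polymeas{m+1}{q}{d}$, as the theorem requires; and $\interp_S(\bg)$ lies in $\polyfunc{m+1}{q}{d}$ because each $\bg_i$ has individual degree at most~$d$ and the Lagrange weights in the last variable have degree~$d$.

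The first block of the analysis is to show that the sequential measurement is well behaved. Using the commutativity properties of the $G^x$ proved in \Cref{sec:g-comm} --- which follow from their consistency with $A$ together with the approximate commutativity of the point measurements $A^u$ established from the diagonal lines test in \Cref{sec:commutativity-points} --- I would argue that measuring $G^{\bx_1}, \dots, G^{\bx_k}$ in sequence disturbs the state by only $\poly(m,k)\cdot\zeta^{\Omega(1)}$ in the relevant sense, so that for each fixed~$i$ the outcome $\bg_i$ of the $i$-th measurement remains consistent with the point measurement $A^{(\cdot,\bx_i)}$ on the $i$-th slice. The boundedness hypothesis, through the matrices $Z^x$ that dominate $\E_\bu A^{\bu,x}_{g(\bu)}$, is exactly what lets me bound the \emph{completeness} lost per round: each of the $k$ rounds shrinks the probability of obtaining a non-$\bot$ outcome by a multiplicative factor $1 + O(\tfrac{1}{mk})$, which compounds to the $(1 + \tfrac{1}{100m})$ factor appearing in $\sigma$.

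The second block is the classical interpolation argument carried out quantumly. I would fix a branch of the measurement and argue that, except with probability $e^{-\Omega(k/m^2)}$ over the choice of $\bx_1, \dots, \bx_k$, a $(1-o(1))$-fraction of the sampled slices are \emph{good}, meaning that $\bg_i$ agrees on a majority of its slice with the restriction of a single global polynomial $h^* \in \polyfunc{m+1}{q}{d}$; this combines a Chernoff-type sampling bound (with $k \ge 400md$ ensuring more than~$d$ good slices survive) with the classical fact that a random $m$-dimensional subspace function is good with high probability, and with Schwartz--Zippel (\Cref{lem:schwartz-zippel-total-degree}) to pin $h^*$ down uniquely. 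On this event every $\interp_S(\bg)$ through $d+1$ good slices equals $h^*$, the plurality is attained by $h^*$, and $\bh = h^*$. To then establish $A^u_a \ot I \simeq_\sigma I \ot H_{[h(u)=a]}$ on average over $\bu = (\bu',\bx_0) \sim \F_q^{m+1}$, I would condition on Bob's outcome $\bh = h^*$ and chain consistency relations along the axis-parallel line~$\ell$ through $(\bu',\bx_0)$ in the last direction (\Cref{not:conditioned-on-last-direction}): by the axis-parallel lines test $A^{(\bu',\bx_0)}_a$ is $\eps$-consistent with $B^{\bu'}$ evaluated at $\bx_0$; by slice consistency each value $\bg_i(\bu')$ matches $A^{(\bu',\bx_i)}$, which in turn matches $B^{\bu'}$ evaluated at $\bx_i$ since $\ell$ passes through $(\bu',\bx_i)$; hence the degree-$d$ univariate polynomials $h^*|_\ell$ and Bob's line outcome~$f$ agree at more than~$d$ of the good $\bx_i$, so $h^*|_\ell = f$ and $h^*(\bu',\bx_0) = f(\bx_0) = a$. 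Summing the error terms along this chain using the triangle inequalities \Cref{prop:simeq-triangle-inequality,prop:triangle-sub,prop:triangle-inequality-for-approx_delta} and data processing \Cref{prop:simeq-data-processing}, then adding the bad-branch probability $e^{-\Omega(k/m^2)}$ and the accumulated incompleteness $\kappa(1+\tfrac{1}{100m})$, should reproduce the claimed bound $\sigma = \kappa(1+\tfrac{1}{100m}) + 2\nu + e^{-k/(80000m^2)}$.

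The step I expect to be the main obstacle is the interface between the non-commuting sequential measurement of the $k$ slice measurements and the classical interpolation/Schwartz--Zippel argument: one must show simultaneously that the plurality outcome $\bh$ is \emph{stable} (so that $H$ genuinely behaves like a measurement returning one global polynomial), that the consistency of each $G^{\bx_i}$ with $A$ survives the earlier measurements in the sequence, and that the per-round completeness loss telescopes to a mere $(1+\tfrac{1}{100m})$ factor rather than blowing up over~$k$ rounds. This is where the strong self-consistency and boundedness hypotheses and the commutativity results of \Cref{sec:commutativity-points,sec:g-comm} do the heavy lifting, and where most of the \emph{ad hoc} Cauchy--Schwarz bookkeeping of \Cref{sec:ld-pasting} lives.
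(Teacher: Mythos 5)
Your high-level plan---sequentially measure $k$ slice measurements at random last-coordinates and classically interpolate---is the right one, and your second block (chaining consistency along the line in the last direction, using Schwartz--Zippel to pin down the global polynomial, and bounding the bad-branch probability) matches the structure of the paper's \Cref{lem:ld-sandwich-line-one-point,lem:h-b-consistency}. But there is a genuine gap in your completeness analysis, and it is precisely the pitfall the paper constructs around.

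Your construction aborts and outputs $\bot$ whenever \emph{any} of the $k$ sequential applications of $G^{\bx_i}$ fails to return an outcome. The completeness of the resulting sub-measurement is therefore, up to commutativity errors, $\E_{\bx_1,\dots,\bx_k} \bra{\psi} G^{\bx_1}\cdots G^{\bx_k}\ot I\ket{\psi}\approx \bra{\psi} G^k\ot I\ket{\psi}$, which can be as small as $(1-\kappa)^k$. With $k\ge 400md$, this decays exponentially in~$md$ unless $\kappa \ll 1/(md)$, which is not given. This is exactly the ``naive analysis'' the paper calls out in \Cref{sec:construction-numba-one}: no hypothesis of the theorem prevents $G$ from having, e.g., half of its weight on eigenvalues around $1-2\kappa$, in which case $k$ sequential measurements genuinely kill the state. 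The boundedness hypothesis cannot rescue this: it asserts $\E_{\bx}\bra{\psi}(I-G^{\bx})\ot Z^{\bx}\ket{\psi}\le\zeta$ \emph{on the original state} $\ket{\psi}$ and weighted by $Z^{\bx}$, and this is a $\zeta$-scale statement. It does not produce the $\kappa/(mk)$-scale per-round multiplicative loss you claim, and in particular the $(1+\tfrac{1}{100m})$ factor in $\sigma$ does not arise from compounding per-round losses. (Boundedness is used in the paper, via \Cref{clm:g-comm-stability,clm:g-comm-stability2}, to control cross-terms involving $(I-G^{\bx})$ inside Cauchy--Schwarz, not to bound per-round completeness loss.)

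The paper's fix (\Cref{sec:construction-numba-two}) is to measure the \emph{completed} measurements $\widehat G^{\bx_i}$, which always return an outcome (possibly~$\bot$), to \emph{tolerate} up to $k-(d+1)$ of the outcomes being~$\bot$, and to interpolate whenever the non-$\bot$ outcomes are consistent with a single global polynomial. The completeness then reduces, after the commutativity-heavy \Cref{lem:from-H-to-G}, to estimating $\bra{\psi}F(G)\ot I\ket{\psi}$ where $F(X)=\sum_{r=d+1}^{k}\binom{k}{r}X^r(I-X)^{k-r}$, and \Cref{lem:chernoff-bernoulli-matrix} does this eigenvalue-by-eigenvalue: one applies Markov with threshold $\theta=1/(200m)$ (this is where the $(1+\tfrac{1}{100m})$ factor comes from) and a Chernoff bound to the binomial tail (this is where $e^{-k/(80000m^2)}$ comes from, using $k\ge 400md \ge 2d/\theta$). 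The crucial point is that you only pay $\kappa$ \emph{once}, because you only need $d+1$ successes out of $k$; with abort-on-$\bot$ you effectively demand $k$ successes and pay $\kappa$ roughly $k$ times. Your plurality-vote post-processing is a reasonable alternative to the paper's sum-over-patterns interpolation on the consistency side, but without the completion-plus-tolerance modification the completeness bound cannot be proved.
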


Intuitively, $\sigma$ should be thought of as being roughly~$\kappa$,
plus a small amount of error which does not depend on~$\kappa$.
Hence, pasting states that we can take a family of sub-measurements $\{G^x\}$ with incompleteness $\kappa$
and produce a pasted measurement~$H$ whose consistency error with~$A$ is roughly~$\kappa$,
plus a small amount of new error.
Thus, the overall inductive step looks as follows:
given a family of measurements with some inconsistency error,
we ``move'' the error into the incompleteness using self-improvement,
and then we paste the measurements together to form a single measurement whose error is roughly the same as the original error.

We note that the main error term $\nu$ depends only on the ``small" parameters $\eps$, $\delta$, $\zeta$, $\gamma$, and $d/q$ and not on the ``large" parameter $\kappa$.

\subsection{Proof of \Cref{thm:main-induction}}\label{sec:proof-of-main-induction}

\begin{definition}
Let $x \in \F_q$.
For each line $\ell \in \F_q^m$,
we define $\mathrm{append}_x(\ell)$ to be the line in $\F_q^{m+1}$ containing every point $(u, x)$ such that $u \in \ell$.
In addition, for each function $f : \ell \rightarrow \F_q$,
we define $\mathrm{append}_x(f) : \mathrm{append}_x(\ell) \rightarrow \F_q$
such that for each $(u, x) \in \mathrm{append}_x(\ell)$, $\mathrm{append}_x(f)(u, x) = f(u)$.
\end{definition}

\begin{definition}[$x$-restricted low-degree strategy]
Let $(\psi, A, B, L)$  be a symmetric strategy the $(m+1,q,d)$-low individual degree test.
Given $x \in \F_q$, we define the \emph{$x$-restricted strategy}
$(\psi, A^x, B^x, L^x)$ for the $(m,q,d)$-low individual degree test as follows.
\begin{enumerate}
\item For each $u \in \F_q^m$, $(A^x)^u_a = A^{u, x}_a$.
\item For each axis parallel line $\ell \in \F_q^m$, $(B^x)^\ell_f = B^{\mathrm{append}_x(\ell)}_{\mathrm{append}_x(f)}$.
\item For each  line $\ell \in \F_q^m$, $(L^x)^\ell_f = L^{\mathrm{append}_x(\ell)}_{\mathrm{append}_x(f)}$.
\end{enumerate}
\end{definition}

\begin{lemma}\label{lem:restricted-probabilities}
Let $(\psi, A, B, L)$ be  an $(\eps, \delta, \gamma)$-good symmetric strategy
		for the $(m+1, d, q)$-low individual degree test.
For each $x \in \F_q$, let $(\psi, A^x, B^x, L^x)$ be the corresponding $x$-restricted strategy.
In addition, write $\eps_x$ for the probability that it fails the axis-parallel lines test,
$\delta_x$ for the probability it fails the self-consistency test,
and $\gamma_x$ for the probability it fails the diagonal lines test.
Then
\begin{equation*}
\E_{\bx} \eps_{\bx} \leq \left(\frac{m+1}{m}\right) \cdot\eps,
\quad
\E_{\bx} \delta_{\bx} \leq  \delta,
\quad
\E_{\bx} \gamma_{\bx} \leq \left(\frac{m+1}{m}\right) \cdot \gamma.
\end{equation*}
\end{lemma}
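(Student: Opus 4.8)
The plan is to reduce everything to a direct comparison of question distributions between the $(m+1,q,d)$-test and the $x$-restricted $(m,q,d)$-tests. The key observation is that each of the three subtests of the $(m+1,q,d)$-test, conditioned on a suitable event, produces exactly the question distribution of the corresponding subtest of the $x$-restricted strategy for a random $x$. I would handle the three inequalities one at a time, and in each case the strategy is: (i) identify which random choices in the $(m+1,q,d)$-test correspond to first picking $\bx \sim \F_q$ and then running the $x$-restricted subtest; (ii) observe that the acceptance predicate is identical (since $(A^x)^u_a = A^{u,x}_a$, $(B^x)^\ell_f = B^{\append_x(\ell)}_{\append_x(f)}$, etc., and $\append_x$ preserves the condition $f(u)=a$); (iii) conclude that the failure probability of the $x$-restricted subtest, averaged over $\bx$, is the failure probability of the $(m+1,q,d)$-subtest conditioned on the relevant event, which is at most the unconditional failure probability divided by the probability of that event.

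First, for $\E_{\bx}\delta_{\bx} \le \delta$: the self-consistency test of the $(m,q,d)$-test on the $x$-restricted strategy picks $\bu \sim \F_q^m$ and compares $(A^x)^{\bu}$-outcomes, i.e. $A^{(\bu,x)}$-outcomes. Averaging over $\bx \sim \F_q$, the pair $(\bu,\bx)$ is uniform in $\F_q^{m+1}$, which is exactly the self-consistency test of the $(m+1,q,d)$-test. Hence $\E_{\bx}\delta_{\bx}$ equals the failure probability of the self-consistency test of the original strategy, which is at most $\delta$; here there is no conditioning and the factor is $1$.

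Next, for $\E_{\bx}\eps_{\bx} \le \frac{m+1}{m}\eps$: in the $x$-restricted axis-parallel lines test we pick a direction $\bi \sim \{1,\dots,m\}$, a point $\bu\sim\F_q^m$, and form the axis-parallel line through $\bu$ in direction $\bi$; under $\append_x$ this is exactly an axis-parallel line of $\F_q^{m+1}$ through $(\bu,x)$ in one of the first $m$ directions. Averaging over $\bx\sim\F_q$, this is precisely the axis-parallel lines test of the $(m+1,q,d)$-test \emph{conditioned} on the event $\{\bi \le m\}$, which has probability $\frac{m}{m+1}$. Since the acceptance predicates match, $\E_{\bx}\eps_{\bx}$ is the failure probability of the $(m+1,q,d)$ axis-parallel lines test conditioned on $\bi \le m$, which is at most $\eps/\tfrac{m}{m+1} = \frac{m+1}{m}\eps$. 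The same argument with the diagonal lines test gives $\E_{\bx}\gamma_{\bx} \le \frac{m+1}{m}\gamma$: the $x$-restricted diagonal lines test picks $\bi \sim\{1,\dots,m\}$ and a direction $\bv\in\F_q^m$ with last $m-\bi$ coordinates zero, and under $\append_x$ this embeds as a diagonal line in $\F_q^{m+1}$ with direction $(\bv,0)$, i.e. last $(m+1)-\bi$ coordinates zero, which is the $(m+1,q,d)$ diagonal lines test conditioned on $\bi \le m$ (probability $\tfrac{m}{m+1}$); the degree bookkeeping also matches, since $md \le (m+1)d$ so a degree-$(m+1)d$ polynomial restricts appropriately — one should double check that the test expects degree exactly $md$ on these shorter-direction lines, which it does by the definition of the diagonal lines test on $\F_q^m$.

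The main obstacle, such as it is, is purely bookkeeping: one must be careful that the conditioning event in the $(m+1,q,d)$-test has exactly the claimed probability $\tfrac{m}{m+1}$ and that the induced conditional distribution on (line, point) pairs coincides on the nose with the $x$-restricted distribution after averaging over $\bx$ — in particular that the marginal on $\bx$ is uniform and independent of the rest, which follows because the last coordinate of a uniform point in $\F_q^{m+1}$ is uniform and independent, and the shorter-direction lines never involve that coordinate. There is no analytic content; the inequality $\Pr[\text{fail}\mid E] \le \Pr[\text{fail}]/\Pr[E]$ is the only tool needed beyond matching up the sample spaces.
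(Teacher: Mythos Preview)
Your proposal is correct and follows essentially the same approach as the paper: handle the self-consistency test by observing that averaging over $\bx$ reproduces the original test exactly, and handle the axis-parallel and diagonal lines tests by conditioning on the event $\{\bi \le m\}$ (which has probability $\tfrac{m}{m+1}$) and applying $\Pr[\text{fail}\mid E] \le \Pr[\text{fail}]/\Pr[E]$. The paper's proof is the same argument, presented slightly more tersely and with the diagonal case omitted as identical to the axis-parallel one.
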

\begin{proof}
We begin with the self-consistency test.
Here, both provers are given a uniformly random point $(\bu, \bx)$,
they measure using $A^{\bu, \bx} = (A^{\bx})^{\bu}$, and they succeed if their outcomes are the same.
This is equivalent to performing the self-consistency test on the $\bx$-restricted strategy, averaged over $\bx$,
and so $\E_{\bx} \delta_{\bx} \leq  \delta$. (It is a ``$\leq$" rather than an ``$=$" because $\delta$ is just an upper-bound on the failure probability.)

Next, we consider the axis-parallel lines test.
Suppose the provers are sent the line $\bell$ and the point $(\bu, \bx) \in \bell$.
With probability $\frac{1}{m+1}$, $\bell$ is parallel to the $(m+1)$-st direction.
When it is not, then $\bell = \mathrm{append}_x(\bell')$, where $\bell'$ is an axis-parallel line in $\F_q^m$.
In this case, the points prover measures with the measurement $A^{\bu, \bx} = (A^\bx)^{\bu}$ and receives an outcome $\ba$,
the lines prover measures with the measurement $B^{\bell} = (B^{\bx})^{\bell'}$ and receives an outcome $\boldf = \mathrm{append}_{\bx}(\boldf')$, and they succeed if $\boldf(\bu, \bx) = \ba$, or, equivalently, if $\boldf'(\bu) = \ba$.
Hence, the probability that they succeed is equal to the probability that the $\bx$-restricted strategy passes the $(m, q, d)$-low individual degree test, which is $\eps_{\bx}$. As a result,
\begin{align*}
\eps
&\geq \Pr_{\bell, \bu, \bx}[\text{$A$ and $B$ succeed given $\bell$, $(\bu, \bx)$}]\\
&\geq \left(\frac{m}{m+1}\right) \cdot \Pr_{\bell, \bu, \bx}[\text{$A$ and $B$ succeed given $\bell$, $(\bu, \bx)$} \mid \text{$\bell$ is not parallel to direction $m+1$}]\\
& =  \left(\frac{m}{m+1}\right) \cdot\E_{\bx} \eps_{\bx}.
\end{align*}

Finally, the analysis of the diagonal lines test follows the same proof as the axis-parallel lines test, and we omit it here.
\end{proof}

\begin{proof}[Proof of \Cref{thm:main-induction}]
We note that the bound we are proving is trivial when at least one of $\eps$, $\delta$, $\gamma$,  or $d/q$ is $\geq 1$, as $\nu$ is at least~$1$ in that case. Hence, we may assume that $\eps, \delta, \gamma, d/q \leq 1$. This will aid us when carrying out the error calculations, as it allows us to bound terms like $(d/q)^{1/2}$ by terms like $(d/q)^{1/4}$.

The proof is by induction on~$m$.
The base case is when $m = 1$.
In this case, there is only one axis-parallel line~$\ell$ in $\F_q^m$,
and so $B^\ell \in \polymeas{m}{q}{d}$.
Because this strategy fails the axis-parallel line test with probability at most~$\eps$,
\begin{equation*}
A^u_a \ot I \simeq_{\eps} I \ot B^\ell_{[f(u) = a]},
\end{equation*}
on average over $\bu \sim \F_q$.
We note that this bound holds independent of the value of~$k$.
This is even better than the theorem demands, and so the theorem is proved.

Now we perform the induction step.
Assuming that \Cref{thm:main-induction} holds for~$m \geq 1$,
we will show that it holds for~$m+1$ as well.
Let $(\psi, A, B, L)$ be an $(\eps, \delta, \gamma)$-good symmetric strategy
			for the $(m+1,q,d)$ low individual degree test.
Let $k \geq (m+1)d$ be an integer.

For each $x \in \F_q$, let $(\psi, A^x, B^x, L^x)$ be the corresponding $x$-restricted strategy.
In addition, write $\eps_x$ for the probability that it fails the axis-parallel lines test,
$\delta_x$ for the probability it fails the self-consistency test,
and $\gamma_x$ for the probability it fails the diagonal lines test.

For each $x \in \F_q$, we apply the inductive hypothesis to the $x$-restricted strategy with the same integer~$k$.
This is possible because $k \geq (m+1)d \geq md$.
Let
\begin{equation*}
\nu_x = 1000k^2 m^2 \cdot\Big(\eps_x^{1/1024} + \delta_x^{1/1024} + \gamma_x^{1/1024} + (d/q)^{1/1024}\Big),
\end{equation*}
and
\begin{equation*}
\sigma_x =  m^2 \cdot\Big(\nu_x + e^{- k/(80000m^2)}\Big)
\end{equation*}
Then the inductive hypothesis states that there exists a measurement $G^x \in \polymeas{m}{q}{d}$ such that
\begin{equation*}
(A^x)^u_a\ot I \simeq_{\sigma_x} I \ot G^x_{[g(u)=a]}.
\end{equation*}

Next, we apply self-improvement to each~$G^x$.
Let 
\begin{equation*}
\zeta_x = 3000m \cdot \Big(\eps_x^{1/32} + \delta_x^{1/32} + (d/q)^{1/32}\Big).
\end{equation*}
Then \Cref{thm:self-improvement-in-induction-section} produces
a projective sub-measurement $\widehat{G}^x \in \polysub{m}{q}{d}$ such that
for each $x \in \F_q$, the following statements hold.
\begin{enumerate}
    \item(Completeness): If $\widehat{G}^x =  \sum_g \widehat{G}^x_g$, then
    \begin{equation*}
    \bra{\psi} \widehat{G}^x \ot I \ket{\psi} \geq (1-\sigma_x) - \zeta_x.
    \end{equation*}
    \item(Consistency with~$A^x$): On average over $\bu \sim \F_q^m$,
        \begin{equation*}
            (A^x)^u_a \ot I \simeq_{\zeta_x} I \ot \widehat{G}^x_{[g(u) = a]}.
        \end{equation*}
    \item(Strong self-consistency):
    	\begin{equation*}
		\widehat{G}^x_g \ot I \approx_{\zeta_x} I \ot \widehat{G}^x_g.
	\end{equation*}
    \item(Boundedness): There exists a positive-semidefinite matrix~$Z^x$ such that
    \begin{equation*}
        \bra{\psi} Z^x \ot (I - \widehat{G}^x) \ket{\psi} \leq \zeta_x
    \end{equation*}
    and for each $g \in \polyfunc{m}{q}{d}$,
	\begin{equation*}
	Z^x \geq \Big( \E_{\bu} (A^x)^{\bu}_{g(\bu)}\Big).
	\end{equation*}
\end{enumerate}

Having produced the $\widehat{G}^x$'s, we would like to paste them together.
To do so, we need bounds for the above four properties
which are stated on average over~$\bx \sim \F_q$ rather than for each $x \in \F_q$ individually.
This involves computing ``averaged" versions of our error parameters  $\nu_x$, $\sigma_x$, and $\zeta_x$.
In these derivations, we will crucially use the fact that $\alpha \mapsto \alpha^c$ is concave when $c \leq 1$,
and hence $\E (\balpha)^c \leq (\E \balpha)^c$.
\begin{align*}
\E_{\bx} \nu_{\bx}
& = \E_{\bx}\Big(1000k^2 m^2 \cdot\Big(\eps_{\bx}^{1/1024} + \delta_{\bx}^{1/1024} + \gamma_{\bx}^{1/1024} + (d/q)^{1/1024}\Big)\Big)\\
& \leq1000k^2 m^2 \cdot\Big((\E_{\bx}\eps_{\bx})^{1/1024} + (\E_{\bx} \delta_{\bx})^{1/1024} + (\E_{\bx}\gamma_{\bx})^{1/1024} + (d/q)^{1/1024}\Big)\tag{by concavity}\\
&\leq1000k^2 m^2 \cdot\Big(\Big(\frac{(m+1)}{m} \cdot \eps\Big)^{1/1024} + \delta^{1/1024} + \Big(\frac{(m+1)}{m} \cdot \gamma\Big)^{1/1024} + (d/q)^{1/1024}\Big) \tag{by \Cref{lem:restricted-probabilities}}\\ 
&\leq 1000k^2 (m+1)^2 \cdot\Big(\eps^{1/1024} + \delta^{1/1024} + \gamma^{1/1024} + (d/q)^{1/1024}\Big).
\end{align*}
We call this value~$\nu$.
Next, if we define
\begin{equation*}
\sigma = m^2 \cdot\Big(\nu + e^{- k/(80000m^2)}\Big),
\end{equation*}
then
\begin{equation*}
\sigma \geq m^2 \cdot\Big(\E_{\bx} \nu_{\bx} + e^{- k/(80000m^2)}\Big)
	= \E_{\bx} \sigma_{\bx}.
\end{equation*}
Finally,
\begin{align*}
\E_{\bx} \zeta_{\bx}
& = \E_{\bx} \Big(3000m \cdot \Big(\eps_{\bx}^{1/32} + \delta_{\bx}^{1/32} + (d/q)^{1/32}\Big)\Big)\\
& \leq 3000m \cdot \Big((\E_{\bx}\eps_{\bx})^{1/32} + (\E_{\bx} \delta_{\bx})^{1/32} + (d/q)^{1/32}\Big)
	\tag{by concavity}\\
&\leq 3000 m \cdot \Big(\Big( \frac{(m+1)}{m} \cdot \eps\Big)^{1/32} + \delta^{1/32} + (d/q)^{1/32}\Big)
	\tag{by \Cref{lem:restricted-probabilities}}\\
&\leq3000 (m+1) \cdot \Big(\eps^{1/32} + \delta^{1/32} + (d/q)^{1/32}\Big).
\end{align*}
We call this value~$\zeta$.
We note for later that
\begin{align}
\zeta & = 3000 (m+1) \cdot \Big(\eps^{1/32} + \delta^{1/32} + (d/q)^{1/32}\Big)\nonumber\\
& \leq 1000 k^2 (m+1)^2\cdot \Big(\eps^{1/32} + \delta^{1/32} + (d/q)^{1/32}\Big) \tag{because $m \geq 2$}\nonumber\\
&\leq 1000k^2 (m+1)^2 \cdot\Big(\eps^{1/1024} + \delta^{1/1024} + \gamma^{1/1024} + (d/q)^{1/1024}\Big)\nonumber\\
& = \nu. \label{eq:zeta-smaller-than-nu}
\end{align}
Having defined these, the following statements hold.
\begin{enumerate}
    \item(Completeness): If $\widehat{G} =  \E_{\bx} \sum_g \widehat{G}^{\bx}_g$, then
    \begin{equation*}
    \bra{\psi} \widehat{G} \ot I \ket{\psi} \geq (1-\sigma) - \zeta.
    \end{equation*}
    \item(Consistency with~$A$): On average over $(\bu,\bx) \sim \F_q^{m+1}$,
        \begin{equation*}
            A^{u,x}_a \ot I \simeq_{\zeta} I \ot \widehat{G}^x_{[g(u) = a]}.
        \end{equation*}
    \item(Strong self-consistency): On average over $\bx \sim \F_q$,
    	\begin{equation*}
		\widehat{G}^x_g \ot I \approx_{\zeta} I \ot \widehat{G}^x_g.
	\end{equation*}
    \item(Boundedness): There exists a positive-semidefinite matrix~$Z^x$ for each $x \in \F_q$ such that
    \begin{equation*}
        \E_{\bx} \bra{\psi} Z^{\bx} \ot (I - \widehat{G}^{\bx}) \ket{\psi} \leq \zeta
    \end{equation*}
    and for each $x \in\F_q$ and $g \in \polyfunc{m}{q}{d}$,
	\begin{equation*}
	Z^x \geq \Big( \E_{\bu} A^{\bu,x}_{g(\bu)}\Big).
	\end{equation*}
\end{enumerate}

We are now ready to apply \Cref{thm:ld-pasting-in-induction-section}.
To do so, we note that because $(3000)^{1/32} \leq 2$
and $32 \cdot 32 = 1024$,
\begin{align*}
\zeta^{1/32}
&= \Big(3000 (m+1) \cdot \Big(\eps^{1/32} + \delta^{1/32} + (d/q)^{1/32}\Big)\Big)^{1/32}\\
&\leq 2 (m+1) \cdot \Big(\eps^{1/1024} + \delta^{1/1024} + (d/q)^{1/1024}\Big).
\end{align*}
Hence,
\begin{align*}
&100 k^2m \cdot \left(\eps^{1/32} + \delta^{1/32} + \gamma^{1/32} + \zeta^{1/32} + (d/q)^{1/32}\right)\\
\leq~&100 k^2m \cdot \left(\eps^{1/32} + \delta^{1/32} + \gamma^{1/32} + 2 (m+1) \cdot \Big(\eps^{1/1024} + \delta^{1/1024} + (d/q)^{1/1024}\Big)+ (d/q)^{1/32}\right)\\
\leq~& 200 k^2 m(m+1) \cdot\left(\eps^{1/1024} + \delta^{1/1024} + \gamma^{1/1024} + \eps^{1/1024} + \delta^{1/1024} + (d/q)^{1/1024}+ (d/q)^{1/1024}\right)\\
\leq~& 1000 k^2 (m+1)^2 \cdot\left(\eps^{1/1024} + \delta^{1/1024} + \gamma^{1/1024} + (d/q)^{1/1024}\right)\\
=~& \nu.
\end{align*}
Then \Cref{thm:ld-pasting-in-induction-section} implies the existence of a
pasted measurement $H \in \polysub{m+1}{q}{d}$ which satisfies the following
property.
On average over $\bu \sim \F_q^{m+1}$,
\begin{equation*}
  A^{u}_a \otimes I \simeq_{\sigma^*} I \otimes H_{[h(u)=a]},
\end{equation*}
where
\begin{equation*}
  \sigma^* = (\sigma + \zeta) \cdot \left(1 + \frac{1}{100m}\right)
  + 2\nu + e^{- k/(80000m^2)}.
\end{equation*}
	  \ignore{
  \item (Completeness): \label{item:ld-pasting-inductive-N-completeness} If $H =  \sum_h H_h$, then
  	\begin{equation*}
	\bra{\psi} H \otimes I \ket{\psi} \geq 1 - \kappa \cdot \left(1 + \frac{1}{100m}\right)
    - \nu - e^{- k/(80000m^2)}.
	\end{equation*}
	}

The consistency with~$A$ is as guaranteed in the theorem statement.
Hence, we need only verify that the completeness bound implies the one in the theorem statement as well.
\begin{align}
\sigma^*
&= (\sigma + \zeta) \cdot \left(1 + \frac{1}{100m}\right)
						    + 2\nu + e^{- k/(80000m^2)}\nonumber\\
&\leq  (\sigma + \nu) \cdot \left(1 + \frac{1}{100m}\right)
						    + 2\nu + e^{- k/(80000m^2)} \tag{by \Cref{eq:zeta-smaller-than-nu}}\nonumber\\
& = \left(1 + \frac{1}{100m}\right) \cdot \left(m^2 \cdot\Big(\nu + e^{- k/(80000m^2)}\Big) + \nu\right)
		+ 2\nu + e^{- k/(80000m^2)}\nonumber\\
&\leq \left(1 + \frac{1}{100m}\right) \cdot (m^2+3) \cdot\Big(\nu + e^{- k/(80000m^2)}\Big).\label{eq:gonna-bound-m-function}
\end{align}
Now, because $m \geq 2$,
\begin{equation*}
\frac{1}{100m} \cdot
(m^2+3)
\leq
\frac{1}{100m}\cdot
(m^2+4m-5)
= 
\frac{1}{100m}\cdot
(m-1) (m+5)
\leq m-1
\leq 2(m-1).
\end{equation*}
Hence,
\begin{equation*}
\left(1 + \frac{1}{100m}\right) \cdot (m^2+3)
= m^2 + 3 + \frac{1}{100m} \cdot
(m^2+3)
\leq m^2 + 3 + 2(m-1)
= m^2 + 2m  + 1
= (m+1)^2.
\end{equation*}
As a result,
\begin{equation*}
\eqref{eq:gonna-bound-m-function}
\leq (m+1)^2\cdot\Big(\nu + e^{- k/(80000m^2)}\Big)
\leq (m+1)^2\cdot\Big(\nu + e^{- k/(80000(m+1)^2)}\Big).
\end{equation*}
\ignore{
\begin{align*}
&\kappa \cdot \left(1 + \frac{1}{100m}\right) + \nu + e^{- k/(80000m^2)}\\
=~&\left(1 + \frac{1}{100m}\right) \cdot m^2 \cdot\Big(\nu + e^{- k/(80000m^2)}\Big) + \nu+ e^{- k/(80000m^2)}\\
\leq~&\left(1 + \frac{1}{m}\right) \cdot m^2 \cdot\Big(\nu + e^{- k/(80000m^2)}\Big) + \nu+ e^{- k/(80000m^2)}\\
=~& m(m+1) \cdot\Big(\nu + e^{- k/(80000m^2)}\Big) + \nu+ e^{- k/(80000m^2)}\\
\leq~& (m+1)^2 \cdot \Big(\nu + e^{- k/(80000m^2)}\Big)\\
\leq~& (m+1)^2 \cdot \Big(\nu + e^{- k/(80000(m+1)^2)}\Big).
\end{align*}
}
This is the bound guaranteed by the theorem and so it completes the proof.
\end{proof}


\section{Expansion in the hypercube graph}
\label{sec:expansion}

\begin{definition}[Hypercube graph]
The \emph{hypercube graph} $C = (V,E)$ is the graph with vertex set $V = \F_q^m$ and an edge between $u, v \in V$ whenever $u$ and $v$ disagree in at most one coordinate (so that every vertex is connected to itself).
A random edge in~$C$, denoted $(\bu, \bv) \sim C$,
is distributed as follows:
draw $\bu \sim \F_q^m$, $\bi \sim \{1, \ldots, m\}$,  and $\bx \sim \F_q$, all uniformly at random, and set $\bv = \bu + \bx \cdot e_{\bi}$.
\end{definition}

We will use $M$ to denote the number of vertices in~$C$, i.e.\ $M = q^m$.
\ignore{
We note that $|E| =\frac{1}{2} \cdot n N$.
The following proposition gives a second interpretation for a uniformly random edge in~$C$.

\begin{proposition}\label{prop:rerandomize-coord}
The following two distributions are identical.
\begin{enumerate}
    \item Output $(\bu, \bv) \sim C$.
    \item Draw $\bu \sim \F_q^n$, $\bi \sim [n]$,  and $\bx \sim \F_q$, all uniformly at random, and output
    $(\bu, \bu + \bx \cdot e_{\bi})$.
\end{enumerate}
\end{proposition}
}
\subsection{Eigenvalues of the hypercube graph}

\begin{definition}[Adjacency matrix]
The \emph{normalized adjacency matrix of~$C$} is the matrix~$K$ defined as
\begin{equation*}
K =  \E_{(\bu, \bv) \sim C} \ket{\bu}\bra{\bv}.
\end{equation*}
The \emph{Laplacian of~$C$} is the matrix
\begin{equation*}
L = \frac{1}{M} \cdot I - K.
\end{equation*}
\end{definition}

The following proposition gives another convenient way of writing the Laplacian of~$C$.

\begin{proposition}\label{prop:laplacian-rewrite}
$\displaystyle L = \frac{1}{2} \cdot \E_{(\bu, \bv) \sim C} (\ket{\bu} - \ket{\bv}) \cdot (\bra{\bu} - \bra{\bv}).$
\end{proposition}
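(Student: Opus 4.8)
The plan is to expand the right-hand side directly and recognize the normalized adjacency matrix and identity. First I would write out the outer product: for a fixed edge $(u,v)$, we have
\begin{equation*}
(\ket{u} - \ket{v})(\bra{u} - \bra{v}) = \ket{u}\bra{u} + \ket{v}\bra{v} - \ket{u}\bra{v} - \ket{v}\bra{u}.
\end{equation*}
Taking the expectation over $(\bu,\bv)\sim C$ and using linearity of expectation splits this into four terms. The cross terms $\E_{(\bu,\bv)}\ket{\bu}\bra{\bv}$ and $\E_{(\bu,\bv)}\ket{\bv}\bra{\bu}$ are each equal to $K$ by definition of $K$ (for the second one, this uses that the edge distribution is symmetric under swapping $\bu$ and $\bv$, which follows from the description of a random edge: $\bv = \bu + \bx\cdot e_{\bi}$ with $\bx\sim \F_q$ uniform, so $(\bu,\bv)$ and $(\bv,\bu)$ are identically distributed since $-\bx$ is also uniform).

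Next I would handle the diagonal terms $\E_{(\bu,\bv)}\ket{\bu}\bra{\bu}$ and $\E_{(\bu,\bv)}\ket{\bv}\bra{\bv}$. Since the marginal distribution of $\bu$ alone (obtained by drawing $\bu\sim \F_q^m$, then $\bi$, $\bx$, and discarding the latter) is uniform on $\F_q^m = V$, we get $\E_{(\bu,\bv)}\ket{\bu}\bra{\bu} = \E_{\bu\sim\F_q^m}\ket{\bu}\bra{\bu} = \frac{1}{M}\cdot I$, where $M = q^m$ is the number of vertices. Likewise, by the symmetry of the edge distribution noted above, the marginal of $\bv$ is also uniform, so $\E_{(\bu,\bv)}\ket{\bv}\bra{\bv} = \frac{1}{M}\cdot I$ as well. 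Combining, the right-hand side becomes
\begin{equation*}
\frac{1}{2}\left(\frac{1}{M}I + \frac{1}{M}I - K - K\right) = \frac{1}{M}I - K = L,
\end{equation*}
which is exactly the Laplacian by its definition.

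There is no real obstacle here; the only point requiring a word of care is the symmetry of the random-edge distribution under swapping endpoints, which is needed both for the cross terms and for the $\bv$-marginal, and which is immediate from the explicit description of $(\bu,\bv)\sim C$. I would state this symmetry explicitly as the one non-trivial ingredient and otherwise let the computation speak for itself.
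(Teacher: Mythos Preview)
Your proposal is correct and is essentially the same as the paper's proof: expand the outer product, use that both marginals of the random edge are uniform to identify the diagonal terms with $\frac{1}{M}I$, and use the swap-symmetry of the edge distribution to identify the cross terms with $K$. The only cosmetic difference is that the paper works from $L = \frac{1}{M}I - K$ toward the outer-product expression, whereas you expand the outer product and simplify to $L$; the content is identical.
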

\begin{proof}
If we draw $(\bu, \bv) \sim C$, then both $\bu$ and~$\bv$ are distributed as uniformly random elements of~$\F_q^m$. As a result,
\begin{equation*}
\frac{1}{M} \cdot I = \E_{\bu \in \F_q^m} \ket{\bu}\bra{\bu} = \frac{1}{2}\cdot\E_{(\bu, \bv) \sim C}  \ket{\bu}\bra{\bu} + \ket{\bv}\bra{\bv}.
\end{equation*}
In addition, $(\bu, \bv)$ is distributed identically to $(\bv, \bu)$. As a result,
\begin{equation*}
K =  \E_{(\bu, \bv) \sim C} \ket{\bu}\bra{\bv} = \frac{1}{2} \cdot \E_{(\bu, \bv) \sim C} \ket{\bu}\bra{\bv} + \ket{\bv}\bra{\bu}.
\end{equation*}
Combining these two,
\begin{equation*}
L = \frac{1}{2}\cdot\E_{(\bu, \bv) \sim C} \left[\ket{\bu}\bra{\bu} + \ket{\bv}\bra{\bv} - \ket{\bu}\bra{\bv} - \ket{\bv}\bra{\bu}\right]
= \frac{1}{2} \cdot \E_{(\bu, \bv) \sim C} (\ket{\bu} - \ket{\bv}) \cdot (\bra{\bu} - \bra{\bv}).\qedhere
\end{equation*}
\end{proof}

The most important properties of the adjacency matrix are its eigenvalues and eigenvectors. These are provided in the next proposition, which is standard in the literature.

\begin{proposition}\label{prop:eigenvectors}
For each $\alpha \in \F_q^m$, define the vector
\begin{equation*}
\ket{\varphi_\alpha} := \frac{1}{M^{1/2}} \cdot \sum_{u \in \F_q^m} \omega^{\mathrm{tr}[u \cdot \alpha]}\cdot \ket{u}.
\end{equation*}
Then the following two statements are true.
\begin{enumerate}
\item The $\ket{\varphi_\alpha}$'s form an orthonormal basis of $\C^V$.\label{item:orthonormal}
\item For each $\alpha \in \F_q^m$, $\ket{\varphi_\alpha}$ is an eigenvector for~$K$ with eigenvalue~$\frac{1}{M} \cdot  \frac{m - |\alpha|}{m}$,
	where $|\alpha|$ is the number of nonzero coordinates in~$\alpha$.\label{item:eigenvector}
\end{enumerate}
\end{proposition}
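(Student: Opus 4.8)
The statement is the standard Fourier-analytic diagonalization of the Cayley-graph adjacency operator on $\F_q^m$, so I would follow the classical template. The plan is to verify the two claims separately: first that the $\ket{\varphi_\alpha}$ form an orthonormal basis, then that each is an eigenvector of $K$ with the claimed eigenvalue. Both reduce to the character-sum identity already established as \Cref{prop:fourier-fact-vector}, namely $\E_{\bu} \omega^{\tr[\bu \cdot v]}$ is $1$ if $v = 0$ and $0$ otherwise.

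For orthonormality, I would compute
\begin{equation*}
\braket{\varphi_\alpha \mid \varphi_\beta}
= \frac{1}{M} \sum_{u \in \F_q^m} \omega^{\tr[u\cdot(\beta - \alpha)]}
= \E_{\bu \sim \F_q^m} \omega^{\tr[\bu \cdot (\beta - \alpha)]},
\end{equation*}
using that $\ol{\omega^{\tr[u\cdot\alpha]}} = \omega^{-\tr[u\cdot\alpha]} = \omega^{\tr[u\cdot(-\alpha)]}$ and linearity of the trace, and then invoke \Cref{prop:fourier-fact-vector} to see this equals $\bone[\alpha = \beta]$. Since there are $M = q^m$ vectors $\ket{\varphi_\alpha}$ and they are orthonormal, they form a basis of $\C^V$ (which has dimension $M$).

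For the eigenvector claim, I would apply $K = \E_{(\bu,\bv)\sim C} \ket{\bu}\bra{\bv}$ to $\ket{\varphi_\alpha}$. Writing a random edge as $\bv = \bu + \bx\cdot e_{\bi}$ with $\bu \sim \F_q^m$, $\bi \sim [m]$, $\bx \sim \F_q$ independent, I get
\begin{equation*}
K \ket{\varphi_\alpha}
= \frac{1}{M^{1/2}} \E_{\bu,\bi,\bx} \omega^{\tr[(\bu + \bx e_{\bi})\cdot\alpha]} \ket{\bu}
= \frac{1}{M^{1/2}} \E_{\bu,\bi,\bx} \omega^{\tr[\bu\cdot\alpha]} \omega^{\tr[\bx \alpha_{\bi}]} \ket{\bu},
\end{equation*}
so that $K\ket{\varphi_\alpha} = \lambda_\alpha \ket{\varphi_\alpha}$ with $\lambda_\alpha = \frac{1}{M}\E_{\bi \sim [m], \bx \sim \F_q} \omega^{\tr[\bx \alpha_{\bi}]}$. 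By \Cref{prop:fourier-fact-scalar}, $\E_{\bx} \omega^{\tr[\bx \alpha_i]}$ is $1$ when $\alpha_i = 0$ and $0$ when $\alpha_i \neq 0$; averaging over $\bi$ gives $\frac{m - |\alpha|}{m}$, so $\lambda_\alpha = \frac{1}{M}\cdot\frac{m - |\alpha|}{m}$ as claimed. (The normalization $\frac{1}{M}$ in $K$'s definition is why the eigenvalue carries a $\frac{1}{M}$ factor.)

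There is no real obstacle here; the only mild care needed is bookkeeping the $\frac{1}{M}$ normalization consistently between the definition of $K$ and the stated eigenvalue, and being explicit that the distribution of a random edge factors as independent $\bu, \bi, \bx$ so that the sum over $\bu$ separates out cleanly to reproduce $\ket{\varphi_\alpha}$. I would present the computation in the order: (1) orthonormality via \Cref{prop:fourier-fact-vector}, (2) count to get a basis, (3) the eigenvalue computation via \Cref{prop:fourier-fact-scalar}.
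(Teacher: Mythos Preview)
Your proposal is correct and follows essentially the same approach as the paper: orthonormality via \Cref{prop:fourier-fact-vector} applied to $\beta-\alpha$, and the eigenvalue computation by writing a random edge as $(\bu,\bu+\bx e_{\bi})$, factoring $\omega^{\tr[(\bu+\bx e_{\bi})\cdot\alpha]}$, and invoking \Cref{prop:fourier-fact-scalar}. The paper's proof is the same calculation with the same two references.
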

\begin{proof}
First, we prove \Cref{item:orthonormal}.
Given $\alpha, \beta \in \F_q^m$,
\begin{equation*}
\braket{\varphi_\alpha \mid \varphi_\beta}
= \frac{1}{M} \sum_{u \in \F_q^m} \omega^{\mathrm{tr}[u \cdot (\beta-\alpha)]}
= \left\{\begin{array}{rl}
	1 & \text{if } \alpha = \beta,\\
	0 & \text{otherwise}.
	\end{array} \right.\tag{by~\Cref{prop:fourier-fact-vector}}
\end{equation*}
As a result, the $\ket{\varphi_\alpha}$ vectors form an orthonormal basis of~$\C^V$.

Next, we prove \Cref{item:eigenvector}.
Given $\alpha \in \F_q^m$,
\begin{equation}\label{eq:eigenvector-calculation}
K \cdot \ket{\varphi_\alpha}
= \left(\E_{(\bu, \bv) \sim C} \ket{\bu}\bra{\bv}\right) \cdot \bigg(\frac{1}{M^{1/2}} \cdot \sum_{u \in \F_q^m} \omega^{\mathrm{tr}[u \cdot \alpha]}\cdot \ket{u}\bigg)\\
= \frac{1}{M^{1/2}} \cdot \E_{(\bu, \bv) \sim C} \omega^{\mathrm{tr}[\bv \cdot \alpha]} \ket{\bu}.
\end{equation}
By definition of a random edge,
we can replace~$\bv$ with $\bu + \bx \cdot e_{\bi}$, where $\bi$ is a uniformly random index in $\{1, \ldots, m\}$ and $\bx$ is a uniformly random element of $\F_q$.
As a result,
\begin{equation*}
\eqref{eq:eigenvector-calculation} 
= \frac{1}{M^{1/2}} \cdot \E_{\bu,\bi, \bx} \omega^{\mathrm{tr}[(\bu + \bx \cdot e_{\bi}) \cdot \alpha]} \ket{\bu}
= \left(\E_{\bi, \bx}\omega^{\mathrm{tr}[(\bx \cdot e_{\bi}) \cdot \alpha]}\right)  \cdot \frac{1}{M^{1/2}} \cdot \E_{\bu} \omega^{\mathrm{tr}[\bu \cdot \alpha]} \ket{\bu}
= \frac{1}{M}\left(\E_{\bi, \bx}\omega^{\mathrm{tr}[\bx \cdot \alpha_{\bi}]}\right)  \cdot \ket{\varphi_\alpha}.
\end{equation*}
Hence, $\ket{\varphi_\alpha}$ is an eigenvector of~$K$ with eigenvalue
\begin{equation*}
\frac{1}{M} \cdot \E_{\bi} \left[\E_{\bx}\omega^{\mathrm{tr}[\bx \cdot \alpha_{\bi}]}\right]
= \frac{1}{M} \cdot \E_{\bi} \left[\bone[\alpha_{\bi} = 0]\right]
= \frac{1}{M} \cdot  \frac{m - |\alpha|}{m}. \tag{by~\Cref{prop:fourier-fact-scalar}}
\end{equation*}
This concludes the proof.
\end{proof}

We will use the following corollary of \Cref{prop:eigenvectors}.

\begin{corollary}\label{cor:laplacian-spectral-gap}
Let $\lambda_1 \leq \lambda_2 \leq \cdots \leq \lambda_M$ be the eigenvalues of~$L$. Then $\lambda_1 = 0$ and $\lambda_2 = \frac{1}{m M}$.
\end{corollary}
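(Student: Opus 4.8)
The plan is to read off the eigenvalues of $L$ directly from \Cref{prop:eigenvectors}. First I would recall that $L = \frac{1}{M}\cdot I - K$, so $L$ and $K$ are simultaneously diagonalizable: for each $\alpha \in \F_q^m$, the vector $\ket{\varphi_\alpha}$ is an eigenvector of $K$ with eigenvalue $\frac{1}{M}\cdot\frac{m - |\alpha|}{m}$ by \Cref{prop:eigenvectors}\Cref{item:eigenvector}, hence it is an eigenvector of $L$ with eigenvalue
\begin{equation*}
\frac{1}{M} - \frac{1}{M}\cdot\frac{m-|\alpha|}{m} = \frac{1}{M}\cdot\frac{|\alpha|}{m}.
\end{equation*}
Since the $\ket{\varphi_\alpha}$'s form an orthonormal basis of $\C^V$ by \Cref{prop:eigenvectors}\Cref{item:orthonormal}, these are \emph{all} the eigenvalues of $L$, counted with multiplicity: the eigenvalue associated to $\alpha$ is $\frac{|\alpha|}{mM}$, and $|\alpha|$ ranges over $\{0,1,\ldots,m\}$.

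Next I would sort these values. The only $\alpha$ with $|\alpha| = 0$ is $\alpha = 0$, giving eigenvalue $0$; this is the smallest, so $\lambda_1 = 0$. The next smallest value of $|\alpha|$ is $1$, achieved by any $\alpha$ supported on a single coordinate (there are $m(q-1) \geq 1$ such $\alpha$, so this eigenvalue actually occurs), giving eigenvalue $\frac{1}{mM}$. Every other $\alpha$ has $|\alpha| \geq 1$ as well, so no eigenvalue lies strictly between $0$ and $\frac{1}{mM}$. Therefore $\lambda_2 = \frac{1}{mM}$.

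There is no real obstacle here — the corollary is a direct bookkeeping consequence of \Cref{prop:eigenvectors}. The only minor point to be careful about is that $\lambda_2$ is genuinely attained (i.e. that there exists $\alpha$ with $|\alpha| = 1$), which requires $m \geq 1$ and $q \geq 2$; both hold since $q$ is a prime power and $m$ is the dimension of a nontrivial ambient space in the induction. I would state this explicitly in one line to close the argument.
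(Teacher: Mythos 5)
Your proof is correct and takes essentially the same approach as the paper: both read off the eigenvalues of $L$ from \Cref{prop:eigenvectors} via the relation $L = \frac{1}{M}I - K$ and then identify the smallest two. The paper phrases the bookkeeping in terms of the eigenvalues of $K$ rather than $L$, and omits the (true but worth noting) check that the second eigenvalue is attained, which you handle explicitly.
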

\begin{proof}
Let $\mu_1 \geq \mu_2 \geq \cdots \geq \mu_M$ be the eigenvalues of~$K$. Then $\mu_i = \frac{1}{M} - \lambda_i$.
Thus, it suffices to show that $\mu_1 = \frac{1}{M}$ and $\mu_2 = \frac{1}{M} \cdot \frac{m-1}{m}$.
By \Cref{prop:eigenvectors}, $\ket{\varphi_\alpha}$ has eigenvalue $\frac{1}{M}$ when $|\alpha| = 0$ and eigenvalue $\frac{1}{M} \cdot \frac{m-1}{m}$ when $|\alpha| =1$.
\end{proof}

\subsection{Local and global variance}

In this section, $\ket{\psi}$ will denote a vector (not necessarily normalized) in $\calH_A \otimes \calH_B$,
and for each $u \in \F_q^m$, $0 \leq A^u \leq I$ will be a matrix acting on $\calH_A$.

\begin{definition}\label{def:local-and-variance}
The \emph{local variance of~$A$ on $\ket{\psi}$} is defined as
\begin{equation*}
\mathbf{Var}_{\mathrm{local}}(A, \psi):=\frac{1}{2} \cdot\E_{(\bu, \bv) \sim C} \bra{\psi} (A^{\bu} - A^{\bv})^2 \otimes I \ket{\psi}.
\end{equation*}
The \emph{global variance of~$A$ on $\ket{\psi}$} is defined as
\begin{equation*}
\mathbf{Var}_{\mathrm{global}}(A, \psi):=\frac{1}{2} \cdot\E_{\bu, \bv \sim \F_q^m} \bra{\psi} (A^{\bu} - A^{\bv})^2 \otimes I \ket{\psi}.
\end{equation*}
\end{definition}

The global variance differs from the local variance
because $\bu, \bv$ are chosen independently from~$\F_q^m$
rather than from the edges of $C$.
A standard fact from spectral graph theory allows us to 
use the expansion of $C$ to relate these two quantities.

\begin{lemma}\label{lem:local-to-global}
$\displaystyle
\mathbf{Var}_{\mathrm{global}}(A, \psi) \leq  m \cdot \mathbf{Var}_{\mathrm{local}}(A,\psi).
$
\end{lemma}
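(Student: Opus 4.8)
The plan is to reduce the inequality to a comparison of two positive-semidefinite operators on $\C^V \otimes (\calH_A \otimes \calH_B)$, where $V = \F_q^m$: one built from the hypercube Laplacian $L$, the other from the Laplacian of the complete graph on $V$. The factor of $m$ will then be read off from the spectral data already computed in \Cref{prop:eigenvectors}.

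First I would put both variances into a ``lifted'' form. For each $u \in \F_q^m$ set $\ket{\phi_u} = (A^u \otimes I)\ket{\psi}$, so that $\bra{\psi}(A^u - A^v)^2 \otimes I \ket{\psi} = \|\ket{\phi_u} - \ket{\phi_v}\|^2$ since each $A^u$ is Hermitian. Introduce $\ket{\Phi} = \sum_{u \in \F_q^m} \ket{u} \otimes \ket{\phi_u} \in \C^V \otimes (\calH_A \otimes \calH_B)$. A short computation, exactly the one behind \Cref{prop:laplacian-rewrite}, shows that for any distribution $\calD$ on ordered pairs $(u,v)$ one has $\tfrac12 \E_{(\bu,\bv) \sim \calD}\|\ket{\phi_{\bu}} - \ket{\phi_{\bv}}\|^2 = \bra{\Phi}(L_{\calD} \otimes I)\ket{\Phi}$, where $L_{\calD} = \tfrac12 \E_{(\bu,\bv) \sim \calD}(\ket{\bu} - \ket{\bv})(\bra{\bu} - \bra{\bv})$ acts on $\C^V$; indeed, $(\bra{\bu} - \bra{\bv}) \otimes I$ applied to $\ket{\Phi}$ yields $\ket{\phi_{\bu}} - \ket{\phi_{\bv}}$. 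Taking $\calD$ to be a uniformly random edge of $C$ gives $\mathbf{Var}_{\mathrm{local}}(A,\psi) = \bra{\Phi}(L \otimes I)\ket{\Phi}$ by \Cref{prop:laplacian-rewrite}, and taking $\calD$ to be two independent uniform points of $\F_q^m$ gives $\mathbf{Var}_{\mathrm{global}}(A,\psi) = \bra{\Phi}(L_{\mathrm{c}} \otimes I)\ket{\Phi}$, where $L_{\mathrm{c}} = \tfrac1M(I - \ket{\mathbf 1}\bra{\mathbf 1})$ and $\ket{\mathbf 1} = \tfrac1{\sqrt M}\sum_u \ket{u}$ (this follows by expanding the square and using $\E_{\bu}\ket{\bu}\bra{\bu} = \tfrac1M I$ and $\E_{\bu,\bv}\ket{\bu}\bra{\bv} = \tfrac1M\ket{\mathbf 1}\bra{\mathbf 1}$).

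Next I would establish the operator inequality $L \succeq \tfrac1m L_{\mathrm{c}}$ on $\C^V$. Both operators are diagonalized by the Fourier basis $\{\ket{\varphi_\alpha}\}_{\alpha \in \F_q^m}$ of \Cref{prop:eigenvectors}: from $K\ket{\varphi_\alpha} = \tfrac1M \cdot \tfrac{m-|\alpha|}{m}\ket{\varphi_\alpha}$ we get $L\ket{\varphi_\alpha} = \tfrac{|\alpha|}{mM}\ket{\varphi_\alpha}$, while $\ket{\varphi_0} = \ket{\mathbf 1}$ gives $L_{\mathrm{c}}\ket{\varphi_\alpha} = \tfrac1M\ket{\varphi_\alpha}$ for $\alpha \neq 0$ and $L_{\mathrm{c}}\ket{\varphi_0} = 0$. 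On the $\alpha = 0$ eigenline both sides are $0$, and for $\alpha \neq 0$ we have $\tfrac{|\alpha|}{mM} \geq \tfrac1{mM} = \tfrac1m \cdot \tfrac1M$, so $L \succeq \tfrac1m L_{\mathrm{c}}$, i.e. $L_{\mathrm{c}} \preceq m L$. Tensoring with $I$ on the second factor preserves the ordering, so $L_{\mathrm{c}} \otimes I \preceq m(L \otimes I)$, and sandwiching between $\ket{\Phi}$ yields $\mathbf{Var}_{\mathrm{global}}(A,\psi) = \bra{\Phi}(L_{\mathrm{c}}\otimes I)\ket{\Phi} \leq m \bra{\Phi}(L\otimes I)\ket{\Phi} = m \cdot \mathbf{Var}_{\mathrm{local}}(A,\psi)$, which is the claim.

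The argument is routine; the only steps needing a little care are verifying the lifted identity $\tfrac12 \E\|\ket{\phi_{\bu}} - \ket{\phi_{\bv}}\|^2 = \bra{\Phi}(L_{\calD}\otimes I)\ket{\Phi}$ and checking that $\ket{\varphi_0}$ is precisely the uniform vector spanning $\ker L_{\mathrm{c}}$, so that the factor $m$ lost in passing from $L_{\mathrm{c}}$ to $L$ is exactly the ratio of their smallest nonzero eigenvalues recorded in \Cref{cor:laplacian-spectral-gap}. I do not anticipate any genuine obstacle here.
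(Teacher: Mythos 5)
Your proof is correct and takes essentially the same approach as the paper's: both lift the two variances to quadratic forms on $\C^V \otimes \calH_A \otimes \calH_B$ and invoke the Fourier diagonalization of the hypercube Laplacian from \Cref{prop:eigenvectors} and \Cref{cor:laplacian-spectral-gap}. Your packaging --- stating the operator inequality $L_{\mathrm{c}} \preceq m L$ outright and then sandwiching between $\ket{\Phi} = A_{\mathrm{combine}}\ket{\psi}$ --- is a slightly tidier route to the same conclusion than the paper's decomposition of $A_{\mathrm{combine}}$ into its $\ket{\varphi_0}$-component and orthogonal complement, but the mathematical content is identical.
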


Before proving \Cref{lem:local-to-global},
we will give nice expressions for the local and global variances.
To begin, we show how to rewrite the local variance in terms of the Laplacian of~$C$.

\begin{lemma}\label{lem:local-rewrite}
Define the matrix
\begin{equation*}
A_{\mathrm{combine}} = \sum_{u \in \F_q^m} \ket{u} \otimes A^u \otimes I.
\end{equation*}
Then
\begin{equation*}
\mathrm{Tr}(A_{\mathrm{combine}}^\dagger \cdot (L \otimes \ket{\psi}\bra{\psi}) \cdot A_{\mathrm{combine}})= \mathbf{Var}_{\mathrm{local}}(A, \psi).
\end{equation*}
\end{lemma}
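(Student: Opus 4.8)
\textbf{Proof proposal for \Cref{lem:local-rewrite}.}

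The plan is to expand the trace on the left-hand side directly using the definition of $A_{\mathrm{combine}}$ and the expression for the Laplacian $L$ from \Cref{prop:laplacian-rewrite}, namely $L = \tfrac12 \E_{(\bu,\bv)\sim C}(\ket{\bu} - \ket{\bv})(\bra{\bu} - \bra{\bv})$. Substituting this, the quantity becomes
\begin{equation*}
\tfrac12 \cdot \E_{(\bu,\bv)\sim C} \mathrm{Tr}\Big( A_{\mathrm{combine}}^\dagger \cdot \big((\ket{\bu} - \ket{\bv})(\bra{\bu} - \bra{\bv}) \otimes \ket{\psi}\bra{\psi}\big) \cdot A_{\mathrm{combine}}\Big).
\end{equation*}
First I would compute $(\bra{\bu} - \bra{\bv}) \otimes \bra{\psi})\cdot A_{\mathrm{combine}}$. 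Since $A_{\mathrm{combine}} = \sum_{w} \ket{w}\otimes A^w \otimes I$, pairing the bra $\bra{u}$ (acting on the first register) against $\ket{w}$ picks out the $w = u$ term, giving $\big(\bra{\bu}\otimes\bra{\psi}\big)\cdot A_{\mathrm{combine}} = (\bra{\psi} (A^{\bu}\otimes I))$, a row vector in $\calH_A \otimes \calH_B$; similarly for $\bra{\bv}$. Hence $(\bra{\bu} - \bra{\bv})\otimes\bra{\psi})\cdot A_{\mathrm{combine}} = \bra{\psi}((A^{\bu} - A^{\bv})\otimes I)$.

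Therefore the trace inside the expectation collapses to
\begin{equation*}
\mathrm{Tr}\Big( \big(\bra{\psi}((A^{\bu} - A^{\bv})\otimes I)\big)^\dagger \cdot \bra{\psi}((A^{\bu} - A^{\bv})\otimes I) \Big) = \bra{\psi}\big((A^{\bu} - A^{\bv})\otimes I\big)^\dagger\big((A^{\bu} - A^{\bv})\otimes I\big)\ket{\psi},
\end{equation*}
using that for a row vector $v = \bra{\phi}$ one has $\mathrm{Tr}(v^\dagger v) = \braket{\phi\mid\phi}$. Since each $A^u$ is Hermitian, $(A^{\bu} - A^{\bv})$ is Hermitian, so this equals $\bra{\psi}(A^{\bu} - A^{\bv})^2\otimes I\ket{\psi}$. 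Putting the factor of $\tfrac12$ back and comparing with \Cref{def:local-and-variance}, we get exactly $\mathbf{Var}_{\mathrm{local}}(A,\psi)$.

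I do not expect any serious obstacle here; the lemma is a bookkeeping identity. The only point requiring a little care is keeping track of which tensor register each bra/ket acts on — the first register (indexed by $\F_q^m$, where $L$ lives), the $\calH_A$ register (where the $A^u$ live), and the $\calH_B$ register (where the $\ket{\psi}\bra{\psi}$ partially acts) — and making sure the contraction of $\bra{u}$ against $\sum_w \ket{w}\otimes A^w$ is performed on the correct register. A cleaner alternative I might use instead: write $\mathrm{Tr}(A_{\mathrm{combine}}^\dagger (X \otimes \ket\psi\bra\psi) A_{\mathrm{combine}}) = \sum_{u,v} X_{uv}\, \bra\psi (A^v)^\dagger A^u \otimes I\ket\psi$ for any matrix $X$ on the first register, then plug in $X = L$ and use $\sum_{u,v} L_{uv} M_{uv} = \tfrac12\E_{(\bu,\bv)\sim C}(M_{\bu\bu} - M_{\bu\bv} - M_{\bv\bu} + M_{\bv\bv})$ with $M_{uv} = \bra\psi A^v A^u\otimes I\ket\psi$; Hermiticity of the $A^u$ and symmetry of the edge distribution then give the claim.
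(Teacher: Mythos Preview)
Your proposal is correct and follows essentially the same approach as the paper: both substitute the expression $L = \tfrac12\,\E_{(\bu,\bv)\sim C}(\ket{\bu}-\ket{\bv})(\bra{\bu}-\bra{\bv})$ from \Cref{prop:laplacian-rewrite}, compute $((\bra{u}-\bra{v})\otimes\bra{\psi})\cdot A_{\mathrm{combine}} = \bra{\psi}((A^u-A^v)\otimes I)$, and then take the trace to recover the definition of $\mathbf{Var}_{\mathrm{local}}(A,\psi)$. Your alternative matrix-entry expansion is also valid but not used in the paper.
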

\begin{proof}
For any $u, v \in \F_q^m$,
\begin{align}
((\bra{u} - \bra{v}) \otimes \bra{\psi})\cdot A_{\mathrm{combine}}
&= ((\bra{u} - \bra{v}) \otimes \bra{\psi})\cdot \sum_{w \in \F_q^m} \ket{w} \otimes A^w \otimes I\nonumber\\
&=  \bra{\psi}\cdot  ((A^u- A^v) \otimes I).\label{eq:reader-probably-has-no-idea-whats-going-on-yet}
\end{align}
As a result,
\begin{align*}
&A_{\mathrm{combine}}^\dagger \cdot L \otimes \ket{\psi}\bra{\psi} \cdot A_{\mathrm{combine}}\\
=~& \frac{1}{2} \cdot \E_{(\bu, \bv) \sim C} A_{\mathrm{combine}}^\dagger ((\ket{\bu} - \ket{\bv}) \cdot (\bra{\bu} - \bra{\bv})\otimes \ket{\psi} \bra{\psi})\cdot A_{\mathrm{combine}}\tag{by \Cref{prop:laplacian-rewrite}}\\
=~&  \frac{1}{2} \cdot \E_{(\bu, \bv) \sim C} ((A^{\bu} - A^{\bv}) \otimes I) \cdot \ket{\psi}\bra{\psi} \cdot ((A^{\bu} - A^{\bv}) \otimes I). \tag{by \eqref{eq:reader-probably-has-no-idea-whats-going-on-yet}}
\end{align*}
Thus, if we take the trace,
\begin{equation*}
\mathrm{Tr}(A_{\mathrm{combine}}^\dagger \cdot (L \otimes \ket{\psi}\bra{\psi}) \cdot A_{\mathrm{combine}})
= \frac{1}{2} \cdot \E_{(\bu, \bv) \sim C} \bra{\psi} (A^{\bu} - A^{\bv})^2 \otimes I \cdot \ket{\psi}
= \mathbf{Var}_{\mathrm{local}}(A, \psi).
\end{equation*}
This completes the proof.
\end{proof}

Next, we give a simple expression for the global variance.

\begin{lemma}\label{lem:global-rewrite}
Expand $A_{\mathrm{combine}}$ as
\begin{equation*}
A_{\mathrm{combine}} = \ket{\varphi_0} \otimes A_0 + \ket{\varphi_{\perp}} \otimes A_{\perp},
\end{equation*}
where $\ket{\varphi_{\perp}}$ is orthogonal to $\ket{\varphi_0}$.
(Here we are writing $\ket{\varphi_0}$ for the vector $\ket{\varphi_\alpha}$ from \Cref{prop:eigenvectors} in the case of $\alpha = (0, \ldots, 0)$.)
Then
\begin{equation*}
\frac{1}{M} \cdot \mathrm{Tr}(\bra{\varphi_{\perp}} \otimes A_{\perp} \cdot (I \otimes \ket{\psi}\bra{\psi}) \cdot \ket{\varphi_{\perp}} \otimes A_{\perp})= \mathbf{Var}_{\mathrm{global}}(A, \psi).
\end{equation*}
\end{lemma}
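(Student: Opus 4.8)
The plan is to introduce a ``global Laplacian'' $L_{\mathrm{global}} := \tfrac1M I - \E_{\bu,\bv\sim\F_q^m}\ket{\bu}\bra{\bv}$, play exactly the role that $L$ plays in the proof of \Cref{lem:local-rewrite}, and evaluate $\mathrm{Tr}\bigl(A_{\mathrm{combine}}^\dagger(L_{\mathrm{global}}\otimes\ket{\psi}\bra{\psi})A_{\mathrm{combine}}\bigr)$ in two ways. First I would record two identities for $L_{\mathrm{global}}$: (i) since $\E_{\bu}\ket{\bu}\bra{\bu}=\tfrac1M I$ and the pair $(\bu,\bv)$ is exchangeable, the same manipulation as in \Cref{prop:laplacian-rewrite} gives $L_{\mathrm{global}} = \tfrac12\,\E_{\bu,\bv\sim\F_q^m}(\ket{\bu}-\ket{\bv})(\bra{\bu}-\bra{\bv})$; and (ii) since $\ket{\varphi_0}=M^{-1/2}\sum_u\ket{u}$ we have $\E_{\bu,\bv}\ket{\bu}\bra{\bv}=\tfrac1M\ket{\varphi_0}\bra{\varphi_0}$, hence $L_{\mathrm{global}} = \tfrac1M\bigl(I-\ket{\varphi_0}\bra{\varphi_0}\bigr)=\tfrac1M\,\Pi$, where $\Pi$ is the orthogonal projection onto $\ket{\varphi_0}^{\perp}$ in $\C^{V}$.

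With (i) in hand I would mirror the computation in the proof of \Cref{lem:local-rewrite} verbatim: substituting $L_{\mathrm{global}}=\tfrac12\E_{\bu,\bv}(\ket{\bu}-\ket{\bv})(\bra{\bu}-\bra{\bv})$ into the trace and applying \eqref{eq:reader-probably-has-no-idea-whats-going-on-yet}, namely $\bigl((\bra{u}-\bra{v})\otimes\bra{\psi}\bigr)A_{\mathrm{combine}} = \bra{\psi}\bigl((A^{u}-A^{v})\otimes I\bigr)$, yields $A_{\mathrm{combine}}^\dagger(L_{\mathrm{global}}\otimes\ket{\psi}\bra{\psi})A_{\mathrm{combine}} = \tfrac12\,\E_{\bu,\bv}\bigl((A^{\bu}-A^{\bv})\otimes I\bigr)\ket{\psi}\bra{\psi}\bigl((A^{\bu}-A^{\bv})\otimes I\bigr)$, whose trace is $\tfrac12\,\E_{\bu,\bv}\bra{\psi}(A^{\bu}-A^{\bv})^2\otimes I\ket{\psi} = \mathbf{Var}_{\mathrm{global}}(A,\psi)$.

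It then remains to identify this trace with the right-hand side of the lemma. Using (ii), $L_{\mathrm{global}}$ annihilates the $\ket{\varphi_0}\otimes A_0$ component of $A_{\mathrm{combine}}$ and acts as multiplication by $\tfrac1M$ on the $\ket{\varphi_\perp}\otimes A_\perp$ component, while the cross terms vanish because these two components are orthogonal in the $\C^{V}$ register; since $\Pi$ and $\ket{\psi}\bra{\psi}$ act on disjoint tensor factors, $(\Pi\otimes I)\,(I\otimes\ket{\psi}\bra{\psi})\,(\Pi\otimes I) = \Pi\otimes\ket{\psi}\bra{\psi}$, so $\mathrm{Tr}\bigl(A_{\mathrm{combine}}^\dagger(L_{\mathrm{global}}\otimes\ket{\psi}\bra{\psi})A_{\mathrm{combine}}\bigr) = \tfrac1M\,\mathrm{Tr}\bigl((\ket{\varphi_\perp}\otimes A_\perp)^\dagger\,(I\otimes\ket{\psi}\bra{\psi})\,(\ket{\varphi_\perp}\otimes A_\perp)\bigr)$, which under the convention that $\bra{\varphi_\perp}\otimes A_\perp$ denotes $(\ket{\varphi_\perp}\otimes A_\perp)^\dagger$ is exactly the displayed quantity. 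Combining the two evaluations proves the lemma.

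I do not anticipate a substantive obstacle here: all the steps are short linear-algebra computations, and the only thing requiring care is the bookkeeping of tensor factors — precisely pinning down what ``$\ket{\varphi_\perp}\otimes A_\perp$'' and its bra-counterpart mean (the $\ket{\varphi_0}^{\perp}$-part of $A_{\mathrm{combine}}$ in the $\C^{V}$ register, and its adjoint), and keeping $\calH_{B}$'s trailing identity factor straight throughout.
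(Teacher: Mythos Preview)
Your proposal is correct and takes a genuinely different route from the paper. The paper proceeds by explicit computation: it first determines $A_0 = (\bra{\varphi_0}\otimes I)\,A_{\mathrm{combine}} = M^{-1/2}\sum_u A^u\otimes I$, deduces that $\ket{\varphi_\perp}\otimes A_\perp = \sum_u \ket{u}\otimes(A^u - A_{\mathrm{avg}})\otimes I$, plugs this into the trace to obtain $\E_{\bu}\bra{\psi}(A^{\bu}-A_{\mathrm{avg}})^2\otimes I\ket{\psi}$, and finally uses the algebraic identity $\E_{\bu}(A^{\bu}-A_{\mathrm{avg}})^2 = \tfrac12\E_{\bu,\bv}(A^{\bu}-A^{\bv})^2$ to reach $\mathbf{Var}_{\mathrm{global}}(A,\psi)$. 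Your approach instead introduces the global Laplacian $L_{\mathrm{global}} = \tfrac1M\Pi$ and evaluates $\mathrm{Tr}\bigl(A_{\mathrm{combine}}^\dagger(L_{\mathrm{global}}\otimes\ket{\psi}\bra{\psi})A_{\mathrm{combine}}\bigr)$ in two ways, making the lemma a direct structural parallel of \Cref{lem:local-rewrite}. Your route is more conceptual and avoids ever writing down $A_0$ or $A_\perp$ explicitly; the paper's route is more hands-on and makes the concrete form of $\ket{\varphi_\perp}\otimes A_\perp$ visible, which is not needed downstream but may aid intuition. Both are short; yours has the advantage of unifying the local and global pictures under one template.
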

\begin{proof}
We begin by computing~$A_0$:
\begin{equation*}
A_0
= \bra{\varphi_0} \otimes I \cdot A_{\mathrm{combine}}
= \bigg(\frac{1}{M^{1/2}} \cdot \sum_{u \in \F_q^m} \bra{u}\bigg) \otimes I \cdot \bigg(\sum_{u \in \F_q^m} \ket{u} \otimes A^u \otimes I\bigg)
= \frac{1}{M^{1/2}} \cdot \sum_{u \in \F_q^m} A^u \otimes I.
\end{equation*}
Then
\begin{align*}
\ket{\varphi_0} \otimes A_0
= \bigg(\frac{1}{M^{1/2}} \cdot \sum_{u \in \F_q^m} \ket{u}\bigg)\otimes \bigg(\frac{1}{M^{1/2}} \cdot \sum_{u \in \F_q^m} A^u \otimes I\bigg)
&= \frac{1}{M} \sum_{u \in \F_q^m} \ket{u} \otimes \sum_{v \in \F_q^m} A^v \otimes I\\
&= \sum_{u \in \F_q^m} \ket{u} \otimes A_{\mathrm{avg}} \otimes I,
\end{align*}
where we have written $A_{\mathrm{avg}} = \E_{\bu} A^{\bu}$.
As a result,
\begin{equation*}
\ket{\varphi_{\perp}} \otimes A_{\perp}
= A_{\mathrm{combine}} - \ket{\varphi_0} \otimes A_0
= \sum_{u \in \F_q^n} \ket{u} \otimes (A^u - A_{\mathrm{avg}}) \otimes I.
\end{equation*}
Thus,
\begin{align*}
&\bra{\varphi_{\perp}} \otimes A_{\perp} \cdot (I \otimes \ket{\psi}\bra{\psi}) \cdot \ket{\varphi_{\perp}} \otimes A_{\perp}\\
=&\sum_{u, v \in \F_q^m} \braket{u \mid v} \otimes ((A^u - A_{\mathrm{avg}}) \otimes I) \cdot \ket{\psi}\bra{\psi} \cdot ((A^v - A_{\mathrm{avg}}) \otimes I)\\
=&\sum_{u \in \F_q^m}  ((A^u - A_{\mathrm{avg}}) \otimes I) \cdot \ket{\psi}\bra{\psi} \cdot ((A^u - A_{\mathrm{avg}}) \otimes I).
\end{align*}
As a result, if we take the trace,
\begin{equation}\label{eq:just-took-trace}
\frac{1}{M} \cdot \mathrm{Tr}(\bra{\varphi_{\perp}} \otimes A_{\perp} \cdot (I \otimes \ket{\psi}\bra{\psi}) \cdot \ket{\varphi_{\perp}} \otimes A_{\perp})
= \E_{\bu \sim \F_q^m} \bra{\psi} (A^{\bu} - A_{\mathrm{avg}})^2 \otimes I \ket{\psi}.
\end{equation}
We can rewrite the squared expression as
\begin{align*}
\E_{\bu \sim \F_q^m}(A^{\bu} - A_{\mathrm{avg}})^2
&= \E_{\bu \sim \F_q^m}((A^{\bu})^2 + (A_{\mathrm{avg}})^2 - A^{\bu} \cdot A_{\mathrm{avg}} - A_{\mathrm{avg}} \cdot A^{\bu})\\
&= \E_{\bu \sim \F_q^m}((A^{\bu})^2 - (A_{\mathrm{avg}})^2)\\
&= \frac{1}{2}\cdot\E_{\bu, \bv \sim \F_q^m}((A^{\bu})^2 + (A^{\bv})^2 - A^{\bu} \cdot A^{\bv} - A^{\bv} \cdot A^{\bu})\\
&= \frac{1}{2}\cdot\E_{\bu, \bv \sim \F_q^m}(A^{\bu} - A^{\bv})^2.
\end{align*}
Thus,
\begin{equation*}
\eqref{eq:just-took-trace}
= \frac{1}{2} \cdot\E_{\bu, \bv \sim \F_q^m} \bra{\psi} (A^{\bu} - A^{\bv})^2 \otimes I \ket{\psi}
= \mathbf{Var}_{\mathrm{global}}(A, \psi).
\end{equation*}
This completes the proof.
\end{proof}

Now we prove \Cref{lem:local-to-global}.

\begin{proof}[Proof of \Cref{lem:local-to-global}]
We begin by computing
\begin{align}
A_{\mathrm{combine}}^\dagger \cdot L \otimes \ket{\psi}\bra{\psi} \cdot A_{\mathrm{combine}}
 &=(\bra{\varphi_0} \otimes A_0 + \bra{\varphi_{\perp}} \otimes A_{\perp}) \cdot L \otimes \ket{\psi}\bra{\psi} \cdot (\ket{\varphi_0} \otimes A_0 + \ket{\varphi_{\perp}} \otimes A_{\perp})\nonumber\\
  &=\bra{\varphi_{\perp}} \otimes A_{\perp} \cdot L \otimes \ket{\psi}\bra{\psi} \cdot \ket{\varphi_{\perp}} \otimes A_{\perp} \nonumber\\
  & = \bra{\varphi_{\perp}} L \ket{\varphi_{\perp}} \cdot A_{\perp}  \ket{\psi}\bra{\psi}  A_{\perp} ,\label{eq:used-0-eigenvector}
\end{align}
where the second step follows from the fact that $\ket{\varphi_{0}}$ is a $0$-eigenvector for $L$.
Note that because $\ket{\varphi_{\perp}}$ is orthogonal to $\ket{\varphi_0}$,
\begin{equation*}
\bra{\varphi_{\perp}} L \ket{\varphi_{\perp}} \geq \frac{1}{mM} \cdot \braket{\varphi_{\perp} \mid \varphi_{\perp}}
\end{equation*}
by \Cref{cor:laplacian-spectral-gap}.
As a result,
\begin{align*}
 \mathbf{Var}_{\mathrm{local}}(A,\psi) 
 & = \mathrm{Tr}(A_{\mathrm{combine}}^\dagger \cdot L \otimes \ket{\psi}\bra{\psi} \cdot A_{\mathrm{combine}}) \tag{by \Cref{lem:local-rewrite}}\\
  & = \bra{\varphi_{\perp}} L \ket{\varphi_{\perp}} \cdot \mathrm{Tr}(A_{\perp} \cdot  \ket{\psi}\bra{\psi} \cdot  A_{\perp}) \tag{by \Cref{eq:used-0-eigenvector}}\\
  & \geq \frac{1}{mM} \cdot \braket{\varphi_{\perp} \mid \varphi_{\perp}} \cdot \mathrm{Tr}(A_{\perp} \cdot  \ket{\psi}\bra{\psi} \cdot  A_{\perp})\\
    & = \frac{1}{mM} \cdot \mathrm{Tr}(\bra{\varphi_{\perp}} \otimes A_{\perp} \cdot (I \otimes \ket{\psi}\bra{\psi}) \cdot \ket{\varphi_{\perp}} \otimes A_{\perp})\\
    &=  \frac{1}{m} \cdot \mathbf{Var}_{\mathrm{global}}(A,\psi) . \tag{by \Cref{lem:global-rewrite}}
\end{align*}
This concludes the proof.
\end{proof}

\section{Global variance of the points measurements}
\label{sec:variance}
Throughout this section, $(\psi, A, B, L)$ will denote a fixed $(\eps, \delta, \gamma)$-good symmetric strategy for the $(m,q,d)$-low individual degree test.

\begin{lemma}\label{lem:generalize-b}
Let $G \in \polysub{m}{q}{d}$. Then
\begin{equation*}
B^\ell_{[f(u) = g(u)]} \otimes (G_g)^{1/2} \approx_{md/q} B^\ell_{g|_\ell} \otimes (G_g)^{1/2}
\end{equation*}
on the axis-parallel lines test distribution.
\end{lemma}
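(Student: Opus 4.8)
The plan is to unfold the definition of $\approx_{md/q}$ and reduce the whole estimate to a single application of the Schwartz--Zippel lemma. First I would note that since $g$ has individual degree $d$, its restriction $g|_\ell$ to an axis-parallel line $\ell$ is a univariate polynomial of degree at most $d$, hence a legal outcome of the projective measurement $B^\ell$, and it satisfies $g|_\ell(u)=g(u)$ at the test point $u$. Therefore $g|_\ell$ is one of the polynomials $f$ with $f(u)=g(u)$, so
\[
B^\ell_{[f(u)=g(u)]} - B^\ell_{g|_\ell}
= \sum_{\substack{f\neq g|_\ell\\ f(u)=g(u)}} B^\ell_f
=: D_{\ell,u,g}.
\]
Because $B^\ell$ is a \emph{projective} measurement, the $B^\ell_f$ are pairwise orthogonal projections, so $D_{\ell,u,g}$ is itself a projection with $0\le D_{\ell,u,g}\le I$.

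Next I would expand the state-dependent distance. Using that $D_{\ell,u,g}$ is a Hermitian projection and $(G_g)^{1/2}$ is Hermitian and positive semidefinite with $(G_g)^{1/2}(G_g)^{1/2}=G_g$, for each $g$ we get
\[
\big\| \big( D_{\ell,u,g}\otimes (G_g)^{1/2}\big)\ket{\psi}\big\|^2
= \bra{\psi} D_{\ell,u,g}\otimes G_g\ket{\psi}
= \sum_{\substack{f\neq g|_\ell\\ f(u)=g(u)}} \bra{\psi} B^\ell_f\otimes G_g\ket{\psi},
\]
a sum of nonnegative terms. Summing over $g$ and averaging over $(\bell,\bu)$ drawn as in the axis-parallel lines test --- whose marginal over $\bell$ is uniform over axis-parallel lines, with $\bu$ then uniform on $\bell$ --- and interchanging the point-average with the sums yields
\[
\E_{(\bell,\bu)} \sum_g \big\|\big(( B^{\bell}_{[f(\bu)=g(\bu)]} - B^{\bell}_{g|_{\bell}})\otimes (G_g)^{1/2}\big)\ket{\psi}\big\|^2
= \E_{\bell} \sum_g \sum_{f\neq g|_{\bell}} \Pr_{\bu\in\bell}[f(\bu)=g(\bu)]\cdot \bra{\psi} B^{\bell}_f\otimes G_g\ket{\psi}.
\]

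The one real ingredient is now Schwartz--Zippel: for fixed $\bell$, $g$ and $f\neq g|_{\bell}$, both $f$ and $g|_{\bell}$ are distinct univariate polynomials of degree at most $d$ on the $q$-point line $\bell$, so they agree on at most $d$ points, i.e.\ $\Pr_{\bu\in\bell}[f(\bu)=g(\bu)]\le d/q$. Substituting this bound and dropping the restriction $f\neq g|_{\bell}$ (which only enlarges the sum) gives
\[
\le \frac{d}{q}\cdot \E_{\bell} \sum_g \sum_f \bra{\psi} B^{\bell}_f\otimes G_g\ket{\psi}
= \frac{d}{q}\cdot \bra{\psi} I\otimes G\ket{\psi}
\le \frac{d}{q}
\le \frac{md}{q},
\]
using $\sum_f B^{\bell}_f = I$, that $G=\sum_g G_g$ is a sub-measurement so $G\le I$, and that $\ket{\psi}$ is a unit vector. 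Since this in fact establishes the stronger bound $d/q$, there is no serious obstacle; the only point needing a bit of care is bookkeeping the test's question distribution so that the conditioning on $\bell$ is clean and Schwartz--Zippel can be applied one line at a time.
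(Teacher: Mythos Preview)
Your proof is correct and follows essentially the same route as the paper: write the difference as $\sum_{f\neq g|_\ell,\,f(u)=g(u)} B^\ell_f$, use that this operator squares to at most itself, pull out the point average to get $\Pr_{\bu\in\bell}[f(\bu)=g(\bu)]$, apply Schwartz--Zippel, and finish with $\sum_f B^\ell_f=I$ and $G\le I$. Your observation that on an axis-parallel line both $f$ and $g|_\ell$ are univariate of degree at most $d$, so the agreement probability is at most $d/q$ rather than $md/q$, is correct and yields a slightly sharper constant than the paper records; the paper simply quotes the individual-degree Schwartz--Zippel bound $md/q$ without sharpening to the line.
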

\begin{proof}
We want to bound the quantity
\begin{align*}
&~\E_{\bu, \bell} \sum_{g \in \polyfunc{m}{d}{q}} \Vert (B^{\bell}_{[f(\bu) = g(\bu)]}  - B^{\bell}_{g|_{\bell}}) \otimes (G_g)^{1/2} \ket{\psi}\Vert^2\\
=&~\E_{\bu, \bell} \sum_{g \in \polyfunc{m}{d}{q}} \bra{\psi} \bigg(\sum_{f : f \neq g|_{\bell}} \bone[f(\bu) = g(\bu)] \cdot B^{\bell}_f\bigg)^2 \otimes G_g \ket{\psi}\\
\leq&~\E_{\bu, \bell} \sum_{g \in \polyfunc{m}{d}{q}} \bra{\psi} \bigg(\sum_{f : f \neq g|_{\bell}} \bone[f(\bu) = g(\bu)] \cdot B^{\bell}_f\bigg) \otimes G_g \ket{\psi} \\
= &~\E_{\bell} \sum_{g \in \polyfunc{m}{d}{q}} \sum_{f : f \neq g|_{\bell}} \bra{\psi}  B^{\bell}_f \otimes G_g \ket{\psi} \cdot \left( \E_{\bu} \bone[f(\bu) = g(\bu)]\right)\\
\leq &~\E_{\bell} \sum_{g \in \polyfunc{m}{d}{q}} \sum_{f : f \neq g|_{\bell}} \bra{\psi}  B^{\bell}_f \otimes G_g \ket{\psi} \cdot \frac{md}{q}\tag{by Schwartz-Zippel}\\
\leq &~\frac{md}{q}.\qedhere
\end{align*}
\end{proof}

\begin{lemma}\label{lem:local-variance-of-points}
Let $G \in \polysub{m}{q}{d}$.  Then
\begin{equation}\label{eq:local-variance-of-points-equation}
A^u_{g(u)} \otimes (G_g)^{1/2} \approx_{24\cdot(\eps + \delta + \frac{md}{q})} A^v_{g(v)} \otimes (G_g)^{1/2}
\end{equation}
on the distribution $(\bu, \bv) \sim C$.
\end{lemma}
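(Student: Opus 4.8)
The plan is to go from a random edge $(\bu,\bv)\sim C$ (which differs in at most one coordinate) to a comparison of $A^u_{g(u)}$ and $A^v_{g(v)}$ by routing both through the axis-parallel lines measurement $B$. The key observation is that a random edge of $C$ is, up to the loop/non-loop distinction, exactly a random point together with a random axis-parallel line through it: draw $\bu\sim\F_q^m$, $\bi\sim[m]$, $\bt\sim\F_q$ and set $\bv=\bu+\bt e_{\bi}$; the axis-parallel lines test is the same thing with an extra point $\ba$ or $\bb$ queried at the other endpoint. So the strategy is to insert $B^{\bell}$ in the middle and use that $\bu$ and $\bv$ both lie on $\bell$.

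Concretely, I would first use $(\eps,\delta,\gamma)$-goodness (via \Cref{rem:good-strat-characterization}) to get $A^u_a\ot I\simeq_{\eps} I\ot B^{\ell}_{[f(u)=a]}$ and $A^u_a\ot I\simeq_{\delta} I\ot A^u_a$ on the axis-parallel lines distribution. Combining with \Cref{lem:generalize-b}, which lets me replace $B^{\ell}_{[f(u)=g(u)]}$ by $B^{\ell}_{g|_{\ell}}$ at cost $md/q$ after inserting the $(G_g)^{1/2}$ factor, I get that on the lines distribution $A^u_{g(u)}\ot (G_g)^{1/2}$ is $\approx$-close (for an appropriate constant multiple of $\eps+\delta+md/q$) to $I\ot B^{\ell}_{g|_{\ell}}(G_g)^{1/2}$ — here using \Cref{prop:simeq-to-approx} to pass from $\simeq$ to $\approx$, the self-consistency $\simeq_{\delta}$ relation to move the $A$ from one side of the tensor to the other, and the triangle inequality \Cref{prop:triangle-inequality-for-approx_delta}. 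The crucial point is that $B^{\ell}_{g|_{\ell}}$ does not depend on which point on $\ell$ we queried: so the same relation with $\bv$ in place of $\bu$ gives $A^v_{g(v)}\ot(G_g)^{1/2}\approx I\ot B^{\ell}_{g|_{\ell}}(G_g)^{1/2}$ on the distribution where $\bv$ is a random point of a random axis-parallel line. Then a final application of \Cref{prop:triangle-inequality-for-approx_delta} to these two relations, with the common middle term $I\ot B^{\ell}_{g|_{\ell}}(G_g)^{1/2}$, yields \Cref{eq:local-variance-of-points-equation}.

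The main bookkeeping obstacle is matching distributions: the statement asks for the bound on $(\bu,\bv)\sim C$, but the lines relations above are naturally stated for "a random point of a random axis-parallel line." These differ in two ways — the edge distribution of $C$ includes loops ($\bt=0$, i.e.\ $\bu=\bv$), on which \Cref{eq:local-variance-of-points-equation} is trivially an equality, and more importantly the marginal of $\bv$ given the line $\bell$ is uniform on $\bell$ in both cases, so conditioned on $\bell$ being the axis-parallel line through $\bu$ in direction $\bi$, the pair $(\bu,\bell)$ has exactly the axis-parallel lines test marginal. Thus the expectations literally coincide (the loop case only helps), and the constant $24$ should be comfortably achievable: each of the two $\approx$ chains costs something like $O(\eps+\delta+md/q)$, and the outer triangle inequality for two terms multiplies by $2$. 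I would lay out the chain of $\approx_{\delta_i}$ steps explicitly for $\bu$ (there are a handful: $\simeq\to\approx$ costs a factor $2$; swapping sides via self-consistency; \Cref{lem:generalize-b}), track the constants, double for the $\bv$ copy, and apply \Cref{prop:triangle-inequality-for-approx_delta} once more at the end — being careful that the intermediate matrices ($I\ot B^{\bell}_{g|_{\bell}}(G_g)^{1/2}$ and various products) need not be sub-measurements, which is fine since \Cref{def:approx_delta} and \Cref{prop:triangle-inequality-for-approx_delta} are stated for arbitrary matrices.

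One subtlety worth flagging in the write-up: when moving $A^{\bu}_{g(\bu)}$ across the tensor factor using $A^u_a\ot I\simeq_{\delta} I\ot A^u_a$, I should apply \Cref{prop:simeq-data-processing} first to get $A^u_{g(u)}\ot I\simeq_{\delta} I\ot A^u_{g(u)}$ (post-processing the outcome $a$ to $g(u)$ — but note $g$ is not the measured outcome here, so more precisely I keep the full outcome and only evaluate at the end), then \Cref{prop:simeq-to-approx}; alternatively, and more cleanly, I can use \Cref{prop:cab-approx-delta} or a direct Cauchy–Schwarz to attach the $(G_g)^{1/2}$ tail. Either route works; I would pick whichever keeps the constant-tracking cleanest, and I expect that the dominant effort is simply verifying no step secretly needs projectivity of $G$ (it does not — only $\sum_g G_g\le I$ is used, through $\sum_g (G_g) \le I$ bounds inside the Cauchy–Schwarz steps).
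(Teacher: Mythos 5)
Your proposal is essentially the same argument as the paper's: route both $A^u_{g(u)}$ and $A^v_{g(v)}$ through the line-restriction $B^{\ell}_{g|_{\ell}}$, using the $\delta$-self-consistency of $A$ to move it across the tensor, the $\eps$-goodness of the axis-parallel lines test to swap $A$ for $B$, \Cref{lem:generalize-b} (Schwartz--Zippel) to pass from $B^{\ell}_{[f(\cdot)=g(\cdot)]}$ to $B^{\ell}_{g|_{\ell}}$, and \Cref{prop:triangle-inequality-for-approx_delta} at the end; your distributional observation that $(\bu,\bell)$, $(\bv,\bell)$, and $(\bu,\bv)\sim C$ all carry the same marginals is exactly what the paper relies on, and your constant-tracking estimate (six $\approx$-steps with errors $2\delta,2\eps,md/q,md/q,2\eps,2\delta$, multiplied by $6$ by the triangle inequality) reproduces the factor $24$.

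One small slip in your sketch: the common middle term should be $B^{\ell}_{g|_{\ell}}\otimes (G_g)^{1/2}$, not $I\otimes B^{\ell}_{g|_{\ell}}(G_g)^{1/2}$. After you move $A^{u}_{g(u)}$ to the second factor via self-consistency, the consistency relation $A^u_a\otimes I\simeq_{\eps} I\otimes B^{\ell}_{[f(u)=a]}$ places $B$ on the \emph{opposite} factor from where $A$ landed, i.e.\ back on the first factor. With that correction, your chain is exactly the paper's chain of six $\approx$-steps, and the rest of your plan (in particular the appeal to \Cref{prop:cab-approx-delta} to carry the $(G_g)^{1/2}$ tail, and the observation that no step needs projectivity of $G$) matches the paper's proof.
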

\begin{proof}
Let $\bu$ and $\bell$ be distributed as in the axis-parallel lines test,
and sample $\bv \sim \bell$.
Then $\bv$ and $\bell$ are also distributed as in the axis-parallel lines test.
As a result,
\begin{align*}
A^u_{g(u)} \otimes (G_g)^{1/2}
&\approx_{2\delta} I \otimes (G_g)^{1/2} A^u_{g(u)}\tag{by \Cref{prop:simeq-to-approx}}\\
&\approx_{2\eps} B^\ell_{[f(u) = g(u)]} \otimes (G_g)^{1/2}\tag{by \Cref{prop:simeq-to-approx}}\\
&\approx_{\frac{md}{q}} B^\ell_{g|_{\ell}} \otimes (G_g)^{1/2}\tag{by \Cref{lem:generalize-b}}\\
&\approx_{\frac{md}{q}} B^\ell_{[f(v) = g(v)]} \otimes (G_g)^{1/2}\tag{by \Cref{lem:generalize-b}}\\
&\approx_{2\eps} I \otimes (G_g)^{1/2} A^v_{g(v)} \tag{by \Cref{prop:simeq-to-approx}}\\
&\approx_{2\delta} A^v_{g(v)} \otimes (G_g)^{1/2}.\tag{by \Cref{prop:simeq-to-approx}}
\end{align*}
Steps 1, 2, 5, and 6 are also using \Cref{rem:good-strat-characterization,prop:cab-approx-delta}.
The lemma now follows from \Cref{prop:triangle-inequality-for-approx_delta}.
\end{proof}

We note that \Cref{eq:local-variance-of-points-equation} is equivalent to the statement that
\begin{equation}\label{eq:equivalent-local-variance}
\sum_{g \in \polyfunc{m}{q}{d}}\E_{(\bu, \bv) \sim C} \bra{\psi} (A^{\bu}_{g(\bu)}  - A^{\bv}_{g(\bv)})^2 \otimes G_g \ket{\psi}
\leq 24\left(\eps + \delta + \frac{md}{q}\right).
\end{equation}
This can be viewed as a form of local variance for the points measurements.
We now derive the corresponding expression for the global variance of the points measurements.

\begin{lemma}\label{lem:global-variance-of-points}
Let $G \in \polysub{m}{q}{d}$.  Then
\begin{equation}\label{eq:global-variance-of-points-equation}
A^u_{g(u)} \otimes (G_g)^{1/2} \approx_{24m\cdot(\eps + \delta + \frac{md}{q})} A^v_{g(v)} \otimes (G_g)^{1/2}
\end{equation}
on the distribution $\bu, \bv \sim \F_q^m$.
\end{lemma}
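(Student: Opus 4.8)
The plan is to deduce the global statement from the local one, \Cref{lem:local-variance-of-points}, by applying the expansion estimate \Cref{lem:local-to-global} once for each outcome $g$ and then summing. Recall (as noted immediately after \Cref{lem:local-variance-of-points}) that by \Cref{def:approx_delta} the claimed relation \eqref{eq:global-variance-of-points-equation} is equivalent to the bound
\[
\sum_{g \in \polyfunc{m}{q}{d}} \E_{\bu, \bv \sim \F_q^m} \bra{\psi} (A^{\bu}_{g(\bu)} - A^{\bv}_{g(\bv)})^2 \otimes G_g \ket{\psi} \leq 24m\Bigl(\eps + \delta + \tfrac{md}{q}\Bigr),
\]
where $\bu,\bv$ are independent uniform points, and likewise \eqref{eq:equivalent-local-variance} is exactly the same quantity with $\E_{\bu,\bv}$ replaced by $\E_{(\bu,\bv)\sim C}$, bounded by $24(\eps+\delta+md/q)$. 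So it suffices to prove the displayed ``global'' inequality.

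Fix $g \in \polyfunc{m}{q}{d}$ and apply \Cref{lem:local-to-global} with the family of matrices $\{A^u := A^u_{g(u)}\}_{u \in \F_q^m}$ acting on $\calH_A$ — each $A^u_{g(u)}$ is a projector, since $A^u$ is a projective measurement, so $0 \le A^u_{g(u)} \le I$ as required — and with the (possibly unnormalized) vector $\ket{\psi_g} := (I \otimes (G_g)^{1/2})\ket{\psi} \in \calH_A \otimes \calH_B$. Since $I \otimes (G_g)^{1/2}$ commutes with $(A^u_{g(u)} - A^v_{g(v)})^2 \otimes I$ for every $u,v$, we have
\[
\bra{\psi_g} (A^u_{g(u)} - A^v_{g(v)})^2 \otimes I \ket{\psi_g} = \bra{\psi} (A^u_{g(u)} - A^v_{g(v)})^2 \otimes G_g \ket{\psi},
\]
so $\mathbf{Var}_{\mathrm{local}}(\{A^u_{g(u)}\}, \psi_g)$ and $\mathbf{Var}_{\mathrm{global}}(\{A^u_{g(u)}\}, \psi_g)$ are precisely $\tfrac12\E_{(\bu,\bv)\sim C}$ and $\tfrac12\E_{\bu,\bv\sim\F_q^m}$, respectively, of $\bra{\psi}(A^{\bu}_{g(\bu)} - A^{\bv}_{g(\bv)})^2 \otimes G_g\ket{\psi}$. \Cref{lem:local-to-global} then gives
\[
\E_{\bu,\bv\sim\F_q^m}\bra{\psi}(A^{\bu}_{g(\bu)} - A^{\bv}_{g(\bv)})^2 \otimes G_g\ket{\psi} \leq m \cdot \E_{(\bu,\bv)\sim C}\bra{\psi}(A^{\bu}_{g(\bu)} - A^{\bv}_{g(\bv)})^2 \otimes G_g\ket{\psi}.
\]
Summing over all $g \in \polyfunc{m}{q}{d}$ and bounding the right-hand side by \eqref{eq:equivalent-local-variance} yields $\sum_g \E_{\bu,\bv}\bra{\psi}(A^{\bu}_{g(\bu)} - A^{\bv}_{g(\bv)})^2\otimes G_g\ket{\psi} \le 24m(\eps+\delta+md/q)$, which is the global inequality and hence the lemma.

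This is essentially a routine step: the only things to verify carefully are the hypotheses of \Cref{lem:local-to-global} (positivity and the $\le I$ bound, both immediate from projectivity of the $A^u$), the fact that pushing $(G_g)^{1/2}$ into the ``state'' makes the variance definitions apply verbatim, and the bookkeeping of the factor $\tfrac12$ built into $\mathbf{Var}_{\mathrm{local}}$ and $\mathbf{Var}_{\mathrm{global}}$. I do not anticipate a genuine obstacle; the content is entirely in the spectral-gap bound \Cref{lem:local-to-global}, which has already been established, applied one outcome $g$ at a time and then summed.
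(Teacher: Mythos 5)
Your proof is correct and matches the paper's argument essentially line for line: both fix an outcome $g$, apply \Cref{lem:local-to-global} with the matrix family $\{A^u_{g(u)}\}_u$ and the unnormalized state $\ket{\psi_g} = (I \otimes (G_g)^{1/2})\ket{\psi}$, and then sum over $g$ and invoke \eqref{eq:equivalent-local-variance}. The only cosmetic difference is that you state the equivalence to the summed variance bound and the reduction of the $\ket{\psi_g}$-sandwich more explicitly, which is harmless.
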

\begin{proof}
We want to bound 
\begin{equation}\label{eq:TODO:bound-this!}
\E_{\bu, \bv \sim \F_q^m} \sum_{g \in \polyfunc{m}{q}{d}} \Vert (A^{\bu}_{g(\bu)}  - A^{\bv}_{g(\bv)} ) \otimes (G_g)^{1/2} \ket{\psi}\Vert^2
= \E_{\bu, \bv \sim \F_q^m} \sum_{g \in \polyfunc{m}{q}{d}} \bra{\psi} (A^{\bu}_{g(\bu)}  - A^{\bv}_{g(\bv)} )^2 \otimes G_g \ket{\psi}.
\end{equation}
For each $g \in \polyfunc{m}{q}{d}$, define
\begin{equation*}
\forall u \in \F_q^m, ~ A(g)^u := A^u_{g(u)}, \qquad \text{and} \qquad \ket{\psi_g} := I \otimes (G_g)^{1/2} \ket{\psi}.
\end{equation*}
Then
\begin{align*}
\eqref{eq:TODO:bound-this!}
& = \sum_{g \in \polyfunc{m}{q}{d}}\E_{\bu, \bv \sim \F_q^n} \bra{\psi_g} (A(g)^{\bu}  - A(g)^{\bv})^2 \otimes I \ket{\psi_g}\\
& = \sum_{g \in \polyfunc{m}{q}{d}}2\cdot  \mathbf{Var}_{\mathrm{global}}(A(g), \psi_g)\\
& \leq \sum_{g \in \polyfunc{m}{q}{d}}2m\cdot \mathbf{Var}_{\mathrm{local}}(A(g), \psi_g)\tag{by \Cref{lem:local-to-global}}\\
& = m \cdot \sum_{g \in \polyfunc{m}{q}{d}}\E_{(\bu, \bv) \sim C} \bra{\psi_g} (A(g)^{\bu}  - A(g)^{\bv})^2 \otimes I \ket{\psi_g}\\
& = m \cdot \sum_{g \in \polyfunc{m}{q}{d}}\E_{(\bu, \bv) \sim C} \bra{\psi} (A^{\bu}_{g(\bu)}  - A^{\bv}_{g(\bv)})^2 \otimes G_g \ket{\psi}\\
& \leq m \cdot 24\left(\eps + \delta + \frac{md}{q}\right).\tag{by \Cref{eq:equivalent-local-variance}}
\end{align*}
This concludes the proof.
\end{proof}

\section{Self-improvement}
\label{sec:self-improvement}

Throughout this section, $(\psi, A, B, L)$ will denote a fixed $(\eps, \delta, \gamma)$-good symmetric strategy for the $(m,q,d)$ low individual degree test.
The majority of this section will be devoted to proving \Cref{lem:self-improvement-helper} below,
which is a slightly weaker form of \Cref{thm:self-improvement-in-induction-section}.
The key difference is that the measurement~$H$ it outputs is allowed to be non-projective,
rather than the projective measurement given by \Cref{thm:self-improvement-in-induction-section}.
Having proven this, we can apply \Cref{thm:orthonormalization}
to produce a projective measurement;
this is done in \Cref{sec:self-improvement-projective} below,
completing the proof of \Cref{thm:self-improvement-in-induction-section}.

We now highlight other differences between \Cref{lem:self-improvement-helper}
and \Cref{thm:self-improvement-in-induction-section}.
Since~$H$ is non-projective, 
we have stated its strong self-consistency in \Cref{item:self-improvement-self}
in terms of \Cref{def:strong-self-consistency}; see \Cref{prop:two-notions-of-self-consistency}
for a proof that these conditions are equivalent for projective sub-measurements.
The other key difference is that the boundedness condition is modified slightly in \Cref{item:self-improvement-boundedness}.
Finally, the error $\zeta$ is substantially smaller.

\begin{lemma}[Self-improvement with non-projective output]\label{lem:self-improvement-helper}
Let $G \in \polymeas{m}{q}{d}$ be a measurement with the following property:
\begin{enumerate}
\ignore{
\item (Completeness): \label{item:self-improvement-G-completeness}  If $G = \sum_g G_g$, then
  	\begin{equation*}
	\bra{\psi} G \otimes I \ket{\psi} \geq 1 - \kappa.
	\end{equation*}
}
  \item (Consistency with~$A$): \label{item:self-improvement-G-consistency} On average over $\bu \sim \F_q^{m}$,
	  \begin{equation*}
	  A^{u}_a \otimes I \simeq_{\nu} I \otimes G_{[g(u)=a]}.
	  \end{equation*}
	  \end{enumerate}
Let
\begin{equation*}
\zeta = 100m\cdot \Big(\eps^{1/2} + \delta^{1/2} + (d/q)^{1/2}\Big).
\end{equation*}
Then there exists $H \in \polysub{m}{q}{d}$ with the following properties:
\begin{enumerate}
    \item(Completeness):  \label{item:self-improvement-completeness} If $H =  \sum_h H_h$, then
    \begin{equation*}
    \bra{\psi} H \otimes I \ket{\psi} \geq (1-\nu)-\zeta.
    \end{equation*}
    \item(Consistency with~$A$):\label{item:self-improvement-A-consistency} On average over $\bu \sim \F_q^m$,
        \begin{equation*}
            A^u_a \otimes I \simeq_{\zeta} I \otimes H_{[h(u) = a]}.
        \end{equation*}
    \item(Strong self-consistency): \label{item:self-improvement-self}
    	\begin{equation*}
		\sum_{h} \bra{\psi} H_h \ot H_h \ket{\psi} \geq \bra{\psi} H \ot I \ket{\psi} - \zeta.
		\ignore{H_h \otimes I \simeq_{\zeta} I \otimes H_h.}
	\end{equation*}
    \item(Boundedness): \label{item:self-improvement-boundedness} There exists a positive-semidefinite matrix~$Z$ such that
    \begin{equation*}
        \bra{\psi} Z \otimes I \ket{\psi} -\E_{\bu} \sum_a \bra{\psi}  A^{\bu}_{a} \otimes H_{[h(\bu)=a]} \ket{\psi} \leq \zeta
    \end{equation*}
    and for each $h \in \polyfunc{m}{q}{d}$,
	\begin{equation*}
	Z \geq \left(\E_{\bu} A^{\bu}_{h(\bu)}\right).
	\end{equation*}
\end{enumerate}
\end{lemma}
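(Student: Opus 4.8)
The plan is to implement the "self-improvement via SDP" technique of \cite{IV12}, refined through \cite{Vid16,NV18b}, adapted to work with a sub-measurement output and to produce the explicit boundedness witness $Z$. The starting point is the measurement $G$ with consistency error $\nu$ with respect to $A$. I would first use the Schwartz--Zippel-style machinery of \Cref{sec:variance}: applying \Cref{lem:global-variance-of-points} to $G$ shows that the points measurements $A^u_{g(u)}$, measured against the side holding $(G_g)^{1/2}$, have global variance $O(m(\eps+\delta+md/q))$. In other words, for $\bu,\bv$ chosen independently, $A^{\bu}_{g(\bu)}\otimes (G_g)^{1/2}\approx A^{\bv}_{g(\bv)}\otimes (G_g)^{1/2}$; this is the crucial ingredient making the following construction work. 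Concretely, I would define the candidate improved sub-measurement by a ``majority-vote'' or ``smoothing'' operation: roughly, $H_h$ should be built from $\E_{\bu} A^{\bu}_{h(\bu)}$ sandwiched appropriately, or defined as the solution of an SDP that maximizes $\E_{\bu}\sum_a \bra\psi A^{\bu}_a\otimes H_{[h(\bu)=a]}\ket\psi$ subject to $\sum_h H_h\le I$ and a constraint tying $H$ to the points measurements. The matrix $Z$ in the boundedness condition is then naturally the Lagrange dual variable / the operator $\E_{\bu} A^{\bu}_{h(\bu)}$ aggregated, which by construction dominates each $\E_{\bu} A^{\bu}_{h(\bu)}$.

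The key steps, in order, would be: (1) record the global-variance bound from \Cref{lem:global-variance-of-points}, and also the self-consistency of $A$ coming from $\delta$-goodness via \Cref{rem:good-strat-characterization}; (2) define the operator $S_h := \E_{\bu} A^{\bu}_{h(\bu)}$ for each degree-$d$ polynomial $h$, and set $Z := $ (an appropriate normalization ensuring $\sum_h$ of the eventual $H_h$ is $\le I$ while $Z\ge S_h$ for all $h$) — the natural guess is to pass through the SVD/operator-monotone rescaling so that $H_h$ is roughly $Z^{-1/2} S_h Z^{-1/2}$-like, or to use the ``orthogonalization'' of the $S_h$'s; (3) verify completeness: $\bra\psi H\otimes I\ket\psi\ge \E_{\bu}\sum_a\bra\psi A^{\bu}_a\otimes G_{[g(\bu)=a]}\ket\psi - (\text{error})\ge (1-\nu)-\zeta$, using consistency of $G$ with $A$ and the variance bound to replace $G$-measured quantities by $H$-measured ones; (4) verify consistency of $H$ with $A$ to error $\zeta$: this is where the global variance bound does the real work, since it lets one argue that $A^{\bu}_a$ measured against $H_{[h(\bu)=a]}$ agrees with high probability — the point is that $H$ is ``built out of averaged $A$'s'', so an outcome $h$ evaluated at $\bu$ agrees with $A^{\bu}$ precisely because of the self-consistency and variance of the $A$ family; (5) verify strong self-consistency of $H$ in the sense of \Cref{def:strong-self-consistency}, again reducing it to the variance/self-consistency bounds on $A$; (6) verify the boundedness condition directly from the definition of $Z$. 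Several of these steps will chain ``$\approx$'' relations via \Cref{prop:triangle-inequality-for-approx_delta}, \Cref{prop:closeness-of-ip}, \Cref{prop:easy-approx-from-approx-delta}, and Cauchy--Schwarz, in the style outlined in \Cref{sec:consistency-from-state-dependent-distance}.

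The main obstacle I expect is step (4) combined with the correct definition of $H$ in step (2): one must define $H_h$ so that it is simultaneously (a) a genuine sub-measurement ($\sum_h H_h\le I$), (b) complete enough (completeness $\ge (1-\nu)-\zeta$, i.e. the incompleteness is essentially only the consistency error $\nu$ of $G$ and nothing more), and (c) has \emph{negligible} consistency error $\zeta$ with $A$ — and $\zeta$ here depends only on $\eps,\delta,d/q$, crucially \emph{not} on $\nu$. Achieving (c) without damaging (b) is the whole content of ``self-improvement''. The SDP/duality viewpoint is what makes this possible: the optimal $H$ for the natural SDP automatically has consistency error governed only by how far the $A$-family is from a consistent projective family, which is exactly $O(m(\eps+\delta+md/q))$ by \Cref{lem:global-variance-of-points} and the $\delta$-self-consistency, independent of $\nu$. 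I would also need to be careful in step (2) that the rescaling producing $H_h$ from $S_h=\E_\bu A^\bu_{h(\bu)}$ is operator-monotone-friendly, so that $Z\ge S_h$ holds on the nose; matrix monotonicity of $x\mapsto\sqrt{x}$ and the arguments already used in \Cref{lem:sqrt-Q-completeness} and \Cref{sec:orthogonalization} should suffice, but this is where hidden constant-factor losses accumulate, and keeping $\zeta$ at the stated $100m(\eps^{1/2}+\delta^{1/2}+(d/q)^{1/2})$ will require the calculation to ``go in our favor'' as the authors note in the introduction.
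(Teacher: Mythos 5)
You have the right high-level ideas — the SDP/duality viewpoint, the role of the global variance lemma (\Cref{lem:global-variance-of-points}), and the dual variable $Z$ as the boundedness witness — but the specific construction is not pinned down, and the alternatives you float would not go through cleanly. The SDP the paper actually solves is \emph{not} over $H$ and does not involve the state $\ket{\psi}$; it is the purely matrix-level program
\[
\sup_T \ \sum_g \Tr(T_g \cdot A_g) \quad\text{s.t. } T_g\ge 0,\ \sum_g T_g\le I, \qquad\text{where } A_g := \E_{\bu}A^{\bu}_{g(\bu)},
\]
whose dual is $\inf_Z\ \Tr(Z)$ s.t.\ $Z\ge A_g$ for all $g$. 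One takes an optimal pair $(T,Z)$ — which by complementary slackness satisfies $\sum_g T_g = I$ and $T_g Z = T_g A_g$ — and then \emph{constructs} the improved sub-measurement by sandwiching:
\[
H^u_h := A^u_{h(u)}\cdot T_h \cdot A^u_{h(u)},\qquad H_h := \E_{\bu} H^{\bu}_h.
\]
The sub-measurement property is automatic from this specific shape, because $\sum_h H^u_h = \sum_a A^u_a\,T_{[h(u)=a]}\,A^u_a \le \sum_a (A^u_a)^2 = I$ using that $T$ is a measurement and $A$ is projective. Neither of your two suggestions captures this: a state-dependent SDP over $H$ has no mechanism forcing $H$ to be made out of the points measurements (and hence no reason for the $\zeta$-consistency to be independent of $\nu$), and the $Z^{-1/2}S_h Z^{-1/2}$ rescaling only gives $\sum_h H_h\le I$ if $\sum_h S_h\le Z$, which the dual constraint ($Z\ge S_h$ termwise, not summed) does not provide.

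The other place the proposal is thin is the workhorse technical step. All four conclusions — completeness, consistency with $A$, strong self-consistency, and boundedness — go through a single lemma (\Cref{lem:add-in-u} in the paper) that shows, for any sub-measurement $\{M^u_o\}$ and any choice of index sets $S_u$,
\[
\E_{\bu}\sum_{(o,h)\in S_{\bu}}\bra{\psi}M^{\bu}_o\otimes H_h\ket{\psi}
\ \approx\
\E_{\bu}\sum_{(o,h)\in S_{\bu}}\bra{\psi}\bigl(A^{\bu}_{h(\bu)}M^{\bu}_o A^{\bu}_{h(\bu)}\bigr)\otimes T_h\ket{\psi},
\]
with error $O(\sqrt{\zeta_{\mathrm{variance}}})$ where $\zeta_{\mathrm{variance}}$ is the global-variance bound. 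Proving this requires a careful four-step Cauchy--Schwarz chain that (i) moves two $A^{\bv}_{h(\bv)}$ factors from the $H$-side to the $M$-side using $\delta$-self-consistency, and (ii) replaces $A^{\bv}_{h(\bv)}$ by $A^{\bu}_{h(\bu)}$ using the global variance lemma with $T$ in the role of the ``$G$'' sub-measurement. You identify the variance lemma as important but not this precise interface, and without it step~(4) of your plan — showing the consistency error of $H$ with $A$ is a universal $\zeta$ independent of $\nu$ — is asserted rather than derived. Finally, one small note: the orthogonalization machinery from \Cref{sec:orthogonalization} is not used in this lemma at all; it is applied only in the subsequent \Cref{thm:self-improvement-in-induction-section} to upgrade the non-projective $H$ produced here to a projective one.
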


\subsection{A semidefinite program}

A key element in the proof of \Cref{lem:self-improvement-helper} will be a pair of primal and dual semidefinite programs. 
To define them, it will be convenient to introduce the notational shorthand
\begin{equation*}
A_g = \E_{\bu \sim \F_q^m} A^{\bu}_{g(\bu)}.
\end{equation*}
Then the primal is
\begin{align}
 \sup &\quad \sum_g \,\Tr(T_g \cdot A_g) \label{eq:primal-objective}\\
  \text{s.t.} &\quad   T_g \geq 0\qquad\forall g\in\polyfunc{m}{q}{d}\;,\notag\\
	&\quad \sum_g T_g \leq I,\notag
\end{align}
and the dual is
\begin{align}
  \inf &\quad \Tr(Z) \label{eq:dual-objective}\\
 \text{ s.t.} &\quad Z \geq A_g. \label{eq:dual-constraint}
\end{align}
We will prove that these two program are indeed dual to each other in \Cref{lem:sdp} below.

\begin{lemma}\label{lem:sdp}
The semidefinite programs~\eqref{eq:primal-objective} and~\eqref{eq:dual-objective} are dual to each other. 
Moreover there is an optimal pair of solutions $\{T_g\}$ to~\eqref{eq:primal-objective} and $Z$ to~\eqref{eq:dual-objective} such that $\sum_g T_g = I$ and  
\begin{equation}\label{eq:slater}
T_g Z  \,=\, T_g {A_g},\qquad\forall g\in \polyfunc{m}{q}{d}.
\end{equation}
\end{lemma}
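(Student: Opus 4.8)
\textbf{Proof proposal for \Cref{lem:sdp}.}

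The plan is to recognize the two programs as a standard primal–dual SDP pair and then invoke strong duality together with complementary slackness. First I would write the primal \eqref{eq:primal-objective} in the canonical conic form: the decision variable is the block-diagonal operator $T = \bigoplus_g T_g$, the constraint $T_g \geq 0$ for all $g$ says $T$ lies in the cone of block-diagonal PSD operators, and the single linear constraint $\sum_g T_g \leq I$ is a slack-variable reformulation of an equality constraint on the ``marginal'' $\sum_g T_g$. Taking the Lagrangian with a PSD multiplier $Z$ dual to the constraint $\sum_g T_g \preceq I$ gives, for each $g$, the requirement $Z - A_g \succeq 0$ (otherwise the inner supremum over $T_g \geq 0$ is $+\infty$), and the dual objective collapses to $\Tr(Z)$. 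This is exactly \eqref{eq:dual-objective}–\eqref{eq:dual-constraint}, so the two programs are formally dual. I would present this computation cleanly rather than citing a black box, since the reader should be able to verify the pairing directly.

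Next I would establish strong duality and attainment. The primal is strictly feasible in the relevant sense: $T_g = 0$ for all $g$ is feasible (giving objective $0$), and more importantly the feasible region $\{T_g \geq 0, \sum_g T_g \leq I\}$ is compact, so the supremum in \eqref{eq:primal-objective} is attained. For the dual, note $Z = \E_{\bu} \sum_g A^{\bu}_{g(\bu)}$ — or even just $Z = (\#\text{outcomes}) \cdot I$, using $A_g \preceq I$ — is strictly feasible for \eqref{eq:dual-constraint}, i.e. $Z \succ A_g$ can be arranged, so Slater's condition holds on the dual side. Hence strong duality holds, the optimal values coincide, and both optima are attained; let $\{T_g\}$ and $Z$ be an optimal primal–dual pair. (Everything here is finite-dimensional — the Hilbert space has dimension $d$ and there are finitely many polynomials in $\polyfunc{m}{q}{d}$ — so there are no subtleties about attainment or duality gaps.)

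Then I would extract the two asserted structural properties from complementary slackness. Strong duality forces $\sum_g \Tr(T_g A_g) = \Tr(Z)$, and rewriting $\Tr(Z) = \Tr\big(Z \sum_g T_g\big) + \Tr\big(Z(I - \sum_g T_g)\big)$ and $\sum_g \Tr(T_g A_g)$, the equality becomes
\begin{equation*}
\sum_g \Tr\big(T_g (Z - A_g)\big) + \Tr\Big(Z\big(I - \textstyle\sum_g T_g\big)\Big) = 0.
\end{equation*}
Both summands are nonnegative: $T_g \succeq 0$ and $Z - A_g \succeq 0$ give $\Tr(T_g(Z-A_g)) \geq 0$, while $Z \succeq 0$ and $I - \sum_g T_g \succeq 0$ give the second term $\geq 0$. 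Hence each vanishes. From $\Tr(T_g(Z-A_g)) = 0$ with both factors PSD one concludes $T_g(Z - A_g) = 0$, i.e. $T_g Z = T_g A_g$, which is \eqref{eq:slater}. For the condition $\sum_g T_g = I$: if the optimal $Z$ has full support this follows from $\Tr(Z(I - \sum_g T_g)) = 0$ directly; if not, I would instead argue that one may modify the optimal primal solution without changing feasibility or objective so that its marginal equals $I$ — concretely, replacing $T_{g_0}$ by $T_{g_0} + (I - \sum_g T_g)$ for any fixed outcome $g_0$ keeps the marginal at $\leq I$ (in fact $=I$), preserves $T_{g_0} \geq 0$, and changes the objective by $\Tr\big((I-\sum_g T_g) A_{g_0}\big) \geq 0$; by optimality this change is $0$, so the new solution is still optimal and now has marginal exactly $I$. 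One then has to recheck that \eqref{eq:slater} still holds for the modified solution, which it does for $g \neq g_0$ trivially, and for $g_0$ because $(I - \sum_g T_g)(Z - A_{g_0})$: here the cleanest route is to pick $g_0$ arbitrarily and re-run the complementary slackness argument on the modified pair, which is legitimate since it is again an optimal primal–dual pair.

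The main obstacle I anticipate is the bookkeeping around whether $\sum_g T_g = I$ can be arranged simultaneously with \eqref{eq:slater}, rather than the duality itself, which is routine finite-dimensional SDP theory. The subtlety is that an arbitrary optimal $T$ might have marginal strictly below $I$ on the kernel of $Z$, and the ``push the slack into one outcome'' fix must be done carefully so that complementary slackness is re-derived for the final pair rather than merely inherited. I would handle this by doing the modification first (producing an optimal pair with marginal $I$) and only then invoking complementary slackness to get \eqref{eq:slater}, so the two conclusions come out together from a single application of the argument.
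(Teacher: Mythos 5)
Your proposal takes a genuinely different route from the paper and, on the subtlest point, is more careful than the paper's own argument. The paper rewrites~\eqref{eq:primal-objective} in a block-diagonal canonical form with an explicit slack block $X_{M+1,M+1}$, then derives $\sum_g T_g = I$ from the complementary slackness relation $X_{M+1,M+1}\cdot Z = 0$. But for PSD matrices $A,B$, the condition $AB=0$ forces $A=0$ only when $B\succ 0$, and the optimal $Z$ is not in general positive definite (its kernel can include a common kernel of the $A_g$'s), so that step is not fully justified as written. You spotted exactly this issue and closed it correctly: fix any $g_0$, replace $T_{g_0}$ by $T_{g_0}+(I-\sum_g T_g)$, observe that the new solution is feasible and that the objective changes by $\Tr\bigl((I-\sum_g T_g)A_{g_0}\bigr)$, which is $\geq 0$ by positivity and $\leq 0$ by optimality, hence $0$ --- so the repaired solution is still optimal --- and only then apply complementary slackness to get \eqref{eq:slater}. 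Your Lagrangian derivation of the dual is also shorter than the paper's canonical-form bookkeeping with the $C$, $D_{ij}$, $b_{ij}$ data and is correct.

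One small gap of your own needs to be closed in the strong-duality step. You establish primal attainment (via compactness of the primal feasible set, or redundantly via Slater's condition on the dual side) and strong duality, but you never establish dual attainment, which the lemma requires since it asserts the existence of an \emph{optimal} $Z$. The observation that ``$T_g = 0$ for all $g$ is feasible'' gives feasibility but not strict feasibility --- the zero matrix sits on the boundary of the PSD cone --- and finite-dimensionality alone does not rule out nonattainment in SDP. To get dual attainment, exhibit a strictly feasible primal point and invoke Slater's condition on the primal side; for instance $T_g = \tfrac{1}{2M}\cdot I$ for every $g$, with $M = |\polyfunc{m}{q}{d}|$, has $T_g \succ 0$ and $\sum_g T_g = \tfrac12 I \prec I$. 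This is exactly the strictly feasible primal point the paper uses, and once you add it the rest of your argument goes through.
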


\begin{proof}
To show that~\eqref{eq:primal-objective} and~\eqref{eq:dual-objective} are dual to each other,
we begin by rewriting the primal~\eqref{eq:primal-objective} in canonical form.
Let $r$ be the dimension of the space on which~$A$ acts.
Let $M = |\polyfunc{m}{q}{d}|$ be the number of polynomials with individual degree $d$. 
We will assume some ordering of these polynomials $g_1, \ldots, g_M \in \polyfunc{m}{q}{d}$ which is allowed to be arbitrary.
Consider the following semidefinite program:
\begin{align}
\sup &\quad \Tr(C^\dagger X) \label{eq:primal-canonical}\\
  \text{s.t.} &\quad   \Tr(D_{ij}^\dagger X) = b_{ij} \qquad\forall i,j\in\{1,\ldots,r\},\nonumber\\
	&\quad X \geq 0,\nonumber
\end{align}
where the variables~$C$, $D_{ij}$, and $b_{ij}$ are defined as follows:
\begin{equation*}
C = \sum_{i = 1}^M \ket{i}\bra{i} \otimes A_{g_i}
\end{equation*}
\begin{equation*}
\forall i,j \in [r], \quad
D_{ij} = \sum_{k=1}^{M+1} \ket{k}\bra{k} \otimes \ket{i}\bra{j},
\qquad
b_{ij}  = \left\{\begin{array}{rl}
		1 & \text{if } i = j,\\
		0 & \text{otherwise}.
		\end{array}\right.
\end{equation*}
We claim that \eqref{eq:primal-canonical} is equivalent to \eqref{eq:primal-objective}.
To see this, let
\begin{equation*}
X = \sum_{i, j =1}^{M+1} \ket{i}\bra{j} \otimes X_{ij}
\end{equation*}
be a feasible solution to \eqref{eq:primal-canonical}.
Then because $X \geq 0$, $X_{i i} \geq 0$ for each $i \in [M+1]$. In addition
\begin{align*}
\Tr(D_{i_1 i_2}^\dagger X)
&= \sum_k \sum_{j_1 j_2} \Tr((\ket{k}\bra{k} \otimes \ket{i_1}\bra{i_2})\cdot(\ket{j_1} \bra{j_2} \otimes X_{j_1 j_2}))\\
&= \sum_k  \bra{i_2} X_{k k} \ket{i_1}\\
&= \left\{\begin{array}{rl}
		1 & \text{if } i_1 = i_2,\\
		0 & \text{otherwise}.
		\end{array}\right.
\end{align*}
This is equivalent to the statement $\sum_{i=1}^{M+1} X_{i i} = I$, which implies that $\sum_{i=1}^M X_{i i} \leq I$.
Finally, the objective value \eqref{eq:primal-canonical} is
\begin{align*}
\Tr(C^\dagger X)
& = \sum_{i = 1}^M  \sum_{j,k = 1}^{M+1} \Tr((\ket{i}\bra{i} \otimes A_{g_i}^\dagger) \cdot (\ket{j}\bra{k} \otimes X_{j, k}))\\
&=  \sum_{i = 1}^M \Tr(A_{g_i}^\dagger \cdot  X_{i, i})\\
&=  \sum_{i = 1}^M \Tr(A_{g_i} \cdot  X_{i, i}). \tag{because~$A$ is Hermitian}
\end{align*}
As a result, setting $T_{g_i} = X_{i i}$ for each $i \in [M]$ gives a feasible solution to the original semidefinite program~\eqref{eq:primal-objective}
with a matching objective value. A similar transformation allows us to convert solutions of~\eqref{eq:primal-objective} to~\eqref{eq:primal-canonical}.

The dual of \eqref{eq:primal-canonical} is 
\begin{align}
\inf &\quad \sum_i\, z_{ij} b_{ij} \label{eq:dual-canonical}\\
  \text{s.t.} &\quad  \sum_{i,j} \,z_{ij} D_{ij} \geq C.\label{eq:dual-canonical-constraint}
\end{align}
We claim that \eqref{eq:dual-canonical} is equivalent to \eqref{eq:dual-objective}.
To see this, we first calculate
\begin{align*}
\sum_{i,j=1}^r \,z_{ij} D_{ij}
= \sum_{i,j=1}^r z_{ij} \cdot \sum_{k=1}^{M+1} \ket{k}\bra{k} \otimes \ket{i}\bra{j}
&= \sum_{k=1}^{M+1} \ket{k}\bra{k} \otimes \left(\sum_{i, j=1}^r z_{ij}\ket{i}\bra{j}\right)\\
&=: \sum_{k=1}^{M+1}\ket{k}\bra{k} \otimes Z.
\end{align*}
Then the constraint \eqref{eq:dual-canonical-constraint} states that
\begin{equation*}
\sum_{i=1}^{M+1}\ket{i}\bra{i} \otimes Z \geq C = \sum_{i = 1}^M \ket{i}\bra{i} \otimes A_{g_i},
\end{equation*}
which is equivalent to the statement that $Z \geq A_{g_i}$ for all $i \in [M]$ and $Z \geq 0$.
In other words, $Z$ is a feasible solution to the original dual SDP \eqref{eq:dual-objective}, with value
\begin{equation*}
\Tr(Z) = \sum_{i=1}^r Z_{i i} = \sum_{i,j=1}^r b_{i j} Z_{i j} =  \sum_{i,j=1}^r b_{i j} z_{i j},
\end{equation*}
the same value as in \eqref{eq:dual-canonical}.
Hence, the two dual programs are the same as well,
which implies that \eqref{eq:primal-objective} and \eqref{eq:dual-objective} form a primal/dual pair.

To show that a primal/dual pair satisfies \emph{strong duality}, i.e.\ that their optimum values are the same,
we use Slater's condition~\cite[Section~5.2.3]{cvxbook} and show that they satisfy \emph{strict feasibility}, which means both have a feasible solution which is positive \emph{definite}
that satisfies all constraints with a strict inequality.
It can be checked that the following two solutions to \eqref{eq:primal-objective} and \eqref{eq:dual-objective} satisfy this property:
\begin{equation*}
\forall g \in \polyfunc{m}{q}{d}, \quad T_g = \frac{1}{2M} \cdot I,
\qquad\qquad\qquad
Z = 2I.
\end{equation*}
Because they satisfy strong duality, their optimal solutions satisfy the \emph{complementary slackness} condition (see~\cite{AHO97}).
If $X$ and $(z_{ij})$ are an optimal pair of solutions to~\eqref{eq:primal-canonical} and~\eqref{eq:dual-canonical} respectively,  this implies that
\begin{equation}\label{eq:complementary-slackness}
X\Big( \sum_{i,j} \,z_{ij} D_{ij} - C \Big) \,=\, 0 \;.
\end{equation}
Clearly for any optimal pair $(X,z_{ij})$ we can assume without loss of generality that $X$ is block-diagonal.
Then if we translate \eqref{eq:complementary-slackness} back to the variables $\{T_g\}$ and $Z$, we get
\begin{align*}
0 = X\Big( \sum_{i,j} \,z_{ij} D_{ij} - C \Big)
&= \sum_{i=1}^{M+1} \ket{i}\bra{i} \otimes X_{ii} \cdot \bigg(\sum_{i=1}^{M+1} \ket{i}\bra{i} \otimes Z - \sum_{i=1}^M \ket{i}\bra{i}\otimes A_{g_i}\bigg)\\
&= \sum_{i=1}^{M} \ket{i}\bra{i} \otimes (X_{ii} \cdot (Z - A_{g_i})) + \ket{M+1}\bra{M+1} \otimes (X_{M+1,M+1} \cdot Z).
\end{align*}
This implies that $X_{ii} \cdot (Z - A_{g_i}) = 0$ for $i \in [M]$ and $X_{M+1, M+1} = 0$.
Translating back to the variables $\{T_g\}$ and $Z$, this gives $T_g(Z-A_g)=0$ for all $g\in\polyfunc{m}{q}{d}$ and $\sum_g T_g = I$. 
\end{proof}

\subsection{Proof of \Cref{lem:self-improvement-helper}}
We let $T = \{T_g\}$ and $Z$ be the optimal solutions to the SDPs~\eqref{eq:primal-objective} and \eqref{eq:dual-objective} respectively
given by \Cref{lem:sdp}.
Then~$T$ is a measurement, and
\begin{align}
 \forall g \in \polyfunc{m}{q}{d}: \qquad
 Z &\geq (\E_{\bu} A^{\bu}_{g(\bu)}),  \label{eq:Z-greater-than-A}\\
T_g \cdot Z &= T_g \cdot (\E_{\bu} A^{\bu}_{g(\bu)}).\label{eq:swap-Z-for-A}
\end{align}
For each $u \in \F_q^m$, define $H^u = \{H^u_h\}_{h \in \polyfunc{m}{q}{d}}$ as
\begin{equation*}
    H^u_h := A^u_{h(u)}\cdot T_h \cdot A^u_{h(u)}.
\end{equation*}
Let $u \in \F_q^m$. Then
\begin{align*}
    \sum_{h \in \polyfunc{m}{q}{d}} H^u_h
    & = \sum_{h \in \polyfunc{m}{q}{d}} A^u_{h(u)} \cdot T_h \cdot A^u_{h(u)}\\
    & = \sum_{a \in \F_q}A^u_{a} \cdot \bigg(\sum_{h:h(u) = a}  T_h\bigg) \cdot A^u_{a}\\
    & \leq \sum_{a \in \F_q} (A^u_{a})^2\tag{because~$T$ is a measurement}\\
    & = \sum_{a \in \F_q} A^u_{a}\tag{$A$ is projective}\\
    & = I.
\end{align*}
Hence, $H^u$ is a sub-measurement,
and therefore $H^u \in \polysub{m}{q}{d}$.
Next, define $H = \{H_h\}_{h \in \polyfunc{m}{q}{d}}$ as
\begin{equation*}
H_h := \E_{\bu \sim \F_q^m}H^{\bu}_h.
\end{equation*}
Then
\begin{equation*}
\sum_{h \in \polyfunc{m}{q}{d}} H_h
= \E_{\bu \sim \F_q^m} \sum_{h \in \polyfunc{m}{q}{d}} H^{\bu}_h
\leq I. \tag{$H^u$ is a sub-measurement}
\end{equation*}
Hence, $H$ is a sub-measurement,
and therefore $H \in \polysub{m}{q}{d}$.

Set
\begin{equation*}
\zeta_{\mathrm{variance}} = 24m\cdot\Big(\eps + \delta + \frac{md}{q}\Big)
\end{equation*}
to be the error in \Cref{eq:local-variance-of-points-equation}.
Prior to showing that~$H$ satisfies \Cref{item:self-improvement-A-consistency,item:self-improvement-completeness,item:self-improvement-self,item:self-improvement-boundedness}, we will prove the following technical lemma.

\begin{lemma}\label{lem:add-in-u}
Suppose $M = \{M^u_o\}$ is a sub-measurement with outcomes in some set $\calO$.
For each $u \in \F_q^m$, let $S_u$ be a subset of $\calO \otimes \polyfunc{m}{q}{d}$.
Then
\begin{equation*}
\E_{\bu \sim \F_q^m} \sum_{(o, h) \in S_{\bu}} \bra{\psi} M^{\bu}_o \otimes H_h \ket{\psi}
\approx_{4\sqrt{\zeta_{\mathrm{variance}}}} \E_{\bu \sim \F_q^m} \sum_{(o, h) \in S_{\bu}} \bra{\psi} (A^{\bu}_{h(\bu)} \cdot M^{\bu}_o \cdot A^{\bu}_{h(\bu)}) \otimes T_h \ket{\psi}.
\end{equation*}
\end{lemma}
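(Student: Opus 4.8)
The plan is to prove the claim through a chain of four approximation steps, each pulling one operator out of an expression of the form $\bra{\psi}\,\cdot\,\ket{\psi}$. First I would expand $H_h = \E_{\bv\sim\F_q^m} A^{\bv}_{h(\bv)} T_h A^{\bv}_{h(\bv)}$, so that the left-hand side becomes $\E_{\bu,\bv}\sum_{(o,h)\in S_{\bu}} \bra{\psi} M^{\bu}_o \otimes A^{\bv}_{h(\bv)} T_h A^{\bv}_{h(\bv)} \ket{\psi}$ with $\bu,\bv$ independent and uniform. The next two steps move the two copies of $A^{\bv}_{h(\bv)}$ off the second tensor factor and onto the first, one at a time: step one rewrites $\bra{\psi}(M^{\bu}_o\otimes A^{\bv}_{h(\bv)}T_h)(I\otimes A^{\bv}_{h(\bv)})\ket{\psi}$ by swapping $I\otimes A^{\bv}_{h(\bv)}$ for $A^{\bv}_{h(\bv)}\otimes I$, reaching $\E_{\bu,\bv}\sum_{(o,h)\in S_{\bu}}\bra{\psi}(M^{\bu}_o A^{\bv}_{h(\bv)})\otimes(A^{\bv}_{h(\bv)}T_h)\ket{\psi}$, and step two does the same with the remaining copy, reaching $\E_{\bu,\bv}\sum_{(o,h)\in S_{\bu}}\bra{\psi}(A^{\bv}_{h(\bv)}M^{\bu}_o A^{\bv}_{h(\bv)})\otimes T_h\ket{\psi}$; both use the self-consistency relation $A^v_a\otimes I\approx_{2\delta}I\otimes A^v_a$, valid on average over $\bv\sim\F_q^m$ since $(\psi,A,B,L)$ is $(\eps,\delta,\gamma)$-good (\Cref{rem:good-strat-characterization} together with \Cref{prop:simeq-to-approx}). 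The final two steps replace each $A^{\bv}_{h(\bv)}$ by $A^{\bu}_{h(\bu)}$, again one copy at a time, using the global-variance estimate $A^{\bv}_{h(\bv)}\otimes(T_h)^{1/2}\approx_{\zeta_{\mathrm{variance}}}A^{\bu}_{h(\bu)}\otimes(T_h)^{1/2}$ of \Cref{lem:global-variance-of-points}, applied with the measurement $T$ in the role of $G$. After the fourth step $\bv$ has disappeared, $\E_{\bv}$ is trivial, and what remains is exactly the right-hand side $\E_{\bu}\sum_{(o,h)\in S_{\bu}}\bra{\psi}(A^{\bu}_{h(\bu)}M^{\bu}_o A^{\bu}_{h(\bu)})\otimes T_h\ket{\psi}$.

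Each of the four steps is an instance of \Cref{prop:closeness-of-ip} (using \Cref{eq:closeness3} when the swapped operator sits adjacent to $\ket{\psi}$ and \Cref{eq:closeness4} when it sits on the other side), and the one delicate point is choosing the auxiliary matrices $C^{x}_{a,b}$ so that the ``difference'' factor converges. For the two self-consistency steps I would let the summation index $a$ be the \emph{field value} $h(v)\in\F_q$ --- a sum of only $q$ terms --- and push the residual dependence on $o$ and on the particular $h$ with $h(v)=a$ into the index $b$, so that the difference factor is $\E_{\bv}\sum_{a\in\F_q}\|(A^{\bv}_a\otimes I - I\otimes A^{\bv}_a)\ket{\psi}\|^2\le 2\delta$. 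For the two variance steps I would instead let $a$ range over the polynomial $h$, but keep each $A^{\bv}_{h(\bv)}$ glued to the weight $(T_h)^{1/2}$, so the difference factor is $\E_{\bu,\bv}\sum_{h}\|((A^{\bv}_{h(\bv)}-A^{\bu}_{h(\bu)})\otimes(T_h)^{1/2})\ket{\psi}\|^2\le\zeta_{\mathrm{variance}}$ by \Cref{lem:global-variance-of-points}. In all four cases the ``bounded'' hypothesis $\sum_a(\sum_b C^{x}_{a,b})(\sum_b C^{x}_{a,b})^\dagger\le I$ (or its adjoint-transposed version) follows from $A$ being a projective measurement, from $\{M^u_o\}_o$ and $\{T_h\}_h$ being (sub-)measurements (so that $\sum_{o:(o,h)\in S_u}M^u_o\le I$, $(M^u_o)^2\le M^u_o$, $T_h^2\le T_h$, and $\sum_{h:h(v)=a}T_h\le I$), and the elementary bound $\sum_a X_a\otimes Y_a\le(\sum_a X_a)\otimes I$ valid for positive $X_a$ and $0\le Y_a\le I$; concretely one factors each $\sum_b C^x_{a,b}$ as a positive block $P^{x}_a$ (with $\sum_a P^x_a\le I$) times a projector $A^v_a$ or $A^u_{h(u)}$, and uses $(P^x_a)^2\le I$.

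Adding the four error terms gives a total of $2\sqrt{2\delta}+2\sqrt{\zeta_{\mathrm{variance}}}$; since $\zeta_{\mathrm{variance}}=24m\cdot(\eps+\delta+\tfrac{md}{q})\ge 2\delta$ (as $m\ge 1$), this is at most $4\sqrt{\zeta_{\mathrm{variance}}}$, which is the bound claimed. The step I expect to require the most care is precisely the bookkeeping just described: one must resist summing the self-consistency error $\|(A^v_a\otimes I - I\otimes A^v_a)\ket{\psi}\|^2$ over all polynomials $h$ with $h(v)=a$ --- that sum has no a priori bound, since neither $|S_u|$ nor the number of such $h$ is controlled --- and instead organize the swap so that this error is summed only over the $q$ field values $a$, with all $o$- and $h$-multiplicities routed through the factor that is bounded by $I$. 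Everything else is a direct application of Cauchy--Schwarz via \Cref{prop:closeness-of-ip}.
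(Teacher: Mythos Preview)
Your proposal is correct and follows essentially the same four-step chain as the paper: expand $H_h=\E_{\bv}A^{\bv}_{h(\bv)}T_hA^{\bv}_{h(\bv)}$, move the two copies of $A^{\bv}_{h(\bv)}$ across the tensor one at a time via self-consistency of~$A$ (cost $\sqrt{2\delta}$ each), then replace each $A^{\bv}_{h(\bv)}$ by $A^{\bu}_{h(\bu)}$ via \Cref{lem:global-variance-of-points} applied with~$T$ (cost $\sqrt{\zeta_{\mathrm{variance}}}$ each), and conclude by $2\sqrt{2\delta}+2\sqrt{\zeta_{\mathrm{variance}}}\le 4\sqrt{\zeta_{\mathrm{variance}}}$. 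The only cosmetic difference is that the paper performs each Cauchy--Schwarz step by hand rather than routing it through \Cref{prop:closeness-of-ip}, and it moves the two $A^{\bv}$'s in the opposite order; your identification of the key bookkeeping point---summing the self-consistency error over the $q$ field values $a=h(v)$ rather than over all polynomials~$h$---is exactly what the paper does as well.
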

\begin{proof}
We begin by expanding
\begin{align}
\E_{\bu \sim \F_q^m} \sum_{(o, h) \in S_{\bu}} \bra{\psi} M^{\bu}_o \otimes H_h \ket{\psi}
&= \E_{\bu, \bv} \sum_{(o, h) \in S_{\bu}} \bra{\psi} M^{\bu}_o \otimes H^{\bv}_h \ket{\psi}\nonumber\\
&= \E_{\bu, \bv} \sum_{(o, h) \in S_{\bu}} \bra{\psi} M^{\bu}_o \otimes (A^{\bv}_{h(\bv)} \cdot T_h \cdot A^{\bv}_{h(\bv)}) \ket{\psi}.\label{eq:expand-that-H}
\end{align}
We claim that
\begin{equation}\label{eq:move-one}
\eqref{eq:expand-that-H}
\approx_{\sqrt{2\delta}}
\E_{\bu, \bv} \sum_{(o, h) \in S_{\bu}} \bra{\psi} (A^{\bv}_{h(\bv)} \cdot M^{\bu}_o) \otimes (T_h \cdot A^{\bv}_{h(\bv)}) \ket{\psi}.
\end{equation}
To show this, we bound the magnitude of the difference.
\begin{multline}
\Big|\E_{\bu, \bv} \sum_{(o, h) \in S_{\bu}} \bra{\psi} (A^{\bv}_{h(\bv)} \otimes I - I \otimes A^{\bv}_{h(\bv)}) \cdot (M^{\bu}_o \otimes (T_h \cdot A^{\bv}_{h(\bv)}))\ket{\psi}\Big|\\
\leq
\sqrt{\E_{\bu, \bv} \sum_{(o, h) \in S_{\bu}} \bra{\psi} (A^{\bv}_{h(\bv)} \otimes I - I \otimes A^{\bv}_{h(\bv)}) \cdot (M^{\bu}_o \otimes T_h) \cdot (A^{\bv}_{h(\bv)} \otimes I - I \otimes A^{\bv}_{h(\bv)}) \ket{\psi}}\\
\cdot \sqrt{\E_{\bu, \bv} \sum_{(o, h) \in S_{\bu}} \bra{\psi}  M^{\bu}_o \otimes (A^{\bv}_{h(\bv)} \cdot T_h\cdot  A^{\bv}_{h(\bv)}) \ket{\psi}}.\label{eq:move-one-cauchy-schwarz}
\end{multline}
The term inside the first square root is
\begin{align*}
&\E_{\bu, \bv} \sum_{a \in \F_q} \bra{\psi} (A^{\bv}_a \otimes I - I \otimes A^{\bv}_a) \cdot \bigg( \sum_{(o, h) \in S_{\bu},h(\bv) = a}M^{\bu}_o \otimes T_h\bigg) \cdot (A^{\bv}_a \otimes I - I \otimes A^{\bv}_a) \ket{\psi}\\
\leq~&\E_{\bu, \bv} \sum_{a \in \F_q} \bra{\psi} (A^{\bv}_a \otimes I - I \otimes A^{\bv}_a)^2 \ket{\psi}, \tag{because $M^{\bu}$ and $T_h$ are sub-measurements}
\end{align*}
which is at most~$2\delta$ by \Cref{prop:simeq-to-approx} and because~$A$ is $\delta$-self-consistent.
The term inside the second square root is
\begin{equation*}
\E_{\bu, \bv} \sum_{(o, h) \in S_{\bu}} \bra{\psi}  M^{\bu}_o \otimes H^{\bv}_h \ket{\psi},
\end{equation*}
which is at most~$1$ because $M^{\bu}$ and $H^{\bv}$ are sub-measurements.
Next, we claim that
\begin{equation}\label{eq:move-another}
\eqref{eq:move-one}
\approx_{\sqrt{2\delta}}
\E_{\bu, \bv} \sum_{(o, h) \in S_{\bu}} \bra{\psi} (A^{\bv}_{h(\bv)} \cdot M^{\bu}_o \cdot A^{\bv}_{h(\bv)}) \otimes T_h  \ket{\psi}.
\end{equation}
To show this, we bound the magnitude of the difference.
\begin{align}\label{eq:move-another-cauchy-schwarz}
&\Big|\E_{\bu, \bv} \sum_{(o, h) \in S_{\bu}} \bra{\psi} ((A^{\bv}_{h(\bv)} \cdot M^{\bu}_o) \otimes T_h) \cdot (A^{\bv}_{h(\bv)} \otimes I - I \otimes A^{\bv}_{h(\bv)})  \ket{\psi}\Big|\nonumber\\
&\leq\sqrt{\E_{\bu, \bv} \sum_{(o, h) \in S_{\bu}} \bra{\psi} (A^{\bv}_{h(\bv)} \cdot M^{\bu}_o \cdot A^{\bv}_{h(\bv)}) \otimes T_h  \ket{\psi}}\nonumber\\
&\quad\cdot \sqrt{\E_{\bu, \bv} \sum_{(o, h) \in S_{\bu}} \bra{\psi} (A^{\bv}_{h(\bv)} \otimes I - I \otimes A^{\bv}_{h(\bv)}) \cdot (M^{\bu}_o \otimes T_h) \cdot (A^{\bv}_{h(\bv)} \otimes I - I \otimes A^{\bv}_{h(\bv)}) \ket{\psi}}.
\end{align}
The term inside the first square root is
\begin{align*}
&\E_{\bu, \bv} \sum_h \bra{\psi} (A^{\bv}_{h(\bv)} \cdot \bigg(\sum_{o:(o, h) \in S_{\bu}} M^{\bu}_o \bigg)\cdot A^{\bv}_{h(\bv)}) \otimes T_h  \ket{\psi}\\
\leq~&\E_{\bu, \bv} \sum_h \bra{\psi} (A^{\bv}_{h(\bv)})^2 \otimes T_h  \ket{\psi}\tag{because~$M$ is a sub-measurement}\\
\leq~&\E_{\bu, \bv} \sum_h \bra{\psi} I \otimes T_h  \ket{\psi}\tag{because~$A^{\bv}_{h(\bv)} \leq I$}\\
=~&1.\tag{because~$T$ is a measurement}
\end{align*}
As for the term inside the second square root, it is equal to the term inside the first square root in \Cref{eq:move-one-cauchy-schwarz}, which we showed was at most $2\delta$.

Having moved both $A$'s to the left-hand side, we want to show that
\begin{equation}\label{eq:change-one}
\eqref{eq:move-another}
\approx_{\sqrt{\zeta_{\mathrm{variance}}}}
\E_{\bu, \bv} \sum_{(o, h) \in S_{\bu}} \bra{\psi} (A^{\bu}_{h(\bu)} \cdot M^{\bu}_o \cdot A^{\bv}_{h(\bv)}) \otimes T_h  \ket{\psi}.
\end{equation} 
To show this, we bound the magnitude of the difference.
\begin{multline}\label{eq:change-one-cauchy-schwarz}
\Big|\E_{\bu, \bv} \sum_{(o, h) \in S_{\bu}} \bra{\psi} ((A^{\bv}_{h(\bv)} - A^{\bu}_{h(\bu)}) \cdot M^{\bu}_o \cdot A^{\bv}_{h(\bv)}) \otimes T_h  \ket{\psi}\Big|\\
\leq
\sqrt{\E_{\bu, \bv} \sum_{(o, h) \in S_{\bu}} \bra{\psi} ((A^{\bv}_{h(\bv)} - A^{\bu}_{h(\bu)}) \cdot M^{\bu}_o \cdot (A^{\bv}_{h(\bv)} - A^{\bu}_{h(\bu)})) \otimes T_h  \ket{\psi}}\\
\cdot \sqrt{\E_{\bu, \bv} \sum_{(o, h) \in S_{\bu}} \bra{\psi} (A^{\bv}_{h(\bv)} \cdot M^{\bu}_o \cdot A^{\bv}_{h(\bv)}) \otimes T_h  \ket{\psi}}.
\end{multline}
The term inside the first square root is
\begin{align*}
&\E_{\bu, \bv} \sum_h \bra{\psi} ((A^{\bv}_{h(\bv)} - A^{\bu}_{h(\bu)}) \cdot \bigg( \sum_{o:(o, h) \in S_{\bu}}M^{\bu}_o\bigg) \cdot (A^{\bv}_{h(\bv)} - A^{\bu}_{h(\bu)})) \otimes T_h  \ket{\psi}\\
\leq~&\E_{\bu, \bv} \sum_h \bra{\psi} (A^{\bv}_{h(\bv)} - A^{\bu}_{h(\bu)})^2 \otimes T_h  \ket{\psi}. \tag{because~$M^{\bu}$ is a sub-measurement}
\end{align*}
But $T \in \polysub{m}{q}{d}$, and so by \Cref{lem:global-variance-of-points} this expression is at most $\zeta_{\mathrm{variance}}$.
As for the term inside the second square root, it is equal to the term inside the first square root in \Cref{eq:move-another-cauchy-schwarz}, which we showed was at most~$1$.
Finally, we want to show that
\begin{equation}\label{eq:change-another}
\eqref{eq:change-one}
\approx_{\sqrt{\zeta_{\mathrm{variance}}}}
\E_{\bu, \bv} \sum_{(o, h) \in S_{\bu}} \bra{\psi} (A^{\bu}_{h(\bu)} \cdot M^{\bu}_o \cdot A^{\bu}_{h(\bu)}) \otimes T_h  \ket{\psi}.
\end{equation} 
To show this, we bound the magnitude of the difference.
\begin{multline*}
\Big|
\E_{\bu, \bv} \sum_{(o, h) \in S_{\bu}} \bra{\psi} (A^{\bu}_{h(\bu)} \cdot M^{\bu}_o \cdot (A^{\bv}_{h(\bv)} - A^{\bu}_{h(\bu)})) \otimes T_h  \ket{\psi}\Big|\\
\leq
\sqrt{\E_{\bu, \bv} \sum_{(o, h) \in S_{\bu}} \bra{\psi} (A^{\bu}_{h(\bu)} \cdot M^{\bu}_o \cdot A^{\bu}_{h(\bu)}) \otimes T_h  \ket{\psi}}\\
\cdot \sqrt{\E_{\bu, \bv} \sum_{(o, h) \in S_{\bu}} \bra{\psi} ((A^{\bv}_{h(\bv)} - A^{\bu}_{h(\bu)}) \cdot M^{\bu}_o \cdot (A^{\bv}_{h(\bv)} - A^{\bu}_{h(\bu)})) \otimes T_h  \ket{\psi}}.
\end{multline*}
The term inside the first square root is
\begin{align*}
&\E_{\bu, \bv} \sum_h \bra{\psi} (A^{\bu}_{h(\bu)} \cdot \bigg(\sum_{o:(o, h) \in S_{\bu}} M^{\bu}_o \bigg)\cdot A^{\bu}_{h(\bu)}) \otimes T_h  \ket{\psi}\\
\leq~&\E_{\bu, \bv} \sum_h \bra{\psi} (A^{\bu}_{h(\bu)})^2 \otimes T_h  \ket{\psi}\tag{because~$M$ is a sub-measurement}\\
\leq~&\E_{\bu, \bv} \sum_h \bra{\psi} I \otimes T_h  \ket{\psi}\tag{because~$A^{\bu}_{h(\bu)} \leq I$}\\
=~&1.\tag{because~$T$ is a measurement}
\end{align*}
As for the term inside the second square root, it is equal to the term inside the first square root in \Cref{eq:change-one-cauchy-schwarz}, which we showed was at most $\zeta_{\mathrm{variance}}$.
This concludes the proof with an error of $2\sqrt{2\delta} + 2\sqrt{\zeta_{\mathrm{variance}}}$.
The lemma now follows by observing that $2\delta \leq \zeta_{\mathrm{variance}}$.
\end{proof}

We now show that~$H$ satisfies \Cref{item:self-improvement-completeness,item:self-improvement-A-consistency,item:self-improvement-self,item:self-improvement-boundedness}.

\vspace{\baselineskip}
\noindent
\emph{Proof of \Cref{item:self-improvement-completeness} (Completeness).}
The completeness of~$H$ is
\begin{align}
\sum_h \bra{\psi} H_h \otimes I \ket{\psi}
& = \E_{\bu} \sum_h \bra{\psi} H^{\bu}_h \otimes I \ket{\psi}\nonumber\\
&= \E_{\bu} \sum_h \bra{\psi} (A^{\bu}_{h(\bu)} \cdot T_h \cdot A^{\bu}_{h(\bu)}) \otimes I \ket{\psi}\nonumber\\
&= \E_{\bu} \sum_a \bra{\psi} (A^{\bu}_{a} \cdot T_{[h(\bu) = a]} \cdot A^{\bu}_{a}) \otimes I \ket{\psi}.\label{eq:bracketize-the-expression}
\end{align}
We claim that
\begin{equation}\label{eq:yet-another-move-a}
\eqref{eq:bracketize-the-expression}
\approx_{2\sqrt{\delta}}
\E_{\bu} \sum_a \bra{\psi} ( T_{[h(\bu)=a]} \cdot A^{\bu}_{a}) \otimes A^{\bu}_{a} \ket{\psi}.
\end{equation}
To show this, we bound the magnitude of the difference
\begin{multline*}
\Big|\E_{\bu} \sum_a \bra{\psi} (A^{\bu}_{a} \otimes I - I \otimes A^{\bu}_{a}) \cdot (( T_{[h(\bu) = a]} \cdot A^{\bu}_{a}) \otimes I) \ket{\psi}\Big|\\
\leq
\sqrt{\E_{\bu} \sum_a \bra{\psi} (A^{\bu}_{a} \otimes I - I \otimes A^{\bu}_{a})^2 \ket{\psi}}
\cdot \sqrt{\E_{\bu} \sum_a \bra{\psi} (A^{\bu}_{a} \cdot  T^2_{[h(\bu) = a]} \cdot A^{\bu}_{a}) \otimes I \ket{\psi}}.
\end{multline*}
The expression inside the first square root is at most $2\delta \leq 4 \delta$
by \Cref{prop:simeq-to-approx} and the self-consistency of~$A$,
and the expression inside the second square root is at most~$1$ because $T_{[h(\bu) = a]} \leq I$.
Next, we claim that 
\begin{equation}\label{eq:mysterious-case-of-the-disappearing-a}
\eqref{eq:yet-another-move-a}
\approx_{\sqrt{\delta}}
\E_{\bu} \sum_a \bra{\psi} ( T_{[h(\bu)=a]} \cdot A^{\bu}_{a}) \otimes I \ket{\psi}.
\end{equation}
To show this, we bound the magnitude of the difference.
\begin{multline*}
\Big|\E_{\bu} \sum_a \bra{\psi} ( T_{[h(\bu)=a]} \cdot A^{\bu}_{a}) \otimes (I-A^{\bu}_{a}) \ket{\psi}\Big|\\
\leq
\sqrt{\E_{\bu} \sum_a \bra{\psi} ( T_{[h(\bu)=a]})^2 \otimes I \ket{\psi}}
\cdot \sqrt{\E_{\bu} \sum_a \bra{\psi} ( A^{\bu}_{a})^2 \otimes (I-A^{\bu}_{a})^2 \ket{\psi}}.
\end{multline*}
The expression inside the first square root is at most~$1$ because~$T$ is a sub-measurement.
By the projectivity of~$A$, the expression inside the second square root is equal to
\begin{equation*}
\E_{\bu} \sum_a \bra{\psi} ( A^{\bu}_{a}) \otimes (I-A^{\bu}_{a}) \ket{\psi}
= \E_{\bu} \sum_{a\neq b} \bra{\psi} A^{\bu}_{a} \otimes A^{\bu}_{b} \ket{\psi},
\end{equation*}
which is at most~$\delta$ by the self-consistency of~$A$.
We can rewrite \Cref{eq:mysterious-case-of-the-disappearing-a} as
\begin{align*}
&\E_{\bu} \sum_h \bra{\psi} ( T_{h} \cdot A^{\bu}_{h(\bu)}) \otimes I \ket{\psi}\\
 =~&  \sum_h \bra{\psi} ( T_{h} \cdot\E_{\bu} A^{\bu}_{h(\bu)}) \otimes I \ket{\psi}\\
 =~&  \sum_h \bra{\psi} ( T_{h} \cdot Z) \otimes I \ket{\psi}\tag{by \Cref{eq:swap-Z-for-A}}\\
=~&  \bra{\psi}  Z \otimes I \ket{\psi}.\tag{because~$T$ is a measurement}
\end{align*}
We pause and record what we have shown so far:
\begin{equation}\label{eq:gonna-use-this-later-H-versus-Z}
\sum_h \bra{\psi} H_h \otimes I \ket{\psi} \geq \bra{\psi}  Z \otimes I \ket{\psi} - 3\sqrt{\delta}.
\end{equation}
At this point, we can lower-bound 
\begin{align*}
\bra{\psi}  Z \otimes I \ket{\psi}
\geq~& \sum_g \bra{\psi} Z \otimes G_g \ket{\psi} \tag{because~$G$ is a sub-measurement}\\
\geq~& \sum_g \bra{\psi} (\E_{\bu} A^{\bu}_{g(\bu)}) \otimes G_g \ket{\psi} \tag{by \Cref{eq:Z-greater-than-A}}\\
=~& \E_{\bu} \sum_a \bra{\psi}  A^{\bu}_{a} \otimes G_{[g(\bu)=a]} \ket{\psi}\\
\geq~& 1-\nu.\tag{by \Cref{item:self-improvement-G-consistency} and \Cref{prop:simeq-for-measurements}}
\end{align*}
Thus, the completeness is at least $1  - \nu - 3\sqrt{\delta}$. 
The proof follows from noting that $3\sqrt{\delta} \leq \zeta$.

\vspace{\baselineskip}
\noindent
\emph{Proof of \Cref{item:self-improvement-A-consistency} (Consistency with~$A$).}
The inconsistency of~$H$ with~$A$ is
\begin{equation}\label{eq:consistency-with-A-baby-step}
\E_{\bu \sim \F_q^m} \sum_{a \neq b} \bra{\psi} A^{\bu}_a \otimes H_{[h(\bu) = b]} \ket{\psi}
= \E_{\bu \sim \F_q^m} \sum_{a, h: h(\bu) \neq a} \bra{\psi} A^{\bu}_a \otimes H_h \ket{\psi}.
\end{equation}
We now apply \Cref{lem:add-in-u} to the right-hand side of \Cref{eq:consistency-with-A-baby-step}.
To do so, we set $\calO = \F_q^m$, $M = A$, and $S_u = \{(a, h) : h(u) \neq a\}$.
Then \Cref{lem:add-in-u} implies that
\begin{equation*}
\eqref{eq:consistency-with-A-baby-step}
\approx_{4 \sqrt{\zeta_{\mathrm{variance}}}}
\E_{\bu \sim \F_q^m} \sum_{a, h: h(\bu) \neq a} \bra{\psi} (A^{\bu}_{h(\bu)} \cdot A^{\bu}_a  \cdot A^{\bu}_{h(\bu)})\otimes T_h \ket{\psi},
\end{equation*}
which is equal to~$0$ because~$A$ is projective.
This implies that
\begin{equation}\label{eq:explicit-bound-for-A-consistency}
\E_{\bu \sim \F_q^m} \sum_{a \neq b} \bra{\psi} A^{\bu}_a \otimes H_{[h(\bu) = b]} \ket{\psi} \leq 4 \sqrt{\zeta_{\mathrm{variance}}}.
\end{equation}
The proof follows from noting that
\begin{align*}
4\sqrt{\zeta_{\mathrm{variance}}}
&= 4 \cdot \sqrt{24 m \cdot\Big(\eps + \delta + \frac{md}{q}\Big)}\\
&\leq 20 m \cdot \Big(\eps^{1/2} + \delta^{1/2} + (d/q)^{1/2}\Big)\\
&\leq \zeta.
\end{align*}

\vspace{\baselineskip}
\noindent
\emph{Proof of \Cref{item:self-improvement-self} (Strong self-consistency).}
We begin by recording the following facts which follow from the definition of $H^u_h$ and the projectivity of~$A$:
\begin{align}
H^u_h &= A^u_{h(u)} \cdot T_h \cdot A^u_{h(u)} = A^u_{h(u)} \cdot H^u_h \cdot A^u_{h(u)},\label{eq:h-sandwich}\\
A^u_{h(u)} \cdot H^u_{h'} \cdot A^u_{h(u)} &=  H^u_{h'} \cdot A^u_{h(u)} = (A^u_{h(u)} \cdot T_{h'} \cdot A^u_{h(u)}) \cdot \bone[h(u) = h'(u)].\label{eq:h-blt}
\end{align}
The strong self-consistency of~$H$ is
\begin{equation}\label{eq:self-consistency-baby-step}
\sum_{h\in\polyfunc{m}{q}{d}} \bra{\psi} H_h \otimes H_h \ket{\psi}
= \E_{\bu \sim \F_q^m} \sum_{h \in \polyfunc{m}{q}{d}} \bra{\psi} H_h^{\bu} \otimes H_h \ket{\psi}.
\end{equation}
We now apply \Cref{lem:add-in-u} to the right-hand side of \Cref{eq:self-consistency-baby-step}.
To do so, we set
\begin{equation*}
\text{$\calO =\polyfunc{m}{q}{d}$, $M = H$, and $S_u = \{(h, h) : h \in \polyfunc{m}{q}{d}\}$.}
\end{equation*}
Then \Cref{lem:add-in-u} implies that
\begin{equation}\label{eq:release-the-kraken}
\eqref{eq:self-consistency-baby-step}
\approx_{4 \sqrt{\zeta_{\mathrm{variance}}}}
\E_{\bu \sim \F_q^m} \sum_{h \in \polyfunc{m}{q}{d}} \bra{\psi} (A^{\bu}_{h(\bu)} \cdot H_h^{\bu} \cdot A^{\bu}_{h(\bu)}) \otimes T_h \ket{\psi}.
\end{equation}
Having placed two $A$'s on the left-hand side, we want to show that
\begin{equation}\label{eq:threw-in-h-prime}
\eqref{eq:release-the-kraken}
\approx_{2\sqrt{\zeta_{\mathrm{variance}}} + \frac{md}{q}}
\E_{\bu \sim \F_q^m} \sum_{h, h' \in \polyfunc{m}{q}{d}} \bra{\psi} (A^{\bu}_{h(\bu)} \cdot H^{\bu}_{h'} \cdot A^{\bu}_{h(\bu)}) \otimes T_h \ket{\psi}.
\end{equation}
We note that \eqref{eq:threw-in-h-prime} is at least as big as \eqref{eq:release-the-kraken}. Thus, we want to upper-bound
\begin{align}
\eqref{eq:threw-in-h-prime}-\eqref{eq:release-the-kraken}
& = \E_{\bu \sim \F_q^m} \sum_{h \neq h'} \bra{\psi}  (A^{\bu}_{h(\bu)} \cdot H^{\bu}_{h'} \cdot A^{\bu}_{h(\bu)}) \otimes T_h \ket{\psi}\nonumber\\
& = \E_{\bu \sim \F_q^m} \sum_{h \neq h'} \bra{\psi}  (A^{\bu}_{h(\bu)} \cdot T_{h'} \cdot A^{\bu}_{h(\bu)}) \otimes T_h \ket{\psi} \cdot \bone[h(\bu) = h'(\bu)],\label{eq:added-indicator}
\end{align}
where the second equality is by~\Cref{eq:h-blt}.
To do this, we first show that
\begin{equation}\label{eq:swapped-u-for-v}
\eqref{eq:added-indicator}
\approx_{\sqrt{\zeta_{\mathrm{variance}}}}
\E_{\bu,\bv} \sum_{h \neq h'} \bra{\psi}  (A^{\bv}_{h(\bv)} \cdot T_{h'} \cdot A^{\bu}_{h(\bu)}) \otimes T_h \ket{\psi} \cdot \bone[h(\bu) = h'(\bu)].
\end{equation}
To show this, we bound the magnitude of the difference.
\begin{multline}
\Big|\E_{\bu,\bv} \sum_{h \neq h'} \bra{\psi} ( (A^{\bu}_{h(\bu)} - A^{\bv}_{h(\bv)}) \cdot T_{h'} \cdot A^{\bu}_{h(\bu)}) \otimes T_h \ket{\psi} \cdot \bone[h(\bu) = h'(\bu)]\Big|\\
\leq
\sqrt{\E_{\bu,\bv} \sum_{h \neq h'} \bra{\psi}  ( (A^{\bu}_{h(\bu)} - A^{\bv}_{h(\bv)}) \cdot T_{h'} \cdot (A^{\bu}_{h(\bu)} - A^{\bv}_{h(\bv)})) \otimes T_h\ket{\psi}}\\
\cdot \sqrt{\E_{\bu,\bv} \sum_{h \neq h'} \bra{\psi} (A^{\bu}_{h(\bu)} \cdot T_{h'} \cdot A^{\bu}_{h(\bu)}) \otimes T_h\ket{\psi} \cdot \bone[h(\bu) = h'(\bu)]}. \label{eq:swapped-u-for-cauchy-schwarz}
\end{multline}
The term inside the first square root is
\begin{align*}
&\E_{\bu,\bv} \sum_{h} \bra{\psi} ( (A^{\bu}_{h(\bu)} - A^{\bv}_{h(\bv)}) \cdot \bigg(\sum_{h' \neq h} T_{h'}\bigg) \cdot (A^{\bu}_{h(\bu)} - A^{\bv}_{h(\bv)})) \otimes T_h\ket{\psi}\\
\leq~& \E_{\bu,\bv} \sum_{h} \bra{\psi}  (A^{\bu}_{h(\bu)} - A^{\bv}_{h(\bv)})^2 \otimes T_h\ket{\psi},\tag{because $T$ is a sub-measurement}
\end{align*}
But $T \in \polyfunc{m}{q}{d}$, and so by \Cref{lem:global-variance-of-points} this expression is at most~$\zeta_{\mathrm{variance}}$.
The term inside the second square root is equal to
\begin{equation*}
\E_{\bu,\bv} \sum_{h \neq h'} \bra{\psi}  H^{\bu}_{h'} \otimes T_h \ket{\psi} \cdot \bone[h(\bu) = h'(\bu)],
\end{equation*}
which is at most~$1$ because $T$ and $H^{\bu}$ are sub-measurements.
Next, we show that
\begin{equation}\label{eq:swapped-u-for-v-this-time-it's-personal}
\eqref{eq:swapped-u-for-v}
\approx_{\sqrt{\zeta_{\mathrm{variance}}}}
\E_{\bu,\bv} \sum_{h \neq h'} \bra{\psi}  (A^{\bv}_{h(\bv)} \cdot T_{h'} \cdot A^{\bv}_{h(\bv)}) \otimes T_h \ket{\psi} \cdot \bone[h(\bu) = h'(\bu)].
\end{equation}
To show this, we bound the magnitude of the difference.
\begin{multline*}
\Big|\E_{\bu,\bv} \sum_{h \neq h'} \bra{\psi}  (A^{\bv}_{h(\bv)} \cdot T_{h'} \cdot (A^{\bu}_{h(\bu)} - A^{\bv}_{h(\bv)})) \otimes T_h \ket{\psi} \cdot \bone[h(\bu) = h'(\bu)]\Big|\\
\leq
\sqrt{\E_{\bu,\bv} \sum_{h \neq h'} \bra{\psi}  (A^{\bv}_{h(\bv)} \cdot T_{h'} \cdot A^{\bv}_{h(\bv)}) \otimes T_h\ket{\psi} \cdot \bone[h(\bu) = h'(\bu)]}\\
\cdot \sqrt{\E_{\bu,\bv} \sum_{h \neq h'} \bra{\psi} ( (A^{\bu}_{h(\bu)} - A^{\bv}_{h(\bv)}) \cdot T_{h'} \cdot (A^{\bu}_{h(\bu)} - A^{\bv}_{h(\bv)}))\otimes T_h\ket{\psi}}.
\end{multline*}
The term inside the first square root is at most
\begin{align}
\E_{\bu,\bv} \sum_{h \neq h'} \bra{\psi}  (A^{\bv}_{h(\bv)} \cdot T_{h'} \cdot A^{\bv}_{h(\bv)})\otimes T_h\ket{\psi}
& = \E_{\bu,\bv} \sum_{h} \bra{\psi}  (A^{\bv}_{h(\bv)} \cdot \bigg(\sum_{h' \neq h} T_{h'}\bigg) \cdot A^{\bv}_{h(\bv)}) \otimes T_h\ket{\psi}\\
& \leq \E_{\bu,\bv} \sum_{h} \bra{\psi}   (A^{\bv}_{h(\bv)})^2\otimes T_h\ket{\psi}\tag{$T$ is a sub-measurement}\\
& \leq \E_{\bu,\bv} \sum_{h} \bra{\psi}   I \otimes T_h\ket{\psi}\tag{because $A^{\bv}_{h(\bv)} \leq I$}\\
& \leq 1, \label{eq:gonna-use-this-later}
\end{align}
where the last step again uses the fact that~$T$ is a sub-measurement.
As for the term inside the second square root, it is equal to the term inside the first square root in \Cref{eq:swapped-u-for-cauchy-schwarz},
which we showed was at most $\zeta_{\mathrm{variance}}$.
Finally, \Cref{eq:swapped-u-for-v-this-time-it's-personal} is equal to
\begin{align*}
&\E_{\bv} \sum_{h \neq h'} \bra{\psi} (A^{\bv}_{h(\bv)} \cdot T_{h'} \cdot A^{\bv}_{h(\bv)}) \otimes T_h\ket{\psi} \cdot \E_{\bu}\bone[h(\bu) = h'(\bu)]\\
\leq~& \E_{\bv} \sum_{h \neq h'} \bra{\psi} (A^{\bv}_{h(\bv)} \cdot T_{h'} \cdot A^{\bv}_{h(\bv)}) \otimes T_h \ket{\psi} \cdot \frac{md}{q} \tag{by Schwartz-Zippel}\\
\leq~& \frac{md}{q}. \tag{by \Cref{eq:gonna-use-this-later}}
\end{align*}

By~\Cref{eq:h-blt}, \Cref{eq:threw-in-h-prime} is equal to
\begin{equation}\label{eq:delete-an-A}
\E_{\bu \sim \F_q^m} \sum_{h, h' \in \polyfunc{m}{q}{d}} \bra{\psi} (H^{\bu}_{h'} \cdot A^{\bu}_{h(\bu)}) \otimes T_h \ket{\psi}.
\end{equation}
Now, we show that
\begin{equation}\label{eq:swap-u-for-v-attack-of-the-clones}
\eqref{eq:delete-an-A}
\approx_{\sqrt{\zeta_{\mathrm{variance}}}}
\E_{\bu, \bv} \sum_{h, h' \in \polyfunc{m}{q}{d}} \bra{\psi} (H^{\bu}_{h'} \cdot A^{\bv}_{h(\bv)}) \otimes T_h \ket{\psi}.
\end{equation}
To show this, we bound the magnitude of the difference.
\begin{align*}
&\Big|\E_{\bu, \bv} \sum_{h, h' \in \polyfunc{m}{q}{d}} \bra{\psi} (H^{\bu}_{h'} \cdot (A^{\bu}_{h(\bu)} - A^{\bv}_{h(\bv)})) \otimes T_h \ket{\psi}\Big|\\
&\leq
\sqrt{\E_{\bu, \bv} \sum_{h, h' \in \polyfunc{m}{q}{d}} \bra{\psi} H^{\bu}_{h'} \otimes T_h\ket{\psi}}\\
&\quad\cdot \sqrt{\E_{\bu, \bv} \sum_{h, h' \in \polyfunc{m}{q}{d}} \bra{\psi} ((A^{\bu}_{h(\bu)} - A^{\bv}_{h(\bv)}) \cdot H^{\bu}_{h'} \cdot (A^{\bu}_{h(\bu)} - A^{\bv}_{h(\bv)})) \otimes T_h\ket{\psi}}.
\end{align*}
The expression inside the first square root is at most~$1$ because~$T$ and~$H^{\bu}$ are sub-measurements.
The term inside the second square root is
\begin{align*}
&\E_{\bu, \bv} \sum_{h} \bra{\psi} ((A^{\bu}_{h(\bu)} - A^{\bv}_{h(\bv)}) \cdot \bigg( \sum_{h'} H^{\bu}_{h'}\bigg) \cdot (A^{\bu}_{h(\bu)} - A^{\bv}_{h(\bv)})) \otimes T_h \ket{\psi}\\
\leq~& \E_{\bu, \bv} \sum_{h} \bra{\psi}  (A^{\bu}_{h(\bu)} - A^{\bv}_{h(\bv)})^2 \otimes T_h\ket{\psi}.
\tag{because $H^{\bu}$ is a sub-measurement}
\end{align*}
But $T \in \polysub{m}{q}{d}$, and so by \Cref{lem:global-variance-of-points} this expression is at most~$\zeta_{\mathrm{variance}}$.
Next, we show that
\begin{equation}\label{eq:move-over-v}
\eqref{eq:swap-u-for-v-attack-of-the-clones}
\approx_{\sqrt{2\delta}}
\E_{\bu, \bv} \sum_{h, h' \in \polyfunc{m}{q}{d}} \bra{\psi} H^{\bu}_{h'}  \otimes (T_h \cdot A^{\bv}_{h(\bv)}) \ket{\psi}.
\end{equation}
To show this, we bound the magnitude of the difference.
\begin{align*}
&\Big|\E_{\bu, \bv} \sum_{h, h' \in \polyfunc{m}{q}{d}} \bra{\psi}  (H^{\bu}_{h'}\otimes T_h) \cdot (A^{\bv}_{h(\bv)} \otimes I - I \otimes A^{\bv}_{h(\bv)}) \ket{\psi}\Big|\\
=~&\Big|\E_{\bu, \bv} \sum_{a \in \F_q}  \bra{\psi}  \bigg( \sum_{h'}H^{\bu}_{h'} \otimes \sum_{h:h(\bv) = a} T_h \bigg) \cdot (A^{\bv}_a \otimes I - I \otimes A^{\bv}_a) \ket{\psi}\Big|\\
\leq~&
\sqrt{\E_{\bu, \bv} \sum_{a \in \F_q}  \bra{\psi}  \bigg( \sum_{h'}H^{\bu}_{h'} \otimes \sum_{h:h(\bv) = a} T_h\bigg)^2\ket{\psi}}
\cdot \sqrt{\E_{\bu, \bv} \sum_{a \in \F_q}  \bra{\psi}(A^{\bv}_a \otimes I - I \otimes A^{\bv}_a)^2 \ket{\psi}}.
\end{align*}
The expression inside the first square root is at most~$1$ because~$T$ and~$H^{\bu}$ are sub-measurements,
and the expression inside the second square root is at most~$2\delta$ by \Cref{prop:simeq-to-approx} and the self-consistency of~$A$.
Now, \Cref{eq:move-over-v} is equal to
\begin{align*}
&\E_{\bu} \sum_{h, h'} \bra{\psi} H^{\bu}_{h'} \otimes (T_h\cdot \E_{\bv} A^{\bv}_{h(\bv)}) \ket{\psi}\\
=~& \E_{\bu} \sum_{h, h'} \bra{\psi} H^{\bu}_{h'} \otimes  (T_h\cdot Z)\ket{\psi} \tag{by \Cref{eq:swap-Z-for-A}}\\
=~& \E_{\bu} \sum_{ h'} \bra{\psi} H^{\bu}_{h'}  \otimes Z\ket{\psi} \tag{because~$T$ is a measurement}\\
\geq~& \E_{\bu} \sum_{ h'} \bra{\psi}  H^{\bu}_{h'} \otimes  \E_{\bv} A^{\bv}_{h'(\bv)} \ket{\psi} \tag{by \Cref{eq:Z-greater-than-A}}\\
=~& \E_{\bu, \bv} \sum_{ h'} \bra{\psi}  H^{\bu}_{h'} \otimes  A^{\bv}_{h'(\bv)} \ket{\psi}\\
=~& \E_{\bv} \sum_{a} \bra{\psi}  H_{[h(\bv)=a]} \otimes  A^{\bv}_{a} \ket{\psi}\\
=~& \E_{\bv} \sum_{a} \bra{\psi}  H_{[h(\bv)=a]} \otimes  I \ket{\psi} - \E_{\bv} \sum_{a\neq b} \bra{\psi}  H_{[h(\bv)=a]} \otimes  A^{\bv}_{b} \ket{\psi}\tag{because~$A$ is a measurement}\\
\geq~& \E_{\bv} \sum_{a} \bra{\psi}  H_{[h(\bv)=a]} \otimes  I \ket{\psi} - 4 \sqrt{\zeta_{\mathrm{variance}}} \tag{by \Cref{eq:explicit-bound-for-A-consistency}}\\
=~&  \sum_{h} \bra{\psi}  H_h \otimes  I \ket{\psi} - 4 \sqrt{\zeta_{\mathrm{variance}}}.
\end{align*}
In total, we have shown that
\begin{equation*}
\sum_{h\in\polyfunc{m}{q}{d}} \bra{\psi} H_h \otimes H_h \ket{\psi} \geq  \sum_{h} \bra{\psi}  H_h \otimes  I \ket{\psi} - 7 \sqrt{\zeta_{\mathrm{variance}}} - \sqrt{2\delta} - \frac{md}{q} - 4\sqrt{\zeta_{\mathrm{variance}}}.
\end{equation*}
Because
\begin{align*}
11 \sqrt{\zeta_{\mathrm{variance}}} + \sqrt{2\delta} + \frac{md}{q}
& = 11 \sqrt{24m\cdot\Big(\eps + \delta + \frac{md}{q}\Big)} + \sqrt{2\delta} + \frac{md}{q}\\
&\leq  55 m \cdot \Big(\eps^{1/2} + \delta^{1/2} + (d/q)^{1/2}\Big) + 2\sqrt{\delta} + m \cdot (d/q)^{1/2}\\
&\leq  57 m \cdot \Big(\eps^{1/2} + \delta^{1/2} + (d/q)^{1/2}\Big)\\
&\leq \zeta,
\end{align*}
this concludes the proof of \Cref{item:self-improvement-self}.

\vspace{\baselineskip}
\noindent
\emph{Proof of \Cref{item:self-improvement-boundedness} (Boundedness).}
The boundedness of~$H$ is
\ignore{
\begin{align*}
&\bra{\psi} Z \otimes (I - H) \ket{\psi}\\
=~& \bra{\psi} Z \otimes I \ket{\psi} - \sum_h \bra{\psi} Z \otimes H_h \ket{\psi}\\
\leq~& \bra{\psi} Z \otimes I \ket{\psi} - \sum_h \bra{\psi} \E_{\bu} A^{\bu}_{h(\bu)} \otimes H_h \ket{\psi} \tag{by \Cref{eq:Z-greater-than-A}}\\
=~&  \bra{\psi} Z \otimes I \ket{\psi} -\E_{\bu} \sum_h \bra{\psi}  A^{\bu}_{h(\bu)} \otimes H_h \ket{\psi}\\
=~&  \bra{\psi} Z \otimes I \ket{\psi} -\E_{\bu} \sum_h \bra{\psi}  I \otimes H_h \ket{\psi} +\E_{\bu} \sum_{a, h:h(\bu)\neq a} \bra{\psi}  A^{\bu}_{a} \otimes H_h \ket{\psi} \tag{because~$A$ is a measurement}\\
\leq~&  \bra{\psi} Z \otimes I \ket{\psi} -\E_{\bu} \sum_h \bra{\psi}  I \otimes H_h \ket{\psi} + 4 \sqrt{\zeta_{\mathrm{variance}}}\tag{by \Cref{eq:explicit-bound-for-A-consistency}}\\
\leq~& 3 \sqrt{\delta} + 4 \sqrt{\zeta_{\mathrm{variance}}}.\tag{by \Cref{eq:gonna-use-this-later-H-versus-Z}}
\end{align*}
}
\begin{align*}
&\bra{\psi} Z \otimes I \ket{\psi} -\E_{\bu} \sum_a \bra{\psi}  A^{\bu}_{a} \otimes H_{[h(\bu)=a]} \ket{\psi} \\
=~&  \bra{\psi} Z \otimes I \ket{\psi} -\E_{\bu} \sum_h \bra{\psi}  A^{\bu}_{h(\bu)} \otimes H_h \ket{\psi}\\
=~&  \bra{\psi} Z \otimes I \ket{\psi} -\E_{\bu} \sum_h \bra{\psi}  I \otimes H_h \ket{\psi} +\E_{\bu} \sum_{a, h:h(\bu)\neq a} \bra{\psi}  A^{\bu}_{a} \otimes H_h \ket{\psi} \tag{because~$A$ is a measurement}\\
\leq~&  \bra{\psi} Z \otimes I \ket{\psi} -\E_{\bu} \sum_h \bra{\psi}  I \otimes H_h \ket{\psi} + 4 \sqrt{\zeta_{\mathrm{variance}}}\tag{by \Cref{eq:explicit-bound-for-A-consistency}}\\
\leq~& 3 \sqrt{\delta} + 4 \sqrt{\zeta_{\mathrm{variance}}}.\tag{by \Cref{eq:gonna-use-this-later-H-versus-Z}}
\end{align*}
We can bound the error by
\begin{align*}
 3 \sqrt{\delta} + 4 \sqrt{\zeta_{\mathrm{variance}}}
 & =  3 \sqrt{\delta} + 4 \sqrt{24m\cdot\Big(\eps + \delta + \frac{md}{q}\Big)}\\
 & \leq 3 \sqrt{\delta} + 20 m\cdot\Big(\eps^{1/2} + \delta^{1/2} + (d/q)^{1/2}\Big)\\
 &\leq 23 m\cdot\Big(\eps^{1/2} + \delta^{1/2} + (d/q)^{1/2}\Big)\\
 & \leq \zeta.
\end{align*}
This completes the proof.
\qed

\subsection{Self-improving to a projective measurement}
\label{sec:self-improvement-projective}

We now prove the full self-improvement theorem, i.e.\ \Cref{thm:self-improvement-in-induction-section}.
To do so, we will apply the orthonormalization lemma \Cref{thm:orthonormalization}
to the output of \Cref{lem:self-improvement-helper} and argue that it maintains the four properties of~$H$.

\begin{theorem}[Self-improvement; \Cref{thm:self-improvement-in-induction-section} restated]\label{thm:self-improvement}
Let $G \in \polymeas{m}{q}{d}$ be a measurement with the following properties:
\begin{enumerate}
\ignore{
\item (Completeness): \label{item:self-improvement-projective-G-completeness}  If $G =  \sum_g G_g$, then
  	\begin{equation*}
	\bra{\psi} G \otimes I \ket{\psi} \geq 1 - \kappa.
	\end{equation*}
}
  \item (Consistency with~$A$): \label{item:self-improvement-projective-G-consistency} On average over $\bu \sim \F_q^{m}$,
	  \begin{equation*}
	  A^{u}_a \otimes I \simeq_{\nu} I \otimes G_{[g(u)=a]}.
	  \end{equation*}
	  \end{enumerate}
Let
\begin{equation*}
\zeta = 3000m\cdot \Big(\eps^{1/32} + \delta^{1/32} + (d/q)^{1/32}\Big).
\end{equation*}
Then there exists a projective sub-measurement $H \in \polysub{m}{q}{d}$ with the following properties:
\begin{enumerate}
    \item(Completeness):  \label{item:self-improvement-projective-completeness} If $H =  \sum_h H_h$, then
    \begin{equation*}
    \bra{\psi} H \otimes I \ket{\psi} \geq (1-\nu)-\zeta.
    \end{equation*}
    \item(Consistency with~$A$):\label{item:self-improvement-projective-A-consistency} On average over $\bu \sim \F_q^m$,
        \begin{equation*}
            A^u_a \otimes I \simeq_{\zeta} I \otimes H_{[h(u) = a]}.
        \end{equation*}
    \item(Strong self-consistency): \label{item:self-improvement-projective-self}
    	\begin{equation*}
		H_h \otimes I \approx_{\zeta} I \otimes H_h.
	\end{equation*}
    \item(Boundedness): \label{item:self-improvement-projective-boundedness} There exists a positive-semidefinite matrix~$Z$ such that
    \begin{equation*}
        \bra{\psi} Z \otimes (I - H) \ket{\psi} \leq \zeta
    \end{equation*}
    and for each $h \in \polyfunc{m}{q}{d}$,
	\begin{equation*}
	Z \geq \left(\E_{\bu} A^{\bu}_{h(\bu)}\right).
	\end{equation*}
\end{enumerate}
\end{theorem}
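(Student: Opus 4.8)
\textbf{The plan} is to reduce to the non-projective version already available as \Cref{lem:self-improvement-helper} and then clean up its output with the orthogonalization lemma \Cref{thm:orthonormalization}. Concretely, I would first apply \Cref{lem:self-improvement-helper} to $G$, obtaining a (possibly non-projective) sub-measurement $H^{(0)}\in\polysub{m}{q}{d}$, a positive-semidefinite matrix $Z$ with $Z\geq\E_{\bu}A^{\bu}_{h(\bu)}$ for every $h$, and error $\zeta_0 = 100m(\eps^{1/2}+\delta^{1/2}+(d/q)^{1/2})$, such that $H^{(0)}$ is $\zeta_0$-complete, $\zeta_0$-consistent with $A$, strongly $\zeta_0$-self-consistent in the sense of \Cref{def:strong-self-consistency}, and $\zeta_0$-bounded in the form $\bra{\psi}Z\otimes I\ket{\psi}-\E_{\bu}\sum_a\bra{\psi}A^{\bu}_a\otimes H^{(0)}_{[h(\bu)=a]}\ket{\psi}\leq\zeta_0$ (which, since $Z\geq\E_{\bu}A^{\bu}_{h(\bu)}$ and $H^{(0)}_h\geq 0$, already implies $\bra{\psi}Z\otimes(I-H^{(0)})\ket{\psi}\leq\zeta_0$). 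Because $H^{(0)}$ is strongly self-consistent, \Cref{thm:orthonormalization} applies and yields a \emph{projective} sub-measurement $P=\{P_h\}\in\polysub{m}{q}{d}$ with $H^{(0)}_h\otimes I\approx_{100\zeta_0^{1/4}}P_h\otimes I$. The $H$ promised by the theorem is this $P$, and $Z$ is reused unchanged; the remainder of the argument checks that $P$ inherits all four properties with the (much larger) error $\zeta = 3000m(\eps^{1/32}+\delta^{1/32}+(d/q)^{1/32})$.

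For the verifications, note first that since $\ket{\psi}$ is permutation-invariant and $H^{(0)},P$ act on a single copy of $\calH$, the relation $H^{(0)}_h\otimes I\approx P_h\otimes I$ also holds with both operators moved to the second tensor factor (the two squared norms agree on a permutation-invariant state). \emph{Strong self-consistency:} convert $H^{(0)}$'s strong self-consistency to $H^{(0)}_h\otimes I\approx_{2\zeta_0}I\otimes H^{(0)}_h$ via \Cref{prop:two-notions-of-self-consistency}, then chain $P_h\otimes I\approx H^{(0)}_h\otimes I\approx I\otimes H^{(0)}_h\approx I\otimes P_h$ using \Cref{prop:triangle-inequality-for-approx_delta}, obtaining $P_h\otimes I\approx I\otimes P_h$ at error $O(\zeta_0^{1/4})$, which is exactly the stated condition since $P$ is projective. \emph{Completeness:} write $\bra{\psi}P\otimes I\ket{\psi}=\sum_h\bra{\psi}P_h^2\otimes I\ket{\psi}$ and swap each $P_h$ for $H^{(0)}_h$ twice via \Cref{prop:easy-approx-from-approx-delta}, yielding $\bra{\psi}P\otimes I\ket{\psi}\geq\sum_h\bra{\psi}(H^{(0)}_h)^2\otimes I\ket{\psi}-O(\zeta_0^{1/8})\geq\bra{\psi}H^{(0)}\otimes I\ket{\psi}-O(\zeta_0^{1/8})$, the last step being \Cref{prop:cool-prop} applied to the strongly self-consistent $H^{(0)}$; with $\bra{\psi}H^{(0)}\otimes I\ket{\psi}\geq(1-\nu)-\zeta_0$ this gives the claimed completeness (and the same computation gives the reverse $\bra{\psi}P\otimes I\ket{\psi}\leq\bra{\psi}H^{(0)}\otimes I\ket{\psi}+O(\zeta_0^{1/8})$, needed below). \emph{Consistency with $A$:} apply \Cref{prop:self-consistency-implies-data-processing} with the strongly self-consistent $H^{(0)}$, the projective $P$, and the post-processing ``evaluate at $\bu$'' (with $\bu$ playing the role of the question index), to get $P_{[h(u)=a]}\otimes I\approx_{O(\zeta_0^{1/8})}H^{(0)}_{[h(u)=a]}\otimes I$ on average over $\bu$; then rewrite the consistency error of $P$ with $A$ as $\bra{\psi}P\otimes I\ket{\psi}-\E_{\bu}\sum_a\bra{\psi}A^{\bu}_a\otimes P_{[h(\bu)=a]}\ket{\psi}$ (valid since $A$ is a measurement), control the first term by the completeness comparison, and pass from $P_{[h(\bu)=a]}$ to $H^{(0)}_{[h(\bu)=a]}$ in the second term with \Cref{prop:easy-approx-from-approx-delta} (using that $\{A^{\bu}_a\}_a$ is a measurement and $\{P_{[h(\bu)=a]}\}_a$, $\{H^{(0)}_{[h(\bu)=a]}\}_a$ are sub-measurements), reducing everything to the consistency error of $H^{(0)}$, which is $\leq\zeta_0$. \emph{Boundedness:} keep $Z$; since $\E_{\bu}\sum_a\bra{\psi}A^{\bu}_a\otimes P_{[h(\bu)=a]}\ket{\psi}=\sum_h\bra{\psi}(\E_{\bu}A^{\bu}_{h(\bu)})\otimes P_h\ket{\psi}\leq\bra{\psi}Z\otimes P\ket{\psi}$, we get $\bra{\psi}Z\otimes(I-P)\ket{\psi}\leq\bra{\psi}Z\otimes I\ket{\psi}-\E_{\bu}\sum_a\bra{\psi}A^{\bu}_a\otimes P_{[h(\bu)=a]}\ket{\psi}$; bounding $\bra{\psi}Z\otimes I\ket{\psi}\leq\bra{\psi}H^{(0)}\otimes I\ket{\psi}+O(\sqrt{\delta})$ (a fact established inside the proof of \Cref{lem:self-improvement-helper}) and $\E_{\bu}\sum_a\bra{\psi}A^{\bu}_a\otimes P_{[h(\bu)=a]}\ket{\psi}\geq\bra{\psi}H^{(0)}\otimes I\ket{\psi}-O(\zeta_0^{1/8})$ (from the consistency step), the $\bra{\psi}H^{(0)}\otimes I\ket{\psi}$ terms cancel and the residual error is $\leq\zeta$.

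\textbf{The main obstacle} is the consistency-with-$A$ transfer: one cannot naively Cauchy--Schwarz $P$ against $H^{(0)}$ after the post-processing ``evaluate at $u$,'' because the family $\{A^{u}_{h(u)}\}_{h\in\polyfunc{m}{q}{d}}$ is not a sub-measurement (it sums to $(|\polyfunc{m}{q}{d}|/q)\cdot I$), so the combinatorially many outcomes $h$ would blow up the error. The fix is to never make $h$ a Cauchy--Schwarz summation index: route the comparison through the evaluation index $a\in\F_q$ using \Cref{prop:self-consistency-implies-data-processing} (whose own proof already does the delicate $u$-versus-$v$ surgery), and only afterwards invoke \Cref{prop:easy-approx-from-approx-delta} with the genuine sub-measurements $\{P_{[h(u)=a]}\}_a$ and $\{A^u_a\}_a$. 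The rest is bookkeeping: each step costs a square root or a fourth root, so one tracks errors of the form $\zeta_0^{1/4}$, $\zeta_0^{1/8}$, $\zeta_0^{1/16}$, and since $\zeta_0=100m(\eps^{1/2}+\delta^{1/2}+(d/q)^{1/2})$ with each of $\eps,\delta,d/q$ assumed at most $1$, all of these are bounded by $3000m(\eps^{1/32}+\delta^{1/32}+(d/q)^{1/32})=\zeta$, which explains the exponent $1/32$ in the theorem.
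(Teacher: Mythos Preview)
Your proposal is correct and follows essentially the same approach as the paper: apply \Cref{lem:self-improvement-helper} to obtain a non-projective $\widehat H$ (your $H^{(0)}$), orthogonalize via \Cref{thm:orthonormalization} to get a projective $H$ (your $P$), then transfer each of the four properties using \Cref{prop:self-consistency-implies-data-processing} for the post-processed approximation. The paper's mechanics are slightly more streamlined in places---it obtains completeness directly from \Cref{prop:completeness-transfer-self-consistent-A} rather than your $P_h^2$ computation, and it obtains consistency with $A$ in one shot via \Cref{prop:triangle-sub} applied to $A^u_a\otimes I\simeq_{\widehat\zeta}I\otimes \widehat H_{[h(u)=a]}$ together with $I\otimes \widehat H_{[h(u)=a]}\approx I\otimes H_{[h(u)=a]}$, avoiding your rewrite of the inconsistency as a difference of completenesses---but these are cosmetic differences on the same argument.
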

\begin{proof}
We note that the bound we are proving is trivial when at least one of $\eps$, $\delta$, or $d/q$ is $\geq 1$.
In this case, $\zeta \geq 3000$.
Hence, we may assume that $\gamma, \zeta, d/q \leq 1$.
This will aid us when carrying out the error calculations near the end of the proof,
as it allows us to bound terms like $\eps^{1/2}$ by terms like $\eps^{1/4}$.

To begin,  apply \Cref{lem:self-improvement-helper} to~$G$.
Let $\widehat{H} \in \polysub{m}{q}{d}$ be the sub-measurement it outputs
and let $\widehat{\zeta}$ be the error
\begin{equation*}
\widehat{\zeta} = 100m \cdot \Big(\eps^{1/2} + \delta^{1/2} + (d/q)^{1/2}\Big).
\end{equation*}
By \Cref{item:self-improvement-self} of \Cref{lem:self-improvement-helper},
\begin{equation*}
\sum_{h} \bra{\psi} \widehat{H}_h \ot \widehat{H}_h \ket{\psi}
	\geq \bra{\psi} \widehat{H} \ot I \ket{\psi} - \widehat{\zeta}.
\end{equation*}
Thus, \Cref{thm:orthonormalization} implies the existence of a projective sub-measurement
$H \in \polysub{m}{q}{d}$ such that
\begin{equation}\label{eq:approx-between-H-with-and-without-hat}
\widehat{H}_h \otimes I \approx_{\widehat{\zeta}_{\mathrm{ortho}}} H_{h} \otimes I,
\end{equation}
where
\begin{equation*}
\widehat{\zeta}_{\mathrm{ortho}} = 100 \widehat{\zeta}^{1/4}.
\end{equation*}
In addition, \Cref{prop:self-consistency-implies-data-processing} implies that
\begin{equation}\label{eq:approx-data-processed}
\widehat{H}_{[h(u)=a]} \otimes I \approx_{\widehat{\zeta}_{\mathrm{dataprocess}}} H_{[h(u)=a]} \otimes I,
\end{equation}
where
\begin{equation*}
\widehat{\zeta}_{\mathrm{dataprocess}} = 8 \widehat{\zeta} + 8 \sqrt{\widehat{\zeta}_{\mathrm{ortho}}}.
\end{equation*}

Now we prove the four properties of this theorem.
We will show that each quantity is bounded by some error,
and then at the end of this proof we will show that all four errors are bounded by~$\zeta$.
\begin{enumerate}
    \item(Completeness):
\Cref{prop:completeness-transfer-self-consistent-A} implies that
\begin{align*}
 \bra{\psi} H \otimes I \ket{\psi}
 & \geq  \bra{\psi} \widehat{H} \otimes I \ket{\psi} - \widehat{\zeta} - 2\sqrt{\widehat{\zeta}_{\mathrm{ortho}}}\\
 & \geq  (1-\nu) -  2\widehat{\zeta} - 2\sqrt{\widehat{\zeta}_{\mathrm{ortho}}}. \tag{by \Cref{item:self-improvement-projective-completeness} of \Cref{lem:self-improvement-helper}}
\end{align*}
	\item (Consistency with~$A$):
\Cref{prop:triangle-sub} applied to
 \Cref{item:self-improvement-projective-A-consistency} of \Cref{lem:self-improvement-helper}
implies that
\begin{equation*}
A^u_a \otimes I \simeq_{\widehat{\zeta} + \sqrt{\widehat{\zeta}_{\mathrm{dataprocess}}}}
		I \otimes H_{[h(u) = a]}.
\end{equation*}
	\item (Strong self-consistency): \Cref{prop:two-notions-of-self-consistency} implies that
		\begin{equation*}
		\widehat{H}_h \otimes I \approx_{2\widehat{\zeta}} I \otimes \widehat{H}_h.
		\end{equation*}
		Thus,
		\begin{equation*}
			H_h \otimes I
			\approx_{\widehat{\zeta}_{\mathrm{ortho}}} \widehat{H}_{h} \otimes I
			\approx_{2\widehat{\zeta}} I \otimes \widehat{H}_{h}
			\approx_{\widehat{\zeta}_{\mathrm{ortho}}} I \otimes H_h.
		\end{equation*}
		Hence, by \Cref{prop:triangle-inequality-for-approx_delta},
		\begin{equation*}
		H_h \otimes I \approx_{6 \widehat{\zeta} + 6\widehat{\zeta}_{\mathrm{ortho}}} I \otimes H_h.
		\end{equation*}
	\item (Boundedness):
		The boundedness of~$H$ is
		\begin{align}
		&\bra{\psi} Z \otimes (I - H) \ket{\psi}\nonumber\\
		=~& \bra{\psi} Z \otimes I \ket{\psi} - \sum_h \bra{\psi} Z \otimes H_h \ket{\psi}\nonumber\\
		\leq~& \bra{\psi} Z \otimes I \ket{\psi} - \sum_h \bra{\psi} \E_{\bu} A^{\bu}_{h(\bu)} \otimes H_h \ket{\psi} \tag{by \Cref{item:self-improvement-projective-boundedness} of \Cref{lem:self-improvement-helper}}\\
		=~&  \bra{\psi} Z \otimes I \ket{\psi} -\E_{\bu} \sum_h \bra{\psi}  A^{\bu}_{h(\bu)} \otimes H_h \ket{\psi}\nonumber\\
		=~&  \bra{\psi} Z \otimes I \ket{\psi} -\E_{\bu} \sum_a \bra{\psi}  A^{\bu}_{a} \otimes H_{[h(\bu)=a]} \ket{\psi}. \label{eq:almost-there-self-improvement-edition}
		\end{align}
		By \Cref{prop:easy-approx-from-approx-delta},
		\begin{equation*}
		\E_{\bu} \sum_a \bra{\psi}  A^{\bu}_{a} \otimes H_{[h(\bu)=a]} \ket{\psi}
		\approx_{\sqrt{\widehat{\zeta}_{\mathrm{dataprocess}}}} \E_{\bu} \sum_a \bra{\psi}  A^{\bu}_{a} \otimes \widehat{H}_{[h(\bu)=a]} \ket{\psi}.
		\end{equation*}
		Hence,
		\begin{align*}
		\eqref{eq:almost-there-self-improvement-edition}
		& \leq \bra{\psi} Z \otimes I \ket{\psi} -\E_{\bu} \sum_a \bra{\psi}  A^{\bu}_{a} \otimes \widehat{H}_{[h(\bu)=a]} \ket{\psi} + \sqrt{\widehat{\zeta}_{\mathrm{dataprocess}}}\\
		& \leq \widehat{\zeta} + \sqrt{\widehat{\zeta}_{\mathrm{dataprocess}}}. \tag{by \Cref{item:self-improvement-boundedness} of \Cref{lem:self-improvement-helper}}
		\end{align*}
\end{enumerate}

This shows the four properties hold with errors
\begin{equation*}
2\widehat{\zeta} + 2\sqrt{\widehat{\zeta}_{\mathrm{ortho}}},\quad
\widehat{\zeta} + \sqrt{\widehat{\zeta}_{\mathrm{dataprocess}}},\quad
6 \widehat{\zeta} + 6\widehat{\zeta}_{\mathrm{ortho}},\quad
\text{and} \quad \widehat{\zeta} + \sqrt{\widehat{\zeta}_{\mathrm{dataprocess}}},
\end{equation*}
respectively. We now show that these four are bounded by~$\zeta$.
First, using the fact that $100^{1/4} \leq 4$, we note that
\begin{align*}
\widehat{\zeta}_{\mathrm{ortho}}
= 100 \widehat{\zeta}^{1/4}
&= 100 \Big(100m \cdot \Big(\eps^{1/2} + \delta^{1/2} + (d/q)^{1/2}\Big)\Big)^{1/4}\\
&\leq 400 m \cdot \Big(\eps^{1/8} + \delta^{1/8} + (d/q)^{1/8}\Big).
\end{align*}
Hence,
\begin{align*}
6 \widehat{\zeta} + 6\widehat{\zeta}_{\mathrm{ortho}}
& \leq 6 \Big(100m \cdot \Big(\eps^{1/2} + \delta^{1/2} + (d/q)^{1/2}\Big)\Big)
	+ 6 \Big(400 m \cdot \Big(\eps^{1/8} + \delta^{1/8} + (d/q)^{1/8}\Big)\Big)\\
&\leq 600m \cdot \Big(\eps^{1/8} + \delta^{1/8} + (d/q)^{1/8}\Big)
	+2400 m \cdot \Big(\eps^{1/8} + \delta^{1/8} + (d/q)^{1/8}\Big)\\
&= 3000m\cdot \Big(\eps^{1/8} + \delta^{1/8} + (d/q)^{1/8}\Big),
\end{align*}
which is less than~$\zeta$.
In addition, using the fact that~$\sqrt{400} = 20$, we also note that
\begin{align*}
\widehat{\zeta}_{\mathrm{dataprocess}}
&= 8 \widehat{\zeta} + 8 \sqrt{\widehat{\zeta}_{\mathrm{ortho}}}\\
&\leq 8 \Big(100m \cdot \Big(\eps^{1/2} + \delta^{1/2} + (d/q)^{1/2}\Big)\Big)
	+ 8 \sqrt{400 m \cdot \Big(\eps^{1/8} + \delta^{1/8} + (d/q)^{1/8}\Big)}\\
&\leq 800m \cdot \Big(\eps^{1/16} + \delta^{1/16} + (d/q)^{1/16}\Big)
	+ 160m \cdot \Big(\eps^{1/16} + \delta^{1/16} + (d/q)^{1/16}\Big)\\
&= 960m \cdot \Big(\eps^{1/16} + \delta^{1/16} + (d/q)^{1/16}\Big),
\end{align*}
which is clearly less than $\zeta$.
Thus, $2\widehat{\zeta} + 2\sqrt{\widehat{\zeta}_{\mathrm{ortho}}} \leq \widehat{\zeta}_{\mathrm{dataprocess}} \leq \zeta$.
Finally, using $\sqrt{960} \leq 31$,
\begin{align*}
\widehat{\zeta} + \sqrt{\widehat{\zeta}_{\mathrm{dataprocess}}}
&\leq  100m \cdot \Big(\eps^{1/2} + \delta^{1/2} + (d/q)^{1/2}\Big)
		+ \sqrt{960m \cdot \Big(\eps^{1/16} + \delta^{1/16} + (d/q)^{1/16}\Big)}\\
&\leq  100m \cdot \Big(\eps^{1/32} + \delta^{1/32} + (d/q)^{1/32}\Big)
		+ 31 m \cdot \Big(\eps^{1/32} + \delta^{1/32} + (d/q)^{1/32}\Big)\\
&=  131m \cdot \Big(\eps^{1/32} + \delta^{1/32} + (d/q)^{1/32}\Big),
\end{align*}
which is also less than~$\zeta$.
This completes the proof.
\end{proof}

\section{Commutativity of the points measurements}
\label{sec:commutativity-points}

\begin{theorem}\label{thm:commutativity-points}
Let $(\psi, A, B, L)$ be an $(\eps, \delta, \gamma)$-good symmetric strategy for the
$(m,q,d)$ low individual degree test.
On average over independent and uniformly random $\bu, \bv \sim \F_q^m$,
\begin{equation*}
(A^{u}_a \cdot A^{v}_b) \otimes I \approx_{32\gamma m} (A^{v}_b\cdot A^{u}_a) \otimes I.
\end{equation*}
\end{theorem}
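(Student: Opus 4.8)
The plan is to deduce commutativity of the points measurements from the diagonal lines test, exactly as the classical argument would suggest: the diagonal lines test certifies that Alice's point measurement at $u$ is consistent with Bob's lines measurement on \emph{any} line through $u$, including the diagonal line through $u$ and $v$; symmetrically, Alice's point measurement at $v$ is consistent with that same line measurement. Since Bob's line measurement is a single projective measurement, composing two consistency relations with it forces $A^u$ and $A^v$ to approximately commute. The main subtlety is that the diagonal lines test only queries a line in direction $\bv$ whose last $m-\bi$ coordinates vanish, with $\bi$ uniform in $\{1,\dots,m\}$; I first need to argue that the $j$-restricted diagonal lines test for $j=m$ gives a uniformly random line through $u$, and that conditioning on $\bi=m$ costs only a factor of $m$ in the error. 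This is the content of how $\gamma$ enters multiplied by $m$.

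Concretely, here are the steps in order. \textbf{Step 1 (Reduce to the $m$-restricted test).} Since the diagonal lines test picks $\bi = m$ with probability $1/m$, the $m$-restricted diagonal lines test is passed with probability at least $1 - m\gamma$; in this subtest $\bell$ is a uniformly random line in $\F_q^m$ and $\bu$ a uniformly random point on it. Using \Cref{rem:good-strat-characterization} (or rather the analogous statement for the restricted test), this gives, on average over a uniformly random line $\bell$ and point $\bu\in\bell$,
\begin{equation*}
A^u_a \ot I \simeq_{m\gamma} I \ot L^{\ell}_{[f(u)=a]}.
\end{equation*}
\textbf{Step 2 (Two endpoints of a random segment).} The distribution ``pick a uniformly random line $\bell$, then pick $\bu,\bv\in\bell$ independently and uniformly'' has the property that the marginal on $(\bell,\bu)$ and on $(\bell,\bv)$ are each the line–point distribution above (up to the negligible event that $\bu = \bv$, which occurs with probability $1/q$). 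Moreover, the marginal on $(\bu,\bv)$ is $1/q$-close to two independent uniform points in $\F_q^m$, since two independent uniform points lie on a common line and, conditioned on being distinct, determine it uniquely. So it suffices to prove the commutation bound on this line-segment distribution and absorb an additive $O(1/q)$, which is dominated by $\gamma m$ unless $\gamma$ is tiny — and if $\gamma < 1/q$ one can instead note that the diagonal lines test with a degree-$md$ polynomial already pins things down; in any case $1/q \le \gamma m$ can be assumed WLOG as the bound is vacuous otherwise (I will fold the $1/q$ into the constant).

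\textbf{Step 3 (Convert to $\approx$ and compose).} By \Cref{prop:simeq-to-approx}, Step 1 upgrades to $A^u_a \ot I \approx_{2m\gamma} I \ot L^{\ell}_{[f(u)=a]}$ on each endpoint distribution, and symmetrically $A^v_b \ot I \approx_{2m\gamma} I \ot L^{\ell}_{[f(v)=b]}$. Now I run the standard ``swap across the tensor factor'' calculation of the kind used in \Cref{prop:switch-sandwich}: write
\begin{equation*}
(A^u_a \cdot A^v_b)\ot I = (A^v_b \ot I)(A^u_a \ot I) \text{ on } \ket\psi,
\end{equation*}
replace $A^u_a \ot I$ by $I \ot L^{\ell}_{[f(u)=a]}$ using \Cref{prop:cab-approx-delta}/Cauchy–Schwarz (cost $\sqrt{2m\gamma}$ after summing over $a,b$), then $A^v_b \ot I$ by $I \ot L^{\ell}_{[f(v)=b]}$ (another $\sqrt{2m\gamma}$), giving
\begin{equation*}
(A^u_a \cdot A^v_b)\ot I \approx I \ot \big(L^{\ell}_{[f(v)=b]} \cdot L^{\ell}_{[f(u)=a]}\big)
\end{equation*}
up to error $O(\sqrt{m\gamma})$. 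Since $L^{\ell}$ is a single projective measurement, $L^{\ell}_{[f(u)=a]}$ and $L^{\ell}_{[f(v)=b]}$ are post-processings of it and hence \emph{commute exactly}: both are sums of the $L^{\ell}_f$'s, so $L^{\ell}_{[f(v)=b]} L^{\ell}_{[f(u)=a]} = \sum_{f: f(v)=b,\, f(u)=a} L^{\ell}_f = L^{\ell}_{[f(u)=a]} L^{\ell}_{[f(v)=b]}$. Running the same two replacements in the opposite order yields $(A^v_b \cdot A^u_a)\ot I \approx I \ot (L^{\ell}_{[f(u)=a]} \cdot L^{\ell}_{[f(v)=b]})$ up to $O(\sqrt{m\gamma})$, and the two right-hand sides are equal. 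The triangle inequality for $\approx$ (\Cref{prop:triangle-inequality-for-approx_delta}) then gives $(A^u_a A^v_b)\ot I \approx_{O(m\gamma)} (A^v_b A^u_a)\ot I$; chasing the constants through the Cauchy–Schwarz steps (each contributing a square root that gets re-squared by the triangle inequality, plus the factor-$m$ from Step 1 and a factor from \Cref{prop:triangle-inequality-for-approx_delta}) lands at the claimed $32\gamma m$.

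\textbf{Main obstacle.} The genuinely delicate point is Step 2: making precise that sampling a random line and two random points on it is interchangeable with sampling two independent random points, and that the relevant marginals match the diagonal lines test's line–point distribution, all while keeping the error linear in $\gamma$ rather than picking up extra factors. The rest is a routine Cauchy–Schwarz bookkeeping exercise of the type already carried out repeatedly in \Cref{sec:prelims}; the only place one must be careful is to apply \Cref{prop:cab-approx-delta} with the correct one-sided operator-norm bounds (the "$C^x_{a,b}$" being $A^v_b \ot I$ on one side, which is a sub-measurement in $b$, so $\sum_b (C_{a,b})^\dagger C_{a,b} \le I$ holds) so that the swaps are legitimate.
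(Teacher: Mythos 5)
Your overall strategy matches the paper's: pass to the $m$-restricted diagonal lines test, use consistency between $A^u_a$ and $L^\ell_{[f(u)=a]}$, exploit exact commutativity of the post-processings of the projective measurement $L^\ell$, and close via the triangle inequality. However, the execution has two genuine gaps.

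The first gap is in Step 2. You build the joint distribution in the wrong order (random line $\bell$ first, then $\bu,\bv$ on it), which makes the marginal on $(\bu,\bv)$ only $O(1/q)$-close in total variation to two independent uniform points. Your proposed workaround — assume $1/q \le \gamma m$ ``WLOG, else the bound is vacuous'' — is incorrect: the bound $32\gamma m$ is \emph{more} meaningful the smaller $\gamma$ is, and the theorem statement carries no $1/q$ or $d/q$ term, so the extra $O(1/q)$ cannot be absorbed. The clean fix, which the paper uses, is to reverse the sampling order: draw $\bu,\bv$ independently and uniformly at random, then let $\bell$ be the unique line through them (and a uniformly random line through $\bu$ in the measure-zero event $\bu=\bv$). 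With this coupling, $(\bu,\bv)$ is exactly independent uniform, while the marginals on $(\bu,\bell)$ and on $(\bv,\bell)$ are each exactly the $m$-restricted test's line–point distribution (given $\bu$, the line $\bell$ is a uniformly random line through $\bu$), so no additive $1/q$ appears at all.

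The second gap is the error bookkeeping in Step 3, which is incoherent as written. You claim each replacement costs $\sqrt{2m\gamma}$ and that each square root ``gets re-squared by the triangle inequality'' — but \Cref{prop:cab-approx-delta} preserves the $\approx_\delta$ error (no square root is introduced: it bounds $\E\sum\|C(A-B)\ket\psi\|^2 \le \E\sum\|(A-B)\ket\psi\|^2$), and \Cref{prop:triangle-inequality-for-approx_delta} multiplies by $k$, it does not square. The correct accounting is: four applications of \Cref{prop:cab-approx-delta}, each preserving the error $2\gamma m$, plus one exact equality from the commutativity of the $L^\ell$ post-processings, and then \Cref{prop:triangle-inequality-for-approx_delta} with $k=4$ gives $4\cdot(4\cdot 2\gamma m) = 32\gamma m$. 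The $\sqrt{\cdot}$ picture you describe belongs to the scalar inner-product comparison (\Cref{prop:closeness-of-ip}), which is not what is needed here since the goal is a statement about the vector-norm-squared distance $\approx_\delta$.
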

\begin{proof}
The strategy passes the diagonal lines test with probability $1-\gamma$.
Therefore, it passes the $m$-restricted diagonal lines test with probability $1- \gamma\cdot m$.
This means that
\begin{equation*}
A^{u}_a \otimes I \simeq_{\gamma \cdot m} I \otimes L^{\ell}_{[f(u)=a]},
\end{equation*}
on average over a uniformly random $\bu \sim \F_q^{m}$ and a uniformly random line $\bell$ in $\F_q^m$
containing $\bu$.
By~\Cref{prop:simeq-to-approx}, this implies that
\begin{equation}\label{eq:point-diagonal-line-approx}
A^{u}_a \otimes I \approx_{2\cdot \gamma \cdot m} I \otimes L^{\ell}_{[f(u)=a]}. 
\end{equation}

Let $\bu$ and $\bv$ be independent and uniformly random points in $\F_q^m$.
Let $\bell$ be a uniformly random line in $\F_q^m$ containing both points.
(If $\bu$ and $\bv$ are distinct, then $\bell$ is just the unique line that passes through both of them.
Otherwise, $\bu = \bv$, and $\bell$ is a uniformly random line passing through $\bu$.)
Then the marginal distribution on $\bu$ and $\bell$ is as in \Cref{eq:point-diagonal-line-approx},
as is the marginal distribution on $\bv$ and $\bell$.
As a result,
\begin{align*}
(A^{u}_a \cdot A^{v}_b) \otimes I
&\approx_{2\cdot \gamma \cdot m} A^{u}_a \otimes L^{\ell}_{[f(v)=b]} \tag{by \Cref{eq:point-diagonal-line-approx}}\\
&\approx_{2\cdot \gamma \cdot m} I \otimes (L^{\ell}_{[f(v)=b]} \cdot L^{\ell}_{[f'(u)=a]}) \tag{by \Cref{eq:point-diagonal-line-approx}}\\
&= I \otimes (L^{\ell}_{[f'(u)=a]} \cdot L^{\ell}_{[f(v)=b]}) \tag{because~$L$ is projective}\\
&\approx_{2\cdot \gamma \cdot m} A^{v}_b \otimes L^{\ell}_{[f'(u)=a]} \tag{by \Cref{eq:point-diagonal-line-approx}}\\
&\approx_{2\cdot \gamma \cdot m} (A^{v}_b \cdot A^{u}_a) \otimes I. \tag{by \Cref{eq:point-diagonal-line-approx}}
\end{align*}
The theorem now follows from
\Cref{prop:triangle-inequality-for-approx_delta}.
\end{proof}


\section{Commutativity}
\label{sec:g-comm}

Let $(\bu, \bx)$ and $(\bv, \by)$ be sampled independently and uniformly at random from $\F_q^{m+1}$.
In \Cref{sec:G-commutes-after-evaluation},
we will show that the $G$ measurements approximately commute ``after evaluation";
namely, that $G^{\bx}_{[g(\bu)=a]}$ commutes with $G^{\by}_{[g(\bv)=b]}$.
Then, in \Cref{sec:G-commutes},
we will use this to show that $G^{\bx}_g$ approximately commutes with $G^{\by}_h$.
This is necessary if we wish to ``paste" together several~$G^x$ measurements at different points~$x \in \F_q$
to produce a single global measurement~$H \in \polysub{m+1}{q}{d}$, as we do in \Cref{sec:ld-pasting} below.

\subsection{Commutativity of~$G$ after evaluation}\label{sec:G-commutes-after-evaluation}

\begin{lemma}[Commutativity of~$G$ after evaluation]
  \label{lem:comm-data-processed-g}
  Let $(\psi, A, B, L)$ be an $(\eps, \delta, \gamma)$-good symmetric strategy for the $(m+1,q,d)$ low individual degree test.
  Let $\{G^x\} \in \polysub{m}{q}{d}$ be a collection of projective sub-measurements indexed by $x \in \F_q$ with the following properties:
  \begin{enumerate}
  \item (Consistency with~$A$): \label{item:data-processed-consistency} On average over $(\bu, \bx) \sim \F_q^{m+1}$,
	  \begin{equation*}
	  A^{u, x}_a \otimes I \simeq_{\zeta} I \otimes G^x_{[g(u)=a]}.
	  \end{equation*}
  \item (Strong self-consistency): \label{item:data-processed-self-consistency} On average over~$\bx \sim \F_q$,
  	\begin{equation*}
		G^x_g \otimes I \approx_{\zeta} I \otimes G^x_g.
	\end{equation*}
  \item (Boundedness): \label{item:data-processed-boundedness} There exists a positive-semidefinite matrix $Z^x$ for each $x \in \F_q$ such that
  	\begin{equation*}
		\E_{\bx} \bra{\psi} (I-G^{\bx})\otimes Z^{\bx} \ket{\psi} \leq \zeta
	\end{equation*}
	and for each $x \in \F_q$ and $g \in \polyfunc{m}{q}{d}$,
	\begin{equation*}
	 Z^x \geq \left(\E_{\bu} A^{\bu, x}_{g(\bu)}\right).
	\end{equation*}
  \end{enumerate}
  Let
  \begin{equation*}
  \nu = 48 m \cdot (\gamma^{1/2} + \zeta^{1/2}).
  \end{equation*}
  Then on average over independent and uniformly random $(\bu,\bx), (\bv,\by) \sim \F_q^{m+1}$,
  \begin{equation*}
    G^x_{[g(u) = a]} G^y_{[h(v) = b]} \ot I \approx_{\nu} G^y_{[h(v) = b]}
    G^x_{[g(u) = a]} \ot I.
  \end{equation*}
\end{lemma}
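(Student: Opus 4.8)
The plan is to reduce the desired commutation of the evaluated $G$-measurements to the commutation of the points measurements $A$, which we already control via the diagonal lines test. First I would record the commutativity of $A$: applying \Cref{thm:commutativity-points} to the $(m+1,q,d)$ strategy $(\psi,A,B,L)$ gives, on average over independent uniform $(\bu,\bx),(\bv,\by)\sim\F_q^{m+1}$,
\begin{equation*}
(A^{u,x}_a\cdot A^{v,y}_b)\otimes I \approx_{32\gamma(m+1)} (A^{v,y}_b\cdot A^{u,x}_a)\otimes I,
\end{equation*}
and, since $\ket{\psi}$ is permutation-invariant, the mirror statement $I\otimes(A^{u,x}_a A^{v,y}_b)\approx_{32\gamma(m+1)} I\otimes(A^{v,y}_b A^{u,x}_a)$ as well. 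Note $(\bu,\bx),(\bv,\by)$ are exactly two independent uniform points of $\F_q^{m+1}$, so this is immediate.

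Next I would extract the auxiliary facts from the three hypotheses. From hypothesis~\ref{item:data-processed-self-consistency} and \Cref{prop:two-notions-of-self-consistency} (for projective sub-measurements) each $G^x$ is $(\zeta/2)$-strongly self-consistent on average over $\bx$, and then \Cref{prop:two-notions-of-self-consistency-after-evaluation} upgrades this to $G^x_{[g(u)=a]}\otimes I\approx_{\zeta} I\otimes G^x_{[g(u)=a]}$ on average over $(\bu,\bx)$ — this lets us transport an evaluated $G$-projector (which is an honest projector, being a sum of orthogonal $G^x_g$'s) between the two tensor factors. From hypothesis~\ref{item:data-processed-consistency}, \Cref{prop:cons-sub-meas} (applied to the sub-measurement $\{G^x_{[g(u)=a]}\}$ and the measurement $\{A^{u,x}_a\}$, using permutation-invariance to flip sides) gives $G^x_{[g(u)=a]}\otimes I\approx_{\zeta} G^x_{[g(u)=a]}\otimes A^{u,x}_a\approx_{\zeta} G^x\otimes A^{u,x}_a$, i.e.\ we may convert an evaluated $G$-projector into the matching $A$-projector on the opposite factor at the cost of leaving behind the \emph{complete} operator $G^x$.

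The main body of the argument is then a chain of state-dependent-distance approximations that rewrites $\bigl(G^{\bx}_{[g(\bu)=a]}G^{\by}_{[h(\bv)=b]}\bigr)\otimes I\ket{\psi}$, one evaluated $G$-projector at a time: move it to the opposite factor (self-consistency), convert it into an $A$-projector there (consistency), and carry the residual complete operators $G^{\bx},G^{\by}$ so that they end up sitting \emph{outside} everything on that factor. Running the identical chain on $\bigl(G^{\by}_{[h(\bv)=b]}G^{\bx}_{[g(\bu)=a]}\bigr)\otimes I\ket{\psi}$ produces the same expression but with the two $A$-projectors in the opposite order; the mirror commutativity of $A$ from the first step — conjugated by the norm-$\le 1$ operators $G^{\bx},G^{\by}$, which only shrinks its error — closes the gap. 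Every individual replacement is a Cauchy–Schwarz estimate of the type packaged in \Cref{prop:closeness-of-ip,prop:cab-approx-delta,prop:easy-approx-from-approx-delta}, valid because the operators transported along the chain always have operator norm at most $1$, and the errors are assembled via the triangle inequality \Cref{prop:triangle-inequality-for-approx_delta}.

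The hard part is the bookkeeping of the \emph{incomplete parts} of the $G^x$ sub-measurements: since the hypotheses include no completeness bound for $G$, we cannot discard a leftover $G^{\bx}$ in favor of $I$ for free. The key point is to arrange the chain so that every such leftover is removed only at a moment when the error term has the shape $\E_{\bx}\bra{\psi}(I-G^{\bx})\otimes W^{\bx}\ket{\psi}$ with $W^{\bx}$ a positive operator built solely from averages $\E_{\bu}A^{\bu,\bx}_{g(\bu)}$ and sums of the orthogonal projectors $G^{\bx}_g$, so that $W^{\bx}\le Z^{\bx}$ by the boundedness hypothesis~\ref{item:data-processed-boundedness} and hence the term is at most $\E_{\bx}\bra{\psi}(I-G^{\bx})\otimes Z^{\bx}\ket{\psi}\le\zeta$. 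Getting the sequence of moves — in particular the order in which the two evaluated projectors are transported — into exactly this form is the one delicate point; everything else is routine. Summing the contributions (a constant number of $O(\sqrt{\zeta})$ and $O(\gamma m)$ terms) through the triangle inequality gives the stated bound $\nu=48m(\gamma^{1/2}+\zeta^{1/2})$, the factor $m$ being comfortably loose.
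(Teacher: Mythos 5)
Your plan is correct and matches the paper's proof essentially step for step: convert each evaluated $G$-projector into a matching $A$-projector on the opposite factor via hypothesis~\ref{item:data-processed-consistency} (leaving a leftover $G^{\bx}$ or $G^{\by}$), commute the $A$'s using \Cref{thm:commutativity-points}, transport back via hypothesis~\ref{item:data-processed-self-consistency}, and use hypothesis~\ref{item:data-processed-boundedness} to discharge the leftover complete operators. Where your write-up stops short is precisely the step you flag as ``delicate'': showing that each leftover really does land in the form $\E_{\bx}\bra{\psi}(I-G^{\bx})\otimes W^{\bx}\ket{\psi}$ with $W^{\bx}\le Z^{\bx}$. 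In the paper this is the content of \Cref{clm:g-comm-stability,clm:g-comm-stability2}, and it is not routine: \Cref{clm:g-comm-stability} requires introducing the auxiliary sub-measurement $R^{y}_g=\E_{\bu,\bx}\sum_a G^{\bu,\bx}_a G^{y}_g G^{\bu,\bx}_a$ and a Cauchy–Schwarz split that sandwiches $(I-G^{\by})$ by $R^{\by}_g$ before $Z^{\by}$ can absorb the remainder, and \Cref{clm:g-comm-stability2} needs an \emph{extra} application of $A$-commutativity \emph{before} the boundedness bound is applied (which is why the final error carries two $\sqrt{\gamma(m+1)}$ contributions rather than one). Without spelling these out the argument is a faithful outline of the paper's route rather than a complete proof.
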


\begin{proof}
	For notational convenience we use the abbreviation $G^{u,x}_a = G^{x}_{[g(u) = a]}$ for all $(u,x) \in \F_q^{m+1}$. 
We expand the square:
\begin{align}
	& \E_{\bu,\bv,\bx,\by} \sum_{a,b} \bra{\psi} ( G^{\bu,\bx}_a G^{\bv,\by}_b - G^{\bv,\by}_b G^{\bu,\bx}_a )^\dagger \cdot( G^{\bu,\bx}_a G^{\bv,\by}_b - G^{\bv,\by}_b G^{\bu,\bx}_a ) \ot I \ket{\psi} \nonumber\\
	=~& \E_{\bu,\bv,\bx,\by} \sum_{a,b} \bra{\psi} ( G^{\bv,\by}_b G^{\bu,\bx}_a  - G^{\bu,\bx}_a G^{\bv,\by}_b  ) \cdot( G^{\bu,\bx}_a G^{\bv,\by}_b - G^{\bv,\by}_b G^{\bu,\bx}_a ) \ot I \ket{\psi} \nonumber\\
	=~& 2\cdot \E_{\bu,\bv,\bx,\by} \sum_{a,b} \Big ( \bra{\psi} G^{\bv,\by}_b G^{\bu,\bx}_a G^{\bv,\by}_b \ot I \ket{\psi} -  \bra{\psi} G^{\bu,\bx}_a  G^{\bv,\by}_b G^{\bu,\bx}_a G^{\bv,\by}_b  \ot I \ket{\psi} \Big ),  \label{eq:gcom8}
\end{align}
where the last step uses the projectivity of~$G$.

We will show that the second term of \Cref{eq:gcom8} is close to the first term.
To begin, we note that for each $(u, x) \in \F_q^{m+1}$,
\begin{equation}\label{eq:sum-of-gux}
G^{u, x} = \sum_a G^{u, x}_a =  \sum_a G^{x}_{[g(u) = a]} = G^x.
\end{equation}
As a result, 
 \Cref{item:data-processed-consistency} and \Cref{prop:cons-sub-meas} imply that
\begin{align}
G^{u,x}_a \ot I 
&\approx_{4\zeta} G^{u, x} \otimes A^{u, x}_a \nonumber\\
&= G^{x} \otimes A^{u, x}_a, \label{eq:add-an-a}
\end{align}
where the second step is by \Cref{eq:sum-of-gux}.
We can therefore approximate the second term of \Cref{eq:gcom8}
as
\begin{align}
&\E_{\bu,\bv,\bx,\by} \sum_{a,b} \bra{\psi} G^{\bu,\bx}_a  G^{\bv,\by}_b G^{\bu,\bx}_a G^{\bv,\by}_b  \ot I \ket{\psi} \nonumber\\
\approx_{2\sqrt{\zeta}}& \E_{\bu,\bv,\bx,\by} \sum_{a,b} \bra{\psi} G^{\bu,\bx}_a  G^{\bv,\by}_b G^{\bu,\bx}_a G^{\by}  \ot A^{\bv,\by}_b \ket{\psi}.\label{eq:apply-add-an-a-once}
\end{align}
using \Cref{prop:closeness-of-ip} and \Cref{eq:add-an-a}.
Next, we claim that
\begin{equation}
\eqref{eq:apply-add-an-a-once}
\approx_{\sqrt{\zeta}}\E_{\bu,\bv,\bx,\by} \sum_{a,b} \bra{\psi} G^{\bu,\bx}_a  G^{\bv,\by}_b G^{\bu,\bx}_a  \ot A^{\bv,\by}_b \ket{\psi}. \label{eq:gcom9} 
\end{equation}
This is proved in \Cref{clm:g-comm-stability} below. Continuing, we have
\begin{align}
\eqref{eq:gcom9} &\approx_{2\sqrt{\zeta}} \E_{\bu,\bv,\bx,\by} \sum_{a,b} \bra{\psi} G^{\bu,\bx}_a  G^{\bv,\by}_b G^{\bx}  \ot A^{\bv,\by}_b A^{\bu,\bx}_a \ket{\psi} \nonumber\\
&\approx_{6\sqrt{\gamma (m+1)}} \E_{\bu,\bv,\bx,\by} \sum_{a,b} \bra{\psi} G^{\bu,\bx}_a  G^{\bv,\by}_b  G^{\bx}  \ot A^{\bu,\bx}_a A^{\bv,\by}_b \ket{\psi}. \label{eq:dunno-what-i-should-call-this}
\end{align}
The first approximation again uses  \Cref{prop:closeness-of-ip} and \Cref{eq:add-an-a}. The second approximation follows from  \Cref{prop:closeness-of-ip} and \Cref{thm:commutativity-points}.
Next, we claim that
\begin{equation}
\eqref{eq:dunno-what-i-should-call-this}
\approx_{\sqrt{\zeta}+6\sqrt{\gamma(m+1)}} \E_{\bu,\bv,\bx,\by} \sum_{a,b} \bra{\psi} G^{\bu, \bx}_a  G^{\bv,\by}_b \ot A^{\bu,\bx}_a  A^{\bv,\by}_b \ket{\psi}.
\label{eq:gcom10}
\end{equation}
This is proved in \Cref{clm:g-comm-stability2} below.
We now apply \Cref{eq:add-an-a} twice with the help of \Cref{prop:closeness-of-ip}.
\begin{align}
\eqref{eq:gcom10}
&= \E_{\bu,\bv,\bx,\by} \sum_{a,b} \bra{\psi} G^{\bx}G^{\bu, \bx}_a  G^{\bv,\by}_b \ot A^{\bu,\bx}_a  A^{\bv,\by}_b \ket{\psi} \tag{because~$G$ is projective}\nonumber\\
&\approx_{2\sqrt{\zeta}} \E_{\bu,\bv,\bx,\by} \sum_{a,b} \bra{\psi}G^{\bu,\bx}_a G^{\bu,\bx}_a  G^{\bv,\by}_b \ot A^{\bv,\by}_b \ket{\psi} \nonumber\\
&= \E_{\bu,\bv,\bx,\by} \sum_{a,b} \bra{\psi} G^{\bu,\bx}_a  G^{\bv,\by}_b \ot A^{\bv,\by}_b \ket{\psi} \tag{because~$G$ is projective}\nonumber\\
&= \E_{\bu,\bv,\bx,\by} \sum_{a,b} \bra{\psi} G^{\bu,\bx}_a  G^{\bv,\by}_b G^{\by} \ot A^{\bv,\by}_b \ket{\psi} \tag{because~$G$ is projective}\nonumber\\
&\approx_{2\sqrt{\zeta}} \E_{\bu,\bv,\bx,\by} \sum_{a,b} \bra{\psi} G^{\bu,\bx}_a  G^{\bv,\by}_b G^{\bv, \by}_b \ot I \ket{\psi}. \label{eq:gonna-cite-this-in-just-a-bit}
\end{align} 
Now, \Cref{item:data-processed-self-consistency} and the fact that~$G$ is projective allows us to apply \Cref{prop:two-notions-of-self-consistency}, which states that~$G$ is $\zeta/2$-strongly self-consistent. Hence, \Cref{prop:two-notions-of-self-consistency-after-evaluation} says that we can ``post-process'' its measurement outcomes:
\begin{equation*}
G^x_{[g(u)=a]} \ot I \approx_{\zeta} I \ot G^x_{[g(u)=a]}.
\end{equation*}
In other words, using our abbreviation,
\begin{equation}\label{eq:new-fact-that-i-derived}
G^{u,x}_a \ot I \approx_{\zeta} I \otimes G^{u,x}_a.
\end{equation}
Applying \Cref{eq:new-fact-that-i-derived} twice with the help of \Cref{prop:closeness-of-ip},
we conclude that
\begin{align*}
\eqref{eq:gonna-cite-this-in-just-a-bit}
&\approx_{\sqrt{\zeta}} \E_{\bu,\bv,\bx,\by} \sum_{a,b} \bra{\psi} G^{\bu,\bx}_a  G^{\bv,\by}_b  \ot G^{\bv, \by}_b \ket{\psi}\\
&\approx_{\sqrt{\zeta}} \E_{\bu,\bv,\bx,\by} \sum_{a,b} \bra{\psi} G^{\bv, \by}_b G^{\bu,\bx}_a  G^{\bv,\by}_b  \ot I \ket{\psi}.
\end{align*}
Hence we have shown that the second term of \Cref{eq:gcom8} is close to the first term.
\ignore{
\begin{align*}
\labelcref{eq:gcom10} &\approx_{\sqrt{2\zeta}} \E_{\bu,\bv,\bx,\by} \sum_{a,b} \bra{\psi}G^{\bu,\bx}_a G^{\bu,\bx}_a  G^{\bv,\by}_b \ot A^{\bv,\by}_b \ket{\psi} \tag{by \Cref{item:data-processed-consistency} and \Cref{prop:simeq-to-approx}} \\
&= \E_{\bu,\bv,\bx,\by} \sum_{a,b} \bra{\psi} G^{\bu,\bx}_a  G^{\bv,\by}_b \ot A^{\bv,\by}_b \ket{\psi} \tag{because~$G$ is projective}\\
&\approx_{\sqrt{\zeta}} \E_{\bu,\bv,\bx,\by} \sum_{a,b} \bra{\psi} G^{\bv,\by}_b G^{\bu,\bx}_a  G^{\bv,\by}_b \ot I \ket{\psi}. \tag{by \Cref{item:data-processed-consistency} and \Cref{prop:simeq-to-approx}}
\end{align*} }

Putting everything together, this shows that \Cref{eq:gcom8} is bounded by
\begin{align*}
&2\cdot\Big(2\sqrt{\zeta} + \sqrt{\zeta} + 2\sqrt{\zeta} +6\sqrt{\gamma(m+1)} + \sqrt{\zeta} + 6\sqrt{\gamma(m+1)} + 2\sqrt{\zeta} + 2\sqrt{\zeta} + \sqrt{\zeta} + \sqrt{\zeta}\Big)\\
=~& 24 \cdot(\sqrt{\gamma(m+1)}+\sqrt{\zeta})\\
\leq~& 24\sqrt{m+1} \cdot(\sqrt{\gamma}+\sqrt{\zeta})\\
\leq~& 48 m \cdot (\sqrt{\gamma} + \sqrt{\zeta}),
\end{align*}
and this completes the proof of the lemma, modulo the proofs of  \Cref{clm:g-comm-stability} and \Cref{clm:g-comm-stability2}.
We now prove these claims.
\ignore{Prior to doing so, we will first require the following technical lemma.

\begin{lemma}
\label{lem:global-variance-of-points2}
For all collections of sub-measurements $\{R^x\} \in \multisub{m}{q}$ indexed by $x \in \F_q$, we have that on average over $\bx \sim \F_q$, and $\bu,\bv \sim \F_q^m$,
\[
	A^{u,x}_{g(u)} \otimes \sqrt{R_g^x} \approx_{24m\cdot(\eps+\delta+\frac{d}{q})} A^{v,x}_{g(v)} \otimes \sqrt{R_g^x} \,.
\]
\end{lemma}
\begin{proof}
	For every $x \in \F_q$, define the \emph{$x$-restricted $(d,m+1,q)$-low degree test} to be the normal $(d,m+1,q)$-low degree test where the points player receives a point $(u,x) \in \F_q^{m+1}$. Let $\eps_x,\delta_x, \gamma_x$ be the minimum values such that $(\psi,A,B, L)$ is $(\eps_x,\delta_x, \gamma_x)$-good for the $x$-restricted $(d,m+1,q)$-low degree test. Note that $\E_{\bx} \eps_{\bx} \leq \eps$, $\E_{\by} \delta_{\by} \leq \delta$, and $\E_{\bz} \gamma_{\bz} \leq \gamma$.
	
	Now define the \emph{$x$-restricted $(d,m,q)$-low degree test} to be the $x$-restricted $(d,m+1,q)$-test where further the lines player receives a line that does not run through the $(m+1)$-st coordinate. Note that $(\psi,A,B,L)$ also yields a strategy for the $x$-restricted $(d,m,q)$-low degree test that is $(\eps_x',\delta_x', \gamma_x')$-good, where $\delta'_x = \delta_x$ and
	\begin{equation}\label{eq:eps-to-eps-prime}
		\eps_x \geq \left(\frac{m}{m+1}\right)\cdot \eps_x'
	\end{equation}
	The factor $1 - 1/(m+1)$ comes from the probability that the line does not run through the $(m+1)$-st coordinate. 
	
	By \Cref{lem:global-variance-of-points}, for every fixed $x \in \F_q$ and $R^x \in \multisub{m}{q}$ and on average over $\bu,\bv \in \F_q^m$, we have
\begin{equation}\label{eq:referencing-a-really-old-equation}
	A^{u,x}_{g(u)} \otimes \sqrt{R_g^x} \approx_{12m \cdot (\eps_y' + \delta_y' + \frac{d}{q})} A^{v,y}_{g(v)} \otimes \sqrt{R_g^x}.
\end{equation}
Averaging over $\bx$, we get
\begin{align*}
	\E_{\bu,\bv,\bx} \sum_g \bra{\psi} \Big ( A^{\bu,\bx}_{g(\bu)} - A^{\bv,\bx}_{g(\bv)} \Big)^2 \ot R^{\bx}_g \ket{\psi}
	&\leq \E_{\bx}\left[12m\cdot\left(\eps_{\bx}' + \delta_{\bx}' + \frac{d}{q}\right)\right] \tag{by \Cref{eq:referencing-a-really-old-equation}}\\
	&\leq 12(m+1) \cdot \left(\eps + \delta + \frac{d}{q}\right).\tag{by \Cref{eq:eps-to-eps-prime}}
\end{align*}
This concludes the proof.
\end{proof}

We now proceed to the proofs of \Cref{clm:g-comm-stability} and \Cref{clm:g-comm-stability2}.
}
\begin{claim}
\label{clm:g-comm-stability}
\begin{align*}
	&\E_{\bu,\bv,\bx,\by} \sum_{a,b} \bra{\psi} G^{\bu,\bx}_a  G^{\bv,\by}_b G^{\bu,\bx}_a G^{\by}  \ot A^{\bv,\by}_b \ket{\psi} \label{eq:g-comm-stab1} \\
	\approx_{\sqrt{\zeta}}& \E_{\bu,\bv,\bx,\by} \sum_{a,b} \bra{\psi} G^{\bu,\bx}_a  G^{\bv,\by}_b G^{\bu,\bx}_a  \ot A^{\bv,\by}_b \ket{\psi}.
\end{align*}
\end{claim}
\begin{proof}
	Recall that $G_b^{v,y} = G^y_{[g(v)=b]}= \sum_{g : g(v) = b} G_g^y$. For all $y \in \F_q$ and $g \in \polyfunc{m}{q}{d}$, define the matrix
	\[
		R^y_g = \E_{\bu,\bx} \sum_a G^{\bu,\bx}_a G^y_g G^{\bu,\bx}_a.
	\]
	Then because~$G$ is a sub-measurement,
	\begin{equation*}
	\sum_{g} R^y_g
		= \E_{\bu,\bx} \sum_a G^{\bu,\bx}_a\cdot \Big(\sum_g G^y_g\Big) \cdot G^{\bu,\bx}_a
		\leq\E_{\bu,\bx} \sum_a G^{\bu,\bx}_a
		\leq I.
	\end{equation*}
	As a result, $R^y$ is a sub-measurement in $\polysub{m}{q}{d}$.
	
	Our goal is to bound the magnitude of
	\begin{align}
	&\E_{\bu,\bv,\bx,\by} \sum_{a,b} \bra{\psi} G^{\bu,\bx}_a  G^{\bv,\by}_b G^{\bu,\bx}_a (I-G^{\by})  \ot A^{\bv,\by}_b \ket{\psi}\nonumber\\
	=&\E_{\bu,\bv,\bx,\by} \sum_{a,b} \sum_{g : g(\bv) = b} \bra{\psi} G^{\bu,\bx}_a  G^{\by}_g G^{\bu,\bx}_a (I-G^{\by})  \ot A^{\bv,\by}_b \ket{\psi}\nonumber\\
	=&\E_{\bu,\bv,\bx,\by} \sum_{a,g}  \bra{\psi} G^{\bu,\bx}_a  G^{\by}_g G^{\bu,\bx}_a (I-G^{\by})  \ot A^{\bv,\by}_{g(\bv)} \ket{\psi}\nonumber\\
	=& \E_{\bv,\by} \sum_{g} \bra{\psi}R^{\by}_g (I-G^{\by}) \otimes A^{\bv, \by}_{g(\bv)}  \ket{\psi}.\label{eq:bound-this-right-now!}
	\end{align}
	We can now bound the magnitude as follows.
	\begin{multline*}
	|\eqref{eq:bound-this-right-now!}|
	= \Big| \E_{\bv,\by} \sum_{g} \bra{\psi}\Big(\sqrt{R^{\by}_g} \otimes I\Big)
		\cdot  \Big(\sqrt{R^{\by}_g} (I-G^{\by}) \otimes A^{\bv, \by}_{g(\bv)}\Big)\ket{\psi}\Big|\\
	\leq \sqrt{\E_{\bv,\by} \sum_{g} \bra{\psi}R^{\by}_g \otimes I\ket{\psi}}
		\cdot \sqrt{\E_{\bv,\by} \sum_{g}\bra{\psi}(I-G^{\by}) R^{\by}_g (I-G^{\by}) \otimes A^{\bv, \by}_{g(\bv)}\ket{\psi}}.
	\end{multline*}
	The term inside the first square root is at most~$1$ because~$R^y$ is a sub-measurement.
	The term inside the second square root is
	\begin{align*}
	&\E_{\bv,\by} \sum_{g}\bra{\psi}(I-G^{\by}) R^{\by}_g (I-G^{\by}) \otimes A^{\bv, \by}_{g(\bv)}\ket{\psi}\\
	=~&\E_{\by} \sum_{g}\bra{\psi}(I-G^{\by}) R^{\by}_g (I-G^{\by}) \otimes \left(\E_{\bv} A^{\bv, \by}_{g(\bv)}\right)\ket{\psi}\\
	\leq~&\E_{\by} \sum_{g}\bra{\psi}(I-G^{\by}) R^{\by}_g (I-G^{\by}) \otimes Z^{\by}\ket{\psi}\tag{by~\Cref{item:data-processed-boundedness}}\\
	\leq~&\E_{\by} \bra{\psi}(I-G^{\by}) \otimes Z^{\by}\ket{\psi}\\
	\leq~&\zeta.\tag{by~\Cref{item:data-processed-boundedness}}
	\end{align*}
	This concludes the proof.
\end{proof}

\begin{claim}
\label{clm:g-comm-stability2}
\begin{align*}
	&\E_{\bu,\bv,\bx,\by} \sum_{a,b} \bra{\psi} G^{\bu,\bx}_a  G^{\bv,\by}_b G^{\bx}  \ot A^{\bu,\bx}_a A^{\bv,\by}_b  \ket{\psi}  \\
	\approx_{\sqrt{\zeta}+6\sqrt{\gamma(m+1)}}& \E_{\bu,\bv,\bx,\by} \sum_{a,b} \bra{\psi} G^{\bu,\bx}_a  G^{\bv,\by}_b \ot A^{\bu,\bx}_a A^{\bv,\by}_b  \ket{\psi}.
\end{align*}
\end{claim}
\begin{proof}
	Our goal is to bound the magnitude of
	\begin{align}
	&\E_{\bu,\bv,\bx,\by} \sum_{a,b} \bra{\psi} G^{\bu,\bx}_a  G^{\bv,\by}_b (I-G^{\bx})  \ot A^{\bu,\bx}_a A^{\bv,\by}_b  \ket{\psi}\nonumber\\
	\approx_{6\sqrt{\gamma(m+1)}} &\E_{\bu,\bv,\bx,\by} \sum_{a,b} \bra{\psi} G^{\bu,\bx}_a  G^{\bv,\by}_b (I-G^{\bx})  \ot  A^{\bv,\by}_b A^{\bu,\bx}_a \ket{\psi}.\label{eq:just-got-commuted}
	\end{align}
	where the second line follows from \Cref{prop:closeness-of-ip} and \Cref{thm:commutativity-points}.
	We now proceed nearly identically to \Cref{clm:g-comm-stability}.
	\begin{align}
	\eqref{eq:just-got-commuted}
	&=\E_{\bu,\bv,\bx,\by} \sum_{a,b}\sum_{g:g(\bu)=a} \bra{\psi} G^{\bx}_g  G^{\bv,\by}_b (I-G^{\bx})  \ot A^{\bv,\by}_b A^{\bu,\bx}_a \ket{\psi} \nonumber\\
	&=\E_{\bu,\bv,\bx,\by} \sum_{g,b} \bra{\psi} G^{\bx}_g  G^{\bv,\by}_b (I-G^{\bx})  \ot A^{\bv,\by}_b A^{\bu,\bx}_{g(\bu)}   \ket{\psi}.\label{eq:g-comm-stab7}
	\end{align}
	Then we can bound the magnitude as follows.
	\begin{multline*}
	|\eqref{eq:g-comm-stab7}|
	= \Big|\E_{\bu,\bv,\bx,\by} \sum_{g,b} \bra{\psi} \Big(\sqrt{G^{\bx}_g}\otimes A^{\bv,\by}_b\Big) \cdot \Big(\sqrt{G^{\bx}_g}  G^{\bv,\by}_b (I-G^{\bx})  \ot  A^{\bu,\bx}_{g(\bu)}\Big)\ket{\psi} \Big|\\
	\leq \sqrt{\E_{\bu,\bv,\bx,\by} \sum_{g,b} \bra{\psi} G^{\bx}_g\otimes A^{\bv,\by}_b\ket{\psi}}
		\cdot \sqrt{\E_{\bu,\bv,\bx,\by} \sum_{g,b} \bra{\psi}(I-G^{\bx}) G^{\bv,\by}_b G^{\bx}_g  G^{\bv,\by}_b (I-G^{\bx})  \ot  A^{\bu,\bx}_{g(\bu)}\ket{\psi}}.
	\end{multline*}
	The term inside the first square root is at most~$1$ because~$G$ and~$A$ are sub-measurements.
	The term inside the second square root is
	\begin{align*}
	&\E_{\bu,\bv,\bx,\by} \sum_{g,b} \bra{\psi}(I-G^{\bx}) G^{\bv,\by}_b G^{\bx}_g  G^{\bv,\by}_b (I-G^{\bx})  \ot  A^{\bu,\bx}_{g(\bu)}\ket{\psi}\\
	=~&\E_{\bv,\bx,\by} \sum_{g,b} \bra{\psi}(I-G^{\bx}) G^{\bv,\by}_b G^{\bx}_g  G^{\bv,\by}_b (I-G^{\bx})  \ot  \left(\E_{\bu} A^{\bu,\bx}_{g(\bu)}\right)\ket{\psi}\\
	\leq~&\E_{\bv,\bx,\by} \sum_{g,b} \bra{\psi}(I-G^{\bx}) G^{\bv,\by}_b G^{\bx}_g  G^{\bv,\by}_b (I-G^{\bx})  \ot  Z^{\bx}\ket{\psi}\tag{by~\Cref{item:data-processed-boundedness}}\\
	\leq~&\E_{\bv,\bx,\by} \sum_{b} \bra{\psi}(I-G^{\bx}) G^{\bv,\by}_b  (I-G^{\bx})  \ot  Z^{\bx}\ket{\psi}\\
	\leq~&\E_{\bv,\bx,\by} \bra{\psi}(I-G^{\bx})  \ot  Z^{\bx}\ket{\psi}\\
	\leq~&\zeta.\tag{by~\Cref{item:data-processed-boundedness}}
	\end{align*}
	This concludes the proof.
	\end{proof}
Having proved \Cref{clm:g-comm-stability} and \Cref{clm:g-comm-stability2}, we conclude the proof of \Cref{lem:comm-data-processed-g}.
\end{proof}

\subsection{Commutativity of~$G$}\label{sec:G-commutes}

\begin{theorem}[Commutativity of~$G$]\label{thm:com-main}
  Let $(\psi, A, B, L)$ be an $(\eps, \delta, \gamma)$-good symmetric strategy for the $(m,q,d)$ low individual degree test.
  Let $\{G^x\}_{x \in \F_q}$ denote a set of projective sub-measurements in $\polysub{m}{q}{d}$ with the following properties:
  \begin{enumerate}
  \item (Consistency with~$A$): \label{item:commuting-consistency} On average over $(\bu, \bx) \sim \F_q^{m+1}$,
	  \begin{equation*}
	  A^{u, x}_a \otimes I \simeq_{\zeta} I \otimes G^x_{[g(u)=a]}.
	  \end{equation*}
  \item (Strong self-consistency): \label{item:commuting-self-consistency} On average over~$\bx \sim \F_q$,
  	\begin{equation*}
		G^x_g \otimes I \approx_{\zeta} I \otimes G^x_g.
	\end{equation*}
  \item (Boundedness): \label{item:commuting-boundedness} There exists a positive-semidefinite matrix $Z^x$ for each $x \in \F_q$ such that
  	\begin{equation*}
		\E_{\bx} \bra{\psi} (I-G^{\bx})\otimes Z^{\bx} \ket{\psi} \leq \zeta
	\end{equation*}
	and for each $x \in \F_q$ and $g \in \polyfunc{m}{q}{d}$,
	\begin{equation*}
	Z^x \geq \left(\E_{\bu} A^{\bu, x}_{g(\bu)}\right).
	\end{equation*}
  \end{enumerate}
  Let
  \begin{equation*}
  \nu = 30m \cdot \left(\gamma^{1/4}+ \zeta^{1/4} + (d/q)^{1/4}  \right).
  \end{equation*}
  Then on average over independent and uniformly random $(\bu,\bx), (\bv,\by) \sim \F_q^{m+1}$,
  \begin{equation*}
    G^{x}_{g} G^{y}_h \ot I \approx_{\nu} G^y_h G^x_g \ot I
  \end{equation*}
\end{theorem}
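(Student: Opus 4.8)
The plan is to deduce the full commutativity statement, $G^x_g G^y_h \ot I \approx_\nu G^y_h G^x_g \ot I$ on average over $(\bu,\bx),(\bv,\by) \sim \F_q^{m+1}$, from the ``commutativity after evaluation'' result \Cref{lem:comm-data-processed-g}, which already gives $G^x_{[g(u)=a]} G^y_{[h(v)=b]} \ot I \approx_{\nu'} G^y_{[h(v)=b]} G^x_{[g(u)=a]} \ot I$ with $\nu' = 48m(\gamma^{1/2}+\zeta^{1/2})$. The conceptual point is that a polynomial $g \in \polyfunc{m}{q}{d}$ of individual degree $d$ is determined by its evaluations on a random point with high probability --- more precisely, two distinct such polynomials agree at a uniformly random point with probability at most $md/q$ by Schwartz--Zippel. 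So the ``coarse-grained'' projectors $G^x_{[g(\bu)=a]}$, for random $\bu$, are a good proxy for the individual $G^x_g$'s, and approximate commutativity of the former should transfer to the latter at the cost of a few extra error terms.

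The concrete route I would take is to relate $G^x_g$ to an average of coarse-grained projectors. Fix $x$ and $g$. Observe that for a uniformly random $\bu \sim \F_q^m$, $G^x_{[g(\bu) = g(\bu)]} \geq G^x_g$, and in fact $\E_{\bu} G^x_{[g(\bu)=g(\bu)]}$ is close to $G^x_g$ in the appropriate state-dependent sense: writing $G^x_{[g(\bu)=g(\bu)]} = G^x_g + \sum_{h \neq g, h(\bu)=g(\bu)} G^x_h$, the cross terms contribute, after squaring and using projectivity and that $G^x$ is a sub-measurement, an amount controlled by $\E_{\bu}\sum_{h \neq g}\bone[h(\bu)=g(\bu)] \leq md/q$ by Schwartz--Zippel. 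Thus, in the $\approx_\delta$ metric against $\ket\psi$, $G^x_g \otimes I \approx_{O(md/q)} \E_{\bu} G^x_{[g(\bu)=g(\bu)]} \otimes I$, or more usefully one works with the two-variable inner-product expressions directly. The plan is then to expand $\E_{(\bu,\bx),(\bv,\by)} \sum_{g,h} \| (G^{\bx}_g G^{\by}_h - G^{\by}_h G^{\bx}_g) \ot I \ket\psi \|^2$, insert a fresh pair of random points $\bw, \bz$, and rewrite each $G^{\bx}_g$ (resp. $G^{\by}_h$) sandwiched appropriately as $G^{\bx}_g = G^{\bx}_g G^{\bx}_{[g(\bw)=g(\bw)]}$ (this is an exact identity since the projector on the right dominates $G^{\bx}_g$), then replace $G^{\bx}_{[g(\bw)=g(\bw)]}$ with $\sum_a G^{\bx}_{[g'(\bw)=a]}$ summed over the outcome $a$, so that the commutator expression over $g,h$ becomes, up to Schwartz--Zippel errors, a commutator of the coarse-grained projectors to which \Cref{lem:comm-data-processed-g} applies. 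Applying \Cref{prop:closeness-of-ip} and the Cauchy--Schwarz machinery of \Cref{sec:comparing-measurements} throughout lets each substitution cost $O(\sqrt{md/q})$ or $O(\sqrt{\nu'})$, and collecting terms yields the bound $\nu = 30m(\gamma^{1/4}+\zeta^{1/4}+(d/q)^{1/4})$, where the fourth roots arise because $\nu'$ itself already carries square roots of $\gamma,\zeta$ and we take a further square root when passing through Cauchy--Schwarz.

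More precisely, I expect the cleanest bookkeeping to go as follows. Start from $2\E\sum_{g,h}\big(\bra\psi G^{\by}_h G^{\bx}_g G^{\by}_h \ot I\ket\psi - \bra\psi G^{\bx}_g G^{\by}_h G^{\bx}_g G^{\by}_h \ot I \ket\psi\big)$ (using projectivity of $G$, exactly as in the proof of \Cref{lem:comm-data-processed-g}). In each of these terms, use self-consistency (\Cref{item:commuting-self-consistency}, via \Cref{prop:two-notions-of-self-consistency} and \Cref{prop:cons-sub-meas}) to move one copy of $G^{\bx}_g$ or $G^{\by}_h$ across the tensor factor and attach the point-measurement $A$, then use consistency with $A$ (\Cref{item:commuting-consistency}) to further relate $G^{\bx}_g$ to $G^{\bx}_{[g(\bu)=a]}$ --- but this last step is subtle because $G^{\bx}_g$ is finer than $G^{\bx}_{[g(\bu)=a]}$, so I would instead keep $g$ as the measurement outcome and only coarse-grain the \emph{other} copy. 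The main obstacle, and where I expect the real work to lie, is carefully organizing this chain of substitutions so that at the point where \Cref{lem:comm-data-processed-g} is invoked the expression is genuinely a commutator of coarse-grained $[g(u)=a]$-projectors (not of the fine $G_g$'s), while ensuring every intermediate operator satisfies the operator-norm/sub-measurement hypotheses needed for \Cref{prop:closeness-of-ip}; the Schwartz--Zippel steps that ``forget'' the distinction between $g$ and its evaluations must be placed correctly so their $md/q$ cost is only incurred a bounded number of times. Once the structure is right, the final error arithmetic is routine: sum the $O(1)$ substitution errors, each $\lesssim \sqrt{\nu'} = O(\sqrt{m}(\gamma^{1/4}+\zeta^{1/4}))$ or $\sqrt{md/q}$, and bound the total by $30m(\gamma^{1/4}+\zeta^{1/4}+(d/q)^{1/4})$ using $m+1 \leq 2m$.
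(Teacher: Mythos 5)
Your high-level plan matches the paper's: expand $\E\sum_{g,h}\|(G^x_g G^y_h - G^y_h G^x_g)\ot I\ket\psi\|^2$ into the two cross-terms, transform each to look like a commutator of coarse-grained $G^x_{[g(u)=a]}$'s via Schwartz--Zippel, invoke \Cref{lem:comm-data-processed-g}, and unwind. The key idea you identify---that distinct polynomials of individual degree $d$ collide at a random point with probability $\le md/q$, so fine $G_g$'s can be traded for coarse $G_{[g(\bu)=a]}$'s---is exactly what the paper uses.

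However, the precise organizational move that makes this work cleanly is one you haven't quite pinned down, and your proposed detour through $A$ is unnecessary and makes the bookkeeping harder, not easier. The paper never invokes \Cref{item:commuting-consistency} directly in this proof; that hypothesis only enters via \Cref{lem:comm-data-processed-g}. The actual trick is this: first use strong self-consistency (\Cref{item:commuting-self-consistency} together with \Cref{prop:closeness-of-ip}, with the normalization condition supplied by \Cref{lem:normalization-condition}) to move one $G^{\bx}_g$ across the tensor factor, arriving at an expression of the form $\E\sum_{g,h}\bra\psi G^{\by}_h G^{\bx}_g G^{\by}_h \ot G^{\bx}_g\ket\psi$ in which the \emph{same} polynomial index $g$ appears on \emph{both} tensor factors. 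Only once this paired structure is in place does Schwartz--Zippel give a clean substitution: replacing $\sum_g(\cdot)_g\ot(\cdot)_g$ by $\E_\bu\sum_a(\cdot)_{[g(\bu)=a]}\ot(\cdot)_{[g(\bu)=a]}$ costs at most $\E_\bu\sum_{g\ne g',h}\bone[g(\bu)=g'(\bu)]\bra\psi G^{\by}_h G^{\bx}_g G^{\by}_h\ot G^{\bx}_{g'}\ket\psi\le md/q$. This is a genuine additive error, not a Cauchy--Schwarz step, so it contributes $md/q$ (not $\sqrt{md/q}$). You then repeat the self-consistency-then-Schwartz--Zippel pattern for $h$, apply \Cref{lem:comm-data-processed-g}, and use self-consistency and projectivity to collapse everything back to $\bra\psi G\ot G\ket\psi$.

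Your proposed identity $G^{\bx}_g = G^{\bx}_g G^{\bx}_{[g(\bw)=g(\bw)]}$ is correct but does not lead anywhere on its own, because the projector $G^{\bx}_{[g(\bw)=g(\bw)]}$ still carries the index $g$ in its subscript and so doesn't decouple into the form \Cref{lem:comm-data-processed-g} needs; and ``replace $G^{\bx}_{[g(\bw)=g(\bw)]}$ with $\sum_a G^{\bx}_{[g'(\bw)=a]}$'' is not a well-defined substitution (the right-hand side is just $G^{\bx}$). The issue is precisely the one you flag as the ``main obstacle'': without first pairing the $g$-index across the tensor, there is no clean place to apply the Schwartz--Zippel swap. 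Worth also noting: the paper's first term $\E\sum_{g,h}\bra\psi G^{\by}_h G^{\bx}_g G^{\by}_h\ot I\ket\psi$ is handled separately and much more easily, via \Cref{prop:switch-sandwich} alone; it does not require any coarse-graining.

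So: right key idea, right target lemma, but the chain of operations you sketch is not yet a proof. The fix is to drop the excursion through $A$, and instead always lead with a self-consistency step that pairs the summation index across the tensor product before you try to coarse-grain.
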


\begin{proof}
We note that the bound we are proving is trivial when at least one of $\gamma$, $\zeta$, or $d/q$ is $\geq 1$.
In this case, $\nu \geq 30$.
On the other hand,
\begin{align*}
&\E_{\bx,\by} \sum_{g, h} \Vert(G^{\bx}_g G^{\by}_h \ot I - G^{\by}_h G^{\bx}_g \ot I) \ket{\psi}\Vert^2\\
=~&\E_{\bx,\by} \sum_{g, h} \Vert(G^{\bx}_g G^{\by}_h \ot I) \ket{\psi} + (- G^{\by}_h G^{\bx}_g \ot I) \ket{\psi}\Vert^2\\
\leq~&\E_{\bx,\by} \sum_{g, h}
	2\cdot \Big(\Vert(G^{\bx}_g G^{\by}_h \ot I) \ket{\psi} \Vert^2 + \Vert(G^{\by}_h G^{\bx}_g \ot I) \ket{\psi}\Vert^2\Big) \tag{by \Cref{prop:triangle-inequality-for-vectors-squared}}\\
=~&\E_{\bx,\by} \sum_{g, h}
	4\cdot \Vert(G^{\bx}_g G^{\by}_h \ot I) \ket{\psi} \Vert^2 \tag{by symmetry of the terms}\\
=~&4 \cdot \E_{\bx, \by}\sum_{g, y} \bra{\psi} (G^{\by}_h G^{\bx}_g G^{\by}_h \ot I) \ket{\psi}\\
\leq~&4, \tag{because~$G$ is a sub-measurement}
\end{align*}
which is therefore less than~$\nu$.
Hence, we may assume that $\gamma, \zeta, d/q \leq 1$.
This will aid us when carrying out the error calculations near the end of the proof,
as it allows us to bound terms like $\zeta^{1/2}$ by terms like $\zeta^{1/4}$.

Our goal is to bound
\begin{align}
&\E_{\bx, \by} \sum_{g, h} \bra{\psi} (G^{\bx}_{g} G^{\by}_h -
    G^{\by}_h G^{\bx}_g)^\dagger \cdot(G^{\bx}_{g} G^{\by}_h - G^{\by}_h
                   G^{\bx}_g) \ot I \ket{\psi}\nonumber\\
=~&\E_{\bx, \by} \sum_{g, h} \bra{\psi} ( G^{\by}_h G^{\bx}_{g}-
    G^{\bx}_g G^{\by}_h )\cdot (G^{\bx}_{g} G^{\by}_h - G^{\by}_h
                   G^{\bx}_g) \ot I \ket{\psi}\nonumber\\
=~&2\cdot\E_{\bx, \by} \sum_{g,h} \bra{\psi} G^\bx_g G^{\by}_h
      G^{\bx}_g \ot I \ket{\psi}
- 2\cdot \E_{\bx, \by} \sum_{g,h}
      \bra{\psi} G^{\bx}_g G^{\by}_h G^{\bx}_g G^{\by}_h \ot I
      \ket{\psi}.\label{eq:gcomterms}
\end{align}
  We will show that both terms in \Cref{eq:gcomterms} are close to $\bra{\psi} G \otimes G \ket{\psi}$, where we recall that $G = \E_\bx G^\bx$.
  With the errors we derive, this will imply \Cref{thm:com-main}.
  
  For the first term in \Cref{eq:gcomterms}, we have
  \begin{align*}
   \E_{\bx, \by} \sum_{g,h} \bra{\psi} G^\bx_g G^{\by}_h
      G^{\bx}_g \ot I \ket{\psi}&= \E_{\bx} \sum_{g}
                                \bra{\psi}( G^{\bx}_g\cdot G\cdot G^{\bx}_g) \ot I \ket{\psi}
    \\
    &\approx_{2\sqrt{\zeta}} \E_{\bx}\sum_g \bra{\psi}
      G \ot G^{\bx}_g \ket{\psi} \\
     &= \bra{\psi} G \ot G \ket{\psi}, 
  \end{align*}
  where the approximation is by \Cref{prop:switch-sandwich} and \Cref{item:commuting-self-consistency}.

  For the second term in \Cref{eq:gcomterms}, we begin with the following lemma.
  \begin{lemma}\label{lem:normalization-condition}
  Let $P = \{P_a\}$ be a sub-measurement and $Q = \{Q_b\}$ be a projective sub-measurement.
  Define $C_{a, b} = Q_b \cdot P_a \cdot Q_b$. 
  Then
  $
  \sum_a (\sum_b C_{a, b})^\dagger (\sum_b C_{a, b}) = \sum_a (\sum_b C_{a, b}) (\sum_b C_{a, b})^\dagger \leq I.
  $
  \end{lemma}
  \begin{proof}
  The first equality follows from the fact that $C_{a, b}$ is Hermitian.
  As for the second equality,
  \begin{align*}
  \sum_a \Big(\sum_b C_{a, b}\Big) \Big(\sum_b C_{a, b}\Big)^\dagger
  &= \sum_a \sum_{b, b'} C_{a, b} \cdot C_{a, b'}\tag{$C_{a, b'}$ is Hermitian}\\
  &= \sum_a \sum_{b, b'} (Q_b \cdot P_a \cdot Q_b) \cdot (Q_{b'} \cdot P_a \cdot Q_{b'})\\
  &= \sum_a \sum_{b} Q_b \cdot P_a \cdot Q_b \cdot P_a \cdot Q_{b}\tag{because $Q$ is projective}\\
  &\leq \sum_a \sum_{b} Q_b \cdot P_a  \cdot Q_{b}\tag{because $Q_b \leq I$}\\
  &\leq I,
  \end{align*}
  because~$P$ and~$Q$ are sub-measurements.
  \end{proof}

  First, we show that
  \begin{equation}
  	\E_{\bx,\by} \sum_{g,h} \bra{\psi} G^{\bx}_g G^{\by}_h G^{\bx}_g G^{\by}_h \ot I \ket{\psi} \approx_{\sqrt{\zeta}} \E_{\bx,\by} \sum_{g,h} \bra{\psi} G^{\by}_h G^{\bx}_g G^{\by}_h  \ot G^{\bx}_g \ket{\psi}
	\label{eq:gcom4}
\end{equation}
This follows from \Cref{prop:closeness-of-ip} where we let ``$A^x_a$'' and ``$B^x_a$'' in the Proposition be $G^x_g \ot I$ and $I \ot G^x_g$, respectively, and let ``$C^x_{a,b}$'' denote $G^y_h G^x_g G^y_h$. The closeness between ``$A^x_a$'' and ``$B^x_a$'' follows from \Cref{item:commuting-self-consistency}, and the normalization condition on ``$C^x_{a,b}$'' follows from the projectivity of the~$\{G^y_h\}$ measurements and \Cref{lem:normalization-condition}. 
Next, we show that
\begin{align}\label{eq:evaluate-gcom-at-points}
\eqref{eq:gcom4}
= \E_{\bx,\by} \sum_{g,h} \bra{\psi} G^{\by}_h G^{\bx}_g G^{\by}_h  \ot G^{\bx}_g \ket{\psi}
\approx_{\frac{dm}{q}} \E_{\bu,\bx,\by} \sum_{a,h} \bra{\psi} G^{\by}_h G^{\bx}_{[g(\bu) = a]} G^{\by}_h  \ot G^{\bx}_{[g(\bu) = a]} \ket{\psi}.
\end{align}
To do so, we first compute the difference as
\begin{equation}\label{eq:gcom4-diff}
\E_{\bu, \bx, \by} \sum_{g \neq g', h} \bone[g(\bu) = g'(\bu)] \cdot \bra{\psi} G^{\by}_h G^{\bx}_g G^{\by}_h  \ot G^{\bx}_{g'} \ket{\psi}.
\end{equation}
This is nonnegative and real, so it suffices to upper bound it, which we do as follows.
\begin{align*}
\eqref{eq:gcom4-diff}
& = \E_{\bx, \by} \sum_{ g \neq g',h}   \bra{\psi} G^{\by}_h G^{\bx}_g G^{\by}_h  \ot G^{\bx}_{g'} \ket{\psi} \cdot \E_{\bu} \bone[g(\bu) = g'(\bu)]\\
& \leq \E_{\bx, \by} \sum_{g \neq g', h}   \bra{\psi} G^{\by}_h G^{\bx}_g G^{\by}_h  \ot G^{\bx}_{g'} \ket{\psi} \cdot \frac{dm}{q} \tag{by Schwartz-Zippel}\\
& \leq \frac{dm}{q}. \tag{because~$G$ is a sub-measurement}
\end{align*}
Next, we show that
\begin{align}
\eqref{eq:evaluate-gcom-at-points}
&= \E_{\bu,\bx,\by} \sum_{a,h} \bra{\psi} G^{\by}_h G^{\bx}_{[g(\bu) = a]} G^{\by}_h  \ot G^{\bx}_{[g(\bu) = a]} \ket{\psi}\tag{\Cref{eq:evaluate-gcom-at-points} rewriten}\\
&\approx_{\sqrt{\zeta}}\E_{\bu,\bx,\by} \sum_{a,h} \bra{\psi} G^{\bx}_{[g(\bu) = a]} G^{\by}_h G^{\bx}_{[g(\bu) = a]} G^{\by}_h  \ot I \ket{\psi}\nonumber\\
&\approx_{\sqrt{\zeta}} \E_{\bu,\bx,\by} \sum_{a,h} \bra{\psi} G^{\bx}_{[g(\bu) = a]} G^{\by}_h G^{\bx}_{[g(\bu) = a]} \ot G^{\by}_h \ket{\psi}.\label{eq:don't-understand-the-numbering-system}
\end{align}
These two approximations are derived as follows.
\begin{enumerate}
\item
In the first approximation we used \Cref{prop:closeness-of-ip} where we let ``$A^x_a$'' and ``$B^x_a$'' in the Proposition be $G^x_{[g(u)=a]} \ot I$ and $I \ot G^x_{[g(u) = a]}$, respectively, and let ``$C^x_{a,b}$'' denote $G^y_h G^x_{[g(u)=a]} G^y_h$. The closeness between ``$A^x_a$'' and ``$B^x_a$'' follows from \Cref{item:commuting-self-consistency}, and the normalization condition on ``$C^x_{a,b}$'' follows from the projectivity of the~$\{G^y_h\}$ measurements and \Cref{lem:normalization-condition}.
\item
In the second approximation we used \Cref{prop:closeness-of-ip} where we let ``$A^x_a$'' and ``$B^x_a$'' in the Proposition be $G^y_h \ot I$ and $I \ot G^y_h$, respectively, and let ``$C^x_{a,b}$'' denote $G^x_{[g(u)=a]} G^y_h G^x_{[g(u)=a]}$. The closeness between ``$A^x_a$'' and ``$B^x_a$'' follows from \Cref{item:commuting-self-consistency}, and the normalization condition on ``$C^x_{a,b}$'' follows from the projectivity of the~$\{G^x_{[g(u)=a]}\}$ measurements and \Cref{lem:normalization-condition}.
\end{enumerate}
Analogously to the derivation of \Cref{eq:evaluate-gcom-at-points}, we now show that
\begin{align}
\eqref{eq:don't-understand-the-numbering-system}
& = \E_{\bu,\bx,\by} \sum_{a,h} \bra{\psi} G^{\bx}_{[g(\bu) = a]} G^{\by}_h G^{\bx}_{[g(\bu) = a]} \ot G^{\by}_h \ket{\psi} \tag{\Cref{eq:don't-understand-the-numbering-system} restated}\\
& \approx_{\frac{dm}{q}} \E_{\bu,\bv,\bx,\by} \sum_{a,b} \bra{\psi} G^{\bx}_{[g(\bu) = a]} G^{\by}_{[h(\bv)=b]} G^{\bx}_{[g(\bu) = a]} \ot G^{\by}_{[h(\bv)=b]} \ket{\psi}.\label{eq:evaluate-gcom-at-points-part-dos}
\end{align}
To do so, we first compute the difference as
\begin{equation}\label{eq:eq:don't-understand-the-numbering-system-diff}
\E_{\bu,\bv, \bx,\by} \sum_{a,h\neq h'} \bone[h(\bv) = h'(\bv)] \cdot \bra{\psi} G^{\bx}_{[g(\bu) = a]} G^{\by}_h G^{\bx}_{[g(\bu) = a]} \ot G^{\by}_{h'} \ket{\psi}.
\end{equation}
This is nonnegative and real, so it suffices to upper bound it, which we do as follows.
\begin{align*}
\eqref{eq:eq:don't-understand-the-numbering-system-diff}
& = \E_{\bu, \bx,\by} \sum_{a,h\neq h'} \bra{\psi} G^{\bx}_{[g(\bu) = a]} G^{\by}_h G^{\bx}_{[g(\bu) = a]} \ot G^{\by}_{h'} \ket{\psi} \cdot \E_{\bv} \bone[h(\bv) = h'(\bv)]\\
& \leq \E_{\bu, \bx,\by} \sum_{a,h\neq h'} \bra{\psi} G^{\bx}_{[g(\bu) = a]} G^{\by}_h G^{\bx}_{[g(\bu) = a]} \ot G^{\by}_{h'} \ket{\psi} \cdot \frac{dm}{q} \tag{by Schwartz-Zippel}\\
& \leq \frac{dm}{q}. \tag{because~$G$ is a sub-measurement}
\end{align*}
Now, we set
\begin{equation*}
\nu_{\mathrm{evaluation}} = 48 m \cdot (\gamma^{1/2} + \zeta^{1/2})
\end{equation*}
to be the approximation error from \Cref{lem:comm-data-processed-g}.
We conclude by showing that
\begin{align*}
	\eqref{eq:evaluate-gcom-at-points-part-dos} &= \E_{\bu,\bv,\bx,\by} \sum_{a,b} \bra{\psi} G^{\bx}_{[g(\bu) = a]} G^{\by}_{[h(\bv) = b]} G^{\bx}_{[g(\bu) = a]} \ot G^{\by}_{[h(\bv) = b]} \ket{\psi} \tag{\Cref{eq:evaluate-gcom-at-points-part-dos} restated}\\
	&\approx_{\sqrt{\nu_{\mathrm{evaluation}}}} \E_{\bu,\bv,\bx,\by} \sum_{a,b} \bra{\psi} G^{\bx}_{[g(\bu) = a]} G^{\bx}_{[g(\bu) = a]} G^{\by}_{[h(\bv) = b]}  \ot G^{\by}_{[h(\bv) = b]} \ket{\psi} \\
	&= \E_{\bu,\bv,\bx,\by} \sum_{a,b} \bra{\psi} G^{\bx}_{[g(\bu) = a]} G^{\by}_{[h(\bv) = b]}  \ot G^{\by}_{[h(\bv) = b]} \ket{\psi} \\
	&\approx_{\sqrt{\zeta}} \E_{\bu,\bv,\bx,\by} \sum_{a,b} \bra{\psi} G^{\bx}_{[g(\bu) = a]} \ot G^{\by}_{[h(\bv) = b]} G^{\by}_{[h(\bv) = b]} \ket{\psi} \\
	&= \E_{\bu,\bv,\bx,\by} \sum_{a,b} \bra{\psi} G^{\bx}_{[g(\bu) = a]} \ot G^{\by}_{[h(\bv) = b]} \ket{\psi} \\
	&= \bra{\psi} G \ot G \ket{\psi}.
\end{align*}
The third and fifth lines follow from the projectivity of the $G$ measurements.
The two approximations are derived as follows.
\begin{enumerate}
\item
In the first approximation we used \Cref{prop:closeness-of-ip} where we let ``$A^x_a$'' and ``$B^x_a$'' in the Proposition be $G^x_{[g(u)=a]} G^y_{[h(v)=b]} \ot I$ and $G^y_{[h(v)=b]}  G^x_{[g(u) = a]} \ot I$, respectively, and let ``$C^x_{a,b}$'' denote $G^x_{[g(u)=a]} \ot G^y_{[h(v)=b]}$. The closeness between ``$A^x_a$'' and ``$B^x_a$'' follows from \Cref{lem:comm-data-processed-g}, and the normalization condition on ``$C^x_{a,b}$'' follows from it being a sub-measurement.
\item
In the second approximation we used \Cref{prop:closeness-of-ip} where we let ``$A^x_a$'' and ``$B^x_a$'' in the Proposition be $G^y_{[h(v)=b]} \ot I$ and $I \ot G^y_{[h(v)=b]}$, respectively, and let ``$C^x_{a,b}$'' denote $G^x_{[g(u)=a]} \ot G^y_{[h(v)=b]}$. The closeness between ``$A^x_a$'' and ``$B^x_a$'' follows from \Cref{item:commuting-self-consistency}, and the normalization condition on ``$C^x_{a,b}$'' follows from it being a sub-measurement.
\end{enumerate}
This shows that the second term in \Cref{eq:gcomterms} is approximately $\bra{\psi} G \ot G \ket{\psi}$, as desired.
In total, we have incurred an error of
\begin{align*}
&2\cdot\left(2\sqrt{\zeta} +\sqrt{\zeta} + \frac{dm}{q} + \sqrt{\zeta} + \sqrt{\zeta} + \frac{dm}{q} + \sqrt{\nu_{\mathrm{evaluation}}} + \sqrt{\zeta} \right)\\
=~&12 \zeta^{1/2} + 2 m \cdot (d/q) + 2 \cdot \sqrt{48 m \cdot (\gamma^{1/2} + \zeta^{1/2})}\\
\leq~&12 \zeta^{1/2} + 2 m \cdot (d/q) + 14m \cdot \left(\gamma^{1/4} + \zeta^{1/4}\right)\\
\leq~& 12 \zeta^{1/4} + 2m \cdot (d/q)^{1/4} + 14m \cdot \left(\gamma^{1/4} + \zeta^{1/4}\right) \\
\leq~& 30m \cdot \left(\gamma^{1/4}+ \zeta^{1/4} + (d/q)^{1/4}  \right).
\end{align*}
  This concludes the proof of \Cref{thm:com-main}.
\end{proof}


\newcommand{\rmcom}{\mathrm{commute}}
\newcommand{\rmcons}{\mathrm{consistency}}
\newcommand{\rmcomp}{\mathrm{completeness}}
\newcommand{\distinct}[1]{\mathsf{Distinct}_{#1}}
\newcommand{\outc}{\mathsf{Outcomes}}
\newcommand{\sfO}{\mathsf{O}}
\newcommand{\wH}{\widehat{H}}
\newcommand{\wG}{\widehat{G}}

\section{Pasting}\label{sec:ld-pasting}

\begin{theorem}[Pasting]\label{thm:ld-pasting}
  Let $(\psi, A, B, L)$ be an $(\eps, \delta, \gamma)$-good symmetric strategy for the $(m+1,q,d)$ low individual degree test.
  Let $\{G^x\}_{x \in \F_q}$ denote a set of projective sub-measurements in $\polysub{m}{q}{d}$ with the following properties:
  \begin{enumerate}
  \item (Completeness): \label{item:ld-pasting-completeness}  If $G = \E_{\bx}\sum_g G_g^{\bx}$, then
  	\begin{equation*}
	\bra{\psi} G \otimes I \ket{\psi} \geq 1 - \kappa.
	\end{equation*}
  \item (Consistency with~$A$): \label{item:ld-pasting-consistency} On average over $(\bu, \bx) \sim \F_q^{m+1}$,
	  \begin{equation*}
	  A^{u, x}_a \otimes I \simeq_{\zeta} I \otimes G^x_{[g(u)=a]}.
	  \end{equation*}
  \item (Strong self-consistency): \label{item:ld-pasting-self-consistency} On average over~$\bx \sim \F_q$,
  	\begin{equation*}
		G^x_g \otimes I \approx_{\zeta} I \otimes G^x_g.
	\end{equation*}
  \item (Boundedness): \label{item:ld-pasting-boundedness} There exists a positive-semidefinite matrix $Z^x$ for each $x \in \F_q$ such that
  	\begin{equation*}
		\E_{\bx} \bra{\psi} (I-G^{\bx})\otimes Z^{\bx} \ket{\psi} \leq \zeta
	\end{equation*}
	and for each $x \in \F_q$ and $g \in \polyfunc{m}{q}{d}$,
	\begin{equation*}
	Z^x \geq \left(\E_{\bu} A^{\bu, x}_{g(\bu)}\right).
	\end{equation*}
  \end{enumerate}
  Let $k \geq 400md$ be an integer. Let
 \begin{align*}
 \nu &= 100 k^2m \cdot \left(\eps^{1/32} + \delta^{1/32} + \gamma^{1/32} + \zeta^{1/32} + (d/q)^{1/32}\right),\\
 \sigma & = \kappa \cdot \left(1 + \frac{1}{100m}\right)
    + 2\nu + e^{- k/(80000m^2)}.
 \end{align*}
  Then there exists a ``pasted" measurement $H \in \polymeas{m+1}{q}{d}$ which satisfies the following property.
  \begin{enumerate}
  \item (Consistency with~$A$): \label{item:ld-pasting-N-consistency} On average over $\bu \sim \F_q^{m+1}$,
	  \begin{equation*}
	  A^{u}_a \otimes I \simeq_{\sigma} I \otimes H_{[h(u)=a]}.
	  \end{equation*}
  \end{enumerate}
\end{theorem}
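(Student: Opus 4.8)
The plan is to implement the quantum version of the classical polynomial interpolation argument sketched in the technical overview. Fix an integer $k$ with $md \le k \le q$ (using $k \ge 400md$ guarantees enough room). For each $(m{+}1)$-dimensional point we want to define a measurement outcome by: sample $k$ distinct coordinates $x_1,\dots,x_k \in \F_q$, sequentially measure the projective sub-measurements $G^{x_1},\dots,G^{x_k}$ to obtain polynomials $g_1,\dots,g_k \in \polyfunc{m}{q}{d}$, and then let $h$ be the unique individual-degree-$d$ polynomial on $\F_q^{m+1}$ that restricts to $g_j$ on the hyperplane $x_{m+1}=x_j$ for each $j$ — provided such an $h$ exists (i.e. the interpolation is consistent) and provided enough of the $G^{x_j}$ actually returned an outcome. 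When interpolation fails, assign a default or uniformly random outcome, so that $H$ is a genuine measurement in $\polymeas{m+1}{q}{d}$. The key structural inputs are: (i) commutativity of the $G^x$ across distinct coordinates, so the sequential measurement is essentially order-independent — this is exactly Theorem~\ref{thm:com-main}, applied with the hypotheses \ref{item:ld-pasting-consistency}--\ref{item:ld-pasting-boundedness}; (ii) strong self-consistency of each $G^x$, so measuring on one side is like measuring on the other; and (iii) Schwartz–Zippel, so that once $k \ge md+1$ the interpolated $h$ is determined by $d+1$ of the slices and the remaining slices serve only as consistency checks.

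The steps, in order: First I would invoke Theorem~\ref{thm:com-main} to get $G^x_g G^y_h \otimes I \approx_{\nu_1} G^y_h G^x_g \otimes I$ on average over independent $(\bu,\bx),(\bv,\by)$, where $\nu_1 = 30m(\gamma^{1/4}+\zeta^{1/4}+(d/q)^{1/4})$; a hybrid/telescoping argument (as in \Cref{prop:triangle-inequality-for-approx_delta} and the Cauchy–Schwarz machinery of \Cref{prop:closeness-of-ip}) then shows that reordering a length-$k$ product of $G^{x_j}$'s costs at most $\mathrm{poly}(k)\cdot\nu_1$. Second, using completeness \ref{item:ld-pasting-completeness}, $\bra{\psi}G\otimes I\ket{\psi}\ge 1-\kappa$, together with a union bound over the $k$ sequential measurements, I would argue that with probability $\ge 1 - k\kappa - (\text{error})$ all the $G^{x_j}$ return an outcome; this is where the $\kappa(1+\tfrac{1}{100m})$ term in $\sigma$ comes from (the interpolation is done over a random subset of coordinates, so the effective incompleteness is amortized and only increases by a $(1+\tfrac{1}{100m})$ factor rather than a factor of $k$ — this amortization is the analogue of the BFL expansion argument and will use the random choice of the $k$ coordinates). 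Third, I would show that the interpolated $h$ is consistent with the points measurement: conditioned on all slices returning outcomes and the $G^{x_j}$'s being mutually consistent after evaluation (Lemma~\ref{lem:comm-data-processed-g}), each $g_j$ agrees with $A^{u,x_j}$ at a random point, the point measurements approximately commute (Theorem~\ref{thm:commutativity-points}), and hence $h|_{x_{m+1}=x_j}=g_j$ forces $h(u,x_{m+1})=A^{u,x_{m+1}}$ at a random $(u,x_{m+1})$ by Schwartz–Zippel applied to the univariate restriction in the last coordinate — exactly the "two points determine a line" step of the zero-error sketch, now with $d+1$ interpolation points and error accounting. Fourth, I would bound the probability that the interpolation is \emph{inconsistent} (i.e. the $k$ slices do not come from a single global degree-$d$ polynomial): this is where the term $e^{-k/(80000m^2)}$ enters — if the slices were "mostly good but occasionally corrupted", a Chernoff-type bound over the $k$ independently-sampled coordinates says that the event of $\ge md+1$ slices simultaneously agreeing with a \emph{wrong} global polynomial is exponentially unlikely in $k$. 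Finally I would collect all the error terms ($\mathrm{poly}(k)\nu_1$ from reordering, $2\nu$ absorbing the per-slice consistency and commutativity errors plus the $(d/q)$ Schwartz–Zippel losses, $\kappa(1+\tfrac1{100m})$ from amortized incompleteness, and $e^{-k/(80000m^2)}$ from the Chernoff bound) and verify they sum to at most $\sigma$, using $\gamma,\zeta,\eps,\delta,d/q\le 1$ to collapse exponents as elsewhere in the paper.

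The main obstacle I expect is Step~3 combined with the bookkeeping in Step~4: making precise the claim that a length-$k$ sequence of sequential $G^{x_j}$ measurements, after the reordering from Theorem~\ref{thm:com-main}, behaves like a single ``joint'' measurement whose outcome is a tuple $(g_1,\dots,g_k)$, and then showing that the sub-measurement ``$h$ is the common interpolant'' inherits consistency with $A$. The difficulty is that the $G^{x_j}$ are only approximately commuting and only approximately self-consistent, so a naive induction on $j$ accumulates error linearly in $k$; controlling this requires carefully choosing which approximations to apply on which tensor factor (the $\otimes I$ vs.\ $I \otimes$ asymmetry) and using the boundedness matrices $Z^x$ to transfer incompleteness of later measurements to earlier ones without blowing up the completeness loss — essentially the same care that made Lemmas~\ref{lem:comm-data-processed-g} and \ref{thm:com-main} work. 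A secondary subtlety is that the interpolation operation is only defined on the event that the slices are mutually consistent, so $H$ must be built as ``completion of a sub-measurement'' (\Cref{def:measurement-completion}), and one must check that the bad event's operator is correctly bounded by $\kappa(1+\tfrac1{100m}) + e^{-k/(80000m^2)}$ plus small terms, which is the heart of why the $1/(100m)$ slack and the $k\ge 400md$ lower bound appear.
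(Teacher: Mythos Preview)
Your proposal has the right high-level architecture (sample $k$ coordinates, sequentially measure, interpolate, complete to a measurement) and correctly identifies commutativity (\Cref{thm:com-main}) and Schwartz--Zippel as ingredients. But Steps~2 and~4 both misidentify the mechanism behind the two crucial error terms $\kappa(1+\tfrac1{100m})$ and $e^{-k/(80000m^2)}$, and as written the argument would not close.

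In Step~2 you claim a union bound gives ``all $k$ slices return an outcome with probability $\ge 1-k\kappa$'' and then invoke an unspecified ``amortization / BFL expansion'' to shrink $k\kappa$ to $\kappa(1+\tfrac1{100m})$. There is no such amortization. The paper explicitly discusses (Section~10.2.1) that requiring all $d{+}1$ (or $k$) slices to succeed leads to an unacceptable $(d{+}1)\kappa$ loss. The fix (Section~10.2.2) is structural: replace each $G^{x_j}$ by its \emph{completion} $\widehat G^{x_j}$ that always returns an outcome in $\polyfunc{m}{q}{d}\cup\{\bot\}$, and declare interpolation successful whenever \emph{at least $d{+}1$} of the $k$ outcomes are non-$\bot$. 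The completeness of $H$ then reduces (\Cref{lem:from-H-to-G}) to $\sum_{i=d+1}^k\binom{k}{i}\bra\psi G^i(I-G)^{k-i}\ot I\ket\psi$, which is analyzed by diagonalizing $G$ and applying Markov plus an additive Chernoff bound \emph{eigenvalue by eigenvalue} (\Cref{lem:chernoff-bernoulli-matrix}) with threshold $\theta=1/(200m)$. This is where both $\kappa/(1-\theta)\le\kappa(1+\tfrac1{100m})$ and $e^{-\theta^2k/2}=e^{-k/(80000m^2)}$ come from. So the exponential term is a \emph{completeness} term, not (as you say in Step~4) a bound on the probability that the slices agree with a wrong global polynomial; the latter ``inconsistent interpolation'' event is handled separately by Schwartz--Zippel and consistency with $B$ (\Cref{lem:over-all-outcomes}), and contributes only polynomially to $\nu$.

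Without the $\widehat G$-with-$\bot$ construction and the spectral Chernoff argument, your Step~2 would leave you with a $k\kappa$ (or at best $(d{+}1)\kappa$) completeness loss, which is fatal for the induction in \Cref{thm:main-induction}.
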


We note that the bound we are proving is trivial when at least one of $\eps$, $\delta$, $\gamma$, $\zeta$, or $d/q$ is $\geq 1$,
as $\nu$ is at least~$1$ in that case.
Hence, we may assume that $\eps,\delta,\gamma, \zeta, d/q \leq 1$.
This will aid us when carrying out the error calculations,
as it allows us to bound terms like $\zeta^{1/2}$ by terms like $\zeta^{1/4}$.

It will be convenient to state
certain consistency relations that follow easily from the hypotheses
of the theorem.
In this section, we will  only use the axis-parallel lines test in the $(m+1)$-st direction,
		i.e.\ in the case when the line is of the form $\ell = \{(u, x) \mid x \in \F_q\}$.
		Such a line is specified by a point $u \in \F_q^m$.
		As a result, we can use the shorthand introduced in \Cref{not:conditioned-on-last-direction},
		where instead of writing $B^\ell_f$ for such a line, we write $B^u_f$.
		We will also make use of the other shorthand from \Cref{not:conditioned-on-last-direction}
		in which for a function $f:\ell \rightarrow \F_q$,
		we will sometimes write $f(x)$ instead of $f(u,x)$.
		
		We know that $(\psi, A, B, L)$ passes the axis-parallel lines test with probability $1-\eps$,
		and so it passes this test with probability $1-(m+1) \cdot \eps$ conditioned on the random 		direction $\bi \in \{1, \ldots, m+1\}$ from the test
		being equal to $m+1$.
		This means the following: if $(\bu, \bx) \sim \F_q^{m+1}$ is drawn uniformly at random and $\bell = \{(\bu, y) \mid y \in \F_q\}$
		is the line through it in the $(m+1)$-st direction, then
		\begin{equation*}
		A^{u, x}_a \ot I \simeq_{(m+1) \eps} I \ot B^{\ell}_{[f(x)=a]} = I \ot B^u_{[f(x)=a]}.
		\end{equation*}
		Hence, \Cref{prop:simeq-to-approx} and the fact that $(m+1) \leq 2m$ imply that
		\begin{equation*}
		A^{u, x}_a \ot I \approx_{4m \eps} I \ot B^u_{[f(x)=a]}.
		\end{equation*}

Next, from the assumption that the given strategy is $(\eps,
\delta, \gamma)$-good, \Cref{prop:simeq-to-approx} implies that
\begin{equation*}
I \otimes A^{u, x}_a \approx_{2\delta} A^{u, x}_a \otimes I \approx_{4m\eps} I \otimes B^{u}_{[f(x)=a]}.
\end{equation*}
As a result, \Cref{prop:triangle-inequality-for-approx_delta} implies that
\begin{equation}\label{eq:ld-abcon}
I \otimes A^{u, x}_a \approx_{8m\eps + 4\delta} I\otimes B^{u}_{[f(x)=a]}.
\end{equation}
\Cref{prop:triangle-sub} applied to \Cref{item:ld-pasting-consistency} and \Cref{eq:ld-abcon} implies
\begin{equation}\label{eq:ld-gbcon}
G^x_{[g(u) = a]} \ot I \simeq_{\nu_1} I \ot B^{u}_{[f(x)=a]},
\end{equation}
where
\begin{equation}\label{eq:ld-nu1-def}
\nu_1 = \zeta + \sqrt{8 m\eps + 4\delta}.
\end{equation}
Finally, \Cref{thm:com-main} implies that
\begin{equation}\label{eq:quote-com-main}
G^x_g G^y_h \otimes I \approx_{\nu_{\rmcom}} G^y_h G^x_g \otimes I,
\end{equation}
where
\begin{equation*}
\nu_{\rmcom} = 30 m \cdot \left( \gamma^{1/4} +\zeta^{1/4} + (d/q)^{1/4}\right).
\end{equation*}

\ignore{
\begin{proof}[Proof of \Cref{thm:ld-pasting}]
  \ainnote{To be moved to the appropriate place!}
  Let $k$ be $XXX$\anote{Probably should choose $k = 100 (100m +1)^2$.}, and $N$ be the measurement defined in \Cref{sec:ld-interpolation} with
  parameter $k$. Then \Cref{item:ld-pasting-N-consistency} follows
  from \Cref{lem:N-B-consistency}, with an error bound of $k \cdot
  \delta_{\rmcons}$.
  \begin{align}
    \delta_{\rmcons} &= \nu_1 + k \cdot (8\zeta + 2
                               \nu_{\rmcom}) + k^2/q \\
                             &=  (\zeta + \sqrt{2m\eps + 2\delta}) + 8k \zeta + 2k \left( 13
                               \zeta^{1/4} + 3(\gamma(m+1))^{1/4} + \left(\frac{dm}{q}\right)^{1/4}\right) + k^2/q \\
                             &\leq  (\zeta + 2(m \eps)^{1/2} + 2\delta^{1/2}) + 40k \zeta^{1/4} + 6 k
                               (\gamma(m+1))^{1/4} + 2k
                               \left(\frac{dm}{q}\right)^{1/4} +
                               \frac{k^2}{q} \\
                             &= XXX.
  \end{align}
  For \Cref{item:ld-pasting-N-completeness}, it follows from
  \Cref{lem:ld-pasting-N-completeness} that
  \[ \bra{\psi} N \ot I \ket{\psi} \geq 1 - \kappa \cdot \frac{1}{1 -
      \delta} - e^{-\delta^2 k/2} - \delta_{\rmcomp}. \]
  Observe that if we set $\delta = 1/(100m +1)$, then $1/(1-\delta) = 1
  + 1/(100m)$.
  It remains to bound the other two terms.
  \begin{align}
    e^{-\delta^2k/2} &= XXX \\
    \delta_{\rmcomp} &= \delta_{\rmcons} + \frac{d}{q} + \frac{k^2}{q}
                       + \left(k (12 \zeta + \nu_{\rmcom})\right)^{1/2} \\
    &= \left( \nu_1 + k \cdot(8\zeta + 2\nu_{\rmcom}) +
      \frac{k^2}{q}\right) + \frac{d}{q} + \frac{k^2}{q} + (k(12\zeta
      + \nu_{\rmcom}))^{1/2} \\
    &\leq \nu_1 + \frac{d}{q} + \frac{2k^2}{q} + 20 k \zeta^{1/2} + 3k
      \nu_{\rmcom}^{1/2} \\
    &= (\zeta + 2(m\eps)^{1/2} + 2\delta^{1/2}) + \frac{d}{q} +
      \frac{2k^2}{q} + 20k\zeta^{1/2} + 3k\left(13\zeta^{1/4} +
      3(\gamma(m+1))^{1/4} + \left(\frac{dm}{q}\right)^{1/4}
      \right)^{1/2} \\
                     &\leq 2(m\eps)^{1/2} + 2\delta^{1/2} +
                       \frac{d}{q} + \frac{2k^2}{q} + (60k  +1)
                       \zeta^{1/4}  + 10k(\gamma(m+1))^{1/8} + 3k \left(\frac{dm}{q}\right)^{1/8}.
  \end{align}
  
\end{proof}
}

\subsection{From measurements to sub-measurements}

Rather than designing the pasted measurement guaranteed by \Cref{thm:ld-pasting},
it is convenient to first design a pasted \emph{sub}-measurement,
and then convert it to a measurement.
The next lemma shows the bounds we achieve for this sub-measurement.

\begin{lemma}\label{lem:ld-pasting-sub-measurement}
There exists a ``pasted" sub-measurement $H \in \polysub{m+1}{q}{d}$ which satisfies the following properties.
  \begin{enumerate}
  \item (Consistency with~$A$): \label{item:ld-pasting-N-consistency-sub-measurement} On average over $\bu \sim \F_q^{m+1}$,
	  \begin{equation*}
	  A^{u}_a \otimes I \simeq_{\nu} I \otimes H_{[h(u)=a]}.
	  \end{equation*}
  \item (Completeness): \label{item:ld-pasting-N-completeness-sub-measurement} If $H =  \sum_h H_h$, then
  	\begin{equation*}
	\bra{\psi} H \otimes I \ket{\psi} \geq 1 - \kappa \cdot \left(1 + \frac{1}{100m}\right)
    - \nu - e^{- k/(80000m^2)}.
	\end{equation*}
  \end{enumerate}
\end{lemma}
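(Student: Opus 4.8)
The plan is to construct the pasted sub-measurement $H$ explicitly by an interpolation procedure, following the classical zero-error intuition described in the technical overview but carried out at the level of quantum (sub-)measurements. Recall that a polynomial $h \in \polyfunc{m+1}{q}{d}$ of individual degree $d$ in the last variable is determined by its restrictions $h|_{x}$ to any $d+1$ distinct hyperplanes $\{(u,x) : u \in \F_q^m\}$; in fact, more robustly, picking $k \geq 400md \geq d+1$ points $x_1, \ldots, x_k$ and evaluating these yields enough data to interpolate, with slack for the Schwartz--Zippel lemma. So the construction is: sample $k$ i.i.d.\ points $\bx_1, \ldots, \bx_k \sim \F_q$; sequentially apply the projective sub-measurements $G^{\bx_1}, \ldots, G^{\bx_k}$ to one half of $\ket{\psi}$, obtaining outcomes $g_1, \ldots, g_k$ (each in $\polyfunc{m}{q}{d}$); then check whether there exists a polynomial $h \in \polyfunc{m+1}{q}{d}$ with $h|_{\bx_j} = g_j$ for \emph{all} $j$ (or perhaps for a large enough fraction), and if so output that (unique) $h$, and otherwise output $\bot$. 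Because we are interpolating from more hyperplanes than necessary, the test ``all restrictions agree with a common $h$'' is the consistency condition that must hold for a genuine low-degree function. Averaging over the choice of the $\bx_j$ gives the sub-measurement $H = \{H_h\}$.

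First I would verify that $H$ is indeed a sub-measurement in $\polysub{m+1}{q}{d}$ — this requires that the $G^{\bx_j}$ measurements, applied sequentially, behave like a genuine product measurement. For a \emph{fixed} tuple $(x_1, \ldots, x_k)$ the sequential measurement $G^{x_1}_{g_1} \cdots G^{x_k}_{g_k} \cdots G^{x_1}_{g_1}$ (symmetrized sandwich) is a sub-measurement automatically by projectivity of each $G^{x_j}$, but to reason about outcomes I will need the approximate commutativity \Cref{thm:com-main} (restated as \Cref{eq:quote-com-main}) so that the \emph{order} of measurement does not matter up to error $\nu_{\rmcom}$; this is exactly why the diagonal lines test and the hypotheses feeding \Cref{thm:com-main} are in the statement. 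Next, for the completeness bound \Cref{item:ld-pasting-N-completeness-sub-measurement}: the probability that $H$ returns $\bot$ is the probability that the $k$ sampled outcomes $g_1, \ldots, g_k$ fail to extend to a common degree-$d$-in-$x$ polynomial. Intuitively, on the portion of Hilbert space where $G$ is ``honest'' this never happens, and the completeness error $\kappa$ of $G$ (hypothesis \Cref{item:ld-pasting-completeness}) contributes only $\kappa(1+\tfrac{1}{100m})$ — the factor $(1+\tfrac{1}{100m})$ coming from a union bound / Markov argument over the $k$ samples with $k$ chosen so that $k \cdot \tfrac{1}{k} = 1$ roughly but the precise constant works out as in the $\delta = 1/(100m+1)$ choice hinted at in the deleted proof sketch — plus the Chernoff-type term $e^{-k/(80000m^2)}$ controlling the chance that a random sample of $\bx$'s misses a ``bad'' set of hyperplanes, plus the small additive $\nu$ absorbing commutativity and Schwartz--Zippel errors.

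Then I would prove consistency with $A$, \Cref{item:ld-pasting-N-consistency-sub-measurement}: given a fresh point $\bu = (\bu', \bx_0) \sim \F_q^{m+1}$, we want $A^{\bu}_a$ to agree with $H_{[h(\bu)=a]}$. The key classical fact is that if $h|_{\bx_j} = g_j$ for all the sampled $j$, and $\bx_0$ is one more independent point, then with high probability $h|_{\bx_0}$ is consistent with the value $g$ that $G^{\bx_0}$ would have returned at $\bu'$ — because the lines test in the $(m+1)$-st direction (\Cref{eq:ld-gbcon}, \Cref{eq:ld-abcon}) ties together the $G^{x}$'s at different $x$ along a vertical line, and the univariate degree-$d$ polynomial $B^{\bu'}_f$ along that vertical line is determined by $d+1 \leq k$ of its values. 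Concretely, I would chain: $A^{\bu',\bx_0}_a \approx I \otimes B^{\bu'}_{[f(\bx_0)=a]}$ (vertical lines test), then relate $B^{\bu'}_{[f(\bx_0)=a]}$ to the interpolated value $h(\bu',\bx_0)$ using that $H$'s defining equations force $h|_{\bx_j}(\bu') = g_j(\bu')$ which are themselves consistent with $B^{\bu'}$ at the points $\bx_j$ via \Cref{eq:ld-gbcon}, and Schwartz--Zippel ($k \geq 400md$ gives plenty of room) forces the two degree-$d$ univariate polynomials to coincide. Each step costs $O(\zeta^{1/c})$ or $O((d/q)^{1/c})$ or a commutativity error, and these all get folded into $\nu = 100k^2 m(\eps^{1/32}+\delta^{1/32}+\gamma^{1/32}+\zeta^{1/32}+(d/q)^{1/32})$; the factor $k^2$ arises because we union-bound/triangle-inequality over roughly $k$ measurement slots and the triangle inequality for ``$\approx$'' (\Cref{prop:triangle-inequality-for-approx_delta}) carries a factor equal to the number of terms.

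The main obstacle I expect is the \textbf{consistency argument} — specifically, carefully passing between the sequential application of the $G^{\bx_j}$'s (which do not exactly commute and are not exactly self-consistent) and the ``global'' picture in which the outcomes $g_1, \ldots, g_k$ are treated as if they were restrictions of a single honest polynomial. This requires repeatedly invoking \Cref{thm:com-main} to reorder measurements, \Cref{item:ld-pasting-self-consistency} (via \Cref{prop:two-notions-of-self-consistency} and \Cref{prop:two-notions-of-self-consistency-after-evaluation}) to move $G$'s between the two tensor factors, and \Cref{item:ld-pasting-boundedness} to control the portions of Hilbert space where $G$ is incomplete (so that error introduced when $G^{\bx_j}$ fails to return an outcome is charged to $\kappa$, not to $\nu$) — this last point is exactly the ``treat the two types of error separately'' theme from the overview, and it is delicate because there are $k$ opportunities for a sub-measurement to fail and the errors must not compound multiplicatively in $k$ beyond the stated $k^2$. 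I would organize the detailed proof as a sequence of ``$\approx$''-chains in the style of \Cref{prop:closeness-of-ip} and \Cref{prop:switch-sandwich}, with the sandwich-swapping lemmas doing the heavy lifting at each step, and finally deduce \Cref{thm:ld-pasting} itself from \Cref{lem:ld-pasting-sub-measurement} by completing $H$ to a measurement (adding a random outcome on the incomplete part) and bounding the extra consistency error this introduces by the completeness bound — giving $\sigma = \kappa(1+\tfrac{1}{100m}) + 2\nu + e^{-k/(80000m^2)}$.
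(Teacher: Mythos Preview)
Your construction is essentially the paper's ``first construction'' (\Cref{sec:construction-numba-one}): apply the \emph{sub}-measurement $G^{\bx_j}$ sequentially $k$ times and require all $k$ outcomes $g_1,\ldots,g_k$ to interpolate to a common $h$. The paper explicitly identifies the completeness obstacle with this approach and does not know how to overcome it. If all $k$ outcomes must be polynomials, then (via the approximate-commutativity chain you outline) the completeness of $H$ is $\approx\bra{\psi} G^k\ot I\ket{\psi}$, and a Markov/union-bound argument cannot rule out this being as bad as $1-k\kappa$: take $G$ to have eigenvalue $1-\kappa$ on all of the support of $\rho$, so that $G^k$ has eigenvalue $(1-\kappa)^k$ there, and no averaging over the random $\bx_j$'s changes this.

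The key missing idea, implemented in the paper's ``second construction'' (\Cref{sec:construction-numba-two}), is to replace each $G^{x}$ by its \emph{completion} $\widehat{G}^{x}$ (adjoining a $\bot$ outcome with POVM element $I-G^x$), so that each of the $k$ measurements always returns an outcome; one then outputs $h$ only if at least $d+1$ of the $k$ outcomes are non-$\bot$ and those interpolate to $h$. With this change, the completeness becomes (\Cref{lem:from-H-to-G}) approximately $\bra{\psi} F(G)\ot I\ket{\psi}$ where $F(X)=\sum_{i=d+1}^k\binom{k}{i}X^i(I-X)^{k-i}$, and the Chernoff bound (\Cref{lem:chernoff-bernoulli-matrix}) applied \emph{eigenvalue-by-eigenvalue} to $G$ yields $1-\kappa/(1-\theta)-e^{-\theta^2 k/2}$ with $\theta=1/(200m)$. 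So the Chernoff term is not about the random hyperplanes $\bx_j$ ``missing a bad set,'' as you suggest---it is the probability that, for an eigenvalue $\lambda$ of $G$ treated as a Bernoulli success parameter, fewer than $d+1$ of $k$ independent trials succeed. Your consistency argument is on the right track (and matches the paper's route through \Cref{lem:ld-sandwich-line-one-point} and vertical-line interpolation), but note you will also need commutativity and strong self-consistency for the $\bot$ outcome of $\widehat{G}$, which the paper establishes separately in \Cref{cor:g-bot-self-consistency} and \Cref{cor:commuting-with-G-incomplete}.
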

\begin{proof}[Proof of \Cref{thm:ld-pasting} assuming \Cref{lem:ld-pasting-sub-measurement}]
Let~$H$ be the sub-measurement in $\polysub{m+1}{q}{d}$ guaranteed by \Cref{lem:ld-pasting-sub-measurement}.
Let $h^*$ be an arbitrary polynomial in $\polyfunc{m+1}{q}{d}$.
We define the measurement~$H_{\mathrm{meas}} \in \polymeas{m+1}{q}{d}$ as follows.
\begin{equation*}
(H_{\mathrm{meas}})_h
= \left\{\begin{array}{cl}
		H_{h^*} + (I- H) & \text{if $h = h^*$,}\\
		H_h & \text{otherwise.}
		\end{array}
		\right.
\end{equation*}
This is clearly a measurement, as the sum of its POVM elements is $H + (I -H) = H$.
In addition,
\begin{align*}
&\E_{\bu} \sum_{a \neq b} \bra{\psi} A^{\bu}_a \ot (H_{\mathrm{meas}})_{[h(\bu)=b]} \ket{\psi}\\
 ={}& \E_{\bu} \sum_{a} \sum_{h:h(\bu) \neq a} \bra{\psi} A^{\bu}_a \ot (H_{\mathrm{meas}})_{h} \ket{\psi}\\
 ={}& \E_{\bu} \sum_{a} \sum_{h:h(\bu) \neq a} \bra{\psi} A^{\bu}_a \ot H_{h} \ket{\psi}
 	+\E_{\bu} \sum_{a:h^*(\bu) \neq a}\bra{\psi} A^{\bu}_a \ot (I-H) \ket{\psi}\\
 \leq{}& \E_{\bu} \sum_{a} \sum_{h:h(\bu) \neq a} \bra{\psi} A^{\bu}_a \ot H_{h} \ket{\psi}
 	+\E_{\bu} \bra{\psi} I \ot (I-H) \ket{\psi}\\
\leq{}& (\nu) +  \left(\kappa \cdot \left(1 + \frac{1}{100m}\right)
    				+ \nu + e^{- k/(80000m^2)}\right)
					\tag{by \Cref{item:ld-pasting-N-consistency-sub-measurement,item:ld-pasting-N-completeness-sub-measurement}}\\
={}& \sigma.
\end{align*}
Thus, $A^u_a \ot I \simeq_{\sigma} I \ot (H_{\mathrm{meas}})_{[h(u)=a]}$. This completes the proof.
\end{proof}

We will now spend the rest of this section proving \Cref{lem:ld-pasting-sub-measurement},
i.e.\ designing the pasted sub-measurement~$H$.

\subsection{The pasted sub-measurement}

\begin{definition}
For $k \geq 1$, we let $\distinct{k}$ be the set of tuples $(x_1, \ldots, x_k) \in \F_q^k$ such that $x_i \neq x_j$ for all $i \neq j$.
We write $(\bx_1, \ldots, \bx_k) \sim \distinct{k}$ for a uniformly random element of this set.
\end{definition}

In manipulations involving our pasted measurement, it will often be useful to switch
the expectation over $ \distinct{k}$ with an expectation over uniformly
random $k$-tuples of elements of $\F_q$. The following proposition
bounds the distance between these two distributions.

\begin{proposition}\label{prop:ld-dnoteq}
Let $\bx = (\bx_1, \ldots, \bx_k) \sim \F_q^k$ be sampled uniformly at random
and let $\by = (\by_1, \ldots, \by_k) \sim \distinct{k}$.
Then
\begin{equation*}
d_{\mathrm{TV}}(\bx, \by) \leq \frac{k^2}{q}.
\end{equation*}
\end{proposition}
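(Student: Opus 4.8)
The plan is to bound the total variation distance $d_{\mathrm{TV}}(\bx, \by)$ between the uniform distribution on $\F_q^k$ and the uniform distribution on $\distinct{k}$ by showing that $\by$ can be obtained from $\bx$ by conditioning on the event that all coordinates are distinct, and that this event has probability at least $1 - k^2/q$. The key algebraic fact is the standard coupling/conditioning bound: if $\by$ is distributed as $\bx$ conditioned on an event $E$ with $\Pr[\bx \in E] = 1 - p$, then $d_{\mathrm{TV}}(\bx, \by) \leq p$. So the whole argument reduces to estimating $\Pr[\bx \notin \distinct{k}]$, i.e.\ the probability that $\bx = (\bx_1, \ldots, \bx_k)$ has a repeated coordinate.

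First I would observe that $\by \sim \distinct{k}$ is precisely the law of $\bx \sim \F_q^k$ conditioned on the event $E = \{\bx \in \distinct{k}\}$, since the uniform distribution on $\F_q^k$ restricted to $\distinct{k}$ is uniform on $\distinct{k}$. Next I would bound $\Pr[\bx \notin E]$ by a union bound over pairs: for any fixed $i \neq j$, $\Pr[\bx_i = \bx_j] = 1/q$, and there are $\binom{k}{2}$ such pairs, so
\begin{equation*}
\Pr[\bx \notin \distinct{k}] = \Pr\Big[\bigcup_{i < j} \{\bx_i = \bx_j\}\Big] \leq \binom{k}{2} \cdot \frac{1}{q} \leq \frac{k^2}{q}.
\end{equation*}
Finally I would invoke the conditioning bound: for any event $A \subseteq \F_q^k$,
\begin{equation*}
\big|\Pr[\bx \in A] - \Pr[\by \in A]\big| = \Big|\Pr[\bx \in A] - \frac{\Pr[\bx \in A \cap E]}{\Pr[\bx \in E]}\Big| \leq \Pr[\bx \notin E] \leq \frac{k^2}{q},
\end{equation*}
where the middle inequality is the elementary estimate $|\Pr[A] - \Pr[A \mid E]| \leq \Pr[\overline{E}]$ (which one can verify by writing $\Pr[A \mid E] = \Pr[A \cap E]/\Pr[E]$ and using $\Pr[A \cap E] \leq \Pr[A] \leq \Pr[A \cap E] + \Pr[\overline{E}]$ together with $\Pr[E] \leq 1$). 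Taking the supremum over $A$ gives $d_{\mathrm{TV}}(\bx, \by) \leq k^2/q$.

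There is essentially no hard part here; this is a routine union-bound-plus-conditioning estimate. The only point requiring a moment of care is making sure the conditioning inequality $|\Pr[A] - \Pr[A \mid E]| \leq \Pr[\overline{E}]$ is stated with the right direction and constant, and that $\binom{k}{2} = k(k-1)/2 \leq k^2$ (rather than the tighter $k^2/2$, which is also fine but the statement only asks for $k^2/q$). I would write this out in three or four lines.
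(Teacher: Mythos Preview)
Your proposal is correct and follows essentially the same approach as the paper: both reduce the problem to bounding $\Pr[\bx \notin \distinct{k}]$ via a union bound, yielding $\binom{k}{2}/q \leq k^2/q$. The only cosmetic difference is that the paper identifies $\overline{\distinct{k}}$ directly as the maximizing set in the definition of $d_{\mathrm{TV}}$ (so that $d_{\mathrm{TV}}(\bx,\by)$ equals $\Pr[\bx \notin \distinct{k}]$ exactly), whereas you invoke the general conditioning inequality $|\Pr[A]-\Pr[A\mid E]|\le \Pr[\overline{E}]$; both routes are standard and equivalent here.
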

\begin{proof}
For any $z = (z_1, \ldots, z_k) \in \F_q^k$, 
\begin{align*}
\Pr[\bx = z] 
& = \frac{1}{q^k},\\
\Pr[\by = z] 
& = \left\{\begin{array}{cl}
		\frac{1}{\binom{q}{k} k!} & \text{if $z \in \distinct{k}$,}\\
		0 & \text{otherwise.}
		\end{array}\right.
\end{align*}
Because $\frac{1}{\binom{q}{k} k!} \geq \frac{1}{q^k}$, $\Pr[\bx = z] \geq \Pr[\by = z]$ if and only if $z \notin \distinct{k}$.
Hence,
\begin{align*}
d_{\mathrm{TV}}(\bx, \by)
&= \max_{S \subseteq \F_q^k}\{\Pr[\bx \in S] - \Pr[\by \in S]\}\\
&= \Pr[\bx \in \overline{\distinct{k}}] - \Pr[\by \in \overline{\distinct{k}}]
= \Pr[\bx \in \overline{\distinct{k}}].
\end{align*}
We can upper-bound this probability as follows.
\begin{align*}
\Pr[\bx \in \overline{\distinct{k}}]
&= \Pr[\exists i : \bx_i \in \{\bx_1, \ldots, \bx_{i-1}\}]\\
&\leq \sum_{i=2}^k \Pr[\bx_i \in \{\bx_1, \ldots, \bx_{i-1}\}]
\leq \sum_{i=2}^k \left(\frac{i-1}{q}\right)
= \frac{k(k-1)}{2q}.
\end{align*}
This concludes the proof.
\end{proof}

To begin, we will describe our construction of the global pasted measurement.
In fact, we will consider \emph{two} separate constructions of the global pasted measurement.
The first is given in \Cref{sec:construction-numba-one} below.
It is the more natural of the two,
but unfortunately we do not know how to prove that it works correctly.
This motivates our second construction,
given in \Cref{sec:construction-numba-two} below,
which is designed to circumvent the problems in the first construction.

\subsubsection{The first construction}\label{sec:construction-numba-one}

The first construction of $H = \{H_h\}$ is conceptually simple:
we perform the $G$ sub-measurement $d+1$ times to produce $d+1$ polynomials $g_1, \ldots, g_{d+1} \in \polyfunc{m}{q}{d}$.
We then perform polynomial interpolation to produce a single global polynomial $h \in \polyfunc{m+1}{q}{d}$.
In more detail, this construction involves three steps.

\begin{enumerate}
\item (Pasting):
Let $x_1, \ldots, x_{d+1} \in \F_q$.
We will define an initial ``sandwiched" measurement as follows:
\begin{equation*}
\widehat{H}^{x_1, \ldots, x_{d+1}}_{g_1, \ldots, g_{d+1}} = G^{x_1}_{g_1} \cdot G^{x_2}_{g_2} \cdots G^{x_{d+1}}_{g_{d+1}} \cdots G^{x_2}_{g_2} \cdot G^{x_1}_{g_1}.
\end{equation*}
$\widehat{H}^{x_1, \ldots, x_{d+1}}$ has a natural interpretation as the sub-measurement in which one performs the sub-measurements $G^{x_1}, G^{x_2}, \ldots, G^{x_{d+1}}$ one after another and outputs their results.
\item (Interpolation)
Next, let $(x_1, \ldots, x_{d+1}) \in \distinct{d+1}$.
We define the interpolated measurement
\begin{equation*}
H^{x_1, \ldots, x_{d+1}}_{h}
= \widehat{H}^{x_1, \ldots, x_{d+1}}_{h|_{x_1}, \ldots, h|_{x_{d+1}}}.
\end{equation*}
This performs the $\widehat{H}^{x_1, \ldots, x_{d+1}}$ measurement and outputs the~$h$
which is consistent with the outcomes~$g_1, \ldots, g_{d+1}$ if one exists.
(We note that this is not necessarily a sub-measurement if $(x_1, \ldots, x_{d+1}) \notin \distinct{d+1}$. This is because there may exist $h \neq h'$ for which $h|_{x_i} = (h')|_{x_i}$ for all $1 \leq i \leq d+1$.)
\item (Averaging):
Finally, we randomize over the choice of $(x_1, \ldots, x_{d+1})$. In other words, we define
\begin{equation*}
H_{h} = \E_{(\bx_1, \ldots, \bx_{d+1}) \sim \distinct{d+1}} H^{\bx_1, \ldots, \bx_{d+1}}_h.
\end{equation*}
\end{enumerate}

To show that this construction works,
we need to show two things: (i) that $H_h$ has good agreement with $A$,
and (ii) that $H$'s completeness is close to~$G$'s.
Showing (i) is simple and follows from
the approximate commutativity of $G^{x}$ and $G^{y}$ established in \Cref{thm:com-main}.
What we do not know how to do is to show (ii),
at least for general~$d$.
In fact, at first glance it even looks like it should be false!
To see why,
we note that it is possible to show that the completeness of~$H$ can be approximated as
\begin{equation*}
\bra{\psi} H \otimes I \ket{\psi} \approx \bra{\psi} G^{d+1} \otimes I \ket{\psi},
\end{equation*}
because~$H$ is performing the~$G$ measurement $d+1$ times.
We would therefore like to show that
\begin{equation}\label{eq:subtract-a-G}
\bra{\psi} G^{d+1}\otimes I \ket{\psi}
\approx \bra{\psi} G\otimes I \ket{\psi}.
\end{equation}
But if $\bra{\psi} G \otimes I \ket{\psi} = 1-\kappa$,
a ``naive analysis" would lead to the conclusion that
\begin{equation}\label{eq:dumbo-bound-for-idiots}
\bra{\psi} G^{d+1}\otimes I \ket{\psi} \approx 1 - (d+1) \cdot \kappa,
\end{equation}
which would be too great of a loss in completeness for our proof strategy to work.

However, we can show \Cref{eq:subtract-a-G} is actually correct
and therefore this ``naive analysis" is incorrect,
at least in the case of constant~$d$.
To see how this is possible,
note that \Cref{eq:dumbo-bound-for-idiots} is in fact correct when all of~$G$'s eigenvalues are equal to $1-\kappa$,
in which case $G^{d+1} = (1-\kappa)^d \cdot G$.
So we would like to show that, on the contrary,
the fact that $G$'s incompleteness is equal to $1-\kappa$
is because a ``$(1-\kappa)$ fraction" of its eigenvalues are equal to~$1$,
and the remaining ``$\kappa$ fraction" of its eigenvalues are equal to~$0$.
In this case, $G^{d+1} = G$, and so \Cref{eq:subtract-a-G} holds.

We now sketch the argument that shows this holds, at least for constant~$d$.
To do this, it is first possible to show that
\begin{equation}\label{eq:step-one-of-grand-plan}
\bra{\psi} G^{d+2} \otimes I \ket{\psi}
\approx_{\Delta} \bra{\psi} G^{d+1} \otimes I \ket{\psi},
\end{equation}
where $\Delta = \poly(m) \cdot \poly(\eps, \delta, \gamma, \zeta, d/q)$ is small.
Intuitively, this is because after performing $(d+1)$ $G$ measurements,
the outcome of another $G$ measurement is essentially determined due to interpolation.
(The proof is of this fact is slightly subtle and involves the $Z$-boundedness condition.)

From here, it is possible to derive \Cref{eq:subtract-a-G} as follows.
Write the eigendecomposition $G = \sum_i \lambda_i \ket{v_i}\bra{v_i}$ of~$G$.
This defines a probability distribution $\mu$ over eigenvectors, where eigenvector~$i$ occurs with probability $\mu(i) = \bra{\psi} (\ket{v_i}\bra{v_i} \ot I) \ket{\psi}$.
In this language, for any power~$k$ we can write
\begin{equation*}
\bra{\psi} G^k \ot I \ket{\psi}
= \bra{\psi} \Big(\sum_i \lambda_i^k \ket{v_i}\bra{v_i}\Big) \ot I \ket{\psi}
= \sum_i \lambda_i^k \cdot \mu(i)
= \E_{\bi \sim \mu}[\lambda_{\bi}^k].
\end{equation*}
Thus, \Cref{eq:step-one-of-grand-plan} is equivalent to the statement that
\begin{equation}\label{eq:equivalent-way-of-writing-grand-plan}
\Delta
\geq \bra{\psi} G^{d+1} \otimes I \ket{\psi} - \bra{\psi} G^{d+2} \otimes I \ket{\psi}
= \E_{\bi \sim \mu}[\lambda_{\bi}^{d+1}] - \E_{\bi \sim \mu}[\lambda_{\bi}^{d+2}]
= \E_{\bi \sim\mu}[\lambda_{\bi}^{d+1}(1-\lambda_{\bi})].
\end{equation}
By \Cref{eq:subtract-a-G}, our goal is to bound
\begin{equation*}
\bra{\psi} G \otimes I \ket{\psi} - \bra{\psi} G^{d+1} \otimes I \ket{\psi}
= \E_{\bi \sim \mu}[\lambda_{\bi}] - \E_{\bi \sim \mu}[\lambda_{\bi}^{d+1}]
= \E_{\bi \sim\mu}[\lambda_{\bi}(1-\lambda_{\bi}^d)].
\end{equation*}
To do so, we will use the following lemma.

\begin{lemma}\label{lem:looks-easy-but-took-me-a-while}
For any real number $0 \leq \lambda \leq 1$,
\begin{equation*}
\lambda (1-\lambda^d) \leq 2 \cdot \Big(\lambda^{d+1}(1-\lambda)\Big)^{1/(d+1)}.
\end{equation*}
\end{lemma}
\begin{proof}
The lemma is trivial for $\lambda = 1$, and so we will assume that $\lambda \neq 1$.
We note that
\begin{align*}
\lambda^{d+1} (1-\lambda^d)^{d+1}
\leq \lambda^{d+1} (1-\lambda^d)
&= \lambda^{d+1}(1-\lambda) \cdot \left(\frac{1-\lambda^d}{1-\lambda}\right)\\
&= \lambda^{d+1}(1-\lambda) \cdot (1 + \lambda + \cdots + \lambda^{d-1})
\leq d \cdot \lambda^{d+1}(1-\lambda).
\end{align*}
The lemma now follows by taking the $(d+1)$-st root of both sides and noting that $d^{1/(d+1)} \leq 2$ for all integers $d \geq 1$.
\end{proof}

Then \Cref{lem:looks-easy-but-took-me-a-while} implies that
\begin{align*}
\E_{\bi \sim\mu}[\lambda_{\bi}(1-\lambda_{\bi}^d)]
&\leq 2\cdot \E_{\bi \sim\mu}\Big[\Big(\lambda_{\bi}^{d+1}(1-\lambda_{\bi})\Big)^{1/(d+1)}\Big]\\
&\leq2\cdot \Big(\E_{\bi \sim\mu}[\lambda_{\bi}^{d+1}(1-\lambda_{\bi})]\Big)^{1/(d+1)}
							\tag{because $a \mapsto a^{1/(d+1)}$ is concave}\\
&\leq 2 \cdot \Delta^{1/(d+1)}. \tag{by \Cref{eq:equivalent-way-of-writing-grand-plan}}
\end{align*}
Thus, we have established the bound
\begin{equation*}
\bra{\psi}G \otimes I \ket{\psi}
- \bra{\psi}G^{d+1} \otimes I \ket{\psi}
\leq 2\cdot \Delta^{1/{d+1}}
= 2\cdot ( \poly(m) \cdot \poly(\eps, \delta, \gamma, \zeta,d/q))^{1/{d+1}}.
\end{equation*}
For constant~$d$, this bound is suitable for our proof, as the right-hand side is still a polynomial in the relevant parameters.
However, when~$d$ is larger, for this bound to be meaningful, we need $\eps$, $\delta$, etc. to be exponentially small in~$d$, which is a more stringent condition than we generally allow (unless, again, $d$ is a constant). 
That said, we believe this large error may be an artifact of the proof strategy
rather than something intrinsic to the construction.
We leave this to future work.

\ignore{
For example, when $d = 1$, we have that
\begin{multline*}
\bra{\psi}G \otimes I \ket{\psi}
- \bra{\psi}G^2 \otimes I \ket{\psi}
= \bra{\psi}G (I - G) \otimes I \ket{\psi}
= \bra{\psi}\sqrt{I-G} \cdot (\sqrt{I-G} \cdot G) \otimes I \ket{\psi}\\
\leq \sqrt{\bra{\psi} (I - G) \otimes I \ket{\psi}} \cdot \sqrt{\bra{\psi} G^2 (I - G) \otimes I \ket{\psi}}
\leq \sqrt{1} \cdot \sqrt{\Delta}.
\end{multline*}
This establishes \Cref{eq:subtract-a-G} with error $\sqrt{\Delta}$ when $d = 1$.

For larger~$d$, however, this argument gives 
\begin{align*}
\bra{\psi}G \otimes I \ket{\psi}
- \bra{\psi}G^{d+1} \otimes I \ket{\psi}
&= \sum_{i=1}^d (\bra{\psi} G^i \ot I \ket{\psi} - \bra{\psi} G^{i+1} \ket{\psi})\\
&=\sum_{i=1}^d \bra{\psi} G^i(I-G) \ot I \ket{\psi}\\
&\leq \sum_{i=1}^d \bra{\psi} G(I-G) \ot I \ket{\psi}\\
&= d \cdot \bra{\psi} G(I-G) \ot I \ket{\psi}.
\end{align*}

For larger $d$, however, this argument gives the bound
\begin{equation*}
\bra{\psi}G \otimes I \ket{\psi}
- \bra{\psi}G^{d+1} \otimes I \ket{\psi}
\leq \Delta^{1/{d+1}}
= ( \poly(m) \cdot \poly(\eps, \delta, \gamma, \zeta,d/q))^{1/{d+1}}.
\end{equation*}
\znote{How is the $\Delta^{1/(d+1)}$ bound proved? For $d>1$, this does not seem
  to hold even if G is a number between 0 and 1.}
Thus, for this bound to be meaningful, we need $\eps$, $\delta$, etc. to be exponentially small in~$d$, which is a more stringent condition than we generally allow unless $d$ is a constant. 
We still believe that this construction should work,
and we leave its analysis to future work.
\znote{Why do we believe that this construction should work?}
}

\subsubsection{The second construction}\label{sec:construction-numba-two}

The second construction of $H = \{H_h\}$ 
is designed to circumvent the problem of the first construction,
which is that its completeness was difficult to analyze.
Instead, we will design a pasted measurement
in which the ``naive analysis" actually gets us the bound we want,
which is that~$H$'s completeness is close to~$G$'s completeness.
Before describing the construction, we need the following definitions.

\begin{definition}[$G$'s incomplete part]
For each $x \in \F_q$, we write $G^x = \sum_g G^x_g$ and $G^x_{\bot} = I - G^x$
for the ``complete" and ``incomplete" parts of $G^x$, respectively.

It will be convenient to sometimes regard $G^x$ as a complete measurement
by throwing in the additional measurement outcome ``$\bot$".
To distinguish this from $G^x$ as a sub-measurement,
we will use the notation ``$\widehat{G}^x$".
In other words, we let $\widehat{G} = \{\widehat{G}^x_g\}$
be the projective measurement defined as
\begin{equation*}
\widehat{G}^x_g
= \left\{\begin{array}{rl}
	G^x_g & \text{if } g \in \polyfunc{m}{q}{d},\\
	G^x_{\bot} & \text{if } g = \bot.
	\end{array}\right.
\end{equation*}
This measurement has outcomes ranging over the set 
$\calP^+(m,q,d) := \polyfunc{m}{q}{d} \cup \{\bot\}$.
\end{definition}

\begin{definition}[Types]
A type $\tau$ is an element of $\{0, 1\}^k$ for some integer~$k$.
We write $|\tau| = \tau_1 + \cdots + \tau_k$ for the Hamming weight of~$\tau$.
We will also associate $\tau$ with the set $\{i \mid \tau_i = 1\}$ and write $i \in \tau$ if $\tau_i = 1$.
\end{definition}

Suppose we perform the $\wG$ measurement $k$ times in succession,
generating the random outcomes $\bg_1, \ldots, \bg_k$.
Let us write $\btau \in \{0, 1\}^k$ for the ``type" of these outcomes, where
\begin{equation*}
\btau_i =
\left\{\begin{array}{rl}
	1 & \text{if $\bg_i \in \polyfunc{m}{q}{d}$},\\
	0 & \text{if $\bg_i = \bot$.}
	\end{array}
	\right.
\end{equation*}
Assuming the $\bg_i$'s are not inconsistent,
then we can interpolate them to produce a global polynomial~$\bh$
whenever $|\btau| \geq d+1$.
Hence, we would like to understand the probability that $|\btau| \geq d+1$
and ensure that it is as large as possible.
The probability that the measurement $\wG$ returns a polynomial $g \in \polyfunc{m}{q}{d}$
is equal to the completeness of~$G$, which is $1-\kappa$.
This tells us that the probability that $\btau_1 = 1$ is $1-\kappa$.
We might naively expect that the same holds for the other $\btau_i$'s as well.
We might also naively expect that the $\btau_i$'s are independent.
These two assumptions should not be expected to hold in general,
as they ignore correlations between the measurements
and the fact that each measurement perturbs the state $\ket{\psi}$ for subsequent measurements to use.
However, if we make these assumptions, then we at least have a simple toy model
for the measurement outcomes: $\btau \sim \mathrm{Binomial}(k, 1-\kappa)$.

In this toy model,
we expect $|\btau| \approx k \cdot (1-\kappa)$ on average.
This was the problem with the ``naive analysis" from the first construction:
if $k = d+1$ and $\kappa$ is reasonably large (say, on the order of $1/d$),
then we don't expect $|\btau|$ to be $\geq d+1$ with high probability,
and so we can't interpolate to produce a global polynomial.
This suggests an alternative strategy:
simply choose $k$ large enough so that $k \cdot (1 - \kappa) \gg d+1$.
In fact, as we are aiming for $H$ to have completeness close to $1-\kappa$,
we should choose~$k$ so large that $|\btau| \geq d+1$ with probability 
roughly $1-\kappa$.
This is easily done with a Chernoff bound,
which is responsible for the exponential error term in \Cref{item:ld-pasting-N-completeness-sub-measurement}
of \Cref{lem:ld-pasting-sub-measurement}.
On the other hand, if we set $k$ \emph{too} large,
then we increase the risk that our $k$ outcomes $g_1, \ldots, g_k$ are inconsistent with each other,
which is an additional source of error.
This is responsible for the tradeoff between ``large" and ``small" $k$ discussed in \Cref{sec:self-improvement-and-pasting} above.

Although, this ``naive analysis" only holds in this toy model,
it still motivates our second construction of~$H$,
which we state below.
We will show that the naive analysis,
in which we treat $\btau$ as a binomial random variable
and bound $|\btau|$ using a Chernoff bound,
can actually be made formal.

\begin{definition}[The pasted measurement]
Let $k \geq d+1$ be an integer.
\begin{enumerate}
\item (Pasting):
Let $x_1, \ldots, x_k \in \F_q$.
We will define an initial ``sandwiched" measurement as follows:
\begin{equation*}
\widehat{H}^{x_1, \ldots, x_{k}}_{g_1, \ldots, g_{k}} = \widehat{G}^{x_1}_{g_1} \cdot \widehat{G}^{x_2}_{g_2} \cdots \widehat{G}^{x_{k}}_{g_{k}} \cdots \widehat{G}^{x_2}_{g_2} \cdot \widehat{G}^{x_1}_{g_1}.
\end{equation*}
\item (Interpolation):
Next, let $(x_1, \ldots, x_{k}) \in \distinct{k}$.
For any string $w \in \{0, 1\}^k$ and polynomial $h \in \polyfunc{m+1}{q}{d}$,
we define $h_w$  to be the tuple $(g_1, \ldots, g_k)  \in \calP^+(m, q, d)^k$
where $g_i = \bot$ if $w_i = 0$ and $g_i = h|_{x_i}$ otherwise.
We define the interpolated measurement
\begin{equation*}
H^{x_1, \ldots, x_{k}}_{h}
= \sum_{w : |w| \geq d+1} \widehat{H}^{x_1, \ldots, x_{k}}_{h_w}.
\end{equation*}
\item (Averaging):
Finally, we randomize over the choice of $(x_1, \ldots, x_{k})$. In other words, we define
\begin{equation*}
H_{h} = \E_{(\bx_1, \ldots, \bx_{k}) \sim \distinct{k}} H^{\bx_1, \ldots, \bx_{k}}_h.
\end{equation*}
\end{enumerate}
\end{definition}

To analyze the second construction,
we first need to show that the $\wG$ measurement
satisfies some basic properties, like commutation with itself.
These are shown in \Cref{sec:hat-consistency},
where they follow from the fact that similar properties hold for $G$.
Using this, we prove that $\widehat{H}$ is consistent with~$B$ in \Cref{sec:ld-sandwiching},
which we use to prove that $H$ is consistent with~$A$ in \Cref{sec:consistency-of-h-with-a}.
Finally, we analyze the completeness of~$H$ in \Cref{sec:completeness-of-h-low-degree}.

\subsection{Strong self-consistency and commutation of $\widehat{G}$}
\label{sec:hat-consistency}

In this section, we show that $\wG$ is strongly self-consistent and commutes with itself.
As we already know this holds for the sub-measurement~$G$,
our task essentially reduces to showing that these properties also hold for $G$'s incomplete part,
i.e.\ $G_{\perp}$.
As it is more convenient to work with $G = I - G_{\perp}$ rather than $G_{\perp}$,
we will first show that these properties hold for~$G$;
the fact that they also hold for $G_{\perp}$ will then follow as an immediate corollary.

\subsubsection{Strong self-consistency of $G_{\perp}$}

\begin{lemma}[Strong self-consistency of $G$'s complete part]\label{lem:g-complete-self-consistency}
\begin{equation*}
G^{x} \ot I \approx_{\zeta}  I \ot G^{x}.
\end{equation*}
\end{lemma}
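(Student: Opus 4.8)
The claim is that the complete part $G^x = \sum_g G^x_g$ of each sub-measurement satisfies $G^x \otimes I \approx_{\zeta} I \otimes G^x$, given that the ``per-outcome'' strong self-consistency $G^x_g \otimes I \approx_{\zeta} I \otimes G^x_g$ holds (on average over $\bx$, which is the relevant distribution here). The plan is to reduce the statement for the summed operator $G^x$ to the hypothesis on the individual $G^x_g$'s, using projectivity of $G$ together with the fact that $G^x_g \otimes I \approx_\zeta I \otimes G^x_g$ is exactly the statement that $G$ is $\zeta/2$-strongly self-consistent (by \Cref{prop:two-notions-of-self-consistency}, since $G$ is projective).

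First I would unpack what we want to bound: $\E_{\bx} \Vert (G^{\bx} \otimes I - I \otimes G^{\bx}) \ket{\psi} \Vert^2$. Expanding the square and using $(G^x)^\dagger = G^x$, this equals
\begin{equation*}
\E_{\bx} \bra{\psi} (G^{\bx})^2 \otimes I \ket{\psi} + \E_{\bx} \bra{\psi} I \otimes (G^{\bx})^2 \ket{\psi} - 2 \E_{\bx} \bra{\psi} G^{\bx} \otimes G^{\bx} \ket{\psi}.
\end{equation*}
Since $G^x$ is projective, $(G^x)^2 = G^x$, so the first two terms are each $\bra{\psi} G \otimes I \ket{\psi}$ (using permutation-invariance of $\ket{\psi}$ for the second), giving $2\bra{\psi} G \otimes I \ket{\psi} - 2\E_{\bx}\bra{\psi} G^{\bx} \otimes G^{\bx} \ket{\psi}$. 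So it suffices to show $\E_{\bx}\bra{\psi} G^{\bx} \otimes G^{\bx}\ket{\psi} \geq \bra{\psi} G \otimes I \ket{\psi} - \zeta/2$. Now I would expand $G^x \otimes G^x = \sum_{g,g'} G^x_g \otimes G^x_{g'} \geq \sum_g G^x_g \otimes G^x_g$ (all cross terms are PSD since each $G^x_g \otimes G^x_{g'}$ is a product of commuting PSD operators on the two tensor factors), so $\E_{\bx}\bra{\psi} G^{\bx}\otimes G^{\bx}\ket{\psi} \geq \E_{\bx} \sum_g \bra{\psi} G^{\bx}_g \otimes G^{\bx}_g \ket{\psi}$. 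By \Cref{prop:two-notions-of-self-consistency} applied to the projective sub-measurement $G$ (with the hypothesis rewritten as $\E_{\bx}\sum_g \bra{\psi} G^{\bx}_g \otimes G^{\bx}_g \ket{\psi} \geq \bra{\psi} G \otimes I \ket{\psi} - \zeta/2$, which is equivalent to the ``$\approx$'' form of the hypothesis since $G$ is projective), this last quantity is at least $\bra{\psi} G \otimes I \ket{\psi} - \zeta/2$. Plugging back in gives a bound of $2 \cdot (\zeta/2) = \zeta$ on $\E_{\bx}\Vert(G^{\bx}\otimes I - I\otimes G^{\bx})\ket{\psi}\Vert^2$, which is precisely $G^x \otimes I \approx_\zeta I \otimes G^x$.

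I do not expect any real obstacle here — this is a short computation. The only point requiring slight care is the direction of the equivalence in \Cref{prop:two-notions-of-self-consistency}: the hypothesis is stated as $G^x_g \otimes I \approx_\zeta I \otimes G^x_g$, and since $G$ is projective the proposition says this is equivalent to $\E_{\bx}\sum_g\bra{\psi} G^{\bx}_g \otimes G^{\bx}_g\ket{\psi} \geq \bra{\psi} G\otimes I\ket{\psi} - \zeta/2$, so we get the factor $\zeta/2$ rather than $\zeta$ — which is exactly what makes the final constant come out to $\zeta$ and not $2\zeta$. (If one were sloppy and only used $G^x_g\otimes I \approx_\zeta I\otimes G^x_g \Rightarrow$ strong self-consistency with parameter $\zeta$ via the general non-projective direction of the proposition, one would lose a factor of $2$; using projectivity avoids this.) Alternatively, one could bound $\E_{\bx}\Vert(G^{\bx}\otimes I - I\otimes G^{\bx})\ket{\psi}\Vert^2 = \E_{\bx}\Vert \sum_g (G^{\bx}_g \otimes I - I\otimes G^{\bx}_g)\ket{\psi}\Vert^2$ directly, but expanding this sum of differences introduces cross terms and is messier than the variance-expansion route above, so I would go with the variance expansion.

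\begin{proof}
We want to bound $\E_{\bx}\Vert(G^{\bx}\otimes I - I\otimes G^{\bx})\ket{\psi}\Vert^2$.
Because $\ket{\psi}$ is permutation-invariant and each $G^x$ is projective (so $(G^x)^2 = G^x$),
\begin{align*}
&\E_{\bx}\Vert(G^{\bx}\otimes I - I\otimes G^{\bx})\ket{\psi}\Vert^2\\
=~&\E_{\bx}\bra{\psi}(G^{\bx})^2\otimes I\ket{\psi} + \E_{\bx}\bra{\psi} I\otimes(G^{\bx})^2\ket{\psi} - 2\E_{\bx}\bra{\psi} G^{\bx}\otimes G^{\bx}\ket{\psi}\\
=~&2\bra{\psi} G\otimes I\ket{\psi} - 2\E_{\bx}\bra{\psi} G^{\bx}\otimes G^{\bx}\ket{\psi}.
\end{align*}
Now, since each $G^x_g\otimes G^x_{g'}$ is a positive-semidefinite operator,
\begin{equation*}
G^x\otimes G^x = \sum_{g, g'} G^x_g\otimes G^x_{g'} \geq \sum_g G^x_g\otimes G^x_g,
\end{equation*}
and therefore
\begin{equation*}
\E_{\bx}\bra{\psi} G^{\bx}\otimes G^{\bx}\ket{\psi} \geq \E_{\bx}\sum_g\bra{\psi} G^{\bx}_g\otimes G^{\bx}_g\ket{\psi}.
\end{equation*}
Because $G$ is projective, the hypothesis $G^x_g\otimes I\approx_{\zeta} I\otimes G^x_g$ is equivalent, by \Cref{prop:two-notions-of-self-consistency}, to
\begin{equation*}
\E_{\bx}\sum_g\bra{\psi} G^{\bx}_g\otimes G^{\bx}_g\ket{\psi} \geq \bra{\psi} G\otimes I\ket{\psi} - \frac{\zeta}{2}.
\end{equation*}
Combining the last three displays,
\begin{equation*}
\E_{\bx}\bra{\psi} G^{\bx}\otimes G^{\bx}\ket{\psi} \geq \bra{\psi} G\otimes I\ket{\psi} - \frac{\zeta}{2},
\end{equation*}
and hence
\begin{equation*}
\E_{\bx}\Vert(G^{\bx}\otimes I - I\otimes G^{\bx})\ket{\psi}\Vert^2 \leq 2\bra{\psi} G\otimes I\ket{\psi} - 2\Big(\bra{\psi} G\otimes I\ket{\psi} - \frac{\zeta}{2}\Big) = \zeta.
\end{equation*}
This completes the proof.
\end{proof}
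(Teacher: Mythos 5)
Your proof is correct and is essentially the same as the paper's: both expand $\E_{\bx}\Vert(G^{\bx}\otimes I - I\otimes G^{\bx})\ket{\psi}\Vert^2$ using projectivity and permutation invariance, lower-bound $\E_{\bx}\bra{\psi} G^{\bx}\otimes G^{\bx}\ket{\psi}$ by dropping the (PSD) cross terms to get $\E_{\bx}\sum_g\bra{\psi}G^{\bx}_g\otimes G^{\bx}_g\ket{\psi}$, and invoke \Cref{prop:two-notions-of-self-consistency} (with the $\zeta/2$ parameter coming from projectivity) to conclude. No gaps; your remark about where the factor of $2$ is saved is exactly the point.
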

\begin{proof}
Because~$G$ is a projective measurement, \Cref{prop:two-notions-of-self-consistency}
implies that the strong self-consistency of~$G$ from \Cref{item:ld-pasting-self-consistency} is equivalent to 
\begin{equation}\label{eq:ld-g-self-consistency}
\E_{\bx} \sum_{g} \bra{\psi} G^{\bx}_{g} \ot G^{\bx}_{g}
    \ket{\psi} \geq \E_{\bx} \sum_{g} \bra{\psi} G^{\bx}_{g} \ot
                 I\ket{\psi} - \frac{1}{2}\cdot \zeta.
\end{equation}
Our goal is to bound
\begin{align*}
&\E_{\bx} \Vert (G^{\bx}\ot I - I \ot G^{\bx}) \ket{\psi} \Vert^2\\
=~&2\cdot \E_{\bx} \bra{\psi} G^{\bx} \otimes I \ket{\psi} - 2\cdot \E_{\bx} \bra{\psi} G^{\bx} \otimes G^{\bx} \ket{\psi}\\
=~&2\cdot \E_{\bx} \sum_{g_1} \bra{\psi} G^{\bx}_{g} \otimes I \ket{\psi} - 2\cdot \E_{\bx} \sum_{g, h} \bra{\psi} G^{\bx}_{g} \otimes G^{\bx}_{h} \ket{\psi}\\
\leq~&2\cdot \E_{\bx} \sum_{g} \bra{\psi} G^{\bx}_{g} \otimes I \ket{\psi} - 2\cdot \E_{\bx} \sum_{g} \bra{\psi} G^{\bx}_{g} \otimes G^{\bx}_{g} \ket{\psi}.
\end{align*}
But this is at most~$\zeta$ by \Cref{eq:ld-g-self-consistency}.
\end{proof}

\begin{corollary}[Strong self-consistency of~$G$'s incomplete part]\label{cor:g-bot-self-consistency}
\begin{equation*}
G^{x}_{\bot} \ot I \approx_{\zeta}  I \ot G^{x}_{\bot}.
\end{equation*}
\end{corollary}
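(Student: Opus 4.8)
The statement to prove is Corollary~\ref{cor:g-bot-self-consistency}: that $G^x_\bot \otimes I \approx_\zeta I \otimes G^x_\bot$ on average over $\bx \sim \F_q$, given that Lemma~\ref{lem:g-complete-self-consistency} has already established $G^x \otimes I \approx_\zeta I \otimes G^x$. Since $G^x_\bot = I - G^x$ by definition, the two operators $G^x_\bot \otimes I - I \otimes G^x_\bot$ and $G^x \otimes I - I \otimes G^x$ differ only by an overall sign:
\begin{equation*}
(I - G^x) \otimes I - I \otimes (I - G^x) = I \otimes I - G^x \otimes I - I \otimes I + I \otimes G^x = -(G^x \otimes I - I \otimes G^x).
\end{equation*}

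First I would write out the quantity to be bounded, namely
\begin{equation*}
\E_{\bx} \sum_{\text{(trivial outcome set)}} \Vert (G^{\bx}_\bot \otimes I - I \otimes G^{\bx}_\bot) \ket{\psi} \Vert^2 = \E_{\bx} \Vert (G^{\bx}_\bot \otimes I - I \otimes G^{\bx}_\bot) \ket{\psi} \Vert^2,
\end{equation*}
observe via the displayed sign computation that $(G^{\bx}_\bot \otimes I - I \otimes G^{\bx}_\bot)\ket{\psi} = -(G^{\bx} \otimes I - I \otimes G^{\bx})\ket{\psi}$, so the norm-squared is unchanged, and then invoke Lemma~\ref{lem:g-complete-self-consistency} directly to conclude the whole expression is at most $\zeta$. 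That is the entire argument.

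There is essentially no obstacle here: the corollary is an immediate algebraic consequence of the preceding lemma, and I expect the proof to be two or three lines. The only thing to be careful about is making the ``$\approx_\delta$'' bookkeeping match Definition~\ref{def:approx_delta} — specifically that $G_\bot$ has a single (trivial) outcome index so the sum over outcomes in the definition of $\approx_\delta$ collapses, exactly as it does implicitly for $G^x$ in Lemma~\ref{lem:g-complete-self-consistency}. Concretely I would write:
\begin{proof}
By definition, $G^x_\bot = I - G^x$. Therefore
\begin{equation*}
(G^x_\bot \otimes I - I \otimes G^x_\bot) = -(G^x \otimes I - I \otimes G^x),
\end{equation*}
and so $\Vert (G^{x}_\bot \otimes I - I \otimes G^{x}_\bot) \ket{\psi} \Vert^2 = \Vert (G^{x} \otimes I - I \otimes G^{x}) \ket{\psi} \Vert^2$ for every $x$. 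Taking the expectation over $\bx \sim \F_q$ and applying \Cref{lem:g-complete-self-consistency} gives
\begin{equation*}
\E_{\bx} \Vert (G^{\bx}_\bot \otimes I - I \otimes G^{\bx}_\bot) \ket{\psi} \Vert^2 = \E_{\bx} \Vert (G^{\bx} \otimes I - I \otimes G^{\bx}) \ket{\psi} \Vert^2 \leq \zeta,
\end{equation*}
which is exactly the claim $G^x_\bot \otimes I \approx_{\zeta} I \otimes G^x_\bot$.
\end{proof}
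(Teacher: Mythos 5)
Your proposal is correct and is essentially identical to the paper's proof: both compute $G^x_\bot \otimes I - I \otimes G^x_\bot = -(G^x \otimes I - I \otimes G^x)$, note that the squared norm of the applied difference is therefore unchanged, and invoke \Cref{lem:g-complete-self-consistency}.
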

\begin{proof}
 For any~$x$,
 \begin{equation*}
 G^x_{\bot} \otimes I - I \otimes G^x_{\bot}
 = (I - G^x) \otimes I - I \otimes (I - G^x)
 = I \otimes G^x - G^x \otimes I.
 \end{equation*}
Thus,
\begin{equation*}
\E_{\bx}  \| (G^{\bx}_{\bot} \ot I -I \ot
      G^{\bx}_{\bot}) \ket{\psi} \|^2
= \E_{\bx}  \| (I \otimes G^{\bx} - G^{\bx} \otimes I) \ket{\psi} \|^2,
\end{equation*}
which is at most $\zeta$ by \Cref{lem:g-complete-self-consistency}.
\end{proof}

\subsubsection{Commutativity of $G_{\perp}$}

\begin{lemma}[Commutativity with~$G_g^x$ implies commutativity with~$G^x$]\label{lem:commutativity-switcheroo}
Let $M = \{M^x_o\}$ be a projective sub-measurement with outcomes in some set~$\calO$.
Suppose that
\begin{equation}\label{eq:M-self-consistent}
M^x_o \otimes I \approx_{\omega} I \otimes M^x_o,
\end{equation}
and
\begin{equation}\label{eq:M-commutes-with-G}
G^x_g M^y_o \otimes I \approx_{\chi} M^y_o G^x_g \otimes I
\end{equation}
over independent and uniformly random $\bx, \by \sim \F_q$. Then
\begin{equation*}
G^x M^y_o \otimes I \approx_{6\sqrt{\zeta} + 6\sqrt{\omega} + 4\sqrt{\chi}} M^y_o G^x \otimes I.
\end{equation*}
\end{lemma}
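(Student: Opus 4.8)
\textbf{Proof plan for \Cref{lem:commutativity-switcheroo}.}
The plan is to write $G^x = \sum_g G^x_g$ and expand the commutator $[G^x, M^y_o]$ as a sum over $g$ of the individual commutators $[G^x_g, M^y_o]$, then bound the resulting quantity by a telescoping chain of $\approx$-steps that alternately move an $M^y_o$ across the tensor factor (using \Cref{eq:M-self-consistent}), swap a $G^x_g$ past $M^y_o$ (using \Cref{eq:M-commutes-with-G}), and re-collapse $\sum_g G^x_g$ back into $G^x$. The target quantity is
\begin{equation*}
\E_{\bx,\by}\sum_o \bra{\psi}\big(G^{\bx} M^{\by}_o - M^{\by}_o G^{\bx}\big)^\dagger\big(G^{\bx} M^{\by}_o - M^{\by}_o G^{\bx}\big)\ot I\ket{\psi},
\end{equation*}
and since $G^x$ is projective and the $M^y_o$ are projective, expanding the square and using Hermiticity reduces this (as in the proof of \Cref{thm:com-main}) to showing that
\begin{equation*}
\E_{\bx,\by}\sum_o \bra{\psi} M^{\by}_o G^{\bx} M^{\by}_o \ot I \ket{\psi}
\;\approx\;
\E_{\bx,\by}\sum_o \bra{\psi} G^{\bx} M^{\by}_o G^{\bx} M^{\by}_o \ot I \ket{\psi},
\end{equation*}
with a small error, and that both sides are close to $\E_{\bx,\by}\sum_o\bra{\psi}G^{\bx}\ot M^{\by}_o\ket{\psi}$ (or some common reference quantity).

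Concretely I would proceed as follows. First, for the ``easy'' side, write $\E_{\bx,\by}\sum_o\bra{\psi}M^{\by}_o G^{\bx} M^{\by}_o\ot I\ket{\psi} = \E_{\by}\sum_o\bra{\psi}M^{\by}_o G M^{\by}_o\ot I\ket{\psi}$ and apply \Cref{prop:switch-sandwich} with $B = G$ and the sub-measurement $M^{\by}$ (whose hypothesis is exactly \Cref{eq:M-self-consistent}) to get this is $\approx_{2\sqrt{\omega}} \E_{\by}\sum_o\bra{\psi}G\ot M^{\by}_o\ket{\psi}$. For the other side, expand $G^{\bx} = \sum_g G^{\bx}_g$ in $\E_{\bx,\by}\sum_o\bra{\psi}G^{\bx}M^{\by}_o G^{\bx}M^{\by}_o\ot I\ket{\psi}$; in each term, one $G^{\bx}_g$ sits to the left of an $M^{\by}_o$, so using \Cref{eq:M-commutes-with-G} together with \Cref{prop:closeness-of-ip} (taking the ``$A$'' and ``$B$'' of that proposition to be $G^x_g M^y_o\ot I$ and $M^y_o G^x_g\ot I$, and the ``$C$'' matrices built from $G^x G^x_g M^y_o$, which satisfy the normalization $\sum_g (\sum_o C)(\sum_o C)^\dagger \le I$ by projectivity of $G^x$ and $M^y$, analogous to \Cref{lem:normalization-condition}) one moves that $G^{\bx}_g$ past $M^{\by}_o$ at a cost of $O(\sqrt{\chi})$, and similarly for the second interior $G^{\bx}$; then re-summing over $g$ recombines the $G^{\bx}_g$'s into $G^{\bx}$, and using projectivity $G^{\bx}G^{\bx} = G^{\bx}$ collapses adjacent copies. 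Along the way I also use \Cref{lem:g-complete-self-consistency} (which gives $G^x\ot I\approx_\zeta I\ot G^x$) and \Cref{prop:closeness-of-ip} to shuttle a $G^{\bx}$ across the tensor product when needed, incurring $O(\sqrt{\zeta})$ each time, eventually landing on $\E_{\bx,\by}\sum_o\bra{\psi}G^{\bx}\ot M^{\by}_o\ket{\psi} = \E_{\by}\sum_o\bra{\psi}G\ot M^{\by}_o\ket{\psi}$, the same reference quantity as the easy side.

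The bookkeeping obstacle — and the one place I would be careful — is getting the normalization conditions right when invoking \Cref{prop:closeness-of-ip}: the ``$C^x_{a,b}$'' matrices one forms are products like $G^x G^x_g M^y_o$, and one must check $\sum_a(\sum_b C^x_{a,b})(\sum_b C^x_{a,b})^\dagger\le I$, which works because $G^x$ and the $M^y_o$ are projective and the $\{G^x_g\}$ and $\{M^y_o\}$ are (sub-)measurements, exactly as in \Cref{lem:normalization-condition}. The other subtlety is that $M^y$ need not be a \emph{complete} measurement, so when summing $\sum_o M^y_o$ I should only ever use $\sum_o M^y_o \le I$, never $=I$; this is harmless since every step only needs the sub-measurement inequality. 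Finally I would collect all the error terms: three applications of type-\Cref{eq:M-self-consistent} moves ($\approx_{2\sqrt{\omega}}$ once, plus two single-$M$ shuttles at $\sqrt{\omega}$ each, say), a couple of $G$-shuttles at $\sqrt{\zeta}$, and up to four $G_g$-past-$M$ swaps at $\sqrt{\chi}$, then double the sum because the commutator square contributes a factor $2$ and apply \Cref{prop:triangle-inequality-for-approx_delta}; this should comfortably fit inside the claimed bound $6\sqrt{\zeta}+6\sqrt{\omega}+4\sqrt{\chi}$. The statement for $G^x_\bot = I - G^x$ then follows immediately as a corollary, since $G^x_\bot M^y_o\ot I - M^y_o G^x_\bot\ot I = M^y_o G^x\ot I - G^x M^y_o\ot I$, just as in \Cref{cor:g-bot-self-consistency}.
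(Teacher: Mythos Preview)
Your overall approach matches the paper's: expand the commutator square, show each resulting term is close to the common reference quantity $\bra{\psi}G\ot M\ket{\psi}$, and then the differences cancel. Your treatment of the $M^{\by}_o G^{\bx} M^{\by}_o$ term via \Cref{prop:switch-sandwich} and your sketch for the quartic term $G^{\bx}M^{\by}_o G^{\bx}M^{\by}_o$ (expand one $G^{\bx}=\sum_g G^{\bx}_g$, commute $G^{\bx}_g$ past $M^{\by}_o$ using $\chi$, shuttle across the tensor using $\zeta$, recombine) are essentially what the paper does.

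There is, however, one genuine gap. Your claim that the square ``reduces (as in the proof of \Cref{thm:com-main})'' to just $M^{\by}_o G^{\bx} M^{\by}_o$ versus $G^{\bx}M^{\by}_o G^{\bx}M^{\by}_o$ is not correct here. In \Cref{thm:com-main} both operators are drawn from the same family $\{G^x_g\}$ with the same distribution, so by relabeling $(\bx,g)\leftrightarrow(\by,h)$ the two ``sandwich'' terms coincide. In the present lemma the two operators are $G^{\bx}$ (no outcome index) and $M^{\by}_o$ (different family, outcome-indexed), and there is no such symmetry. The full expansion of $\|(G^{\bx}M^{\by}_o - M^{\by}_o G^{\bx})\ot I\ket{\psi}\|^2$ using projectivity gives
\[
\bra{\psi}M^{\by}_o G^{\bx} M^{\by}_o\ot I\ket{\psi}
+\bra{\psi}G^{\bx} M^{\by}_o G^{\bx}\ot I\ket{\psi}
-\bra{\psi}G^{\bx}M^{\by}_o G^{\bx}M^{\by}_o\ot I\ket{\psi}
-\bra{\psi}M^{\by}_o G^{\bx}M^{\by}_o G^{\bx}\ot I\ket{\psi},
\]
so you are missing the term $\E_{\bx,\by}\sum_o\bra{\psi}G^{\bx}M^{\by}_o G^{\bx}\ot I\ket{\psi}$. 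This term is handled exactly as you handle the first one, but with the roles of $G$ and $M$ swapped: use projectivity of $M$, then \Cref{prop:switch-sandwich} with \Cref{lem:g-complete-self-consistency} (this is where the $2\sqrt{\zeta}$ in the final bound first enters, before the quartic-term analysis). Once you add this term, the paper's error accounting becomes transparent: $2\sqrt{\omega}$ from $MGM$, $2\sqrt{\zeta}$ from $GMG$, and $2\sqrt{\zeta}+2\sqrt{\omega}+2\sqrt{\chi}$ from each of the two conjugate quartic terms, summing to exactly $6\sqrt{\zeta}+6\sqrt{\omega}+4\sqrt{\chi}$.
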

\begin{proof}
The error we wish to bound is
\begin{align}
&\E_{\bx} \sum_o \Vert (G^{\bx} M^{\by}_o - M^{\by}_o G^{\bx}) \otimes I \ket{\psi}\Vert^2\nonumber\\
=~&\E_{\bx, \by} \sum_o \bra{\psi} M^{\by}_o (G^{\bx})^2 M^{\by}_o \otimes I \ket{\psi} + \E_{\bx, \by} \sum_o \bra{\psi} G^{\bx} (M^{\by}_o)^2 G^{\bx} \otimes I \ket{\psi}\nonumber\\
& \qquad - \E_{\bx, \by} \sum_o  \bra{\psi} G^{\bx} M^{\by}_o
        G^{\bx} M^{\by}_o \ot I \ket{\psi}
        	- \E_{\bx, \by} \sum_g \bra{\psi}
        M^{\by}_o G^{\bx} M^{\by}_o G^{\bx} \ot I \ket{\psi}.\label{eq:g-commute-with-gg-error}
\end{align}
We will show that all four terms in \Cref{eq:g-commute-with-gg-error} are close to $\bra{\psi} G \otimes M \ket{\psi}$,
where $M = \E_{\by} \sum_o M^{\by}_o$.

For the first term in \Cref{eq:g-commute-with-gg-error}, we have
\begin{align*}
\E_{\bx, \by} \sum_o \bra{\psi} M^{\by}_o (G^{\bx})^2 M^{\by}_o\ot I \ket{\psi}
&= \E_{\bx, \by} \sum_o \bra{\psi} M^{\by}_o G^{\bx} M^{\by}_o\ot I \ket{\psi} \tag{because~$G$ is projective}\\
&= \E_{\by}\sum_o \bra{\psi} M^{\by}_o G M^{\by}_o \ot I \ket{\psi}\\
&\approx_{2\sqrt{\omega}} \E_{\by}\sum_o \bra{\psi} G \ot M^{\by}_o  \ket{\psi} \tag{by \Cref{prop:switch-sandwich} and \Cref{eq:M-self-consistent}}\\
&= \bra{\psi} G \ot M \ket{\psi}.
\end{align*}
Similarly, for the second term in \Cref{eq:g-commute-with-gg-error}, we have
\begin{align*}
\E_{\bx, \by} \sum_o \bra{\psi} G^{\bx} (M^{\by}_o)^2 G^{\bx} \otimes I \ket{\psi}
&=\E_{\bx, \by} \sum_o \bra{\psi} G^{\bx} M^{\by}_o G^{\bx} \otimes I \ket{\psi} \tag{because~$M$ is projective}\\
&=\E_{\bx}  \bra{\psi} G^{\bx} M G^{\bx} \otimes I \ket{\psi}\\
&\approx_{2\sqrt{\zeta}} \E_{\bx} \bra{\psi} M \ot G^{\bx}  \ket{\psi} \tag{by \Cref{prop:switch-sandwich} and \Cref{lem:g-complete-self-consistency}}\\
&= \bra{\psi} M \ot G \ket{\psi}.
\end{align*}

For the third term in \Cref{eq:g-commute-with-gg-error}, we begin by claiming that
\begin{align}
\E_{\bx, \by} \sum_o \bra{\psi} G^{\bx} M^{\by}_o
        G^{\bx} M^{\by}_o \ot I \ket{\psi}
& = 
\E_{\bx, \by} \sum_{o, g}  \bra{\psi} G^{\bx} M^{\by}_o
        G^{\bx}_{g} M^{\by}_o \ot I \ket{\psi}\nonumber\\
& = 
\E_{\bx, \by} \sum_{o, g}  \bra{\psi} G^{\bx} M^{\by}_o G^{\bx}_g
        G^{\bx}_{g} M^{\by}_o \ot I \ket{\psi} \tag{because~$G$ is projective}\\
& \approx_{\sqrt{\chi}} 
\E_{\bx, \by} \sum_{o, g}  \bra{\psi}  G^{\bx} M^{\by}_o G^{\bx}_g
        M^{\by}_o G^{\bx}_{g}  \ot I \ket{\psi}.\label{eq:split-G-and-commute}
\end{align}
To show this, we bound the magnitude of the difference.
\begin{multline*}
\Big|\E_{\bx, \by} \sum_{o, g}  \bra{\psi}  (G^{\bx}
        M^{\by}_{o} G^{\bx}_g \ot I) \cdot ((G^{\bx}_g M^{\by}_o - M^{\by}_o G^{\bx}_{g}) \otimes I)\ket{\psi}\Big|\\
\leq \sqrt{\E_{\bx, \by} \sum_{o, g}  \bra{\psi} (G^{\bx} M^{\by}_o G^{\bx}_g
        M^{\by}_{o} G^{\bx}) \ot I\ket{\psi}}\\
\cdot \sqrt{\E_{\bx, \by} \sum_{o, g}  \bra{\psi}  ((M^{\by}_o G^{\bx}_g  -  G^{\bx}_{g} M^{\by}_o) \cdot (G^{\bx}_g M^{\by}_o - M^{\by}_o G^{\bx}_{g}) \otimes I) \ket{\psi}}.
\end{multline*}
The expression inside the first square root is at most~$1$ because~$G$ and~$M$ are sub-measurements.
The expression inside the second square root is at most $\chi$ by \Cref{eq:M-commutes-with-G}.
Next, we claim that
\begin{equation}\label{eq:move-G-for-great-justice}
\eqref{eq:split-G-and-commute}
\approx_{\sqrt{\zeta}} \E_{\bx, \by} \sum_{o, g}  \bra{\psi}   G^{\bx} M^{\by}_o
        G^{\bx}_{g} M^{\by}_o \ot G^{\bx}_g\ket{\psi}.
\end{equation}
To show this, we bound the magnitude of the difference.
\begin{align*}
&\Big|  \E_{\bx, \by} \sum_{o, g}  \bra{\psi} (G^{\bx} M^{\by}_{o} G^{\bx}_g \ot I)
	\cdot ((M^{\by}_{o} \otimes I) \cdot(G^{\bx}_g \otimes I - I \otimes G^{\bx}_g))\ket{\psi}\Big|\\
&\leq  \sqrt{\E_{\bx, \by} \sum_{o, g} \bra{\psi} (G^{\bx} M^{\by}_o G^{\bx}_g M^{\by}_{o} G^{\bx}) \ot I \ket{\psi}}\\
&\quad \cdot \sqrt{ \E_{\bx, \by} \sum_{o, g}  \bra{\psi}   ((G^{\bx}_g \otimes I - I \otimes G^{\bx}_g) \cdot (M^{\by}_{o} \otimes I) \cdot (G^{\bx}_g \otimes I - I \otimes G^{\bx}_g)) \ket{\psi}}.
\end{align*}
The expression inside the first square root is at most~$1$ because~$G$ and~$M$ are sub-measurements.
The expression inside the second square root is
\begin{align*}
& \E_{\bx}\sum_g \bra{\psi}(G^{\bx}_g \ot I - I \ot G^{\bx}_g) \cdot\Big(\E_{\by} \sum_o M^{\by}_{o} \ot I\Big) \cdot (G^{\bx}_g \ot I - I \ot G^{\bx}_g) \ket{\psi}\\
\leq~&\E_{\bx}\sum_g \bra{\psi}(G^{\bx}_g \ot I - I \ot G^{\bx}_g)^2 \ket{\psi}. \tag{because~$M$ is a sub-measurement}
\end{align*} 
This is at most~$\zeta$ by \Cref{item:ld-pasting-self-consistency}.
Next, we claim that
\begin{equation}\label{eq:move-G-for-even-greater-justice}
\eqref{eq:move-G-for-great-justice}
\approx_{\sqrt{\zeta}} \E_{\bx, \by} \sum_{o, g}  \bra{\psi}  G^{\bx}_g  G^{\bx} M^{\by}_o
        G^{\bx}_{g} M^{\by}_o \ot I\ket{\psi}.
\end{equation}
To show this, we bound the magnitude of the difference.
\begin{align*}
&\Big|  \E_{\bx, \by} \sum_{o, g}  \bra{\psi} ((G^{\bx}_g \ot I - I \ot G^{\bx}_g) \cdot (G^{\bx} M^{\by}_o \otimes I)) 
	\cdot  (G^{\bx}_{g} M^{\by}_o \otimes I) 
        \ket{\psi}\Big|\\
&\leq \sqrt{ \E_{\bx, \by} \sum_{o, g}  \bra{\psi} ((G^{\bx}_g \ot I - I \ot G^{\bx}_g) \cdot (G^{\bx} M^{\by}_o G^{\bx} \ot I) \cdot (G^{\bx}_g \ot I - I \ot G^{\bx}_g))\ket{\psi}}\\
&\quad \cdot \sqrt{ \E_{\bx, \by} \sum_{o, g}  \bra{\psi}  (M^{\by}_{o} G^{\bx}_g M^{\by}_o) \otimes I \ket{\psi}}.
\end{align*}
The expression inside the first square root is
\begin{align*}
& \E_{\bx} \sum_{g}  \bra{\psi} ((G^{\bx}_g \ot I - I \ot G^{\bx}_g) \cdot \Big(\E_{\by} \sum_o G^{\bx} M^{\by}_o G^{\bx} \ot I\Big) \cdot (G^{\bx}_g \ot I - I \ot G^{\bx}_g))\ket{\psi}\\
\leq~&\E_{\bx} \sum_{g}  \bra{\psi} (G^{\bx}_g \ot I - I \ot G^{\bx}_g)^2\ket{\psi}. \tag{because~$G$ and~$M$ are sub-measurements}
\end{align*} 
This is at most~$\zeta$ by \Cref{item:ld-pasting-self-consistency}.
The expression inside the second square root is at most~$1$ because~$G$ and~$M$ are sub-measurements.
Next, we claim that
\begin{align}
\eqref{eq:move-G-for-even-greater-justice}
& = \E_{\bx, \by} \sum_{o, g}  \bra{\psi}   G^{\bx}_{g} M^{\by}_o
        G^{\bx}_{g} M^{\by}_o \ot I\ket{\psi} \tag{because~$G$ is projective}\\
& \approx_{\sqrt{\chi}} 
\E_{\bx, \by} \sum_{o,g}  \bra{\psi}   M^{\by}_{o} G^{\bx}_g G^{\bx}_g
        M^{\by}_{o}  \ot I\ket{\psi}. \label{eq:commute-the-G-yet-again}
\end{align}
To show this, we bound the magnitude of the difference.
\begin{multline*}
\Big|\E_{\bx, \by} \sum_{o, g}  \bra{\psi} ((G^{\bx}_g
        M^{\by}_{o} - M^{\by}_o G^{\bx}_g)  \ot I)  \cdot(G^{\bx}_{g} M^{\by}_o  \ot I ) \ket{\psi}\Big|\\
\leq\sqrt{\E_{\bx, \by} \sum_{o, g} \bra{\psi} ((G^{\bx}_g M^{\by}_o - M^{\by}_o G^{\bx}_g) \cdot (M^{\by}_o
        G^{\bx}_{g} - G^{\bx}_g M^{\by}_o)  \ot I)\ket{\psi}}\\
        \cdot \sqrt{\E_{\bx, \by} \sum_{o, g}  \bra{\psi} (M^{\by}_o G^{\bx}_g M^{\by}_o) \otimes I \ket{\psi}}.
\end{multline*}
The expression inside the first square root is at most $\chi$ by \Cref{eq:M-commutes-with-G}.
The expression inside the second square root is at most~$1$ because~$G$ and~$M$ are sub-measurements.
Finally,
\begin{align*}
\eqref{eq:commute-the-G-yet-again}
&=\E_{\bx, \by} \sum_{o, g}  \bra{\psi}   M^{\by}_{o} G^{\bx}_g
        M^{\by}_{o}  \ot I\ket{\psi} \tag{because~$G$ is projective}\\
&=\E_{\by} \sum_o \bra{\psi} M^{\by}_{o} G M^{\by}_{o} \otimes I \ket{\psi}\\
&\approx_{2\sqrt{\omega}} \E_{\by} \sum_o \bra{\psi} G \ot M^{\by}_o  \ket{\psi} \tag{by \Cref{prop:switch-sandwich} and \Cref{eq:M-self-consistent}}\\
&= \bra{\psi} G \ot M \ket{\psi}.
\end{align*}
In total, this shows that
\begin{equation}\label{eq:term-three-for-use-right-now}
\E_{\bx, \by} \sum_o  \bra{\psi} G^{\bx} M^{\by}_o
        G^{\bx} M^{\by}_o \ot I \ket{\psi}
        \approx_{2 \sqrt{\zeta} + 2 \sqrt{\omega} + 2 \sqrt{\chi}} \bra{\psi} G \ot M \ket{\psi}.
\end{equation}
The fourth term in \Cref{eq:g-commute-with-gg-error} is the Hermitian conjugate of the third term. As a result, \Cref{eq:term-three-for-use-right-now} implies that
\begin{equation*}
\E_{\bx, \by} \sum_o  \bra{\psi} M^{\by}_o G^{\bx} M^{\by}_o G^{\bx} \ot I \ket{\psi}
        \approx_{2 \sqrt{\zeta} + 2 \sqrt{\omega} + 2 \sqrt{\chi}} \bra{\psi} G \ot M \ket{\psi}
\end{equation*}
as well.

In total, this gives an error of
\begin{equation*}
2\sqrt{\omega}
+ 2\sqrt{\zeta}
+ 2 \cdot\left(2 \sqrt{\zeta} + 2 \sqrt{\omega} + 2 \sqrt{\chi}\right)
= 6 \sqrt{\zeta} + 6 \sqrt{\omega} + 4 \sqrt{\chi}.
\end{equation*}
This proves the claimed bound.
\end{proof}

\begin{corollary}[Commutativity of~$G$'s complete part]\label{cor:commuting-with-G-complete}
The following commutation relations hold.
\begin{align*}
G^x_g  G^y \ot I &\approx_{\nu_2}  G^y G^x_g \otimes I\\
G^x G^y \ot I &\approx_{\nu_2}  G^y G^x \otimes I,
\end{align*}
where
\begin{equation*}
\nu_2 = 36 m \cdot \left(\gamma^{1/16} +  \zeta^{1/16} + (d/q)^{1/16}\right).
\end{equation*}
\end{corollary}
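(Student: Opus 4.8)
The plan is to obtain both commutation relations by two successive applications of the ``switcheroo'' lemma \Cref{lem:commutativity-switcheroo}, using \Cref{thm:com-main} as the base input. Since the $G^x_g$ do not depend on the evaluation point, \Cref{thm:com-main} says precisely that $G^x_g G^y_h \ot I \approx_{\nu} G^y_h G^x_g \ot I$ on average over independent uniform $\bx,\by \sim \F_q$, with $\nu = 30m(\gamma^{1/4}+\zeta^{1/4}+(d/q)^{1/4})$. As usual we may reduce to the case $\gamma,\zeta,d/q \leq 1$: if any of them is $\geq 1$ then $\nu_2 \geq 36$, while the left-hand sides are bounded by $4$ via the triangle inequality for vectors squared (expanding the square exactly as at the start of the proof of \Cref{thm:com-main}).

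\emph{First relation.} I would apply \Cref{lem:commutativity-switcheroo} with the choice $M^x_o := G^x_o$; this is a projective sub-measurement since the $G^x_g$ are projective and $\sum_o G^x_o = G^x \leq I$. The hypothesis $M^x_o \ot I \approx_{\omega} I \ot M^x_o$ holds with $\omega = \zeta$ by \Cref{item:ld-pasting-self-consistency}, and the hypothesis $G^x_g M^y_o \ot I \approx_{\chi} M^y_o G^x_g \ot I$ holds with $\chi = \nu$ by \Cref{thm:com-main}. The lemma then yields $G^x G^y_o \ot I \approx_{12\sqrt{\zeta}+4\sqrt{\nu}} G^y_o G^x \ot I$ on average over $\bx,\by$; relabeling $\bx \leftrightarrow \by$ (legitimate as they are i.i.d.) gives exactly $G^x_g G^y \ot I \approx_{\eta_1} G^y G^x_g \ot I$ with $\eta_1 = 12\sqrt{\zeta}+4\sqrt{\nu}$. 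Using $\sqrt{a+b+c}\leq \sqrt{a}+\sqrt{b}+\sqrt{c}$, $\sqrt{m}\leq m$, and $\zeta \leq 1$, this collapses to $\eta_1 \leq 36m(\gamma^{1/8}+\zeta^{1/8}+(d/q)^{1/8}) \leq \nu_2$.

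\emph{Second relation.} I would apply \Cref{lem:commutativity-switcheroo} once more, this time with $M^x_o := G^x$ regarded as a single-outcome family. Here $G^x$ is a projection (a sum of orthogonal projections) with $G^x \leq I$, so $\{M^x_o\}$ is again a projective sub-measurement. Now $M^x_o \ot I \approx_{\zeta} I \ot M^x_o$ is \Cref{lem:g-complete-self-consistency}, and $G^x_g M^y_o \ot I \approx_{\chi} M^y_o G^x_g \ot I$ with $\chi = \eta_1$ is the first relation just proved. The lemma gives $G^x G^y \ot I \approx_{12\sqrt{\zeta}+4\sqrt{\eta_1}} G^y G^x \ot I$, and the same elementary estimates (pulling a square root through $\eta_1$ turns the $1/8$-powers into $1/16$-powers, and $12\sqrt{\zeta}\leq 12\zeta^{1/16}$) collapse the error into $36m(\gamma^{1/16}+\zeta^{1/16}+(d/q)^{1/16}) = \nu_2$. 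There is no real obstacle in this argument — it is pure error bookkeeping on top of \Cref{thm:com-main} and \Cref{lem:commutativity-switcheroo}; the only points needing care are checking that the families $\{G^x_g\}$ and $\{G^x\}$ satisfy the projectivity and sub-measurement hypotheses of \Cref{lem:commutativity-switcheroo}, and managing the nested square roots so that the two applications land exactly at the stated $1/16$-th power $\nu_2$.
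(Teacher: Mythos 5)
Your proposal matches the paper's proof almost line for line: the paper also proves the first relation by applying \Cref{lem:commutativity-switcheroo} with $M^x_o := G^x_g$ (using \Cref{item:ld-pasting-self-consistency} and \Cref{thm:com-main} as the two hypotheses, then relabeling $x \leftrightarrow y$), and the second by applying the same lemma again with the single-outcome family $M^x_o := G^x$ (using \Cref{lem:g-complete-self-consistency} and the first relation), arriving at $\theta_1 = 12\sqrt{\zeta} + 4\sqrt{\nu_{\rmcom}}$ and $\theta_2 = 12\sqrt{\zeta} + 4\sqrt{\theta_1}$ with the identical collapse to $\nu_2$. Your extra reduction to the regime $\gamma,\zeta,d/q \leq 1$ is a harmless bookkeeping observation the paper leaves implicit; otherwise the two arguments are the same.
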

\begin{proof}
By \Cref{eq:quote-com-main},
\begin{equation}\label{eq:com-main-copy}
G^x_g G^y_h \otimes I \approx_{\nu_{\rmcom}} G^y_h G^x_g \otimes I.
\end{equation}
Now we apply \Cref{lem:commutativity-switcheroo} to \Cref{eq:com-main-copy}.
To do so, we set the ``$\{M^y_o\}$" sub-measurement to be $\{G^x_g\}$,
and therefore $\calO = \polyfunc{m}{q}{d}$.
This implies that
\begin{equation}\label{eq:applied-the-lemma-to-com-main-copy}
G^x_g G^y \otimes I \approx_{\theta_1} G^y G^x_g \otimes I
\end{equation}
for $\theta_1 = 12 \sqrt{\zeta} + 4 \sqrt{\nu_{\rmcom}}$.

Next, we apply \Cref{lem:commutativity-switcheroo} to \Cref{eq:applied-the-lemma-to-com-main-copy}.
This time, we let $\calO$ be a set containing a single outcome,
and for this outcome~$o$, we set $M^y_o = G^y$.
This implies that
\begin{equation*}
G^x G^y \otimes I \approx_{\theta_2} G^y G^x \otimes I
\end{equation*}
for $\theta_2 = 12 \sqrt{\zeta} + 4 \sqrt{\theta_1}$.
This uses \Cref{lem:g-complete-self-consistency} for the strong self-consistency of $\{M^y_o\}$.

We now show that $\nu_2$ bounds $\theta_1$ and $\theta_2$.
First, using $\sqrt{30} \leq 6$, we have
\begin{align*}
\theta_1 = 12 \sqrt{\zeta} + 4 \sqrt{\nu_{\rmcom}}
&= 12 \sqrt{\zeta} + 4 \sqrt{30 m \cdot \left(\gamma^{1/4} + \zeta^{1/4} + (d/q)^{1/4}\right)}\\
&\leq 12 \zeta^{1/8} + 24m \cdot \left(\gamma^{1/8} +\zeta^{1/8} +  (d/q)^{1/8}\right)\\
&\leq 36 m \cdot \left(\gamma^{1/8} + \zeta^{1/8} +  (d/q)^{1/8}\right).
\end{align*}
This is clearly less than $\nu_2$.
Next,  we have
\begin{align*}
\theta_2 = 12 \sqrt{\zeta} + 4 \sqrt{\theta_1}
&\leq 12 \sqrt{\zeta} + 4 \sqrt{36 m \cdot \left(\gamma^{1/8} + \zeta^{1/8} +  (d/q)^{1/8}\right)}\\
&\leq 12 \zeta^{1/16} + 24m \cdot \left(\gamma^{1/16} + \zeta^{1/16} + (d/q)^{1/16}\right)\\
&\leq 36 m \cdot \left( \gamma^{1/16} + \zeta^{1/16} + (d/q)^{1/16}\right).
\end{align*}
This is equal to $\nu_2$, which completes the proof.
\end{proof}

\begin{corollary}[Commutativity of~$G$'s incomplete part]\label{cor:commuting-with-G-incomplete}
The following commutation relations hold.
\begin{align*}
G^x_g  G^y_{\bot} \ot I &\approx_{\nu_2}  G^y_{\bot} G^x_g \otimes I\\
G^x_{\bot} G^y_{\bot} \ot I &\approx_{\nu_2}  G^y_{\bot} G^x_{\bot} \otimes I,
\end{align*}
where $\nu_2$ is as in \Cref{cor:commuting-with-G-complete}.
\end{corollary}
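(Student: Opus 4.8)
The plan is to deduce \Cref{cor:commuting-with-G-incomplete} from \Cref{cor:commuting-with-G-complete} by exactly the same elementary substitution that was used to obtain \Cref{cor:g-bot-self-consistency} from \Cref{lem:g-complete-self-consistency}. The point is that since $G^x_\bot = I - G^x$, the commutator of $G^x_\bot$ with any operator differs from the commutator of $G^x$ with that operator only by a sign (and, when both incomplete parts appear, not even by a sign). Concretely, for any $x, y \in \F_q$ and any $g \in \polyfunc{m}{q}{d}$ I would first record the algebraic identities
\begin{equation*}
G^x_g G^y_\bot - G^y_\bot G^x_g
= G^x_g (I - G^y) - (I - G^y) G^x_g
= G^y G^x_g - G^x_g G^y,
\end{equation*}
\begin{equation*}
G^x_\bot G^y_\bot - G^y_\bot G^x_\bot
= (I - G^x)(I - G^y) - (I - G^y)(I - G^x)
= G^x G^y - G^y G^x.
\end{equation*}

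Given these, the first relation follows by taking norms squared of the first identity (tensored with $I$, applied to $\ket{\psi}$), summing over $g \in \polyfunc{m}{q}{d}$, and averaging over independent uniform $\bx, \by \sim \F_q$:
\begin{equation*}
\E_{\bx, \by} \sum_g \Vert (G^{\bx}_g G^{\by}_\bot - G^{\by}_\bot G^{\bx}_g) \otimes I \ket{\psi} \Vert^2
= \E_{\bx, \by} \sum_g \Vert (G^{\bx}_g G^{\by} - G^{\by} G^{\bx}_g) \otimes I \ket{\psi} \Vert^2,
\end{equation*}
and the right-hand side is at most $\nu_2$ by the first commutation relation of \Cref{cor:commuting-with-G-complete}; hence $G^x_g G^y_\bot \otimes I \approx_{\nu_2} G^y_\bot G^x_g \otimes I$. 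The second relation is obtained identically, using the second identity above:
\begin{equation*}
\E_{\bx, \by} \Vert (G^{\bx}_\bot G^{\by}_\bot - G^{\by}_\bot G^{\bx}_\bot) \otimes I \ket{\psi} \Vert^2
= \E_{\bx, \by} \Vert (G^{\bx} G^{\by} - G^{\by} G^{\bx}) \otimes I \ket{\psi} \Vert^2
\leq \nu_2,
\end{equation*}
where the final inequality is the second commutation relation of \Cref{cor:commuting-with-G-complete}.

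I do not expect any real obstacle: the entire content is the sign cancellation in the commutators, and the error constant transfers with no loss, which is exactly why the same $\nu_2$ appears in the statement. The only point requiring a moment of care is the bookkeeping of the outcome index — for the $G^x_g G^y_\bot$ relation the ``$\approx_{\nu_2}$'' is indexed by the outcome $g$ (the $\bot$ slot contributes a single term), so the sum over $g$ above is precisely the sum in the definition of ``$\approx_{\nu_2}$'' for the corresponding relation in \Cref{cor:commuting-with-G-complete}, and likewise the second relation carries a single outcome on each side; in neither case is an extra multiplicative factor introduced.
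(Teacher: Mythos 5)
Your proof is correct and is essentially identical to the paper's: the same two algebraic commutator identities $G^x_g G^y_\bot - G^y_\bot G^x_g = G^y G^x_g - G^x_g G^y$ and $G^x_\bot G^y_\bot - G^y_\bot G^x_\bot = G^x G^y - G^y G^x$, followed by squaring, summing/averaging, and invoking \Cref{cor:commuting-with-G-complete}. The closing bookkeeping remark about the outcome index is accurate but not needed beyond what you already wrote.
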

\begin{proof}
First, we note that
\begin{equation*}
G^x_g  G^y_{\bot} - G^y_{\bot} G^x_g
= G^x_g  \cdot (I - G^y) - (I- G^y)\cdot G^x_g
= G^y G^x_g - G^x_g G^y.
\end{equation*}
Hence, by \Cref{cor:commuting-with-G-complete},
\begin{equation*}
\E_{\bx, \by}\sum_g \Vert (G^{\bx}_g  G^{\by}_{\bot} - G^{\by}_{\bot} G^{\bx}_g) \otimes I \ket{\psi} \Vert^2
= \E_{\bx, \by}\sum_g \Vert (G^{\by} G^{\bx}_g - G^{\bx}_g G^{\by}) \otimes I \ket{\psi} \Vert^2
\leq \nu_2.
\end{equation*}
Next, we note that
\begin{align*}
G^x_{\bot} G^y_{\bot}
- G^y_{\bot} G^x_{\bot}
&= (I - G^x) \cdot (I - G^y) - (I - G^y) \cdot (I - G^x)\\
& = (I - G^x - G^y + G^x G^y) - (I - G^y - G^x + G^y G^x)\\
&= G^x G^y - G^y G^x.
\end{align*}
As a result, by \Cref{cor:commuting-with-G-complete},
\begin{equation*}
\E_{\bx, \by} \Vert (G^{\bx}_{\bot}  G^{\by}_{\bot} - G^{\by}_{\bot} G^{\bx}_{\bot}) \otimes I \ket{\psi} \Vert^2
= \E_{\bx, \by} \Vert (G^{\by} G^{\bx} - G^{\bx} G^{\by}) \otimes I \ket{\psi} \Vert^2
\leq \nu_2.\qedhere
\end{equation*}
\end{proof}

\subsubsection{Putting everything together}

Now we combine the results of the previous two sections
to show our strong self-consistency and commutation results for~$\wG$.

\begin{corollary}[Strong self-consistency and commutation of~$\wG$]\label{cor:G-hat-facts}
  $\widehat{G}$ obeys the
  following strong self-consistency and commutation properties.
  \begin{align}
    \widehat{G}^{x}_{g} \ot I &\approx_{2\zeta} I \ot \widehat{G}^{x}_{g}, \label{eq:gselfconall}\\
    \widehat{G}^{x}_{g}\widehat{G}^{y}_{h} \ot I &\approx_{\nu_3} \widehat{G}^{y}_{h}
                               \widehat{G}^{x}_{g} \ot I, \label{eq:gcomall}
  \end{align}
  where
  \[ \nu_{3} = 138 m \cdot \left(\zeta^{1/16} + \gamma^{1/16} + (d/q)^{1/16}\right). \]
\end{corollary}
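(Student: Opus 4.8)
The plan is to prove both displays by a short case analysis keyed to the definition of $\widehat{G}$: each outcome of $\widehat{G}^x$ is either a polynomial $g\in\polyfunc{m}{q}{d}$, in which case $\widehat{G}^x_g = G^x_g$, or the special symbol $\bot$, in which case $\widehat{G}^x_\bot = G^x_\bot = I - G^x$. So in each of \eqref{eq:gselfconall} and \eqref{eq:gcomall} I would expand the relevant $\approx$-relation as a sum over outcomes in $\calP^+(m,q,d)$, split that sum into the polynomial outcomes and the single outcome $\bot$, bound each piece by a relation already established for $G$ and for its incomplete part $G_\bot$, and add up the errors. As in the proof of \Cref{thm:ld-pasting} one may assume $\gamma,\zeta,d/q\leq 1$, since otherwise $\nu_3\geq 138m\geq 2$ while the left-hand side of each display is at most a universal constant using that $\widehat{G}$ is a projective sub-measurement.

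For \eqref{eq:gselfconall} I would write
\begin{equation*}
\E_{\bx}\sum_{g\in\calP^+(m,q,d)}\bigl\Vert(\widehat{G}^{\bx}_g\ot I - I\ot\widehat{G}^{\bx}_g)\ket{\psi}\bigr\Vert^2
= \E_{\bx}\sum_{g\in\polyfunc{m}{q}{d}}\bigl\Vert(G^{\bx}_g\ot I - I\ot G^{\bx}_g)\ket{\psi}\bigr\Vert^2
+ \E_{\bx}\bigl\Vert(G^{\bx}_\bot\ot I - I\ot G^{\bx}_\bot)\ket{\psi}\bigr\Vert^2,
\end{equation*}
then bound the first term by $\zeta$ via \Cref{item:ld-pasting-self-consistency} and the second by $\zeta$ via \Cref{cor:g-bot-self-consistency}, giving the claimed $2\zeta$.

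For \eqref{eq:gcomall} the plan is to split the sum over $(g,h)\in\calP^+(m,q,d)^2$ into four cases. When $g,h$ are both polynomials the contribution is at most $\nu_{\rmcom}$ by \eqref{eq:quote-com-main}. When $g$ is a polynomial and $h=\bot$ it is at most $\nu_2$ by the first relation of \Cref{cor:commuting-with-G-incomplete}. When $g=\bot$ and $h$ is a polynomial, renaming $\bx\leftrightarrow\by$ (which leaves the i.i.d.\ uniform distribution on $(\bx,\by)$ invariant) reduces it to the previous case, so it is again at most $\nu_2$. When $g=h=\bot$ it is at most $\nu_2$ by the second relation of \Cref{cor:commuting-with-G-incomplete}. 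Summing gives $\nu_{\rmcom}+3\nu_2$, and using $\gamma,\zeta,d/q\leq 1$ to bound $\nu_{\rmcom}=30m(\gamma^{1/4}+\zeta^{1/4}+(d/q)^{1/4})\leq 30m(\gamma^{1/16}+\zeta^{1/16}+(d/q)^{1/16})$ yields the total $\nu_{\rmcom}+3\nu_2\leq 138m(\gamma^{1/16}+\zeta^{1/16}+(d/q)^{1/16})=\nu_3$.

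I do not expect a genuine obstacle here: all the substantive work is in \Cref{lem:commutativity-switcheroo} and its corollaries. The only points needing a little care are the bookkeeping that every $\approx$-relation invoked lives on the same distribution—i.i.d.\ uniform $\bx,\by\sim\F_q$, which is precisely the marginal of the distribution appearing in \Cref{thm:com-main} and \Cref{cor:commuting-with-G-incomplete}—and the use of the $\bx\leftrightarrow\by$ symmetry to dispose of the ``$g=\bot$, $h$ a polynomial'' case without proving a separate lemma.
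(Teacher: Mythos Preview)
Your proposal is correct and follows essentially the same approach as the paper: split the outcome set $\calP^+(m,q,d)$ into polynomials and $\bot$, bound each resulting piece by the already-established relations for $G$ and $G_\bot$ (via \Cref{item:ld-pasting-self-consistency}, \Cref{cor:g-bot-self-consistency}, \eqref{eq:quote-com-main}, and \Cref{cor:commuting-with-G-incomplete}), and sum the errors. Your explicit use of the $\bx\leftrightarrow\by$ symmetry for the ``$g=\bot$, $h$ a polynomial'' case is exactly the step the paper leaves implicit.
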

\begin{proof}
We begin with \Cref{eq:gselfconall}. To prove this, we wish to bound
\begin{align*}
&\E_{\bx} \sum_{g} \Vert(\widehat{G}^{\bx}_g \otimes I - I \otimes \widehat{G}^{\bx}_g) \ket{\psi} \Vert^2\\
=~&\E_{\bx} \sum_{g \in \polyfunc{m}{q}{d}} \Vert(G^{\bx}_g \otimes I - I \otimes G^{\bx}_g) \ket{\psi} \Vert^2
	+ \E_{\bx} \Vert(G^{\bx}_{\bot} \otimes I - I \otimes G^{\bx}_{\bot}) \ket{\psi} \Vert^2\\
\leq~& \zeta + \zeta = 2\zeta,
\end{align*}
by \Cref{item:ld-pasting-self-consistency} and \Cref{cor:g-bot-self-consistency}.

Next, we show \Cref{eq:gcomall}.
To prove this, we wish to bound
\begin{align*}
&\E_{\bx, \by} \sum_{g, h} \Vert(\widehat{G}^{\bx}_g \widehat{G}^{\by}_h - \widehat{G}^{\by}_h \widehat{G}^{\bx}_g) \otimes I \ket{\psi}\Vert^2\\
=~&\E_{\bx, \by} \sum_{g, h \in \polyfunc{m}{q}{d}} \Vert(G^{\bx}_g G^{\by}_h - G^{\by}_h G^{\bx}_g) \otimes I \ket{\psi}\Vert^2
	+ \E_{\bx, \by} \Vert(G^{\bx}_{\bot} G^{\by}_{\bot} - G^{\by}_{\bot} G^{\bx}_{\bot}) \otimes I \ket{\psi}\Vert^2\\
& \quad + \E_{\bx, \by} \sum_{g\in \polyfunc{m}{q}{d}} \Vert(G^{\bx}_g G^{\by}_{\bot} - G^{\by}_{\bot} G^{\bx}_g) \otimes I \ket{\psi}\Vert^2
	+ \E_{\bx, \by} \sum_{h\in \polyfunc{m}{q}{d}} \Vert(G^{\bx}_{\bot} G^{\by}_h - G^{\by}_h G^{\bx}_{\bot}) \otimes I \ket{\psi}\Vert^2\\
\leq~& \nu_{\rmcom} + 3\nu_2,
\end{align*}
by \Cref{eq:quote-com-main} and \Cref{cor:commuting-with-G-incomplete}.
We can therefore bound this by
\begin{align*}
\nu_{\rmcom} + 3\nu_2
&= 30 m \cdot \left(\gamma^{1/4} +\zeta^{1/4} +  (d/q)^{1/4}\right)
	+ 3 \cdot 36 m \cdot \left(\gamma^{1/16} + \zeta^{1/16} +  (d/q)^{1/16}\right)\\
&\leq 30 m \cdot \left(\gamma^{1/16} + \zeta^{1/16} + (d/q)^{1/16}\right)
	+ 3 \cdot 36 m \cdot \left(\gamma^{1/16} + \zeta^{1/16} +  (d/q)^{1/16}\right)\\
&= 138 m \cdot \left( \gamma^{1/16} + \zeta^{1/16} +(d/q)^{1/16}\right).	
\end{align*}
This completes the proof.
\end{proof}

\subsection{Consistency of the sandwich $\widehat{H}$ with~$B$}
\label{sec:ld-sandwiching}

In the next lemmas, we will show that $\widehat{H}$ is consistent with the lines
measurement $B$. To start, we show some self-consistency and
commutativity properties of $\widehat{H}$.

\begin{lemma}[Commuting past multiple $\widehat{G}$'s]
\label{lem:commute-g-half-sandwich}
For all $k \geq 2$,
  \[ \widehat{G}^{x_1}_{g_1} \widehat{G}^{x_2}_{g_2} \cdots \widehat{G}^{x_k}_{g_k} \ot I \approx_{\nu_4}
    \widehat{G}^{x_2}_{g_2} \cdots \widehat{G}^{x_k}_{g_k} \widehat{G}^{x_1}_{g_1} \ot I, \]
where
\begin{equation*}
\nu_4 = 426 k^2 m \cdot \left(\gamma^{1/16} +\zeta^{1/16} +  (d/q)^{1/16}\right).
\end{equation*}
\end{lemma}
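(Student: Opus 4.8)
I would prove this by sliding the leftmost factor $\widehat{G}^{x_1}_{g_1}$ rightward through the block $\widehat{G}^{x_2}_{g_2}\cdots\widehat{G}^{x_k}_{g_k}$ one transposition at a time. Define products $P_0,P_1,\ldots,P_{k-1}$, where $P_j$ is the product in which $\widehat{G}^{x_1}_{g_1}$ has been moved $j$ positions to the right; so $P_0\otimes I$ is the left-hand side, $P_{k-1}\otimes I$ is the right-hand side, and consecutive $P_j$'s differ by a single swap of $\widehat{G}^{x_1}_{g_1}$ with an adjacent $\widehat{G}^{x_i}_{g_i}$. Each step $P_{j}\otimes I \approx P_{j+1}\otimes I$ would be controlled by the pairwise commutation \Cref{eq:gcomall} of \Cref{cor:G-hat-facts}, and the $k-1$ steps combined via the triangle inequality for ``$\approx$'' (\Cref{prop:triangle-inequality-for-approx_delta}), whose multiplicative $(k-1)$-factor is what produces the $k^2$ in $\nu_4$. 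Here the consistency ``$\approx$'' is understood averaged over independent uniform $x_1,\ldots,x_k\sim\F_q$ and summed over $g_1,\ldots,g_k$; crucially, each step must invoke only the two-fold relation \Cref{eq:gcomall}, never the statement for smaller $k$, since a recursion that used the $(k-1)$-case would blow the error up exponentially rather than polynomially.

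The delicate point is that a single transposition has the shape $L\,(\widehat{G}^{x_1}_{g_1}\widehat{G}^{x_i}_{g_i})\,R \approx L\,(\widehat{G}^{x_i}_{g_i}\widehat{G}^{x_1}_{g_1})\,R$, where $L$ (the already-passed factors) and $R$ (the not-yet-passed factors) are products of complete projective measurements flanking the swap. The leading product $L$ is harmless: squaring out the state-dependent distance and summing over $L$'s outcomes collapses $L^\dagger L$ to $I$, since the ``sandwich sum'' of complete projective measurements is the identity and $\|L\|\le 1$. The trailing product $R$ is the obstacle: one cannot right-multiply an ``$\approx$'' relation by a contraction, because \Cref{eq:gcomall} holds on $\ket\psi$, not on $R\ket\psi$. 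My fix would be to push $R$ onto the second tensor factor first, using the strong self-consistency $\widehat{G}^{x}_g\otimes I\approx_{2\zeta}I\otimes\widehat{G}^{x}_g$ of \Cref{eq:gselfconall}: moving $R$'s factors out one at a time, innermost first, each move is a \emph{left}-multiplication of that relation by a sandwich-summable family, which is exactly what \Cref{prop:cab-approx-delta} licenses; once $R$ lives on the second factor it commutes with everything on the first factor and, being complete and norm-$\le 1$, drops out of the squared-norm bound, reducing the step to \Cref{eq:gcomall}, after which one reverses these moves. \Cref{prop:closeness-of-ip} would supply the Cauchy--Schwarz substitutions needed to slot ``$\approx$'' statements into products of inner products along the way.

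The main obstacle is the error accounting, specifically securing the $k^2$ power rather than a worse one. Pushing $R$ out and back naively costs $O(k\zeta)$ per transposition, which over $k-1$ transpositions and the triangle factor of $k$ would give roughly $O(k^3\zeta)+O(k^2\nu_3)$, exceeding the target $\nu_4 = 426\,k^2 m\bigl(\gamma^{1/16}+\zeta^{1/16}+(d/q)^{1/16}\bigr)$. To stay within the bound one must be economical with the tensor-factor moves --- e.g.\ push the whole trailing block onto the second factor once, perform all the transpositions there, and move it back once, or prove a ``stability'' estimate in the spirit of \Cref{clm:g-comm-stability} showing that inserting the fixed product $R$ perturbs the commutator's state-dependent distance by only an additive $O(\zeta)$, independent of the length of $R$. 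With the per-step cost held at $\nu_3+O(\zeta)$, summing over the $k-1$ steps and tracing the constants (roughly $(k-1)^2$ times $\nu_3 = 138m(\cdots)$, plus accumulated $\zeta$-corrections, bounded by $426\,k^2 m(\cdots)$) is the routine-but-fiddly endgame I expect to be most error-prone.
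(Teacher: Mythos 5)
Your proposal is essentially correct and follows the same strategy the paper uses: slide $\widehat{G}^{x_1}_{g_1}$ rightward via pairwise applications of \Cref{eq:gcomall}, handle the trailing factors by parking them on the second tensor leg via the strong self-consistency \Cref{eq:gselfconall}, and combine steps with the quadratically-lossy triangle inequality \Cref{prop:triangle-inequality-for-approx_delta} to get $\nu_4 = O(k^2)(\nu_3 + \zeta)$. You also correctly diagnosed the obstacle (\Cref{eq:gcomall} cannot be right-multiplied by the trailing contraction $R$) and the fix (left-multiplication is permissible by \Cref{prop:cab-approx-delta}, so move $R$ out of the way first).

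The one imprecision is in your ``push the whole trailing block onto the second factor once, perform all the transpositions there, and move it back once.'' After commuting $\widehat{G}^{x_1}_{g_1}$ past $\widehat{G}^{x_2}_{g_2}$ on the first leg, the next commutation with $\widehat{G}^{x_3}_{g_3}$ requires $\widehat{G}^{x_3}_{g_3}$ to be back on the first leg adjacent to $\widehat{G}^{x_1}_{g_1}$; you cannot run all $k-1$ transpositions before pulling anything back. What the paper actually does is: push $\widehat{G}^{x_3},\ldots,\widehat{G}^{x_k}$ onto the second leg ($k-2$ applications of \Cref{eq:gselfconall}), then alternate \emph{commute, pull one back, commute, pull one back, \ldots, commute}, giving $(k-2)+(k-2)$ self-consistency steps and $(k-1)$ commutations, hence $\approx 3k$ steps total and a final error $3k(4k\zeta + k\nu_3) \leq 426 k^2 m(\gamma^{1/16}+\zeta^{1/16}+(d/q)^{1/16})$ — exactly the accounting you sketched. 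So the economy you asked for is achieved not by deferring all pulls to the end, but by interleaving each pull with the next transposition so every self-consistency relation is used exactly twice.
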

\begin{proof}
This proof will consist of multiple applications of \Cref{eq:gselfconall,eq:gcomall};
for each line, we will specify which equation to apply.
Each line will also involve an application of \Cref{prop:cab-approx-delta},
which we will specify only implicitly.
\begin{align*}
&  \widehat{G}^{x_1}_{g_1} \widehat{G}^{x_2}_{g_2} \cdots \widehat{G}^{x_k}_{g_k} \ot I\\
\approx_{2\zeta}~& \widehat{G}^{x_1}_{g_1} \widehat{G}^{x_2}_{g_2} \cdots \widehat{G}^{x_{k-1}}_{g_{k-1}} \ot \widehat{G}^{x_k}_{g_k} \tag{by \Cref{eq:gselfconall}}\\
\cdots~&\\
\approx_{2\zeta}~& \widehat{G}^{x_1}_{g_1} \widehat{G}^{x_2}_{g_2} \ot \widehat{G}^{x_k}_{g_k} \cdots \widehat{G}^{x_{3}}_{g_{3}}\tag{by \Cref{eq:gselfconall}}\\
\approx_{\nu_{3}}~&  \widehat{G}^{x_2}_{g_2}G^{x_1}_{g_1} \ot \widehat{G}^{x_k}_{g_k} \cdots \widehat{G}^{x_{3}}_{g_{3}}\tag{by \Cref{eq:gcomall}}\\
\approx_{2\zeta}~&  \widehat{G}^{x_2}_{g_2}\widehat{G}^{x_1}_{g_1} \widehat{G}^{x_3}_{g_3} \ot \widehat{G}^{x_k}_{g_k} \cdots \widehat{G}^{x_{4}}_{g_{4}}\tag{by \Cref{eq:gselfconall}}\\
\approx_{\nu_{3}}~&  \widehat{G}^{x_2}_{g_2}\widehat{G}^{x_3}_{g_3} \widehat{G}^{x_1}_{g_1} \ot \widehat{G}^{x_k}_{g_k} \cdots \widehat{G}^{x_{4}}_{g_{4}}\tag{by \Cref{eq:gcomall}}\\
\cdots~&\\
\approx_{2\zeta}~&  \widehat{G}^{x_2}_{g_2}\cdots \widehat{G}^{x_{k-1}}_{g_{k-1}} \widehat{G}^{x_1}_{g_1} \widehat{G}^{x_k}_{g_k} \ot I\tag{by \Cref{eq:gselfconall}}\\
\approx_{\nu_{3}}~&  \widehat{G}^{x_2}_{g_2}\cdots \widehat{G}^{x_{k-1}}_{g_{k-1}} \widehat{G}^{x_k}_{g_k} \widehat{G}^{x_1}_{g_1} \ot I.\tag{by \Cref{eq:gcomall}}
\end{align*}
In total, we have $(k-2) + (k-2) \leq 2k$ applications of \Cref{eq:gselfconall} with error $2\zeta$ each
and $(k-1) \leq k$ applications of \Cref{eq:gcomall} with error $\nu_3$ each.
By \Cref{prop:triangle-inequality-for-approx_delta}, this implies that 
\begin{equation*}
G^{x_1}_{g_1} G^{x_2}_{g_2} \cdots G^{x_k}_{g_k} \ot I
\approx_{3k \cdot (4k \zeta + k\nu_3)} G^{x_2}_{g_2}\cdots G^{x_{k-1}}_{g_{k-1}} G^{x_k}_{g_k} G^{x_1}_{g_1} \ot I,
\end{equation*}
as claimed.
We can bound this as follows.
\begin{align*}
3k \cdot (4k \zeta + k\nu_3)
& = 12k^2 \zeta + 3k^2 \nu_3\\
&= 12k^2 \zeta + 3 k^2 \cdot 138 m \cdot \left(\gamma^{1/16} + \zeta^{1/16} + (d/q)^{1/16}\right)\\
&\leq 12k^2 \zeta^{1/16} + 414 k^2 m \cdot \left(\gamma^{1/16} + \zeta^{1/16} + (d/q)^{1/16}\right)\\
&\leq 426 k^2 m \cdot \left(\gamma^{1/16} + \zeta^{1/16} + (d/q)^{1/16}\right).
\end{align*}
This completes the proof.
\end{proof}

\begin{lemma}[Consistency of $\widehat{H}$ with $B$]
\label{lem:ld-sandwich-line-one-point}
For any $1 \leq i \leq k$,
\begin{equation*}
\E_{\bu}  \E_{\bx_1, \dots, \bx_k} \sum_{g_1, \dots, g_k: g_i \neq \bot}  \sum_{a \neq
    g_i(\bu)} \bra{\psi}
  \widehat{H}^{\bx_1, \dots, \bx_k}_{g_1, \dots, g_k} \ot B^{\bu}_{[f(\bx_i) =
    a]} \ket{\psi}  \leq \nu_5,
 \end{equation*}
where
\begin{equation*}
    \nu_5 = 43 km \cdot \left(\eps^{1/32} + \delta^{1/32} + \gamma^{1/32} + \zeta^{1/32} + (d/q)^{1/32}\right).
\end{equation*}
\end{lemma}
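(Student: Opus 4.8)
The plan is to reduce, via commutativity, to the ``core'' consistency statement between a single $\widehat{G}^{x_i}$ and the line measurement $B^u$, and then invoke the previously established consistency relations. First I would fix $i$ and use \Cref{lem:commute-g-half-sandwich} (Commuting past multiple $\widehat{G}$'s), together with the cyclic structure of the sandwich $\widehat{H}^{x_1,\dots,x_k}_{g_1,\dots,g_k} = \widehat{G}^{x_1}_{g_1}\cdots\widehat{G}^{x_k}_{g_k}\cdots\widehat{G}^{x_1}_{g_1}$, to move the factor $\widehat{G}^{x_i}_{g_i}$ to the outside of the sandwich (on both sides, symmetrically). More precisely, I would write the sum we wish to bound as an inner product of the form $\E\,\bra{\psi}\,(\text{inner sandwich})\otimes B^u_{[f(x_i)\neq g_i(u)]}\,\ket{\psi}$ after peeling off the two outer copies of $\widehat{G}^{x_i}_{g_i}$ using a Cauchy--Schwarz step of the type in \Cref{prop:closeness-of-ip}; the cost of these manipulations is controlled by $\nu_4$ from \Cref{lem:commute-g-half-sandwich} (and by the self-consistency $\widehat{G}^x_g\otimes I\approx_{2\zeta}I\otimes\widehat{G}^x_g$ from \Cref{cor:G-hat-facts}, needed to move a $\widehat{G}$ across the tensor factor).

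Once $\widehat{G}^{x_i}_{g_i}$ is isolated, the key input is the consistency of $G$ (equivalently $\widehat{G}$ on its polynomial outcomes) with $B^u$ in the $(m+1)$-st direction, namely \Cref{eq:ld-gbcon}: $G^x_{[g(u)=a]}\otimes I \simeq_{\nu_1} I\otimes B^u_{[f(x)=a]}$ with $\nu_1 = \zeta + \sqrt{8m\eps+4\delta}$. Applying \Cref{prop:simeq-to-approx} turns this into an ``$\approx$'' statement, and the data-processing bound of \Cref{prop:simeq-data-processing} (or its sub-measurement analogue) handles the restriction to the event $g_i(u)\neq a$. Because the remaining factors of the inner sandwich form a sub-measurement (projectivity of each $\widehat{G}$ and the fact that $\sum_{g}\widehat{G}^x_g = I$), bounding the full expression reduces to bounding $\E_{\bu,\bx_i}\sum_{a\neq g_i(\bu)}\bra{\psi}\widehat{G}^{\bx_i}_{g_i}\otimes B^{\bu}_{[f(\bx_i)=a]}\ket{\psi}$, which is exactly (a data-processed version of) the inconsistency controlled by $\nu_1$ — but here I should be careful that $\bx_i$ is uniform in $\F_q$ while the original line-point distribution has $\bx_i$ playing the role of the evaluation point on a uniformly random axis line, so the marginals match and \Cref{eq:ld-gbcon} applies directly.

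The main obstacle I anticipate is bookkeeping the Cauchy--Schwarz steps so that the ``other'' factor in each step is genuinely bounded by $1$: when we peel $\widehat{G}^{x_i}_{g_i}$ off both ends of the sandwich, the leftover operator is $\widehat{G}^{x_i}_{g_i}(\text{inner})\widehat{G}^{x_i}_{g_i}$, which is a sub-measurement element only after summing over $g_i$, so the correct move is to apply \Cref{prop:closeness-of-ip} with the $C^x_{a,b}$ chosen to absorb the sandwiching, using \Cref{lem:normalization-condition} to verify the normalization $\sum(\sum C)(\sum C)^\dagger\le I$. I would also need to route the commutation through \Cref{lem:commute-g-half-sandwich} with care about the exponent of $k$: that lemma costs $\nu_4 = O(k^2 m\cdot(\cdot)^{1/16})$, and after a square-root from Cauchy--Schwarz this becomes $O(k m\cdot(\cdot)^{1/32})$, consistent with the claimed $\nu_5 = 43km\cdot(\eps^{1/32}+\delta^{1/32}+\gamma^{1/32}+\zeta^{1/32}+(d/q)^{1/32})$. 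Finally I would collect all error terms — the $\nu_4$ commutation cost (square-rooted), the $2\zeta$ self-consistency costs, the $\nu_1$ consistency cost, and the $md/q$ Schwartz--Zippel cost for passing between $[f(x_i)=g_i(u)]$ and the polynomial $g_i|_{x_i}$ if needed — and check via the usual crude bounds (e.g. $\nu_1 \le \zeta + \sqrt{8m\eps+4\delta} \le 10m(\eps^{1/2}+\delta^{1/2}) + \zeta$, then absorbing into the $1/32$ powers) that the sum is at most $\nu_5$.
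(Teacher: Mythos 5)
Your high-level route matches the paper's: commute $\widehat{G}^{x_i}_{g_i}$ to the outside of the sandwich via \Cref{lem:commute-g-half-sandwich} (paying $\sqrt{\nu_4}$ per Cauchy--Schwarz move), then invoke the consistency relation~\eqref{eq:ld-gbcon} between $G^x_{[g(u)=a]}$ and $B^u_{[f(x)=a]}$. However, you overlook the first --- and cleanest --- step in the paper's proof: since $\widehat{G}^{x_j}$ is a \emph{measurement} for each $j$, summing over $g_{i+1},\dots,g_k$ telescopes the inner sandwich away, giving $\sum_{g_{i+1},\dots,g_k}\widehat{H}^{x_1,\dots,x_k}_{g_1,\dots,g_k}=\widehat{H}^{x_1,\dots,x_i}_{g_1,\dots,g_i}$. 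This reduces the problem to a sandwich in which $\widehat{G}^{x_i}_{g_i}$ sits at the \emph{center} (and, by projectivity, can be written as two adjacent copies $\widehat{G}^{x_i}_{g_i}\widehat{G}^{x_i}_{g_i}$), so you only need to commute it past the left half $\widehat{G}^{x_{<i}}_{g_{<i}}$ --- two Cauchy--Schwarz moves, total cost $2\sqrt{\nu_4}$ --- after which $\sum_{g_{<i}}\widehat{G}^{x_{<i}}_{g_{<i}}(\widehat{G}^{x_{<i}}_{g_{<i}})^\dagger=I$ collapses everything. Without that truncation you would be commuting $\widehat{G}^{x_i}_{g_i}$ past both halves of a length-$k$ sandwich, which is more cumbersome bookkeeping (though still bounded by $\nu_4$).

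A few of your anticipated error terms do not actually arise. The expression $\sum_{a\neq g_i(\bu)}B^{\bu}_{[f(\bx_i)=a]}$ is simply $I-B^{\bu}_{[f(\bx_i)=g_i(\bu)]}$, so \eqref{eq:ld-gbcon} bounds the final term directly at cost $\nu_1$; no data-processing step (\Cref{prop:simeq-data-processing}), no Schwartz--Zippel cost $md/q$, and no $2\zeta$ self-consistency moves across the tensor factor enter the accounting. The paper's final error is exactly $\nu_1 + 2\sqrt{\nu_4}$, and the only inputs are \eqref{eq:ld-gbcon} and two applications of \Cref{lem:commute-g-half-sandwich}. Your budget has the right order of magnitude but mischarges these phantom steps and would give a factor off from the clean $43km$ constant.
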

\begin{proof}
To begin, we note that
\begin{align*}
\sum_{g_{i+1}, \ldots, g_k} \widehat{H}^{x_1, \ldots, x_k}_{g_1, \ldots, g_k}
& = \sum_{g_{i+1}, \ldots, g_k} \widehat{G}^{x_1}_{g_1} \cdots \widehat{G}^{x_k}_{g_k} \cdots \widehat{G}^{x_1}_{g_1}\\
& = \sum_{g_{i+1}, \ldots, g_{k-1}} \widehat{G}^{x_1}_{g_1} \cdots \Big(\sum_{g_k} \widehat{G}^{x_k}_{g_k} \Big)\cdots \widehat{G}^{x_1}_{g_1}\\
& = \sum_{g_{i+1}, \ldots, g_{k-1}} \widehat{G}^{x_1}_{g_1} \cdots \widehat{G}^{x_{k-1}}_{g_{k-1}} \cdot I \cdot \widehat{G}^{x_{k-1}}_{g_{k-1}}\cdots \widehat{G}^{x_1}_{g_1} \tag{because $\widehat{G}^{x_k}$ is a measurement}\\
& = \sum_{g_{i+1}, \ldots, g_{k-1}} \widehat{G}^{x_1}_{g_1} \cdots \widehat{G}^{x_{k-1}}_{g_{k-1}} \cdots \widehat{G}^{x_1}_{g_1} \tag{because $\widehat{G}^{x_{k-1}}$ is projective}\\
& \cdots \\
& = \widehat{G}^{x_1}_{g_1} \cdots \widehat{G}^{x_{i}}_{g_{i}} \cdots \widehat{G}^{x_1}_{g_1}\\
& = \widehat{H}^{x_1, \ldots, x_i}_{g_1, \ldots, g_i}.
\end{align*}
As a result,
\begin{align}
&\E_{\bu}  \E_{\bx_1, \dots, \bx_k} \sum_{g_1, \dots, g_k: g_i \neq \bot}  \sum_{a \neq
    g_i(\bu)} \bra{\psi}
  \widehat{H}^{\bx_1, \dots, \bx_k}_{g_1, \dots, g_k} \ot B^{\bu}_{[f(\bx_i) =
    a]} \ket{\psi}\nonumber\\
=~&\E_{\bu}  \E_{\bx_1, \dots, \bx_i} \sum_{g_1, \dots, g_i: g_i \neq \bot}  \sum_{a \neq
    g_i(\bu)} \bra{\psi}
  \widehat{H}^{\bx_1, \dots, \bx_i}_{g_1, \dots, g_i} \ot B^{\bu}_{[f(\bx_i) =
    a]} \ket{\psi}.\label{eq:delete-extraneous-coordinates}
\end{align}
For shorthand, we write
\begin{equation*}
\widehat{G}^{x_{<i}}_{g_{<i}} = \widehat{G}^{x_1}_{g_1} \cdots \widehat{G}^{x_{i-1}}_{g_{i-1}}.
\end{equation*}
Then we claim that
\begin{align}
\eqref{eq:delete-extraneous-coordinates}
& = \E_{\bu}  \E_{\bx_1, \dots, \bx_i} \sum_{g_1, \dots, g_i: g_i \neq \bot}   \bra{\psi}
  \widehat{G}^{\bx_{<i}}_{g_{<i}}  \cdot \widehat{G}^{\bx_i}_{g_i} \cdot \widehat{G}^{\bx_i}_{g_i} \cdot (\widehat{G}^{\bx_{<i}}_{g_{<i}} )^\dagger \ot (I - B^{\bu}_{[f(\bx_i) =
    g_i(\bu)]}) \ket{\psi}\nonumber\\
& \approx_{\sqrt{\nu_4}} \E_{\bu}  \E_{\bx_1, \dots, \bx_i} \sum_{g_1, \dots, g_i: g_i \neq \bot} \bra{\psi}\widehat{G}^{\bx_i}_{g_i} \cdot
  \widehat{G}^{\bx_{<i}}_{g_{<i}}  \cdot  \widehat{G}^{\bx_i}_{g_i}\cdot (\widehat{G}^{\bx_{<i}}_{g_{<i}} )^\dagger \ot (I - B^{\bu}_{[f(\bx_i) =
    g_i(\bu)]}) \ket{\psi}.\label{eq:gonna-need-a-bigger-cauchy-schwarz}
\end{align}
To show this, we bound the magnitude of the difference.
\begin{multline*}
\E_{\bu}  \E_{\bx_1, \dots, \bx_i} \sum_{g_1, \dots, g_i: g_i \neq \bot}  \bra{\psi}
  ((\widehat{G}^{\bx_{<i}}_{g_{<i}}  \cdot \widehat{G}^{\bx_i}_{g_i} - \widehat{G}^{\bx_i}_{g_i}  \cdot \widehat{G}^{\bx_{<i}}_{g_{<i}}) \otimes I) \cdot (\widehat{G}^{\bx_i}_{g_i} \cdot (\widehat{G}^{\bx_{<i}}_{g_{<i}} )^\dagger \ot (I - B^{\bu}_{[f(\bx_i) =
    g_i(\bu)]})) \ket{\psi}\\
    \leq 
    \sqrt{\E_{\bu}  \E_{\bx_1, \dots, \bx_i} \sum_{g_1, \dots, g_i: g_i \neq \bot}   \bra{\psi}
  ((\widehat{G}^{\bx_{<i}}_{g_{<i}}  \cdot \widehat{G}^{\bx_i}_{g_i} - \widehat{G}^{\bx_i}_{g_i}  \cdot \widehat{G}^{\bx_{<i}}_{g_{<i}})  \cdot  (\widehat{G}^{\bx_i}_{g_i}\cdot (\widehat{G}^{\bx_{<i}}_{g_{<i}})^\dagger - (\widehat{G}^{\bx_{<i}}_{g_{<i}})^\dagger \cdot \widehat{G}^{\bx_i}_{g_i})) \otimes I \ket{\psi}}\\
  \cdot \sqrt{\E_{\bu}  \E_{\bx_1, \dots, \bx_i} \sum_{g_1, \dots, g_i: g_i \neq \bot}  \bra{\psi}
  (\widehat{G}^{\bx_{<i}}_{g_{<i}}  \cdot \widehat{G}^{\bx_i}_{g_i} \cdot (\widehat{G}^{\bx_{<i}}_{g_{<i}} )^\dagger) \ot (I - B^{\bu}_{[f(\bx_i) =
    g_i(\bu)]})^2 \ket{\psi}}.
\end{multline*}
The term inside the first square root is at most 
\begin{equation}\label{eq:add-in-the-bot}
\E_{\bu}  \E_{\bx_1, \dots, \bx_i} \sum_{g_1, \dots, g_i}   \bra{\psi}
  ((\widehat{G}^{\bx_{<i}}_{g_{<i}}  \cdot \widehat{G}^{\bx_i}_{g_i} - \widehat{G}^{\bx_i}_{g_i}  \cdot \widehat{G}^{\bx_{<i}}_{g_{<i}})  \cdot  (\widehat{G}^{\bx_i}_{g_i}\cdot (\widehat{G}^{\bx_{<i}}_{g_{<i}})^\dagger - (\widehat{G}^{\bx_{<i}}_{g_{<i}})^\dagger \cdot \widehat{G}^{\bx_i}_{g_i})) \otimes I \ket{\psi},
\end{equation}
which is at most~$\nu_4$ by \Cref{lem:commute-g-half-sandwich}.
The term inside the second square root is at most~$1$ because~$B$ and~$\widehat{G}$ are measurements.
Next, we claim that
\begin{equation}\label{eq:even-bigger-CS}
\eqref{eq:gonna-need-a-bigger-cauchy-schwarz}
\approx_{\sqrt{\nu_4}}
\E_{\bu}  \E_{\bx_1, \dots, \bx_i} \sum_{g_1, \dots, g_i: g_i \neq \bot} \bra{\psi}\widehat{G}^{\bx_i}_{g_i} \cdot
  \widehat{G}^{\bx_{<i}}_{g_{<i}}  \cdot (\widehat{G}^{\bx_{<i}}_{g_{<i}} )^\dagger \cdot  \widehat{G}^{\bx_i}_{g_i}\ot (I - B^{\bu}_{[f(\bx_i) =
    g_i(\bu)]}) \ket{\psi}.
\end{equation}
To show this, we bound the magnitude of the difference.
\begin{multline*}
\E_{\bu}  \E_{\bx_1, \dots, \bx_i} \sum_{g_1, \dots, g_i: g_i \neq \bot}  \bra{\psi}
   (\widehat{G}^{\bx_i}_{g_i} \cdot \widehat{G}^{\bx_{<i}}_{g_{<i}} \ot (I - B^{\bu}_{[f(\bx_i) =
    g_i(\bu)]})) \cdot (((\widehat{G}^{\bx_{<i}}_{g_{<i}})^\dagger  \cdot \widehat{G}^{\bx_i}_{g_i} - \widehat{G}^{\bx_i}_{g_i}  \cdot (\widehat{G}^{\bx_{<i}}_{g_{<i}})^\dagger) \otimes I) \ket{\psi}\\
    \leq 
    \sqrt{\E_{\bu}  \E_{\bx_1, \dots, \bx_i} \sum_{g_1, \dots, g_i: g_i \neq \bot}  \bra{\psi}
  (\widehat{G}^{\bx_i}_{g_i} \cdot\widehat{G}^{\bx_{<i}}_{g_{<i}}  \cdot  (\widehat{G}^{\bx_{<i}}_{g_{<i}} )^\dagger \cdot \widehat{G}^{\bx_i}_{g_i}) \ot (I - B^{\bu}_{[f(\bx_i) =
    g_i(\bu)]})^2 \ket{\psi}}\\
    \cdot
    \sqrt{\E_{\bu}  \E_{\bx_1, \dots, \bx_i} \sum_{g_1, \dots, g_i: g_i \neq \bot}   \bra{\psi}
  ((\widehat{G}^{\bx_{<i}}_{g_{<i}}  \cdot \widehat{G}^{\bx_i}_{g_i} - \widehat{G}^{\bx_i}_{g_i}  \cdot \widehat{G}^{\bx_{<i}}_{g_{<i}})  \cdot  ((\widehat{G}^{\bx_{<i}}_{g_{<i}})^\dagger  \cdot \widehat{G}^{\bx_i}_{g_i} - \widehat{G}^{\bx_i}_{g_i}  \cdot (\widehat{G}^{\bx_{<i}}_{g_{<i}}))^\dagger) \otimes I \ket{\psi}}.
\end{multline*}
The term inside the first square root is at most~$1$ because~$B$ and $\widehat{G}$ are measurements.
The term inside the second square root is at most \Cref{eq:add-in-the-bot},
which is at most~$\nu_4$ by \Cref{lem:commute-g-half-sandwich}.
But
\begin{align*}
\sum_{g_1, \ldots, g_{i-1}} \widehat{G}^{x_{<i}}_{g_{<i}} \cdot (\widehat{G}^{x_{<i}}_{g_{<i}})^\dagger
&= \sum_{g_1, \ldots, g_{i-1}} \widehat{G}^{x_1}_{g_1} \cdots \widehat{G}^{x_{i-1}}_{g_{i-1}} \cdot \widehat{G}^{x_{i-1}}_{g_{i-1}}  \cdots \widehat{G}^{x_{1}}_{g_{1}}\\
&= \sum_{g_1, \ldots, g_{i-2}} \widehat{G}^{x_1}_{g_1} \cdots \Big(\sum_{g_{i-1}} \widehat{G}^{x_{i-1}}_{g_{i-1}} \Big)  \cdots \widehat{G}^{x_{1}}_{g_{1}}\\
&= \sum_{g_1, \ldots, g_{i-2}} \widehat{G}^{x_1}_{g_1} \cdots I  \cdots \widehat{G}^{x_{1}}_{g_{1}}\\
& \cdots\\
& = I.
\end{align*}
Thus,
\begin{equation*}
\eqref{eq:even-bigger-CS}
= 
\E_{\bu}  \E_{\bx_1, \dots, \bx_i} \sum_{g_i: g_i \neq \bot} \bra{\psi}\widehat{G}^{\bx_i}_{g_i} \ot (I - B^{\bu}_{[f(\bx_i) =
    g_i(\bu)]}) \ket{\psi} \leq \nu_1,
\end{equation*}
by \Cref{eq:ld-gbcon}.
In total, using $\sqrt{426} \leq 21$, this gives an error of
\begin{align*}
\nu_1 + 2 \sqrt{\nu_4}
& = \zeta + \sqrt{8m\eps + 4\delta} + 2\cdot \sqrt{426 k^2 m \cdot \left(\gamma^{1/16} + \zeta^{1/16} +  (d/q)^{1/16}\right)}\\
& \leq \zeta^{1/32} + 3m \eps^{1/32} + 2 \delta^{1/32} + 42 km \cdot \left(\gamma^{1/32} + \zeta^{1/32} + (d/q)^{1/32}\right)\\
& \leq 43 km \cdot \left(\eps^{1/32} + \delta^{1/32} +\gamma^{1/32} + \zeta^{1/32} +  (d/q)^{1/32}\right).
\end{align*}
This completes the proof.
\end{proof}

\subsection{Consistency of~$H$ with~$A$}
\label{sec:consistency-of-h-with-a}

\begin{lemma}[Consistency of~$H$ with~$B$]\label{lem:h-b-consistency}
\begin{equation*}
H_{[h|_u =f]} \otimes I \simeq_{\nu_6} I \otimes B^u_f,
\end{equation*}
where
\begin{equation*}
\nu_6 = 44 k^2m \cdot \left(\eps^{1/32} + \delta^{1/32} + \gamma^{1/32} + \zeta^{1/32} + (d/q)^{1/32}\right).
\end{equation*}
\end{lemma}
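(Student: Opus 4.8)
The plan is to unwind the definition of $H$ back to the sandwich measurement $\widehat H$ and then apply \Cref{lem:ld-sandwich-line-one-point} once for each of the $k$ coordinates. By \Cref{def:simeq}, what must be bounded is $\E_{\bu\sim\F_q^m}\sum_{f\neq f'}\bra{\psi}H_{[h|_{\bu}=f]}\otimes B^{\bu}_{f'}\ket{\psi}$, the outcomes being univariate degree-$d$ polynomials on the line $\ell=\{(u,x):x\in\F_q\}$ and $h|_u$ the restriction $x\mapsto h(u,x)$ of an individual-degree-$d$ polynomial $h$. First I would substitute $H_{[h|_u=f]}=\sum_{h:h|_u=f}H_h$ to rewrite this as $\E_{\bu}\sum_{h,f':h|_{\bu}\neq f'}\bra{\psi}H_h\otimes B^{\bu}_{f'}\ket{\psi}$, and then substitute the pasting definition $H_h=\E_{(\bx_1,\ldots,\bx_k)\sim\distinct{k}}\sum_{w:|w|\geq d+1}\widehat H^{\bx_1,\ldots,\bx_k}_{h_w}$, which expresses everything in terms of $\widehat H$-outcome tuples.

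The heart of the argument is a Schwartz-Zippel step: fix $(x_1,\ldots,x_k)\in\distinct{k}$, a type $w$ with $|w|\geq d+1$, a polynomial $h$, and a line polynomial $f'$ with $h|_u\neq f'$. The two univariate degree-$d$ polynomials $h|_u$ and $f'$ agree at most at $d$ points, so some $i\in w$ has $h|_u(x_i)\neq f'(x_i)$; since the $i$-th entry of $h_w$ is $g_i=h|_{x_i}$ — which is non-$\bot$ because $i\in w$ — and $g_i(u)=h(u,x_i)=h|_u(x_i)$, this reads ``$g_i\neq\bot$ and $g_i(u)\neq f'(x_i)$,'' exactly the bad event that \Cref{lem:ld-sandwich-line-one-point} controls for coordinate $i$. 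I would then apply a union bound over $i\in[k]$ and drop the restriction to the tuples of the form $h_w$ and to types of weight $\geq d+1$; this is legitimate because each $\widehat H^{x}_{g_1,\ldots,g_k}=\widehat G^{x_1}_{g_1}\cdots\widehat G^{x_k}_{g_k}\cdots\widehat G^{x_1}_{g_1}$ is positive semidefinite (it equals $(\widehat G^{x_1}_{g_1}\cdots\widehat G^{x_k}_{g_k})(\widehat G^{x_1}_{g_1}\cdots\widehat G^{x_k}_{g_k})^\dagger$ since the $\widehat G$'s are projective), so all cross terms $\bra{\psi}\widehat H^{x}_{g_1,\ldots,g_k}\otimes B^{u}_{f'}\ket{\psi}$ are nonnegative and adding more of them only increases the sum. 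This yields
\[
\E_{\bu}\sum_{f\neq f'}\bra{\psi}H_{[h|_{\bu}=f]}\otimes B^{\bu}_{f'}\ket{\psi}\;\leq\;\sum_{i=1}^k\E_{\bu}\E_{(\bx_1,\ldots,\bx_k)}\sum_{g_1,\ldots,g_k:g_i\neq\bot}\sum_{a\neq g_i(\bu)}\bra{\psi}\widehat H^{\bx_1,\ldots,\bx_k}_{g_1,\ldots,g_k}\otimes B^{\bu}_{[f(\bx_i)=a]}\ket{\psi}\;\leq\;k\nu_5,
\]
and then $k\nu_5=43k^2m(\eps^{1/32}+\delta^{1/32}+\gamma^{1/32}+\zeta^{1/32}+(d/q)^{1/32})\leq\nu_6$, which is the claimed bound.

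The routine parts are the two substitutions and the final numeric estimate. The one place needing care is checking that passing from $\widehat H$ to $H$ — the post-processing that discards inconsistent tuples and tuples of type-weight $<d+1$, interpolates the rest, and sums over types — does not double-count any outcome: one verifies that a contributing tuple $h_w$ determines $w$ (via its $\bot$-positions) and then $h$ (via interpolation in the last coordinate over the $\geq d+1$ slices in $w$), so that $(h,w)\mapsto h_w$ is injective on $\{|w|\geq d+1\}$ and each $\widehat H$-outcome tuple contributes to at most one $H_h$. I would also confirm that the distribution on $(\bx_1,\ldots,\bx_k)$ in \Cref{lem:ld-sandwich-line-one-point} is the $\distinct{k}$ distribution used to define $H$; if it is instead uniform over $\F_q^k$, one inserts \Cref{prop:ld-dnoteq} at the outset, paying an extra $k^2/q$ which is harmlessly absorbed into $\nu_6$.
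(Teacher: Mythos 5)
Your proposal is correct and follows essentially the same route as the paper: unwind $H$ to the sandwich $\widehat{H}$, note that on a type $w$ of weight $\geq d+1$ with distinct $x_i$'s any $f' \neq h|_u$ must disagree with some restriction $g_i$, relax via positivity of $\widehat{H}^{x}_{g_1,\ldots,g_k}$ and a union bound over $i$, swap $\distinct{k}$ for the uniform distribution via \Cref{prop:ld-dnoteq} (the paper does pay the $k^2/q$ you flag), and invoke \Cref{lem:ld-sandwich-line-one-point} coordinate-by-coordinate. The injectivity check you raise for $(h,w)\mapsto h_w$ is also made (briefly) in the paper, and the final arithmetic matches.
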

\begin{proof}
Let $\bx_1, \ldots, \bx_k \sim \F_q$ be independent and uniformly random.
Our goal is to bound
\begin{align}
&\E_{\bu} \sum_{f \neq f'} \bra{\psi} H_{[h|_{\bu} = f']} \otimes B^{\bu}_f \ket{\psi}\nonumber\\
=~&\E_{\bu} \sum_{h}\sum_{f \neq h|_{\bu}} \bra{\psi} H_{h} \otimes B^{\bu}_f \ket{\psi}\nonumber\\
=~&\E_{\bu}\E_{\bx_1, \ldots, \bx_k} \sum_{h}\sum_{f \neq h|_{\bu}} \bra{\psi} H^{\bx_1, \ldots, \bx_k}_{h} \otimes B^{\bu}_f \ket{\psi}\nonumber\\
=~&\E_{\bu}\E_{\bx_1, \ldots, \bx_k} \sum_{h} \sum_{w: |w| \geq d+1} \sum_{f \neq h|_{\bu}} \bra{\psi} \widehat{H}^{\bx_1, \ldots, \bx_k}_{h_w} \otimes B^{\bu}_f \ket{\psi}\nonumber\\
=~&\E_{\bu}\E_{\bx_1, \ldots, \bx_k} \sum_{h} \sum_{w: |w| \geq d+1} \sum_{(g_1, \ldots, g_k) = h_w} \sum_{f \neq h|_{\bu}} \bra{\psi} \widehat{H}^{\bx_1, \ldots, \bx_k}_{g_1, \ldots, g_k} \otimes B^{\bu}_f \ket{\psi}.\label{eq:keep-on-expandin}
\end{align}
We note that the sum over $(g_1, \ldots, g_k) = h_w$ in the final step is trivial
because there is only ever one tuple $(g_1, \ldots, g_k)$ which is equal to $h_w$.
Because $|w| \geq d+1$,
there exist at least $(d+1)$ coordinates~$i$ such that $g_i \neq \bot$
and hence $g_i = h|_{\bx_i}$.
Since~$f$ is degree-$d$, if it is not equal to $h|_{\bu}$,
then there must exist an $i$ such that $g_i \neq \bot$ and $g_i(\bu) \neq f(\bx_i)$.
Thus,
\begin{align}
\eqref{eq:keep-on-expandin}
~=~& \E_{\bu}\E_{\bx_1, \ldots, \bx_k} \sum_{h} \sum_{w: |w| \geq d+1} \sum_{(g_1, \ldots, g_k) = h_w} \sum_{\substack{f: \exists i: g_i \neq \bot, \\g_i(\bu) \neq f(\bx_i)} }\bra{\psi} \widehat{H}^{\bx_1, \ldots, \bx_k}_{g_1, \ldots, g_k} \otimes B^{\bu}_f \ket{\psi}\nonumber\\
\leq~&\E_{\bu}\E_{\bx_1, \ldots, \bx_k} \sum_{g_1, \ldots, g_k} \sum_{\substack{f: \exists i: g_i \neq \bot, \\g_i(\bu) \neq f(\bx_i)}} \bra{\psi} \widehat{H}^{\bx_1, \ldots, \bx_k}_{g_1, \ldots, g_k} \otimes B^{\bu}_f \ket{\psi}.\label{eq:keep-on-contractin}
\end{align}
Let $(\by_1, \ldots, \by_k) \sim \distinct{k}$. Then by \Cref{prop:ld-dnoteq}, \Cref{eq:keep-on-contractin}
is $(k^2/q)$-close to
\begin{align*}
&\E_{\bu}\E_{\by_1, \ldots, \by_k} \sum_{g_1, \ldots, g_k} \sum_{\substack{f: \exists i: g_i \neq \bot, \\g_i(\bu) \neq f(\by_i)}} \bra{\psi} \widehat{H}^{\by_1, \ldots, \by_k}_{g_1, \ldots, g_k} \otimes B^{\bu}_f \ket{\psi}\\
\leq~&
\sum_i \E_{\bu}\E_{\by_1, \ldots, \by_k} \sum_{g_1, \ldots, g_k} \sum_{\substack{f: g_i \neq \bot, \\g_i(\bu) \neq f(\by_i)}} \bra{\psi} \widehat{H}^{\by_1, \ldots, \by_k}_{g_1, \ldots, g_k} \otimes B^{\bu}_f \ket{\psi} \tag{by the union bound}\\
=~&
\sum_i \E_{\bu}\E_{\by_1, \ldots, \by_k} \sum_{g_1, \ldots, g_k:g_i \neq \bot} \sum_{a \neq g_i(\bu)} \bra{\psi} \widehat{H}^{\by_1, \ldots, \by_k}_{g_1, \ldots, g_k} \otimes B^{\bu}_{[f(\by_i)=a]} \ket{\psi}\\
\leq~& \sum_i \nu_5 \tag{by \Cref{lem:ld-sandwich-line-one-point}}\\
=~& k \cdot \nu_5.
\end{align*}
In total, using $1/q \leq (d/q)^{1/32}$, this gives an error of
\begin{align*}
\frac{k^2}{q} + k \cdot \nu_5
& = \frac{k^2}{q} + k \cdot 43 km \cdot \left(\eps^{1/32} + \delta^{1/32} + \gamma^{1/32} + \zeta^{1/32} + (d/q)^{1/32}\right)\\
& = 44 k^2m \cdot \left(\eps^{1/32} + \delta^{1/32} + \gamma^{1/32} + \zeta^{1/32} + (d/q)^{1/32}\right).
\end{align*}
This completes the proof.
\end{proof}

\begin{corollary}[Consistency of~$H$ with~$A$; Proof of \Cref{item:ld-pasting-N-consistency-sub-measurement} in \Cref{lem:ld-pasting-sub-measurement}]
\begin{equation*}
H_{[h(u, x)=a]} \otimes I \simeq_{\nu} I \otimes A^{u,x}_a.
\end{equation*}
\end{corollary}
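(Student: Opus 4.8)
The plan is to derive the consistency of $H$ with $A$ from the consistency of $H$ with $B$ (Lemma~\ref{lem:h-b-consistency}), using the various approximation relations established at the beginning of Section~\ref{sec:ld-pasting}. First I would recall the key fact from the preamble to the pasting analysis: since the strategy is $(\eps,\delta,\gamma)$-good and we restrict to the axis-parallel line in the $(m+1)$-st direction, we have $A^{u,x}_a \ot I \simeq_{(m+1)\eps} I \ot B^u_{[f(x)=a]}$, and by \Cref{prop:simeq-to-approx} together with \Cref{prop:cab-approx-delta} this upgrades to a state-dependent-distance statement involving $B^u$. The idea is that $H$ is consistent with the lines measurement $B$ after evaluation at $x$, and $B$ is in turn consistent with the points measurement $A$, so by a triangle-inequality argument $H$ must be consistent with $A$ after evaluation.

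The main steps, in order, are: (1) Quote \Cref{lem:h-b-consistency} to get $H_{[h|_u=f]} \ot I \simeq_{\nu_6} I \ot B^u_f$, where $u$ ranges over $\F_q^m$ and $f$ over degree-$d$ univariate polynomials on the line $\ell = \{(u,x)\mid x\in\F_q\}$. (2) Apply data processing for ``$\simeq$'' (\Cref{prop:simeq-data-processing}) with the function $f \mapsto f(x)$ (equivalently, evaluation at the point $(u,x)$): this gives $H_{[h(u,x)=a]} \ot I \simeq_{\nu_6} I \ot B^u_{[f(x)=a]}$ on average over $\bu\sim\F_q^m$ and $\bx\sim\F_q$, i.e.\ over $(\bu,\bx)\sim\F_q^{m+1}$. (3) Combine this with the relation $B^u_{[f(x)=a]} \ot I \simeq_{(m+1)\eps} I \ot A^{u,x}_a$ (which is just the axis-parallel lines test in the last direction, read with $B$ on the left) via the triangle inequality for ``$\simeq$'' (\Cref{prop:simeq-triangle-inequality}), taking care that the necessary intermediate measurement is $A$ itself (or $B$), which is a genuine measurement so that \Cref{prop:simeq-to-approx} applies. (4) Finally, bound the resulting error: $\nu_6 = 44k^2m\cdot(\eps^{1/32}+\delta^{1/32}+\gamma^{1/32}+\zeta^{1/32}+(d/q)^{1/32})$ together with the extra $O(\sqrt{m\eps})$-type term from the triangle inequality is absorbed into $\nu = 100k^2m\cdot(\eps^{1/32}+\delta^{1/32}+\gamma^{1/32}+\zeta^{1/32}+(d/q)^{1/32})$, using that $k\ge 400md$ and the assumption $\eps,\delta,\gamma,\zeta,d/q\le 1$ so that e.g.\ $(m\eps)^{1/2}\le m\eps^{1/32}$.

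The only genuinely delicate point is step~(3): \Cref{prop:simeq-triangle-inequality} is stated for four \emph{measurements} $A,B,C,D$, whereas $H$ here is only a sub-measurement. So I would instead chain the relations directly using \Cref{prop:triangle-sub} (``transferring $\simeq$ using $\approx$''): from $H_{[h(u,x)=a]}\ot I \simeq_{\nu_6} I\ot B^u_{[f(x)=a]}$ and the $\approx$-closeness $B^u_{[f(x)=a]}\ot I \approx_{O(m\eps)} I\ot A^{u,x}_a$ (obtained from \Cref{eq:ld-abcon}, which already packages exactly this), \Cref{prop:triangle-sub} gives $H_{[h(u,x)=a]} \ot I \simeq_{\nu_6 + O(\sqrt{m\eps})} I \ot A^{u,x}_a$, avoiding the need for $H$ to be a full measurement. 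Since $\nu_6$ already dominates and the additive $O(\sqrt{m\eps}+\sqrt{\delta})$ loss is of strictly lower order than $\nu$, the claimed bound $\simeq_{\nu}$ follows. This completes the proof of \Cref{item:ld-pasting-N-consistency-sub-measurement} of \Cref{lem:ld-pasting-sub-measurement}, and hence, via the measurement-completion argument already given, of \Cref{thm:ld-pasting} and thereby \Cref{thm:ld-pasting-in-induction-section}.
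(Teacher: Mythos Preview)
Your proposal is correct and matches the paper's proof essentially step for step: the paper likewise derives $H_{[h(u,x)=a]}\ot I \simeq_{\nu_6} I\ot B^u_{[f(x)=a]}$ from \Cref{lem:h-b-consistency}, then applies \Cref{prop:triangle-sub} together with \Cref{eq:ld-abcon} to obtain $H_{[h(u,x)=a]}\ot I \simeq_{\nu_6+\sqrt{8m\eps+4\delta}} I\ot A^{u,x}_a$, and finally bounds $\nu_6+\sqrt{8m\eps+4\delta}\le 47k^2m\cdot(\eps^{1/32}+\delta^{1/32}+\gamma^{1/32}+\zeta^{1/32}+(d/q)^{1/32})\le\nu$. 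Your identification of the ``delicate point'' (that $H$ is only a sub-measurement, forcing the use of \Cref{prop:triangle-sub} rather than \Cref{prop:simeq-triangle-inequality}) is exactly right and is precisely how the paper proceeds.
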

\begin{proof}
\Cref{lem:h-b-consistency} implies that
\begin{equation}\label{eq:h-b-consistency-at-a-point}
H_{[h(u, x)=a]} \otimes I \simeq_{\nu_6} I \otimes B^{u}_{[f(x)=a]}.
\end{equation}
\Cref{prop:triangle-sub} applied to \Cref{eq:h-b-consistency-at-a-point} and \Cref{eq:ld-abcon} implies that
\begin{equation*}
H_{[h(u, x)=a]} \otimes I \simeq_{\nu_6 + \sqrt{8m \eps + 4\delta}} I \otimes A^{u,x}_a.
\end{equation*}
We can bound this error by
\begin{align*}
\sqrt{8m \eps + 4\delta} + \nu_6
& = \sqrt{8m \eps + 4\delta}
	+ 44 k^2m \cdot \left(\eps^{1/32} + \delta^{1/32} +  \gamma^{1/32} +\zeta^{1/32} + (d/q)^{1/32}\right)\\
& \leq 3m \eps^{1/32} + 2\delta^{1/32}
	+ 44 k^2m \cdot \left(\eps^{1/32} + \delta^{1/32} +  \gamma^{1/32} + \zeta^{1/32} +(d/q)^{1/32}\right)\\
& \leq  47 k^2m \cdot \left(\eps^{1/32} + \delta^{1/32} +  \gamma^{1/32} + \zeta^{1/32} +(d/q)^{1/32}\right).
\end{align*}
This is clearly at most~$\nu$, which completes the proof.
\end{proof}

\subsection{Completeness of~$H$}
\label{sec:completeness-of-h-low-degree}

\begin{definition}
Let $\tau \in \{0, 1\}^k$ be a type.
We define the following two subsets of $\calP^+(m,q,d)^k$.
\begin{itemize}
\item
We define $\mathsf{Outcomes}_{\tau}$ to be the set of tuples $(g_1, \ldots, g_k)$ such that $g_i  \in \polyfunc{m}{q}{d}$ for each $i \in \tau$ and $g_i = \bot$ for each $i \notin \tau$.
This is the set of possible outcomes of the $\widehat{H}$ measurement of type~$\tau$.
\item
Let $x_1, \ldots, x_k \in \F_q$.
We define $\mathsf{Global}_{\tau}(x)$ to be the subset of $\mathsf{Outcomes}_{\tau}$ containing only those tuples which are consistent with a global polynomial. In other words, $(g_1, \ldots, g_k) \in \mathsf{Global}_{\tau}(x)$ if there exists an $h \in \polyfunc{m+1}{q}{d}$ such that $g_i = h|_{x_i}$ for each $i \in \tau$.
Next, we define
\begin{equation*}
\overline{\mathsf{Global}_{\tau}(x)} = \mathsf{Outcomes}_{\tau} \setminus \mathsf{Global}_{\tau}(x).
\end{equation*}
This contains those tuples of type~$\tau$ with no consistent global polynomial.
\end{itemize}
\end{definition}

\begin{lemma}
\label{lem:over-all-outcomes}
Let $\bx_1, \ldots, \bx_k \sim \F_q$ be sampled uniformly at random.
Then
\begin{equation*}
\bra{\psi} H \otimes I \ket{\psi}
\approx_{\nu_7}  \E_{\bx_1, \ldots, \bx_k} \sum_{\tau:|\tau| \geq d+1} \sum_{(g_1, \ldots, g_k) \in \mathsf{Outcomes}_{\tau}} \bra{\psi} \widehat{H}^{\bx_1, \ldots, \bx_k}_{g_1, \ldots, g_k} \otimes I \ket{\psi},
\end{equation*}
where
\begin{equation*}
\nu_7 = 46 k^2m \cdot \left(\eps^{1/32} + \delta^{1/32} +\gamma^{1/32} +  \zeta^{1/32} + (d/q)^{1/32}\right).
\end{equation*}
\end{lemma}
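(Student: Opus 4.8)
The plan is to unfold the definition of the pasted sub-measurement $H$ and then to account for exactly two sources of discrepancy with the stated target: the switch from the distribution $\distinct{k}$ to the uniform distribution on $\F_q^k$, and the passage from tuples consistent with a global polynomial (the set $\mathsf{Global}_\tau$) to all tuples of large type (the set $\mathsf{Outcomes}_\tau$). First I would unfold: since $H = \E_{\bx\sim\distinct{k}}\sum_h H^{\bx}_h$ and $H^x_h = \sum_{w:|w|\ge d+1}\widehat{H}^x_{h_w}$, summing over $h$ and using that for $x\in\distinct{k}$ the map $(h,w)\mapsto h_w$ is a bijection from $\{(h,w):|w|\ge d+1\}$ onto $\bigsqcup_{\tau:|\tau|\ge d+1}\mathsf{Global}_\tau(x)$ (interpolation through $\ge d+1$ distinct nodes in the $(m+1)$-st coordinate is well defined, injective, and preserves individual degree $d$), one gets
\[
\bra{\psi}H\otimes I\ket{\psi} = \E_{\bx\sim\distinct{k}}\sum_{\tau:|\tau|\ge d+1}\sum_{g\in\mathsf{Global}_\tau(\bx)}\bra{\psi}\widehat{H}^{\bx}_g\otimes I\ket{\psi}.
\]

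Next I would enlarge $\mathsf{Global}_\tau(\bx)$ to $\mathsf{Outcomes}_\tau$, at the cost of $E' := \E_{\bx\sim\distinct{k}}\sum_{\tau:|\tau|\ge d+1}\sum_{g\in\overline{\mathsf{Global}_\tau(\bx)}}\bra{\psi}\widehat{H}^{\bx}_g\otimes I\ket{\psi}$, and then switch from $\distinct{k}$ to the uniform distribution on $\F_q^k$ using \Cref{prop:ld-dnoteq}. For the latter, one checks that $\sum_g\widehat{H}^x_g = I$ for every $x$ (telescope from the inside, using that each $\widehat{G}^{x_i}$ is a projective measurement), so the quantity being averaged lies in $[0,1]$ and the switch costs at most $k^2/q$ (a similar switch, costing another $k^2/q$, is used inside the $E'$ bound below to pass between distinct and uniform tuples so that \Cref{lem:ld-sandwich-line-one-point} applies).

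The heart of the argument, and the main obstacle, is bounding $E'$. The key combinatorial observation is that, for $x\in\distinct{k}$, a tuple $g$ of type $\tau$ with $|\tau|\ge d+1$ lies in $\overline{\mathsf{Global}_\tau(x)}$ iff there is some $u\in\F_q^m$ at which the values $(g_i(u))_{i\in\tau}$ do not lie on a common univariate polynomial of degree $\le d$ (in particular this forces $|\tau|\ge d+2$); moreover, letting $h$ be the interpolant of the first $d+1$ coordinates of $\tau$, a bad coordinate $j$ witnesses that $g_j$ and $h|_{x_j}$ are \emph{distinct} polynomials of individual degree $d$ on $\F_q^m$, so by Schwartz–Zippel the set of such $u$ has density at least $1-md/q$. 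Hence $\bone[g\in\overline{\mathsf{Global}_\tau(x)}]\le \E_{\bu}\bone[(g_i(\bu))_{i\in\tau}\text{ not degree }\le d] + md/q$. Plugging this into $E'$, using $\sum_{\tau:|\tau|\ge d+1}\sum_{g\in\mathsf{Outcomes}_\tau}\bra{\psi}\widehat{H}^{\bx}_g\otimes I\ket{\psi}\le 1$ to absorb the additive $md/q$, inserting the complete measurement $I=\sum_f B^{\bu}_f$ on the second factor, and union-bounding over the $\ge 1$ coordinate $i\in\tau$ on which every degree-$d$ line polynomial $f$ must disagree with an inconsistent tuple, one arrives at
\[
E' \le md/q + \sum_{i=1}^{k}\E_{\bu}\E_{\bx}\sum_{g:g_i\neq\bot}\sum_{a\neq g_i(\bu)}\bra{\psi}\widehat{H}^{\bx}_g\otimes B^{\bu}_{[f(\bx_i)=a]}\ket{\psi} \le md/q + k\nu_5,
\]
where each of the $k$ summands is exactly the quantity bounded by $\nu_5$ in \Cref{lem:ld-sandwich-line-one-point}.

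Finally I would collect the errors: $\bra{\psi}H\otimes I\ket{\psi}$ agrees with the target up to $E' + O(k^2/q) \le k\nu_5 + md/q + O(k^2/q)$. Substituting $\nu_5 = 43km\bigl(\eps^{1/32}+\delta^{1/32}+\gamma^{1/32}+\zeta^{1/32}+(d/q)^{1/32}\bigr)$ and bounding both $md/q$ and $k^2/q$ by $k^2m(d/q)^{1/32}$ (using $1/q\le (d/q)^{1/32}$ since $d\ge 1$), this is at most $46k^2m\bigl(\eps^{1/32}+\delta^{1/32}+\gamma^{1/32}+\zeta^{1/32}+(d/q)^{1/32}\bigr)=\nu_7$, as required. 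The delicate points are all concentrated in the $E'$ bound: extracting the clean "inconsistent at some $u$" characterization of $\overline{\mathsf{Global}}$, converting it via Schwartz–Zippel into "inconsistent at a $(1-md/q)$-dense set of $u$" so that the pointwise test of \Cref{lem:ld-sandwich-line-one-point} can detect it, and keeping every constant below the $\nu_7$ budget.
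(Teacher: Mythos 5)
Your proposal is correct and follows essentially the same route as the paper: unfold $H$ over $\distinct{k}$ into a sum over $\mathsf{Global}_\tau$, enlarge to $\mathsf{Outcomes}_\tau$ at cost $E'$, bound $E'$ by inserting the complete $B$-measurement and combining \Cref{lem:ld-sandwich-line-one-point} with Schwartz--Zippel, and finally switch from $\distinct{k}$ to the uniform distribution via \Cref{prop:ld-dnoteq}. The only structural difference is cosmetic: you apply Schwartz--Zippel first (to replace the ``not global'' indicator with an average of a pointwise inconsistency indicator, losing $md/q$) and then union-bound the pointwise inconsistency against the $B$-measurement; the paper instead inserts $B$ first, splits on whether $f(\by_i)=g_i(\bu)$ for all $i\in\tau$, absorbs the ``exists a disagreeing coordinate'' branch into $k\nu_5$, and only then uses Schwartz--Zippel on the remaining $\Pr_{\bu}[\mathsf{Consistent}_\tau]$ term. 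These are equivalent reformulations, and your error accounting ($2k^2/q + md/q + k\nu_5 \leq \nu_7$, with both $k^2/q$ charges and the $md/q$ dominated by $k^2m(d/q)^{1/32}$) matches the paper's; the only slight imprecision is that the line ``$E' \leq md/q + k\nu_5$'' already implicitly spends one $k^2/q$ to let \Cref{lem:ld-sandwich-line-one-point} apply over the uniform distribution, which you do acknowledge but hide in the $O(k^2/q)$.
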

\begin{proof}
Let $(\by_1, \ldots, \by_k) \sim \mathsf{Distinct}_k$.
By definition,
\begin{align}
\bra{\psi} H \otimes I \ket{\psi}
&= \sum_h \bra{\psi} H_h \otimes I \ket{\psi}\nonumber\\
&= \E_{\by_1, \ldots, \by_k} \sum_h \bra{\psi} H^{\by_1, \ldots, \by_k}_h \otimes I \ket{\psi}\nonumber\\
&= \E_{\by_1, \ldots, \by_k} \sum_h \sum_{\tau:|\tau| \geq d+1} \bra{\psi} \widehat{H}^{\by_1, \ldots, \by_k}_{h_{\tau}} \otimes I \ket{\psi}\nonumber\\
&= \E_{\by_1, \ldots, \by_k} \sum_h \sum_{\tau:|\tau| \geq d+1} \sum_{(g_1, \ldots, g_k) = h_\tau} \bra{\psi} \widehat{H}^{\by_1, \ldots, \by_k}_{g_1, \ldots, g_k} \otimes I \ket{\psi}\nonumber\\
&= \E_{\by_1, \ldots, \by_k}  \sum_{\tau:|\tau| \geq d+1} \sum_{(g_1, \ldots, g_k) \in \mathsf{Global}_{\tau}(\by)} \bra{\psi} \widehat{H}^{\by_1, \ldots, \by_k}_{g_1, \ldots, g_k} \otimes I \ket{\psi}. \label{eq:sum-restricted-to-global-polynomial}
\end{align}
The sum in \Cref{eq:sum-restricted-to-global-polynomial}
is over $g_1, \ldots, g_k$ which are consistent with a global polynomial.
We will now show that the value of this sum
remains largely unchanged if we drop this condition.
In particular, we claim that
\begin{equation}\label{eq:remove-the-restriction}
\eqref{eq:sum-restricted-to-global-polynomial}
\approx_{\frac{k^2}{q} + k \cdot\nu_5 + \frac{md}{q}}
\E_{\by_1, \ldots, \by_k}  \sum_{\tau:|\tau| \geq d+1} \sum_{(g_1, \ldots, g_k) \in \mathsf{Outcomes}_{\tau}} \bra{\psi} \widehat{H}^{\by_1, \ldots, \by_k}_{g_1, \ldots, g_k} \otimes I \ket{\psi}.
\end{equation}
To show this, we note that $\eqref{eq:remove-the-restriction} \geq \eqref{eq:sum-restricted-to-global-polynomial}$. Hence, it suffices to upper bound their difference.
\begin{align}
\eqref{eq:remove-the-restriction} - \eqref{eq:sum-restricted-to-global-polynomial}
&= \E_{\by_1, \ldots, \by_k} \sum_{\tau:|\tau| \geq d+1} \sum_{(g_1, \ldots, g_k) \in \overline{\mathsf{Global}_{\tau}(\by)}} \bra{\psi} \widehat{H}^{\by_1, \ldots, \by_k}_{g_1, \ldots, g_k} \otimes I \ket{\psi}\nonumber\\
&=  \E_{\bu} \E_{\by_1, \ldots, \by_k}  \sum_{\tau:|\tau| \geq d+1} \sum_f \sum_{(g_1, \ldots, g_k) \in \overline{\mathsf{Global}_{\tau}(\by)}} \bra{\psi} \widehat{H}^{\by_1, \ldots, \by_k}_{g_1, \ldots, g_k} \otimes B^{\bu}_f \ket{\psi},\label{eq:B-appears-out-of-thin-air}
\end{align}
because~$B$ is a measurement. Next, we claim that
\begin{equation}\label{eq:add-in-indicator-for-f-and-g}
\eqref{eq:B-appears-out-of-thin-air}
\approx_{\frac{k^2}{q} + k \cdot\nu_5} 
\E_{\bu} \E_{\by_1, \ldots, \by_k}   \sum_{\tau:|\tau| \geq d+1} \sum_f \sum_{(g_1, \ldots, g_k) \in \overline{\mathsf{Global}_{\tau}(\by)}} \bra{\psi} \widehat{H}^{\by_1, \ldots, \by_k}_{g_1, \ldots, g_k} \otimes B^{\bu}_f \ket{\psi} \cdot \bone[\forall i\in \tau, f(\by_i) = g_i(\bu)].
\end{equation}
To show this, we note that $\eqref{eq:B-appears-out-of-thin-air} \geq \eqref{eq:add-in-indicator-for-f-and-g}$.
Thus, it suffices to upper bound their difference $\eqref{eq:B-appears-out-of-thin-air} - \eqref{eq:add-in-indicator-for-f-and-g}$. This is given by
\begin{equation}\label{eq:about-to-swap-x-for-y}
\E_{\bu} \E_{\by_1, \ldots, \by_k}  \sum_{\tau: |\tau|\geq d+1} \sum_f  \sum_{(g_1, \ldots, g_k) \in \overline{\mathsf{Global}_{\tau}(\by)}} \bra{\psi} \widehat{H}^{\by_1, \ldots, \by_k}_{g_1, \ldots, g_k} \otimes B^{\bu}_f \ket{\psi} \cdot \bone[\exists i \in \tau, f(\by_i) \neq g_i(\bu)].
\end{equation}
Recall that $\bx_1, \ldots, \bx_k \sim \F_q$ are sampled independently and uniformly at random.
Then \Cref{prop:ld-dnoteq} implies that \Cref{eq:about-to-swap-x-for-y} is $(k^2/q)$-close to
\begin{align*}
 &\E_{\bu} \E_{\bx_1, \ldots, \bx_k}  \sum_{\tau: |\tau|\geq d+1} \sum_f  \sum_{(g_1, \ldots, g_k) \in \overline{\mathsf{Global}_{\tau}(\bx)}} \bra{\psi} \widehat{H}^{\bx_1, \ldots, \bx_k}_{g_1, \ldots, g_k} \otimes B^{\bu}_f \ket{\psi} \cdot \bone[\exists i \in \tau, f(\bx_i) \neq g_i(\bu)]\\
 \leq~& \E_{\bu} \E_{\bx_1, \ldots, \bx_k}   \sum_{\tau:|\tau| \geq d+1} \sum_f \sum_{(g_1, \ldots, g_k) \in \mathsf{Outcomes}_{\tau}} \bra{\psi} \widehat{H}^{\bx_1, \ldots, \bx_k}_{g_1, \ldots, g_k} \otimes B^{\bu}_f \ket{\psi} \cdot \bone[\exists i \in \tau, f(\bx_i) \neq g_i(\bu)]\\
 \leq~& \E_{\bu} \E_{\bx_1, \ldots, \bx_k}   \sum_{\tau} \sum_f \sum_{(g_1, \ldots, g_k) \in \mathsf{Outcomes}_{\tau}} \bra{\psi} \widehat{H}^{\bx_1, \ldots, \bx_k}_{g_1, \ldots, g_k} \otimes B^{\bu}_f \ket{\psi} \cdot \bone[\exists i \in \tau, f(\bx_i) \neq g_i(\bu)]\\
 \leq~&  \E_{\bu} \E_{\bx_1, \ldots, \bx_k}  \sum_{\tau} \sum_f \sum_{(g_1, \ldots, g_k) \in \mathsf{Outcomes}_{\tau}} \bra{\psi} \widehat{H}^{\bx_1, \ldots, \bx_k}_{g_1, \ldots, g_k} \otimes B^{\bu}_f \ket{\psi} \cdot \Big(\sum_{i\in \tau} \bone[f(\bx_i) \neq g_i(\bu)]\Big)\\
 =~& \sum_i  \E_{\bu} \E_{\bx_1, \ldots, \bx_k}  \sum_{g_1, \ldots, g_k : g_i \neq \bot} \sum_{f: f(\bx_i) \neq g_i(\bu)} \bra{\psi} \widehat{H}^{\bx_1, \ldots, \bx_k}_{g_1, \ldots, g_k} \otimes B^{\bu}_f \ket{\psi}\\
 =~& \sum_i  \E_{\bu} \E_{\bx_1, \ldots, \bx_k}  \sum_{g_1, \ldots, g_k : g_i \neq \bot} \sum_{a \neq g_i(\bu)} \bra{\psi} \widehat{H}^{\bx_1, \ldots, \bx_k}_{g_1, \ldots, g_k} \otimes B^{\bu}_{[f(\bx_i)=a]} \ket{\psi}\\
 \leq~& \sum_i (\nu_5) \tag{by \Cref{lem:ld-sandwich-line-one-point}}\\
 =~& k \cdot \nu_5.
\end{align*}

Returning to \Cref{eq:add-in-indicator-for-f-and-g},
we introduce the  notation $\mathsf{Consistent}_{\tau}(g, \by, \bu)$
to indicate whether there is a consistent degree-$d$ polynomial
interpolating the $g_i$'s along the line in direction~$u$.
In other words,
\begin{equation*}
\mathsf{Consistent}_{\tau}(g, \by, \bu)
= \left\{\begin{array}{rl}
	1 & \text{if $\exists f$ such that $f(\by_i) = g_i(\bu)$ for all $i \in \tau$,}\\
	0 & \text{otherwise.}
	\end{array}\right.
\end{equation*}
Clearly, for any $f$,
\begin{equation*}
\bone[\forall i\in \tau, f(\by_i) = g_i(\bu)] \leq \bone[\mathsf{Consistent}_{\tau}(g, \by, \bu)]
\end{equation*}
because the left-hand side indicates whether~$f$ is the consistent degree-$d$ polynomial
interpolating along direction~$u$.
As a result,
\begin{align}
\eqref{eq:add-in-indicator-for-f-and-g}
\leq~&
\E_{\bu} \E_{\by_1, \ldots, \by_k} \sum_{\tau:|\tau|\geq d+1} \sum_f  \sum_{(g_1, \ldots, g_k) \in \overline{\mathsf{Global}_{\tau}(\by)}} \bra{\psi} \widehat{H}^{\by_1, \ldots, \by_k}_{g_1, \ldots, g_k} \otimes B^{\bu}_f \ket{\psi} \cdot \bone[\mathsf{Consistent}_{\tau}(g,\by,\bu)]\nonumber\\
=~&
\E_{\bu} \E_{\by_1, \ldots, \by_k} \sum_{\tau:|\tau|\geq d+1} \sum_{(g_1, \ldots, g_k) \in \overline{\mathsf{Global}_{\tau}(\by)}} \bra{\psi} \widehat{H}^{\by_1, \ldots, \by_k}_{g_1, \ldots, g_k} \otimes \Big(\sum_f B^{\bu}_f\Big) \ket{\psi} \cdot \bone[\mathsf{Consistent}_{\tau}(g,\by,\bu)]\nonumber\\
=~&
\E_{\bu} \E_{\by_1, \ldots, \by_k}   \sum_{\tau:|\tau|\geq d+1} \sum_{(g_1, \ldots, g_k)  \in \overline{\mathsf{Global}_{\tau}(\by)}} \bra{\psi} \widehat{H}^{\by_1, \ldots, \by_k}_{g_1, \ldots, g_k} \otimes I \ket{\psi} \cdot \bone[\mathsf{Consistent}_{\tau}(g,\by,\bu)]\nonumber\\
=~&
 \E_{\by_1, \ldots, \by_k}   \sum_{\tau:|\tau|\geq d+1} \sum_{(g_1, \ldots, g_k) \in \overline{\mathsf{Global}_{\tau}(\by)}} \bra{\psi} \widehat{H}^{\by_1, \ldots, \by_k}_{g_1, \ldots, g_k} \otimes I \ket{\psi} \cdot \Pr_{\bu}[\mathsf{Consistent}_{\tau}(g,\by,\bu)].\label{eq:consistent-indicator}
\end{align}

Let us now fix $(y_1, \ldots, y_k) \in \mathsf{Distinct}_k$,
a type $\tau$ such that $|\tau| \geq d+1$,
and $(g_1, \ldots, g_k) \in \overline{\mathsf{Global}_{\tau}(y)}$.
Suppose without loss of generality that $\tau_1 = \cdots = \tau_{d+1} = 1$,
so that $g_1, \ldots, g_{d+1} \in \polyfunc{m}{q}{d}$.
Then there is a unique polynomial $h^* \in \polyfunc{m+1}{q}{d}$ which interpolates $g_1, \ldots, g_{d+1}$.
In other words, for all $1 \leq i \leq d+1$,
\begin{equation*}
h^*(u, y_i) = g_i(u).
\end{equation*}
In addition, for any $u$, 
$(h^*)|_u$ is the unique degree-$d$ polynomial
which interpolates $g_1, \ldots, g_{d+1}$ along the line in direction~$u$.
Thus, if there is a consistent degree-$d$ polynomial interpolating all the $g_i$'s along the line in direction~$u$,
then it is $(h^*)|_u$.
In math,
\begin{equation*}
\mathsf{Consistent}_{\tau}(g, y,u) = 1
\qquad \text{if and only if} \qquad
\forall i \in \tau, g_i(u) = h^*(u, y_i).
\end{equation*}
On the other hand, because $(g_1, \ldots, g_k) \in \overline{\mathsf{Global}_{\tau}(y)}$, 
there exists an $i^* \in \tau$ such that $g_{i^*} \neq (h^*)|_{y_{i^*}}$.
Hence,
\begin{equation*}
\Pr_{\bu}[\mathsf{Consistent}_{\tau}(g, y,\bu)]
\leq \Pr_{\bu}[g_{i^*}(\bu) = h^*(\bu, y_{i^*})]
= \Pr_{\bu}[g_{i^*}(\bu) = (h^*)|_{y_{i^*}}(\bu)]
\leq \frac{md}{q},
\end{equation*}
by Schwartz-Zippel.
As a result,
\begin{align*}
\eqref{eq:consistent-indicator}
& \leq 
 \E_{\by_1, \ldots, \by_k}   \sum_{\tau:|\tau|\geq d+1} \sum_{(g_1, \ldots, g_k) \in \overline{\mathsf{Global}_{\tau}(\by)}} \bra{\psi} \widehat{H}^{\by_1, \ldots, \by_k}_{g_1, \ldots, g_k} \otimes I \ket{\psi} \cdot \frac{md}{q}\\
 &\leq
  \E_{\by_1, \ldots, \by_k}   \sum_{g_1, \ldots, g_k} \bra{\psi} \widehat{H}^{\by_1, \ldots, \by_k}_{g_1, \ldots, g_k} \otimes I \ket{\psi} \cdot \frac{md}{q}\\
  & = \frac{md}{q}.
\end{align*}
This establishes \Cref{eq:remove-the-restriction}.
Finally, \Cref{prop:ld-dnoteq} implies that \Cref{eq:remove-the-restriction} is $(k^2/q)$-close to
\begin{equation*}
\E_{\bx_1, \ldots, \bx_k}  \sum_{\tau:|\tau| \geq d+1} \sum_{(g_1, \ldots, g_k) \in \mathsf{Outcomes}_{\tau}} \bra{\psi} \widehat{H}^{\bx_1, \ldots, \bx_k}_{g_1, \ldots, g_k} \otimes I \ket{\psi}.
\end{equation*}
In total, this gives an error of
\begin{align*}
2 \frac{k^2}{q}  + \frac{md}{q}+ k \cdot \nu_5
& = 2 \frac{k^2}{q}  + \frac{md}{q} + k \cdot 43 km \cdot \left(\eps^{1/32} + \delta^{1/32} + \gamma^{1/32} + \zeta^{1/32} +  (d/q)^{1/32}\right)\\
& \leq 3 \frac{k^2 md}{q}  +  43 k^2m \cdot \left(\eps^{1/32} + \delta^{1/32} +\gamma^{1/32} +\zeta^{1/32} +  (d/q)^{1/32}\right)\\
& \leq 3 k^2 m (d/q)^{1/32}  +  43 k^2m \cdot \left(\eps^{1/32} + \delta^{1/32} +\gamma^{1/32} +  \zeta^{1/32} + (d/q)^{1/32}\right)\\
& \leq  46 k^2m \cdot \left(\eps^{1/32} + \delta^{1/32} + \gamma^{1/32} +  \zeta^{1/32} +(d/q)^{1/32}\right).
\end{align*}
This completes the proof of the lemma.
\end{proof}

\ignore{

\subsection{Interpolation}
\label{sec:ld-interpolation}
We define the measurement
\[ N_{h} = \sum_{w: |w|  \geq d+1} \E_{\bx_1, \dots, \bx_k \sim \calD_{\neq}} H^{\bx_1,
    \dots, \bx_k}_{h_w}, \]
where $\calD_{\neq}$ is the uniform distribution over $k$-tuples of
\emph{distinct} elements of $\F_q$, and the notation $h_w$ is shorthand for the tuple $(h_1, \dots,
h_k)$ where $h_i = \bot$ if $w_i = 0$ and $= h_{|\bx_i}$
otherwise. For a given $w$, we
denote the set of such tuples $(h_1, \dots, h_k)$ by $T_w$. We will
also denote the set of tuples $(h_1, \dots, h_k)$ where $h_i = \bot$
if $w_i = 0$ and $h_i \neq \bot$ otherwise by $S_w$. Note that $T_w
\subset S_w$, as $T_w$ only includes tuples where the non-$\bot$
polynomials are consistent with a single global polynomial.

For any tuple in $S_w$, define $\interp(h_1, \dots, h_k)$ to be the unique polynomial
interpolating the first non-$\bot$ elements of the tuple $h_1, \dots, h_k$.

\begin{lemma}
Consistency with $B$: For all $i$, and for tuples $x_1, \dots, x_k$
sampled according to the distribution $\calD_{\neq}$
\[ H^{x_1, \dots, x_k}_{[g_i(w) = a]} \ot I \simeqbot_{\delta_{\rmcons}} I \ot
  B^{w}_{[f(x_i) = a]}, \]
where $\delta_{\rmcons} = \nu_1 + k\cdot(8\zeta + 2\nu_{\rmcom})
+ k^2/q$.
\label{lem:ld-sandwich-line-one-point-neq}
\end{lemma}
\begin{proof}
  We relate the inconsistency we wish to bound to the inconsistency in
  \Cref{lem:ld-sandwich-line-one-point}, by using the closeness of
  $\calD_{\neq}$ and the uniform distribution (\Cref{prop:ld-dnoteq}).
  \ainnote{TODO: fill in details}
\end{proof}

\subsection{Consistency of the interpolated measurement}

\begin{lemma}[\Cref{item:ld-pasting-N-consistency} of \Cref{thm:ld-pasting}]
  \[ N_{h} \ot I \simeqbot_{k \cdot \delta_{\rmcons}} I \ot B^{u}_{h_{|u}}. \]
  \label{lem:N-B-consistency}
\end{lemma}
\begin{proof}
  We will apply \Cref{lem:ld-sandwich-line-one-point-neq}, which states that
  \[ H^{x_1, \dots, x_k}_{[g_i(w) = a]} \ot I \simeqbot_{\delta_{\rmcons}} I \ot
    B^{u}_{[f(x_i) = a]}, \]
  where $x_1, \dots, x_k$ are sampled from $\calD_{\neq}$.
Writing out the meaning of this consistency, we obtain that
\[ \E_{\bu}  \E_{\bx_1, \dots, \bx_k \sim \calD_{\neq}} \sum_{g_1, \dots, g_k: g_i \neq \bot}  \sum_{a \neq
    g_i(\bu)} \bra{\psi}
  H^{\bx_1, \dots, \bx_k}_{g_1, \dots, g_k} \ot B^{\bu}_{[f(\bx_i) =
    a]} \ket{\psi} \leq \delta_{\rmcons}. \]
By a simple union bound, it thus follows that
\[ \E_{\bu} \E_{\bx_1, \dots, \bx_k \sim \calD_{\neq}} \sum_{g_1, \dots, g_k} \sum_{f:
    \exists i, g_{i} \neq \bot \wedge f(\bx_i) \neq g_i(\bu)} \bra{\psi}  H^{\bx_1, \dots, \bx_k}_{g_1,
    \dots, b_k} \ot B^{\bu}_{f} \ket{\psi} \leq k \cdot \delta_{\rmcons}. \]

Using this inconsistency bound, we may now calculate the inconsistency
of $N_h$ with $B$:
\begin{align}
  &\E_{\bu} \sum_{h} \sum_{f \neq h_{|\bu}}  \bra{\psi}
    N_{h} \ot B^{\bu}_{f} \ket{\psi} \\
  =~&\E_{\bu} \E_{\bx_1, \dots, \bx_k \sim \calD_{\neq}} \sum_{w: |w|
      \geq d+1} \sum_{(h_1, \dots, h_k) \in T_w}  \sum_{f \neq
      \interp(h_1, \dots, h_k)_{|\bu}}  \bra{\psi}
    H^{\bx_1, \dots, \bx_k}_{h_1, \dots, h_k} \ot B^{\bu}_{f}
      \ket{\psi} \\
  \leq~&\E_{\bu} \E_{\bx_1, \dots, \bx_k \sim \calD_{\neq}} \sum_{w: |w|
      \geq d+1} \sum_{(h_1, \dots, h_k) \in T_w}  \sum_{f: \exists i
            \in w,
            h_i(\bu) \neq f(\bx_i) }  \bra{\psi}
    H^{\bx_1, \dots, \bx_k}_{h_1, \dots, h_k} \ot B^{\bu}_{f} \ket{\psi} \\
  \leq~& k \cdot \delta_{\rmcons}.
\end{align}
Here the notation $i \in w$ means an $i$ such that $w_i = 1$, and the
first inequality follows from the fact that if $f \neq \interp(h_1,
\dots, h_k)_{|\bu}$, then there must be at least one $i \in w$ such that
$h_i(\bu) \neq f(\bx_i)$\anote{This follows from some sort of unique
  interpolation fact which needs to be written up.}.
\end{proof}

\subsection{Stuff}
}
\ignore{
\begin{lemma}[Self-consistency of multiple $\widehat{G}$'s]
\label{lem:switch-g-half-sandwich}
 For all $k \geq 2$,
  \[ \widehat{G}^{x_1}_{g_1} \widehat{G}^{x_2}_{g_2} \dots \widehat{G}^{x_k}_{g_k} \ot I \approx_{2k^2\zeta} I \ot
    \widehat{G}^{x_k}_{g_k} \dots \widehat{G}^{x_2}_{g_2} \widehat{G}^{x_1}_{g_1}. \]
\end{lemma}
\begin{proof}
  The proof is by repeated application of \Cref{eq:gselfconall}
  together with \Cref{prop:cab-approx-delta}:
  \begin{align*}
    \widehat{G}^{x_1}_{g_1} \widehat{G}^{x_2}_{g_2} \dots \widehat{G}^{x_k} _{g_k} \ot I
    &\approx_{2\zeta}
                                                             \widehat{G}^{x_1}_{g_1}
                                                             \dots
                                                             \widehat{G}^{x_{k-1}}_{g_{k-1}}
                                                             \ot
                                                             \widehat{G}^{x_k}_{g_k}
    \\
    &\approx_{2\zeta} \widehat{G}^{x_1}_{g_1} \dots \widehat{G}^{x_{k-2}}_{g_{k-2}} \ot
      \widehat{G}^{x_k}_{g_k} \widehat{G}^{x_{k-1}}_{g_{k-1}} \\
    &\dots \\
    &\approx_{2\zeta} I \ot \widehat{G}^{x_k}_{g_k} \dots \widehat{G}^{x_1}_{g_1}.
  \end{align*}
  In total, we have $k$ applications of \Cref{eq:gselfconall} with error $2\zeta$ each.
  By \Cref{prop:triangle-inequality-for-approx_delta}, this implies that
  \begin{equation*}
  \widehat{G}^{x_1}_{g_1} \widehat{G}^{x_2}_{g_2} \dots \widehat{G}^{x_k} _{g_k} \ot I
  \approx_{2k^2 \zeta} I \ot \widehat{G}^{x_k}_{g_k} \dots \widehat{G}^{x_1}_{g_1},
  \end{equation*}
  as claimed.
\end{proof}
}

\begin{lemma}
\label{lem:from-H-to-G}
Let $\bx_1, \ldots, \bx_k \sim \F_q$ be sampled uniformly at random.
Then
\begin{equation*}
\E_{\bx_1, \dots, \bx_k} \sum_{\tau:|\tau| \geq d+1} 
	\sum_{(g_1, \ldots, g_{k}) \in \outc_{\tau}} \bra{\psi} \widehat{H}^{\bx_1, \dots, \bx_k}_{g_1, \dots, g_k} \ot I \ket{\psi}
\approx_{\nu_8}
\sum_{i = d+1}^k \binom{k}{i} \bra{\psi} G^i (I-G)^{k-i} \otimes I \ket{\psi},
\end{equation*}
where
\begin{equation*}
\nu_8 = 46 k m \cdot \left( \gamma^{1/32} + \zeta^{1/32} +(d/q)^{1/32}\right).
\end{equation*}
\end{lemma}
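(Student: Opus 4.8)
The plan is to prove this by induction on $k$, peeling the outermost factor $\widehat G^{x_1}_{g_1}$ off each sandwich $\widehat H^{x_1,\dots,x_k}_{g_1,\dots,g_k}$ one layer at a time. For the bookkeeping, write $B^{(x_2,\dots,x_k)}_{\ge t}$ for the sum of $\widehat H^{(x_2,\dots,x_k)}_{(g_2,\dots,g_k)}$ over all tuples with at least $t$ non-$\bot$ entries; this is positive semidefinite with operator norm at most $1$, since $\sum_{(g_2,\dots,g_k)}\widehat H^{(x_2,\dots,x_k)}_{(g_2,\dots,g_k)}=I$ (telescoping, as each $\widehat G^{x_i}$ is a complete projective measurement). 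Splitting according to whether $g_1$ is a polynomial or $\bot$ gives, for every threshold $t$,
\[
\sum_{(g_1,\dots,g_k)\,:\,\ge t\text{ successes}}\widehat H^{x_1,\dots,x_k}_{g_1,\dots,g_k}=\sum_{g_1\in\polyfunc{m}{q}{d}}G^{x_1}_{g_1}\,B^{(x_2,\dots,x_k)}_{\ge t-1}\,G^{x_1}_{g_1}+G^{x_1}_{\bot}\,B^{(x_2,\dots,x_k)}_{\ge t}\,G^{x_1}_{\bot}.
\]
Applying \Cref{prop:switch-sandwich} to the first summand with the projective sub-measurement $\{G^{x_1}_{g_1}\}_{g_1}$ (strong self-consistency $\zeta$ by \Cref{item:ld-pasting-self-consistency}) and to the second with $\{G^{x_1}_{\bot}\}$ (strong self-consistency $\zeta$ by \Cref{cor:g-bot-self-consistency}), and using $\E_{\bx_1}\sum_{g_1}G^{\bx_1}_{g_1}=G$ and $\E_{\bx_1}G^{\bx_1}_{\bot}=I-G$, the quantity $\E_{\bx}\sum_{\ge t\text{ succ.}}\bra\psi\widehat H^{\bx}_g\otimes I\ket\psi$ equals, up to $O(\sqrt\zeta)$, a sum of two terms of the form $\E_{\bx_2,\dots,\bx_k}\bra\psi B^{(\bx_2,\dots,\bx_k)}_{\ge t'}\otimes C\ket\psi$ with $C\in\{G,I-G\}$, $t'\in\{t-1,t\}$ — each being the left-hand side of the same statement one level down, with a bounded operator inserted on the opposite tensor factor.

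This motivates the strengthened inductive hypothesis: for all $t$ and all matrices $D$ with $\|D\|\le 1$,
\[
\E_{\bx\sim\F_q^{k}}\ \sum_{(g)\,:\,\ge t\text{ succ.}}\bra\psi\widehat H^{\bx}_g\otimes D\ket\psi\ \approx_{\epsilon_k}\ \sum_{i=t}^{k}\binom{k}{i}\bra\psi G^{i}(I-G)^{k-i}\otimes D\ket\psi,
\]
with $\epsilon_0=0$ (base case: both sides are $\bra\psi I\otimes D\ket\psi$ if $t\le 0$ and $0$ otherwise). In the inductive step, after the peeling move $D$ gets pinched by $G^{\bx_1}_{g_1}$, producing $\sum_{g_1}G^{\bx_1}_{g_1}D\,G^{\bx_1}_{g_1}$ on side~$\mathrm B$; one ``de-pinches'' this to $G^{\bx_1}D$ using projectivity and the approximate commutation of $\widehat G$ and its complete and incomplete parts (\Cref{cor:G-hat-facts}, \Cref{cor:commuting-with-G-complete}, \Cref{cor:commuting-with-G-incomplete}). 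Invoking the inductive hypothesis with the two side-$\mathrm B$ operators $GD$ and $(I-G)D$, then transferring these $G$ and $I-G$ factors across the tensor (cost $O(\sqrt\zeta)$ each — legitimate because whenever $D$ is a polynomial in $G$ so are $GD$ and $(I-G)D$, and such operators commute exactly, so one can bring the factor adjacent to $\ket\psi$ on side~$\mathrm B$ before moving it) and absorbing them into $G^{i}$ and $(I-G)^{k-i}$, Pascal's rule $\binom{k-1}{i-1}+\binom{k-1}{i}=\binom{k}{i}$ closes the induction. Taking $D=I$ and $t=d+1$ recovers the statement of the lemma.

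The main obstacle is the error accounting, i.e.\ showing $\epsilon_k\le\nu_8=46km(\gamma^{1/32}+\zeta^{1/32}+(d/q)^{1/32})$. The identity $\sum_{i=0}^{k}\binom{k}{i}G^{i}(I-G)^{k-i}=I$ keeps the target clean, and each of the $k$ peeling rounds contributes only $O(\sqrt\zeta)$ from \Cref{prop:switch-sandwich}; the delicate part is the de-pinching, which must be organized so that the cumulative commutation error across all $k$ rounds stays within budget — one cannot afford to push a freshly-accumulated product of several operators through the remaining sandwich, so the auxiliary operator is carried entirely on side~$\mathrm B$ and only the $k$-independent commutation relations for $\widehat G$ from \Cref{cor:G-hat-facts} (carrying the $\zeta^{1/16}$, $\gamma^{1/16}$, $(d/q)^{1/16}$ exponents) are invoked, with a single final Cauchy–Schwarz passing back to the numerical statement and producing the extra square root responsible for the $1/32$ exponents in $\nu_8$.
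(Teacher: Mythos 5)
Your high-level plan is the same as the paper's --- peel off the outer $\widehat G^{\bx_\ell}$ layer, average over $\bx_\ell$ to convert it into a factor of $G$ or $I-G$ on the opposite register, and telescope $k$ times --- but the specific way you execute the peeling introduces a cost that the paper is at pains to avoid, and which your error budget does not cover.

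The crux is what happens to the side-B operator $D$. You move both copies of $\widehat G^{\bx_1}_{g_1}$ across the tensor around $D$, producing a pinching $\sum_{g_1}G^{\bx_1}_{g_1}D\,G^{\bx_1}_{g_1}$, and then propose to ``de-pinch'' to $G^{\bx_1}D$ by commuting $D$ past $G^{\bx_1}_{g_1}$. But $D$ is, by the time $\ell$ rounds have passed, a product of $\ell$ factors drawn from $\{G,\,I-G\}$. The commutation guarantees available (\Cref{cor:G-hat-facts}, \Cref{cor:commuting-with-G-complete}, \Cref{cor:commuting-with-G-incomplete}) control $\|[G,G^{\bx_1}_{g_1}]\ket\psi\|^2$ state-dependently, for the actual state $\ket\psi$; they do not control $\|[G,G^{\bx_1}_{g_1}]\,D_{>j}\ket\psi\|^2$ for the perturbed states $D_{>j}\ket\psi$ that arise when you expand the commutator $[D,G^{\bx_1}_{g_1}]$ one factor at a time. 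Either you must first move each suffix $D_{>j}$ back across the tensor (an additional $O(\ell)$ moves), or you pay a state-dependent triangle-inequality factor of $\ell$ at each round. In both cases the de-pinching contributes an error that grows with $k$ beyond the per-round $\sqrt{\nu_4}$ that the paper pays, and your claim of $\nu_8 = 46\,km(\cdots)^{1/32}$ does not follow from the steps you describe. (For what it's worth, the paper's own arithmetic in this lemma silently drops a factor of $k$ when summing the $k$ rounds, and the correct bound from its own proof is $O(k^2\sqrt m)$ rather than $O(km)$; matching the paper's printed constant is not evidence that your accounting is right.)

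The paper's proof of \Cref{lem:from-H-to-G} is organized precisely so that the side-B operator $S_{\tau_{\geq\ell}}$ (its analogue of your $D$) is never pinched. In each round it first commutes the inner copy of $\widehat G^{\bx_\ell}_{g_\ell}$ past the remaining sandwich entirely on side A (that is the role of \Cref{lem:commute-g-half-sandwich}, whose state-dependent error $\nu_4 \sim k^2m$ is then paid once per round via Cauchy--Schwarz); the two copies of $\widehat G^{\bx_\ell}_{g_\ell}$ then sit adjacent on side A and merge by projectivity; the single surviving factor is moved to side B, landing to the right of $S_{\tau_{\geq\ell}}$; and the identity
\[
\sum_{\tau_\ell\in\{0,1\}}\ S_{\tau_{\geq\ell}}\cdot\Big(\E_{\bx_\ell}\sum_{g_\ell\in\sfO_{\tau_\ell}}\widehat G^{\bx_\ell}_{g_\ell}\Big)\ =\ S_{\tau_{>\ell}}
\]
absorbs it with zero approximation. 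The only commutation used anywhere is on side A, and it is exactly the cost tracked by $\nu_4$. This careful choreography --- merge the two $\widehat G$'s on side A \emph{before} moving anything to side B --- is the missing ingredient in your proposal: it is what lets the accumulating side-B operator stay out of the way. If you reorganize your peeling to do the same, the Pascal's-rule endgame you describe goes through as you envision, but the route as written has a genuine gap at the de-pinching step.
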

\begin{proof}
We begin by introducing some notation that we will use throughout the proof.
Let $\tau \in \{0, 1\}^k$ be a type.
Then we define
\begin{equation*}
\tau_{< \ell}  = (\tau_1, \ldots, \tau_{\ell-1}) \in \{0, 1\}^{\ell-1},
\quad
\tau_{> \ell}  = (\tau_{\ell+1}, \ldots, \tau_{k}) \in \{0,1\}^{k-\ell},
\end{equation*}
and we define $\tau_{\leq \ell}$ and $\tau_{\geq \ell}$ similarly.
In addition, given $(g_1, \ldots, g_k) \in \calP^+(m, q, d)^{k}$, we define
\begin{equation*}
g_{< \ell}  = (g_1, \ldots, g_{\ell-1}) \in \calP^+(m, q, d)^{\ell-1},
\quad
g_{> \ell}  = (g_{\ell+1}, \ldots, g_{k}) \in \calP^+(m, q, d)^{k-\ell}
\end{equation*}
and we define $g_{\leq \ell}$ and $g_{\geq \ell}$ similarly.
Using this notation, we can write
\begin{equation*}
 \wH^{x_{\geq \ell}}_{g_{\geq \ell}} = \wH^{x_{\ell}, \ldots, x_k}_{g_{\ell}, \ldots, g_k}.
\end{equation*}
Next, we introduce the notation
\begin{equation*}
\wG^{x_{\geq \ell}}_{g_{\geq \ell}} = \wG^{x_{\ell}}_{g_{\ell}} \cdots \wG^{x_k}_{g_k}.
\end{equation*}
This satisfies the recurrence relation
\begin{equation}\label{eq:G-recurrence}
\wG^{x_{\geq \ell}}_{g_{\geq \ell}} = \wG^{x_{\ell}}_{g_{\ell}} \cdot \wG^{x_{> \ell}}_{g_{> \ell}}.
\end{equation}
Furthermore, we can write
\begin{equation}\label{eq:split-H-into-two-Gs}
 \wH^{x_{\geq \ell}}_{g_{\geq \ell}} = (\wG^{x_{\geq \ell}}_{g_{\geq \ell}}) \cdot (\wG^{x_{\geq \ell}}_{g_{\geq \ell}})^\dagger.
\end{equation}
Finally, we will write $\sfO_{\tau}$ as shorthand for $\outc_{\tau}$.

To prove the lemma, we will show that for each $1 \leq \ell \leq k$,
\begin{align}
&\E_{\bx_{\geq \ell}} \sum_{\tau:|\tau| \geq d+1}  \sum_{g_{\geq \ell} \in \sfO_{\tau_{\geq \ell}}}
	\bra{\psi} \widehat{H}^{\bx_{\geq \ell}}_{g_{\geq \ell}} \ot (G^{|\tau_{<\ell}|}\cdot (I-G)^{(\ell-1)-|\tau_{<\ell}|}) \ket{\psi}\nonumber\\
\approx_{2\sqrt{2\zeta} + 2 \sqrt{\nu_4}}~&\E_{\bx_{>\ell}} \sum_{\tau:|\tau| \geq d+1} \sum_{g_{> \ell} \in \sfO_{\tau_{> \ell}}}
	\bra{\psi} \widehat{H}^{\bx_{> \ell}}_{g_{> \ell}} \ot (G^{|\tau_{\leq \ell}|} \cdot (I-G)^{\ell-|\tau_{\leq\ell}|}) \ket{\psi}.
	\label{eq:i-think-this-is-what-i'm-supposed-to-prove}
\end{align}
If we then repeatedly apply \Cref{eq:i-think-this-is-what-i'm-supposed-to-prove} for $\ell = 1, \ldots, k$,
we derive
\begin{align*}
&\E_{\bx_1, \dots, \bx_k} \sum_{\tau:|\tau| \geq d+1} 
	\sum_{(g_1, \ldots, g_{k}) \in \outc_{\tau}} \bra{\psi} \widehat{H}^{\bx_1, \dots, \bx_k}_{g_1, \dots, g_k} \ot I \ket{\psi}\\
=~&\E_{\bx_{\geq 1}} \sum_{\tau:|\tau| \geq d+1} 
	\sum_{g_{\geq 1} \in \sfO_{\tau}} \bra{\psi} \widehat{H}^{\bx_{\geq 1}}_{g_{\geq 1}} \ot I \ket{\psi}\\
\approx_{2\sqrt{2\zeta} + 2 \sqrt{\nu_4}}~&\E_{\bx_{\geq 2}} \sum_{\tau:|\tau| \geq d+1} 
	\sum_{g_{\geq 2} \in \sfO_{\tau_{\geq 2}}} \bra{\psi} \widehat{H}^{\bx_{\geq 2}}_{g_{\geq 2}} \ot (G^{|\tau_{\leq 1}|} \cdot (I-G)^{1-|\tau_{\leq1}|}) \ket{\psi}\\
\approx_{2\sqrt{2\zeta} + 2 \sqrt{\nu_4}}~&\E_{\bx_{\geq 3}} \sum_{\tau:|\tau| \geq d+1} 
	\sum_{g_{\geq 3} \in \sfO_{\tau_{\geq 3}}} \bra{\psi} \widehat{H}^{\bx_{\geq 3}}_{g_{\geq 3}} \ot (G^{|\tau_{\leq 2}|} \cdot (I-G)^{2-|\tau_{\leq2}|}) \ket{\psi}\\
	\cdots&\\
\approx_{2\sqrt{2\zeta} + 2 \sqrt{\nu_4}}~& \sum_{\tau:|\tau| \geq d+1} 
	 \bra{\psi}I  \ot (G^{|\tau|} \cdot (I-G)^{k-|\tau|}) \ket{\psi}\\
=~&\sum_{i = d+1}^k \binom{k}{i} \bra{\psi} I \otimes ( G^i (I-G)^{k-i} ) \ket{\psi}.
\end{align*}
In total, using $\sqrt{426}\leq 21$, this gives an error of
\begin{align*}
k \cdot(2\sqrt{2\zeta} + 2 \sqrt{\nu_4})
& = k \cdot 2 \sqrt{2 \zeta} + 2 \cdot \sqrt{426 k^2 m \cdot \left( \gamma^{1/16} +\zeta^{1/16} + (d/q)^{1/16}\right)}\\
& \leq 4k \cdot \zeta^{1/32} + 42 k m \cdot \left(\gamma^{1/32} + \zeta^{1/32} + (d/q)^{1/32}\right)\\
& \leq 46 k m \cdot \left(\gamma^{1/32} +\zeta^{1/32} +  (d/q)^{1/32}\right).
\end{align*}
This proves the lemma.

We now prove \Cref{eq:i-think-this-is-what-i'm-supposed-to-prove}.
To begin, for each $1 \leq \ell \leq k+1$ and $\tau_{\geq \ell} \in \{0, 1\}^{k - \ell + 1}$,
we define the matrix
\begin{equation}\label{eq:S-def}
S_{\tau_{\geq \ell}} = \sum_{\tau_{< \ell} : |\tau| \geq d+1} G^{|\tau_{<\ell}|}\cdot (I-G)^{(\ell-1)-|\tau_{<\ell}|}.
\end{equation}
Then the statement in \Cref{eq:i-think-this-is-what-i'm-supposed-to-prove} can be rewritten as
\begin{align}
&\E_{\bx_{\geq \ell}} \sum_{\tau_{\geq \ell}}  \sum_{g_{\geq \ell} \in \sfO_{\tau_{\geq \ell}}}
	\bra{\psi} \widehat{H}^{\bx_{\geq \ell}}_{g_{\geq \ell}} \ot S_{\tau_{\geq \ell}}\ket{\psi}\nonumber\\
\approx_{2\sqrt{2\zeta} + 2 \sqrt{\nu_4}}~&\E_{\bx_{> \ell}} \sum_{\tau_{>\ell}} \sum_{g_{>\ell} \in \sfO_{\tau_{> \ell}}}
	\bra{\psi} \widehat{H}^{\bx_{>\ell}}_{g_{>\ell}} \ot S_{\tau_{>\ell}} \ket{\psi}.
	\label{eq:i-think-this-is-what-i'm-supposed-to-prove-2}
\end{align}
To prove this, we will use several facts about $S_{\tau_{\geq \ell}}$.
First, $S$ is Hermitian and positive semidefinite.
This is because each term in \Cref{eq:S-def}
is a product of~$G$ and $(I-G)$.
These matrices commute with each other,
and both are Hermitian and positive semidefinite.
Next, $S$ is bounded:
\begin{align}
S_{\tau_{\geq \ell}}
&= \sum_{\tau_{< \ell} : |\tau| \geq d+1} G^{|\tau_{<\ell}|}\cdot (I-G)^{(\ell-1)-|\tau_{<\ell}|}\nonumber\\
&\leq \sum_{\tau_{< \ell} } G^{|\tau_{<\ell}|}\cdot (I-G)^{(\ell-1)-|\tau_{<\ell}|}\nonumber\\
& = (G + (I-G))^{\ell-1}\nonumber\\
& = I. \label{eq:S-bound}
\end{align}
In addition, for any $\tau_{\ell} \in \{0, 1\}$,
\begin{align}
\Big(\E_{\bx_{\ell}} \sum_{g_{\ell} \in \sfO_{\tau_{\ell}}} \wG^{\bx_{\ell}}_{g_{\ell}}\Big)
& = \left\{ \begin{array}{cl}
		G & \text{if } \tau_{\ell} = 1,\\
		(I-G) & \text{if } \tau_{\ell} = 0,
	\end{array}\right.\nonumber\\
&= G^{\tau_{\ell}} \cdot (I - G)^{1 - \tau_{\ell}}.\label{eq:explicit-formula-for-G-expectation}
\end{align}
Thus, for any $\tau_{> \ell}$,
\begin{align}
\sum_{\tau_{\ell}} 
	S_{\tau_{\geq \ell}} \cdot \Big(\E_{\bx_{\ell}} \sum_{g_{\ell} \in \sfO_{\tau_{\ell}}} \wG^{\bx_{\ell}}_{g_{\ell}}\Big)
&= \sum_{\tau_{\ell}}  S_{\tau_{\geq \ell}} \cdot(G^{\tau_{\ell}} \cdot (I - G)^{1 - \tau_{\ell}})\nonumber\\
&= \sum_{\tau_{\ell}}  \sum_{\tau_{< \ell} : |\tau| \geq d+1} G^{|\tau_{<\ell}|}\cdot (I-G)^{(\ell-1)-|\tau_{<\ell}|}
	\cdot (G^{\tau_{\ell}} \cdot (I - G)^{1 - \tau_{\ell}})\nonumber\\
&= \sum_{\tau_{\ell}}  \sum_{\tau_{< \ell} : |\tau| \geq d+1} G^{|\tau_{\leq \ell}|}\cdot (I-G)^{\ell-|\tau_{\leq\ell}|}\nonumber\\
&= \sum_{\tau_{\leq \ell} : |\tau| \geq d+1} G^{|\tau_{\leq \ell}|}\cdot (I-G)^{\ell-|\tau_{\leq\ell}|}\nonumber\\
&= S_{\tau_{> \ell}}.\label{eq:S-recurrence}
\end{align}
Finally, for any $\tau_{\geq \ell}$,
\begin{align}
&S_{\tau_{\geq \ell}}
	\cdot \Big(\E_{\bx_{\ell}} \sum_{g_{\ell} \in \sfO_{\tau_{\ell}}} \wG^{\bx_{\ell}}_{g_{\ell}}\Big)
	\cdot S_{\tau_{\geq \ell}}\nonumber\\
 =~& S_{\tau_{\geq \ell}}
	\cdot (G^{\tau_{\ell}} \cdot (I - G)^{1 - \tau_{\ell}})
	\cdot S_{\tau_{\geq \ell}} \tag{by \Cref{eq:explicit-formula-for-G-expectation}}\\
 =~& \sqrt{G^{\tau_{\ell}} \cdot (I - G)^{1 - \tau_{\ell}}} \cdot (S_{\tau_{\geq \ell}})^2
	\cdot \sqrt{G^{\tau_{\ell}} \cdot (I - G)^{1 - \tau_{\ell}}}\tag{because $S_{\tau_{\geq \ell}}$ commutes with $G$ and $(I-G)$}\\
\leq~& \sqrt{G^{\tau_{\ell}} \cdot (I - G)^{1 - \tau_{\ell}}} \cdot I
	\cdot \sqrt{G^{\tau_{\ell}} \cdot (I - G)^{1 - \tau_{\ell}}} \tag{by \Cref{eq:S-bound}}\\
=~& G^{\tau_{\ell}} \cdot (I - G)^{1 - \tau_{\ell}}\nonumber\\
=~& \Big(\E_{\bx_{\ell}} \sum_{g_{\ell} \in \sfO_{\tau_{\ell}}} \wG^{\bx_{\ell}}_{g_{\ell}}\Big), \label{eq:S-sandwich}
\end{align}
where the last step uses \Cref{eq:explicit-formula-for-G-expectation} again.
This concludes the set of facts we will need about $S_{\tau_{\geq \ell}}$.

Now we prove \Cref{eq:i-think-this-is-what-i'm-supposed-to-prove-2}.
To start, we write $\wH$ as a sandwich of $\wG$ operators, and move the
rightmost $\wG^{\bx_{\ell}}_{g_{\ell}}$ to the second tensor factor.
\begin{align}
&\E_{\bx_{\geq \ell}} \sum_{\tau_{\geq \ell}}  \sum_{(g_{\ell}, \ldots,g_k) \in \sfO_{\tau_{\geq \ell}}}
	\bra{\psi} \widehat{H}^{\bx_{\ell}, \dots, \bx_{k}}_{g_{\ell}, \dots, g_{k}} \ot S_{\tau_{\geq \ell}}\ket{\psi}\nonumber\\
=~&\E_{\bx_{\geq \ell}} \sum_{\tau_{\geq \ell}}  \sum_{(g_{\ell}, \ldots,g_k) \in \sfO_{\tau_{\geq \ell}}}
	\bra{\psi} (\wG^{\bx_{\geq \ell}}_{g_{\geq \ell}} \cdot (\wG^{\bx_{\geq \ell}}_{g_{\geq \ell}})^\dagger) \ot S_{\tau_{\geq \ell}}\ket{\psi}\tag{by \Cref{eq:split-H-into-two-Gs}}\\
=~&\E_{\bx_{\geq \ell}} \sum_{\tau_{\geq \ell}}  \sum_{(g_{\ell}, \ldots,g_k) \in \sfO_{\tau_{\geq \ell}}}
	\bra{\psi} (\wG^{\bx_{\geq \ell}}_{g_{\geq \ell}} \cdot (\wG^{\bx_{> \ell}}_{g_{> \ell}})^\dagger \cdot \wG^{\bx_{\ell}}_{g_{\ell}}) \ot S_{\tau_{\geq \ell}}\ket{\psi}\tag{by \Cref{eq:G-recurrence}}\\
\approx_{\sqrt{2\zeta}}~&\E_{\bx_{\geq \ell}} \sum_{\tau_{\geq \ell}}  \sum_{(g_{\ell}, \ldots,g_k) \in \sfO_{\tau_{\geq \ell}}}
	\bra{\psi} (\wG^{\bx_{\geq \ell}}_{g_{\geq \ell}} \cdot (\wG^{\bx_{> \ell}}_{g_{> \ell}})^\dagger) \ot (S_{\tau_{\geq \ell}} \cdot \wG^{\bx_{\ell}}_{g_{\ell}})\ket{\psi}.\label{eq:move-g-over-there}
\end{align}
To justify the approximation, we bound the error.
\begin{multline}\label{eq:call-again-later-part-tres}
\Big|\E_{\bx_{\geq \ell}} \sum_{\tau_{\geq \ell}}  \sum_{(g_{\ell}, g_{>\ell}) \in \mathsf{O}_{\tau_{\geq \ell}}}
	\bra{\psi} (\wG^{\bx_{\geq \ell}}_{g_{\geq \ell}}
		\ot S_{\tau_{\geq \ell}}) \cdot (((\wG^{\bx_{> \ell}}_{g_{> \ell}})^\dagger \ot I) \cdot (\wG^{\bx_{\ell}}_{g_{\ell}} \ot I - I \ot \wG^{\bx_{\ell}}_{g_{\ell}} ))\ket{\psi}\Big|\\
\leq \sqrt{\E_{\bx_{\geq \ell}} \sum_{\tau_{\geq \ell}}  \sum_{(g_{\ell}, g_{>\ell}) \in \mathsf{O}_{\tau_{\geq \ell}}}
	\bra{\psi} (\wG^{\bx_{\geq \ell}}_{g_{\geq \ell}} \cdot (\wG^{\bx_{\geq \ell}}_{g_{\geq \ell}})^\dagger)
		\ot (S_{\tau_{\geq \ell}})^2 \ket{\psi}}\\
\cdot \sqrt{\E_{\bx_{\geq \ell}} \sum_{\tau_{\geq \ell}}  \sum_{
	(g_{\ell}, g_{>\ell}) \in \mathsf{O}_{\tau_{\geq \ell}}}
		\bra{\psi} ((\wG^{\bx_{\ell}}_{g_{\ell}} \ot I - I \ot \wG^{\bx_{\ell}}_{g_{\ell}} )
	\cdot (\wG^{\bx_{> \ell}}_{g_{> \ell}} \cdot(\wG^{\bx_{> \ell}}_{g_{> \ell}})^\dagger\ot I)
	\cdot (\wG^{\bx_{\ell}}_{g_{\ell}} \ot I - I \ot \wG^{\bx_{\ell}}_{g_{\ell}} ))\ket{\psi}}.
\end{multline}
The expression inside the first square root is equal to 
\begin{equation*}
\E_{\bx_{\geq \ell}} \sum_{\tau_{\geq \ell}}  \sum_{(g_{\ell}, g_{>\ell}) \in \mathsf{O}_{\tau_{\geq \ell}}}
	\bra{\psi} \wH^{\bx_{\geq \ell}}_{g_{\geq \ell}}
		\ot (S_{\tau_{\geq \ell}})^2 \ket{\psi},
\end{equation*}
which is at most~$1$ because $S_{\tau_{\geq \ell}} \leq I$ and $\widehat{H}$ is a sub-measurement.
The expression inside the second square root is equal to
\begin{align*}
&\E_{\bx_{\geq \ell}} \sum_{\tau_{\geq \ell}}  \sum_{
	(g_{\ell}, g_{>\ell}) \in \mathsf{O}_{\tau_{\geq \ell}}}
		\bra{\psi} ((\wG^{\bx_{\ell}}_{g_{\ell}} \ot I - I \ot \wG^{\bx_{\ell}}_{g_{\ell}} )
	\cdot (\wH^{\bx_{> \ell}}_{g_{> \ell}} \ot I)
	\cdot (\wG^{\bx_{\ell}}_{g_{\ell}} \ot I - I \ot \wG^{\bx_{\ell}}_{g_{\ell}} ))\ket{\psi}\\
\leq~&\E_{\bx_{\geq \ell}} \sum_{g_{\ell}, \ldots, g_{k}}
		\bra{\psi} ((\wG^{\bx_{\ell}}_{g_{\ell}} \ot I - I \ot \wG^{\bx_{\ell}}_{g_{\ell}} )
	\cdot (\wH^{\bx_{> \ell}}_{g_{> \ell}} \ot I)
	\cdot (\wG^{\bx_{\ell}}_{g_{\ell}} \ot I - I \ot \wG^{\bx_{\ell}}_{g_{\ell}} ))\ket{\psi}\\
\leq~& \E_{\bx_{\geq \ell}} \sum_{g_{\ell}}
		\bra{\psi} (\wG^{\bx_{\ell}}_{g_{\ell}} \ot I - I \ot \wG^{\bx_{\ell}}_{g_{\ell}} )^2\ket{\psi}
			\tag{because $\widehat{H}$ is a sub-measurement}\\
\leq~& 2\zeta. \tag{by \Cref{cor:G-hat-facts}}
\end{align*}
We will now commute the leftmost $\wG^{x_\ell}_{g_\ell}$ in
\Cref{eq:move-g-over-there} to the right in two stages. In the first stage, we have:
\begin{align}
\eqref{eq:move-g-over-there}
&= \E_{\bx_{\geq \ell}} \sum_{\tau_{\geq \ell}}  \sum_{(g_{\ell}, g_{>\ell}) \in \sfO_{\tau_{\geq \ell}}}
	\bra{\psi} (\wG^{\bx_{\ell}}_{g_{\ell}} \cdot \wG^{\bx_{> \ell}}_{g_{> \ell}} \cdot (\wG^{\bx_{> \ell}}_{g_{> \ell}})^\dagger) \ot (S_{\tau_{\geq \ell}} \cdot \wG^{\bx_{\ell}}_{g_{\ell}})\ket{\psi}\nonumber\\
& \approx_{\sqrt{\nu_4}} \E_{\bx_{\geq \ell}} \sum_{\tau_{\geq \ell}}  \sum_{(g_{\ell}, g_{>\ell}) \in \sfO_{\tau_{\geq \ell}}}
	\bra{\psi} (\wG^{\bx_{> \ell}}_{g_{> \ell}} \cdot \wG^{\bx_{\ell}}_{g_{\ell}} \cdot (\wG^{\bx_{> \ell}}_{g_{> \ell}})^\dagger) \ot (S_{\tau_{\geq \ell}} \cdot \wG^{\bx_{\ell}}_{g_{\ell}})\ket{\psi}.
\label{eq:commute-g-part-one}
\end{align}
To justify the approximation, we bound the magnitude of the difference.
\begin{multline}\label{eq:call-this-later}
\Big|  \E_{\bx_{\geq \ell}} \sum_{\tau_{\geq \ell}}
  \sum_{(g_{\ell}, g_{>\ell}) \in \sfO_{\tau_{\geq \ell}}}
  \bra{\psi}     ([ \wG^{\bx_{>\ell}}_{g_{>\ell}},
    \wG^{\bx_\ell}_{g_\ell} ] \otimes I) \cdot     ((\wG^{\bx_{>\ell}}_{g_{>\ell}})^\dagger \ot
    (S_{\tau_{\geq \ell}} \cdot
  \wG^{\bx_\ell}_{g_\ell})) \ket{\psi} \Big|  \\
  \leq \sqrt{ \E_{\bx_{\geq \ell}} \sum_{\tau_{\geq \ell}}
  \sum_{(g_{\ell}, g_{>\ell}) \in \sfO_{\tau_{\geq \ell}}} \bra{\psi} ([ \wG^{\bx_{>\ell}}_{g_{>\ell}},
    \wG^{\bx_\ell}_{g_\ell} ]) \cdot ([ \wG^{\bx_{>\ell}}_{g_{>\ell}},
    \wG^{\bx_\ell}_{g_\ell} ])^\dagger \ot I \ket{\psi}}  \\
  \quad \cdot \sqrt{\E_{\bx_{\geq \ell}} \sum_{\tau_{\geq \ell}}
   \sum_{(g_{\ell}, g_{>\ell}) \in \sfO_{\tau_{\geq \ell}}} \bra{\psi}
    (\wG^{\bx_{>\ell}}_{g_{>\ell}}\cdot (
    \wG^{\bx_{>\ell}}_{g_{>\ell}})^\dagger) \ot
    (\wG^{\bx_{\ell}}_{g_\ell}  \cdot (S_{\tau_{\geq \ell}})^2
    \cdot \wG^{\bx_{\ell}}_{g_\ell}) \ket{\psi}}.
\end{multline}
The quantity inside the first square root is at most
\begin{equation*}
\E_{\bx_{\geq \ell}} \sum_{g_{\ell}, \ldots, g_k} \bra{\psi} ([ \wG^{\bx_{>\ell}}_{g_{>\ell}},
    \wG^{\bx_\ell}_{g_\ell} ]) \cdot ([ \wG^{\bx_{>\ell}}_{g_{>\ell}},
    \wG^{\bx_\ell}_{g_\ell} ])^\dagger \ot I \ket{\psi},
\end{equation*}
which is at most~$\nu_4$ by~\Cref{lem:commute-g-half-sandwich}.
The quantity inside the second square root is equal to
\begin{align*}
&\E_{\bx_{\geq \ell}} \sum_{\tau_{\geq \ell}}
   \sum_{(g_{\ell}, g_{>\ell}) \in \sfO_{\tau_{\geq \ell}}} \bra{\psi}
    \wH^{\bx_{>\ell}}_{g_{>\ell}} \ot
    (\wG^{\bx_{\ell}}_{g_\ell}  \cdot (S_{\tau_{\geq \ell}})^2
    \cdot \wG^{\bx_{\ell}}_{g_\ell}) \ket{\psi}\\
\leq~
&\E_{\bx_{\geq \ell}} \sum_{\tau_{\geq \ell}}
   \sum_{(g_{\ell}, g_{>\ell}) \in \sfO_{\tau_{\geq \ell}}} \bra{\psi}
    \wH^{\bx_{>\ell}}_{g_{>\ell}} \ot
    (\wG^{\bx_{\ell}}_{g_\ell}  \cdot I
    \cdot \wG^{\bx_{\ell}}_{g_\ell}) \ket{\psi} \tag{because $S_{\tau_{\geq \ell}} \leq I$}\\
=~
&\E_{\bx_{\geq \ell}} \sum_{\tau_{\geq \ell}}
   \sum_{(g_{\ell}, g_{>\ell}) \in \sfO_{\tau_{\geq \ell}}} \bra{\psi}
    \wH^{\bx_{>\ell}}_{g_{>\ell}} \ot
    \wG^{\bx_{\ell}}_{g_\ell} \ket{\psi} \\
\leq~&  1. \tag{because $\wG$ and $\wH$ are sub-measurements}
\end{align*}
We continue commuting the leftmost $\wG^{x_{\ell}}_{g_\ell}$ to the
right. 
\begin{align}
  \eqref{eq:commute-g-part-one} =&
  \E_{\bx_{\geq \ell}} \sum_{\tau_{\geq \ell}}  \sum_{(g_{\ell}, g_{>\ell}) \in \sfO_{\tau_{\geq \ell}}}
	\bra{\psi} (\wG^{\bx_{> \ell}}_{g_{> \ell}} \cdot \wG^{\bx_{\ell}}_{g_{\ell}} \cdot (\wG^{\bx_{> \ell}}_{g_{> \ell}})^\dagger) \ot (S_{\tau_{\geq \ell}} \cdot \wG^{\bx_{\ell}}_{g_{\ell}})\ket{\psi} \nonumber\\
  \approx_{\sqrt{\nu_4}}~& 
   \E_{\bx_{\geq \ell}} \sum_{\tau_{\geq \ell}}  \sum_{(g_{\ell}, g_{>\ell}) \in \sfO_{\tau_{\geq \ell}}}
	\bra{\psi} (\wG^{\bx_{> \ell}}_{g_{> \ell}} \cdot (\wG^{\bx_{> \ell}}_{g_{> \ell}})^\dagger \cdot \wG^{\bx_{\ell}}_{g_{\ell}}) \ot (S_{\tau_{\geq \ell}} \cdot \wG^{\bx_{\ell}}_{g_{\ell}})\ket{\psi}. \label{eq:commute-g-part-two}
\end{align}
To justify the approximation, we will need to be slightly more
clever this time. First, as always, we bound the magnitude of the
difference:
\begin{multline}\label{eq:call-again-later-part-dos}
  \Big|  \E_{\bx_{\geq \ell}} \sum_{\tau_{\geq \ell}}  \sum_{(g_{\ell}, g_{>\ell}) \in \sfO_{\tau_{\geq \ell}}}
  \bra{\psi} (\wG^{\bx_{>\ell}}_{g_{>\ell}}         \ot
    (S_{\tau_{\geq \ell}} \cdot
  \wG^{\bx_\ell}_{g_\ell})) \cdot ([ (\wG^{\bx_{>\ell}}_{g_{>\ell}})^\dagger,
    \wG^{\bx_\ell}_{g_\ell} ] \ot I) \ket{\psi} \Big|\\
  \leq   \sqrt{\E_{\bx_{\geq \ell}} \sum_{\tau_{\geq \ell}}  \sum_{(g_{\ell}, g_{>\ell}) \in \sfO_{\tau_{\geq \ell}}}
    \bra{\psi}
    (\wG^{\bx_{>\ell}}_{g_{>\ell}})\cdot (
    \wG^{\bx_{>\ell}}_{g_{>\ell}})^\dagger \ot (S_{\tau_{\geq \ell}} \cdot
    \wG^{\bx_{\ell}}_{g_\ell}  \cdot S_{\tau_{\geq \ell}})
    \ket{\psi}} \\
  \cdot \sqrt{ \E_{\bx_{\geq \ell}}
  \sum_{\tau_{\geq \ell}}  \sum_{(g_{\ell}, g_{>\ell}) \in \sfO_{\tau_{\geq \ell}}}
   \bra{\psi} ([ (\wG^{\bx_{>\ell}}_{g_{>\ell}})^\dagger,
    \wG^{\bx_\ell}_{g_\ell} ])^\dagger \cdot ([ (\wG^{\bx_{>\ell}}_{g_{>\ell}})^\dagger,
    \wG^{\bx_\ell}_{g_\ell} ]) \ot I \ket{\psi}}.
\end{multline}
The term inside the first square root is equal to
\begin{align*}
&\E_{\bx_{\geq \ell}} \sum_{\tau_{\geq \ell}}  \sum_{(g_{\ell}, g_{>\ell}) \in \sfO_{\tau_{\geq \ell}}}
    \bra{\psi}
    \wH^{\bx_{>\ell}}_{g_{>\ell}}
     \ot (S_{\tau_{\geq \ell}} \cdot
    \wG^{\bx_{\ell}}_{g_\ell}  \cdot S_{\tau_{\geq \ell}})
    \ket{\psi}\\
=~&\E_{\bx_{> \ell}} \sum_{\tau_{\geq \ell}}  \sum_{g_{>\ell} \in \sfO_{\tau_{> \ell}}}
    \bra{\psi}
    \wH^{\bx_{>\ell}}_{g_{>\ell}}
     \ot (S_{\tau_{\geq \ell}} \cdot
    \Big(\E_{\bx_{\ell}} \sum_{g_{\ell} \in \sfO_{\tau_{\ell}}}\wG^{\bx_{\ell}}_{g_\ell}  \Big)\cdot S_{\tau_{\geq \ell}})
    \ket{\psi}\\
\leq~&\E_{\bx_{> \ell}} \sum_{\tau_{\geq \ell}}  \sum_{g_{>\ell} \in \sfO_{\tau_{> \ell}}}
    \bra{\psi}
    \wH^{\bx_{>\ell}}_{g_{>\ell}}
     \ot 
    \Big(\E_{\bx_{\ell}} \sum_{g_{\ell} \in \sfO_{\tau_{\ell}}}\wG^{\bx_{\ell}}_{g_\ell}  \Big)
    \ket{\psi} \tag{by~\Cref{eq:S-sandwich}}\\
=~&\E_{\bx_{\geq \ell}} \sum_{\tau_{\geq \ell}}  \sum_{(g_{\ell}, g_{>\ell}) \in \sfO_{\tau_{\geq \ell}}}
    \bra{\psi}
    \wH^{\bx_{>\ell}}_{g_{>\ell}}
     \ot
    \wG^{\bx_{\ell}}_{g_\ell}  
    \ket{\psi}\\
\leq~& 1. \tag{because $\wG$ and $\wH$ are sub-measurements}
\end{align*}
The term inside the second square root is equal to the term inside the first square root of~\Cref{eq:call-this-later}, which we bounded by~$\nu_4$.
We are now ready for the final step, which is to bring the leftmost
$\wG^{x_\ell}_{g_\ell}$ over to the second tensor factor.
\begin{align}
  \eqref{eq:commute-g-part-two} &= 
  \E_{\bx_{\geq \ell}} \sum_{\tau_{\geq \ell}}  \sum_{(g_{\ell}, g_{>\ell}) \in \sfO_{\tau_{\geq \ell}}}
	\bra{\psi} (\wG^{\bx_{> \ell}}_{g_{> \ell}} \cdot (\wG^{\bx_{> \ell}}_{g_{> \ell}})^\dagger \cdot \wG^{\bx_{\ell}}_{g_{\ell}}) \ot (S_{\tau_{\geq \ell}} \cdot \wG^{\bx_{\ell}}_{g_{\ell}})\ket{\psi}\nonumber\\
  &\approx_{\sqrt{2\zeta}} \E_{\bx_{\geq \ell}} \sum_{\tau_{\geq \ell}}  \sum_{(g_{\ell}, g_{>\ell}) \in \sfO_{\tau_{\geq \ell}}}
	\bra{\psi} (\wG^{\bx_{> \ell}}_{g_{> \ell}} \cdot (\wG^{\bx_{> \ell}}_{g_{> \ell}})^\dagger) \ot (S_{\tau_{\geq \ell}} \cdot \wG^{\bx_{\ell}}_{g_{\ell}}\cdot \wG^{\bx_{\ell}}_{g_{\ell}})\ket{\psi}. \label{eq:h-ot-mgg}
\end{align}
To justify the approximation, we bound the magnitude of the
difference.
\begin{multline*}
  \Big|\E_{\bx_{\geq \ell}} \sum_{\tau_{\geq \ell}}
   \sum_{(g_{\ell}, g_{>\ell}) \in \sfO_{\tau_{\geq \ell}}}
   	\bra{\psi} (\wG^{\bx_{> \ell}}_{g_{> \ell}}
		\ot (S_{\tau_{\geq \ell}}
		\cdot \wG^{\bx_{\ell}}_{g_{\ell}}))
		\cdot
		(((\wG^{\bx_{> \ell}}_{g_{> \ell}})^\dagger \ot I)
		\cdot (\wG^{\bx_{\ell}}_{g_{\ell}} \ot I - I \ot \wG^{\bx_{\ell}}_{g_{\ell}}))\ket{\psi}
  \Big|  \\
  \leq \sqrt{\E_{\bx_{\geq \ell}}\sum_{\tau_{\geq \ell}}
   \sum_{(g_{\ell}, g_{>\ell}) \in \sfO_{\tau_{\geq \ell}}}
   	\bra{\psi} (\wG^{\bx_{> \ell}}_{g_{> \ell}} \cdot (\wG^{\bx_{> \ell}}_{g_{> \ell}})^\dagger)
		\ot (S_{\tau_{\geq \ell}}
		\cdot \wG^{\bx_{\ell}}_{g_{\ell}} \cdot S_{\tau_{\geq \ell}}) \ket{\psi}} \nonumber \\
  \cdot \sqrt{\E_{\bx_{\geq \ell}} \sum_{\tau_{\geq \ell}}
   \sum_{(g_{\ell}, g_{>\ell}) \in \sfO_{\tau_{\geq \ell}}} \bra{\psi}
   ( (\wG^{\bx_\ell}_{g_\ell} \ot I - I \ot \wG^{\bx_\ell}_{g_\ell})\cdot
   (\wG^{\bx_{> \ell}}_{g_{> \ell}} \cdot (\wG^{\bx_{> \ell}}_{g_{> \ell}})^\dagger \ot I)
    \cdot (\wG^{\bx_\ell}_{g_\ell} \ot I - I \ot \wG^{\bx_\ell}_{g_\ell})).
    \ket{\psi}}
\end{multline*}
The expression inside the first is equal to the expression inside the first square root in~\Cref{eq:call-again-later-part-dos},
which we bounded by~$1$.
The expression inside the second square root is equal to the expression inside the second square root in~\Cref{eq:call-again-later-part-tres},
which we bounded by~$2\zeta$.
We end by noting that
\begin{align*}
\eqref{eq:h-ot-mgg}
& =
\E_{\bx_{\geq \ell}} \sum_{\tau_{\geq \ell}}  \sum_{(g_{\ell}, g_{>\ell}) \in \sfO_{\tau_{\geq \ell}}}
	\bra{\psi} (\wG^{\bx_{> \ell}}_{g_{> \ell}} \cdot (\wG^{\bx_{> \ell}}_{g_{> \ell}})^\dagger) \ot (S_{\tau_{\geq \ell}} \cdot \wG^{\bx_{\ell}}_{g_{\ell}})\ket{\psi} \tag{because $\wG$ is projective}\\
& =
\E_{\bx_{\geq \ell}} \sum_{\tau_{\geq \ell}}  \sum_{(g_{\ell}, g_{>\ell}) \in \sfO_{\tau_{\geq \ell}}}
	\bra{\psi} \wH^{\bx_{> \ell}}_{g_{> \ell}} \ot (S_{\tau_{\geq \ell}} \cdot \wG^{\bx_{\ell}}_{g_{\ell}})\ket{\psi}\\
& =\E_{\bx_{> \ell}} \sum_{\tau_{> \ell}}  \sum_{g_{>\ell} \in \sfO_{\tau_{> \ell}}}
	\bra{\psi} \wH^{\bx_{> \ell}}_{g_{> \ell}} \ot \Big(\sum_{\tau_{\ell}}S_{\tau_{\geq \ell}} \cdot
		\Big(\E_{\bx_{\ell}} \sum_{g_\ell \in \sfO_{\tau_{\ell}}}\wG^{\bx_{\ell}}_{g_{\ell}}\Big)\Big)\ket{\psi}\\
& =\E_{\bx_{> \ell}} \sum_{\tau_{> \ell}}  \sum_{g_{>\ell} \in \sfO_{\tau_{> \ell}}}
	\bra{\psi} \wH^{\bx_{> \ell}}_{g_{> \ell}} \ot S_{\tau_{>\ell}}\ket{\psi}. \tag{by \Cref{eq:S-recurrence}}
\end{align*}
This concludes the proof of
\Cref{eq:i-think-this-is-what-i'm-supposed-to-prove-2}
and therefore proves the lemma.
\end{proof}

\ignore{

\begin{lemma}
  \label{lem:halve-a-sandwich}
Let $\bx_1, \ldots, \bx_k \sim \F_q$ be sampled uniformly at random.
Let $S \subseteq \calP^+(m,q,d)^k$ be the set of tuples of outcomes
with at most $d$ instances of the $\bot$ symbol:
\[ S = \bigcup_{\tau: |\tau| \geq d+1} \outc_{\tau}. \]
Then
\[
    \E_{\bx_1, \dots, \bx_k} \sum_{(g_1, \dots, g_k) \in S} \bra{\psi} \widehat{H}^{\bx_1, \dots, \bx_k}_{g_1, \dots, g_k} \ot I \ket{\psi}
        \approx_{XXX} \E_{\bx_1, \dots, \bx_k} \sum_{(g_1, \dots, g_k) \in S}
    \bra{\psi} I \ot (\widehat{G}^{\bx_1}_{g_1} \dots \widehat{G}^{\bx_k}_{g_k}) 
      \ket{\psi} .
\]
\end{lemma}
\begin{proof}
To prove the lemma, we will show that for each $0 \leq \ell \leq k$,
\begin{align}
&\E_{\bx_1, \dots, \bx_k} \sum_{(g_1, \dots, g_k) \in S}
	\bra{\psi} (
		\widehat{H}^{\bx_{\ell}, \dots, \bx_{k}}_{g_{\ell}, \dots, g_{k}}) \ot \widehat{G}^{\bx_{1}}_{g_1} \cdots \widehat{G}^{\bx_{\ell-1}}_{g_{\ell-1}} \ket{\psi}\nonumber\\
\approx_{XXX}~& \E_{\bx_1, \dots, \bx_k} \sum_{(g_1, \dots, g_k) \in S}
	\bra{\psi} (
		\widehat{H}^{\bx_{\ell+1}, \dots, \bx_{k}}_{g_{\ell+1}, \dots, g_{k}}) \ot (\widehat{G}^{\bx_{1}}_{g_1} \cdots \widehat{G}^{\bx_{\ell-1}}_{g_{\ell-1}}
		\cdot  \widehat{G}^{\bx_{\ell}}_{g_{\ell}} ) \ket{\psi}.
			\label{eq:i-think-this-is-what-i'm-supposed-to-prove}
\end{align}
If we then repeatedly apply \Cref{eq:i-think-this-is-what-i'm-supposed-to-prove} for $\ell = 0, 1, \ldots, k$,
we derive
\begin{align*}
&\E_{\bx_1, \dots, \bx_k} \sum_{(g_1, \dots, g_k) \in S}
	\bra{\psi} \widehat{H}^{\bx_{1}, \dots, \bx_{k}}_{g_{1}, \dots, g_{k}} \ot I \ket{\psi}\\
\approx_{XXX}~& \E_{\bx_1, \dots, \bx_k} \sum_{(g_1, \dots, g_k) \in S}
	\bra{\psi} (
                \widehat{H}^{\bx_{2}, \dots, \bx_{k}}_{g_{2}, \dots, g_{k}}) \ot \widehat{G}^{\bx_{1}}_{g_1} \ket{\psi}\\
\approx_{XXX}~& \E_{\bx_1, \dots, \bx_k} \sum_{(g_1, \dots, g_k) \in S}
	\bra{\psi} (
		 \widehat{H}^{\bx_{3}, \dots, \bx_{k}}_{g_{3}, \dots, g_{k}}) \ot (\widehat{G}^{\bx_{1}}_{g_1} \cdot \widehat{G}^{\bx_{2}}_{g_2}) \ket{\psi}\\
\cdots~& \\
\approx_{XXX}~& \E_{\bx_1, \dots, \bx_k} \sum_{(g_1, \dots, g_k) \in S}
    \bra{\psi} I \ot (\widehat{G}^{\bx_1}_{g_1} \dots \widehat{G}^{\bx_k}_{g_k}) 
      \ket{\psi}.
\end{align*}
The total error is~XXX, proving the lemma.

We now prove \Cref{eq:i-think-this-is-what-i'm-supposed-to-prove}.
To begin, write $G = \sum_{g} \E_{\bx} G^{\bx}_g$, and write
\begin{equation*}
  M_{r, \ell} = G^{r} \cdot (I - G)^{\ell  - r}
\end{equation*}
Then the statement in \Cref{eq:i-think-this-is-what-i'm-supposed-to-prove} can be rewritten as
\begin{align}
&\E_{\bx_{\ell}, \dots, \bx_k} \sum_{\tau: |\tau| \geq d+1}
                \sum_{g_{\ell}, \dots, g_k \in \outc_{\tau_{\geq \ell}}}
	\bra{\psi} (\widehat{H}^{\bx_{\ell}, \dots,
                \bx_{k}}_{g_{\ell}, \dots, g_{k}}) \ot M_{|\tau_{<
                \ell}|, \ell-1} \ket{\psi}\nonumber\\
\approx_{XXX}~& \E_{\bx_{\ell}, \dots, \bx_k} \sum_{\tau: |\tau| \geq
                d+1} \sum_{g_{\ell}, \dots, g_k \in \outc_{\tau_{\geq \ell}}}
	\bra{\psi} (
		 \widehat{H}^{\bx_{\ell+1}, \dots, \bx_{k}}_{g_{\ell+1}, \dots, g_{k}}) \ot  (M_{|\tau_{<
                \ell}|, \ell-1} \cdot
                \widehat{G}^{\bx_{\ell}}_{g_{\ell}})\ket{\psi}
                \label{eq:i-think-this-is-what-i'm-supposed-to-prove-2} \\
  =~&\E_{\bx_{\ell+1}, \dots, \bx_k} \sum_{\tau: |\tau| \geq d+1}
      \sum_{g_{\ell+1}, \dots, g_k \in \outc_{\tau_{\geq \ell+1}}}
      \bra{\psi} (\widehat{H}^{\bx_{\ell+1}, \dots,
      \bx_{k}}_{g_{\ell+1}, \dots, g_{k}}) \ot M_{|\tau_{< \ell+1}|,
      \ell} \ket{\psi}.
\end{align}

We will use two properties of $M_{r,\ell}$: for all $r, \ell$, $M_{r,
  \ell}^\dagger M_{r,\ell} \leq I$ and $M_{r, \ell} \cdot G = G \cdot M_{r,
  \ell}$. To start, we write $\wH$ as a sandwich of $\wG$ operators, and move the
rightmost $\wG^{\bx_{\ell}}_{g_{\ell}}$ to the second tensor factor.
\begin{align}
  &\E_{\bx_{\ell}, \dots, \bx_{k}} \sum_{\tau: |\tau| \geq d+1} \sum_{g_\ell, \dots,
  g_k \in \outc_{\tau_{\geq\ell}}}  \bra{\psi} \wH^{\bx_{\ell}, \dots, \bx_k}_{g_{\ell}, \dots,
    g_k} \ot M_{|\tau_{< \ell}|, \ell-1} \ket{\psi} \nonumber \\
    &= \E_{\bx_{\ell}, \dots, \bx_{k}} \sum_{\tau: |\tau| \geq d+1} \sum_{g_\ell,
      \dots, g_k \in \outc_{\tau_{\geq \ell}}} \bra{\psi} \wG^{\bx_\ell}_{g_{\ell}} \cdot \wH^{\bx_{\ell+1}, \dots, \bx_k}_{g_{\ell+1},
      \dots, g_{k}} \cdot \wG^{\bx_\ell}_{g_{\ell}} \ot M_{|\tau_{< \ell}|, \ell-1} \ket{\psi}
  \\
  &\approx \E_{\bx_{\ell}, \dots, \bx_{k}} \sum_{\tau: |\tau| \geq d+1} \sum_{g_{\ell},
    \dots, g_k \in \outc_{\tau_{\geq \ell}}} \bra{\psi} \wG^{\bx_\ell}_{g_\ell} \cdot
    \wH^{\bx_{\ell+1}, \dots, \bx_k}_{g_{\ell+1}, \dots, g_{k}} \ot M_{|\tau_{< \ell}|, \ell-1}
    \cdot \wG^{\bx_{\ell}}_{g_{\ell}} \ket{\psi} \label{eq:gh-ot-mg}
\end{align}
To justify the approximation, we bound the error:
\begin{align}
&\left| \E_{\bx_{\ell}, \dots, \bx_{k}} \sum_{\tau: |\tau| \geq d+1}
  \sum_{g_{\ell}, \dots, g_k \in \outc_{\tau_{\geq \ell}}} \bra{\psi}
  (\wG^{\bx_\ell}_{g_\ell} \cdot \wH^{\bx_{\ell+1}, \dots,
  \bx_k}_{g_{\ell+1}, \dots, g_{k}} \ot M_{|\tau_{< \ell}|, \ell-1}) \cdot
  (\wG^{\bx_\ell}_{g_\ell} \ot I - I \ot \wG^{\bx_\ell}_{g_\ell})
  \ket{\psi}  \right|\nonumber \\
&\leq \sqrt{\E_{\bx_{\ell}, \dots, \bx_{k}} \sum_{\tau: |\tau| \geq d+1}
  \sum_{g_{\ell}, \dots, g_k \in \outc_{\tau_{\geq \ell}}} \bra{\psi}
  \wG^{\bx_\ell}_{g_\ell} \cdot (\wH^{\bx_{\ell+1}, \dots,
  \bx_k}_{g_{\ell+1}, \dots, g_{k}})^2 \cdot \wG^{\bx_\ell}_{g_\ell} \ot
                                    (M_{|\tau_{< \ell}|,
                                    \ell-1})^2 \ket{\psi} } \nonumber \\
    &\quad                            \cdot  \sqrt{\E_{\bx_{\ell}, \dots, \bx_{k}} \sum_{\tau: |\tau| \geq d+1}
  \sum_{g_{\ell}, \dots, g_k \in \outc_{\tau_{\geq \ell}}} \bra{\psi}
  (\wG^{\bx_\ell}_{g_\ell} \ot I - I \ot \wG^{\bx_\ell}_{g_\ell})^2
  \ket{\psi}} \\
&\leq \sqrt{\E_{\bx_{\ell}, \dots, \bx_{k}} \sum_{\tau: |\tau| \geq d+1}
  \sum_{g_{\ell}, \dots, g_k \in \outc_{\tau_{\geq \ell}}} \bra{\psi}
  \wG^{\bx_\ell}_{g_\ell} \cdot \wH^{\bx_{\ell+1}, \dots,
  \bx_k}_{g_{\ell+1}, \dots, g_{k}} \cdot \wG^{x_\ell}_{g_\ell} \ot I
\ket{\psi}} \nonumber \\
&\quad \cdot \sqrt{\E_{\bx_{\ell}, \dots, \bx_{k}} \sum_{\tau: |\tau| \geq d+1}
  \sum_{g_{\ell}, \dots, g_k \in \outc_{\tau_{\geq \ell}}} \bra{\psi}
  (\wG^{\bx_\ell}_{g_\ell} \ot I - I \ot \wG^{\bx_\ell}_{g_\ell})^2
                 \ket{\psi}} \\
  &\leq 1 \cdot (XXX)^{1/2}
\end{align}

To proceed, we introduce the shorthand

\[
\wG^{x_{>\ell}}_{g_{>\ell}} = \wG^{x_{\ell+1}}_{g_{\ell+1}} \cdots \wG^{x_{k}}_{g_{k}}.
\]
Recall that by \Cref{lem:commute-g-half-sandwich}, it holds that
\[ \wG^{x_\ell}_{g_\ell} \cdot (\wG^{x_{> \ell}}_{g_{>\ell}} )^\dagger
  \approx_{XXX} (\wG^{x_{>\ell}}_{g_{>\ell}}) \cdot \wG^{x_\ell}_{g_\ell}. \]
Using this shorthand, we can write $\wH$ as a product of two $\wG$
terms:
\[ \wH^{x_{\ell+1}, \dots,
  x_k}_{g_{\ell+1}, \dots, g_k} = (\wG^{x_{>\ell}}_{g_{>\ell}}) \cdot
(\wG^{x_{>\ell}}_{g_{>\ell}})^\dagger. 
\]
We will now commute the leftmost $\wG^{x_\ell}_{g_\ell}$ in
\Cref{eq:gh-ot-mg} to the right in two stages. In the first stage, we have:
\begin{align}
  \eqref{eq:gh-ot-mg} &= \E_{\bx_{\ell}, \dots, \bx_{k}} \sum_{\tau: |\tau| \geq d+1}
  \sum_{g_{\ell}, \dots, g_k \in \outc_{\tau_{\geq \ell}}}  \bra{\psi}
                        \wG^{\bx_\ell}_{g_\ell} \cdot
                        (\wG^{\bx_{>\ell}}_{g_{>\ell}}) \cdot
                        (\wG^{\bx_{>\ell}}_{g_{>\ell}})^\dagger \ot
                        M_{|\tau_{<\ell}|, \ell-1} \cdot
                        \wG^{\bx_\ell}_{g_\ell} \ket{\psi} \\
  &\approx \E_{\bx_{\ell}, \dots, \bx_{k}} \sum_{\tau: |\tau| \geq d+1}
  \sum_{g_{\ell}, \dots, g_k \in \outc_{\tau_{\geq \ell}}}  \bra{\psi}
    (\wG^{\bx_{>\ell}}_{g_{>\ell}}) \cdot
    \wG^{\bx_\ell}_{g_\ell}  \cdot
    (\wG^{\bx_{>\ell}}_{g_{>\ell}})^\dagger \ot
    M_{|\tau_{<\ell}|, \ell-1} \cdot
    \wG^{\bx_\ell}_{g_\ell} \ket{\psi}. \label{eq:ggg-ot-mg}
\end{align}
To justify the approximation, we bound the magnitude of the difference.
\begin{align}
&\left|  \E_{\bx_{\ell}, \dots, \bx_{k}} \sum_{\tau: |\tau| \geq d+1}
  \sum_{g_{\ell}, \dots, g_k \in \outc_{\tau_{\geq \ell}}}
  \bra{\psi}     ([ \wG^{\bx_{>\ell}}_{g_{>\ell}},
    \wG^{\bx_\ell}_{g_\ell} ] \cdot     (\wG^{\bx_{>\ell}}_{g_{>\ell}})^\dagger \ot
    M_{|\tau_{<\ell}|, \ell-1} \cdot
  \wG^{\bx_\ell}_{g_\ell} \ket{\psi} \right| \nonumber \\
  &\leq \sqrt{ \E_{\bx_{\ell}, \dots, \bx_{k}} \sum_{\tau: |\tau| \geq d+1}
  \sum_{g_{\ell}, \dots, g_k \in \outc_{\tau_{\geq \ell}}} \bra{\psi} ([ \wG^{\bx_{>\ell}}_{g_{>\ell}},
    \wG^{\bx_\ell}_{g_\ell} ]) \cdot ([ \wG^{\bx_{>\ell}}_{g_{>\ell}},
    \wG^{\bx_\ell}_{g_\ell} ])^\dagger \ot I \ket{\psi}} \nonumber \\
  &\quad \cdot \sqrt{\E_{\bx_{\ell}, \dots, \bx_{k}} \sum_{\tau: |\tau| \geq d+1}
  \sum_{g_{\ell}, \dots, g_k \in \outc_{\tau_{\geq \ell}}}  \bra{\psi}
    (\wG^{\bx_{>\ell}}_{g_{>\ell}})\cdot (
    \wG^{\bx_{>\ell}}_{g_{>\ell}})^\dagger \ot
    \wG^{\bx_{\ell}}_{g_\ell}  \cdot (M_{|\tau_{<\ell}|, \ell-1})^2
    \cdot \wG^{\bx_{\ell}}_{g_\ell} \ket{\psi}} \\
  &\leq (XXX)^{1/2} \cdot \sqrt{\E_{\bx_{\ell}, \dots, \bx_{k}} \sum_{\tau: |\tau| \geq d+1}
  \sum_{g_{\ell}, \dots, g_k \in \outc_{\tau_{\geq \ell}}}  \bra{\psi}
    (\wG^{\bx_{>\ell}}_{g_{>\ell}})\cdot (
    \wG^{\bx_{>\ell}}_{g_{>\ell}})^\dagger \ot
    \wG^{\bx_{\ell}}_{g_\ell} \ket{\psi}} \\
  &\leq (XXX)^{1/2} \cdot 1.
\end{align}

We continue commuting the leftmost $\wG^{x_{\ell}}_{g_\ell}$ to the
right. 
\begin{align}
  \eqref{eq:ggg-ot-mg} &= \E_{\bx_{\ell}, \dots, \bx_{k}} \sum_{\tau: |\tau| \geq d+1}
  \sum_{g_{\ell}, \dots, g_k \in \outc_{\tau_{\geq \ell}}}  \bra{\psi}
    (\wG^{\bx_{>\ell}}_{g_{>\ell}}) \cdot
    \wG^{\bx_\ell}_{g_\ell}  \cdot
    (\wG^{\bx_{>\ell}}_{g_{>\ell}})^\dagger \ot
    M_{|\tau_{<\ell}|, \ell-1} \cdot
                         \wG^{\bx_\ell}_{g_\ell} \ket{\psi} \\
  &\approx \E_{\bx_{\ell}, \dots, \bx_{k}} \sum_{\tau: |\tau| \geq d+1}
  \sum_{g_{\ell}, \dots, g_k \in \outc_{\tau_{\geq \ell}}}  \bra{\psi}
    (\wG^{\bx_{>\ell}}_{g_{>\ell}}) \cdot
    (\wG^{\bx_{>\ell}}_{g_{>\ell}})^\dagger\cdot  \wG^{\bx_\ell}_{g_\ell}   \ot
    M_{|\tau_{<\ell}|, \ell-1} \cdot
    \wG^{\bx_\ell}_{g_\ell} \ket{\psi} \label{eq:hg-ot-mg}
\end{align}
To justify the approximation, we will need to be slightly more
clever this time. First, as always, we bound the magnitude of the
difference:
\begin{align}
  &\left|  \E_{\bx_{\ell}, \dots, \bx_{k}} \sum_{\tau: |\tau| \geq d+1}
  \sum_{g_{\ell}, \dots, g_k \in \outc_{\tau_{\geq \ell}}}
  \bra{\psi} \wG^{\bx_{>\ell}}_{g_{>\ell}} \cdot    [ (\wG^{\bx_{>\ell}}_{g_{>\ell}})^\dagger,
    \wG^{\bx_\ell}_{g_\ell} ]     \ot
    M_{|\tau_{<\ell}|, \ell-1} \cdot
  \wG^{\bx_\ell}_{g_\ell} \ket{\psi} \right| \label{eq:second-com-bound} \\
  &\leq  \cdot \sqrt{\E_{\bx_{\ell}, \dots, \bx_{k}} \sum_{\tau: |\tau| \geq d+1}
  \sum_{g_{\ell}, \dots, g_k \in \outc_{\tau_{\geq \ell}}}  \bra{\psi}
    (\wG^{\bx_{>\ell}}_{g_{>\ell}})\cdot (
    \wG^{\bx_{>\ell}}_{g_{>\ell}})^\dagger \ot (M_{|\tau_{<\ell}|,
    \ell-1}) \cdot
    \wG^{\bx_{\ell}}_{g_\ell}  \cdot (M_{|\tau_{<\ell}|, \ell-1})
    \ket{\psi}} \label{eq:cs-nasty-little-termses-we-hates-it} \\
  &\quad \sqrt{ \E_{\bx_{\ell}, \dots, \bx_{k}} \sum_{\tau: |\tau| \geq d+1}
  \sum_{g_{\ell}, \dots, g_k \in \outc_{\tau_{\geq \ell}}} \bra{\psi} ([ (\wG^{\bx_{>\ell}}_{g_{>\ell}})^\dagger,
    \wG^{\bx_\ell}_{g_\ell} ])^\dagger \cdot ([ (\wG^{\bx_{>\ell}}_{g_{>\ell}})^\dagger,
    \wG^{\bx_\ell}_{g_\ell} ]) \ot I \ket{\psi}} \label{eq:cs-nice-precious-term}\\
\end{align}
The second square root term~\eqref{eq:cs-nice-precious-term}  is bounded
by \Cref{lem:commute-g-half-sandwich}, and we would like to show that
the first~\eqref{eq:cs-nasty-little-termses-we-hates-it} is bounded by
$1$. To do this, we will need to make use of the fact that $M_{r,\ell}$
commutes with $G$.
\begin{align}
  \eqref{eq:cs-nasty-little-termses-we-hates-it} &=  \Big( \E_{\bx_{\ell+1}, \dots, \bx_{k}} \sum_{\tau: |\tau| \geq d+1}
  \sum_{g_{\ell+1}, \dots, g_k \in \outc_{\tau_{\geq \ell+1}}}  \bra{\psi}
    (\wG^{\bx_{>\ell}}_{g_{>\ell}})\cdot (
                                                   \wG^{\bx_{>\ell}}_{g_{>\ell}})^\dagger
                                                   \ot \nonumber \\
  &\qquad\qquad\qquad (M_{|\tau_{<\ell}|,
    \ell-1}) \cdot
    (\E_{\bx_{\ell}} \sum_{g_\ell \in \outc_{\tau_\ell}} \wG^{\bx_{\ell}}_{g_\ell})  \cdot (M_{|\tau_{<\ell}|, \ell-1})
                                                   \ket{\psi}\Big)^{1/2}
  \\
                                                 &= \Big(\E_{\bx_{\ell+1}, \dots, \bx_{k}} \sum_{\tau: |\tau| \geq d+1}
  \sum_{g_{\ell+1}, \dots, g_k \in \outc_{\tau_{\geq \ell + 1}}}  \bra{\psi}
    (\wG^{\bx_{>\ell}}_{g_{>\ell}})\cdot (
                                                   \wG^{\bx_{>\ell}}_{g_{>\ell}})^\dagger
                                                   \ot \nonumber \\
  &\qquad\qquad\qquad \underbrace{(M_{|\tau_{<\ell}|,
    \ell-1}) \cdot ( G^{\tau_{\ell}} (I - G)^{1 - \tau_{\ell}}) \cdot
    (M_{|\tau_{<\ell}|, \ell-1})}_{\leq I} \ket{\psi}
    \Big)^{1/2} \label{eq:the-line-where-we-commute-m-and-g} \\
  &\leq \Big(\E_{\bx_{\ell+1}, \dots, \bx_{k}} \sum_{\tau: |\tau| \geq d+1}
  \sum_{g_{\ell+1}, \dots, g_k \in \outc_{\tau_{\geq \ell+1}}}  \bra{\psi}
    (\wG^{\bx_{>\ell}}_{g_{>\ell}})\cdot (
                                                   \wG^{\bx_{>\ell}}_{g_{>\ell}})^\dagger
                                                   \ot I \ket{\psi}
    \Big)^{1/2} \\
  &\leq 1.
\end{align}
Observe that we were only able to bound the term in the brace in
\Cref{eq:the-line-where-we-commute-m-and-g} because $M$ and $G$
commute, and hence the entire term in the brace is PSD. Thus, the difference \eqref{eq:second-com-bound} is at most $1 \cdot
(XXX)^{1/2}$.

We are now ready for the final step, which is to bring the leftmost
$\wG^{x_\ell}_{g_\ell}$ over to the second tensor factor.
\begin{align}
  \eqref{eq:hg-ot-mg} &= \E_{\bx_{\ell}, \dots, \bx_{k}} \sum_{\tau: |\tau| \geq d+1}
  \sum_{g_{\ell}, \dots, g_k \in \outc_{\tau_{\geq \ell}}} \bra{\psi}
                        (\wH^{\bx_{\ell+1}, \dots, \bx_k}_{g_{\ell+1},
                        \dots, g_k}) \cdot (\wG^{\bx_{\ell}}_{g_\ell}
                        \ot M_{|\tau_{<\ell}|, \ell-1} \cdot
                        \wG^{\bx_\ell}_{g_\ell} \ket{\psi} \\
  &\approx \E_{\bx_{\ell}, \dots, \bx_{k}} \sum_{\tau: |\tau| \geq d+1}
  \sum_{g_{\ell}, \dots, g_k \in \outc_{\tau_{\geq \ell}}} \bra{\psi}
                        (\wH^{\bx_{\ell+1}, \dots, \bx_k}_{g_{\ell+1},
                        \dots, g_k}) 
                        \ot M_{|\tau_{<\ell}|, \ell-1} \cdot
                        (\wG^{\bx_\ell}_{g_\ell})^2 \ket{\psi}. \label{eq:h-ot-mgg}
\end{align}
To justify the approximation, we bound the magnitude of the
difference, using the commutativity of $M$ and $G$ just as in the
preceding calculation.
\begin{align}
  &\left| \E_{\bx_{\ell}, \dots, \bx_{k}} \sum_{\tau: |\tau| \geq d+1}
  \sum_{g_{\ell}, \dots, g_k \in \outc_{\tau_{\geq \ell}}} \bra{\psi}
  ((\wH^{\bx_{\ell+1}, \dots, \bx_k}_{g_{\ell+1},
                        \dots, g_k}) 
                        \ot M_{|\tau_{<\ell}|, \ell-1} \cdot
                        \wG^{\bx_\ell}_{g_\ell}) \cdot
  (\wG^{\bx_\ell}_{g_\ell} \ot I - I \ot \wG^{\bx_\ell}_{g_\ell}) \ket{\psi}
  \right| \nonumber \\
  &\leq \sqrt{\E_{\bx_{\ell}, \dots, \bx_{k}} \sum_{\tau: |\tau| \geq d+1}
  \sum_{g_{\ell}, \dots, g_k \in \outc_{\tau_{\geq \ell}}} \bra{\psi}
  (\wH^{\bx_{\ell+1}, \dots, \bx_k}_{g_{\ell+1},
                        \dots, g_k})^2 
                        \ot M_{|\tau_{<\ell}|, \ell-1} \cdot
                        (\wG^{\bx_\ell}_{g_\ell})^2 \cdot
    M_{|\tau_{<\ell}|, \ell-1} \ket{\psi} } \nonumber \\
  &\quad\cdot \sqrt{\E_{\bx_{\ell}, \dots, \bx_{k}} \sum_{\tau: |\tau| \geq d+1}
  \sum_{g_{\ell}, \dots, g_k \in \outc_{\tau_{\geq \ell}}} \bra{\psi}
    (\wG^{\bx_\ell}_{g_\ell} \ot I - I \ot \wG^{\bx_\ell}_{g_\ell})^2
    \ket{\psi}} \\
  &\leq \sqrt{\E_{\bx_{\ell+1}, \dots, \bx_{k}} \sum_{\tau: |\tau| \geq d+1}
  \sum_{g_{\ell+1}, \dots, g_k \in \outc_{\tau_{\geq \ell}}} \bra{\psi}
  (\wH^{\bx_{\ell+1}, \dots, \bx_k}_{g_{\ell+1},
                        \dots, g_k})^2 
                        \ot \underbrace{M_{|\tau_{<\ell}|, \ell-1} \cdot
                        G^{\tau_{\ell}} (I - G)^{1 - \tau_{\ell}} \cdot
    M_{|\tau_{<\ell}|, \ell-1}}_{\leq I} \ket{\psi} } \nonumber \\
  &\quad \cdot \sqrt{XXX} \\
  &\leq 1 \cdot \sqrt{XXX}.
\end{align}
This concludes the proof of
\Cref{eq:i-think-this-is-what-i'm-supposed-to-prove-2}.

\ainnote{Below is John's old writeup}

XXXXXXXXXXX THINGS ARE INCORRECT BELOW XXXXXXXXXXXXX

We will use one property of~$M_{g_{\ell}, \ldots, g_{k}}^{x_1, \ldots, x_k}$: for each $x_1, \ldots, x_k \in \F_q$,
\begin{equation}
(M_{g_{\ell}, \ldots, g_{k}}^{x_1, \ldots, x_k}) \cdot (M_{g_{\ell}, \ldots, g_{k}}^{x_1, \ldots, x_k})^\dagger \leq I.
\end{equation}

We note that $M_{g_{\ell}, \ldots, g_{k}}^{x_1, \ldots, x_k}$ satisfies the following property. For each $x_1, \ldots, x_k \in \F_q$,
\begin{align}
&(M_{g_{\ell}, \ldots, g_{k}}^{x_1, \ldots, x_k}) \cdot (M_{g_{\ell}, \ldots, g_{k}}^{x_1, \ldots, x_k})^\dagger\\
=~&\sum_{\substack{g_1, \ldots, g_{\ell-1} : (g_1, \ldots, g_k) \in S
	\\ g_1', \ldots, g_{\ell-1}' : (g_1', \ldots, g_{\ell-1}', g_{\ell}, \ldots, g_k) \in S}} 
		\widehat{G}^{x_1}_{g_1} \cdots \widehat{G}^{x_{\ell-1}}_{g_{\ell-1}}
		\cdot \widehat{G}^{x_{\ell-1}}_{g_{\ell-1}'} \cdots \widehat{G}^{x_{1}}_{g_{1}'}\nonumber
\end{align}

We introduce the shorthand
\begin{equation*}
\widehat{G}^{x_{>\ell}}_{g_{>\ell}} = \widehat{G}^{x_{\ell+1}}_{g_{\ell+1}} \cdots \widehat{G}^{x_{k}}_{g_{k}}.
\end{equation*}
We can therefore write
\begin{align*}
\widehat{H}^{\bx_{\ell}, \ldots, \bx_k}_{g_{\ell}, \cdots, g_k}
&= \widehat{G}^{\bx_{\ell}}_{g_{\ell}}
	\cdot \widehat{H}^{\bx_{\ell+1}, \ldots, \bx_k}_{g_{\ell+1}, \ldots, g_k}
	\cdot \widehat{G}^{\bx_{\ell}}_{g_{\ell}} \tag{by the definition of $\widehat{H}$}\\
&= \widehat{G}^{\bx_{\ell}}_{g_{\ell}}
	\cdot \widehat{G}
	\cdot \widehat{G}^{\bx_{\ell}}_{g_{\ell}}.
\end{align*}
Then
\begin{align}
&\E_{\bx_{\ell}, \dots, \bx_k} \sum_{g_{\ell}, \dots, g_k}
	\bra{\psi} (M_{g_{\ell}, \ldots, g_k}
		\cdot \widehat{H}^{\bx_{\ell}, \dots, \bx_{k}}_{g_{\ell}, \dots, g_{k}}) \ot I \ket{\psi}\nonumber\\
=~&\E_{\bx_{\ell}, \dots, \bx_k} \sum_{g_{\ell}, \dots, g_k}
	\bra{\psi} (M_{g_{\ell}, \ldots, g_k}
		\cdot  \widehat{G}^{\bx_{\ell}}_{g_{\ell}}
	\cdot \widehat{H}^{\bx_{\ell+1}, \ldots, \bx_k}_{g_{\ell+1}, \ldots, g_k}
	\cdot \widehat{G}^{\bx_{\ell}}_{g_{\ell}}) \ot I \ket{\psi}.\label{eq:peel-off-a-G}
\end{align}
We claim that
\begin{equation}
\eqref{eq:peel-off-a-G}
\approx_{XXX}
\E_{\bx_{\ell}, \dots, \bx_k} \sum_{g_{\ell}, \dots, g_k}
	\bra{\psi} (M_{g_{\ell}, \ldots, g_k}
		\cdot  \widehat{G}^{\bx_{\ell}}_{g_{\ell}}
	\cdot \widehat{H}^{\bx_{\ell+1}, \ldots, \bx_k}_{g_{\ell+1}, \ldots, g_k}
	\cdot \widehat{G}^{\bx_{\ell}}_{g_{\ell}}) \ot I \ket{\psi}
\end{equation}

We begin by showing a related fact.
Let $\{M^{x_{1}, \ldots, x_k}_{g_{\ell+1}, \ldots, g_k}\}$ be a set of matrices
such that for all $x_{1}, \ldots, x_k$
\begin{equation}\label{eq:boundedness-of-M}
\sum_{g_{\ell+1}, \ldots, g_k} (M^{x_{1}, \ldots, x_k}_{g_{\ell+1}, \ldots, g_k}) \cdot (M^{x_{1}, \ldots, x_k}_{g_{\ell+1}, \ldots, g_k})^\dagger \leq I.
\end{equation}
Then for any $1 \leq \ell \leq k$,
\begin{align*}
&\E_{\bx_1, \dots, \bx_k} \sum_{(g_1, \dots, g_k) \in S} \bra{\psi} (M^{\bx_{1}, \ldots, \bx_k}_{g_{\ell+1}, \ldots, g_k} \cdot \widehat{H}^{\bx_1, \dots, \bx_{\ell}}_{g_1, \dots, g_{\ell}}) \ot I \ket{\psi}\\
\approx_{XXX}~& \E_{\bx_1, \dots, \bx_k} \sum_{(g_1, \dots, g_k) \in S} \bra{\psi} (M^{\bx_{1}, \ldots, \bx_k}_{g_{\ell+1}, \ldots, g_k} \cdot \widehat{G}^{\bx_\ell}_{g_{\ell}} \cdot \widehat{H}^{\bx_1, \dots, \bx_{\ell-1}}_{g_1, \dots, g_{\ell-1}}) \ot I \ket{\psi}.
\end{align*}
To begin, we introduce the shorthand
\begin{equation*}
\widehat{G}^{\bx_{< \ell}}_{g_{< \ell}} = \widehat{G}^{\bx_1}_{g_1} \cdots \widehat{G}^{\bx_{\ell-1}}_{g_{\ell-1}}.
\end{equation*}
We first claim that
\begin{align*}
&\E_{\bx_1, \dots, \bx_k} \sum_{(g_1, \dots, g_k) \in S} \bra{\psi} (M^{\bx_{1}, \ldots, \bx_k}_{g_{\ell+1}, \ldots, g_k} \cdot \widehat{H}^{\bx_1, \dots, \bx_{\ell}}_{g_1, \dots, g_{\ell}}) \ot I \ket{\psi}\\
=~& \E_{\bx_1, \dots, \bx_k} \sum_{(g_1, \dots, g_k) \in S} \bra{\psi} (M^{\bx_{1}, \ldots, \bx_k}_{g_{\ell+1}, \ldots, g_k} \cdot \widehat{G}^{\bx_{< \ell}}_{g_{<\ell}} \cdot \widehat{G}^{\bx_{\ell}}_{g_{\ell}} \cdot (\widehat{G}^{\bx_{<\ell}}_{g_{<\ell}})^\dagger) \ot I \ket{\psi}\\
\approx_{XXX}~& \E_{\bx_1, \dots, \bx_k} \sum_{(g_1, \dots, g_k) \in S} \bra{\psi} (M^{\bx_{1}, \ldots, \bx_k}_{g_{\ell+1}, \ldots, g_k} \cdot \widehat{G}^{\bx_{<\ell}}_{g_{<\ell}} \cdot \widehat{G}^{\bx_{\ell}}_{g_{\ell}}) \ot (\widehat{G}^{\bx_{<\ell}}_{g_{<\ell}}) \ket{\psi}.
\end{align*}
To show this, we bound the magnitude of the difference.
\begin{multline*}
\E_{\bx_1, \dots, \bx_k} \sum_{(g_1, \dots, g_k) \in S} \bra{\psi}
	((M^{\bx_{1}, \ldots, \bx_k}_{g_{\ell+1}, \ldots, g_k} \cdot \widehat{G}^{\bx_{<\ell}}_{g_{<\ell}} \cdot \widehat{G}^{\bx_{\ell}}_{g_{\ell}}) \ot I)
	\cdot ((\widehat{G}^{\bx_{<\ell}}_{g_{<\ell}})^\dagger \ot I
		- I \ot (\widehat{G}^{\bx_{<\ell}}_{g_{<\ell}})) \ket{\psi}\\
\leq \sqrt{\E_{\bx_1, \dots, \bx_k} \sum_{(g_1, \dots, g_k) \in S} \bra{\psi}
	((M^{\bx_{1}, \ldots, \bx_k}_{g_{\ell+1}, \ldots, g_k} \cdot \widehat{G}^{\bx_{<\ell}}_{g_{<\ell}} \cdot \widehat{G}^{\bx_{\ell}}_{g_{\ell}} \cdot (\widehat{G}^{\bx_{<\ell}}_{g_{<\ell}})^\dagger \cdot (M^{\bx_{1}, \ldots, \bx_k}_{g_{\ell+1}, \ldots, g_k})^\dagger) \ot I)\ket{\psi}}\\
\cdot \sqrt{\E_{\bx_1, \dots, \bx_k} \sum_{(g_1, \dots, g_k) \in S} \bra{\psi}
	((\widehat{G}^{\bx_{<\ell}}_{g_{<\ell}}) \ot I
		- I \ot (\widehat{G}^{\bx_{<\ell}}_{g_{<\ell}})^\dagger)
	\cdot ((\widehat{G}^{\bx_{<\ell}}_{g_{<\ell}})^\dagger \ot I
		- I \ot (\widehat{G}^{\bx_{<\ell}}_{g_{<\ell}}))\ket{\psi} }.
\end{multline*}
The term inside the first square root is at most~$1$ by \Cref{eq:boundedness-of-M}
and the fact that~$\widehat{G}$ is a measurement.
The term inside the second square root is at most
by \Cref{lem:switch-g-half-sandwich}.

\ignore{
XXXXXXXXXXX THINGS ARE INCORRECT BELOW XXXXXXXXXXXXX
  We actually show a somewhat stronger fact by induction. We show that
  for any collection of matrices $A_{h_1, \dots, h_k}$ such that for
  all $(h_1, \dots, h_k) \in S$, $A_{h_1, \dots, h_k}^\dagger A_{h_1,
    \dots, h_k} \leq I$, it holds that
  \begin{align*}
    &\E_{\bx_1, \dots, \bx_k} \sum_{(h_1, \dots, h_k) \in S} \bra{\psi}
    A_{h_1, \dots, h_k}
    \cdot (H^{\bx_1, \dots, \bx_k}_{h_1, \dots, h_k} \ot I) \ket{\psi} \\
    &\qquad \approx_{\sqrt{k(12\zeta + 2\nu_{\rmcom})}} \E_{\bx_1, \dots, \bx_k} \sum_{(h_1, \dots, h_k) \in S}
    \bra{\psi} A_{h_1, \dots, h_k} \cdot (G^{\bx_1}_{h_1} \dots G^{\bx_k}_{g_k} \ot I)
      \ket{\psi}.
  \end{align*}
  
  The base case, $k=1$, is immediate. For the inductive step, assume
  that the proposition holds for $k-1$. We now calculate for $k$, applying
  \Cref{lem:commute-g-half-sandwich} and
  \Cref{lem:switch-g-half-sandwich} to the terms inside the inner
  product and using \Cref{prop:closeness-of-ip}:
  \begin{align}
        &\E_{\bx_1, \dots, \bx_k} \sum_{(h_1, \dots, h_k) \in S}  \bra{\psi} 
      A_{h_1, \dots, h_k}\cdot (H^{\bx_1, \dots, \bx_k}_{h_1, \dots, h_k}
    \ot I )\ket{\psi}  \notag \\
    &\quad=\E_{\bx_1, \dots, \bx_k} \sum_{(h_1, \dots, h_k) \in S}  \bra{\psi} 
      A_{h_1, \dots, h_k} \cdot (G^{\bx_1}_{h_1} \cdot  (G^{\bx_2}_{h_2} \dots G^{\bx_k}_{h_k})
        \cdot (G^{\bx_k}_{h_k} \dots G^{\bx_2}_{h_2})
        \cdot G^{\bx_1}_{h_1} 
        \ot I) \ket{\psi}  \notag \\
    &\quad\approx_{\sqrt{k(4\zeta + \nu_{\rmcom})}} \E_{\bx_1, \dots, \bx_k} \sum_{(h_1, \dots, h_k) \in S}  \bra{\psi} 
             A_{h_1, \dots, h_k} \cdot ( G^{\bx_1}_{h_1} \cdot  (G^{\bx_2}_{h_2} \dots G^{\bx_k}_{h_k})
              \cdot G^{\bx_1}_{h_1}   \cdot (G^{\bx_k}_{h_k} \dots G^{\bx_2}_{h_2})
              \ot I)  \ket{\psi}  \tag{by
                                   \Cref{lem:commute-g-half-sandwich}} \\
    &\quad\approx_{\sqrt{2k\zeta}}\E_{\bx_1, \dots, \bx_k} \sum_{(h_1, \dots, h_k) \in S}  \bra{\psi} 
              A_{h_1, \dots, h_k} \cdot (G^{\bx_1}_{h_1} \cdot  (G^{\bx_2}_{h_2} \dots G^{\bx_k}_{h_k})
              \cdot G^{\bx_1}_{h_1} \ot   (G^{\bx_2}_{h_2} \dots
              G^{\bx_k}_{h_k}) )
              \ket{\psi} \tag{by
                             \Cref{lem:switch-g-half-sandwich}} \\
    &\quad\approx_{\sqrt{k(4\zeta + \nu_{\rmcom})}}\E_{\bx_1, \dots, \bx_k} \sum_{(h_1, \dots, h_k) \in S}  \bra{\psi} 
             A_{h_1, \dots, h_k} \cdot ( (G^{\bx_1}_{h_1})^2 \cdot  (G^{\bx_2}_{h_2} \dots G^{\bx_k}_{h_k})
              \ot   (G^{\bx_2}_{h_2} \dots G^{\bx_k}_{h_k}) )
              \ket{\psi}  \tag{by
                             \Cref{lem:commute-g-half-sandwich}} \\
    &\quad\approx_{\sqrt{2k\zeta}}\E_{\bx_1, \dots, \bx_k} \sum_{(h_1, \dots, h_k) \in S}  \bra{\psi} 
             A_{h_1, \dots, h_k} \cdot( G^{\bx_1}_{h_1} \cdot  (G^{\bx_2}_{h_2} \dots
              G^{\bx_k}_{h_k}) \cdot (G^{\bx_k}_{h_k} \dots G^{\bx_2}_{h_2})
              \ot I)
              \ket{\psi}  \tag{by
                             \Cref{lem:switch-g-half-sandwich}} \\
    &\quad=\E_{\bx_2, \dots, \bx_k}\sum_{h_2, \dots, h_k: \exists h_1,
        (h_1, \dots, h_k) \in S} \bra{\psi}
        \underbrace{\Big(A \cdot (\E_{\bx_1} \sum_{h_1: (h_1, h_2,
        \dots, h_k) \in S} G^{\bx_1}_{h_1} \ot I)\Big)}_{A'_{h_2,
        \dots, h_k}} \cdot (H^{\bx_2, \dots, \bx_k}_{h_2, \dots,
        h_k} \ot I) \ket{\psi}. \label{eq:sum-A-dot-Hk-before-induction}
  \end{align}
  Observe that $(A'_{h_2, \dots, h_k})^\dagger(A'_{h_2, \dots, h_k})
  \leq I$ for every $h_2, \dots, h_k$ in the sum. Thus, we are now in a
  position to apply the inductive hypothesis.
  \begin{align}
    \labelcref{eq:sum-A-dot-Hk-before-induction}
    &\approx_{\sqrt{(k-1)(12\zeta + 2\nu_{\rmcom})}} \E_{\bx_2, \dots, \bx_k}\sum_{h_2, \dots, h_k} \bra{\psi}
        \Big(A \cdot (\E_{\bx_1} \sum_{h_1} G^{\bx_1}_{h_1} \ot
      I)\Big) \cdot (G^{\bx_2}_{h_2}  \dots G^{\bx_k}_{h_k}\ot I)
      \ket{\psi} \\
    &= \E_{\bx_1, \dots, \bx_k} \sum_{h_1, \dots, h_k} \bra{\psi} A
      \cdot (
      G^{\bx_1}_{h_1} \dots G^{\bx_k}_{h_k} \ot I) \ket{\psi}.
  \end{align}
  This proves the proposition. Summing the approximation errors
  incurred and using the inequality $\sqrt{a + b} \leq \sqrt{a} +
  \sqrt{b}$ for nonnegative $a,b$ yields the claimed error bound.\ainnote{TODO: double check this!}
  }
\end{proof}

\subsection{Completeness of the interpolated measurement}
\begin{lemma}
  \[\sum_{h} \bra{\psi} N_{h} \ot I \ket{\psi} \approx_{\delta_{\rmcomp}} \sum_{r =
      d+1}^{k} \binom{k}{r} \bra{\psi} H^{r} (I - H)^{k-r} \ot I
    \ket{\psi}, \]
  where $\delta_{\rmcomp} = \delta_{\rmcons} + d/q + O(k^2/q)
  + \sqrt{k(12\zeta + \nu_{\rmcom})}$.
  \label{lem:N-sum-bernoulli}
\end{lemma}
\begin{proof}
  Ultimately, we want to use \Cref{lem:halve-a-sandwich}. But first,
  we need to modify the sum over $h$ to be a sum over all $k$-tuples
  $(h_1, \dots, h_k)$ such that at least $d+1$ of the elements are not
  equal to $\bot$, and those that are not $\bot$ are equal to
  restrictions of a common global polynomial $h$. To do this, we will
  argue that tuples of outcomes that are \emph{not} consistent with a
  global polynomial are unlikely to occur, using \Cref{lem:N-B-consistency} and \Cref{lem:ld-sandwich-line-one-point}.
  \begin{align}
    &\sum_{h} \bra{\psi} N_h \ot I \ket{\psi} \\
    =~& \E_{\bx_1, \dots, \bx_k \sim \calD_{\neq}} \sum_{h} \sum_{w:
        |w| \geq d+1} \bra{\psi} H^{\bx_1, \dots, \bx_k}_{h_w} \ot
        I \ket{\psi} \\
    =~& \underbrace{\E_{\bx_1, \dots, \bx_k \sim \calD_{\neq}}  \sum_{w: |w| \geq d+1}
        \sum_{(h_1, \dots, h_k) \in S_{w}} \bra{\psi} H^{\bx_1, \dots,
        \bx_k}_{h_1, \dots, h_k} \ot I \ket{\psi}}_{\term{tm:ld-interpolate-main}} \nonumber
    \\
    &\quad - \underbrace{ \E_{\bx_1, \dots, \bx_k \sim \calD_{\neq}} 
      \sum_{w: |w| \geq d+1} \sum_{(h_1, \dots, h_k) \in S_{w}}
      \bone[ \forall h, \exists i, h_{|\bx_i} \neq h_{i}] \cdot \bra{\psi}
      H^{\bx_1, \dots, \bx_k}_{h_1, \dots, h_k} \ot I \ket{\psi}}_{\term{tm:ld-interpolate-error}}.
  \end{align}
  We will now bound \Cref{tm:ld-interpolate-error} using the
  Schwarz-Zippel lemma together with the consistency between $H$ and
  $B$.

  \begin{alignat}{2}
    &~~\labelcref{tm:ld-interpolate-error} \nonumber \\
    &=~ \E_{\bx_1, \dots, \bx_k \sim \calD_{\neq}}  
      \sum_{w: |w| \geq d+1} \sum_{(h_1, \dots, h_k) \in S_{w}}
      \bone[ \exists i, \interp(h_1, \dots, h_k)_{|\bx_i} \neq h_{i}] \cdot \bra{\psi}
      H^{\bx_1, \dots, \bx_k}_{h_1, \dots, h_k} \ot I
        \ket{\psi} \\
    &=~ \E_{\bu} \E_{\bx_1, \dots, \bx_k \sim \calD_{\neq}}  \sum_{f}
      \sum_{w: |w| \geq d+1} \sum_{(h_1, \dots, h_k) \in S_{w}}
      \bone[ \exists i, \interp(h_1, \dots, h_k)_{|\bx_i}  \neq h_{i}] \cdot \bra{\psi}
      H^{\bx_1, \dots, \bx_k}_{h_1, \dots, h_k} \ot B^{\bu}_f
        \ket{\psi} \\
    &\approx_{k \cdot \delta_{\rmcons}}~\E_{\bu} \E_{\bx_1, \dots, \bx_k \sim \calD_{\neq}}  \sum_{f}
      \sum_{w: |w| \geq d+1} \sum_{(h_1, \dots, h_k) \in S_{w}} \Big(
      \bone[ \exists i, \interp(h_1, \dots, h_k)_{|\bx_i}  \neq h_{i}] \cdot \bone[
        \forall i , f(\bx_i) = h_i(\bu)] \nonumber \\
    &\qquad\qquad\qquad\qquad\qquad\qquad\qquad\qquad \cdot \bra{\psi}
      H^{\bx_1, \dots, \bx_k}_{h_1, \dots, h_k} \ot B^{\bu}_f
      \ket{\psi} \Big) \\
      &\leq~\E_{\bu} \E_{\bx_1, \dots, \bx_k \sim \calD_{\neq}}  \sum_{f}
      \sum_{w: |w| \geq d+1} \sum_{(h_1, \dots, h_k) \in S_{w}} \Big(
      \bone[ \exists i, \interp(h_1, \dots, h_k)_{|\bx_i}  \neq h_{i}]
             \nonumber \\
    &\qquad\qquad\qquad\qquad\qquad\qquad\qquad\qquad \cdot \bone[
        \forall i , \interp(h_1, \dots, h_k)(\bu, \bx_i) = h_i(\bu)] \nonumber \\
    &\qquad\qquad\qquad\qquad\qquad\qquad\qquad\qquad \cdot \bra{\psi}
      H^{\bx_1, \dots, \bx_k}_{h_1, \dots, h_k} \ot B^{\bu}_f
      \ket{\psi} \Big) \\
    &=~ \E_{\bx_1, \dots, \bx_k \sim \calD_{\neq}}  
      \sum_{w: |w| \geq d+1} \sum_{(h_1, \dots, h_k) \in S_{w}} \Big( 
      \bone[ \exists i, \interp(h_1, \dots, h_k)_{|\bx_i}  \neq h_{i}]
             \nonumber \\
    &\qquad\qquad\qquad\qquad\qquad\qquad\qquad\qquad \cdot \underbrace{\big( \E_{\bu} \bone[
        \forall i , \interp(h_1, \dots, h_k)(\bu, \bx_i) = h_i(\bu)]
      \big)}_{\leq d/q}  \nonumber \\
    &\qquad\qquad\qquad\qquad\qquad\qquad\qquad\qquad \cdot \bra{\psi}
      H^{\bx_1, \dots, \bx_k}_{h_1, \dots, h_k} \ot I
      \ket{\psi} \Big) \\
    &\leq~ d/q,
  \end{alignat}
  where the first approximation is due to the consistency between $B$
  and $H$ (\Cref{lem:ld-sandwich-line-one-point-neq}), and the final
  inequality is due to Schwarz-Zippel\anote{Write this up!}.

  To complete the proof, we return to \Cref{tm:ld-interpolate-main}.
  \begin{align}
    \labelcref{tm:ld-interpolate-main}
    &= \E_{\bx_1, \dots, \bx_k \sim \calD_{\neq}} \sum_{w: |w| \geq
      d+1} \sum_{(h_1, \dots, h_k) \in S_w} \bra{\psi} H^{\bx_1,
      \dots, \bx_k}_{h_1, \dots, h_k} \ot I \ket{\psi} \\
    &\approx_{k^2/q} \E_{\bx_1, \dots, \bx_k } \sum_{w: |w| \geq
      d+1} \sum_{(h_1, \dots, h_k) \in S_w} \bra{\psi} H^{\bx_1,
      \dots, \bx_k}_{h_1, \dots, h_k} \ot I \ket{\psi} && \text{(by \Cref{prop:ld-dnoteq})}\\
    &\approx_{\sqrt{k(12\zeta + 2\nu_{\rmcom})}} \E_{\bx_1, \dots, \bx_k} \sum_{w: |w| \geq d+1}
      \sum_{(h_1, \dots,h_k) \in S_w} \bra{\psi} G^{\bx_1}_{h_1} \dots
      G^{\bx_k}_{h_k} \ot I \ket{\psi} &&\text{(by \Cref{lem:halve-a-sandwich})}\\
    &= \sum_{r = d+1}^{k} \binom{k}{r} \bra{\psi} (G)^{r} \cdot (I -
      G)^{k-r} \ot I \ket{\psi}.
  \end{align}
  Chaining together the errors incurred in the approximations, we
  obtain the conclusion of the lemma with an error of
  $\delta_{\rmcons} + d/q + k^2/q + \sqrt{k(12\zeta + \nu_{\rmcom})}$.
\end{proof}
}

\begin{lemma}
  \label{lem:chernoff-bernoulli-matrix}
  Let $0 < \theta < 1$ and let $k, d > 0$ be integers such that $k \geq 2d/\theta$.
  Define the matrix-valued function $F$ by
  \[ F(X) = \sum_{r = d+1}^{k} \binom{k}{r} X^r (I - X)^{r - k}. \]
  Then for any Hermitian matrix $X$ such that $0 \leq X \leq I$ and $\bra{\psi} X \ot I
  \ket{\psi} \geq 1 - \kappa$, it holds that
  \[ \bra{\psi} F(X) \ot I \ket{\psi} \geq 1 - \frac{\kappa}{1 - \theta}
    - e^{-\theta^2 k/2}. \]
  \end{lemma}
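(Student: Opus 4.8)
The plan is to diagonalize $X$ and reduce the statement to a scalar estimate about the binomial distribution. Writing the eigendecomposition $X = \sum_i \lambda_i \ket{v_i}\bra{v_i}$ with $\{\ket{v_i}\}$ an orthonormal basis and each $\lambda_i \in [0,1]$ (using $0 \le X \le I$), the functional calculus gives $F(X) = \sum_i F(\lambda_i)\ket{v_i}\bra{v_i}$, where for a scalar $\lambda \in [0,1]$ (reading the exponent in the definition of $F$ as $k-r$)
\[
F(\lambda) = \sum_{r=d+1}^{k} \binom{k}{r}\lambda^r (1-\lambda)^{k-r} = \Pr\big[\,\mathrm{Bin}(k,\lambda) \ge d+1\,\big],
\]
with $\mathrm{Bin}(k,\lambda)$ a sum of $k$ independent $\{0,1\}$-valued variables each equal to $1$ with probability $\lambda$. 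I would then introduce the probability distribution $\mu$ on eigenvector indices defined by $\mu(i) = \bra{\psi}(\ket{v_i}\bra{v_i}\otimes I)\ket{\psi} \ge 0$; since $\sum_i \ket{v_i}\bra{v_i} = I$ we have $\sum_i \mu(i) = \braket{\psi|\psi} = 1$. With this notation $\bra{\psi} X \otimes I \ket{\psi} = \sum_i \lambda_i\mu(i)$ and $\bra{\psi} F(X) \otimes I \ket{\psi} = \sum_i F(\lambda_i)\mu(i)$, so the hypothesis says $\sum_i \lambda_i \mu(i) \ge 1-\kappa$ and the goal is to lower bound $\sum_i F(\lambda_i)\mu(i)$.

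The crux is the pointwise inequality
\[
1 - F(\lambda) \;\le\; \frac{1-\lambda}{1-\theta} + e^{-\theta^2 k/2}\qquad\text{for every }\lambda\in[0,1].
\]
I would prove this by splitting on the size of $\lambda$. If $\lambda \le \theta$, then $1-\lambda \ge 1-\theta$, so $\frac{1-\lambda}{1-\theta}\ge 1 \ge 1-F(\lambda)$ and the bound is immediate. If $\lambda > \theta$, then the hypothesis $k \ge 2d/\theta$ gives $d \le k\theta/2 < k\lambda/2$, hence
\[
1 - F(\lambda) = \Pr\big[\,\mathrm{Bin}(k,\lambda) \le d\,\big] \le \Pr\big[\,\mathrm{Bin}(k,\lambda) \le k\lambda/2\,\big] \le e^{-2(k\lambda/2)^2/k} = e^{-k\lambda^2/2} \le e^{-k\theta^2/2},
\]
using Hoeffding's inequality (deviation $t = k\lambda/2$ below the mean $k\lambda$) in the middle step and $\lambda > \theta$ at the end; since $\frac{1-\lambda}{1-\theta}\ge 0$ this gives the claim. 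A multiplicative Chernoff bound would also work, but yields a weaker exponent when $\theta$ is close to $1$, so Hoeffding is the natural choice to land exactly on $e^{-\theta^2 k/2}$.

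Finally I would average the pointwise bound against $\mu$ and use linearity:
\begin{align*}
\bra{\psi} F(X) \otimes I \ket{\psi}
&= 1 - \sum_i \big(1 - F(\lambda_i)\big)\mu(i)\\
&\ge 1 - \frac{1}{1-\theta}\sum_i (1-\lambda_i)\mu(i) - e^{-\theta^2 k/2}\\
&= 1 - \frac{1 - \bra{\psi} X\otimes I\ket{\psi}}{1-\theta} - e^{-\theta^2 k/2}\\
&\ge 1 - \frac{\kappa}{1-\theta} - e^{-\theta^2 k/2},
\end{align*}
where the last step uses $\bra{\psi} X\otimes I\ket{\psi}\ge 1-\kappa$ together with $1-\theta>0$. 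I do not expect a serious obstacle: the only points requiring care are choosing the tail bound whose exponent is exactly $\theta^2 k/2$, and, if one prefers not to assume that $X$ has an eigenbasis, replacing the sum over $i$ by an integral against the spectral measure of $X$ — neither the pointwise estimate nor the averaging step changes.
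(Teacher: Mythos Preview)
Your proof is correct and follows essentially the same approach as the paper: diagonalize $X$, introduce the probability distribution $\mu(i)=\bra{\psi}(\ket{v_i}\bra{v_i}\otimes I)\ket{\psi}$ on eigenvalues, and apply a Chernoff/Hoeffding tail bound with a case split at $\lambda=\theta$. Your packaging as a pointwise inequality $1-F(\lambda)\le(1-\lambda)/(1-\theta)+e^{-\theta^2 k/2}$ followed by linear averaging is a slightly tidier variant of the paper's argument, which instead applies Markov's inequality to $\mu$ to bound the mass below $\theta$, lower-bounds $F(\theta)$ via Chernoff, multiplies, and then uses $(1-b)(1-c)\ge 1-b-c$.
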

  \begin{proof}
    Let $\rho$ be the reduced state of $\ket{\psi}$ on one prover's
    subsystem (since the state is assumed to be symmetric, it does not
    matter which prover we take).
    Write the eigendecomposition $X = \sum_{i} \lambda_i
    \ket{v_i}\bra{v_i}$ of $X$, where we allow some of the
    eigenvalues to be $0$ so that the set of eigenvectors
    $\{\ket{v_i}\}$ forms an orthonormal basis of the space. This defines a probability
    distribution $\mu$ over eigenvectors, where eigenvector $i$
    occurs with probability $\mu(i) = \bra{v_i} \rho \ket{v_i}$. The
    given condition $\bra{\psi} X \ot I \ket{\psi} = \Tr(X \rho) \geq
    1 -\kappa$ implies that
    \begin{align*}
      \E_{\bi \sim \mu} \lambda_{\bi}
      = \sum_i \lambda_i \cdot \mu(i)
       &= \sum_i \lambda_i \cdot \bra{v_i} \rho \ket{v_i}\\
       &=\sum_i \lambda_i \cdot \tr(\ket{v_i} \bra{v_i}  \cdot \rho)\\
       &=\tr \Big(\sum_i \lambda_i  \ket{v_i} \bra{v_i} \cdot\rho\Big)
       = \tr(X \rho)
       \geq 1 - \kappa.
     \end{align*}
     Or, equivalently,
     \begin{equation*}
      \E_{\bi \sim \mu} ( 1 -\lambda_{\bi}) \leq \kappa.
    \end{equation*}
      By Markov's inequality, for any $0 < \theta < 1$, we have
     \begin{equation*}
     \Pr_{\bi \sim \mu}[(1-\lambda_{\bi}) \geq (1 - \theta)]
     \leq \frac{\E_{\bi \sim \mu} (1-\lambda_{\bi})}{1-\theta}
     \leq \frac{\kappa}{1-\theta}.
     \end{equation*} 
     In other words,
     \begin{equation}\label{eq:in-other-words}
     \Pr_{\bi \sim \mu}[\lambda_{\bi} > \theta]
     = \Pr_{\bi \sim \mu}[(1 - \lambda_{\bi}) < (1 - \theta)]
     = 1 - \Pr_{\bi \sim \mu}[(1 - \lambda_{\bi}) \geq (1 - \theta)]
     \geq 1 - \frac{\kappa}{1-\theta}.  
     \end{equation}
      Thus, it  holds that with probability at least $1 - \frac{\kappa}{1 - \theta}$ over $\bi \sim
    \mu$, $\lambda_{\bi} > \theta$.

    We now evaluate $\bra{\psi} F(X) \ot I \ket{\psi}$. We will
    essentially do this eigenvalue by eigenvalue.
    To begin, we consider a hypothetical eigenvalue $d/k \leq p \leq 1$.
    Observe that $F(p)$ is precisely the
    probability 
    probability of observing at least $d+1$ successes out of $k$
    i.i.d.\ Bernoulli trials, each of which succeeds with probability
    $p$.
    In other words, it is the probability that $\bY := \bY_1 + \cdots + \bY_k \geq d+1$,
    where $\bY_1, \ldots, \bY_k \sim \mathrm{Bernoulli}(p)$.
    We can bound this probability by the additive Chernoff bound (see the second additive bound in~\cite{Blu11}):
    \begin{align*}
    \Pr[\bY \leq d]
    &= \Pr[\bY \leq p k - (pk - d)] \\
    &=\Pr\Big[\bY \leq pk - \Big(p - \frac{d}{k}\Big)\cdot k\Big]\\
    	&\leq  \exp\Big(-2 \Big(p - \frac{d}{k}\Big)^2 \cdot k\Big).
    \end{align*}
    Thus,
    \begin{equation}\label{eq:by-chernoff}
    F(p) = \Pr[\bY \geq d+1] = 1 - \Pr[\bY \leq d] \geq 1 - \exp\Big(-2 \Big(p - \frac{d}{k}\Big)^2 \cdot k\Big).
    \end{equation}
        Putting the pieces together, we compute $\bra{\psi} f(X) \ot I
    \ket{\psi}$:
    \begin{align}
      \bra{\psi} F(X) \ot I \ket{\psi} &= \Tr(F(X) \rho) \nonumber\\
                                       &= \sum_{i} F(\lambda_i)
                                         \bra{v_i} \rho \ket{v_i} \nonumber\\
                                       &=\E_{\bi \sim \mu}
                                         F(\lambda_{\bi}) \nonumber\\
                                       &\geq \Pr_{\bi \sim
                                         \mu}[\lambda_{\bi} \geq \theta]
                                         \cdot F(\theta) \nonumber\\
                                       &\geq \Big( 1 - \frac{\kappa}{1 - \theta} \Big) \cdot F(\theta)  \tag{by \Cref{eq:in-other-words}}\\
      &\geq \Big( 1 - \frac{\kappa}{1 - \theta} \Big) \cdot \Big(1 - \exp \Big(-2\Big(\theta -
        \frac{d}{k}\Big)^2 \cdot k \Big) \Big) \label{eq:almost-done-with-this-giant-proof},
        \end{align}
        where the last step uses \Cref{eq:by-chernoff}.
    Next, we claim that if $a, b, c \geq 0$ satisfy $a \geq (1-b) \cdot (1-c)$,
    then $a \geq 1 - b - c$. This is because if either $b$ or $c$ is at least~$1$,
    then the conclusion is trivially true, and if both are less than~$1$,
    then
    \begin{equation*}
    (1-b) \cdot (1-c) = 1\cdot (1-c) -  b \cdot (1-c) \geq 1 \cdot (1-c) - b \cdot 1 = 1-c-b.
    \end{equation*}
    Since $\bra{\psi} F(X) \ot I \ket{\psi}$ is manifestly positive,
    we can apply this to \Cref{eq:almost-done-with-this-giant-proof}, yielding
    \begin{equation*}
    \bra{\psi} F(X) \ot I \ket{\psi}
    \geq  1 - \frac{\kappa}{1 - \theta} - \exp \Big(-2\Big(\theta -
        \frac{d}{k}\Big)^2 \cdot k \Big).
    \end{equation*}
    Finally, we note that because $k \geq 2d/\theta$,
    we have $\theta/2 \geq d/k$.
    This implies that
    $\theta-d/k \geq \theta - \theta/2 = \theta/2$,
    and so    
    $(\theta - d/k)^2 \geq (\theta/2)^2 = \theta^2/4$.
    As a result,
    \begin{equation*}
    \exp \Big(2\Big(\theta -
        \frac{d}{k}\Big)^2 \cdot k \Big)
        \geq 
        \exp \Big(\theta^2 k/2 \Big).
    \end{equation*}
    Equivalently,
    \begin{equation*}
    \exp \Big(-2\Big(\theta -
        \frac{d}{k}\Big)^2 \cdot k \Big)
        \leq 
        \exp \Big(-\theta^2 k/2 \Big).
    \end{equation*}
    Thus, we conclude
    \begin{equation*}
     \bra{\psi} F(X) \ot I \ket{\psi}
    \geq  1 - \frac{\kappa}{1 - \theta} - \exp \Big(-\theta^2 k/2 \Big).\qedhere
    \end{equation*}
  \end{proof}
\begin{corollary}[Completeness of~$H$; Proof of \Cref{item:ld-pasting-N-completeness-sub-measurement} of \Cref{lem:ld-pasting-sub-measurement}]
  \label{cor:ld-pasting-N-completeness}
  Let $k \geq 400md$. Then
  \[ \bra{\psi} H \otimes I \ket{\psi}
\geq 1 - \kappa \cdot \left(1 + \frac{1}{100m}\right)
    - \nu - e^{- k/(80000m^2)}. \]
\end{corollary}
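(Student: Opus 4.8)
The plan is to chain the three lemmas of this subsection, \Cref{lem:over-all-outcomes}, \Cref{lem:from-H-to-G}, and \Cref{lem:chernoff-bernoulli-matrix}, with a carefully chosen Chernoff parameter, and then verify that the accumulated error collapses into the stated bound.

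First I would apply \Cref{lem:over-all-outcomes} to replace $\bra{\psi} H\otimes I\ket{\psi}$, up to additive error $\nu_7 = 46k^2m\cdot(\eps^{1/32}+\delta^{1/32}+\gamma^{1/32}+\zeta^{1/32}+(d/q)^{1/32})$, by the quantity $\E_{\bx_1,\dots,\bx_k}\sum_{\tau:|\tau|\geq d+1}\sum_{(g_1,\dots,g_k)\in\mathsf{Outcomes}_{\tau}}\bra{\psi}\widehat{H}^{\bx_1,\dots,\bx_k}_{g_1,\dots,g_k}\otimes I\ket{\psi}$, where now $\bx_1,\dots,\bx_k$ are i.i.d.\ uniform on $\F_q$. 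Then I would apply \Cref{lem:from-H-to-G} to replace this, up to additive error $\nu_8 = 46km\cdot(\gamma^{1/32}+\zeta^{1/32}+(d/q)^{1/32})$, by $\sum_{i=d+1}^k\binom{k}{i}\bra{\psi}G^i(I-G)^{k-i}\otimes I\ket{\psi}$, which is precisely $\bra{\psi}F(G)\otimes I\ket{\psi}$ for the matrix-valued function $F$ of \Cref{lem:chernoff-bernoulli-matrix}. By the triangle inequality \eqref{eq:triangle-inequality-for-numbers}, this yields $\bra{\psi} H\otimes I\ket{\psi}\geq\bra{\psi}F(G)\otimes I\ket{\psi}-\nu_7-\nu_8$.

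Next I would apply \Cref{lem:chernoff-bernoulli-matrix} with $X=G$. This is legitimate: $G=\E_{\bx}\sum_g G^{\bx}_g$ is Hermitian with $0\leq G\leq I$ since each $G^{\bx}$ is a projective sub-measurement, $\ket{\psi}$ is permutation-invariant as the strategy is symmetric, and $\bra{\psi}G\otimes I\ket{\psi}\geq 1-\kappa$ by \Cref{item:ld-pasting-completeness}. I would take $\theta=\tfrac{1}{100m+1}$, so that $\tfrac{1}{1-\theta}=1+\tfrac{1}{100m}$, and check the hypothesis $k\geq 2d/\theta=2d(100m+1)$, which follows from $k\geq 400md$ because $400m\geq 200m+2$ for $m\geq1$. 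The lemma then gives $\bra{\psi}F(G)\otimes I\ket{\psi}\geq 1-\kappa\bigl(1+\tfrac{1}{100m}\bigr)-e^{-\theta^2k/2}$.

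Finally I would discharge the two remaining error terms. For the exponential term, $\theta^2k/2=\tfrac{k}{2(100m+1)^2}\geq\tfrac{k}{80000m^2}$ because $2(100m+1)^2\leq 40000m^2$ for $m\geq1$, so $e^{-\theta^2k/2}\leq e^{-k/(80000m^2)}$. For the error from the two transfer lemmas, $\nu_8\leq 46k^2m\cdot(\eps^{1/32}+\delta^{1/32}+\gamma^{1/32}+\zeta^{1/32}+(d/q)^{1/32})$ trivially, so $\nu_7+\nu_8\leq 92k^2m\cdot(\eps^{1/32}+\delta^{1/32}+\gamma^{1/32}+\zeta^{1/32}+(d/q)^{1/32})\leq\nu$. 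Combining everything gives $\bra{\psi}H\otimes I\ket{\psi}\geq 1-\kappa\bigl(1+\tfrac{1}{100m}\bigr)-\nu-e^{-k/(80000m^2)}$, as required. The conceptual work here lies entirely in the three lemmas already established, with \Cref{lem:from-H-to-G} (the iterated ``sandwich-halving'' argument) being by far the most technical; the present proof is only the bookkeeping of assembling them and choosing $\theta$, so the only mild obstacle is verifying the two elementary constant inequalities above.
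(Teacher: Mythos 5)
Your proof is correct and follows essentially the same route as the paper: chain \Cref{lem:over-all-outcomes} and \Cref{lem:from-H-to-G} to reduce to $\bra{\psi}F(G)\otimes I\ket{\psi}$, absorb $\nu_7 + \nu_8 \leq 92k^2m(\cdots) \leq \nu$, and then invoke \Cref{lem:chernoff-bernoulli-matrix}. The only (immaterial) difference is your choice $\theta = \tfrac{1}{100m+1}$, which makes $\tfrac{1}{1-\theta} = 1 + \tfrac{1}{100m}$ exact, versus the paper's $\theta = \tfrac{1}{200m}$; both satisfy $k \geq 2d/\theta$ given $k \geq 400md$ and both yield $e^{-\theta^2 k/2} \leq e^{-k/(80000m^2)}$.
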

\begin{proof}
We begin by approximating the completeness as follows.
\begin{align*}
\bra{\psi} H \otimes I \ket{\psi}
& \approx_{\nu_7} \E_{\bx_1, \ldots, \bx_k} \sum_{\tau:|\tau| \geq d+1} \sum_{(g_1, \ldots, g_k) \in \mathsf{Outcomes}_{\tau}} \bra{\psi} \widehat{H}^{\bx_1, \ldots, \bx_k}_{g_1, \ldots, g_k} \otimes I \ket{\psi} \tag{by \Cref{lem:over-all-outcomes}}\\
& \approx_{\nu_8} \sum_{i = d+1}^k \binom{k}{i} \bra{\psi} G^i (I-G)^{k-i} \otimes I \ket{\psi} \tag{by \Cref{lem:from-H-to-G}}\\
& = \bra{\psi} F(G) \ot I \ket{\psi}.
\end{align*}
We can bound the error incurred here by
\begin{align*}
\nu_7 + \nu_8
&= 46 k^2m \cdot \left(\eps^{1/32} + \delta^{1/32} +  \gamma^{1/32} + \zeta^{1/32} +(d/q)^{1/32}\right)
	+ 46 k m \cdot \left(\gamma^{1/32} + \zeta^{1/32} + (d/q)^{1/32}\right)\\
&\leq 100 k^2m \cdot \left(\eps^{1/32} + \delta^{1/32} +  \gamma^{1/32} + \zeta^{1/32} +(d/q)^{1/32}\right)\\
&= \nu.
\end{align*}
Let $\theta = 1/(200 m)$. Note that
\begin{equation*}
\frac{1}{1-\theta}
= \frac{1}{1- 1/(200m)}
= \frac{200m}{200m-1}
= 1 + \frac{1}{200m-1}
\leq 1 + \frac{1}{100m}.
\end{equation*}
Then $k \geq 2d/\theta$ and, therefore, $k \geq d+1$.
As a result, $k$ and~$\theta$ satisfy the hypothesis of \Cref{lem:chernoff-bernoulli-matrix},
namely that $k \geq \max\{d+1, 2d/\theta\}$.
Thus, \Cref{lem:chernoff-bernoulli-matrix} implies that
\begin{align*}
\bra{\psi} F(G) \ot I \ket{\psi}
&\geq 1 - \frac{\kappa}{1 - \theta}
    - e^{-\theta^2 k/2}\\
&= 1 - \frac{\kappa}{1 - \theta}
    - e^{- k/(80000m^2)}\\
&\geq 1 - \kappa \cdot \left(1 + \frac{1}{100m}\right)
    - e^{- k/(80000m^2)}.
\end{align*}
In total, we have
\begin{equation*}
\bra{\psi} H \otimes I \ket{\psi}
\geq 1 - \kappa \cdot \left(1 + \frac{1}{100m}\right)
    - \nu - e^{- k/(80000m^2)}.
\end{equation*}
This completes the proof.
\end{proof}

\bibliographystyle{alpha}
\newcommand{\etalchar}[1]{$^{#1}$}

\end{document}